\DeclareFontFamily{U}{mathx}{\hyphenchar\font45}
\DeclareFontShape{U}{mathx}{m}{n}{
     <5> <6> <7> <8> <9> <10>
     <10.95> <12> <14.4> <17.28> <20.74> <24.88>
     mathx10
     }{}
\DeclareSymbolFont{mathx}{U}{mathx}{m}{n}
\DeclareMathAccent{\widecheck}    {0}{mathx}{"71}
\DeclareMathOperator{\Ima}{Im}            
\DeclareMathOperator{\Dom}{Dom}      
\DeclareMathOperator{\Ker}{Ker}           
\DeclareMathOperator{\Lin}{Lin}             
\DeclareMathOperator{\Res}{Res}         
\DeclareMathOperator{\spec}{sp}           
\DeclareMathOperator{\supp}{supp}      
\DeclareMathOperator{\Tad}{Tad}          
\DeclareMathOperator{\Tr}{Tr}                 
\DeclareMathOperator{\Vol}{\text{Vol}}   
\newtheorem{assumption}{Assumption}[section]
\newtheorem{theorem}[assumption]{Theorem}
\newtheorem{thm}[assumption]{Theorem}
\newtheorem{cly}[assumption]{Corollary}
\newtheorem{corollary}[assumption]{Corollary}
\newtheorem{conjecture}[assumption]{Conjecture}
\newtheorem{lem}[assumption]{Lemma}
\newtheorem{lemma}[assumption]{Lemma}
\newtheorem{defn}[assumption]{Definition}
\newtheorem{definition}[assumption]{Definition}
\newtheorem{prop}[assumption]{Proposition}
\newtheorem{rem}[assumption]{Remark}
\newtheorem{remark}[assumption]{Remark}
\newtheorem{example}[assumption]{Example}
\newcommand{\Sp}{{\rm Sp}}
\newcommand{\A}{\mathcal{A}}               
\newcommand{\Aun}{\widetilde{\mathcal{A}}}   
\renewcommand{\a}{\alpha}                     
\newcommand{\as}{\quad\mbox{as}\enspace}     
\renewcommand{\b}{\beta}                        
\newcommand{\B}{\mathcal{B}}               
\newcommand{\C}{\mathbb{C}}               
\newcommand{\CC}{\mathcal{C}}            
\newcommand{\Coo}{C^\infty}                  
\newcommand{\del}{\partial}                     
\newcommand{\delslash}{{\del\mkern-9.5mu/\,}} 
\newcommand{\DD}{\mathcal{D}}            
\newcommand{\DDD}{\vert \mathcal{D} \vert}  
\newcommand{\Ds}{{D\mkern-11.5mu/\,}} 
\newcommand{\Dslash}{{D\mkern-11.5mu/\,}} 
\newcommand{\eps}{\varepsilon}             
\newcommand{\E}{\mathcal{E}}                
\newcommand{\F}{\mathcal{F}}                 
\newcommand{\G}{\mathcal{G}}                 
\newcommand{\Ga}{\Gamma}                   
\newcommand{\ga}{\gamma}                    
\newcommand{\grad}{\text{grad }}            
\renewcommand{\H}{\mathcal{H}}            
\newcommand{\half}{{\mathchoice{\thalf}{\thalf}{\shalf}{\shalf}}}
\newcommand{\hookto}{\hookrightarrow}        
\newcommand{\I}{\makebox[0.5cm][c]{$\int\!\!\!\!\!-$}} 
\newcommand{\K}{\mathcal{K}}                 
\renewcommand{\L}{\mathcal{L}}              
\newcommand{\La}{\Lambda}                    
\newcommand{\la}{\lambda}                      
\newcommand{\<}{\langle}
\newcommand{\mop}{{\mathchoice{\mathbin{\star_{_\theta}}}F
       {\mathbin{\star_{_\theta}}}                   
       {{\star_\theta}}{{\star_\theta}}}}            
\newcommand{\M}{\mathcal{M}}              
\renewcommand{\mop}{{\mathchoice{\mathbin{\star_{_\theta}}}
       {\mathbin{\star_{_\theta}}}           
       {{\star_\theta}}{{\star_\theta}}}}    
\newcommand{\N}{\mathbb{N}}               
\newcommand{\nb}{\nabla}                     
\newcommand{\Oh}{\mathcal{O}}                
\newcommand{\om}{\omega}                   
\newcommand{\ox}{\otimes}                      
\newcommand{\pd}[2]{\frac{\partial#1}{\partial#2}}
\newcommand{\piso}[1]{\lfloor#1\rfloor}  
\newcommand{\PsiDO}{\Psi\mathrm{DO}}         
\newcommand{\Q}{\mathbb{Q}}                
\newcommand{\R}{\mathbb{R}}                
\newcommand{\sepword}[1]{\quad\mbox{#1}\quad} 
\newcommand{\set}[1]{\{\,#1\,\}}               
\newcommand{\shalf}{{\scriptstyle\frac{1}{2}}} 
\renewcommand{\SS}{\mathcal{S}}         
\newcommand{\Spin}{\text{Spin}}            
\newcommand{\SO}{\text{SO}}                  
\newcommand{\Th}{\Theta}                       
\renewcommand{\th}{\theta}                      
\newcommand{\T}{\mathbb{T}}                 
\newcommand{\thalf}{\tfrac{1}{2}}             
\newcommand{\tr}{\text{tr}}                         
\newcommand{\U}{\mathcal{U}}               
\newcommand{\x}{\times}                           
\newcommand\exter{{\textstyle\bigwedge}} 
\newcommand{\wh}{\widehat}                   
\newcommand{\wt}{\widetilde}                  
\newcommand{\Z}{\mathbb{Z}}                  
\renewcommand{\.}{\cdot}                           
\renewcommand{\:}{\colon}                         
\newcommand{\vc}{\vcentcolon =}             
\def\ee_#1{e_{{\scriptscriptstyle#1}}}       
\def\<#1,#2>{\langle#1\,,\,#2\rangle}          
\newcommand{\norm}[1]{\left\lVert#1\right\rVert}    
\newcommand{\be}{\begin{enumerate}}
\def\Xint#1{\mathchoice
	{\XXint\displaystyle\textstyle{#1}}%
	{\XXint\textstyle\scriptstyle{#1}}%
	{\XXint\scriptstyle\scriptscriptstyle{#1}}%
	{\XXint\scriptscriptstyle\scriptscriptstyle{#1}}%
	\!\int}
\def\XXint#1#2#3{{\setbox0=\hbox{$#1{#2#3}{\int}$}
	\vcenter{\hbox{$#2#3$}}\kern-0.5\wd0}}
\def\ncint{\Xint-}
\newcommand{\sg}{\sigma}                              
\newcommand{\makeletterhead}{
\begin{textblock*}{4.5cm}(\paperwidth-6cm,1cm)%
\setlength{\parindent}{0mm}%
\href{http://\urlLabo}{\includegraphics[width=4.3cm]{\logoLabo}}%
\end{textblock*}
}
\newcommand{\makeletterfoot}{
\begin{textblock*}{21cm}(0cm,\paperheight-2cm)
\setlength{\parindent}{0mm}%
\setlength{\tabcolsep}{8pt}%
\centering\begin{tabular}{cccc}
\includegraphics[height=1.2cm]{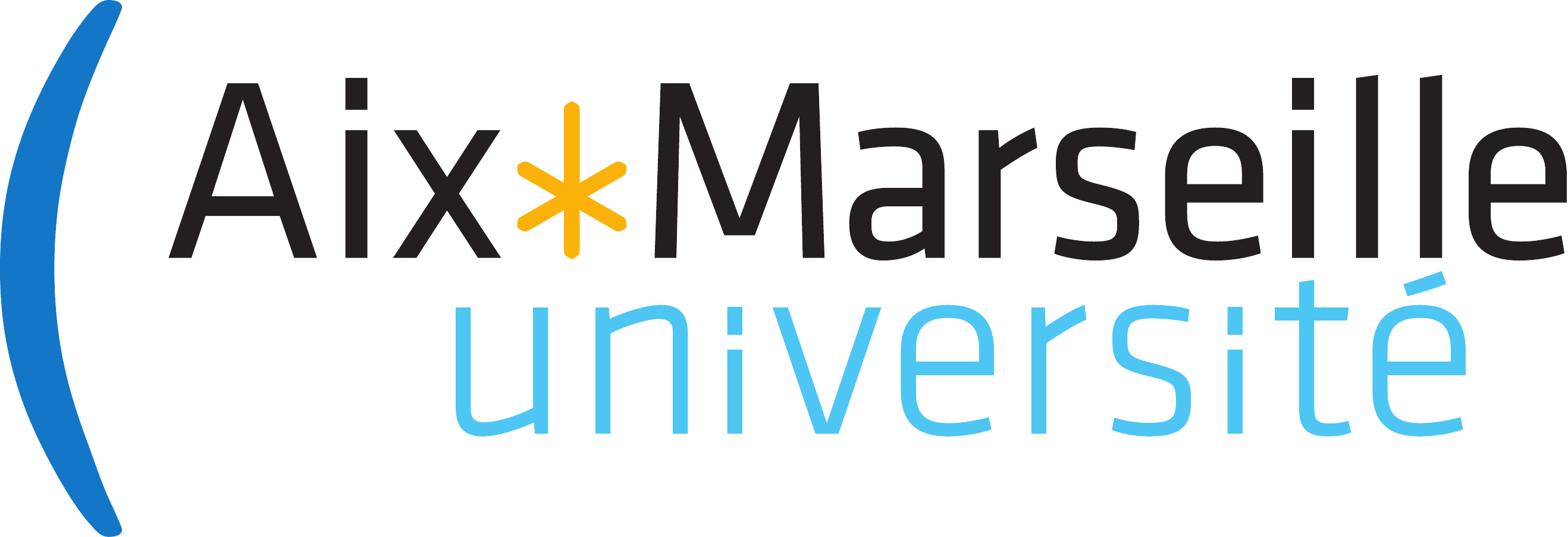}
&
\includegraphics[height=1.2cm]{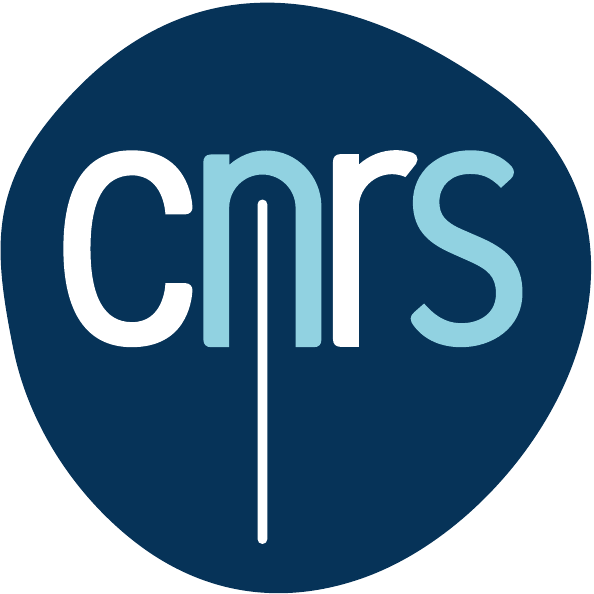}
&
\bfseries\sffamily\scriptsize%
\begin{tabular}[b]{c} 
	\adresseauteur\\
	\href{http://\urlLabo}{\urlLabo}
\end{tabular}%
&
\includegraphics[height=1.2cm]{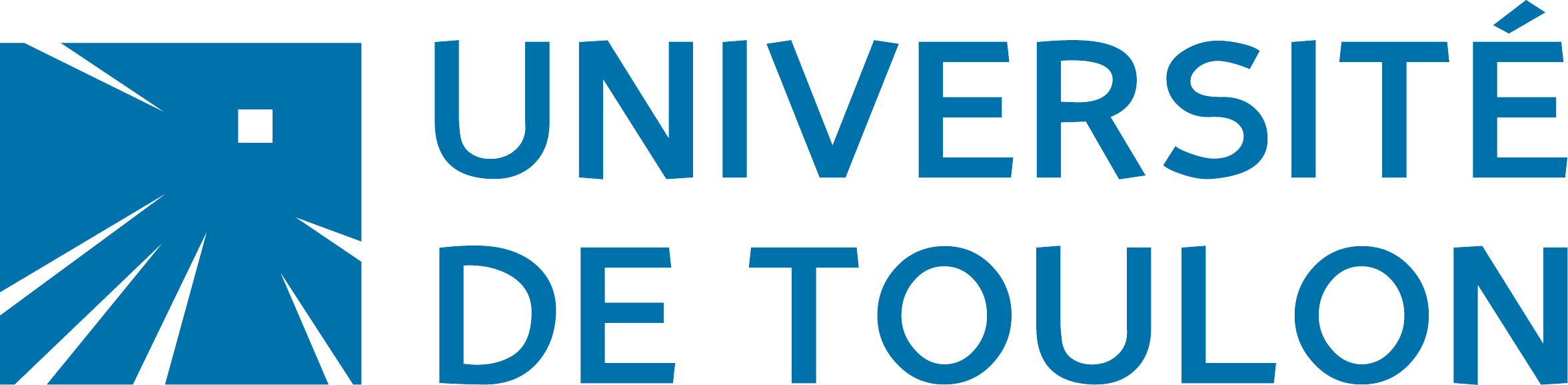}
\end{tabular}
\end{textblock*}
}
\newcommand{\adresseauteur}{
Centre de Physique Théorique (UMR 7332)\\
Case 907 - Luminy\\
F-13288 Marseille cedex 9
}
\newcommand{\logoLabo}{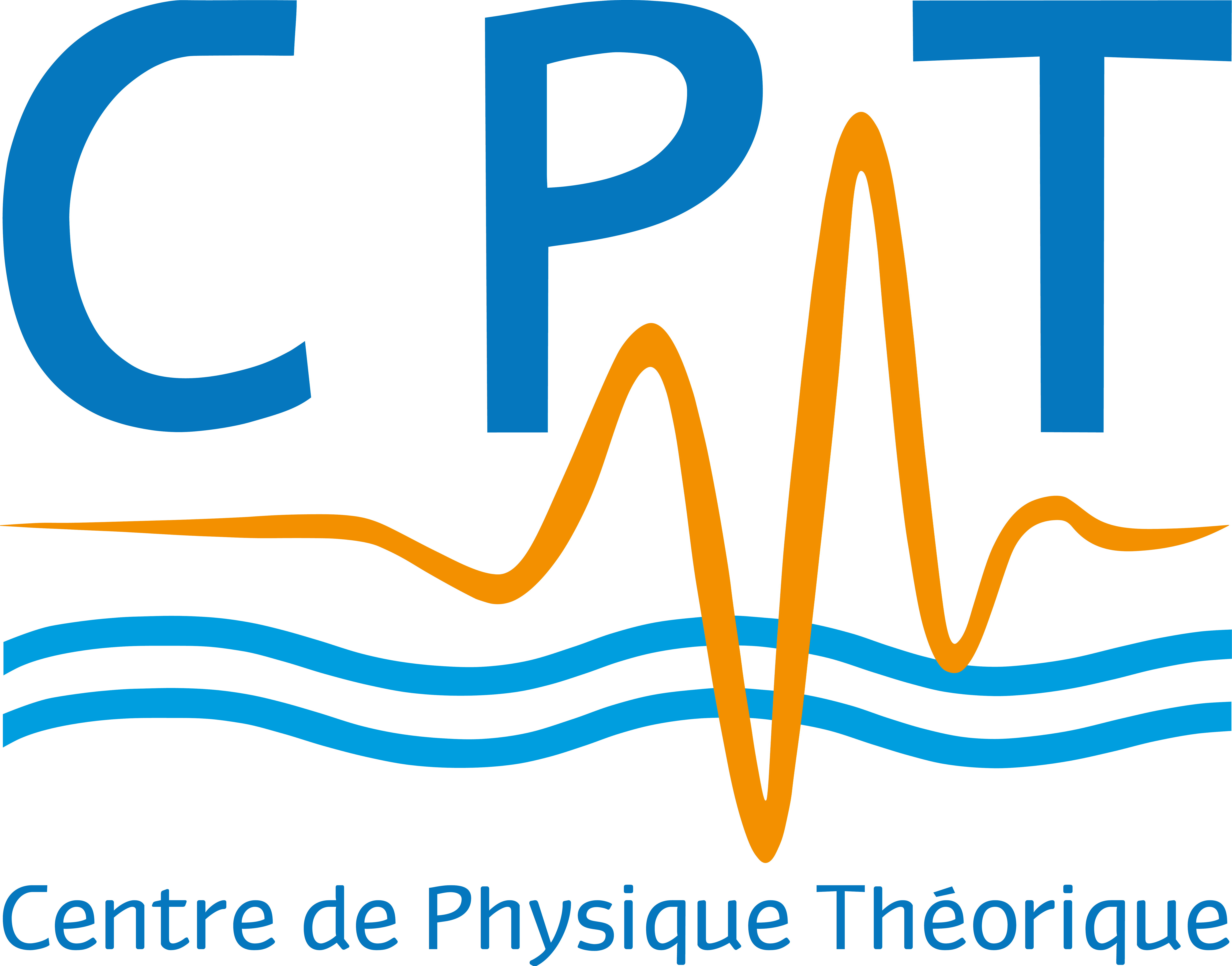}
\newcommand{\urlLabo}{www.cpt.univ-mrs.fr}
\begin{document}

\makeletterhead
\makeletterfoot

\vspace*{3cm}

\begin{center}
{\sffamily\bfseries\LARGE  Spectral Geometry}
\end{center}

\vspace{1cm}

\begin{center} {\Large
Bruno  {\scshape Iochum}} \\

\hspace{1cm}

Centre de Physique Théorique\\
Aix Marseille Univ, Université de Toulon, CNRS, CPT, Marseille, France
\end{center}

\vspace{1cm}

\begin{center}
{\sffamily  Notes of lectures delivered at the Summer School \\[2mm]
``Geometric, Algebraic and Topological Methods for Quantum Field Theory 2011''\\[2mm]
Villa de Leyva (Columbia), July 4--22 }\\[2mm]
\end{center}

\vspace{0.5cm}
\begin{center}
{\sffamily\bfseries Abstract}\\[3mm]
\begin{minipage}{0.75\textwidth}\small

The goal of these lectures is to present the few fundamentals of noncommutative geometry looking around its 
spectral approach. Strongly motivated by physics, in particular by relativity and quantum mechanics, 
Chamseddine and Connes have defined an action based on spectral considerations, the so-called spectral 
action.

The idea is to review the necessary tools which are behind this spectral action to be able to compute it first in 
the case of Riemannian manifolds (Einstein--Hilbert action). Then, all primary objects defined for manifolds will 
be generalized to reach the level of noncommutative geometry via spectral triples, with the concrete analysis 
of the noncommutative torus which is a deformation of the ordinary one.

The basics of different ingredients will be presented and studied like, Dirac operators, heat 
equation asymptotics, zeta functions  and then, how to get within the framework of operators on Hilbert 
spaces, the notion of noncommutative residue, Dixmier trace, pseudodifferential operators etc. These notions 
are appropriate in noncommutative geometry to tackle the case where the space is swapped with an algebra 
like for instance the noncommutative torus. Its non-compact generalization, namely the Moyal plane, is also 
investigated.
\end{minipage}
\end{center}
\vspace{1cm} \hspace{2cm}{\small Update: 2017, December 15}

\newpage

{\bf Motivations:}

Let us first expose few motivations from physics to study noncommutative geometry which is by essence a spectral 
geometry. Of course, precise mathematical definitions and results will be given in the other sections.

The notion of spectrum is quite important in physics, for instance in classical mechanics, the Fourier spectrum is 
essential to understand vibrations or the light spectrum in electromagnetism. 
The notion of spectral theory is also important in 
functional analysis, where the spectral theorem tells us that any selfadjoint operator $A$ can be seen as an integral 
over its spectral measure $A=\int_{a\in \Sp(a)} a \, dP_a$ if $\Sp(A)$ is the spectrum of $A$. This is of course  
essential in the axiomatic formulation of quantum mechanics, especially in the Heisenberg picture where the tools 
are the observables namely are selfadjoint operators.
\\
But this notion is also useful in geometry. In special relativity, we consider fields 
$\psi(\vec{x})$ for $\vec{x}\in \R^4$ and the electric and magnetic fields $E,\, B \in$ Function($M=\R^4,\R^3)$.
Einstein introduced in 1915 the gravitational field and the equation of motion of matter. But a problem appeared: 
what are the physical meaning of coordinates $x^\mu$ and equations of fields? 
Assume the general covariance of 
field equation. If $g_{\mu \nu}(x)$ or the tetradfield $e^I_\mu (x)$ is a solution (where $I$ is a local inertial reference 
frame), then, for any diffeomorphism $\phi$ of $M$ which is active or passive (i.e. change of coordinates), 
$e'^I_\nu (x) =\frac{\partial x^\mu}{\partial\phi(x)^\nu} \, e^I_\mu(x)$ is also a solution. As a consequence, when 
relativity became general, the points disappeared and it remained only fields on fields in the sense that there is no 
fields on a given space-time. But how to practice geometry without space, given usually by a manifold $M$? 
In this later case, the spectral approach, namely the control of eigenvalues of the scalar (or spinorial) Laplacian 
return important informations on $M$ and one can even address the question if they are sufficient: can one hear 
the shape of $M$?
\\
There are two natural points of view on the notion of space: one is based on points (of a manifold), this is the 
traditional geometrical one. The other is based on algebra and this is the spectral one. So the idea is to use algebra 
of the dual spectral quantities.
\\
This is of course more in the spirit of quantum mechanics but it remains to know what is a quantum geometry 
with bosons satisfying the Klein-Gordon equation $(\Box +m^2) \psi(\vec{x})=s_b(\vec{x})$ and fermions 
satisfying $(i\delslash-m)\psi(\vec{x})=s_f(\vec{x})$ for sources $s_b,s_f$. Here $\delslash$ can be seen as a 
square root of $\Box$ and the Dirac operator will play a key role in noncommutative geometry.

In some sense, quantum forces and general relativity drive us to a spectral approach of physics, especially of 
space-time. 

Noncommutative geometry, mainly pioneered by A. Connes (see \cite{Book,ConnesMarcolli}), is based on a 
spectral triple $(\A,\H,\DD)$ where the $*$-algebra $\A$ generalizes smooth functions on space-time $M$
(or the coordinates) with pointwise product, $\H$ generalizes the Hilbert space of above quoted spinors $\psi$ 
and $\DD$ is a selfadjoint operator on $\H$ which generalizes $\delslash$ via a connection on a vector bundle 
over $M$. The algebra $\A$ also acts, via a representation of $*$-algebra, on $\H$.

Noncommutative geometry treats space-time as quantum physics does for the phase-space since it gives a 
uncertainty principle: 
under a certain scale, phase-space points are indistinguishable. Below the scale $\Lambda^{-1}$, a certain 
renormalization is necessary. Given a geometry, the notion of action plays an essential role in physics, for instance, 
the Einstein--Hilbert action in gravity or the Yang--Mills--Higgs action in particle physics. So here, given the data 
$(\A,\H,\DD)$, the appropriate notion of action was introduced by Chamseddine and Connes \cite{CC} and defined 
as
$$
S(\DD,\Lambda,f) \vc \Tr \big(f(\DD/\Lambda)\big)
$$
where $\Lambda\in \R^+$ plays the role of a cut-off and $f$ is a positive even function. The asymptotic series in 
$\Lambda \to \infty$ yields to an effective theory. For instance, this action applied to a noncommutative model of 
space-time $M\times F$ with a fine structure for fermions encoded in a finite geometry $F$ gives rise from pure 
gravity to the standard model coupled with gravity \cite{CC2,CCM,ConnesMarcolli}. 

\vspace{0.5cm}
The purpose of these notes is mainly to compute this spectral action on few examples like manifolds and 
the noncommutative torus.
\\
In section \ref{Wod}, we present standard material on pseudodifferential operators over a compact Riemannian 
manifold. A description of the behavior of the kernel of a $\Psi$DO near the diagonal is given with the important 
example of elliptic operators. Then follows the notion of Wodzicki residue and its computation. The main point 
being to understand why it is a residue.

In section \ref{Dixmiertrace}, the link with the Dixmier trace is shown. Different subspaces of compact operators 
are described in particular, the ideal $\L^{1,\infty}(\H)$. Its definition is on purpose because in 
renormalization theory, one has to control the logarithmic divergency of the series $\sum_{n=1}^\infty n^{-1}$. We 
will see that this ``defect'' of convergence of the Riemann zeta function (in the sense that this generates a lot of 
complications of convergence in physics) is in fact an ``advantage'' because it is 
precisely the Dixmier trace and more generally the Wodzicki residue which are the right tools which mimics this 
zeta function: firstly, this controls the spectral aspects of a manifold and secondly they can be generalized to any 
spectral triple. 

In section \ref{Diracoperator}, we recall the basic definition of a Dirac (or Dirac-like) operator on a compact 
Riemannian manifold $(M,\,g)$ endowed with a vector bundle $E$. An example is the (Clifford) bundle 
$E=\CC \ell\, M$ where $\CC \ell\,T_x^*M$ is the Clifford algebra for $x \in M$. This leads to the notion of spin 
structure, spin connection $\nabla^S$ and Dirac operator $\Ds=-i c\circ\nabla^S$ where $c$ is the Clifford 
multiplication. A special focus is put on the change of metrics $g$ under conformal transformations.

In section \ref{Heatkernel} is presented the fundamentals of heat kernel theory, namely the Green function of the 
heat operator $e^{t\Delta},\,t\in \R^+$. In particular, its expansion as $t\to 0^+$ in terms of coefficients of the elliptic 
operator $\Delta$, with a method to compute the coefficients of this expansion is explained. The idea being to 
replace the Laplacian $\Delta$ by $\DD^2$ later on.

In section \ref{Noncommutative integration}, a noncommutative integration theory is developed around 
the notion of spectral triple. This means to understand the notion of differential (or pseudodifferential) 
operators in this context. Within differential calculus, the link between the one-form and the fluctuations of the 
given $\DD$ is outlined.

Section \ref{Spectral action} concerns few actions in physics, such that the Einstein--Hilbert and Yang--Mills actions. 
The spectral action $\Tr \big(f(\DD/\Lambda)\big)$ is justified and the link between its 
asymptotic expansion in $\Lambda$ and the heat kernel coefficients is given via the noncommutative integrals 
of powers of $\vert \DD\vert$.

Section \ref{Residues of series} gathers several results on the computation of a residue of a series of 
holomorphic functions, a real difficulty since one cannot commute residue and infinite sums. The notion of 
Diophantine condition appears and allows nevertheless this commutation for meromorphic extension of a 
class of zeta functions.

Section \ref{The noncommutative torus} is devoted to the computation of the spectral action on the noncommutative 
torus. After the very definitions, it is shows how to calculate with the noncommutative integral. The main 
technical difficulty stems from a Diophantine condition which seems necessary (but is sufficient) since any element 
of the smooth algebra of the torus is a series of its generators, so the previous section is fully used. All proofs are 
not given, but the reader should be aware of all the main steps.

Section \ref{The non-compact case} is an approach of non-compact spectral triples. This is mandatory for physics 
since, a priori, the space-time is not compact. After a quick review on the difficulties which occur when $M=\R^d$ 
due to the fact that the Dirac operator has a continuous spectrum, the example of the Moyal plane is analyzed. This 
plane is a non-compact version of the noncom-mutative torus. Thus, no Diophantine condition appears, but the 
price to pay is that functional analysis is deeply used. 
\bigskip

For each section, we suggest references since this review is by no means original.

\newpage
\tableofcontents
\bigskip

{\bf Notations:} \\
$\N =\{1, 2,\dots \}$ is the set of positive integers and $\N_0 =\N \cup \{0\}$ the set of non negative integers.\\
On $\R^d$, the volume form is $dx=dx^1\wedge \cdots \wedge dx^d$.\\
$\mathbb{S}^{d}$ is the sphere of radius one in dimension $d$. The induced metric: 
$$
d\xi=\vert \sum_{j=1}^d (-1)^{j-1}\xi_j\,d\xi_1\wedge \cdots \wedge \widehat{d\xi_j}\wedge\cdots \wedge 
d\xi_d \vert$$
restricts to the volume form on $\mathbb{S}^{d-1}$.\\
$M$ is a $d$-dimensional manifold with metric $g$. \\
$U,V$ are open set either in $M$ or in $\R^d$.\\
We denote by $dvol_g$ the unique volume element such that $dvol_g(\xi_1,\cdots,\xi_d)=1$ for all positively 
oriented $g$-orthonormal basis $\set{\xi_1,\cdots,\xi_d}$ of $T_xM$ for $x\in M$. Thus in a local chart 
$\sqrt{\det\,g_x} \,\vert dx\vert= \vert dvol_g \vert$.\\
When $\a \in \N^d$ is a multi-index, we define 
$$
\partial^\a_x \vcentcolon = \partial^{\a_1}_{x_1}  \partial^{\a_2}_{x_2} \cdots \partial^{\a_d}_{x_d} \,,  \qquad
\vert \a \vert \vcentcolon =\sum_{i=1}^d \a_i \,, \qquad \a ! \vcentcolon = \a_1 a_2\cdots \a_d.
$$ 
For $\xi \in \R^d$, $\vert \xi \vert \vcentcolon=\big(\sum_{k=1}^d \vert \xi_k \vert ^2\big)^{1/2}$ is the 
Euclidean metric.\\
$\H$ is a separable Hilbert space and $\B(\H), \K(\H), \L^p(\H)$ denote respectively the set of bounded, compact 
and $p$-Schatten-class operators, so $\L^1(\H)$ are trace-class operators.

\section{Wodzicki residue and kernel near the diagonal}
\label{Wod}

The aim of this section is to show that the Wodzicki's residue {\it WRes} is a trace on the set $\Psi DO(M)$ of classical 
pseudodifferential operators on a compact manifold $M$ of dimension $d$. 
\medskip

Let us first describe the steps:

- Define {\it WRes}$(P) =  2\, \underset{s=0}{\Res}\, \zeta(s)$ for $P\in \Psi DO^m$ of order $m$ and 
$\zeta: s\in\C \rightarrow \Tr(P \Delta^{-s})$, which is holomorphic when $\Re (s)\geq \frac{1}{2} (d+m)$.

- If $k^P(x,y)$ is the kernel of $P$, then its trace can be developed homogeneously as the following :
$\tr\big(k^P(x,y)\big)=\sum_{j=-(m+d)}^0 a_j(x,x-y)-c_P(x)\log\vert x-y\vert + \cdots$ 
where $a_j$ is homogeneous of degree $j$ in $y$ and $c_P$ is a density on $M$ defined by 
$c_P(x):= \frac{1}{(2\pi)^d}\int_{\mathbb{S}^{d-1}}\tr\big(\sigma^P_{-d}(x,\xi)\big) \, d\xi$; here,  
$\sigma_{-d}^P$ is the symbol of $P$ of order $-d$.
\\
The Wodzicki's residue has a simple computational form, namely\,{\it WRes}$\, P=\int_M c_P(x)\, \vert dx \vert$. 
Then, the trace property follows.
\medskip

References for this section: Classical books are \cite{Shubin,Taylor}. For an orientation more in the spirit of 
noncommutative geometry since here we follow \cite{Ponge,Ponge1} based on \cite{CM2,BG}, see also the 
excellent books \cite{NR,Paycha, Polaris, Var, VarillyS}.

\subsection{A quick overview on pseudodifferential operators}

In the following, $m\in\C$.
\begin{definition}
\label{defsym}
A symbol $\sigma(x,\xi)$ of order $m$ is a $C^\infty$ function: $(x,\xi)\in U\times \R^d \rightarrow \C$ satisfying for any compact $K\subset U$ and any $x \in K$

(i) $\vert \partial_x^\a \partial_\xi^\beta \sigma(x,\xi) \vert \leq C_{K\a \beta}\,
(1+\vert \xi\vert)^{\Re(m)-\vert \beta \vert}$, for some constant $C_{K\alpha\beta}$.

(ii) We suppose that $\sigma(x,\xi) \simeq \sum_{j\geq 0} \sigma_{m-j}(x,\xi)$ where $\sigma_k$ is 
homogeneous of degree $k$ in $\xi$ where $\simeq$ means a controlled asymptotic behavior

\centerline{$\vert \partial_x^\a \partial_\xi^\beta \big( \sigma- \sum_{j<N}\sigma_{m-j}\big)(x,\xi)
 \vert \leq C_{KN\a\beta} \, \vert \xi \vert^{\Re(m)-N-\vert \beta\vert} \text{ for }
 \vert \xi \vert \geq 1.$}
\noindent The set of symbols of order $m$ is denoted by $S^m(U\times \R^d)$.

A function $a\in C^\infty(U\times U \times \R^d)$ is an amplitude of order $m$, if for any compact $K\subset U$ 
and any $\a,\beta, \ga \in \N^d$ there exists a constant $C_{K\a\beta \ga}$ such that
\begin{align*}
\vert \partial_x^\a \partial_y^\ga \partial_\xi^\beta \,a(x,y,\xi) \vert \leq C_{K\a \beta\ga}\,
(1+\vert \xi\vert)^{\Re(m)-\vert \beta \vert},\quad \forall\, x,y\in K, \xi\in \R^d.
\end{align*} 
The set of amplitudes is written $A^m(U)$.
\end{definition}

For $\sigma\in S^m(U\times\R^d)$, we get a continuous operator 
$\sigma(\cdot,D): u\in C^\infty_c(U) \rightarrow C^\infty(U)$ given by
\begin{align}
\label{sigma(cdot,D)}
\sigma(\cdot,D)(u)(x)\vcentcolon =
\sigma(x,D)(u)\vc  \tfrac{1}{(2\pi)^d} \int_{\R^d}\sigma(x,\xi)\, \widehat{u}(\xi)\,e^{ix\cdot \xi}\, d\xi
\end{align}
where $\,\widehat{ }\,$ means the Fourier transform. This operator $\sigma(\cdot,D)$ will be also denoted by 
$Op(\sigma)$. For instance, 
\begin{align*}
\text{if }\,\sigma(x,\xi)=\sum_\a a_\a(x)\,\xi^\a \text{, then } \sigma(x,D)=\sum_{\a} a_\a(x)\,D_x^\a \text{ with } 
D_x \vcentcolon = -i\partial_x.
\end{align*}
Remark that, by transposition, there is a natural extension of $\sigma(\cdot,D)$ from the set $\DD'_c(U)$ of 
distributions with compact support in $U$ to the set of distributions $\DD'(U)$.

By definition, the leading term for $\vert \a \vert =m$ is the {\it  principal symbol} and the {\it Schwartz kernel of} 
$\sigma(x,D)$ is defined by
\begin{align}
\label{kernel}
k^{\sigma(x,D)}(x,y)\vc \tfrac{1}{(2\pi)^d} \int_{\R^d} \sigma(x,\xi) \, e^{i(x-y)\cdot \xi}\,d\xi
=\widecheck{\sigma}_{\xi\rightarrow y}(x,x-y)
\end{align}
where $\,\widecheck{}\,$ is the Fourier inverse in variable $\xi$. Similarly, the kernel of the operator $Op(a)$ 
associated to the amplitude $a$ is
\begin{align}
\label{amplitude}
k^a(x,y)\vc \tfrac{1}{(2\pi)^d} \int_{\R^d} a(x,y,\xi) \, e^{i(x-y)\cdot \xi}\,d\xi.
\end{align}

\begin{definition}
$P: C_c^\infty (U) \rightarrow C^\infty (U)$ (or $\DD'(U)$) is said to be smoothing if its kernel is in 
$C^\infty(U\times U)$ and $\Psi DO^{-\infty}(U)$ will denote the set of smoothing operators.\\
For $m\in \C$, the set $\Psi DO^m (U)$ of pseudodifferential operators of order $m$ will be the set of $P$ such that 
\begin{align*}
P: C_c^\infty (U) \rightarrow C^\infty (U), \, Pu(x)=\big(\sigma(x,D)+R\big)(u) \text{ where }
\sigma \in S^m(U\times \R^d) ,\, R\in \Psi DO^{-\infty}.
\end{align*}
$\sigma$ is called the symbol of $P$.
\end{definition}
\begin{remark}
\label{defsmoothing}
It is important to quote that a smoothing operator is a pseudodifferential operator whose amplitude is in $A^m(U)$ 
for all $m\in \R$: by \eqref{amplitude}, $a(x,y,\xi)\vc e^{-i(x-y)\cdot \xi}k(x,y)\,\phi(\xi)$ where the function 
$\phi \in C^\infty_c(\R^d)$ satisfies $\int_{\R^d}\phi(\xi)\,d\xi=(2\pi)^d$.
\end{remark}
Clearly, the main obstruction to smoothness is on the diagonal since

\begin{lemma}
$k^{\sigma(x,D)}$ is $C^\infty$ outside the diagonal.
\end{lemma}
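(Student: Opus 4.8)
The plan is to show that the Schwartz kernel $k^{\sigma(x,D)}(x,y)$, given by the oscillatory integral $\tfrac{1}{(2\pi)^d}\int_{\R^d}\sigma(x,\xi)\,e^{i(x-y)\cdot\xi}\,d\xi$ as in \eqref{kernel}, is smooth on the open set $\{(x,y)\in U\times U : x\neq y\}$. The obstruction to treating this integral as an ordinary (absolutely convergent) integral is that $\sigma(x,\cdot)$ only decays like $(1+|\xi|)^{\Re(m)}$, which for $\Re(m)\geq -d$ fails to be integrable. The standard device to circumvent this is integration by parts in $\xi$, exploiting the fact that away from the diagonal the phase $e^{i(x-y)\cdot\xi}$ oscillates: each integration by parts gains one power of decay at the cost of a factor $|x-y|^{-1}$, which is harmless on any compact subset of $\{x\neq y\}$.

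First I would fix a point $(x_0,y_0)$ with $x_0\neq y_0$ and a compact neighborhood $K\times K'\subset U\times U$ of it on which $|x-y|\geq \delta>0$. The key identity is that for $x\neq y$ and any $j\in\{1,\dots,d\}$,
$$
e^{i(x-y)\cdot\xi} = \frac{1}{i(x_j-y_j)}\,\partial_{\xi_j}\,e^{i(x-y)\cdot\xi},
$$
and more conveniently, writing $L_{x,y}\vcentcolon= |x-y|^{-2}\sum_{j=1}^d (x_j-y_j)D_{\xi_j}$ (with $D_{\xi_j}=-i\partial_{\xi_j}$), one has $L_{x,y}\,e^{i(x-y)\cdot\xi}=e^{i(x-y)\cdot\xi}$. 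Integrating by parts $N$ times in $\xi$ (the boundary terms vanish because the integrand, after each step, still has at worst polynomial growth in $\xi$ and we may first regularize by inserting a cutoff $\chi(\xi/R)$ and let $R\to\infty$, or simply work with the distributional pairing) transfers the transpose operator $(L_{x,y}^{\,t})^N$ onto $\sigma(x,\xi)$. By the symbol estimate (i) in Definition~\ref{defsym}, each application of $L_{x,y}^{\,t}$ produces terms bounded by $C\,|x-y|^{-1}(1+|\xi|)^{\Re(m)-1}$, so after $N$ steps the integrand is $O\big(|x-y|^{-N}(1+|\xi|)^{\Re(m)-N}\big)$, which is absolutely integrable in $\xi$ once $N>\Re(m)+d$.

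Next I would upgrade this from mere continuity to $C^\infty$. Differentiating $k^{\sigma(x,D)}$ in $x$ and $y$ under the (now absolutely convergent) integral sign is justified: a derivative $\partial_x^\alpha\partial_y^\beta$ falls either on $\sigma(x,\xi)$ — controlled again by estimate (i), which bounds all $x$-derivatives of the symbol — or on the phase $e^{i(x-y)\cdot\xi}$, bringing down a polynomial factor $\xi^{\gamma}$ of degree $|\gamma|\leq|\alpha|+|\beta|$; in either case one compensates by choosing $N$ larger, namely $N>\Re(m)+d+|\alpha|+|\beta|$, so that the differentiated integrand is still dominated by an integrable function of $\xi$, locally uniformly in $(x,y)$ on $K\times K'$. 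Dominated convergence then legitimizes differentiation under the integral to all orders, giving $k^{\sigma(x,D)}\in C^\infty(\{x\neq y\})$. Finally, since a pseudodifferential operator $P\in\Psi DO^m(U)$ has kernel $k^{\sigma(x,D)}+k^R$ with $R$ smoothing, and $k^R\in C^\infty(U\times U)$ by definition, the kernel $k^P$ is likewise smooth off the diagonal.

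The main obstacle is purely the bookkeeping around the oscillatory integral: making the integration by parts rigorous (the vanishing of boundary terms, or equivalently the passage through a regularizing cutoff) and tracking how many times one must integrate by parts as a function of the number of $x,y$-derivatives one wants to take. None of this is deep, but it is the only place where care is genuinely required; everything else follows from the symbol estimates in Definition~\ref{defsym} together with dominated convergence.
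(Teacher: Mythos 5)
Your proof is correct and follows essentially the same route as the paper's: both use the operator $P_y=\tfrac{1}{|y|^2}\sum_j y_j\partial_{\xi_j}$ (your $L_{x,y}$) that reproduces the phase, integrate by parts $N>\Re(m)+d$ times to make the oscillatory integral absolutely convergent off the diagonal, and conclude. You are somewhat more explicit than the paper about justifying differentiation under the integral sign by taking $N$ larger to absorb the $\xi^\gamma$ factors, which is a welcome but minor elaboration of the same argument.
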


\begin{proof}
$\widecheck{\sigma}$ is smooth since it is given for $y\neq 0$ by the oscillatory integral 
\begin{align*}
\int_{\R^d} \sigma(x,\xi)\, e^{iy\cdot \xi}\, d\xi=\int_{\R^d} (P_y^k \sigma)\, e^{iy\cdot\xi}\,d\xi
\end{align*}
where $k$ is an integer such that $k>\Re(m)+n$ and moreover $P_y=P(y,D_\xi)$ is chosen with 
$P_y(e^{iy\cdot \xi})=e^{iy \cdot \xi}$; for instance 
$P_y=\tfrac{1}{\vert y \vert^2} \sum_j y_j\tfrac{\partial\,}{\partial \xi_j}$. The last integral is absolutely 
converging.
\end{proof}
Few remarks on the duality between symbols and a subset of pseudodifferential operators:
\begin{align*}
\sigma(x,\xi)\in S^m(U\times \R^d) \longleftrightarrow k_\sigma(x,y)\in C^\infty_c(U\times U\times \R^d)
\longleftrightarrow A=Op(\sigma) \in \Psi DO^m
\end{align*}
with the definition
$$
\sigma^A(x,\xi) \vcentcolon = e^{-ix\cdot \xi} \,A(x\rightarrow e^{i x \cdot \xi})$$
where $A$ is properly supported (since $x \to e^{ix\xi}$ has not a compact support) namely, 
$A$ and its adjoint map the dual of $C^\infty(U)$ (distributions with compact support) into itself. 
Moreover,
\begin{align*}
 \sigma^A \simeq \sum_\a \tfrac{(-i)^\a}{\a !} \,\partial^\a_\xi \partial^\a_y \, k^A_\sigma (x,y,\xi) _{\vert y=x}\,,\qquad
 k^A_\sigma(x,y)=\vcentcolon \tfrac{1}{(2\pi)^d} \int_{\R^d} e^{i(x-y)\cdot \xi} \, k^A(x,y,\xi) \,d\xi \,,
\end{align*}
where $k^A(x,y,\xi)$ is the amplitude of $k_\sigma^A(x,y)$. Actually, 
$\sigma^A(x,\xi)=e^{iD_\xi\,D_y}\,k^A(x,y,\xi)_{\vert y=x}$ and 
$e^{iD_\xi\,D_y}=1+iD_\xi D_y -\tfrac{1}{2}(D_\xi D_y)^2+\cdots$. Thus $A=Op(\sigma^A) +R$ where 
$R$ is a regularizing operator on $U$.

A technical point is the following: a pseudodifferential operator $P$ is properly supported when both $P$ and $P^*$ maps $C_c^\infty(U)$ not only in $C^\infty(U)$ but in $C_c^\infty(U)$. Any $\Psi DO$ is the sum of a properly supported 
$\Psi DO$ and a smothing one.

A point of interest is that differential operators are local: if $f=0$ on $U^c$ (complementary set of $U$) then $Pf=0$ 
on $U^c$. While pseudodifferential operators are pseudo-local: $Pf$ is smooth on $U$ when $f$ is smooth.

There are two fundamental points about $\Psi DO$'s: they form an algebra and this notion is stable by 
diffeomorphism justifying its extension to manifolds and then to bundles:

\begin{theorem}
\label{pseudo}
(i) If $P_1 \in \Psi DO^{m_1}$ and $P_2 \in \Psi DO^{m_2}$ are properly supported $\Psi DO$'s, then $P_1 P_2 \in \Psi DO^{m_1+m_2}$ is properly supported with symbol
\begin{align*}
\sigma^{P_1P_2}(x,\xi)\simeq \sum_{\a \in \N^d} \tfrac{(-i)^\a}{\a !}\, \partial^\a_\xi \sigma^{P_1}(x,\xi) \,
\partial^\a_\x \sigma^{P_2} (x,\xi).
\end{align*}
The principal symbol of $P_1P_2$ is
\begin{align*}
\sigma_{m_1+m_2}^{P_1P_2}(x,\xi)=\sigma_{m_1}^{P_1}(x,\xi)\,\sigma_{m_2}^{P_2}(x,\xi).
\end{align*}
(ii) Let $P\in\Psi DO^m(U)$ and $\phi \in \text{Diff\,}(U,V)$ where $V$ is another open set of $\R^d$. The operator 
$\phi_*P: f\in C^\infty(V) \rightarrow P(f\circ \phi)\circ \phi^{-1}$ satisfies $\phi_* P \in \Psi DO ^m(V)$ and its symbol is
\begin{align*}
\sigma^{\phi_* P}(x,\xi)=\sigma^P_m\big(\phi^{-1}(x),(d\phi)^t \xi \big) + \sum_{\vert \a \vert >0} \tfrac{(-i)^\a}{\a !}\,
\phi_\a(x,\xi)\, \partial^\a_\xi \sigma^P \big( \phi^{-1}(x), (d\phi)^t \xi \big)
\end{align*}
where $\phi_\a$ is a polynomial of degree $\a$ in $\xi$. Moreover, its principal symbol is 
\begin{align*}
\sigma_m^{\phi_*P}(x,\xi)=\sigma^P_m \big(\phi^{-1}(x),(d\phi)^t \xi \big).
\end{align*}
In other terms, the principal symbol is covariant by diffeomorphism: ${\sigma^{\phi_* P}}_m=\phi_* \sigma_m^P$.\\
(iii) If $P\in \Psi DO^{m}$ is properly supported, then $P^*\in \Psi DO^{m}$ and 
\begin{align*}
\sigma^{P^*}(x,\xi) \simeq  \sum_\a \tfrac{(-i)^\a}{\a !} \,\partial^\a_\xi \partial^\a_x \,\overline{\sigma(x,\xi)}\,.
\end{align*}
\end{theorem}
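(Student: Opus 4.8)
The plan is to treat the three parts by the standard oscillatory-integral manipulations, with part (i) as the core computation from which (iii) follows as a special case (adjoint = composition with a trivial operator in the right framework) and (ii) by an explicit change of variables. For part (i), I would start from the kernel/amplitude description: since $P_1,P_2$ are properly supported, the composite $P_1P_2$ is well-defined on $C_c^\infty(U)$, and applying $P_1P_2$ to $x\mapsto e^{ix\cdot\xi}$ and stripping the phase gives
\begin{align*}
\sigma^{P_1P_2}(x,\xi) = e^{-ix\cdot\xi}\, P_1\big(P_2(e^{i\,\cdot\,\xi})\big)(x)
= e^{-ix\cdot\xi}\, P_1\big( y\mapsto e^{iy\cdot\xi}\sigma^{P_2}(y,\xi)\big)(x).
\end{align*}
Writing $P_1$ again through its symbol and Fourier-inverting, one lands on an oscillatory integral $\iint e^{i(x-z)\cdot\eta}\,\sigma^{P_1}(x,\eta)\,\sigma^{P_2}(z,\xi)\,e^{i(z-x)\cdot\xi}\,dz\,d\eta$ up to the $(2\pi)^{-d}$ normalizations; substituting $\eta\mapsto\eta+\xi$ and Taylor-expanding $\sigma^{P_1}(x,\xi+\eta)$ in the second slot produces the asymptotic series $\sum_\alpha \frac{(-i)^\alpha}{\alpha!}\,\partial_\xi^\alpha\sigma^{P_1}\,\partial_x^\alpha\sigma^{P_2}$ after integrating the remaining $e^{i(x-z)\cdot\eta}$ against powers of $(z-x)$. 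The principal symbol claim is then just the $\alpha=0$ term, and the order $m_1+m_2$ follows from the symbol estimates in Definition \ref{defsym} applied termwise.

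For part (ii), I would use the amplitude representation of $P$ and push it forward: for $f\in C_c^\infty(V)$, $(\phi_*P)(f)(x) = P(f\circ\phi)(\phi^{-1}(x))$, and inserting \eqref{amplitude} with the substitution $y = \phi(z)$, $\eta$ the cotangent variable, one gets an amplitude in the new coordinates with phase $(\phi^{-1}(x)-z)\cdot\eta$. The key step is the change of phase: write $\phi^{-1}(x) - z = {}^t\!M(x,z)\,(\phi^{-1}(x)-\phi^{-1}(\phi(z)))$ for a smooth invertible matrix $M$ with $M(x,x) = (d\phi^{-1})(x) = ((d\phi)(\phi^{-1}(x)))^{-1}$ (Hadamard's lemma / Taylor with remainder), then change variables $\eta \mapsto {}^tM\,\eta$ so the phase becomes $(x - \phi(z))\cdot$(new variable) — no, more precisely one arranges the phase to be linear in $x - x'$ with $x' = \phi(z)$; expanding the resulting amplitude in a Taylor series about $x' = x$ and applying the symbol-from-amplitude formula already recorded in the excerpt ($\sigma^A \simeq \sum_\alpha \frac{(-i)^\alpha}{\alpha!}\partial_\xi^\alpha\partial_y^\alpha k^A_\sigma|_{y=x}$) gives the stated series, with $\phi_\alpha$ polynomial in $\xi$ of degree $|\alpha|$ because each $\partial_y^\alpha$ hitting the Jacobian factors and the phase remainder pulls down at most $|\alpha|$ powers of $\eta$. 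The covariance statement $\sigma_m^{\phi_*P} = \phi_*\sigma_m^P$ is the $\alpha=0$ term and just records that the principal symbol transforms as a function on $T^*U$.

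Part (iii) I would obtain either directly — compute $\langle P u, v\rangle$ for $u,v\in C_c^\infty(U)$ using the amplitude representation, move the conjugate through, and read off that $P^*$ has amplitude $\overline{\sigma^P(y,\xi)}$ (with $x,y$ swapped), then apply the amplitude-to-symbol formula to get $\sigma^{P^*}\simeq\sum_\alpha\frac{(-i)^\alpha}{\alpha!}\partial_\xi^\alpha\partial_x^\alpha\overline{\sigma(x,\xi)}$ — or deduce it from (i) by noting $P^* = (\mathrm{Id}\cdot P)^*$ and that transposition reverses composition. I expect the main obstacle to be part (ii): controlling the phase factor under the diffeomorphism rigorously (the Hadamard-type factorization of $\phi^{-1}(x)-z$ and the accompanying change of the $\xi$-variable), and then bookkeeping that the Taylor remainders in the amplitude expansion genuinely produce polynomials $\phi_\alpha$ of the claimed degree while preserving the symbol estimates — this is where the stability-under-diffeomorphism of the symbol class is actually used, and it is more delicate than the essentially algebraic manipulations in (i) and (iii). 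Throughout, all asymptotic expansions are justified by the standard integration-by-parts estimates on oscillatory integrals together with the remainder bounds built into Definition \ref{defsym}(ii).
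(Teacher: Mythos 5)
The paper offers no proof of Theorem \ref{pseudo}: immediately after the statement it remarks that ``the proof of formal expressions is a direct computation, the asymptotic behavior requires some care'' and defers entirely to Taylor and Shubin. Your outline is precisely the standard argument from those references, so there is no divergence of method to report --- only the observation that you have supplied (a sketch of) what the paper deliberately omits. Your part (i) is the correct formal computation: apply $P_1P_2$ to $e^{ix\cdot\xi}$, shift $\eta\mapsto\eta+\xi$, Taylor-expand $\sigma^{P_1}(x,\xi+\eta)$ and convert powers of $\eta$ into $\partial_z^\alpha$ by integration by parts against the phase. Your part (iii) via the amplitude $\overline{\sigma(y,\xi)}$ and the amplitude-to-symbol formula is also the standard route (your alternative ``deduce it from (i) by transposition'' is too vague to count as a proof, but you flag it only as an option). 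One substantive caveat on part (ii): your bookkeeping that $\partial_y^\alpha$ ``pulls down at most $|\alpha|$ powers of $\eta$'' reproduces the degree bound $\deg\phi_\alpha\le|\alpha|$ stated in the theorem, but that bound alone does not make the series asymptotic --- the $\alpha$-th term would then have order $m$ for every $\alpha$. What actually saves the expansion is that the phase remainder $\rho_x(y)=\phi(y)-\phi(x)-\phi'(x)(y-x)$ vanishes to \emph{second} order at $y=x$, so $D_y^\alpha e^{i\rho_x(y)\cdot\eta}\big|_{y=x}$ is a polynomial in $\eta$ of degree at most $|\alpha|/2$, and the $\alpha$-th term has order $\le m-|\alpha|/2$. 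This is exactly the ``care'' the paper alludes to, and it is worth making explicit in your write-up; otherwise the sketch is sound, with the remainder estimates (which you correctly identify as the hard part) still to be carried out along the lines of the cited texts.
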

While the proof of formal expressions is a direct computation, the asymptotic behavior requires some care, see
\cite{Taylor, Shubin}.

An interesting remark is in order: $\sigma^P(x,\xi)=e^{-ix\cdot \xi}\, P(x\rightarrow e^{ix\cdot \xi})$, thus the dilation 
$\xi \rightarrow t \xi$ with $t>0$ gives
\begin{align*}
t^{-m}\,e^{-itx\cdot \xi} P \,e^{itx\cdot\xi}=t^{-m}\sigma^P(x,t\xi) \simeq t^{-m}\sum_{j\geq0} \sigma^P_{m-j}(x,t\xi)
=\sigma^P_m(x,\xi)+o(t^{-1}).
\end{align*}
Thus, if $P\in \Psi DO^m(U)$ with $m\geq 0$, 
\begin{align*}
\sigma^P_m(x,\xi)= \lim_{t\to \infty} t^{-m} \, e^{-ith(x)} \,P \, e^{ith(x)}, \text{ where } h\in C^\infty(U) \text{ is (almost) 
defined by } dh(x)=\xi. 
\end{align*}

\subsection{Case of manifolds}

Let $M$ be a (compact) Riemannian manifold of dimension $d$. Thanks to Theorem \ref{pseudo}, 
the following makes sense:
\begin{definition}
$\Psi DO^m(M)$ is defined as the set of operators $P: C^\infty_c(M) \rightarrow C^\infty(M)$ such that

\qquad (i) the kernel $k^P \in C^\infty (M\times M)$ off the diagonal,

\qquad (ii) the map 
$:\,f \in C_c^\infty \big( \phi(U) \big) \rightarrow P(f \circ \phi) \circ \phi^{-1} \in C^\infty \big( \phi(U)\big)$ 
is in $\Psi DO^m\big(\phi(U)\big)$ for every coordinate chart $(U,\phi:U \to \R^d)$.
\end{definition}
\noindent Of course, this can be generalized: 
\begin{definition}
Given a vector bundle $E$ over $M$, a linear map $P: \Gamma_c^\infty(M,E) \rightarrow \Gamma^\infty (M,E)$ is in 
$\Psi DO^m(M,E)$ when $k^P$ is smooth off the diagonal, and local expressions are $\Psi DO$'s with matrix-valued 
symbols.
\end{definition}
The covariance formula implies that $\sigma_m^P$ is independent of the chosen local chart so is globally defined 
on the bundle $T^*M \rightarrow M$ and $\sigma_m^P$ is defined for every $P \in \Psi DO^m$ using overlapping 
charts and patching with partition of unity.

An important class of pseudodifferential operators are those which are invertible modulo regularizing ones:
\begin{definition}
$P \in \Psi DO^m(M,E)$ is elliptic if $\sigma_m^P(x,\xi)$ is invertible for all $\xi \in TM^*_x$, $\xi \neq 0$.
\end{definition}
This means that $\vert \sigma^P(x,\xi) \vert \geq c_1(x) \vert \xi \vert ^m$ for $\vert \xi \vert \geq c_2(x), \, x \in U$ 
where $c_1,c_2$ are strictly positive continuous functions on $U$.
\\
This also means that there exists {\it a parametrix:}
\begin{lemma}
The following are equivalent:

(i) $Op(\sigma) \in \Psi DO^m(U)$ is elliptic.

(ii) There exist $\sigma' \in S^{-m}(U\times \R^d)$ such that $\sigma \circ \sigma'=1$ or $\sigma' \circ \sigma=1$.

(iii) $Op(\sigma) \, Op(\sigma')=Op(\sigma')\,Op(\sigma) = 1$ modulo $\Psi DO^{-\infty}(U)$. \\
Thus $Op(\sigma') \in \Psi DO^{-m}(U)$ is also elliptic.
\end{lemma}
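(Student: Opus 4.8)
The plan is to prove the equivalence of the three statements about ellipticity and parametrices by the cycle (i) $\Rightarrow$ (ii) $\Rightarrow$ (iii) $\Rightarrow$ (i), relying on the symbol calculus from Theorem \ref{pseudo}(i) throughout. Here $\sigma \circ \sigma'$ denotes the symbol of the composition $Op(\sigma)\,Op(\sigma')$, given by the asymptotic formula $\sigma \circ \sigma' \simeq \sum_\a \tfrac{(-i)^\a}{\a !}\, \partial^\a_\xi \sigma \, \partial^\a_x \sigma'$.

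First I would prove (i) $\Rightarrow$ (ii) by the standard inductive construction of a parametrix symbol. Assuming $Op(\sigma)$ elliptic, the principal symbol $\sigma_m$ is invertible for $|\xi| \geq c_2(x)$, so one sets $\sigma'_{-m}(x,\xi) \vcentcolon = \sigma_m(x,\xi)^{-1}$ for large $\xi$ (smoothed off by a cutoff near $\xi=0$, which only changes things by a smoothing operator), and checks using the growth bound $|\sigma^P(x,\xi)| \geq c_1(x)|\xi|^m$ that this lies in $S^{-m}$. Then one determines the lower-order terms $\sigma'_{-m-j}$ recursively: writing out $(\sigma \circ \sigma')_{-j} = 0$ for $j \geq 1$ isolates $\sigma_m \sigma'_{-m-j}$ as a universal polynomial expression in the already-constructed $\sigma'_{-m-k}$, $k<j$, and the derivatives of $\sigma$, hence $\sigma'_{-m-j}$ is obtained by multiplying by $\sigma_m^{-1}$. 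Summing asymptotically (Borel's lemma) yields $\sigma' \in S^{-m}$ with $\sigma \circ \sigma' = 1$; a symmetric argument (or the usual left-parametrix-equals-right-parametrix trick) gives the other-sided version.

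Next, (ii) $\Rightarrow$ (iii) is essentially a restatement: by Theorem \ref{pseudo}(i), $\sigma \circ \sigma' = 1$ means precisely that $Op(\sigma)\,Op(\sigma') - Op(1) = \Id$ modulo an operator whose symbol is $\simeq 0$, i.e.\ modulo $\Psi DO^{-\infty}(U)$; and once one has a right parametrix $Q_r$ and a left parametrix $Q_\ell$, the identity $Q_\ell = Q_\ell(Op(\sigma) Q_r) = (Q_\ell Op(\sigma)) Q_r = Q_r$ modulo smoothing shows both equalities hold for a common $Op(\sigma')$. Finally (iii) $\Rightarrow$ (i): from $Op(\sigma)\,Op(\sigma') = \Id$ mod $\Psi DO^{-\infty}$, apply the principal-symbol multiplicativity $\sigma_{m+m'}^{Op(\sigma)Op(\sigma')} = \sigma_m \,\sigma'_{m'}$; since the right side equals the principal symbol of the identity (namely $1$, forcing $m' = -m$ and $\sigma_m \sigma'_{-m} = 1$ pointwise for $\xi \neq 0$), $\sigma_m(x,\xi)$ is invertible with inverse $\sigma'_{-m}(x,\xi)$, which is exactly ellipticity. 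The concluding remark that $Op(\sigma') \in \Psi DO^{-m}$ is elliptic is immediate since its principal symbol $\sigma'_{-m} = \sigma_m^{-1}$ is invertible.

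The main obstacle is the first implication: making the formal recursion rigorous requires care that each $\sigma'_{-m-j}$ genuinely satisfies the symbol estimates of Definition \ref{defsym}, uniformly on compacta, which is where the quantitative ellipticity bounds $|\sigma^P(x,\xi)| \geq c_1(x)|\xi|^m$ for $|\xi| \geq c_2(x)$ (and the resulting bounds on derivatives of $\sigma_m^{-1}$) are genuinely used, and then invoking Borel summation to pass from the sequence $(\sigma'_{-m-j})_j$ to an actual symbol $\sigma'$ with the prescribed asymptotic expansion. The rest is bookkeeping with the composition formula and the fact, already noted in the excerpt, that smoothing operators form a two-sided ideal.
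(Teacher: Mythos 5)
The paper itself gives no proof of this lemma: it is stated as a standard fact immediately after the definition of ellipticity, with the proof delegated to the references (Shubin, Taylor), so there is nothing to compare against line by line. Your argument is the standard textbook parametrix construction and is essentially correct: the inductive determination of $\sigma'_{-m-j}$ from the vanishing of the lower-order terms of $\sigma\circ\sigma'$, Borel summation, the passage from symbol identities to operator identities modulo $\Psi DO^{-\infty}$ via Theorem \ref{pseudo}(i), and the principal-symbol multiplicativity for (iii) $\Rightarrow$ (i) are exactly the right ingredients.

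One small point of logical hygiene. Statement (ii) only asserts a \emph{one-sided} symbol inverse (``$\sigma\circ\sigma'=1$ \emph{or} $\sigma'\circ\sigma=1$''), whereas (iii) requires a two-sided parametrix, and your step (ii) $\Rightarrow$ (iii) invokes ``once one has a right parametrix $Q_r$ and a left parametrix $Q_\ell$'' — which is not what (ii) literally provides. The repair is the observation you already use in (iii) $\Rightarrow$ (i): the top-degree term of a one-sided identity $\sigma\circ\sigma'=1$ forces $\sigma_m\sigma'_{-m}=1$, hence ellipticity, hence (by your (i) $\Rightarrow$ (ii) construction run on both sides) the existence of both a left and a right parametrix, which then coincide modulo smoothing by the associativity trick. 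So the cleanest arrangement is (ii) $\Rightarrow$ (i) $\Rightarrow$ (iii) $\Rightarrow$ (ii), with all the pieces you already have; as written, your cycle has a slightly misplaced link rather than a missing idea. (Also note the typo ``$Op(\sigma)\,Op(\sigma')-Op(1)=\Id$'', which should read ``$Op(\sigma)\,Op(\sigma')=\Id$ modulo $\Psi DO^{-\infty}(U)$''.)
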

At this point, it is useful to remark that any $P \in \Psi DO^m(M,E)$ can be extended to a bounded operator on 
$L^2(M,E)$ when $\Re(m) \leq 0$. Of course, this needs an existing scalar product for given metrics on $M$ and 
$E$.

\begin{theorem}
\label{compact}
Assume $M$ is compact. If $P\in \Psi DO^{-m}(M,E)$ is elliptic with $\Re(m)>0$, then $P$ is compact, so its spectrum is discrete.
\end{theorem}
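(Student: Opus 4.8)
The plan is to use the parametrix to reduce the compactness of an elliptic operator of negative order to the compactness of a single, explicitly manageable operator, and then to invoke the Rellich-type embedding that underlies the Sobolev machinery on a compact manifold.

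First I would recall that, since $M$ is compact, $\Psi DO^{-m}(M,E)$ really does make sense globally (by the covariance statement in Theorem~\ref{pseudo}(ii)) and that $P$ extends to a bounded operator on $L^2(M,E)$ because $\Re(-m)\le 0$, as noted just before the statement. The real content is to upgrade boundedness to compactness. By the previous lemma, ellipticity of $P$ gives a parametrix $Q\in\Psi DO^{m}(M,E)$ with $QP=1+R$ and $PQ=1+R'$ for smoothing operators $R,R'\in\Psi DO^{-\infty}(M,E)$. This does not immediately finish the argument (one cannot just "invert'' $Q$), so instead I would argue directly.

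The cleanest route: show first that \emph{any} $T\in\Psi DO^{-\mu}(M,E)$ with $\Re(\mu)>0$ is compact, by a localization-plus-approximation argument. Using a finite partition of unity subordinate to coordinate charts (possible since $M$ is compact) and the fact, recorded in Remark~\ref{defsmoothing} and the surrounding discussion, that every $\Psi DO$ splits as a properly supported one plus a smoothing one, it suffices to treat (a) a smoothing operator and (b) a compactly supported properly supported $\Psi DO$ of order $-\mu$ in a single chart $U\subset\R^d$. For (a): a smoothing operator has kernel in $C^\infty(M\times M)$, hence (on compact $M$) is Hilbert--Schmidt, so compact. For (b): the operator $\sigma(x,D)$ with $\sigma\in S^{-\mu}(U\times\R^d)$ compactly supported in $x$ factors, up to harmless multiplication operators, through $(1+|D|^2)^{-\mu/2}$, i.e. $\sigma(x,D)=\chi(x)\,\sigma(x,D)(1+|D|^2)^{\mu/2}\circ(1+|D|^2)^{-\mu/2}\chi(x)$ where the first factor is a bounded $\Psi DO$ of order $0$ (by Theorem~\ref{pseudo}(i)) and $(1+|D|^2)^{-\mu/2}$ times a compactly supported cutoff is Hilbert--Schmidt precisely because $\int_{\R^d}(1+|\xi|^2)^{-\Re(\mu)}\,d\xi<\infty$ once $\Re(\mu)>d$; for the general case $0<\Re(\mu)\le d$ one writes $(1+|D|^2)^{-\mu/2}$ as a norm-limit of operators obtained by truncating $\xi$ to balls (each truncation being finite rank after the cutoff, or at least Hilbert--Schmidt), and compactness is closed under norm limits. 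Since $P\in\Psi DO^{-m}(M,E)$ with $\Re(m)>0$ is exactly of this type, $P$ is compact.

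Finally, a compact self-adjoint-or-not bounded operator need not have discrete spectrum on its own, so for the last clause I would note that the relevant statement is the classical spectral theorem for compact operators: the spectrum of a compact operator on an infinite-dimensional Hilbert space is a countable set with $0$ as its only possible accumulation point, and every nonzero spectral value is an eigenvalue of finite multiplicity; this is what "discrete'' means here. \textbf{The main obstacle} I anticipate is the careful handling of case (b) when $0<\Re(m)\le d$, where the naive Hilbert--Schmidt estimate fails and one must instead argue by a norm-approximation (or Sobolev-embedding/Rellich) argument; the partition-of-unity bookkeeping to pass from charts back to $M$, and the fact that commuting a cutoff past a $\Psi DO$ only costs lower-order (hence still compact, by induction on $\lceil\Re(m)\rceil$) terms, is routine but must be stated.
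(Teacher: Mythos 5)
Your proof is correct, but it takes a different route from the paper's. The paper argues through Sobolev spaces: since $P$ has order $-m$, it maps $L^2(M,E)=H^0$ continuously into $H^{\Re(m)}(M,E)$, and Rellich's theorem makes the inclusion $H^{\Re(m)}(M,E)\hookrightarrow L^2(M,E)$ compact on the compact manifold $M$ — two black boxes (the Sobolev mapping property of $\PsiDO$s and Rellich) and the proof is over. You instead localize with a partition of unity, dispose of the smoothing and off-diagonal pieces as operators with smooth kernels (Hilbert--Schmidt on compact $M$), and factor each remaining local piece as an order-zero (hence bounded) $\PsiDO$ composed with $(1+|D|^2)^{-\mu/2}$ times a cutoff, which you prove compact by hand via the Hilbert--Schmidt criterion and norm-approximation by frequency truncations. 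In effect you are reproving the relevant instance of Rellich's theorem from scratch, which makes your argument more self-contained but longer; the paper's buys brevity at the cost of quoting the Sobolev machinery. Both proofs, yours and the paper's, actually establish the stronger fact that ellipticity is irrelevant to compactness (you say this explicitly; the paper uses it only implicitly). Two small inaccuracies worth fixing: the Hilbert--Schmidt condition for $\chi(x)(1+|D|^2)^{-\mu/2}$ is $2\Re(\mu)>d$, not $\Re(\mu)>d$ (harmless, since your truncation argument covers all $\Re(\mu)>0$ anyway); and the displayed factorization $\sigma(x,D)=\chi(x)\,\sigma(x,D)(1+|D|^2)^{\mu/2}\circ(1+|D|^2)^{-\mu/2}\chi(x)$ is not an identity as written — the clean statement is that each piece $\phi_i P\phi_j$ of the partition-of-unity decomposition factors as $\bigl[\phi_i P(1+|D|^2)^{\mu/2}\bigr]\circ\bigl[(1+|D|^2)^{-\mu/2}\phi_j\bigr]$ modulo smoothing terms. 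Your reading of ``discrete spectrum'' via the Riesz theory of compact operators is the intended one.
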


\begin{proof}
We need to get the result first for an open set $U$, for a manifold $M$ and then for a bundle $E$ over $M$.

For any $s \in \R$, the usual Sobolev spaces $H^s(\R^d)$ (with $H^0(\R^d)=L^2(\R^d)$) and $H^s_c(U)$ (defined 
as the union of all $H^s(K)$ over compact subsets $K \subset U$) or $H^s_{loc}(U)$ (defined as the set of 
distributions $u \in \DD'(U)$ such that $\phi u \in H^s(\R^d)$ for all $\phi \in C_c^\infty(U)$) can be extended for 
any manifold $M$ to the Sobolev spaces $H^s_c(M)$ (obvious definition) and $H^s_{loc}(M)$: if 
$(U,\phi:U \to \R^d)$ is a local chart and $\chi \in C^\infty_c\big(\phi(U)\big)$, we say that a 
distribution $u \in \DD'(M)$ is in $H^s_{loc}(M)$ when $\big(\phi^{-1}\big)^*(\xi \,u) \in H^s(\R^d)$. When $M$ is 
compact, $H^s_{loc}(M)=H_c^s(M)$ (thus denoted $H^s(M$). Using Rellich's theorem, the inclusion 
$H_c^s(U) \hookrightarrow H_c^t(U)$ for $s<t$ is compact. Since $P:H_c^s(M) \rightarrow H_{loc}^{s-\Re(m)}(M)$ 
is a continuous linear map for a (non-necessarily compact) manifold $M$, both results yield that $P$ is compact. 
Finally, the extended operator on a bundle is $P: L^2(M,E) \rightarrow H^{-\Re(m)}(M,E) \hookrightarrow L^2(M,E)$ 
where the second map is the continuous inclusion, so $P$ being compact as an $L^2$ operator has a 
discrete spectrum.
\end{proof}
We rephrase a previous remark (see \cite[Proposition 2.1]{Berline}):

Let $E$ be a vector bundle of rank $r$ over $M$. If $P\in \Psi DO^{-m}(M,E)$, then for any couple of sections 
$s\in \Gamma^\infty(M,E)$, $t^* \in \Gamma^\infty(M, E^*)$, the operator 
$f \in C^\infty(M) \rightarrow \langle t^*,P(fs) \rangle \in C^\infty(M)$ is in $\Psi DO^m(M)$. This means that in a local 
chart $(U,\phi)$, these operators are $r\times r$ matrices of pseudodifferential operators of order $-m$. 
The total symbol is in $C^\infty(T^*U) \otimes End(E)$ with $End(E) \simeq M_r(\C)$. 
The principal symbol can be globally defined: $\sigma_{-m}^P(x,\xi): E_x \rightarrow E_x$ for $x\in M$ and 
$\xi \in T^*_xM$, can be seen as a smooth homomorphism homogeneous of degree 
$-m$ on all fibers of $T^*M$. Moreover, we get the simple formula which could be seen as a definition of the 
principal symbol (as already noticed at the end of previous section)
\begin{align}
\label{symprinc}
\sigma_{-m}^P(x,\xi)=\lim_{t \to \infty} t^{-m} \, \big(e^{-ith} \cdot P \cdot e^{ith}\big)(x) \, \text{ for } x \in M, \, 
\xi \in T_x^*M
\end{align}
where $h \in C^\infty(M)$ is such that $d_xh =\xi$.

\subsection{Singularities of the kernel near the diagonal}

The question to be solved is to define a homogeneous distribution which is an extension on $\R^d$ of a given 
homogeneous symbol on $\R^d \backslash \set{0}$. Such extension is a regularization used for instance by 
Epstein--Glaser in quantum field theory.
\\
The Schwartz space on $\R^d$ is denoted by $\SS$ and the space of tempered distributions by $\SS'$.

\begin{definition}
For $f_\lambda(\xi)\vcentcolon = f(\lambda \xi)$, $\lambda \in \R^*_+$, define $\tau\in \SS' \rightarrow \tau_\lambda$ 
by $\langle \tau_\lambda,f \rangle \vcentcolon = \lambda^{-d} \langle \tau,f_{\lambda^{-1}} \rangle$ for all $f\in \SS$.

A distribution $\tau\in \SS'$ is homogeneous of order $m\in \C$ when $\tau_\lambda=\lambda^m \,\tau$.
\end{definition}

\begin{prop}
\label{extprop}
Let $\sigma \in C^\infty(\R^d \backslash \set{0})$ be a homogeneous symbol of order $k\in \Z$.

(i) If $k > -d$, then $\sigma$ defines a homogeneous distribution.

(ii) If $k = -d$, there exists a unique obstruction to the extension of $\sigma$ given by
\begin{align*}
c_\sigma=\int_{\mathbb{S}^{d-1}} \sigma(\xi) \,d\xi,
\end{align*}
namely, one can at best extend $\sigma$ in $\tau \in \SS'$ such that 
\begin{align}
\label{ext}
\tau_\lambda=\lambda^{-d} \big(\tau + c_\sigma \log(\lambda)\, \delta_0\big).
\end{align}
\end{prop}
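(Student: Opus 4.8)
The plan is to reduce the extension problem to a one-dimensional radial problem by polar coordinates, handling the two cases separately.

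\textbf{Part (i): $k>-d$.} The idea is that a homogeneous symbol of degree $k$ on $\R^d\setminus\{0\}$ is automatically locally integrable near the origin when $k>-d$, since in polar coordinates $\xi=r\omega$ (with $r>0$, $\omega\in\mathbb{S}^{d-1}$) we have $\int_{|\xi|\le 1}|\sigma(\xi)|\,d\xi = \int_{\mathbb{S}^{d-1}}|\sigma(\omega)|\,d\omega\int_0^1 r^{k+d-1}\,dr<\infty$. First I would check that the symbol estimate (Definition~\ref{defsym}(i)) controls growth at infinity so that $\sigma$ defines a tempered distribution by $\langle\tau,f\rangle:=\int_{\R^d}\sigma(\xi)f(\xi)\,d\xi$ for $f\in\SS$; the decay of Schwartz functions beats the polynomial growth of $\sigma$. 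Then homogeneity is a direct change of variables: $\langle\tau_\lambda,f\rangle=\lambda^{-d}\int\sigma(\xi)f(\xi/\lambda)\,d\xi=\lambda^{-d}\int\sigma(\lambda\eta)f(\eta)\lambda^d\,d\eta=\lambda^k\langle\tau,f\rangle$, using homogeneity of $\sigma$ pointwise. So the locally-integrable function itself, viewed as a distribution, is the desired homogeneous extension.

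\textbf{Part (ii): $k=-d$.} Here $\sigma$ is no longer locally integrable in general, and the obstruction is exactly $c_\sigma=\int_{\mathbb{S}^{d-1}}\sigma(\omega)\,d\omega$. The plan is: fix $\chi\in C_c^\infty(\R^d)$ with $\chi\equiv1$ near $0$, and define for $f\in\SS$
\begin{align*}
\langle\tau,f\rangle := \int_{\R^d}\sigma(\xi)\Big(f(\xi)-\chi(\xi)f(0)\Big)\,d\xi + \int_{|\xi|\ge 1}\sigma(\xi)\chi(\xi)f(0)\,d\xi - c_\sigma\, f(0)\int_0^1 \frac{dr}{r}\Big|_{\text{reg}},
\end{align*}
or more cleanly: split the integral at $|\xi|=1$, subtract the logarithmically divergent part using the Taylor remainder $f(\xi)-f(0)$ on $\{|\xi|\le1\}$, and note $\int_{|\xi|\le1}\sigma(\xi)\,d\xi$ is the only divergence, with coefficient $c_\sigma$ after doing the radial integral. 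Concretely I would set $\langle\tau,f\rangle:=\int_{|\xi|\ge1}\sigma(\xi)f(\xi)\,d\xi+\int_{|\xi|\le1}\sigma(\xi)\big(f(\xi)-f(0)\big)\,d\xi$; the second integral converges because $|f(\xi)-f(0)|\le C|\xi|$ kills one power of $r$, making the radial integrand $\sim r^{-d}\cdot r\cdot r^{d-1}=1$, integrable. One checks $\tau\in\SS'$ (continuity in the Schwartz seminorms) and $\tau|_{\R^d\setminus\{0\}}=\sigma$. Then I would compute $\tau_\lambda$ by the change of variables: the rescaling shifts the cutoff radius, and tracking the mismatch between the region $\{|\xi|\le1\}$ and $\{|\xi|\le\lambda\}$ produces precisely $\int_{1}^{\lambda}\sigma(\xi)\,d\xi\cdot f(0) = c_\sigma\log\lambda\cdot f(0)$ by the radial integral $\int_1^\lambda r^{-d}r^{d-1}\,dr=\log\lambda$. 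This yields $\tau_\lambda=\lambda^{-d}(\tau+c_\sigma\log\lambda\,\delta_0)$, i.e.\ equation~\eqref{ext}.

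\textbf{Uniqueness and the nature of the obstruction.} For uniqueness of the extension up to this logarithmic anomaly, suppose $\tau_1,\tau_2$ both restrict to $\sigma$ off the origin; then $\tau_1-\tau_2$ is supported at $\{0\}$, hence a finite linear combination of $\delta_0$ and its derivatives. Matching homogeneity degree $-d$ (and using that $\partial^\alpha\delta_0$ is homogeneous of degree $-d-|\alpha|$) forces $\tau_1-\tau_2=c\,\delta_0$, which shifts the $\log$ coefficient by a multiple of $\log\lambda$ but never removes it when $c_\sigma\neq0$; and when $c_\sigma=0$ the construction above already gives a genuinely homogeneous extension. I expect the main obstacle to be the bookkeeping in the $\tau_\lambda$ computation of Part (ii): one must carefully separate the exactly-homogeneous part from the cutoff-dependent part and verify that the answer is independent of the choice of cutoff while extracting the coefficient $c_\sigma$ cleanly; the rest is routine.
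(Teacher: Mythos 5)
Your proposal is correct and follows essentially the same route as the paper: part (i) is the same local-integrability-plus-change-of-variables argument, and in part (ii) your explicit regularization $\langle\tau,f\rangle=\int_{|\xi|\ge1}\sigma f+\int_{|\xi|\le1}\sigma\,(f-f(0))$ is just a sharp-cutoff version of the paper's element of the affine space $E$ built with a smooth cutoff $g$, and the $c_\sigma\log\lambda$ term arises from the same dilated-annulus computation $\int_{1/\lambda}^{1}r^{-1}\,dr=\log\lambda$. The only cosmetic difference is that the paper first uses Hahn--Banach to show that \emph{all} extensions agreeing with $L_\sigma$ on $\SS_0$ satisfy the same transformation law before evaluating $c(\lambda)$ on one example, whereas you compute directly on one extension and then dispose of the remaining ambiguity (supported at the origin) by the homogeneity-degree argument, exactly as in the paper's converse direction.
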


\begin{proof}
$(i)$ For $k>-d$, $\sigma$ is integrable near zero, increases slowly at $\infty$, so defines by extension 
a unique distribution $\tau\in \SS'$ which will be homogeneous of order $k$.

$(ii)$Assume $k=-d$. Then $\sigma$ extends to a continuous linear form 
$L_\sigma(f) \vcentcolon = \int_{\R^d} f(\xi)\, \sigma(\xi) \,d\xi$ on 
$\SS_0 \vcentcolon = \set{f \in \SS \, \vert \, f(0)=0}$. By 
Hahn--Banach theorem, $L_\sigma$ extends to $\SS'$ and $L_\sigma \in E$ where 
$E \vcentcolon = \set{\tau \in \SS' \, \vert \tau_{\vert \SS_0}=L_\sigma}$ is given by the direction $\delta_0$. 

This affine space $E$ is stable by the endomorphism $\tau \rightarrow \lambda^d \tau_\lambda$: actually if 
$f\in \SS_0$, $f_{\lambda^{-1}} \in \SS_0$ and  
\begin{align*}
\lambda^d \langle \tau_\lambda,f \rangle=\langle \tau, f_{\lambda^{-1}}\rangle=L_\sigma(f_{\lambda^{-1}})=
\int_{\R^d} f(\lambda^{-1} \xi)\,\sigma(\xi)\, d\xi=\int_{\R^d} f(\xi) \,\sigma(\xi)\,d\xi=L_\sigma(f),
\end{align*}
thus $\lambda^d \tau_\lambda=L_\sigma$ on $\SS_0$. 

Moreover, $\lambda^d \, (\delta_0)_\lambda=\delta_0$; thus 
there exists $c(\lambda) \in \C$ such that 
\begin{align}
\label{tau}
\tau_\lambda=\lambda^{-d} \tau + c(\lambda) \, \lambda^{-d} \, \delta_0
\end{align}
for all $\tau \in E$. The computation of $c(\lambda$ for a specific example in $E$ gives 
$c(\lambda)=c_\sigma \log(\lambda)$: for instance, choose $g\in C_c^\infty([0,\infty])$ which is $1$ near 0 and define 
$\tau \in \SS'$ by 
\begin{align*}
\langle \tau, f \rangle \vcentcolon = L_\sigma \big(f-f(0)g(\vert\cdot \vert)\big)=
\int_{\R^d} \big(f(\xi)-f(0)g(\vert \xi \vert) \big) \, \sigma(\xi) \, d\xi, \quad \forall f\in \SS.
\end{align*}
Thus if $f(0)=1$, we get $c(\lambda) \lambda^{-d} \langle \delta_0,f \rangle =
c(\lambda) \lambda^{-d}$, so by \eqref{tau}
\begin{align*}
c(\lambda)\lambda^{-d}&= \langle \tau, f_{\lambda^{-1}} \rangle -\lambda^{-d}\langle \tau, f \rangle \\
&=
\int_{\R^d} \big( f(\lambda^{-1} \xi)- g(\vert \xi\vert) \big) \,\sigma (\xi) \, d\xi 
- \lambda^{-d} \int_{\R^d} \big(f(\xi)-g(\vert \xi \vert) \, \sigma (\xi) \, d \xi \\
&=- \lambda^{-d}\int_{\R^d} \big( g(\lambda \vert \xi \vert) -g(\vert \xi \vert) \, \sigma(\xi) \, d\xi= 
-\lambda^{-d}c_\sigma \int_0^\infty \big( g(\lambda \vert \xi \vert)-g(\vert \xi \vert) \, \tfrac{d\vert \xi \vert}{\vert \xi \vert} 
\end{align*}
with $c_\sigma \vcentcolon=\int_{\mathbb{S}^{d-1}} \sigma(\xi)\, d^{d-1}\xi$. Since 
\begin{align*}
\lambda \frac{d\,}{d\lambda} \int_0^\infty \big( g(\lambda \vert \xi \vert)-g(\vert \xi \vert)\big) \, 
\tfrac{d \vert \xi \vert}{\vert \xi \vert} = 
\lambda \int_0^\infty g'(\lambda \vert \xi \vert)\, d\vert \xi \vert=-g(0)=-1,
\end{align*}
we get $c(\lambda)=c_\sigma \log(\lambda)$. Thus, when 
$c_\sigma =0$, every element of $E$ is a homogeneous distribution on $\R^d$ which extends the symbol $\sigma$.

Conversely, let $\tau\in \SS'$ be a homogeneous distribution extending $\sigma$ and let $\tilde \tau \in E$. Since 
$\tau-\tilde \tau$ is supported at the origin, we can write 
$\tau=\tilde \tau +\sum_{\vert \a \vert \leq N} a_\a \,\partial^\a \delta_0$ where $a_\a \in \C$ and 
\begin{align*}
0=\tau_\lambda-\lambda^{-d} \tau=c_\sigma \lambda^{-d} \log (\lambda) \delta_0 + 
\sum_{1\leq \vert \a \vert \leq N} a_\a \lambda^{-d}\big(\lambda^{\vert \a \vert}-1 \big) \partial ^\a \delta_0.
\end{align*}
The linear independence of $(\partial^\a \delta_0)$ gives $a_\a=0, \, \forall a_\a$. So $c_\sigma=0$ and $\tau \in E$. 
The condition $c_\sigma=0$ is so necessary and sufficient to extend $\sigma$ in 
a homogeneous distribution. And in the general case, one can at best extend it in a distribution satisfying 
\eqref{ext}, but it is only possible with elements of $E$.
\end{proof}

In the following result, we are interested by the behavior near the diagonal of the kernel $k^P$ for $P\in\Psi DO$. 
For any $\tau \in \SS'$, we  choose the decomposition as $\tau=\phi \circ \tau + (1-\phi)\circ \tau$ where 
$\phi \in C_c^\infty(\R^d)$ and $\phi=1$ near $0$. We can look at the infrared behavior of $\tau$ near the origin 
and its ultraviolet behavior near infinity. Remark first that, since $\phi \circ \tau$ has a compact support, 
$(\phi \circ \tau)\,{\widecheck{ }} \in \SS'$, so the regularity of $\tau{\,\widecheck{ }}\,$ depends only of its 
ultraviolet part $\big( (1-\phi)\circ \tau \big)\,{\widecheck{ }}\, $.

\begin{prop}
\label{propasymp}
Let $P \in\Psi DO^m(U)$, $m\in \Z$. Then, in local form near the diagonal,
\begin{align*}
k^P(x,y)=\sum_{-(m+d) \leq j \leq 0}\,a_j(x,x-y)-c_P(x)\log \vert x-y\vert +\mathcal{O}(1)
\end{align*}
where $a_j(x,y) \in C^\infty \big(U\times U \backslash \set{x}\big)$ is homogeneous of order $j$ in $y$ and 
$c_P(x) \in C^\infty(U)$ is given by
\begin{align}
\label{cp}
c_P(x)=\tfrac{1}{(2\pi)^d} \int_{\mathbb{S}^{d-1}} \, \sigma_{-d}^P(x,\xi)\,d\xi.
\end{align}
\end{prop}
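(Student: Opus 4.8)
The strategy is to reduce the statement to the asymptotic expansion of $\check{\sigma}^P$ near the origin, and then apply Proposition \ref{extprop} term by term to the homogeneous components of the symbol. First I would use formula \eqref{kernel}: $k^P(x,y) = \check{\sigma}^P_{\xi\to y}(x,x-y)$ modulo a smooth remainder (since $P = Op(\sigma^P) + R$ with $R$ smoothing, and the amplitude contribution is handled as in Remark \ref{defsmoothing}). By the pseudo-locality lemma above, the singularity lives only on the diagonal, i.e. at $y=x$, so it suffices to analyze the inverse Fourier transform in $\xi$ of $\sigma^P(x,\cdot)$ near the point $x-y=0$. As noted before the proposition, only the ultraviolet part $(1-\phi)\sigma^P$ matters for regularity, since $\phi\,\sigma^P$ has compact support in $\xi$ and hence smooth inverse Fourier transform.

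Next I would invoke the asymptotic expansion $\sigma^P(x,\xi) \simeq \sum_{j\geq 0}\sigma^P_{m-j}(x,\xi)$, with each $\sigma^P_{m-j}$ homogeneous of degree $m-j$ in $\xi$. Writing $N = m+d+1$, split $\sigma^P = \sum_{j < N}\sigma^P_{m-j} + r_N$ where, by Definition \ref{defsym}(ii), $r_N$ decays like $|\xi|^{-(d+1)}$ together with its derivatives, so $\check{r}_N$ is $C^0$ (indeed $\mathcal{O}(1)$) near the diagonal — this produces the $\mathcal{O}(1)$ error term. For each homogeneous piece $\sigma^P_{m-j}(x,\cdot)$ multiplied by the ultraviolet cutoff $(1-\phi)$, I would apply Proposition \ref{extprop} in the $\xi$-variable, with $k = m-j$ running over the integers from $m$ down to $m-N+1 = -d$. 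For $k = m-j > -d$ (i.e. $j < m+d$), the symbol extends to a homogeneous distribution of degree $k$; its inverse Fourier transform is then homogeneous of degree $-d-k = -(m+d-j)$ in $x-y$, which (since this is negative, as $j \le m+d-1$... rather $j$ ranges so that $-(m+d)\le k-(-d)$... let me phrase it via the exponent) gives exactly a term $a_j(x,x-y)$ homogeneous of order $-(m+d)+j'$ for the appropriate shifted index, smooth away from $x=y$. The one exceptional value is $k = -d$, occurring for $j = m+d$: here Proposition \ref{extprop}(ii) shows the obstruction $c_\sigma = \int_{\mathbb{S}^{d-1}}\sigma^P_{-d}(x,\xi)\,d\xi$ forces a logarithmic term, and transforming \eqref{ext} back shows the inverse Fourier transform acquires a summand proportional to $c_\sigma \log|x-y| \cdot \delta$-type contribution — tracking the $(2\pi)^{-d}$ from \eqref{kernel} yields precisely the coefficient $c_P(x) = \frac{1}{(2\pi)^d}\int_{\mathbb{S}^{d-1}}\sigma^P_{-d}(x,\xi)\,d\xi$ of $-\log|x-y|$.

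The smoothness of the coefficients $a_j$ and $c_P$ in $x$ follows from the smoothness of $\sigma^P$ in $x$ together with uniform control of the estimates in Definition \ref{defsym} on compact sets (differentiating under the integral and under the extension procedure of Proposition \ref{extprop}). I expect the main obstacle to be bookkeeping: making the asymptotic expansion $\simeq$ genuinely produce a finite expansion plus a controlled $\mathcal{O}(1)$ remainder requires choosing $N$ correctly and checking that the tail $r_N$'s inverse Fourier transform is continuous (not merely bounded) — this is where Definition \ref{defsym}(ii) with $|\xi|^{\Re(m)-N-|\beta|}$ is used, needing $N$ large enough that $\Re(m)-N < -d$. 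A secondary subtlety is that Proposition \ref{extprop} as stated is about symbols on $\R^d$ with no $x$-dependence, so one must apply it pointwise in $x$ and then argue that the resulting family of distributions depends smoothly on $x$; the uniformity of the Hahn–Banach-type extension is not automatic, but one fixes a concrete extension (as in the proof of Proposition \ref{extprop}, via a fixed cutoff $g$) which manifestly depends smoothly on the parameter $x$.
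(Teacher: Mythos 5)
Your proposal is correct and follows essentially the same route as the paper's proof: decompose $\widecheck{\sigma}^P_{\xi\to y}(x,\cdot)$ via the homogeneous expansion, extend each $\sigma^P_{m-j}$ to a tempered distribution using Proposition \ref{extprop} (the unique homogeneous extension for degree $>-d$, and the anomalous extension \eqref{ext} for degree $-d$, whose dilation defect evaluated at $\lambda=|y|^{-1}$ produces the $-\tfrac{1}{(2\pi)^d}c_{\sigma_{-d}}\log|y|$ term), with the integrable tail giving the $\mathcal{O}(1)$ remainder and the $x$-dependence handled by smoothness of the fixed extension in the parameter. The two subtleties you flag (choice of $N$, uniformity in $x$) are exactly the ones the paper addresses.
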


\begin{proof}
We know that $\sigma^P(x,\xi) \simeq \sum_{j \leq m} \, \sigma_j^P(x,\xi)$ and by \eqref{kernel}, 
$k^P(x,y)=\widecheck{{\sigma}_{\xi\rightarrow y}}(x,x-y)$ so we need to control 
$\widecheck{{\sigma}_{\xi\rightarrow y}}(x,x-y)$ 
when $y \to 0$.

{\it Assume first that $\sigma^P(x,\xi)$ is independent of $x$:}
\\
For $-d<j\leq m$, $\sigma_j(\xi)$ extends to $\tau_j \in \SS'$. For $j>-d$, this extension is homogeneous (of degree 
$j$) and unique.
\\
For $j=-d$, we may assume that $\tau_{-d}$ satisfies \eqref{ext}. Thus 
$\tau\vcentcolon=\sigma^P-\sum_{j=-d}^m\,\tau_j \in \SS'$ behaves in the ultraviolet as a integrable symbol. In 
particular $\tau\,{\widecheck{ }}\,$ is continuous near 0 and we get 
\begin{align}
\label{sig}
\widecheck{\sigma^P}\,(y)=\sum_{j=-d}^m \widecheck{ \tau_j}\,(y) + \mathcal{O}(1).
\end{align}
Note that the inverse Fourier transform of the infrared part of $\tau_j$ is in $C^\infty(\R^d)$ while those of its 
ultraviolet part is in $C^\infty \big( \R^d \backslash \set{0} \big)$, so $\widecheck{ \tau_j}$ is smooth near 0.

Moreover, for $j>-d$, $\widecheck{ \tau_j}$ is homogeneous of degree $-(d+j)$ while for $j=-d$,
\begin{align*}
\widecheck{ \tau_{-d}}(\lambda y)=\lambda^{-d} \, [ \big({\tau_{-d})}_{\lambda^{-1}}] \, \widecheck{\,}\,(y) =
[\tau_{-d}-c_{\sigma_{-d}} \log(\lambda) \delta_0]\, \widecheck{ \,}\,(y) = \widecheck{ \tau_{-d}}(y) 
- \tfrac{1}{(2\pi)^d}\,c_{\sigma_{-d}} \log \lambda.
\end{align*}
For $\lambda= \vert y \vert ^{-1}$, we get
\begin{align*}
 \widecheck{ \tau_{-d}} \big( \tfrac{y}{\vert y \vert}\big) = \widecheck{ \tau_{-d}}\big( \tfrac{y}{\vert y \vert}\big) 
 - \tfrac{1}{(2\pi)^d}\,c_{\sigma_{-d}} \log \vert y \vert.
\end{align*}
Summation over $j$ in \eqref{sig} yields the result.

{\it Assume now that $\sigma^P(x,\xi)$ is dependent of $x$:}
\\
We do the same with families $\set{\tau_x}_{x\in U}$ and $\set{\tau_{j,x}}_{x\in U}$. Their ultraviolet behaviors are 
those of smooth symbols on $U\times \R^d$, so given by smooth functions on $U\times \R^d \backslash \set{0}$ and 
for $\tau_x$ by a continuous function on $U\times \R^d$.
For the infrared part, we get smooth maps from $U$ to $\mathcal{E}(\R^d)'$ (distributions with compact 
support), thus applying inverse Fourier 
transform, we end up with smooth functions on $U \times \R^d$. Actually,  for $\tau_{j,x}$ with $j>-d$, this 
follows from the fact that it is the extension of $\sigma_j(x,\cdot)$ which is integrable near the origin: let $f \in \SS$,
\begin{align*}
\langle \phi \circ \tau_{j,x}, f\rangle=\langle  \tau_{j,x}, \phi \circ f\rangle = 
\int_{\R^d} \phi(\xi)\,f(\xi) \,\sigma_{j}(x,\xi)\, d\xi.
\end{align*}
While for $j=-d$,
\begin{align*}
\langle \phi \circ \tau_{-d,x}, f\rangle=\langle  \tau_{-d,x}, \phi \circ f\rangle = \int_{\R^d} \phi(\xi)\,\big(f(\xi)-f(0)\big) \,
\sigma_{-d}(x,\xi)\, d\xi,
\end{align*}
and the map $x \to \phi \circ \tau_{-d,x}$ is smooth from $U$ to $\mathcal{E}(\R^d)'$.
In conclusion, 
\begin{align*}
\widecheck{\sigma}_{\xi\rightarrow y}(x,y)=\sum_{-(m+d)\leq j\leq0}\, a_j(x,y)-c_P(x) \log \vert y \vert +R(x,y)
\end{align*}
where $a_j(x,y)$ is a smooth function on $U\times \R^d\backslash \set{0}$, is homogeneous of degree $j$ in $y$, 
$c_P$ is given by \eqref{cp} and $R(x,y)$ is a function, continuous on $U\times \R^d$. So we get the claimed 
asymptotic behavior.
\end{proof}

\begin{theorem}
\label{kernelbehavior}
Let $P\in \Psi DO^m(M,E)$ with $m \in \Z$. Then, for any trivializing local coordinates
\begin{align*}
\tr \big(k^P(x,y)\big)=\sum_{j=-(m+d)}^0 a_j(x,x-y) -c_P(x) \log \vert x-y \vert + \mathcal{O}(1),
\end{align*}
where $a_j$ is homogeneous of degree $j$ in $y$, $c_P$ is intrinsically locally defined by
\begin{align}
\label{cP}
c_P(x)\vcentcolon=\tfrac{1}{(2\pi)^d} \int_{\mathbb{S}^{d-1}} \tr\big(\sigma_{-d}^P(x,\xi)\big)\,d\xi.
\end{align}
Moreover, $c_P(x) \vert dx \vert$ is a 1-density over $M$ which is functorial with respect to diffeomorphisms 
$\phi$: 
\begin{align}
\label{phi cP}
c_{\phi_*P}(x)=\phi_*\big(c_p(x)\big).
\end{align}
\end{theorem}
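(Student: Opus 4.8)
The plan is to bootstrap Theorem \ref{kernelbehavior} from the already-proved local result, Proposition \ref{propasymp}, by handling in turn the passage to matrix-valued symbols (the bundle $E$), the intrinsic meaning of $c_P$, and finally its transformation law under diffeomorphisms.

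\textbf{Step 1: reduce to the scalar local statement.} Work in a trivializing coordinate chart $(U,\phi)$ over which $E$ is trivial of rank $r$, so that $P$ is given by an $r\times r$ matrix of operators in $\Psi DO^m(\phi(U))$ with matrix-valued symbol $\sigma^P(x,\xi)\in C^\infty(T^*U)\otimes M_r(\C)$, as recalled just before the theorem. Each matrix entry $k^P(x,y)_{\alpha\beta}$ is the Schwartz kernel of a scalar operator in $\Psi DO^m(\phi(U))$, so Proposition \ref{propasymp} applies entrywise: $k^P(x,y)_{\alpha\beta}=\sum_{-(m+d)\leq j\leq 0}(a_j)_{\alpha\beta}(x,x-y)-(c_P)_{\alpha\beta}(x)\log\vert x-y\vert+\mathcal O(1)$ with $(c_P)_{\alpha\beta}(x)=\tfrac{1}{(2\pi)^d}\int_{\mathbb S^{d-1}}(\sigma^P_{-d})_{\alpha\beta}(x,\xi)\,d\xi$. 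Taking the matrix trace $\tr$ (a finite linear combination of entries) commutes with the finite sum over $j$, with the integral over $\mathbb S^{d-1}$, and with the $\mathcal O(1)$ remainder, giving exactly the stated expansion with $a_j:=\tr\big((a_j)_{\alpha\beta}\big)$ homogeneous of degree $j$ in $y$ and $c_P(x)$ as in \eqref{cP}.

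\textbf{Step 2: intrinsic local definition of $c_P$.} The point is that although $a_j$ and the $\mathcal O(1)$ term depend on choices, the coefficient of $\log\vert x-y\vert$ does not. I would argue this two ways, and the second is what also sets up Step 3. First, the logarithmic term is the only term in the expansion that is \emph{not} a sum of terms homogeneous of pure degree in $x-y$, nor bounded; so, given two such expansions of the same kernel, subtracting shows the $\log$-coefficients agree (the linear independence argument is exactly the one used at the end of the proof of Proposition \ref{extprop}). Hence $c_P(x)$ is well-defined independently of the asymptotic decomposition. Second, $\sigma^P_{-d}$ is itself intrinsic: by Theorem \ref{pseudo}(ii) the full symbol transforms under $\phi$ with lower-order corrections involving polynomials $\phi_\alpha$ of degree $\vert\alpha\vert>0$ in $\xi$, which shift the homogeneity degree strictly upward from that of $\partial^\alpha_\xi\sigma^P$; tracking which pieces can land in degree $-d$ is the substance of Step 3.

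\textbf{Step 3: the density transformation law \eqref{phi cP} --- the main obstacle.} This is where the real work is. Let $\phi:U\to V$ be a diffeomorphism; by Theorem \ref{pseudo}(ii),
\begin{align*}
\sigma^{\phi_*P}(x,\xi)=\sigma^P\big(\phi^{-1}(x),(d\phi)^t\xi\big)+\sum_{\vert\alpha\vert>0}\tfrac{(-i)^\alpha}{\alpha!}\,\phi_\alpha(x,\xi)\,\partial^\alpha_\xi\sigma^P\big(\phi^{-1}(x),(d\phi)^t\xi\big).
\end{align*}
One must extract the component homogeneous of degree $-d$ in $\xi$ on the left. In the leading term, writing $y=\phi^{-1}(x)$ and $A=(d\phi)_y^t$, the degree-$(-d)$ part is $\sigma^P_{-d}(y,A\xi)$. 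For each correction term, $\partial^\alpha_\xi\sigma^P_k(y,A\xi)$ is homogeneous of degree $k-\vert\alpha\vert$ while $\phi_\alpha(x,\xi)$ is a polynomial of degree $\vert\alpha\vert$; so a given product contributes to degree $-d$ only when it multiplies $\partial^\alpha_\xi\sigma^P_{-d+(\vert\alpha\vert-\deg\phi_\alpha)}$ summed against the various monomials of $\phi_\alpha$ --- and since $\phi_\alpha$ has degree exactly $\vert\alpha\vert$, the relevant piece is $\partial^\alpha_\xi\sigma^P_{-d}$ paired with the top-degree part of $\phi_\alpha$. Then $\tr\big(\sigma^{\phi_*P}_{-d}(x,\xi)\big)$ is integrated over $\mathbb S^{d-1}$. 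The clean way to finish: change variables $\xi\mapsto A\xi$ in $\int_{\mathbb S^{d-1}}\tr\,\sigma^P_{-d}(y,A\xi)\,d\xi$ --- recalling that $\tfrac{1}{(2\pi)^d}\int_{\mathbb S^{d-1}}\tr\,\sigma^P_{-d}(y,\cdot)\,d\xi$, by homogeneity of degree $-d$, equals $\tfrac{1}{(2\pi)^d}\int_{\vert\eta\vert\leq 1}$-type regularized integral that scales by $\vert\det A\vert^{-1}=\vert\det(d\phi)_y\vert^{-1}$ --- and show that the correction terms integrate to zero over the sphere (each involves $\partial^\alpha_\xi$ of a homogeneous function times a polynomial, and after the substitution the angular integral of such a total-degree-$(-d)$ term either vanishes by parity/total-derivative structure or reassembles into $\tr\,\sigma^P_{-d}$ via the identity $\sum_\alpha\tfrac{(-i)^\alpha}{\alpha!}(\cdots)$ being the symbol of a \emph{change of variables}, which is an isomorphism of order $0$ with trivial degree-$(-d)$ trace-contribution beyond the Jacobian). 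The upshot is $c_{\phi_*P}(x)=\vert\det(d\phi)_{\phi^{-1}(x)}\vert^{-1}\,c_P(\phi^{-1}(x))$, i.e. exactly the statement that $c_P(x)\,\vert dx\vert$ pulls back to $c_P(\phi^{-1}(x))\,\vert d(\phi^{-1}(x))\vert$ --- a genuine $1$-density, proving \eqref{phi cP}. The hard part is precisely this bookkeeping: confirming that every lower-order correction in Theorem \ref{pseudo}(ii) either fails to reach homogeneity degree $-d$ or, if it does, integrates to zero against $d\xi$ on $\mathbb S^{d-1}$, so that only the Jacobian survives.
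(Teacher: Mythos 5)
Your Steps 1 and 2 are fine as far as they go, but Step 3 --- which you correctly identify as ``the main obstacle'' --- is not actually carried out, and what you assert there is precisely the content of the theorem. Under the change-of-variables formula of Theorem \ref{pseudo}(ii), the correction terms \emph{do} in general contribute to the homogeneous component of degree $-d$ of $\sigma^{\phi_*P}$ (e.g.\ a monomial of $\phi_\alpha$ of degree $k<\vert\alpha\vert$ paired with $\partial^\alpha_\xi\sigma^P_{-d+\vert\alpha\vert-k}$ lands in degree $-d$), so the claim to be proved is exactly that each such contribution has vanishing integral over $\mathbb S^{d-1}$. Your parenthetical justification (``vanishes by parity/total-derivative structure or reassembles into $\tr\,\sigma^P_{-d}$'') is an assertion, not an argument: to make it work one must exhibit each correction as a sum of terms $\partial_{\xi_j}(\text{homogeneous of degree }-(d-1))$, invoke the lemma that such derivatives integrate to zero on the sphere, and keep track of the Jacobian --- none of which is done. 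A similar gap appears on the bundle side: a change of trivialization conjugates the kernel, $k^P(x,x')\mapsto A(x)^{-1}k^P(x,x')A(x')$, so your Step 2 argument about ``two expansions of the same kernel'' does not apply as stated; one needs to Taylor-expand $A(x')$ at $x'=x$ and use cyclicity of the trace to see that the log-coefficient is unchanged.

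The paper avoids all of this by working with the kernel rather than the symbol. Under a diffeomorphism the Schwartz kernel transforms exactly, by a single Jacobian factor: $k^{\phi_*P}(y,y')=\vert J_\phi(y')\vert\,k^P(\phi(y),\phi(y'))$. Substituting the expansion of Proposition \ref{propasymp}, the homogeneous terms $a_j$ Taylor-expand into homogeneous and continuous terms, while the logarithmic term satisfies $\log\vert\phi(y)-\phi(y')\vert=\log\vert y-y'\vert+\mathcal O(1)$ (since $\phi$ is a diffeomorphism), and $\vert J_\phi(y')\vert=\vert J_\phi(y)\vert+\mathcal O(\vert y-y'\vert)$. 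Reading off the coefficient of $\log\vert y-y'\vert$ gives $c_{\phi_*P}(y)=\vert J_\phi(y)\vert\,c_P(\phi(y))$ with no symbol bookkeeping at all; the identification of that coefficient with $\tfrac{1}{(2\pi)^d}\int_{\mathbb S^{d-1}}\tr\big(\sigma^P_{-d}\big)\,d\xi$ is already supplied by Proposition \ref{propasymp}. If you want to keep your symbol-side route, you must supply the total-derivative lemma and the term-by-term verification; otherwise, switch to the kernel-side argument.
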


\begin{proof}
The asymptotic behavior follows from Proposition \ref{propasymp} but we first have to understand why $c_P$ is 
well defined:

{\it Assume first that $E$ is a trivial line bundle and $P$ is a scalar $\Psi DO$.}
\\
Define a change of coordinates by $y\vcentcolon = \phi^{-1}(x)$. Thus 
$k^P(x,x') \xrightarrow{\phi_*}k^{\phi_*P}(y,y')$ with 
\begin{align*}
k^{\phi_*P}(y,y')&=\vert J_\phi(y')\vert \, k^P\big( \phi(y),\phi(y')\big)\\
&=\sum_{j=-(m+d)}^0 \vert J_\phi(y') \vert \, \big[a_j\big( \phi(y),\phi(y)-\phi(y')\big)
- c_P\big(\phi(y)\big) \,\log \vert \phi(y) -\phi(y') \vert \,\big] +\mathcal{O}(1).
\end{align*}
A Taylor expansion around $\big(\phi(y), \phi'(y)\cdot (y-y')\big)$ of $a_j$ gives 
\begin{align*}
a_j\big( \phi(y),\phi(y)-\phi(y')\big) \simeq \vert y-y' \vert^j \,a_j\big( \phi(y),\phi'(y)\cdot \tfrac{y-y'}{\vert y-y' \vert} \big) 
+ \cdots,
\end{align*}
since $a_j\big(\phi(y),\cdot\big)$ is smooth outside 0, so we get only homogeneous and continuous terms. Moreover 
the only contribution to the $\log$-term is
\begin{align*}
\vert J_\phi(y') \vert \, c_P\big(\phi(y)\big) \,\log \vert \phi(y)-\phi(y') \vert \simeq 
\vert J_\phi(y) \vert \, c_P\big(\phi(y)\big) \,\log \vert \phi(y)-\phi(y') \vert +\mathcal{O}(1)
\end{align*}
and we get
\begin{align*}
c_{\phi_*P}(y)=\vert J_\phi(y)\vert\, c_P\big( \phi(y)\big).
\end{align*}
In particular $c_P(x)\, \vert dx^1 \wedge \cdots \wedge dx^d \vert$ can be globally defined on $M$ as a 1-density. 
(Recall that a $\a$-density on a vector space $E$ of dimension $n$ is any application $f:\, \exter^nE \to \R$ such 
that for any $\lambda  \in \R$, $f(\lambda x)=\vert \lambda \vert^\a f(x)$ and the set of these densities is denoted 
$\vert \exter \vert ^\a E^*$; this is generalized to a vector bundle $E$ over $M$ where 
each fiber is $\vert \exter \vert^\a E^*_x$. The interest of the bundle of 1-densities is to give a class of objects 
directly integrable on $M$. In particular, we get here something intrinsically defined, even when the manifold is not 
oriented).

{\it General case:} 
\\
$P$ acts on section of a bundle. By a change of trivialization, the action of $P$ is conjugate on each fiber by a 
smooth matrix-valued map $A(x)$, so $k^P(x,x') \rightarrow A(x)^{-1} k^P(x,x') A(x')$. We are looking for the 
logarithmic term: only the $0$-order term in $A(x')$ will contribute and $\tr \big(  A(x)^{-1} k^P(x,x') A(x') \big)$ has the 
same logarithmic singularity than the similar term 
$\tr \big(  A(x)^{-1} k^P(x,x') A(x) \big) =\tr \big(k^P(x,x')\big)$ near the diagonal. 
Thus $c_P(x)$ is independent of a chosen trivialization.
\\
Similarly, if $P$ is not a scalar but $End(E)$-valued, the above proof can be generalized (the space of 
$C^\infty \big((M, \vert \exter \vert(M) \otimes End(E) \big)$ of $End(E)$-valued densities is a sheaf).
\end{proof}

Remark that, when $M$ is Riemannian with metric $g$ and $d_g(x,y)$ is the geodesic distance, then 
\begin{align*}
\tr \big(k^P(x,y)\big)=\sum_{j=-(m+d)}^0 a_j(x,x-y) -c_P(x) \log\big(d_g(x,y)\big)+ \mathcal{O}(1),
\end{align*}
since there exists $c>0$ such that $c^{-1} \vert x-y\vert \leq d_g(x,y) \leq c \vert x-y \vert$.

\subsection{Wodzicki residue}
\label{Wodzicki residue}

The claim is that $\int_M c_P(x) \vert dx \vert$ is a residue.
\\
For this, we embed everything in $\C$. In the same spirit as in Proposition \ref{extprop}, one obtains the following
\begin{lemma}
\label{extc}
Every $\sigma \in C^\infty \big(\R^d \backslash \set{0} \big)$ which is homogeneous of degree 
$m\in \C \backslash \Z$ can be uniquely extended to a homogeneous distribution.
\end{lemma}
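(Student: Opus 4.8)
\emph{Strategy.} As in Proposition~\ref{extprop}, the whole difficulty is infrared: writing $\sigma=\phi\,\sigma+(1-\phi)\,\sigma$ with $\phi\in C_c^\infty(\R^d)$ equal to $1$ near $0$, the ultraviolet piece $(1-\phi)\,\sigma$ is a slowly increasing smooth function and already lies in $\SS'$, so only the compactly supported piece $\phi\,\sigma$ has to be extended across the origin. The plan is to build \emph{some} extension and then correct it into a homogeneous one. The mechanism that makes this possible — and also forces uniqueness — is that $m\notin\Z$ separates the degree $m$ from the integer degrees $-d,-d-1,-d-2,\dots$ carried by the distributions $\partial^\a\delta_0$ supported at $0$ (recall $(\partial^\a\delta_0)_\lambda=\lambda^{-d-|\a|}\,\partial^\a\delta_0$); this is exactly what removes the logarithmic obstruction of Proposition~\ref{extprop}(ii).

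\emph{Existence.} First I would produce some extension $\tau_0\in\SS'$ of $\sigma$: if $\Re(m)>-d$ one may take $\tau_0=\sigma$ itself (locally integrable), and otherwise, fixing $N\in\N$ with $N+\Re(m)>-d$, the finite-part prescription $\langle\tau_0,f\rangle:=\int_{\R^d}\phi(\xi)\,\sigma(\xi)\big(f(\xi)-\sum_{|\a|<N}\tfrac{\xi^\a}{\a!}\partial^\a f(0)\big)\,d\xi+\langle(1-\phi)\sigma,f\rangle$ converges (the integrand is $O(|\xi|^{\Re(m)+N})$ near $0$) and restricts to $\sigma$ off the origin. Next, let $P:=\sum_j\xi_j\partial_{\xi_j}$ be the Euler operator; a distribution is homogeneous of degree $m$ exactly when $(P-m)\tau=0$. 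Since $\sigma$ is homogeneous away from $0$, $(P-m)\tau_0$ is supported at the origin, hence $(P-m)\tau_0=\sum_{|\a|\le M}b_\a\,\partial^\a\delta_0$ for some $M$ and $b_\a\in\C$. On the finite-dimensional space $V:=\operatorname{span}\{\partial^\a\delta_0:|\a|\le M\}$ the operator $P$ is diagonal with $P\,\partial^\a\delta_0=-(d+|\a|)\,\partial^\a\delta_0$, so $P-m$ has eigenvalues $-(d+|\a|)-m\neq0$ (here $m\notin\Z$ is used) and is invertible on $V$; I would then choose $\rho\in V$ with $(P-m)\rho=(P-m)\tau_0$ and set $\tau:=\tau_0-\rho$. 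Then $(P-m)\tau=0$, so $\tau$ is homogeneous of degree $m$, while $\tau-\tau_0=-\rho$ is supported at $0$, so $\tau$ still restricts to $\sigma$ on $\R^d\setminus\{0\}$.

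\emph{Uniqueness and the main obstacle.} If $\tau,\tau'\in\SS'$ are both homogeneous of degree $m$ and restrict to $\sigma$ off the origin, then $\tau-\tau'$ is supported at $0$, hence $\tau-\tau'=\sum_{|\a|\le N}c_\a\,\partial^\a\delta_0$; homogeneity of degree $m$ of this sum together with linear independence of the $\partial^\a\delta_0$ forces $c_\a(\lambda^{-d-|\a|}-\lambda^m)=0$ for all $\lambda>0$, whence $c_\a=0$ for every $\a$ because $-d-|\a|\in\Z$ while $m\notin\Z$; thus $\tau=\tau'$. The only genuine difficulty is the passage from \emph{an} extension to a \emph{homogeneous} one: the ambiguity is a finite combination of $\partial^\a\delta_0$'s, and the Euler operator is invertible on that space only because $m$ avoids the integers $-d-|\a|$ — for $m\in\Z$ this invertibility fails and one is back to Proposition~\ref{extprop}. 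As an alternative to the $P-m$ argument one can meromorphically continue $z\mapsto\tau_z$ with $\tau_z(\xi):=|\xi|^{z-m}\sigma(\xi)$ from $\{\Re(z)>-d\}$: in polar coordinates the infrared part of $\langle\tau_z,f\rangle$ is a Mellin integral $\int_0^\infty r^{z+d-1}G_f(r)\,dr$ with $G_f$ smooth and compactly supported, so it has at worst simple poles at $z\in-d-\N_0\subset\Z$, is therefore regular at $z=m$, and the identity $(\tau_z)_\lambda=\lambda^z\tau_z$ valid for $\Re(z)>-d$ propagates by analytic continuation to $z=m$.
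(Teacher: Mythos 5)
Your proof is correct, and it is essentially the argument the paper has in mind: the paper gives no details for this lemma beyond ``in the same spirit as Proposition~\ref{extprop}'', and your reasoning is exactly that spirit --- any two extensions differ by a finite combination of $\partial^\a\delta_0$, whose scaling degrees $-d-|\a|$ are integers, so $m\notin\Z$ both removes the obstruction to homogeneity and forces uniqueness. Your use of the Euler operator $P-m$ (invertible on $\operatorname{span}\{\partial^\a\delta_0\}$) is just a clean infinitesimal reformulation of the paper's finite scaling argument $\tau\mapsto\lambda^{d}\tau_\lambda$, and the Taylor-subtraction extension replaces the Hahn--Banach step; both variants are sound.
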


\begin{definition}
Let $U$ be an open set in $\R^d$ and $\Omega$ be a domain in $\C$. 

A map $\sigma :  \Omega \to S^m(U\times \R^d)$ is said to be holomorphic when

\qquad the map: $z\in \Omega \to \sigma(z)(x,\xi)$ is analytic for all $x\in U$, $\xi \in \R^d$,

\qquad the order $m(z)$ of $\sigma(z)$ is analytic on $\Omega$,

\qquad the two bounds of Definition \ref{defsym} $(i)$ and $(ii)$ of the asymptotics 
$\sigma(z) \simeq \sum_j \sigma_{m(z)-j}(z)$ are locally uniform in $z$.
\end{definition}
\noindent This hypothesis is sufficient to get:

\quad The map: $z \to \sigma_{m(z)-j}(z)$ is holomorphic from $\Omega$ to 
$C^\infty \big(U\times \R^d \backslash \set{0}\big)$.

\quad The map $\partial_z \sigma(z)(x,\xi)$ is a classical symbol on $U\times \R^d$ and one obtains:
\begin{align*}
\partial_z \sigma(z)(x,\xi) \simeq \sum_{j\geq 0} \,\partial_z \sigma_{m(z)-j}(z)(x,\xi).
\end{align*}

\begin{definition}
The map $P:\Omega \subset \C \to \Psi DO(U)$ is said to be holomorphic if it has the decomposition
\begin{align*}
P(z)=\sigma(z)(\cdot,D)+R(z)
\end{align*}
(see definition \eqref{sigma(cdot,D)}) where $\sigma: \Omega \to S(U\times \R^d)$ and 
$R: \Omega \to C^\infty(U \times U)$ are holomorphic.
\end{definition}

As a consequence, there exists a holomorphic map from $\Omega$ into $\Psi DO(M,E)$ with a holomorphic product 
(when $M$ is compact).

\begin{example} Elliptic operators:
\end{example}

Recall that $P \in \Psi DO^m(U)$, $m\in \C$, is elliptic if there exist strictly positive continuous functions $c$ and 
$C$ on $U$ such that $\vert \sigma^P(x,\xi) \vert \geq c(x)\,\vert \xi \vert ^m$ for $\xi \vert \geq C(x), \, x\in U$.
This essentially means that $P$ is invertible modulo smoothing operators. More generally, $P\in \Psi DO^m(M,E)$ is 
elliptic if its local expression in each coordinate chart is elliptic.

Let $Q\in \Psi DO^m(M,E)$ with $\Re(m)>0$. We assume that $M$ is compact and $Q$ is elliptic. Thus $Q$ has a 
discrete spectrum and we suppose Spectrum$(Q)\cap \R^-=\emptyset$. Since we want to integrate in $\C$, we 
assume that there exists a curve $\Gamma$ coming from $+\infty$ along the real axis in the upper half plane, 
turns around the origin and goes back to infinity in the lower half plane whose interior contains the spectrum 
of $Q$. The curve $\Gamma$ must avoid branch points of $\lambda^s$ at $s=0$ (so the branch of $\lambda^s$ defined in the right half-plane is such that $1^s= 1$).

\bigskip
\centerline{
\begin{tikzpicture}[>=angle 45,line width=0.8pt,
		point spectre/.style={draw=black,cross out,line width=0.8pt,inner sep=2.2pt,outer sep=0pt},
		decoration={markings,mark=at position 0.9 with {\arrowreversed[black, line width=0.5pt,scale=2]{angle 
		60}}}
		]
\draw[->] (-1,0) -- (9,0);
\draw[->] (0,-3) -- (0,3);
\node[point spectre] at (3,1) {};
\node[point spectre] at (4.3,1.4) {};
\node[point spectre] at (2.1,-0.3) {};
\node[point spectre] at (5.3,-1.5) {};
\node[point spectre] at (7.3,-1.9) {};
\node[point spectre] at (6.4,0.2) {};
\node[point spectre] at (7.5,1.7) {};
\node[point spectre] at (1,0) {};
\node[point spectre] at (5,0) {};
\node[point spectre] at (7,0) {};
\draw[postaction={decorate}] plot[domain=-2.9:2.9,smooth] ({(\x)^2 - 0.5},\x);
\node[below=3pt] at (8.2,3.5) {$\Gamma$};
\node[anchor=south west] (0,0) {$O$};
\end{tikzpicture}
}
\bigskip

When $\Re(s)<0$, $Q^s \vcentcolon=\tfrac{1}{i2\pi}\int_\Gamma \lambda^s\,(\lambda-Q)^{-1}\,d\lambda$ makes 
sense as operator on $L^2(M,E)$ (when $\Re(s)\geq 0$, define $Q^s \vcentcolon= Q^{s-k}\,Q^k$ for $k\in \N$ large enough so $\Re(s)-k<0$; there is no unambiguity because $Q^s\,Q = Q^{s+1}$ for $\Re(s)<-1$. Note that if $\Re(m)<0$, we can define $Q^s={(Q^{-1})}^{-s}$ and apply previous definition). 

Actually, $Q^s \in \Psi DO^{ms}(M,E)$ and $(\lambda-Q)^{-1}=\sigma(\lambda)(\cdot,D)+R(\lambda)$ where 
$R(\lambda)$ is a regularizing operator and $\sigma(\lambda)(\cdot,D)$ has a symbol smooth in $\lambda$ 
such that  $\sigma(\lambda)(x,\xi) \simeq \sum_{j\geq 0} a_{-m-j}(\lambda,x,\xi)$ 
with $a_{n}(\lambda,x,\xi)$ homogeneous of degree $n$ in $(\lambda^{1/m},\xi)$.

The map $s \to Q^s$ is a one-parameter group containing $Q^0=1$ and $Q^1=Q$ which is holomorphic on 
$\Re(s)\leq 0$.

We want to integrate symbols, so we will need the set $S_{int}$ of integrable symbols. 
Using same type of arguments as in Proposition \ref{extprop} and Lemma \ref{extc}, one proves

\begin{prop}
\label{Lext}
Let 
\begin{align*}
L: \sigma \in S_{int}^\Z(\R^d) \to L(\sigma) \vcentcolon=\widecheck{\sigma}\,(0)=
\tfrac{1}{(2\pi)^d}\, \int_{\R^d} \sigma(\xi)\,d\xi.
\end{align*}
Then $L$ has a unique holomorphic extension $\widetilde L$ on $S^{\C \backslash \Z}(\R^d)$.
\\
Moreover, when $\sigma(\xi)\simeq \sum_j \sigma_{m-j}(\xi)$, $m \in \C \backslash \Z$, 
 $$
 \widetilde L(\sigma)=\big( \sigma-\sum_{j\leq N} \tau_{m-j}\big)\,\widecheck{ } \,\,(0)=
 \tfrac{1}{(2\pi)^d}\, \int_{\R^d} \big(\sigma-\sum_{j\leq N} \tau_{m-j} \big) (\xi) \, d\xi
 $$
 where $m$ is the order of 
 $\sigma$, $N$ is an integer with $N> \Re(m)+d$ and $\tau_{m-j}$ is the extension of $\sigma_{m-j}$ of 
 Lemma \ref{extc}.
\end{prop}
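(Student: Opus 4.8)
\textbf{Proof proposal for Proposition \ref{Lext}.}

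The plan is to mimic the uniqueness-and-extension scheme of Proposition \ref{extprop} and Lemma \ref{extc}, but now at the level of the linear functional $L$ rather than of individual distributions, and to exploit holomorphy to pin down the extension uniquely. First I would observe that on the genuinely integrable symbols $S_{int}^\Z(\R^d)$ the formula $L(\sigma)=\widecheck{\sigma}(0)=\tfrac{1}{(2\pi)^d}\int_{\R^d}\sigma(\xi)\,d\xi$ is well defined and, for a holomorphic family $\sigma(z)$ landing in $S_{int}$, depends holomorphically on $z$ (differentiation under the integral sign, justified by the locally uniform bounds in the definition of a holomorphic family of symbols). So the real content is to extend $L$ past the integer orders. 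For this I would split a symbol $\sigma$ of order $m$, $\Re(m)$ possibly large, as $\sigma=\sum_{j\le N}\sigma_{m-j}+\big(\sigma-\sum_{j\le N}\sigma_{m-j}\big)$ with $N>\Re(m)+d$; the remainder is an integrable symbol by Definition \ref{defsym}(ii), so $L$ applies to it directly. Each homogeneous piece $\sigma_{m-j}$, being homogeneous of non-integer degree $m-j\in\C\backslash\Z$, extends by Lemma \ref{extc} to a unique homogeneous distribution $\tau_{m-j}\in\SS'$, and I would \emph{define}
\begin{align*}
\widetilde L(\sigma)\vcentcolon=\tfrac{1}{(2\pi)^d}\int_{\R^d}\big(\sigma-\textstyle\sum_{j\le N}\tau_{m-j}\big)(\xi)\,d\xi,
\end{align*}
where one reads the homogeneous extensions as integrable symbols once the infrared part is included (each $\tau_{m-j}$ is integrable near $0$ for $\Re(m-j)>-d$ and matches $\sigma_{m-j}$ in the ultraviolet, so the difference is integrable).

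Next I would check that this definition is independent of $N$ (replacing $N$ by $N+1$ adds and subtracts the single integrable term $\sigma_{m-N-1}-\tau_{m-N-1}$, whose integral vanishes because on a genuinely integrable homogeneous symbol the unique extension \emph{is} the symbol and $\int(\sigma_{m-N-1}-\tau_{m-N-1})=0$), and that $\widetilde L$ agrees with $L$ on $S_{int}^\Z$ (there one may take $N=0$ and there are no subtracted terms). Then I would verify holomorphy: for a holomorphic family $z\mapsto\sigma(z)$ of order $m(z)$ staying in a strip where $\Re(m(z))<N+1-d$ say, each $z\mapsto\tau_{m(z)-j}(z)$ is holomorphic into $\SS'$ — this follows from the uniqueness in Lemma \ref{extc} together with the holomorphy of $z\mapsto\sigma_{m(z)-j}(z)$ into $C^\infty(\R^d\backslash\{0\})$ already recorded after the definition of holomorphic symbol families — and $z\mapsto\int(\sigma(z)-\sum\tau_{m(z)-j}(z))$ is holomorphic by differentiation under the integral. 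Covering $\C\backslash\Z$ by such strips and using that the local definitions agree on overlaps (the $N$-independence), one gets a globally holomorphic $\widetilde L$ on $S^{\C\backslash\Z}(\R^d)$.

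Finally, uniqueness of the holomorphic extension: if $\widetilde L_1,\widetilde L_2$ are two holomorphic extensions of $L$, then for any holomorphic family $\sigma(z)$ the function $z\mapsto(\widetilde L_1-\widetilde L_2)(\sigma(z))$ is holomorphic on $\C\backslash\Z$ and vanishes whenever $\sigma(z)\in S_{int}$, i.e.\ on every half-plane $\{\Re(m(z))<-d\}$; by the identity theorem it vanishes identically, and since every symbol of non-integer order sits in such a family, $\widetilde L_1=\widetilde L_2$. The main obstacle I anticipate is the holomorphy of $z\mapsto\tau_{m(z)-j}(z)$ as an $\SS'$-valued map: one must show that the ``regularization'' producing the unique homogeneous extension can be performed uniformly in the parameter, which amounts to tracking the constants in the explicit subtraction used to prove Lemma \ref{extc} and checking they depend holomorphically on the order — a computation paralleling the evaluation of $c(\lambda)$ in Proposition \ref{extprop} but with $\log\lambda$ replaced by the factor $\tfrac{1}{m-j+d}$ that appears for non-integer degrees. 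Everything else is bookkeeping around integrability and the identity theorem.
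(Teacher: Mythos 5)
Your overall route is the one the paper intends (the paper itself only sketches it): define $\widetilde L$ by subtracting the unique homogeneous extensions $\tau_{m-j}$ of Lemma \ref{extc} until the remainder is integrable at infinity, check holomorphy in the parameter, and obtain uniqueness from the identity theorem by connecting any symbol of non-integer order to $S_{int}$ along a holomorphic path. The holomorphy discussion and the uniqueness argument are in order.

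There is, however, a genuine gap in your two consistency checks. Passing from $N$ to $N+1$ changes $\widetilde L(\sigma)$ by $\widecheck{\tau_{m-N-1}}\,(0)$, not by ``$\int(\sigma_{m-N-1}-\tau_{m-N-1})$'', and your justification --- that $\sigma_{m-N-1}$ is a ``genuinely integrable homogeneous symbol'' whose unique extension is the symbol itself --- is false: $\sigma_{m-N-1}$ is homogeneous of degree $m-N-1$ with $\Re(m-N-1)<-d$, hence integrable at infinity but \emph{not} near the origin, and $\tau_{m-N-1}$ is a genuinely distributional regularization there. The same defect affects your claim that $\widetilde L=L$ on $S_{int}^{\Z}$: even for $\sigma\in S_{int}$ the condition $N>\Re(m)+d$ does not in general allow an empty sum, so ``there are no subtracted terms'' is not available. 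Both checks reduce to the single fact that $\widecheck{\tau_k}\,(0)=0$ whenever $\tau_k$ is the homogeneous extension of a symbol of non-integer degree $k$ with $\Re(k)<-d$; this is true, but for a different reason than the one you give: $\widecheck{\tau_k}$ is continuous at $0$ (compactly supported infrared part, integrable ultraviolet part) and homogeneous of degree $-(d+k)$ with $\Re\big(-(d+k)\big)>0$, so the scaling relation $\widecheck{\tau_k}(\la y)=\la^{-(d+k)}\,\widecheck{\tau_k}(y)$ forces the value at the origin to vanish. With that substitution your argument closes. A more minor imprecision: for $j>\Re(m)+d$ the expression $\int\big(\sigma-\sum_{j\leq N}\tau_{m-j}\big)$ is not a Lebesgue integral and must be read as the evaluation $\big(\sigma-\sum_{j\leq N}\tau_{m-j}\big)\,\widecheck{ }\,\,(0)$, which makes sense because the infrared parts of the $\tau_{m-j}$ are compactly supported distributions with smooth inverse Fourier transforms; your parenthetical only covers the range $\Re(m-j)>-d$.
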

$\widetilde L$ is holomorphic extension of $L$ on $S^{\C \backslash \Z}(\R^d)$ which is unique since 
every element of $S^{\C \backslash \Z}(\R^d)$ is arcwise connected to $S_{int}(\R^d)$ via a holomorphic 
path within $S^{\C \backslash \Z}(\R^d)$.

This result has an important consequence here:

\begin{corollary}
\label{Lres}
If $\sigma: \,\C \to S(\R^d)$ is holomorphic and order$\big( \sigma(s) \big)=s$, then $\widetilde L\big(\sigma(s)\big)$ 
is meromorphic with at most simple poles on $\Z$ and for $p\in \Z$,
\begin{align*}
\,\underset{s=p}{\Res}\,\widetilde L\big(\sigma(s)\big)=-\tfrac{1}{(2\pi)^d} \,
\int_{\mathbb{S}^{d-1}} \,\sigma_{-d}(p)(\xi) \, d\xi.
\end{align*}
\end{corollary}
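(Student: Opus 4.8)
The plan is to exploit the explicit formula for $\widetilde L$ from Proposition~\ref{Lext} together with the fact that the obstruction to extending a homogeneous distribution is concentrated precisely at the integer orders, where it is governed by the spherical integral $c_\sigma$. Concretely, write $\sigma(s) \simeq \sum_{j\geq 0}\sigma_{s-j}(s)$ and fix a large integer $N$; by Proposition~\ref{Lext}, for $s\notin\Z$ we have
\begin{align*}
\widetilde L\big(\sigma(s)\big)=\tfrac{1}{(2\pi)^d}\int_{\R^d}\Big(\sigma(s)-\sum_{j\leq N}\tau_{s-j}(s)\Big)(\xi)\,d\xi,
\end{align*}
where $\tau_{s-j}(s)$ is the unique homogeneous extension of the degree-$(s-j)$ part (Lemma~\ref{extc}), valid whenever $s-j\notin\Z$, hence for $s$ in a punctured neighbourhood of any $p\in\Z$. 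The integrand is, by construction, an integrable symbol (order $<-d$ once $N>\Re(m)+d$), and the whole remaining task is to track how each piece depends on $s$ near $s=p$.

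First I would isolate the source of the pole. For $j$ with $p-j\neq -d$, i.e. $j\neq p+d$, the extension $\tau_{s-j}(s)$ depends holomorphically on $s$ in a full neighbourhood of $p$ (since the degree $s-j$ stays off $\Z$ except at the single point $s=p$, where the extension formula of Lemma~\ref{extc} degenerates only when the degree equals $-d$; away from degree $-d$ the Hahn--Banach/homogeneity construction of Proposition~\ref{extprop}(i)-type still produces a holomorphic family because the integral defining it converges). Thus the only term that can fail to be holomorphic at $s=p$ is $j_0:=p+d$, the term $\tau_{s-j_0}(s)$ whose homogeneity degree is $s-(p+d)$, which equals $-d$ exactly at $s=p$. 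So $\widetilde L(\sigma(s))$ is meromorphic with a possible pole only at $s=p$, and this argument run over all $p\in\Z$ gives meromorphy on all of $\C$ with poles confined to $\Z$.

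Next I would extract the residue from that single term. Near $s=p$ the function $\sigma_{s-j_0}(s)(\xi)$ is a smooth family of functions on $\R^d\setminus\{0\}$, homogeneous of degree $s-j_0$, which at $s=p$ becomes homogeneous of degree $-d$ and whose spherical integral is $\int_{\mathbb{S}^{d-1}}\sigma_{-d}(p)(\xi)\,d\xi$ in the notation of the statement (writing $\sigma_{-d}(p)$ for $\sigma_{p-j_0}(p)$, the degree-$(-d)$ homogeneous component). The extension of a near-critical homogeneous symbol of degree $-d+\varepsilon$ has an integral over $\R^d$ (after the Hahn--Banach normalisation) that behaves like $\tfrac{1}{\varepsilon}\int_{\mathbb{S}^{d-1}}(\cdot)\,d\xi$ plus a holomorphic remainder — this is the familiar ``$\int_1^\infty r^{-1+\varepsilon}\,dr = 1/\varepsilon$'' mechanism, and quantitatively it is exactly the $\varepsilon\to 0$ limit of the computation $c(\lambda)=c_\sigma\log\lambda$ carried out in the proof of Proposition~\ref{extprop}, differentiated/Mellin-transformed. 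So with $\varepsilon=s-p$ one gets
\begin{align*}
\widetilde L\big(\sigma(s)\big)=-\tfrac{1}{(2\pi)^d}\,\tfrac{1}{s-p}\int_{\mathbb{S}^{d-1}}\sigma_{-d}(p)(\xi)\,d\xi+\mathcal{O}(1),
\end{align*}
the sign being fixed by the orientation of the radial integral exactly as the sign $-g(0)=-1$ appeared in Proposition~\ref{extprop}. Taking the residue gives the claimed formula.

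The main obstacle is the middle step: making rigorous that every term $\tau_{s-j}(s)$ with $j\neq p+d$ is genuinely holomorphic across $s=p$, and that the pole of the $j_0$-term has the stated residue with the correct sign and normalising constant $(2\pi)^{-d}$. This requires care because the extension operator in Lemma~\ref{extc}/Proposition~\ref{extprop} is defined only off $\Z$ and one must check the family is holomorphic up to the boundary point by an explicit meromorphic continuation — the cleanest route is to fix a cutoff $\phi$, split each $\tau_{s-j}(s)$ into infrared (compactly supported, hence entire in $s$ after inverse Fourier transform) and ultraviolet parts, and for the ultraviolet part of the critical term perform the polar/Mellin computation in one variable, reproducing verbatim the estimate at the end of the proof of Proposition~\ref{extprop}. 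Everything else (meromorphy, at-most-simple poles, location on $\Z$) then follows formally, and the corollary is proved.
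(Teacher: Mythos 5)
Your proposal is correct and follows essentially the same route as the paper: starting from the explicit formula of Proposition~\ref{Lext}, isolate the single term $\tau_{s-j_0}(s)$ with $j_0=p+d$ whose homogeneity degree crosses $-d$ at $s=p$, check the remaining terms are holomorphic there, and extract the simple pole $\tfrac{1}{s-j+d}$ from the radial integral in polar coordinates, the overall minus sign coming from the subtraction in the formula for $\widetilde L$. One cosmetic slip: the pole is produced by the \emph{infrared} integral $\int_0^1 t^{\,s-j+d-1}\,dt=\tfrac{1}{s-j+d}$ (the ultraviolet part is holomorphic, and $\int_1^\infty r^{-1+\eps}\,dr$ equals $-1/\eps$, not $1/\eps$), but your final displayed expansion near $s=p$ and the resulting residue are the correct ones.
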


\begin{proof}
Using Lemma \ref{extc}, one proves that if $m(s)$ is holomorphic near $m(s)=p$, then 
$\widetilde L \big(\sigma(s)\big)$ is meromorphic near $p$. 

Now we look at the singularity near $p\in \Z$. In the half plane $\set{\Re(s)<p}$, only the infrared part of 
$\tau_{m-j}(s)$ is a problem since its ultraviolet part is holomorphic. For $0\leq j \leq p+m$ and $\Re(s)<p$, 
$\sigma_{s-j}(s)(\xi)$ is integrable near 0 thus defines its unique extension $\tau_{s-j}(s)$. So, the only 
possible singularity near $s=p$ could come from 
\begin{align*}
-\tfrac{1}{(2\pi)^d} \,\int_{\vert \xi \vert \leq 1}\sigma_{s-j}(s)(\xi)\, d\xi&=-\tfrac{1}{(2\pi)^d} \,\int_0^1 t^{s-j+d-1}\,dt 
\int_{\vert \xi \vert \leq 1} \sigma_{s-j}(s)(\tfrac{\xi}{\vert \xi \vert})\,d(\tfrac{\xi}{\vert \xi \vert})\\
&=-\tfrac{1}{(2\pi)^d} \, \frac{1}{s-j+d} \int_{\mathbb{S}^{d-1}}\sigma_{s-j}(s)(\xi)\, d\xi.
\end{align*}
where we used for the first equality $\sigma_{s-j}(s)(\xi)=\vert \xi \vert^{s-j} \sigma_{s-j}(s)(\tfrac{\xi}{\vert \xi \vert})$. 
Thus, $\widetilde L\big(\sigma(s)\big)$ has at most only simple pole at $s=-d+j$. 
\end{proof}
We are now ready to get the main result of this section which is due to Wodzicki \cite{Wodzicki,Wodzicki1}.
  
\begin{definition}
Let $\DD\in \Psi DO(M,E)$ be an elliptic pseudodifferential operator  of order 1 on a boundary-less compact 
manifold $M$ endowed with a vector bundle $E$. 
\\
Let $\Psi DO_{int}(M,E) \vcentcolon=\set{Q \in \Psi DO^\C (M,E)\, \,\vert\, \, \Re\big(order(Q)\big) <-d}$ be the class of 
pseudodifferential operators whose symbols are in $S_{int}$, i.e. integrable in the $\xi$-variable. 
\\
In particular, if 
$P \in \Psi DO_{int}(M,E) $, then its kernel $k^P(x,x)$ is a smooth density on the diagonal of $M \times M$ with 
values in $End(E)$.

For $P \in \Psi DO^{\Z}(M,E)$, define
\begin{align}
\label{Wresdef}
\text{WRes }  P\vc \,\underset{s=0}{\Res}\, \Tr\big(P\vert \DD\vert^{-s}\big).
\end{align}
\end{definition}
\noindent This makes sense because:

\begin{theorem}
\label{Wresresult}
(i) Let $P:\Omega\subset \C \to \Psi DO_{int}(M,E)$ be a holomorphic family. Then the functional map $\Tr:\,s\in\Omega  \to \Tr(P(s))\in \C$ has a unique analytic extension on the family $\Omega \to \Psi DO^{\C \backslash \Z}(M,E)$ still denoted by $\Tr$.

(ii) If $P \in \Psi DO^{\Z}(M,E)$, the map: $s\in \C \to \Tr \big(P\vert \DD\vert^{-s}\big)$ has at most 
simple poles on $\Z$ and
\begin{align}
\label{Wresint}
\text{WRes }  P= \int_M c_P(x) \, \vert dx \vert
\end{align}
is independent of $\DD$. Recall (see Theorem \ref{kernelbehavior}) that
$c_P(x)=\tfrac{1}{(2\pi)^d} \int_{\mathbb{S}^{d-1}} \tr\big(\sigma_{-d}^P(x,\xi)\big)\,d\xi.$

(iii) $\text{WRes}$ is a trace on the algebra $\Psi DO^\Z (M,E)$.
\end{theorem}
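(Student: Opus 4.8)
The plan is to prove the three assertions in order, each resting on the material already established. For part (i), the key point is that holomorphy of $P$ together with the estimates in Proposition~\ref{Lext} let us apply $\widetilde L$ fiberwise. Concretely, $\Tr\big(P(s)\big)=\int_M \tr\, k^{P(s)}(x,x)\,\vert dx\vert$ for $\Re(\mathrm{order}(P(s)))<-d$, and in a local chart $k^{P(s)}(x,x)=\tfrac{1}{(2\pi)^d}\int_{\R^d}\sigma^{P(s)}(x,\xi)\,d\xi=L\big(\sigma^{P(s)}(x,\cdot)\big)$. Thus I would define the extension by $x\mapsto \widetilde L\big(\sigma^{P(s)}(x,\cdot)\big)$ in each chart, check it is independent of the chart (using the covariance formula of Theorem~\ref{pseudo}(ii), exactly as in the proof of Theorem~\ref{kernelbehavior} the logarithmic/obstruction term transforms as a density), patch with a partition of unity, and integrate over $M$; holomorphy in $s$ follows from the holomorphy statement in Proposition~\ref{Lext} and local uniformity of the bounds. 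Uniqueness is immediate since $S^{\C\backslash\Z}$ is arcwise connected to $S_{int}$ through holomorphic paths, so any two analytic extensions agree on a set with a limit point.

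For part (ii), I would apply part (i) to the holomorphic family $s\mapsto P\vert\DD\vert^{-s}$, which lies in $\Psi DO^{m-s}(M,E)$ with $m=\mathrm{order}(P)\in\Z$ fixed (so the order $m-s$ is affine, hence holomorphic, in $s$, and its principal symbol at the relevant point is $\sigma^P_m(x,\xi)\,\vert\sigma^{\DD}_1(x,\xi)\vert^{-s}$). The location and nature of the poles then comes from Corollary~\ref{Lres}: in each chart $\Tr\big(P\vert\DD\vert^{-s}\big)=\int_M \widetilde L\big(\sigma^{P\vert\DD\vert^{-s}}(x,\cdot)\big)\,\vert dx\vert$, and Corollary~\ref{Lres} says the integrand is meromorphic with at most simple poles at integers, with residue at $s=0$ equal to $-\tfrac{1}{(2\pi)^d}\int_{\mathbb S^{d-1}}\tr\big(\sigma_{-d}(x,\xi)\big)\,d\xi$ evaluated at the order-$0$ piece. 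Since at $s=0$ one has $P\vert\DD\vert^{0}=P$, the relevant homogeneous component of order $-d$ is exactly $\sigma^P_{-d}(x,\xi)$, so integrating over $x\in M$ yields $\underset{s=0}{\Res}\,\Tr\big(P\vert\DD\vert^{-s}\big)=\int_M c_P(x)\,\vert dx\vert$ with $c_P$ as in \eqref{cP}. Independence of $\DD$ is then automatic because the right-hand side $\int_M c_P(x)\,\vert dx\vert$ does not involve $\DD$ at all; one only needs $\vert\DD\vert^{-s}$ to be a well-defined holomorphic family with $\vert\DD\vert^0=1$, which holds for any elliptic $\DD$ of order $1$ with suitable spectral cut as arranged above.

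For part (iii), linearity of $\mathrm{WRes}$ is clear from \eqref{Wresdef}. Vanishing on $\Psi DO^{-\infty}$ follows since a smoothing operator is trace-class with entire zeta function, so no residue; equivalently $\sigma_{-d}=0$. The trace property $\mathrm{WRes}(P_1P_2)=\mathrm{WRes}(P_2P_1)$ is the substantive point. Using \eqref{Wresint} it suffices to show $\int_M c_{[P_1,P_2]}(x)\,\vert dx\vert=0$. I would reduce to the case where $P_1,P_2$ are properly supported (the smoothing remainder contributes nothing) and then argue locally: the product formula in Theorem~\ref{pseudo}(i) expresses $\sigma^{P_1P_2}_{-d}$ and $\sigma^{P_2P_1}_{-d}$ as finite sums $\sum_{|\a|=k}\tfrac{(-i)^{|\a|}}{\a!}\partial^\a_\xi\sigma^{P_1}_{m_1-j}\,\partial^\a_x\sigma^{P_2}_{m_2-l}$ (and symmetrically), with the indices constrained by $m_1+m_2-|\a|-j-l=-d$. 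The difference $\sigma^{P_1P_2}_{-d}-\sigma^{P_2P_1}_{-d}$, after taking $\tr$, integrating over $\xi\in\mathbb S^{d-1}$ and over $x\in M$, is shown to vanish: each term is, up to a derivative, of the form $\partial_\xi$ or $\partial_x$ of something, and the $\xi$-derivatives integrate to zero over the sphere (an exact form on $\mathbb S^{d-1}$, by Stokes, since the integrand is homogeneous of the right degree) while the $x$-derivatives integrate to zero over the closed manifold $M$. Making the bookkeeping precise — grouping the asymptotic terms so that the commutator symbol of order $-d$ is manifestly a sum of total derivatives in $(x,\xi)$ — is the main obstacle; one standard device is to note that $c_{P_1P_2}$ depends only on finitely many homogeneous components, reduce to classical (polyhomogeneous) symbols, and invoke the identity $\int_{\mathbb S^{d-1}}\big(\partial_{\xi_i}f\big)(\xi)\,d\xi=0$ for $f$ homogeneous of degree $1-d$, together with $\int_M \partial_{x_i}(\cdots)\,\vert dx\vert=0$; chasing the product-formula indices then shows the leftover terms cancel in pairs. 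An alternative, cleaner route for this last step is to use the analytic definition directly: for $\Re(s)$ large, $\Tr\big(P_1P_2\vert\DD\vert^{-s}\big)-\Tr\big(P_2P_1\vert\DD\vert^{-s}\big)=\Tr\big(P_2[\,P_1,\vert\DD\vert^{-s}\,]\big)$ by cyclicity of the honest trace on trace-class operators, and $[\,P_1,\vert\DD\vert^{-s}\,]$ is $O(s)$ in order near $s=0$ in a way that kills the residue; I would present whichever of these two arguments is shortest.
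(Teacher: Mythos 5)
Your proposal is correct and follows essentially the paper's route: (i) and (ii) are exactly the paper's argument (fiberwise extension via $\widetilde L$ in charts, patching, Corollary \ref{Lres} applied to the holomorphic family $s\mapsto P\vert\DD\vert^{-s}$), including the same unexamined sign flip coming from the fact that Corollary \ref{Lres} is stated for order $=s$ while your family has order $m-s$. For (iii), the "cleaner route" you mention second is the one the paper actually uses, and you should commit to it: the symbolic total-derivative computation is not needed here (the paper reserves that machinery for the uniqueness statement, Corollary \ref{uniquetrace}) and, as you note, its bookkeeping is the hard part. One precision on that second route: the reason the commutator term contributes no residue is not that $[P_1,\vert\DD\vert^{-s}]$ is "$O(s)$ in order" — its order is $m_1-1-s$, and a drop in order alone does not kill the order-$(-d)$ symbol component — but rather that \eqref{Wresint} shows the residue at $s=0$ of a holomorphic family depends only on the value of the family at $s=0$; since $P_2\vert\DD\vert^{-s}P_1$ and $P_2P_1\vert\DD\vert^{-s}$ both equal $P_2P_1$ at $s=0$ (equivalently, $P_2[\vert\DD\vert^{-s},P_1]$ vanishes there), their residues coincide.
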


\begin{proof}
$(i)$ The map $s \to \Tr \big( P \vert D\vert^{-s}\big)$ is holomorphic on $\C$ and connect  
$P \in \Psi DO^{\C \backslash \Z}(M,E)$ to the set $\Psi DO_{int}(M,E)$ within $\Psi DO^{\C \backslash \Z}(M,E)$, 
so an analytic extension of $\Tr$ from $\Psi DO_{int}$ to $\Psi DO^{\C \backslash \Z}$ is necessarily unique. 

$(ii)$ one apply the above machinery:

(1) Notice that $\Tr$ is holomorphic on smoothing operator, so, using a partition of unity, we can 
reduce to a local study of scalar $\Psi DO$'s.

(2) First, fix $s=0$. We are interested in the function $L_\phi(\sigma)\vcentcolon=\Tr \big(\phi \,\sigma(x,D)\big)$
with $\sigma \in S_{int}(U\times \R^d)$ and $\phi \in C^\infty(U)$. For instance, if $P=\sigma(\cdot,D)$, 
$$
\Tr(\phi\,P)=\int_U \phi(x)\,k^P(x,x)\,\vert dx \vert=\tfrac{1}{(2\pi)^d}\int_U \phi(x)\,\sigma(x,\xi) \,d\xi\,\vert dx\vert
=\int_U \phi(x)\,L(\sigma(x,\cdot)) \,\vert dx\vert,
$$
so one extends $L_\phi$ to $S^{\C\backslash \Z}(U\times \R^d)$ with Proposition \ref{Lext} via 
$\widetilde L_\phi(\sigma)=\int_U \phi(x)\,\widetilde L_\phi \big(\sigma(x,\cdot)\big)\,\vert dx\vert$.

(3) If now $\sigma(x,\xi)=\sigma(s)(x,\xi)$ depends holomorphically on $s$, we get uniform bounds in $x$, thus we 
get, via Lemma \ref{extc} applied to $\widetilde L_\phi \big(\sigma(s)(x,\cdot)\big)$ uniformly in $x$, yielding 
a natural extension to $\widetilde L_\phi \big(\sigma(s)\big)$ which is holomorphic on $\C\backslash \Z$.
\\
When order$(\sigma(s))=s$, the map $\widetilde L_\phi \big(\sigma(s)\big)$ has at most simple poles on $\Z$ and 
for each $p\in \Z$, 
$\underset{s=p}{\Res}\,\widetilde L_\phi \big(\sigma(s)\big)=-\tfrac{1}{(2\pi)^d}\, \int_U \int_{\mathbb{S}^{d-1}} \phi(x) \,
\sigma_{-d}(p)(x,\xi)\, d\xi\,\vert dx\vert=-\int_U \phi(x) \,c_{Pp}(x)\,\vert dx\vert$ where we used \eqref{cP} with 
$P=Op\big(\sigma_p(x,\xi)\big)$.

(4) In the general case, we get a unique meromorphic extension of the usual trace $\Tr$ on $\Psi DO^\Z(M,E)$ 
that we still denoted by $\Tr$). 

\noindent When 
$P: \, \C \to \Psi DO^\Z(M,E)$ is meromorphic with order$(\big(P(s)\big)=s$, then $\Tr\big(P(s)\big)$ has at most 
poles on $\Z$ and $\underset{s=p}{\Res}\,\Tr\big(P(s)\big)=-\int_M c_{P(p)}(x) \,\vert dx \vert$ for $p\in \Z$. So we get 
the claim for the family 
$$
P(s)\vcentcolon = P\vert \DD \vert^{-s}.
$$

$(iii)$ Let $P_1,P_2\in \Psi DO^\Z(M,E)$. Since $\Tr$ is a trace on $\Psi DO^{\C\backslash\Z}(M,E)$, we get by $(i)$, 
$\Tr\big(P_1P_2 \vert \DD \vert^{-s}\big)=\Tr\big(P_2 \vert \DD \vert^{-s}P_1\big)$. Moreover 
$$
{\it WRes} \big(P_1P_2\big) = \underset{s=0}{\Res}\,\Tr\big(P_2 \vert \DD \vert^{-s}P_1\big)=
\underset{s=0}{\Res}\,\Tr\big(P_2 P_1\vert \DD \vert^{-s}\big)={\it WRes} \big(P_2P_1\big)
$$
where for the second equality we used \eqref{Wresint} so the residue depends only of the value of $P(s)$ at $s=0$. 
\end{proof}
Note that {\it WRes} is invariant by diffeomorphism: 
\begin{align}
\label{Invdiff}
\text{if } \,\phi \in \text{Diff}(M), 
{\it WRes}(P)={\it WRes}(\phi_*P) 
\end{align}
which follows from \eqref{phi cP}.
The next result is due to Guillemin and Wodzicki.

\begin{corollary}
\label{uniquetrace}
The Wodzicki residue WRes is the only trace (up to multiplication by a constant) on the algebra $\Psi DO^{-\N}(M,E)$, when $M$ is connected and $d\geq 2$.
\end{corollary}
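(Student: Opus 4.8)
The plan is to show that every trace $\tau$ on $\Psi DO^{-\N}(M,E)$ is a scalar multiple of $\mathrm{WRes}$. Since $\mathrm{WRes}$ is itself a trace (Theorem \ref{Wresresult}) and does not vanish identically, it is enough to prove that $\ker(\mathrm{WRes})$ is spanned by the commutators $[A,P]$ with $A\in\Psi DO^{\Z}(M,E)$, $P\in\Psi DO^{-\N}(M,E)$ that land in $\Psi DO^{-\N}(M,E)$; then $\tau$, which annihilates those commutators, annihilates $\ker(\mathrm{WRes})$ and hence factors through the one-dimensional quotient $\Psi DO^{-\N}(M,E)/\ker(\mathrm{WRes})\cong\C$ detected by $\mathrm{WRes}$. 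Here ``trace'' is understood in the natural sense relative to the ambient algebra, $\tau(AP)=\tau(PA)$ for $A\in\Psi DO^{\Z}(M,E)$ and $P\in\Psi DO^{-\N}(M,E)$; this already follows from $\tau([X,Y])=0$ for $X,Y\in\Psi DO^{-\N}$ by factoring $P=(P\La^{2})\,\La^{-2}$ with $\La^{-2}\in\Psi DO^{-2}$ elliptic and invertible and moving $A$ across.

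First I would reduce to the scalar, local situation. Pulling back by a change of trivialisation as in the proof of Theorem \ref{kernelbehavior}, a trace on $\Psi DO^{-\N}(M,E)$, with $E$ of rank $r$, becomes locally a trace on $M_{r}\big(\Psi DO^{-\N}(U)\big)$; the matrix-unit relations $aE_{ij}=[aE_{ij},E_{jj}]$ ($i\neq j$) and $a(E_{ii}-E_{jj})=[aE_{ij},E_{ji}]$ force it to be $a\mapsto\tau_{0}\big(\sum_{i}a_{ii}\big)$ for a trace $\tau_{0}$ on $\Psi DO^{-\N}(U)$. Moreover $\tau$ kills $\Psi DO^{-\infty}$: a smoothing operator $Op(\sigma)$ with $\sigma$ rapidly decreasing in $\xi$ can be written $\sigma=\sum_{k}\pa_{\xi_{k}}b_{k}+\pa_{x_{1}}c_{1}$ with $b_{k}$ rapidly decreasing and $c_{1}$ smooth, after subtracting fixed Schwartz bumps, so $Op(\sigma)$ is a finite sum of commutators $[M_{x_{k}},Op(b_{k})]$ and $[D_{1},Op(c_{1})]$ of the above type (this needs only $d\geq1$). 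Hence $\tau$ descends to a functional on classical symbols of negative integer order on $U\times\R^{d}$, modulo smoothing symbols and modulo the symbol operations induced by commutators.

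The core is then a symbol computation. By the composition formula one has the exact identities $Op(\pa_{\xi_{j}}b)=-i\,[M_{x_{j}},Op(b)]$ and $Op(\pa_{x_{j}}c)=i\,[D_{j},Op(c)]$, where $M_{x_{j}}\in\Psi DO^{0}$ is multiplication by $x_{j}$ and $D_{j}=-i\pa_{x_{j}}\in\Psi DO^{1}$; after replacing $x_{j}$ by a cutoff equal to $1$ near the relevant support, these are commutators $[A,P]$ with $A\in\Psi DO^{\Z}$, $P\in\Psi DO^{-\N}$, which $\tau$ annihilates. It therefore suffices to prove: a classical symbol $\sigma$ of negative integer order with $\int_{\mathbb{S}^{d-1}}\tr\,\sigma_{-d}(x,\xi)\,d\xi\equiv0$ is, modulo a smoothing symbol, a finite sum of $\pa_{\xi_{j}}$-- and $\pa_{x_{j}}$--derivatives of classical symbols. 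For a homogeneous component $\sigma_{m}$ of degree $m\neq-d$, Euler's identity gives $\sigma_{m}=\tfrac{1}{m+d}\sum_{j}\pa_{\xi_{j}}(\xi_{j}\sigma_{m})$, which is of the required form, and summing these over $m$ (the tail being of order $<-d$, hence again in the range of the Euler operation modulo smoothing) handles every component except degree $-d$. For that critical component write $\sigma_{-d}(x,\xi)=|\xi|^{-d}h(x,\xi/|\xi|)$; the hypothesis says $\int_{\mathbb{S}^{d-1}}\tr\,h(x,\cdot)=0$, and one must exhibit $h$ as the angular part of a divergence. This is exactly where $d\geq2$ enters: on the connected sphere $\mathbb{S}^{d-1}$ the mean-zero functions are precisely the divergences of smooth tangent vector fields, equivalently the image of the spherical Laplacian, so $h$ is a sum of spherical derivatives and $\sigma_{-d}$ is absorbed into the $\pa_{\xi_{j}},\pa_{x_{j}}$ terms.

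Finally I would globalise. By Theorem \ref{kernelbehavior} the density $c_{P}(x)\,|dx|$ is intrinsic, and the two previous steps show that on operators whose Schwartz kernel is supported in a single chart $U_{i}$ one has $\tau(P)=\lambda_{i}\int_{U_{i}}c_{P}(x)\,|dx|$ for a constant $\lambda_{i}$. Comparing on an overlap $U_{i}\cap U_{j}$ (nonempty along a connecting chain because $M$ is connected) forces all the $\lambda_{i}$ to agree with a common value $\lambda$; summing against a partition of unity then gives $\tau(P)=\lambda\int_{M}c_{P}(x)\,|dx|=\lambda\,\mathrm{WRes}(P)$ for all $P\in\Psi DO^{-\N}(M,E)$. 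I expect the main obstacle to be the degree-$(-d)$ step: controlling that single homogeneous component and establishing the spherical-divergence lemma (where $d\geq2$ is indispensable, $d=1$ genuinely producing extra traces), together with the bookkeeping needed to pass from this local, homogeneous statement back to full symbols modulo smoothing operators and to operators modulo commutators.
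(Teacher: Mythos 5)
Your proposal follows essentially the same route as the paper's proof: realizing $\partial_{\xi_j}$ and $\partial_{x^j}$ as commutators with $x^j$ and $D_j$, using Euler's identity to dispose of every homogeneous component except the critical degree $-d$, invoking the surjectivity of the spherical Laplacian onto mean-zero functions on $\mathbb{S}^{d-1}$ (the only place $d\geq 2$ enters) for that critical component, and then characterizing the resulting functional on $C_c^\infty(U)$ as a multiple of the Lebesgue integral and patching over a connected $M$. The additional care you take (matrix-unit reduction, the extension of the trace property to commutators with $\Psi DO^{\Z}$, and the smoothing remainders) addresses points the paper passes over in silence; the only caveat is that your claim that every smoothing operator is a finite sum of the stated commutators does not quite close as written --- the residual term $\rho(x)\phi(\xi)$ with $\int_M\rho\,dvol\neq 0$, i.e.\ a smoothing operator with nonzero operator trace, is not a divergence in $x$ and requires a separate argument --- but the paper's own proof has exactly the same lacuna.
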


\begin{proof}
The restriction to $d\geq 2$ is used only in the part {\it 3)} below. When $d=1$, $T^*M$ is disconnected and they 
are two residues.

{\it 1) On symbols, derivatives are commutators:}
\begin{align*}
[x^j,\sigma]=i\partial_{\xi_j} \sigma, \qquad [\xi_j,\sigma]=-i\partial_{x^j}\sigma.
\end{align*}

{\it 2) If $\sigma_{-d}^P=0$, then $\sigma^P(x,\xi)$ is a finite sum of commutators of symbols:}
\\
When $\sigma^P \simeq \sum_j \,\sigma^P_{m-j}$ with $m=order(P)$, by Euler's theorem, 
$$
\sum_{k=1}^d \xi_k \,\partial_{\xi_k} \,\sigma_{m-j}^P=(m-j)\,\sigma_{m-j}^P
$$ 
(this is false for $m=j$!) and
\begin{align*}
\sum_{k=1}^d \, [x^k,\,\xi_k \, \sigma_{m-j}^P]=i\sum_{k=1}^d\partial_{\xi_k} \xi_k \,\sigma_{m-j}^P
=i(m-j+d)\,\sigma_{m-j}^P.
\end{align*}
So $\sigma^P=\sum_{k=1}^d\,[\xi_k \,\tau,x^k]$ (where $\tau \simeq i\sum_{j\geq 0} \tfrac{1}{m-j+d}\,\sigma^P_{m-j}$ 
and here we need for $m-j=-d$ that $\sigma^P_d=0$!).

Let $T$ be another trace on $\Psi DO^\Z(M,E)$. Then $T(P)$ depends only on $\sigma^P_{-d}$ because  
$T([\cdot,\cdot])=0$.

{\it 3) We have $\int_{\mathbb{S}^{d-1}} \sigma_{-d}^P(x,\xi)\, d\vert \xi \vert =0$ if and only if $\sigma_{-d}^P$ is 
sum of derivatives:}
\\
The if part is direct (less than more !).
\\
Only if part: $\sigma_{-d}^P$ is orthogonal to constant functions on the sphere $\mathbb{S}^{d-1}$ and these are 
kernels of the Laplacian: $\Delta_{\mathbb{S}}f=0 \Longleftrightarrow df=0 \Longleftrightarrow f=cst$. Thus 
$\Delta_{\mathbb{S}^{d-1}}h={\sigma_{-d}^P}{}_{\upharpoonright \mathbb{S}^{d-1}}$ has a solution $h$ on 
$\mathbb{S}^{d-1}$. If $\tilde h(\xi) \vcentcolon = \vert \xi \vert^{-d+2}\, h\big(\tfrac{\xi}{\vert \xi \vert}\big)$ is its 
extension to $\R^d\backslash \set{0}$, then we get  
$\Delta_{\R^d} \tilde h(\xi)= \vert \xi \vert \sigma_{-d}^P \big(\tfrac{\xi}{\vert \xi \vert}\big)= \sigma_{-d}^P(\xi)$ because 
$\Delta_{\R^d}=r^{1-d}\,\partial_r\big(r^{d-1}\,\partial_r)+r^{-2}\,\Delta_{\mathbb{S}^{d-1}}$. This means that $\tilde h$ 
is a symbol of order $d-2$ and $\partial_\xi \tilde h$ is a symbol of order $d-1$. As a consequence,  
$\sigma_{-d}^P=\sum_{k=1}^d \partial^2_{\xi_k} \tilde h=-i \sum_{k=1}^d [\partial_{\xi_k} \tilde h,\,x^k]$ is a sum of 
commutators.

{\it 4) End of proof:}
\\
$\sigma_{-d}^P(x,\xi)-\tfrac{\vert \xi \vert^{-d}}{\Vol (\mathbb{S}^{d-1})}\,c_P(x)$ is a symbol of order $-d$ with zero 
integral, thus is a sum of commutators by {\it 3)} and $T(P)=T\big(Op(\vert \xi \vert ^{-d}\,c_p(x)\big)$ for all 
$T\in \Psi DO^\Z(M,E)$. In other words, the map $\mu:\,f\in C^\infty_c(U) \to T\big( Op(f\vert \xi \vert^{-d})\big)$ is 
linear, continuous and satisfies $\mu(\partial_{x^k}f)=0$ because $\partial_{x^k}(f) \,\vert \xi \vert^{-d}$ is a 
commutator if $f$ has a compact support and $U$ is homeomorphic to $\R^d$. As a consequence, $\mu$ is a 
multiple of the Lebesgue integral:
\begin{align*}
T(P)=\mu\big(c_P(x)\big)=c\, \int_Mc_P(x)\,\vert dx\vert=c\,{\it WRes}(P).
\end{align*}
\end{proof}

\begin{example} 
\label{WResLaplacian}
Laplacian on a manifold $M$: 
Let $M$ be a compact Riemannian manifold of dimension $d$ and $\Delta$ be the scalar Laplacian which is a 
differential operator of order 2. Then 
\begin{align*}
\text{WRes}\big((1+\Delta)^{-d/2}\big)=\Vol \big(\mathbb{S}^{d-1}\big)=\tfrac{2 \pi^{d/2}}{\Gamma(d/2)}\, .
\end{align*}

\begin{proof}
$(1+\Delta)^{-d/2} \in \Psi DO(M)$ has order $-d$ and its principal symbol $\sigma^{(1+\Delta)^{-d/2} }_{-d}$ satisfies 
\begin{align*}
\sigma^{(1+\Delta)^{-d/2} }_{-d}(x,\xi)=\big(g_x^{ij}\, \xi_i\xi_j\big)^{-d/2}=\vert \vert \xi\vert \vert^{-d}_x.
\end{align*}
So \eqref{Wresint} gives
\begin{align*}
{\it WRes}\big((1+\Delta)^{-d/2}\big)&=\int_M \vert dx \vert\, \int_{\mathbb{S}^{d-1}} \vert \vert \xi\vert \vert^{-d}_x \, d\xi
=\int_M \vert dx \vert \sqrt{det\,g_x}\,\Vol \big(\mathbb{S}^{d-1}\big)\\
&=\Vol \big(\mathbb{S}^{d-1}\big)\,\int_M \vert dvol_g \vert=\Vol \big(\mathbb{S}^{d-1}\big).
\end{align*}
\end{proof}
\end{example}

\newpage
\section{Dixmier trace}
\label{Dixmiertrace}

References for this section:  \cite{Landi, Ponge1,CM2,Polaris, Var, VarillyS}.
\medskip

The trace on the operators on a Hilbert space $\H$ has an interesting property, it is {\it normal}.
\\
Recall first that $\Tr$ acting on $\B(\H)$ is a particular case of a weight $\omega$ acting on a von Neumann 
algebra $\M$: it is a homogeneous additive map from positive elements 
$\M^+\vc\set{aa^* \,\,\vert \,\, a\in \M}$ to $[0,\infty]$.\\
A state is a weight $\omega\in \M^*$ (so $\omega(a)<\infty, \,\forall a \in \M$) such that $\omega(1)=1$.\\
A trace is a weight such that $\omega(aa^*)=\omega(a^*a)$ for all $a\in \M$.

\begin{definition}
A weight $\omega$ is normal if $\,\omega(\underset{\a}{\sup} \, \,a_\a) = \underset{\a}{\sup}\, \,\omega(a_\a)$ 
whenever $(a_\a)\subset \M^+$ is an increasing bounded net.
\end{definition}
This is equivalent to say that $\omega$ is lower semi-continuous with respect to the $\sigma$-weak topology.

\begin{lemma} 
The usual trace $\Tr$ is normal on $\B(\H)$.
\end{lemma}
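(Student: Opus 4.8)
The plan is to prove normality of the usual trace $\Tr$ on $\B(\H)^+$ directly from its series definition in an orthonormal basis, $\Tr(a)=\sum_{n}\langle e_n, a\, e_n\rangle$ for $a\ge 0$, together with the monotone convergence theorem for sums of nonnegative numbers. First I would fix an increasing bounded net $(a_\a)\subset\B(\H)^+$ with supremum $a=\sup_\a a_\a$ (the supremum exists in $\B(\H)^+$ since the net is bounded above), and recall that $a_\a\uparrow a$ in the strong operator topology, so in particular $\langle e_n, a_\a e_n\rangle \uparrow \langle e_n, a\, e_n\rangle$ for each fixed $n$, each term being a bounded increasing net of nonnegative reals.

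The inequality $\sup_\a \Tr(a_\a)\le \Tr(a)$ is immediate from monotonicity of $\Tr$ on positive operators (if $0\le b\le c$ then $\Tr(b)\le\Tr(c)$, term by term in the basis sum). For the reverse inequality, I would argue: for any finite $N$,
\begin{align*}
\sum_{n=1}^{N}\langle e_n, a\, e_n\rangle = \sup_\a \sum_{n=1}^{N}\langle e_n, a_\a e_n\rangle \le \sup_\a \Tr(a_\a),
\end{align*}
where the first equality uses that a finite sum of increasing nets converges to the sum of the suprema (here one uses that the net is \emph{directed}, so finitely many of the componentwise suprema are attained along a common cofinal subnet, or simply that $\sum_{n=1}^N \sup_\a x_n^\a = \sup_\a \sum_{n=1}^N x_n^\a$ for increasing nets $x_n^\a\ge 0$). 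Letting $N\to\infty$ gives $\Tr(a)\le \sup_\a\Tr(a_\a)$, hence equality.

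The only mild subtlety — and the place I would be careful — is the interchange of the net supremum with the finite sum: this needs that the indexing set is directed (which it is, being a net) so that the finitely many near-optimal indices for $n=1,\dots,N$ can be dominated by a single index. The passage from finite sums to the infinite sum is then just monotone convergence and needs no topology beyond order completeness of $[0,\infty]$. I would also remark that this argument is exactly the statement that $\Tr$ is lower semicontinuous for the $\sigma$-weak topology, matching the equivalent formulation noted after the definition, since each $a\mapsto\langle e_n,a\,e_n\rangle$ is $\sigma$-weakly continuous and a supremum of such is lower semicontinuous.
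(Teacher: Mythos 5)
Your proof is correct, and it is the standard argument: reduce to the basis expansion $\Tr(a)=\sum_n\langle e_n,a\,e_n\rangle$ on positives, use strong convergence of a bounded increasing net to its supremum, and justify swapping a finite sum with the supremum over the directed index set before passing to the infinite sum by monotone convergence. The paper states this lemma without proof, so there is nothing to compare against; the one point you flag as subtle (the interchange, using directedness) is indeed the only place requiring care, and you handle it correctly.
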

\noindent Remark that the net $( a_\a)_\a$ converges in $\B(\H)$ and this property looks innocent since a 
trace preserves positivity.  
\\
Nevertheless it is natural to address the question: are all traces (in particular on 
an arbitrary von Neumann algebra) normal? In 1966, Dixmier answered by the negative \cite{Dixmier} by exhibiting 
non-normal, say singular, traces. Actually, his motivation was to answer the following related question: is any trace 
$\omega$ on $\B(\H)$ proportional to the usual trace on the set where $\omega$ is finite?

\medskip
The aim of this section is first to define this Dixmier trace, which essentially means 
 $\Tr_{Dix}(T) \,``="\,\lim_{N\to \infty} \tfrac{1}{\log N} \sum_{n=0}^N \mu_n(T)$, where the $\mu_n(T)$ are the 
singular values of $T$ ordered in decreasing order and then to relate this to the Wodzicki trace. 
It is a non-normal trace on some set that we have to identify. 
Naturally, the reader can feel the link with the Wodzicki trace via Proposition \ref{propasymp}. 
We will see that if $P\in \Psi DO^{-d}(M)$ where $M$ is a compact Riemannian manifold of dimension $d$, then,
\begin{align*}
\Tr_{Dix}(P)=\tfrac{1}{d} \,{\it WRes} (P)=\tfrac{1}{d}\int_M \int_{S^*M} \sigma_{-d}^P(x,\xi)\, d\xi \vert dx \vert
\end{align*}
where $S^*M$ is the cosphere bundle on $M$.

The physical motivation is quite essential: we know how $\sum_{n\in \N^*} \tfrac{1}{n}$ diverges and this is related 
to the fact the electromagnetic or Newton gravitational potentials are in $\tfrac{1}{r}$ which has the same singularity 
(in one-dimension as previous series). Actually, this (logarithmic-type) divergence appears everywhere in 
physics and explains the widely use of the Riemann zeta function $\zeta: s \in \C \to \sum_{n\in \N^*} \tfrac{1}{n^s}$. 
This is also why we have already seen a logarithmic obstruction in Theorem \ref{kernelbehavior} and define a zeta 
function associated to a pseudodifferential operator $P$ by $\zeta_P(s)=\Tr\big(P\vert \DD \vert^{-s}\big)$ in 
\eqref{Wresdef}.

We now have a quick review on the main properties of singular values of an operator.

\subsection{Singular values of compact operators}

In noncommutative geometry, infinitesimals correspond to compact operators: for $T \in \K(\H)$ (compact 
operators), define for $n\in \N$ 
$$
\mu_n(T)\vcentcolon = 
\inf \set{ \lVert T_{\upharpoonright E^\perp} \rVert  \, \vert \,E \text{ subspace of }\H \text{ with dim}(E)=n}.
$$
This could looks strange but actually, by mini-max principle, $\mu_n(T)$ is nothing else than the 
$(n+1)$th of eigenvalues of $\vert T \vert$ sorted in decreasing order. Since $\lim_{n\to \infty} \mu_n(T) =0$, for any 
$\epsilon>0$, there exists a finite-dimensional subspace $E_\epsilon$ such that 
$\norm{T_{\upharpoonright E_\epsilon^\perp}}<\epsilon$ and this property being equivalent to $T$ compact, $T$ 
deserves the name of infinitesimal.

Moreover, we have following properties:

\qquad $\mu_n(T)=\mu_n(T^*)=\mu_n(\vert T \vert)$.

\qquad $T\in \L^1(\H)$ (meaning $\lVert T \rVert_1 \vcentcolon =\Tr(\vert T \vert)<\infty$) 
$\Longleftrightarrow \sum_{n\in \N} \mu_n(T) < \infty$.

\qquad $\mu_n(A\,\,T\,B) \leq \lVert A \rVert \, \mu_n(T)\, \lVert B \rVert$ when $A,B \in \B(\H)$.

\qquad $\mu_N(U\, T \,U^*)=\mu_N(T)$ when $U$ is a unitary.

\begin{definition}
For $T\in \K(\H)$, the partial trace of order $N\in \N$ is $\sigma_N(T) \vcentcolon=\sum_{n=0}^N \mu_n(T)$.
\end{definition}
\noindent Remark that $\lVert T \rVert \leq \sigma_N(T) \leq N \lVert T \rVert$ which implies 
$\sigma_n \simeq \lVert \cdot \rVert$ on $\K(\H)$. Then 
\begin{align}
&\sigma_N(T_1+T_2) \leq\sigma_N(T_1)+ \sigma_N(T_2), \nonumber\\
&\sigma_{N_1}(T_1)+\sigma_{N_2}(T_2) \leq \sigma_{N_1+N_2}(T_1+T_2) \, \text{ when } T_1,T_2 \geq 0.
\label{sumsigma}
\end{align}
The proof of the sub-additivity is based on the fact that $\sigma_N$ is a norm on $\K(\H)$. Moreover
$$
T\geq 0 \Longrightarrow \sigma_N(T) =\sup \set{\Tr(T\,E)\,\, \vert \,\, E \text{ subspace of } \H \text{ with }\dim(E)=n}.
$$
which implies $\sigma_N(T) =\sup \set{\Tr(\lVert T\,E\rVert_1 \,\, \vert  \, \dim(E)=n}$ and gives the second inequality.
\\
The norm $\sigma_N$ can be decomposed:
$$
\sigma_N(T)=\inf \set{ \lVert x \rVert_1+N\norm{y} \, \vert \, \,T=x+y \text{ with }\,x \in \L^1(\H),\, y \in \K(\H)}.
$$
In fact if $ \tilde \sigma_N$ is the right hand-side, then the sub-additivity gives 
$\tilde \sigma_N \geq \sigma_N(T)$. To get equality, let $\xi_n\in \H$ be such that 
$\vert T \vert \xi_n =\mu_n(T) \xi_n$ and define $x_N\vcentcolon=\big(\vert T \vert-\mu_N(T)\big) E_N$, 
$y_N\vcentcolon=\mu_N(T) E_N+ \vert T \vert (1-E_N)$ where 
$E_N\vcentcolon=\sum_{n<N} \vert \xi_n\rangle \langle \xi_n\vert$. If $T=U\vert T \vert$ is the polar decomposition of 
$T$, then $T=Ux_N+Uy_N$ is a claimed decomposition of $T$ and 
$$
\tilde \sigma_N(T) \leq \lVert Ux_N \lVert_1 +N \lVert U y_N \rVert \leq \lVert x_N \rVert_1+N \lVert y_N \rVert \leq
\sum_{n<N} \big(\mu_n(T)-N\mu_n(T)\big)+N\mu_n(T) \leq \sigma_N(T).
$$
This justifies a continuous approach with the

\begin{definition}
The partial trace of $T$ of order $\lambda \in \R^+$ is 
$$
\sigma_\lambda(T) \vcentcolon = 
\inf \set{ \lVert x \rVert_1+\lambda \lVert y \rVert \, \vert \, T=x+y \text{ with }\,x \in \L^1(\H),\, y \in \K(\H)}.
$$
\end{definition}

It interpolates between  two consecutive integers since the map: $\lambda \to \sigma_\lambda(T)$ is concave for 
$T \in \K(\H)$ and moreover, it is affine between $N$ and $N+1$ because
\begin{align}
\label{sigin}
\sigma_\lambda(T)=\sigma_N(T)+(\lambda-N)\sigma_N(T), \text{ where } N=[\lambda].
\end{align}
Thus, as before, 
$$
\sigma_{\lambda_1}(T_1)+\sigma_{\lambda_2}(T_2)=
\sigma_{\lambda_1+\lambda_2}(T_1+T_2),  \text{ for } \,\lambda_1,\lambda_2\in \R^+,\,0\leq T_1,T_2 \in \K(\H).
$$
We define a real interpolate space between $\L^1(\H)$ and $\K(\H)$ by
$$
\L^{1,\infty} \vcentcolon = \set{T \in \K(\H) \,\, \vert \,\, \lVert T \rVert_{1,\infty} \vcentcolon= \sup_{\lambda\geq e} 
\tfrac{\sigma_\lambda(T)}{\log \,\lambda}< \infty}.
$$
If $\L^p(\H)$ is the ideal of operators $T$ such that 
$\Tr \big(|T|^p\big) < \infty$, so $\sigma_\lambda(T ) = \mathcal{O}(\lambda^{1-1/p})$, we have naturally 
\begin{align}
&\L^{1}(\H)\subset \L^{1,\infty} \subset \L^p(\H)\, \text{ for }p>1,\label{notsep}\\
&\norm{T}\leq \norm{T}_{1,\infty} \leq \norm{T}_1.\nonumber
\end{align}

\begin{lemma} 
$\L^{1,\infty} $ is a $C^*$-ideal of $\B(\H)$ for the norm $\lVert \cdot \rVert_{1,\infty}$.\\
Moreover, it is equal to the Macaev ideal
\begin{align*}
\L^{1,+} \vcentcolon = \set{T \in \K(\H) \,\, \vert \, \, \lVert T \rVert_{1,+} \vcentcolon 
= \sup_{N\geq 2} \tfrac{\sigma_N(T)}{\log(N)} < \infty}.
\end{align*}
\end{lemma}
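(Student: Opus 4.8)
The plan is to prove the two assertions of the lemma in turn: first that $\L^{1,\infty}$ is a $C^*$-ideal of $\B(\H)$ for $\norm{\cdot}_{1,\infty}$, then that it coincides with the Macaev ideal $\L^{1,+}$. For the ideal property, I would start from the observation that $\norm{\cdot}_{1,\infty}$ is a norm: positive homogeneity is clear, and the triangle inequality follows from the subadditivity $\sigma_\lambda(T_1+T_2)\leq\sigma_\lambda(T_1)+\sigma_\lambda(T_2)$ recorded in \eqref{sumsigma} (extended from integer $\lambda$ to real $\lambda$ via the affine interpolation \eqref{sigin}), since dividing by $\log\lambda$ and taking the supremum over $\lambda\geq e$ preserves the inequality. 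Nondegeneracy is immediate from $\norm{T}\leq\norm{T}_{1,\infty}$. The two-sided ideal property is inherited from the corresponding estimate for $\mu_n$: since $\mu_n(ATB)\leq\norm{A}\,\mu_n(T)\,\norm{B}$, one gets $\sigma_N(ATB)\leq\norm{A}\,\norm{B}\,\sigma_N(T)$ for all $N$, hence the same for $\sigma_\lambda$ by interpolation, so $\norm{ATB}_{1,\infty}\leq\norm{A}\,\norm{B}\,\norm{T}_{1,\infty}$ for $A,B\in\B(\H)$; this simultaneously shows $\L^{1,\infty}$ is closed under left and right multiplication by bounded operators and that the norm is a Banach-algebra-type norm making it a $C^*$-ideal in the usual sense (an ideal of $\B(\H)$ complete for its own norm; one also checks $\norm{T^*}_{1,\infty}=\norm{T}_{1,\infty}$ using $\mu_n(T^*)=\mu_n(T)$).

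The remaining point for part one is completeness of $(\L^{1,\infty},\norm{\cdot}_{1,\infty})$. I would take a Cauchy sequence $(T_k)$; since $\norm{\cdot}\leq\norm{\cdot}_{1,\infty}$ it converges in operator norm to some $T\in\K(\H)$, and I would then show $T\in\L^{1,\infty}$ with $\norm{T-T_k}_{1,\infty}\to 0$ by a standard argument: for fixed $\lambda$, $\sigma_\lambda$ is norm-continuous (indeed $\abs{\sigma_\lambda(S)-\sigma_\lambda(S')}\leq\lambda\norm{S-S'}$ from the decomposition defining $\sigma_\lambda$), so $\sigma_\lambda(T-T_k)\leq\liminf_{m}\sigma_\lambda(T_m-T_k)\leq\log\lambda\cdot\limsup_m\norm{T_m-T_k}_{1,\infty}$, and the Cauchy condition lets one push this uniformly in $\lambda$.

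For the identification $\L^{1,\infty}=\L^{1,+}$, the key is to compare $\sigma_\lambda(T)$ with $\sigma_N(T)$ for $N=[\lambda]$. By the explicit interpolation formula \eqref{sigin}, $\sigma_N(T)\leq\sigma_\lambda(T)\leq\sigma_{N+1}(T)$ for $N\leq\lambda\leq N+1$ (using monotonicity of $\lambda\mapsto\sigma_\lambda$ for $T\geq 0$, reducing to general $T$ via $\sigma_\lambda(T)=\sigma_\lambda(\abs{T})$). Combined with $\log N\leq\log\lambda\leq\log(N+1)$ and the elementary fact that $\log(N+1)/\log N$ is bounded for $N\geq 2$, one gets two-sided comparisons $c^{-1}\,\frac{\sigma_N(T)}{\log N}\leq\sup_{N\leq\lambda\leq N+1}\frac{\sigma_\lambda(T)}{\log\lambda}\leq c\,\frac{\sigma_{N+1}(T)}{\log(N+1)}$ with a universal constant $c$; taking suprema over $N\geq 2$ (resp. $\lambda\geq e$) shows the two seminorms are equivalent, hence the two spaces coincide as sets. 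I do not expect the statement to claim equality of the norms, only of the ideals, so equivalence suffices.

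The main obstacle, and the place I would be most careful, is the completeness step: unlike the ideal property, which is a direct transcription of the $\mu_n$-estimates, completeness requires the lower-semicontinuity-type argument $\sigma_\lambda(T-T_k)\leq\liminf_m\sigma_\lambda(T_m-T_k)$, and one must be careful that the bound obtained is uniform in $\lambda$ so that dividing by $\log\lambda$ and taking the supremum is legitimate — this is exactly where the Cauchy hypothesis in $\norm{\cdot}_{1,\infty}$ (rather than just in operator norm) is used. Everything else reduces to the monotonicity, subadditivity, and interpolation properties of $\sigma_\lambda$ already established in the text, plus the crude estimate $\log(N+1)\leq 2\log N$ for $N\geq 2$.
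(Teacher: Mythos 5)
Your proof is correct and follows essentially the same route as the paper's (very terse) argument: the norm property as a supremum of the norms $\sigma_\lambda(\cdot)/\log\lambda$, the ideal property transcribed from $\mu_n(ATB)\leq\norm{A}\,\mu_n(T)\,\norm{B}$, completeness from the equivalence of each $\sigma_\lambda$ with the operator norm plus a Cauchy/limiting argument, and the identification with $\L^{1,+}$ by comparing $\sigma_\lambda$ at real $\lambda$ with $\sigma_N$ at $N=[\lambda]$ via the affine interpolation. You are also right that only equivalence (not equality) of the two norms is needed; the paper makes exactly this point in the remark following the lemma.
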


\begin{proof}
$\lVert \cdot \rVert_{1,\infty}$ is a norm as supremum of norms. By \eqref{sigin},
$$
\sup_{\rho\geq e} \tfrac{\sigma_\rho(T)}{\log \rho} \leq \sup_{N\geq 2} \, \sup_{0 \leq \a \leq 1}
\frac{\sum_{n=0}^{N-1} \mu_N(T) +\a\mu_N(T)}{\log (N+\a)}
$$
and $\L^{1,+\infty}$ is a left and right ideal of $\B(\H)$ since 
$\lVert A \,T \,B \rVert_{1,\infty} \leq \lVert A \rVert\, \lVert T \rVert_{1,\infty} \lVert B \rVert$ for every $A,B \in \B(\H)$,  
$T\in \L^{1,\infty}$, and moreover 
$ \lVert T \rVert_{1,\infty}= \lVert T^* \rVert_{1,\infty}= \lVert \,\vert T\vert \, \rVert_{1,\infty}$.
\\
This ideal $\L^{1,\infty}$ is closed for $ \lVert \cdot \rVert_{1,\infty}$: this follows from a 3-$\epsilon$ argument since 
Cauchy sequences for $ \lVert \cdot \rVert_{1,\infty}$ are Cauchy sequences for each norm $\sigma_\lambda$ 
which are equivalent to $\lVert \cdot \rVert$.
\end{proof}
Despite this result, the reader should notice that $\lVert \cdot \rVert_{1,\infty} \neq \lVert \cdot \rVert_{1,+}$ since 
the norms are only equivalent.

\subsection{Dixmier trace}

We begin with a Ces\`aro mean of $\tfrac{\sigma_\rho(T)}{\log \, \rho}$ with respect of the Haar measure of the 
group $\R^*_+$:
\begin{definition}
For $\lambda \geq e$ and $T\in \K(\H)$, let
$$
\tau_\lambda(T) \vcentcolon= \tfrac{1}{\log \, \lambda} \, \int_e^\lambda \tfrac{\sigma_\rho(T)}{\log \, \rho} \, 
\tfrac{d\rho}{\rho} \,.
$$
\end{definition}
Clearly, $\sigma_\rho(T) \leq \log \rho\,\lVert T \rVert_{1,\infty}$ and $\tau_\lambda(T) \leq \lVert T \rVert_{1,\infty}$, 
thus the map: $\lambda \to \tau_\lambda(T)$  is in $C_b([e,\infty])$. It  is not additive on $\L^{1,\infty}$ but this 
defect is under control: 
$$
\tau_\lambda(T_1+T_2)-\tau_\lambda(T_1)-\tau_\lambda(T_2) \underset{\lambda \to \infty}{\simeq}\, 
\mathcal{O}\big(\tfrac{\log\, (\log \, \lambda)}{\log \, \lambda} \big) ,\, \text{ when } \, 0\leq T_1,T_2 \in \L^{1,\infty}.
$$
More precisely, using previous results, one get

\begin{lemma}
\begin{align*}
\vert \,\tau_\lambda(T_1+T_2)-\tau_\lambda(T_1)-\tau_\lambda(T_2) \, \vert \leq \, 
\big(\tfrac{\log \,2(2+\log\, \log \, \lambda)}{\log \, \lambda} \big) \,\lVert T_1+T_2 \rVert_{1,\infty} ,
\, \text{ when } \,  T_1,T_2 \in \L^{1,\infty}_+.
\end{align*}
\end{lemma}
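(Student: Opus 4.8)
The plan is to control the failure of additivity of $\sigma_\rho$ and then average it out. Write $T:=T_1+T_2\ge 0$. Sub-additivity of $\sigma_\rho$ gives $\sigma_\rho(T)\le\sigma_\rho(T_1)+\sigma_\rho(T_2)$, and, for positive operators, super-additivity gives $\sigma_\rho(T_1)+\sigma_\rho(T_2)\le\sigma_{2\rho}(T)$ — both are in \eqref{sumsigma} for integer parameters and pass to real $\rho$ via the piecewise-linearity \eqref{sigin} and concavity of $\lambda\mapsto\sigma_\lambda$. Hence the pointwise squeeze
$$
0\le\sigma_\rho(T_1)+\sigma_\rho(T_2)-\sigma_\rho(T)\le\sigma_{2\rho}(T)-\sigma_\rho(T),\qquad\rho\ge e,
$$
which, inserted into the definition of $\tau_\lambda$, yields
$$
0\le\tau_\lambda(T_1)+\tau_\lambda(T_2)-\tau_\lambda(T)\le\frac{1}{\log\lambda}\int_e^\lambda\frac{\sigma_{2\rho}(T)-\sigma_\rho(T)}{\log\rho}\,\frac{d\rho}{\rho}.
$$
So the absolute value on the left is automatic, and the whole lemma reduces to bounding this last integral by $\lVert T\rVert_{1,\infty}\,\log 2\,(2+\log\log\lambda)$.

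For that, I would substitute $u=2\rho$ in the part carrying $\sigma_{2\rho}$: this turns $\tfrac{d\rho}{\rho}$ into $\tfrac{du}{u}$, $\log\rho$ into $\log u-\log 2$, and the limits $[e,\lambda]$ into $[2e,2\lambda]$. Recombining with the unchanged $\sigma_\rho$-integral and splitting $[2e,2\lambda]=[2e,\lambda]\cup[\lambda,2\lambda]$ gives (i) a main term $\int_{2e}^{\lambda}\big(\tfrac{1}{\log u-\log 2}-\tfrac{1}{\log u}\big)\sigma_u(T)\,\tfrac{du}{u}=\log 2\int_{2e}^{\lambda}\tfrac{\sigma_u(T)}{u(\log u-\log 2)\log u}\,du$, (ii) a boundary term $\int_{\lambda}^{2\lambda}\tfrac{\sigma_u(T)}{u(\log u-\log 2)}\,du$, and (iii) a negative term over $[e,2e]$ that is discarded. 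On each piece I use the defining bound $\sigma_u(T)\le\lVert T\rVert_{1,\infty}\log u$. The main term then becomes $\le\lVert T\rVert_{1,\infty}\log 2\int_{2e}^{\lambda}\tfrac{du}{u(\log u-\log 2)}=\lVert T\rVert_{1,\infty}\log 2\,\log(\log\lambda-\log 2)\le\lVert T\rVert_{1,\infty}\log 2\,\log\log\lambda$, via $\int\tfrac{d(\log u)}{\log u-\log 2}=\log(\log u-\log 2)$; this is where $\log\log\lambda$ is produced and where the leading $\log\lambda$ behaviours of the two integrals cancel. The boundary term is $\le\lVert T\rVert_{1,\infty}\int_{\lambda}^{2\lambda}\tfrac{\log u}{u(\log u-\log 2)}\,du$, and since $\tfrac{\log u}{\log u-\log 2}=1+\tfrac{\log 2}{\log u-\log 2}\le 2$ once $\log\lambda$ is not too small, this is $\le 2\lVert T\rVert_{1,\infty}\log 2$. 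Summing and dividing by $\log\lambda$ gives exactly $\tfrac{\log 2\,(2+\log\log\lambda)}{\log\lambda}\lVert T\rVert_{1,\infty}$.

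The genuinely delicate point is keeping the substitution $u=2\rho$ honest: one must not discard the $-\sigma_\rho(T)$ term, since that would leave a defect only of order $\lVert T\rVert_{1,\infty}$ — useless, because the defect has to tend to $0$ — so the cancellation of the $\log\lambda$ parts of the two integrals must be made explicit, and the residual mismatch $\log(2\rho)=\log\rho+\log 2$ is precisely what manufactures the $\log\log\lambda$. One must also dispose of the range $e\le\lambda\le 4e$ separately: there the crude bound $\sigma_{2\rho}(T)-\sigma_\rho(T)\le\sigma_{2\rho}(T)\le\lVert T\rVert_{1,\infty}\log(2\rho)=\lVert T\rVert_{1,\infty}(\log\rho+\log 2)$ gives, after integration and division by $\log\lambda$, the bound $\tfrac{1}{\log\lambda}\big((\log\lambda-1)+\log 2\,\log\log\lambda\big)\lVert T\rVert_{1,\infty}$, which is $\le\tfrac{\log 2(2+\log\log\lambda)}{\log\lambda}\lVert T\rVert_{1,\infty}$ exactly because $\log\lambda-1\le 2\log 2$ for $\lambda\le 4e$; for $\lambda\ge 4e$ the substitution argument above applies (both the split $[2e,2\lambda]=[2e,\lambda]\cup[\lambda,2\lambda]$ and the estimate $\tfrac{\log u}{\log u-\log 2}\le 2$ are then legitimate). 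Everything else is routine input from the material above — the equivalence $\sigma_\lambda\simeq\lVert\cdot\rVert$ on $\K(\H)$, concavity and piecewise-linearity of $\lambda\mapsto\sigma_\lambda$, and the definition $\lVert T\rVert_{1,\infty}=\sup_{\lambda\ge e}\sigma_\lambda(T)/\log\lambda$.
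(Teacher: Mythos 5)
Your proof is correct and follows essentially the same route as the paper: the squeeze $0\le\sigma_\rho(T_1)+\sigma_\rho(T_2)-\sigma_\rho(T_1+T_2)\le\sigma_{2\rho}(T_1+T_2)-\sigma_\rho(T_1+T_2)$ from sub/super-additivity, the substitution $\rho\mapsto 2\rho$, and the split into a range-mismatch piece (worth $2\log 2$) and a $\log\rho$ versus $\log(\rho/2)$ piece (worth $\log 2\,\log\log\lambda$). Your separate treatment of $e\le\lambda\le 4e$ and the explicit justification of super-additivity for real $\rho$ via \eqref{sigin} and concavity are welcome refinements of details the paper glosses over, but they do not change the argument.
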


\begin{proof}
By the sub-additivity of $\sigma_\rho$, $\tau_\lambda(T_1+T2)\leq \tau_\lambda(T_1)+\tau_\lambda(T_2)$ 
and thanks to \eqref{sumsigma}, we get $\sigma_\rho(T_1)+\sigma_\rho(T_1)\leq \sigma_{2\rho}(T_1+T_2)$. Thus 
\begin{align*}
\tau_\lambda(T_1)+\tau_\lambda(T_2)\leq \tfrac{1}{\log\,\lambda}\int_e^\lambda 
\tfrac{\sigma_{2\rho}(T_1+T_2)}{\log \, \rho} \, \tfrac{d\rho}{\rho} \leq \tfrac{1}{\log\,\lambda}\int_{2e}^{2\lambda} 
\tfrac{\sigma_{\rho}(T_1+T_2)}{\log \, \rho/2} \, \tfrac{d\rho}{\rho} 
\end{align*}
Hence, 
$(\log\,\lambda)\, \vert \,\tau_\lambda(T_1+T_2)-\tau_\lambda(T_1)-\tau_\lambda(T_2) \, \vert \leq \epsilon+\epsilon'$ 
with
\begin{align*}
&\epsilon \vc \int_e^\lambda \tfrac{\sigma_{\rho}(T_1+T_2)}{\log \, \rho} \, \tfrac{d\rho}{\rho} 
-\int_{2e}^{2\lambda} \tfrac{\sigma_{\rho}(T_1+T_2)}{\log \, \rho/2} \, \tfrac{d\rho}{\rho} \,,\\
& \epsilon' \vc \int_{2e}^{2\lambda} \sigma_{\rho}(T_1+T_2) \big( \tfrac{1}{\log\,\rho/2}-\tfrac{1}{\log\,\rho} \big) 
\, \tfrac{d\rho}{\rho}\,.
\end{align*}
By triangular inequality and the fact that $\sigma_\rho(T_1+T_2) \leq \log \rho \,\norm{T_1+T_2}_{1,\infty}$ when 
$\rho\geq e$,
\begin{align*}
\epsilon &\leq  \int_e^{2e} \tfrac{\sigma_{\rho}(T_1+T_2)}{\log \, \rho} \, \tfrac{d\rho}{\rho} 
+\int_{\lambda}^{2\lambda} \tfrac{\sigma_{\rho}(T_1+T_2)}{\log \, \rho} \, \tfrac{d\rho}{\rho} \\
& \leq \norm{T_1+T_2}_{1,\infty} \big( \int_e^{2e}  \tfrac{d\rho}{\rho}+\int_\lambda^{2\lambda}  \tfrac{d\rho}{\rho} \big)
\leq 2 \log(2) \norm{T_1+T_2}_{1,\infty}.
\end{align*}
Moreover,
\begin{align*}
\epsilon' &\leq \norm{T_1+T_2}_{1,\infty} \int_{2e}^{2\lambda} \log\,\rho \,\big( \tfrac{1}{\log\,\rho/2}
-\tfrac{1}{\log\,\rho} \big) \, \tfrac{d\rho}{\rho}
\leq  \norm{T_1+T_2}_{1,\infty} \int_{2e}^{2\lambda} \tfrac{log\,2}{\log\,\rho/2}\, \tfrac{d\rho}{\rho}\\
&\leq  \norm{T_1+T_2}_{1,\infty}\log(2)\,\log \, (\log \lambda).
\end{align*}
\end{proof}

The Dixmier's idea was to force additivity: since the map $\lambda \to \tau_\lambda(T)$ is in  $C_b([e,\infty])$ and 
$\lambda \to\big(\tfrac{\log \,2(2+\log\, \log \, \lambda)}{\log \, \lambda} \big) $ is in $C_0([e,\infty[)$, let us consider 
the $C^*$-algebra 
$$
\A\vcentcolon =C_b([e,\infty]) /C_0([e,\infty[).
$$
If $[\tau(T)]$ is the class of the map $\lambda \to \tau_\lambda(T)$ in $\A$, previous lemma shows that 
$[\tau]:\,T\to[\tau(T)]$ is additive and positive homogeneous from $\L^{1,\infty}_+$ into 
$\A$ satisfying $[\tau(UTU^*)]=[\tau(T)]$ for any unitary $U$. \\
Now let $\omega$ be a state on $\A$, namely a positive linear form on $\A$ with $\omega(1)=1$. \\
Then, $\omega \circ [\tau(\cdot)]$ is a tracial weight on $\L^{1,\infty}_+$ (a map from $\L^{1,\infty}_+$ to 
$\R^+$ which is additive, homogeneous and invariant under $T\to UTU^*$). Since $\L^{1,\infty}$ is a $C^*$-ideal of 
$\B(\H)$, each of its element is generated by (at most) four positive elements, and this map can be extended to 
a map $\omega \circ [\tau(\cdot)]: T \in \L^{1,\infty} \to \omega([\tau(T)] )\in \C$ 
such that $\omega([\tau(T_1T_2)])=\omega([\tau(T_2T_1)])$ for $T_1,T_2 \in \L^{1,\infty}$. This leads to the 
following

\begin{definition}
The Dixmier trace $\Tr_\omega$ associated to a state $\omega$ on $\A\vcentcolon =C_b([e,\infty]) /C_0([e,\infty[)$ is
$$
\Tr_{\omega}(\cdot) \vcentcolon = \omega \circ [\tau(\cdot)].
$$
\end{definition}

\begin{theorem}
$\Tr_{\omega}$ is a trace on $\L^{1,\infty}$ which depends only on the locally convex topology of $\H$, not of its 
scalar product.
\end{theorem}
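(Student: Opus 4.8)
\emph{Proof proposal.} The plan is to peel the statement into two parts — that $\Trw$ is a trace, and that it does not see the scalar product beyond the topology it induces — and to reduce the second to a conjugation‑invariance consequence of the first. Linearity of $\Trw$ on $\L^{1,\infty}$ is already built into the construction preceding the statement (one extends $\omega\circ[\tau(\cdot)]$ off $\L^{1,\infty}_+$ by decomposing an arbitrary element into four positive ones, using additivity and positive homogeneity of $[\tau]$ on $\L^{1,\infty}_+$). For commutativity I would squeeze it out of the unitary invariance: since $[\tau(UXU^*)]=[\tau(X)]$ in $\A$ for every unitary $U$, one has $\Trw(UXU^*)=\Trw(X)$, and taking $X=TU$ — legitimate since $\L^{1,\infty}$ is a two‑sided ideal, so $TU\in\L^{1,\infty}$ — gives $\Trw(UT)=\Trw(TU)$ for all $T\in\L^{1,\infty}$. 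Every $S\in\B(\H)$ is a complex‑linear combination of at most four unitaries (write $S=H_1+iH_2$ with $H_j$ self‑adjoint and represent a self‑adjoint contraction $h$, after normalizing, as $\tfrac12(u+u^*)$ with $u=h+i(1-h^2)^{1/2}$ unitary), so linearity upgrades the previous identity to $\Trw(ST)=\Trw(TS)$ for every $S\in\B(\H)$ and $T\in\L^{1,\infty}$; in particular $\Trw$ is a trace on $\L^{1,\infty}$. I record the one consequence I actually need below: for any bounded $S$ with bounded inverse and any $T\in\L^{1,\infty}$, one has $STS^{-1}\in\L^{1,\infty}$ and $\Trw(STS^{-1})=\Trw(S^{-1}ST)=\Trw(T)$.

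For the topological invariance, let $\langle\cdot,\cdot\rangle$ be the given scalar product and $\langle\cdot,\cdot\rangle'$ any other one on $\H$ inducing the same locally convex topology, i.e.\ with an equivalent norm. By the Riesz representation theorem there is a bounded $A$ with $\langle x,y\rangle'=\langle Ax,y\rangle$; symmetry of $\langle\cdot,\cdot\rangle'$ forces $A=A^*$, positivity forces $A\ge 0$, and norm equivalence forces $A$ invertible with bounded inverse. I would then set $B:=A^{1/2}$ by functional calculus — still positive, bounded, boundedly invertible — and note that $\norm{Bx}^2=\langle Ax,x\rangle=\norm{x}'^2$, so that $B$ is a unitary from $\H':=(\H,\langle\cdot,\cdot\rangle')$ onto $\H=(\H,\langle\cdot,\cdot\rangle)$.

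Conjugation by this unitary is the whole point. For a linear operator $T$ on the underlying vector space, $T$ acting on $\H'$ is carried by $B$ to $BTB^{-1}$ acting on $\H$, and every ingredient of the Dixmier construction — the singular numbers $\mu_n$, the trace norm $\norm{\cdot}_1$, the operator norm, hence $\sigma_\rho$, $\sigma_\lambda$, $\tau_\lambda$, and finally the class $[\tau(\cdot)]$ in the $\lambda$‑independent algebra $\A$ — is a unitary invariant. Therefore $[\tau^{\H'}(T)]=[\tau^{\H}(BTB^{-1})]$. In particular the ideal $\L^{1,\infty}$ computed with $\langle\cdot,\cdot\rangle'$ equals $\{T:BTB^{-1}\in\L^{1,\infty}(\H)\}=\L^{1,\infty}(\H)$, the last equality because $B,B^{-1}$ are bounded and $\L^{1,\infty}(\H)$ is an ideal, so the very domain of $\Trw$ is topological. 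Applying the one fixed state $\omega$ to both sides and then the conjugation invariance of the first paragraph,
\[
\Trw^{\H'}(T)=\omega\big([\tau^{\H'}(T)]\big)=\omega\big([\tau^{\H}(BTB^{-1})]\big)=\Trw^{\H}(BTB^{-1})=\Trw^{\H}(T),
\]
which is the assertion, since for a (pre-)Hilbert space ``the same locally convex topology'' means precisely ``an equivalent scalar product''.

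The step I expect to be the real obstacle is exactly this last manoeuvre. The cheap argument — equivalent scalar products distort $\mu_n$, and hence $\sigma_\lambda$ and $\tau_\lambda$, only by bounded multiplicative factors — shows that $\L^{1,\infty}$ is the same ideal and that $\Trw^{\H'}$ and $\Trw^{\H}$ are comparable, but it cannot yield their equality, because those factors persist in $\tau_\lambda$ and survive the state $\omega$. One genuinely has to manufacture the intertwining unitary $B=A^{1/2}$ so that $\tau_\lambda$ is reproduced verbatim rather than up to a constant; with that in place, everything else is bookkeeping on facts already established in the excerpt.
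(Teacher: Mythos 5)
Your proposal is correct and follows essentially the same route as the paper: identify the new scalar product with $\langle U\cdot,U\cdot\rangle$ for a bounded invertible $U$ (your $B=A^{1/2}$), observe that all singular-value data of $T$ on $\H'$ coincide with that of $UTU^{-1}$ on $\H$ so the ideal and $\tau_\lambda$ are unchanged, and conclude by the conjugation invariance of $\Tr_\omega$. The only (welcome) extra detail is that you upgrade the trace property from products within $\L^{1,\infty}$ to $\Tr_\omega(ST)=\Tr_\omega(TS)$ for all bounded $S$ via the four-unitaries decomposition, which is precisely what the conjugation step needs since $U$ is not in the ideal.
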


\begin{proof}
We already know that $\Tr_\omega$ is a trace. \\
If $\langle \cdot, \cdot \rangle'$ is another scalar product on $\H$ giving the same topology as 
$\langle \cdot, \cdot \rangle$, then there exist an invertible $U\in \B(\H)$ with 
$\langle \cdot, \cdot \rangle'=\langle U\cdot, U\cdot \rangle$. Let $\H'$ be the Hilbert space for 
$\langle \cdot, \cdot \rangle'$ and $\Tr'_{\omega}$ be the associated Dixmier trace to a given state $\omega$.
 Since the singular value of 
$U^{-1}TU \in \K_+(\H')$ are the same of $T\in \K_+(\H)$, we get $\L^{1,\infty}(\H')=\L^{1,\infty}(\H)$ and 
\begin{align*}
\Tr_{\omega}'(T)=\Tr_{\omega}(U^{-1} T U)=\Tr_{\omega}(T) \text{ for }T\in \L^{1,\infty}_+.
\end{align*}
\end{proof}
\noindent {\it Two important points:} 

1) Note that $\Tr_{\omega}(T)=0$ if $T\in \L^1(\H)$ and more generally all Dixmier traces vanish on the closure 
for the norm $\norm{.}_{1,\infty}$ of the ideal of finite rank operators. In particular, Dixmier traces are not normal.

2) The $C^*$-algebra $\A$ is not separable, so it is impossible to exhibit any state $\omega$! Despite the inclusions 
\eqref{notsep} and the fact that the $\L^p(\H)$ are separable ideals for $p\geq 1$, $\L^{1,\infty}$ is not a separable.
\\
Moreover, as for Lebesgue integral, there are sets which are not measurable. For instance, a function 
$f \in C_b([e, \infty])$ has a limit $\ell =\lim_{\lambda \to \infty} f(\lambda)$ if and only if $\ell = \omega(f)$ for all 
state $\omega$. 

\begin{definition}
The operator $T \in \L^{1,\infty}$ is said to be measurable if $\Tr_\omega(T)$ is independent of $\omega$. In this 
case, $\Tr_\omega$ is denoted $\Tr_{Dix}$.
\end{definition}

\begin{lemma}
The operator $T \in \L^{1,\infty}$ is measurable and $\Tr_\omega(T)=\ell$ if and only if the map
$\lambda\in \R^+ \to \tau_\lambda(T)\in \A$ converges at infinity to $\ell$.
\end{lemma}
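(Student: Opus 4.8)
The plan is to unwind the definition $\Tr_\omega(T)=\omega\big([\tau(T)]\big)$ and reduce the statement to the fact, recorded just above the lemma, that a function in $C_b([e,\infty])$ converges at infinity to $\ell$ exactly when every state of $\A$ sends it to $\ell$. Write $g_T$ for the map $\lambda\mapsto\tau_\lambda(T)$; since $\sigma_\rho(T)\geq 0$ and $\tau_\lambda(T)\leq\norm{T}_{1,\infty}$, it was already observed that $g_T\in C_b([e,\infty])$, and $g_T$ is real-valued, so $[\tau(T))\vcentcolon=[g_T]$ is a self-adjoint element of the commutative $C^*$-algebra $\A\vcentcolon= C_b([e,\infty])/C_0([e,\infty[)$.

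First I would restate the left-hand side: by the definition of measurability, ``$T$ is measurable with $\Tr_\omega(T)=\ell$'' means exactly that $\omega\big([g_T]\big)=\ell$ for every state $\omega$ on $\A$. So the lemma is equivalent to the assertion: $[g_T]=\ell\,1_\A$ if and only if the map $\lambda\mapsto\tau_\lambda(T)$ tends to $\ell$ at infinity. But $[g_T]=\ell\,1_\A$ in the quotient $C_b/C_0$ means precisely $g_T-\ell\in C_0([e,\infty[)$, i.e. $\tau_\lambda(T)\to\ell$ as $\lambda\to\infty$; hence only the equivalence ``$\omega([g_T])=\ell$ for all states $\omega$'' $\Longleftrightarrow$ ``$[g_T]=\ell\,1_\A$'' remains to be checked.

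That remaining equivalence is pure $C^*$-algebra and is the content of the ``important point'' stated before the lemma. One direction is immediate. For the converse I would invoke that the states of $\A$ separate its points (Hahn--Banach, together with the fact that the dual of a $C^*$-algebra is the norm-closed span of its states): if $[g_T]\neq\ell\,1_\A$ there is a state $\omega$ with $\omega([g_T])\neq\ell=\omega(\ell\,1_\A)$. Concretely such a state arises by choosing a subnet $\lambda_\alpha\to\infty$ along which $g_T$ converges to a cluster value $\ell'\neq\ell$ and passing to a weak-$*$ limit of the evaluations $f\mapsto f(\lambda_\alpha)$ on $C_b([e,\infty])$; that limit annihilates $C_0([e,\infty[)$, hence descends to a state on $\A$ taking the value $\ell'$ on $[g_T]$. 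Combining the two reductions gives the lemma; equivalently, one may simply apply the preceding remark to $f=g_T$.

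There is no real analytic obstacle here: the only nontrivial ingredient is the $C^*$-algebraic fact that the states of the (non-separable) algebra $\A$ separate it, equivalently that the spectrum of $[g_T]$ in the quotient is the set of limit points of $g_T$ at infinity, and since this has already been established and used in the discussion just above, the proof is essentially a rewriting of the definitions of measurability, of $\Tr_\omega$, and of the quotient $\A$.
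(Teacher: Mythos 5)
Your proof is correct and follows essentially the same route as the paper: the easy direction is $\omega(\ell)=\ell\,\omega(1)=\ell$, and the converse uses that the states of $\A$ separate points, so $\omega([\tau(T)]-\ell)=0$ for all $\omega$ forces $[\tau(T)]=\ell$ in $C_b([e,\infty])/C_0([e,\infty[)$, i.e.\ $\tau_\lambda(T)\to\ell$. Your extra paragraph constructing an explicit separating state from a cluster value of $g_T$ at infinity is a harmless elaboration of the same argument.
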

\begin{proof}
If $\lim_{\tau \to \infty} \tau_\lambda(T)=\ell$, then 
$\Tr_\omega(T)=\omega \big( \tau(T)\big)=\omega(\ell) =\ell \, \omega(1)=\ell$.

Conversely, assume $T$ is measurable and $\ell=\Tr_\omega(T)$ for any state $\omega$. Then we get, 
$\omega\big( \tau(T)-\ell \big)=\Tr_\omega(T)-\ell=0$. Since the set of states  separate the points of $\A$, 
$\tau(T)=\ell$ and $\lim_{\tau \to \infty} \tau_\lambda(T)=\ell$.
\end{proof}

After Dixmier, the singular (i.e. non normal) traces have been deeply investigated, see for instance the recent 
\cite{LPS,LS,LS1}, but we do not enter into this framework and technically, we just make the following 
characterization of measurability:
\begin{remark}
If $T \in \K_+(\H)$, then $T$ is measurable if and only if 
$\lim_{N\to \infty} \tfrac{1}{\log \,N} \, \sum_{n=0}^N \mu_n(T)$ exists. \\
{\rm Actually, if $\ell=\lim_{N\to \infty} \tfrac{1}{\log \,N} \, \sum_{n=0}^N \mu_n(T)$ since 
$\Tr_\omega(T)=\ell$ for any $\omega$, so $\Tr_{Dix}=\ell$ and the converse is proved in \cite{LSS}.}
\end{remark}

\begin{example} 
\label{Dixtrace}
Computation of the Dixmier trace of the inverse Laplacian on the torus:
\end{example}
Let $\T^d=\R^d /2\pi \Z^d$ be the d-dimensional torus and $\Delta=-\sum_{i=1}^d \partial_{x^i}^2$ be the scalar 
Laplacian seen as unbounded operator on $\H=L^2(\T^d)$. We want to compute 
$\Tr_\omega\big((1+\Delta)^{-p} \big)$ for $\tfrac{d}{2}\leq p\in \N^*$. We use $1+ \Delta$ to avoid the kernel 
problem with the inverse. As the following proof shows, 1 can be replaced by any $\epsilon>0$ and the result 
does not depends on $\epsilon$.

Notice that the functions $e_k(x)\vcentcolon = \tfrac{1}{\sqrt{2\pi}}\, e^{ik\cdot x}$ with $x\in \T^d,\,k\in \Z^d=(\T^d)^*$ form a 
basis of $\H$ of  eigenvectors: $\Delta \,e_k=\vert k \vert^2\,e_k$. Moreover, for $t\in \R^*_+$,
$$
e^t \Tr\big(e^{-t(1+\Delta)}\big)=\sum_{k\in \Z^d} e^{-t\vert k \vert^2}=\big(\sum_{k\in \Z}e^{-t k^2}\big)^d.
$$
We know that $\vert \int_{-\infty}^\infty e^{-tx^2}\,dx -\sum_{k\in \Z} e^{-t k^2} \vert \leq 1$ for any $t>0$, and since the first 
integral is $\sqrt{\tfrac{\pi}{t}}$, we get 
$e^t \Tr\big(e^{-t(1+\Delta)}\big) \underset{t \downarrow 0^+}{\simeq} \big(\tfrac{\pi}{t} \big)^{d/2}
=\vcentcolon \a \,t^{-d/2}$.
\\
We will use a Tauberian theorem: 
$\mu_n\big( (1+\Delta)^{-d/2}\big) \underset{n\to \infty}{\simeq} [\a\, \tfrac{1}{\Gamma(d/2+1)}]\,\tfrac{1}{n}$, 
see \cite{Hardy} (one needs to estimates the cardinality of the set $\set{k\in \Z^d \, \vert \, \vert k \vert^2 \leq n}$, see 
\cite{Polaris}). 
Thus 
$$
\lim_{N \to \infty} \tfrac{1}{\log \, N} \sum_{n=0}^N \mu_n\big( (1+\Delta)^{-d/2} \big) = 
\tfrac{\a}{\Gamma(d/2+1)}=\tfrac{\pi^{d/2}}{\Gamma(d/2+1)}.
$$
So $(1+\Delta)^{-d/2}$ is measurable and 
$$\Tr_{Dix} \big( (1+\Delta)^{-d/2} \big)=
\Tr_{\omega} \big( (1+\Delta)^{-d/2} \big)=\tfrac{\pi^{d/2}}{\Gamma(d/2+1)}.
$$
Since $(1+\Delta)^{-p}$ is traceable for $p>\tfrac{d}{2}$, $\Tr_{Dix}\big( (1+\Delta)^{-p} \big)=0$.
\\
This result has been generalized in Connes' trace theorem \cite{Connes88}:

\begin{theorem}
\label{TrDix}
Let $M$ be a compact Riemannian manifold of dimension $d$, $E$ a vector bundle over $M$ and 
$P \in \Psi DO^{-d}(M,E)$. Then, $P\in \L^{1,\infty}$, is measurable and 
$$
\Tr_{Dix}(P)=\tfrac{1}{d}\, \text{WRes}(P).
$$
\end{theorem}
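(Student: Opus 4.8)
\emph{Overview and first step.} The plan is to establish the three assertions in order: membership $P\in\L^{1,\infty}$; the identity $\Tr_{Dix}(P)=\tfrac1d\,\textit{WRes}(P)$ first for \emph{positive elliptic} $P$, via the residue of a zeta function and a Tauberian theorem; and finally its propagation to all of $\Psi DO^{-d}(M,E)$ by linearity. First I would fix an invertible self-adjoint elliptic $\DD\in\Psi DO^1(M,E)$ (for instance $\DD=(1+\nabla^*\nabla)^{1/2}$ for a connection $\nabla$ on $E$), so that $|\DD|^{d}=(\DD^2)^{d/2}\in\Psi DO^{d}(M,E)$ by Seeley's theorem and, by the Weyl law for the Laplace-type operator $\DD^2$, $\mu_n\big(|\DD|^{-d}\big)=\mathcal O(1/n)$, whence $\sigma_N\big(|\DD|^{-d}\big)=\mathcal O(\log N)$ and $|\DD|^{-d}\in\L^{1,\infty}$. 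Since $P|\DD|^{d}\in\Psi DO^0(M,E)$ is bounded on $L^2(M,E)$, the factorisation $P=(P|\DD|^{d})\,|\DD|^{-d}$ together with $\mu_n(ATB)\le\|A\|\,\mu_n(T)\,\|B\|$ gives $\sigma_N(P)\le\|P|\DD|^{d}\|\,\sigma_N\big(|\DD|^{-d}\big)=\mathcal O(\log N)$, so $P\in\L^{1,\infty}$.

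\emph{The positive elliptic case.} Next I would take $P\in\Psi DO^{-d}(M,E)$ positive and elliptic, hence invertible modulo a finite-rank smoothing operator. By Seeley's theorem $z\mapsto P^{z}$ is a holomorphic family in $\Psi DO(M,E)$ of order $-dz$, and $\Tr(P^{z})=\sum_n\mu_n(P)^{z}$ is holomorphic for $\Re z>1$. Running the meromorphic-continuation argument of Section~\ref{Wod} for this family — the proof of Theorem~\ref{Wresresult}(ii) applies with only the obvious change, namely that the order now varies at rate $-d$ rather than at the rate $-1$ of the family $P|\DD|^{-s}$ — one finds that $z\mapsto\Tr(P^{z})$ extends meromorphically across $z=1$ with at most a simple pole there, and, since the residue scales like the reciprocal of the rate of change of the order,
$$
\underset{z=1}{\Res}\,\Tr(P^{z})=\tfrac1d\int_M c_P(x)\,|dx|=\tfrac1d\,\textit{WRes}(P)
$$
by \eqref{cP} and \eqref{Wresint}. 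Since $\mu_n(P)$ is non-increasing, the Hardy--Littlewood--Karamata Tauberian theorem (as used already in Example~\ref{Dixtrace}; see \cite{Hardy}) converts this into $\lim_{N\to\infty}\tfrac1{\log N}\sum_{n=0}^{N}\mu_n(P)=\tfrac1d\,\textit{WRes}(P)$, so $P$ is measurable and $\Tr_{Dix}(P)=\Tr_\omega(P)=\tfrac1d\,\textit{WRes}(P)$ for every state $\omega$.

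\emph{Reduction to an arbitrary $P$.} Fix a state $\omega$ and let $V_\omega\subset\Psi DO^{-d}(M,E)$ be the set of $Q$ with $\Tr_\omega(Q)=\tfrac1d\,\textit{WRes}(Q)$; since both functionals are $\C$-linear on $\L^{1,\infty}\supset\Psi DO^{-d}(M,E)$, $V_\omega$ is a linear subspace, and by the previous step it contains every positive elliptic operator, in particular $\epsilon|\DD|^{-d}$ for all $\epsilon>0$. If $Q\in\Psi DO^{-d}(M,E)$ is self-adjoint, then for $\delta$ large enough $Q_\delta:=Q+\delta|\DD|^{-d}$ is self-adjoint and elliptic (its principal symbol $\sigma_{-d}^{Q}+\delta\,\sigma_{-d}^{|\DD|^{-d}}$ is invertible because $\sigma_{-d}^{|\DD|^{-d}}$ is positive definite and $\sigma_{-d}^{Q}$ is bounded on the cosphere bundle $S^*M$); Seeley's theorem then gives $|Q_\delta|=(Q_\delta^{2})^{1/2}\in\Psi DO^{-d}(M,E)$, so the positive operators $(Q_\delta)_\pm:=\tfrac12(|Q_\delta|\pm Q_\delta)$ lie in $\Psi DO^{-d}(M,E)$; as $(Q_\delta)_\pm+\epsilon|\DD|^{-d}$ is positive \emph{and} elliptic it lies in $V_\omega$, and linearity forces $(Q_\delta)_\pm\in V_\omega$, then $Q_\delta=(Q_\delta)_+-(Q_\delta)_-\in V_\omega$, then $Q=Q_\delta-\delta|\DD|^{-d}\in V_\omega$. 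For general $P$, $P^*\in\Psi DO^{-d}(M,E)$ by Theorem~\ref{pseudo}(iii), so $\tfrac12(P+P^*)$ and $\tfrac1{2i}(P-P^*)$ are self-adjoint elements of $\Psi DO^{-d}(M,E)$, both in $V_\omega$, whence $P\in V_\omega$. This holds for every $\omega$ and the right-hand side is $\omega$-independent, so $P$ is measurable with $\Tr_{Dix}(P)=\tfrac1d\,\textit{WRes}(P)$, and $P\in\L^{1,\infty}$ by the first step.

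\emph{Main obstacle.} The delicate point is the second step. One must re-run the continuation machinery of Section~\ref{Wod} for the family $P^{z}$ rather than for $P|\DD|^{-s}$ and keep track of the order variable correctly — the Wodzicki normalisation produces the factor $1/d$, which is exactly the ratio $|{-1}|/|{-d}|$ of the rates at which the two orders vary — and one must invoke the Tauberian theorem, a genuine analytic input, to pass from the residue of $\Tr(P^{z})$ at $z=1$ to the logarithmic asymptotics of the partial sums $\sigma_N(P)$ defining $\Tr_{Dix}$. Everything else is bookkeeping, the single external ingredient being Seeley's theorem that complex powers of positive elliptic pseudodifferential operators are again pseudodifferential, used both for $P^{z}$ in step two and for $|Q_\delta|$ in step three. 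In contrast with Corollary~\ref{uniquetrace}, this argument needs no hypothesis $d\ge2$, since it nowhere uses the uniqueness of the trace.
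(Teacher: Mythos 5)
Your proof is correct, but it takes a genuinely different route from the paper's. The paper's proof is essentially two lines: since both $\Tr_{Dix}$ and \textit{WRes} are traces on pseudodifferential operators of negative integer order, the Guillemin--Wodzicki uniqueness theorem (Corollary \ref{uniquetrace}) forces $\Tr_{Dix}=c\,\textit{WRes}$, and the constant $c=\tfrac1d$ is read off by comparing Example \ref{Dixtrace} (the Dixmier trace of $(1+\Delta)^{-d/2}$ on the torus) with Example \ref{WResLaplacian}. You instead compute everything directly: the residue of $\Tr(P^{z})$ at $z=1$ for positive elliptic $P$ via the continuation machinery of Section \ref{Wod} (indeed, writing $P^{z}=P\,A^{-d(z-1)}$ with $A=P^{-1/d}$ elliptic of order $1$ reduces your residue claim exactly to Theorem \ref{Wresresult}(ii)), then the Tauberian theorem of Carey--Phillips--Sukochev \cite[Thm.~5.6]{CareyPS} — the same tool the paper itself invokes in Proposition \ref{pr:calcul} — to pass from the pole to the logarithmic partial sums, and finally a positivity-plus-ellipticity decomposition to reach arbitrary $P$. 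What each approach buys: the paper's argument is far shorter, but it leans on Corollary \ref{uniquetrace}, which is stated only for $M$ connected of dimension $d\ge2$, and it silently treats $\Tr_\omega$ (defined only on $\L^{1,\infty}$) as a trace on all of $\Psi DO^{-\N}(M,E)$; your argument is longer and uses Seeley's theorem on complex powers as an external input, but it is otherwise self-contained, it establishes measurability directly rather than as a by-product, and, as you note, it covers $d=1$ and disconnected manifolds. Both routes are legitimate.
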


\begin{proof}
Since {\it WRes} and $\Tr_{Dix}$ are traces on $\Psi DO^{-m}(M,E)$, $m\in \N$, $\Tr_{Dix}=c\,{\it WRes}$ for 
some constant $c$ 
using Corollary \ref{uniquetrace}. Above example, when compare with Example \ref{WResLaplacian}, shows that the 
$c=\tfrac{1}{d}$.
\end{proof}

\newpage
\section{Dirac operator}
\label{Diracoperator}

There are several ways to define a Dirac-like operator. The best one is to define Clifford algebras, their 
representations, the notion of Clifford modules, spin$^c$ structures on orientable manifolds $M$ defined by Morita 
equivalence between the $C^*$-algebras $C(M)$ and $\Gamma(\CC \ell\, M)$ (this approach is more of the spirit of 
noncommutative geometry). Then the notion of spin structure and finally, with the notion of spin and Clifford 
connection, we reach the definition of a (generalized) Dirac operator.
\\
Here we try to bypass this approach to save time.

References: a classical book is \cite{Lawson}, but I recommend \cite{Ginoux}. Here, we follow \cite{Ponge}, but 
see also \cite{Polaris}.

\subsection{Definition and main properties}

Let $(M,\,g)$ be a compact Riemannian manifold with metric $g$, of dimension $d$ and $E$ be a vector bundle 
over $M$. An example is the (Clifford) bundle $E=\CC \ell\, T^*M$ where the fiber $\CC \ell\,T_x^*M$ is the Clifford 
algebra of the real vector space $T^*_xM$ for $x \in M$ endowed with the nondegenerate quadratic form $g$.

Given a connection $\nabla$ on $E$, recall that a differential operator $P$ of order $m$ on $E$ is an element of 
$$
\text{Diff}^{\,m} (M,E)\vcentcolon = \Gamma\big(M,End(E)\big) \,\cdot\,Vect\set{\nabla_{X_1}
\cdots \nabla_{X_j} \, \vert \, X_j \in \Gamma(M,TM), \,j\leq m}.
$$
In particular, $\text{Diff}^{\,m} (M,E)$ is a subalgebra of $End\big( \Gamma(M,E)\big)$ and the operator $P$ 
has a principal symbol $\sigma_m^P$ in  $\Gamma\big(T^*M,\pi^*End(E)\big)$ where $\pi:T^*M \to M$ is the 
canonical submersion and $\sigma_m^P(x,\xi)$ is given by \eqref{symprinc}.

{\it An example:} Let $E=\exter \, T^*M$. The exterior product and the contraction given on $\omega,\omega_j \in E$ 
by
\begin{align*}
& \epsilon(\omega_1) \, \omega_2 \vcentcolon = \omega_1 \wedge \omega_2,\\
& \iota(\omega)\,(\omega_1 \wedge \cdots \wedge \omega_m) \vcentcolon =
 \sum_{j=1}^m (-1)^{j-1}g(\omega,\omega_j)\, 
\omega_1 \wedge \cdots \wedge \widehat{\omega_j} \wedge \cdots \wedge \omega_m
\end{align*} 
suggest the following definition $c(\omega) \vcentcolon = \epsilon (\omega)+ \iota(\omega)$ and one checks that
\begin{align}
\label{cliffmetric}
c(\omega_1)\,c(\omega_2)+c(\omega_2)c(\omega_1)=2g(\omega_1,\omega_2) \, \text{id}_E.
\end{align}
$E$ has a natural scalar product: if $e_1,\cdots,e_d$ is an orthonormal basis of $T_x^*M$,  then the scalar product 
is chosen such that $e_{i_1}\wedge\cdots \wedge e_{i_p}$ for $i_1<\cdots<i_p$ is an orthonormal basis.
\\
If $d \in \text{Diff}^{\,1}$ is the exterior derivative and $d^*$ is its adjoint for the deduced scalar product on 
$\Gamma(M,E)$, then their principal symbols are
\begin{align}
\label{dandd1}
& \sigma_1^d(\omega)=i\epsilon (\omega),\\
& \sigma_1^{d^*}(\omega)=-i \iota(\omega). \label{dandd2}
\end{align}
This follows from 
$\sigma_1^d(x,\xi)=\lim_{t \to \infty} \tfrac{1}{t}\big(e^{-ith(x)}de^{ith(x}\big)(x)=lim_{t \to \infty} 
\tfrac{1}{t} \,it\,d_xh=i\,d_xh=i\,\xi$ where $h$ is such that $d_xh=\xi$, so $\sigma_1^d(x,\xi)=i\,\xi$ and 
similarly for $ \sigma_1^{d^*}$.

More generally, if $P\in\text{Diff}^{\,m}(M)$, $\sigma_m^P(dh)=\tfrac{1}{i^mm!} \,(ad\,h)^m(P)$ with 
$ad\, h=[\cdot,h]$ and 
$\sigma_m^{P^*}(\omega)=\sigma_m^P(\omega)^*$ where the adjoint $P^*$ is for the scalar product on 
$\Gamma(M,E)$ associated to an hermitean metric on $E$: 
$\langle \psi,\psi' \rangle\vcentcolon = \int_M \langle \psi(x),\psi'(x) \rangle_x \, \vert dx \vert$ is a scalar 
product on the space $\Gamma(M,E)$.

\begin{definition}
\label{deflaplace}
The operator $P\in \text{Diff}^{\,\,2}(M,E)$ is called a generalized Laplacian when its symbol satisfies 
$\sigma_2^P(x,\xi)=\vert \xi \vert^2_x\, \,id_{E_x}$ for $x\in M,\, \xi \in T^*_xM$ (note that $\vert \xi \vert_x$ depends 
on the metric $g$).
\end{definition}
This is equivalent to say that, in local coordinates, 
$P=-\sum_{i,j} g^{ij}(x)\partial_{x^i}\partial_{x^j}+b^j(x)\partial_{x^j}+c(x)$ where the $b^j$ are smooth and $c$ is in 
$\Gamma\big(M,End(E)\big)$.

\begin{definition}
\label{defdir}
Assume that $E=E^+ \oplus E^-$ is a $\Z_2$-graded vector bundle.
\\
When $D \in\text{Diff}^{\,\,1}(M,E)$ and $D=\left(\begin{smallmatrix} 0 & D^+ \\ D^- & 0 \end{smallmatrix}\right)$ ($D$ 
is odd) where $D^{\pm}:\, \Gamma(M,E^{\mp}) \to \Gamma(M,E^{\pm})$, $D$ is called a Dirac operator if 
$D^2=\left(\begin{smallmatrix} D^-D^+ & 0\\ 0 & D^+D^-\end{smallmatrix}\right)$ is a generalized 
Laplacian.
\end{definition}
A good example is given by $E=\exter \,T^*M=\exter^{even} \,T^*M \oplus \exter^{odd}\, T^*M$ and the de Rham 
operator $D\vcentcolon= d+d^*$. It is a Dirac operator since $D^2=dd^*+d^*d$ is a generalized Laplacian according 
to \eqref{dandd1} \eqref{dandd2}. $D^2$ is also called the Laplace--Beltrami operator.

\begin{definition}
Define $\CC \ell \,M$ as the vector bundle over $M$ whose fiber in $x\in M$ is the Clifford algebra 
$\CC \ell \,T^*_xM$ (or $\CC \ell \,T_xM$ using the musical isomorphism $X \in TM \leftrightarrow X^\flat \in T^*M$).

A bundle $E$ is called a Clifford bundle over $M$ when there exists a $\Z_2$-graduate action 
$c: \, \Gamma(M,\CC \ell \,M) \to End\big(\Gamma(M,E)\big)$.
\end{definition}

The main idea which drives this definition is that Clifford actions correspond to principal symbols of Dirac operators: 
\begin{prop}
\label{carac}
If $E$ is a Clifford module, every odd $D \in \text{Diff}^{\,\,1}$ such that $[D,\,f]=i\,c(df)$ for $f \in C^\infty(M)$ is a 
Dirac operator.
\\
Conversely, if $D$ is a Dirac operator, there exists a Clifford action $c$ with $c(df)=-i\,[D,\,f]$.
\end{prop}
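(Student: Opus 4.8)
The plan is to exploit the fact that, in either direction, the Clifford relation \eqref{cliffmetric} is equivalent to a double-commutator identity for $D^{2}$, so I would set up that identity and read off both implications from it.

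\emph{Direct implication.} Assume $D$ is odd of order $1$ with $[D,f]=i\,c(df)$ for all $f\in C^{\infty}(M)$. Since $\text{Diff}^{\,\bullet}(M,E)$ is a filtered algebra, $D^{2}\in\text{Diff}^{\,2}(M,E)$, and it is even because $D$ is odd, so it already has the block form required in Definition~\ref{defdir}. It remains to compute its principal symbol. For a first-order operator the principal symbol is, up to the conventional sign, the commutator with a function; together with the hypothesis $[D,h]=i\,c(dh)$ this gives $\sigma_{1}^{D}(x,\xi)=\pm\,c(\xi)$ whenever $d_{x}h=\xi$, and the sign will not matter. By multiplicativity of the principal symbol (Theorem~\ref{pseudo}(i)),
\[
\sigma_{2}^{D^{2}}(x,\xi)=\sigma_{1}^{D}(x,\xi)^{2}=c(\xi)^{2}=g(\xi,\xi)\,\mathrm{id}_{E_{x}}=|\xi|_{x}^{2}\,\mathrm{id}_{E_{x}},
\]
where $c(\xi)^{2}=g(\xi,\xi)\,\mathrm{id}$ holds because $c$ restricts on each fibre to a representation of the Clifford algebra $\CC\ell\,T^{*}_{x}M$ (set $\omega_{1}=\omega_{2}=\xi$ in \eqref{cliffmetric}). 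By Definition~\ref{deflaplace}, $D^{2}$ is a generalized Laplacian, hence $D$ is a Dirac operator.

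\emph{Converse.} Now let $D$ be a Dirac operator. I would first record that, for $f\in C^{\infty}(M)$, the operator $[D,f]$ has order $0$ (the order drops since $D$ has order $1$), hence is multiplication by a section of $\mathrm{End}(E)$; that $f\mapsto[D,f]$ is a derivation, $[D,fg]=f[D,g]+g[D,f]$, vanishing on constants; and, by a Hadamard-lemma argument in a chart, that $[D,f]_{x}$ depends only on $d_{x}f$. This lets me define a bundle homomorphism $c:T^{*}M\to\mathrm{End}(E)$ by $c(d_{x}f):=-i\,[D,f]_{x}$. The central identity is
\[
[[D^{2},f],g]=[D,f]\,[D,g]+[D,g]\,[D,f]\qquad(f,g\in C^{\infty}(M)),
\]
which follows from $[D^{2},f]=D[D,f]+[D,f]D$ together with $[[D,f],g]=0$ (two multiplication operators commute, $g$ being scalar). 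Evaluating the left-hand side with the local form $D^{2}=-g^{ij}\partial_{i}\partial_{j}+b^{j}\partial_{j}+c$ of Definition~\ref{deflaplace}, and noting that the double commutator kills all but the second-order part, yields $[[D^{2},f],g]_{x}=-2\,g(d_{x}f,d_{x}g)\,\mathrm{id}$ (equivalently, polarize the special case $[[D^{2},f],f]_{x}=-2\,\sigma_{2}^{D^{2}}(d_{x}f)=-2\,|d_{x}f|_{x}^{2}\,\mathrm{id}$). Hence, writing $\xi=d_{x}f$ and $\eta=d_{x}g$,
\[
c(\xi)c(\eta)+c(\eta)c(\xi)=(-i)^{2}\,[[D^{2},f],g]_{x}=2\,g(\xi,\eta)\,\mathrm{id}_{E_{x}},
\]
so $c$ satisfies \eqref{cliffmetric} on $T^{*}M$. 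By the universal property of the Clifford algebra, applied fibrewise (and smoothly in $x$), $c$ extends uniquely to an algebra-bundle morphism $\CC\ell\,M\to\mathrm{End}(E)$, i.e. to an action $c:\Gamma(M,\CC\ell\,M)\to\mathrm{End}\big(\Gamma(M,E)\big)$; and since $D$ is odd while multiplication operators are even, $c(df)=-i[D,f]$ is odd, so the action is $\Z_{2}$-graded. Thus $E$ is a Clifford bundle with the asserted relation $c(df)=-i[D,f]$.

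\emph{Main obstacle.} The routine part is the symbol bookkeeping of the direct implication. The real work lies in the converse: proving cleanly that $c$ is well defined (depends only on $df$), and, above all, extracting the precise normalization $2\,g(\xi,\eta)$ out of the bare ``generalized Laplacian'' hypothesis through the double commutator — this is where the signs and the factors of $i$ have to be tracked carefully, and where the scalarity of $f,g$ (so that $[[D,f],g]=0$) is essential.
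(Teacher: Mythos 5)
Your proof is correct, and the direct implication is essentially the paper's: identify $\sigma_1^D(x,\xi)$ with $\pm c(\xi)$ via the commutator with a function, then use multiplicativity of the principal symbol to get $\sigma_2^{D^2}=c(\xi)^2=|\xi|_x^2\,\mathrm{id}$. For the converse the paper takes a slightly shorter route: it observes that $[D,f](x)=i\,\sigma_1^D(x,d_xf)$, so well-definedness of $c$ is immediate from the symbol, and it only verifies the quadratic relation $c(x,\xi)^2=\sigma_1^D(x,\xi)^2=\sigma_2^{D^2}(x,\xi)=|\xi|_x^2$, invoking the universal property of the Clifford algebra (which needs only the square relation) to extend $c$ to $\CC\ell\,T_x^*M$. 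Your double-commutator identity $[[D^2,f],g]=[D,f][D,g]+[D,g][D,f]$ together with the local form of a generalized Laplacian is an equivalent, more hands-on way of extracting $\sigma_2^{D^2}$, and the polarization gives you the full bilinear relation \eqref{cliffmetric} directly rather than through the universal property; both the signs and the factor $2\,g(\xi,\eta)$ come out correctly. The two arguments buy the same thing; yours makes the normalization explicit at the cost of a coordinate computation, the paper's stays at the symbol level.
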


\begin{proof}
Let $x\in M$, $\xi \in T_x^*M$ and $f \in C^\infty(M)$ such that $d_xf=\xi$. Then 
\begin{align*}
\sigma_1^D(df)(x)=\big(\tfrac{1}{i} \,ad \, f)D= -i[D,\,f]=c(df),
\end{align*}
so, thanks to Theorem \ref{pseudo}, 
$\sigma_2^{D^2}(x,\xi)=\big(\sigma_1^D(x,\xi)\big)^2=\vert \xi \vert_x^2 \,\,id_{E_x}$ 
and $D^2$ is a generalized Laplacian.

Conversely, if $D$ is a Dirac operator, then we can define $c(df)\vcentcolon =i [D,\,f]$. This makes sense since 
$D\in \text{Diff}^{\,1}$ and for $f\in C^\infty(M)$, $x\in M$, $[D,f](x)=i\sigma_1^D(df)(x)=i \sigma_1^D(x,d_xf)$ is an 
endomorphism of $E_x$ depending only on $d_xf$. So $c$ can be extended to 
the whole $T^*M$ with $c(x,\xi)\vcentcolon= c(dh)(x)=i\sigma_1^D(x,\xi)$ where $h\in C^\infty(M)$ is chosen such 
that $\xi=d_xh$. The map $\xi \to c(x,\xi)$ is linear from $T_x^*M$ to $End(E_x)$ and 
$c(x,\xi)^2=\sigma_1^D(x,\xi)^2=\sigma_2^{D^2}(x,\xi)=\vert \xi \vert_x^2$ for each $\xi \in T_x^*M$. Thus $c$ 
can be extended to an morphism of algebras from $\CC \ell (T^*_xM)$ in $End(E_x)$. This gives a Clifford 
action on the bundle $E$.
\end{proof}
Consider previous example: $E=\exter \,T^*M=\exter^{even} \,T^*M \oplus \exter^{odd}\, T^*M$ is a 
Clifford module for $c\vcentcolon = i(\epsilon + \iota)$ coming from the Dirac operator $D=d+d^*$: by 
\eqref{dandd1} and \eqref{dandd2} 
$$
i[D,f]=i[d+d^*,\,f]=i\big(i\sigma_1^d(df)-i\sigma_1^{d^*}(df)\big)=-i(\epsilon+\iota)(df).
$$

\begin{definition}
Let $E$ be a Clifford module over $M$. A connection $\nabla$ on $E$ is a Clifford connection if for 
$a\in \Gamma(M,\CC \ell \,M)$ and $X\in \Gamma(M,TM)$, $[\nabla_X, c(a)]=c(\nabla^{LC}_X \, a)$ where 
$\nabla^{LC}_X$ is the Levi-Civita connection after its extension to the bundle $\CC \ell \,M$ (here $\CC\ell\,M$ is 
the bundle with fiber $\CC\ell\, T_xM$). 
\\
A Dirac operator $D_\nabla$ is associated to a Clifford connection $\nabla$ in the following way: 
$$
D_\nabla \vcentcolon = -i\,c \circ \nabla, \qquad 
\Gamma(M,E) \xrightarrow{\nabla} \Gamma(M,T^*M\otimes E) \xrightarrow{c\otimes 1} \Gamma(M,E).
$$
where we use $c$ for $c\otimes 1$.
\end{definition}

Thus if in local coordinates, $\nabla=\sum_{j=1}^d dx^j \otimes \nabla_{\partial_j}$, the associated Dirac operator is 
given by $D_\nabla=-i\sum_j c(dx^j) \, \nabla_{\partial_i}$. In particular, for $f \in C^\infty(M)$, 
$$[D_\nabla,\,f \,id_E]=-i\sum_{i=1}^d  c(dx^i) \,[\nabla_{\partial_j},\,f]=\sum_{j=1}^d-i\, c(dx^j) \,\partial_j f=-ic(df).$$
By Proposition \ref{carac}, $D_\nabla$ deserves the name of Dirac operator!

{\it Examples:} 

1) For the previous example $E=\bigwedge T^*M$,  the Levi-Civita connection is indeed a Clifford 
connection whose associated Dirac operator coincides with the de Rham operator $D=d+d^*$.

2) {\it The spinor bundle}: 
Recall that the spin group $\Spin_d$ is the non-trivial two-fold covering of $SO_d$, so we have
$$
0 \longrightarrow \Z_2 \longrightarrow \Spin_d \overset{\xi}{\longrightarrow}SO_d \longrightarrow 1.
$$
Let $\SO(TM) \to M$ be the $\SO_d$-principal bundle of positively oriented orthonormal frames on $TM$ 
of an oriented Riemannian manifold $M$ of dimension $d$.

A {\it spin structure} on an oriented d-dimensional Riemannian manifold $(M,g)$ is a $\Spin_d$-principal bundle 
$\Spin(TM) \overset{\pi}{\longrightarrow} M$ 
with a two-fold covering map $\Spin(TM) \overset{\eta}{\longrightarrow} \SO(TM)$ such that the following diagram 
commutes:

\begin{equation*}
\xymatrix@R=4pt@C=40pt    
{
{\Spin(TM)\times  \Spin_d} \ar[r] \ar[dd]_-{\eta \times \xi}
& {\Spin(TM)} \ar[dd]_-{\eta} \ar[rd]^{\pi}  \\
{} &{}&{M}\\
{\SO(TM)\times  \SO_d} \ar[r]
& {\SO(TM)} \ar[ru]_{\pi}
}
\end{equation*}
where the horizontal  maps are the actions of $\Spin_d$ and $\SO_d$ on the principal fiber bundles 
$\Spin(TM)$ and $\SO(TM)$.

A {\it spin manifold} is an oriented Riemannian manifold admitting a spin structure.
\\
In above definition, one can replace $\Spin_d$ by the group $\Spin^c_d$ which is a central extension of 
$SO_d$ by $\T$:
$$
0 \longrightarrow \T \longrightarrow \Spin^c_d \overset{\xi}{\longrightarrow}SO_d \longrightarrow 1.
$$

An oriented Riemannian manifold $(M,g)$ is spin if and only if the second Stiefel--Whitney class of its tangent 
bundle vanishes. Thus a manifold is spin if and only both its first and second Stiefel--Whitney classes vanish 
(the vanishing of the first one being equivalent to the orientability of the manifold). In this case, the set of spin 
structures on $(M,g)$ stands in one-to-one correspondence with $H^1(M, \Z_2)$. 
In particular the existence of a spin structure does not depend on the metric or the orientation 
of a given manifold. Note that all manifolds of dimension $d\leq 4$ have spin$^c$ structures but $\C P^2$ is a 
4-dimensional (complex) manifold without spin structures. 

Let $\rho$ be an irreducible representation of $\CC \ell \,\C^d \to End_\C(\Sigma_d)$ with 
$\Sigma_d \simeq\C^{2^{\lfloor d/2 \rfloor}}$ as set of complex spinors. Of course, $\CC \ell\,\C^d$ is endowed 
with its canonical complex bilinear form. 

The {\it spinor bundle} $S$ of $M$ is the complex vector bundle associated to the principal bundle $\Spin(TM)$ 
with the spinor representation, namely $S\vc \Spin(TM) \,\times_{\rho_d} \, \Sigma_{d}$. Here $\rho_d$ is a 
representation of $\Spin_d$ on Aut$(\Sigma_d)$ which is the restriction of $\rho$.

More precisely, if $d=2m$ is even, $\rho_d=\rho^++\rho^-$ where $\rho^\pm$ are two nonequivalent irreducible 
complex representations of $\Spin_{2m}$ and $\Sigma_{2m}=\Sigma^+_{2m} \oplus \Sigma^-_{2m}$, while for 
$d=2m+1$ odd, the spinor representation $\rho_d$ is irreducible.
\\
In practice, $M$ is a spin manifold means that there exists a Clifford bundle $S=S^+ \oplus S^-$ such 
that $S \simeq \exter \,T^*M$. Due to the dimension of $M$, the Clifford bundle has fiber
$$
\CC \ell _x M=\left\{\begin{array}{ll} M_{2^{m}}(\C) \text{ when } d=2m \text{ is even,}\\
M_{2^m}(\C) \oplus M_{2^m}(\C) \text{ when } d=2m+1. \end{array} \right.
$$
Locally, the spinor bundle satisfies $S \simeq M \times \C^{d/2}$. 
\\
{\it A spin connection} $\nabla^S:\, \Gamma^\infty(M,S) \to \Gamma^\infty (M,S)\otimes \Gamma^\infty (M,T^*M)$ is 
any connection which is compatible with Clifford action:
$$[\nabla^S,c(\cdot)]=c(\nabla^{LC} \cdot).
$$
It is uniquely determined by the choice of a spin structure on $M$ (once an orientation of $M$ is chosen). 

\begin{definition}
{\underline {\it The}} Dirac (also called Atiyah--Singer) operator given by the spin structure is
\begin{align}
\label{theDirac}
\Ds \vcentcolon = -i \,c \circ \nabla^S.
\end{align}
\end{definition}
In coordinates,
\begin{align}
\label{Diracco}
\Ds=-ic(dx^j)\big(\partial_j-\omega_j(x)\big)
\end{align}
where $\omega_j$ is the spin connection part which can be computed in the coordinate basis
$$
\omega_j=\tfrac{1}{4} \big(\Gamma_{ji}^k\,g_{kl}-\partial_i(h_j^\a)\delta_{\a\beta}h_l^\beta \big) c(dx^i)\,c(dx^l)
$$
where the matrix $H\vc[h^\a_j]$ is such that $H^tH=[g_{ij}]$ (we use Latin letters for coordinate basis indices 
and Greek letters for orthonormal basis indices).

This gives $\sigma_1^{D}(x,\xi)=c(\xi) +i c(dx^j)\,\omega_j(x)$. Thus in {\it normal coordinates around $x_0$}, 
\begin{align}
& c(dx^j)(x_0)=\ga^j, \nonumber\\
& \sigma_1^{D}(x_0,\xi)=c(\xi)=\ga^j \xi_j \label{gammamatrices}
\end{align}
where the $\ga$'s are constant hermitean matrices. 

A fundamental result concerning a Dirac operator (definition \ref{defdir}) is its unique continuation property: 
if $\psi$ satisfies $D \psi=0$ and $\psi$ vanishes on an open subset of the smooth manifold $M$ (with or 
without boundary), then $\psi$ also vanishes on the whole connected component of $M$.

The Hilbert space of spinors is 
\begin{align}
\label{Hilbert}
\H = L^2\big((M,g),S)\big) \vc \set{\psi\in \Gamma^\infty(M,S) \, \vert \, \int_M \langle \psi,\psi \rangle_x \, 
dvol_g(x)<\infty}
\end{align}
where we have a scalar product which is $C^\infty(M)$-valued. On its domain $\Gamma^\infty(M,S)$, the Dirac 
operator is symmetric: 
$\langle \psi,\Ds \phi \rangle=\langle \Ds\psi, \phi \rangle$. Moreover, it has a selfadjoint closure (which is 
$\Ds^{**}$):

\begin{theorem}
\label{D=D*}
Let $(M,g)$ be an oriented compact Riemannian spin manifold without boundary. By extension to $\H$, $\Ds$ is 
essentially selfadjoint on its original domain $\Gamma^\infty(M,S)$. It is a differential (unbounded) operator 
of order one which is elliptic.
\end{theorem}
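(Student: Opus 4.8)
The plan is to verify the three assertions — first order, elliptic, essentially selfadjoint — the last of which carries all the analytic weight. First I would record that $\Ds=-i\,c\circ\nabla^S$ is, by its very definition \eqref{theDirac}, the composition of the first-order operator $\nabla^S$ with the bundle endomorphism $c\otimes 1$, hence $\Ds\in\text{Diff}^{\,1}(M,S)$, with local expression \eqref{Diracco}. By Proposition \ref{carac} (equivalently, from $-i[\Ds,f]=c(df)$ together with \eqref{symprinc}) its principal symbol is $\sigma_1^{\Ds}(x,\xi)=c(\xi)$. The Clifford relation \eqref{cliffmetric} gives $c(\xi)^2=\vert\xi\vert_x^2\,\mathrm{id}_{S_x}$, so for $\xi\neq 0$ the endomorphism $c(\xi)$ is invertible with inverse $\vert\xi\vert_x^{-2}c(\xi)$; thus $\Ds$ is elliptic. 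The same symbol computation shows that $\Ds^*$ is an elliptic differential operator of order one as well, a point I shall use below.

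Next I would establish symmetry of $\Ds$ on $\Gamma^\infty(M,S)$. In a local $g$-orthonormal coframe $\{e^j\}$ write $\Ds=-i\sum_j c(e^j)\,\nabla^S_{e_j}$; since $\nabla^S$ is compatible with both the hermitean structure of $S$ and with Clifford multiplication, the pointwise quantity $\langle\Ds\psi,\phi\rangle_x-\langle\psi,\Ds\phi\rangle_x$ is the divergence of a vector field built bilinearly from $\psi$ and $\phi$. Integrating over the closed manifold $M$ produces no boundary term, so $\langle\Ds\psi,\phi\rangle=\langle\psi,\Ds\phi\rangle$ for all $\psi,\phi\in\Gamma^\infty(M,S)$. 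Hence $\Ds$ is a densely defined symmetric operator on $\H$, and essential selfadjointness is equivalent to the vanishing of its deficiency spaces, i.e. to $\Ker(\Ds^*\mp i)=\{0\}$ (equivalently, to the density of $\Ima(\Ds\pm i)$ in $\H$).

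The core of the argument is then an elliptic-regularity (bootstrap) step. Let $\psi\in\Dom(\Ds^*)\subset\H$ satisfy $\Ds^*\psi=\pm i\,\psi$; a priori $\psi$ lies only in $L^2=H^0$. Since $\Ds^*$ is elliptic of order one, the parametrix lemma and Theorem \ref{pseudo} furnish $Q\in\Psi DO^{-1}(M,S)$ with $Q\,\Ds^*=1+R$, $R\in\Psi DO^{-\infty}$, whence $\psi=\pm i\,Q\psi-R\psi$ lies in $H^{1}$; iterating, $\psi\in\bigcap_s H^s(M,S)=\Gamma^\infty(M,S)$ by Sobolev embedding. Being now a genuine smooth section, $\psi$ is an admissible argument in the symmetry identity, which then reads $\pm i\,\lVert\psi\rVert^2=\mp i\,\lVert\psi\rVert^2$ and forces $\psi=0$. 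Thus both deficiency indices vanish, $\Ds$ is essentially selfadjoint on $\Gamma^\infty(M,S)$, and its selfadjoint closure is $\Ds^{**}$. (This is just the special case, for the elliptic symmetric operator $\Ds$ on a closed manifold, of the general principle that such operators are essentially selfadjoint.)

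I expect the main obstacle to be precisely this regularity step — upgrading an $L^2$ distributional eigenvector of $\Ds^*$ to a smooth section — since it is where the pseudodifferential apparatus of Section \ref{Wod} (parametrices, continuity on the Sobolev scale) genuinely intervenes; by contrast the principal-symbol computation and the integration-by-parts for symmetry are routine once the spin-geometric setup of Section \ref{Diracoperator} is available. A subtlety worth flagging is that $\Ds^*$ must be handled as the Hilbert-space adjoint — a priori acting only on $\Dom(\Ds^*)$ — until smoothness is established; it is exactly its ellipticity, i.e. its invertible principal symbol $c(\xi)$, that licenses applying the parametrix to arbitrary elements of $\Dom(\Ds^*)$ and thereby closing the bootstrap.
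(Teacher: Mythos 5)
Your proof is correct. Note that the paper itself does not prove this statement — it simply refers the reader to \cite{Var, VarillyS, Polaris, Lawson} — so there is no in-paper argument to compare against; your route (principal symbol $c(\xi)$ with $c(\xi)^2=\vert\xi\vert_x^2\,\mathrm{id}$ for ellipticity, symmetry by integrating a divergence over the closed manifold, and killing the deficiency spaces $\Ker(\Ds^*\mp i)$ by an elliptic-regularity bootstrap with a parametrix) is exactly the standard proof given in those references for the compact case. The one subtlety you correctly flag — that $\Ds^*\psi=\pm i\psi$ must first be read distributionally before the parametrix upgrades $\psi$ to a smooth section eligible for the symmetry identity — is indeed where the whole analytic content lies, and your handling of it is sound.
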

\noindent See \cite{Var, VarillyS, Polaris,Lawson} for a proof.

There is a nice formula which relates the Dirac operator $\Ds$ to the spinor Laplacian 
$$
\Delta^S \vcentcolon = -\Tr_g(\nabla^S \circ \nabla^S): \, \Gamma^\infty(M,S) \to \Gamma^\infty(M,S).
$$
Before to give it, we need to fix few notations: let $R\in \Gamma^\infty \big(M,\exter^2\,T^*M \otimes End(TM)\big)$ 
be the {\it Riemann curvature tensor} with components 
$R_{ijkl}\vcentcolon=g(\partial_i,R(\partial_k,\partial_l)\partial_j)$, the {\it Ricci tensor} components are 
$R_{jl} \vcentcolon =g^{ik}T_{ijkl}$ and the {\it scalar curvature} is $s \vcentcolon = g^{jl}R_{jl}$.

\begin{prop}
\label{Lich}
Schr\"odinger--Lichnerowicz formula: with same hypothesis, 
\begin{align}
\label{SLform}
\Ds^2=\Delta^S+\tfrac{1}{4} s
\end{align}
where $s$ is the scalar curvature of $M$.
\end{prop}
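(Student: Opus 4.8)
The strategy is the classical Weitzenböck/Bochner calculation: expand $\Ds^2$ in local coordinates (or better, in a local orthonormal frame) using $\Ds = -i\,c\circ\nabla^S$, group the terms, and identify the non-Laplacian contribution with the curvature term. Concretely, write $\Ds = -i\sum_\mu c(e^\mu)\nabla^S_{e_\mu}$ in a local $g$-orthonormal coframe $\{e^\mu\}$ with dual frame $\{e_\mu\}$, chosen \emph{synchronous at a point} $x_0$ (so that $\nabla^{LC}_{e_\mu}e_\nu = 0$ at $x_0$, killing all first-order Christoffel terms there). Then compute
\begin{align*}
\Ds^2 = -\sum_{\mu,\nu} c(e^\mu)\,c(e^\nu)\,\nabla^S_{e_\mu}\nabla^S_{e_\nu}
\end{align*}
at $x_0$, and split the double sum into its symmetric part in $(\mu,\nu)$ and its antisymmetric part.

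For the symmetric part, use the Clifford relation \eqref{cliffmetric}, i.e. $c(e^\mu)c(e^\nu)+c(e^\nu)c(e^\mu) = 2\delta^{\mu\nu}\mathrm{id}$, so that the symmetrized sum collapses to $-\sum_\mu (\nabla^S_{e_\mu})^2$, which at $x_0$ (where the frame is synchronous) is exactly $-\Tr_g(\nabla^S\circ\nabla^S) = \Delta^S$, the spinor Laplacian. For the antisymmetric part, $\tfrac12\sum_{\mu\neq\nu} c(e^\mu)c(e^\nu)\big(\nabla^S_{e_\mu}\nabla^S_{e_\nu} - \nabla^S_{e_\nu}\nabla^S_{e_\mu}\big) = \tfrac12\sum_{\mu\neq\nu} c(e^\mu)c(e^\nu)\,R^S(e_\mu,e_\nu)$, where $R^S$ is the curvature of the spin connection (the bracket of vector fields drops out at $x_0$ since the frame is synchronous). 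So the whole problem reduces to showing
\begin{align*}
\tfrac12\sum_{\mu,\nu} c(e^\mu)\,c(e^\nu)\,R^S(e_\mu,e_\nu) = \tfrac14\, s\, \mathrm{id}.
\end{align*}

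This identity is the heart of the matter and is where I expect the real work. The key input is that the spin curvature is the image of the Riemann curvature under the Lie-algebra map $\mathfrak{so}(T_xM)\to \CC\ell(T^*_xM)$, i.e. in the orthonormal frame $R^S(e_\mu,e_\nu) = \tfrac14\sum_{\alpha,\beta} R_{\mu\nu\alpha\beta}\, c(e^\alpha)c(e^\beta)$ (this follows from the compatibility $[\nabla^S, c(\cdot)] = c(\nabla^{LC}\cdot)$ that defines a spin connection). Substituting, one must evaluate $\tfrac18\sum R_{\mu\nu\alpha\beta}\,c(e^\mu)c(e^\nu)c(e^\alpha)c(e^\beta)$. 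Here one uses the first Bianchi identity $R_{\mu\nu\alpha\beta}+R_{\nu\alpha\mu\beta}+R_{\alpha\mu\nu\beta}=0$ together with the Clifford relations to kill the totally antisymmetric part of the fourfold Clifford product; the surviving terms, after contracting two pairs of indices (using $c(e^\mu)c(e^\nu)c(e^\mu) = (2-d)c(e^\nu)$-type reductions, or more directly $\sum_\mu c(e^\mu)c(e^\alpha)c(e^\mu)$), collapse the Riemann tensor down through the Ricci tensor $R_{\nu\beta}=\sum_\mu R_{\mu\nu\mu\beta}$ to the scalar curvature $s = \sum_{\nu}R_{\nu\nu}$, with the combinatorial factor working out to $\tfrac14$. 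The main obstacle is purely this bookkeeping: carefully tracking signs, the antisymmetrization conventions in $R_{ijkl}$, and applying Bianchi at the right moment so that the ``bad'' 4-form term vanishes rather than contributing. Once that algebraic identity is in hand, combining the symmetric and antisymmetric parts gives $\Ds^2 = \Delta^S + \tfrac14 s$ at the arbitrary point $x_0$, hence everywhere, which is \eqref{SLform}.
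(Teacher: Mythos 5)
Your outline is the standard Schr\"odinger--Lichnerowicz/Weitzenb\"ock computation, and it is correct in structure; the paper itself gives no proof here, stating only that it is ``a lengthy computation'' and referring to \cite{Polaris}, which contains exactly the argument you describe (synchronous frame, split of $c(e^\mu)c(e^\nu)\nabla^S_{e_\mu}\nabla^S_{e_\nu}$ into symmetric and antisymmetric parts, spin curvature written as $\tfrac14 R_{\mu\nu\alpha\beta}c(e^\alpha)c(e^\beta)$, first Bianchi identity killing the $4$-form part, double contraction to $s$). One small slip to watch in the write-up: the overall minus sign in $\Ds^2=-\sum c(e^\mu)c(e^\nu)\nabla^S_{e_\mu}\nabla^S_{e_\nu}$ must be carried into your antisymmetric part (it should read $-\tfrac12\sum c(e^\mu)c(e^\nu)R^S(e_\mu,e_\nu)$), but this is absorbed in the sign bookkeeping you already flag as the remaining work.
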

The proof is just a lengthy computation (see for instance \cite{Polaris}).
 \\
We already know via Theorems \ref{compact} and \ref{D=D*} that $\Ds^{-1}$ is compact so 
has a discrete spectrum. 
\\
For $T \in \K_+(\H)$, we denote by $\{\lambda_n(T)\}_{n\in \N}$ its spectrum sorted in 
decreasing order including multiplicity (and in increasing order for an unbounded positive operator $T$ such that 
$T^{-1}$ is compact) and by 
$N_{T}(\lambda) \vcentcolon = \#\set{\lambda_n(T)\,\, \vert \,\, \lambda_n  \leq \lambda}$ its counting function.

\begin{theorem}
\label{Weyl}
With same hypothesis, the asymptotics of the Dirac operator counting function is 
$N_{\lvert \Ds \rvert}(\lambda) \underset{\lambda \to \infty}{\sim} \, 
\tfrac{2^d\text{Vol}(\mathbb{S}^{d-1})}{d(2\pi)^d} \, \text{Vol}(M) \, \lambda^{d}$ where $\Vol (M)=\int_M dvol$.
\end{theorem}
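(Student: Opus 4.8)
The plan is to sidestep heat-kernel asymptotics (not yet available at this point of the notes) and instead extract Weyl's law from Connes' trace theorem (Theorem \ref{TrDix}) together with a classical Tauberian theorem. First I would trade $\vert\Ds\vert$ for the genuinely invertible operator $B:=1+\Ds^{2}$: by Theorem \ref{D=D*}, $\Ds$ is selfadjoint, elliptic and of order one, so $B$ is a positive elliptic differential operator of order two whose spectrum lies in $[1,\infty)$; hence, exactly as $(1+\Delta)^{-d/2}$ in Example \ref{WResLaplacian}, the complex power $A:=B^{-d/2}$ provided by the construction of Section \ref{Wod} is a well defined positive element of $\Psi DO^{-d}(M,S)$. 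Its eigenvalues are $(1+\lambda_n^{2})^{-d/2}$, where $0\le\lambda_0\le\lambda_1\le\cdots\to\infty$ are the eigenvalues of $\vert\Ds\vert$ repeated with multiplicity. Using $A$ rather than a putative $\vert\Ds\vert^{-d}$ avoids any issue with $\ker\Ds$, and since $(1+\lambda_n^{2})^{-d/2}\sim\lambda_n^{-d}$ this changes neither the Dixmier trace nor the leading term of the counting function.

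The next step is to compute $\Tr_{Dix}(A)$. By Proposition \ref{carac} the principal symbol of $\Ds$ is $\sigma_1^{\Ds}(x,\xi)=c(\xi)$, so by Theorem \ref{pseudo} and \eqref{cliffmetric} one gets $\sigma_2^{B}(x,\xi)=c(\xi)^{2}=\vert\xi\vert_x^{2}\,\mathrm{id}_{S_x}$, whence $\sigma_{-d}^{A}(x,\xi)=\vert\xi\vert_x^{-d}\,\mathrm{id}_{S_x}$ with fibrewise trace $2^{d}\,\vert\xi\vert_x^{-d}$, $2^{d}$ being the rank of the spinor bundle. Connes' trace theorem then yields $A\in\L^{1,\infty}(\H)$, $A$ measurable, and $\Tr_{Dix}(A)=\tfrac{1}{d}\,\text{WRes}(A)=\tfrac{1}{d}\int_M c_A(x)\,\vert dx\vert$, where $c_A(x)=(2\pi)^{-d}\int_{\mathbb{S}^{d-1}}\tr\big(\sigma_{-d}^{A}(x,\xi)\big)\,d\xi$. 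Repeating the elementary computation of Example \ref{WResLaplacian} — namely $\int_{\mathbb{S}^{d-1}}\vert\xi\vert_x^{-d}\,d\xi=\sqrt{\det g_x}\,\Vol(\mathbb{S}^{d-1})$, so that $c_A(x)\,\vert dx\vert=\tfrac{2^{d}\,\Vol(\mathbb{S}^{d-1})}{(2\pi)^{d}}\,\vert dvol_g\vert$ — one obtains $\Tr_{Dix}(A)=\tfrac{2^{d}\,\Vol(\mathbb{S}^{d-1})}{d\,(2\pi)^{d}}\,\Vol(M)=:L$.

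Finally I would descend from the Dixmier trace to the counting function. Since $A\in\K_+(\H)$ is measurable, the Remark following Theorem \ref{TrDix} gives $\tfrac{1}{\log N}\sum_{n=0}^{N}(1+\lambda_n^{2})^{-d/2}\to L$ as $N\to\infty$, and because $(1+\lambda_n^{2})^{-d/2}\sim\lambda_n^{-d}$ the same holds with $\lambda_n^{-d}$ in place of $(1+\lambda_n^{2})^{-d/2}$. As the sequence $(\lambda_n^{-d})_n$ is nonincreasing, a Hardy--Littlewood--Karamata Tauberian theorem (the monotone case; cf.\ \cite{Hardy}, already used in Example \ref{Dixtrace}) upgrades this logarithmic Ces\`aro average to the sharp estimate $n\,\lambda_n^{-d}\to L$, i.e.\ $\lambda_n\sim(n/L)^{1/d}$; since $(\lambda_n)$ is nondecreasing this is equivalent to $N_{\vert\Ds\vert}(\lambda)=\#\set{n\mid\lambda_n\le\lambda}\sim L\,\lambda^{d}$, which is the asserted asymptotics. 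The hard part — everything else being routine once Connes' theorem is granted — will be this last Tauberian step: measurability of $A$ only controls logarithmically averaged partial sums, and it is the monotonicity of $(\lambda_n^{-d})$ that makes legitimate the passage to the pointwise behaviour $n\lambda_n^{-d}\to L$ and thence to $N_{\vert\Ds\vert}$; one should also check (immediate) that neither the interchange of $(1+\lambda_n^{2})^{-d/2}$ with $\lambda_n^{-d}$ nor the conversion of $\sum_{n\le N}\lambda_n^{-d}$ into the counting function affects the leading coefficient $L$.
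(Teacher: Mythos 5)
Your reduction to $A=(1+\Ds^{2})^{-d/2}$, the symbol computation and the evaluation $\Tr_{Dix}(A)=\tfrac{2^{d}\Vol(\mathbb{S}^{d-1})}{d(2\pi)^{d}}\Vol(M)=:L$ are fine (and consistent with Example \ref{WResLaplacian}). The proof breaks at the very step you yourself flag as ``the hard part''. Measurability of $A$ gives you exactly $\tfrac{1}{\log N}\sum_{n\le N}\mu_n\to L$ for the nonincreasing sequence $\mu_n=(1+\lambda_n^{2})^{-d/2}$, and you claim a monotone Karamata theorem upgrades this to $n\mu_n\to L$. That implication is false: logarithmic Ces\`aro convergence of a positive nonincreasing sequence does not determine its pointwise behaviour. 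Concretely, let $a_n=1/n$ except on sparse blocks $[N_k,2N_k]$ (say $N_{k+1}=N_k^{2}$) where $a_n:=1/N_k$; this sequence is positive and nonincreasing, each block changes the partial sums by $O(1)$, so $\sum_{n\le N}a_n=\log N+O(\log\log N)\sim\log N$, yet $na_n$ oscillates between $1$ and $2$. The monotone Karamata argument works for linear means ($\sum_{n\le N}a_n\sim LN\Rightarrow a_n\to L$) because the interval $[N,(1+\eps)N]$ carries a share $\sim L\eps N$ of the sum that dominates the admissible $o(N)$ error; for logarithmic means the share of $[N,cN]$ is the \emph{constant} $L\log c$, which is swamped by the admissible $o(\log N)$ error --- exactly what the counterexample exploits. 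The Tauberian theorem used in Example \ref{Dixtrace} goes in the opposite, valid, direction: its input is the Abel/heat mean $\Tr(e^{-t(1+\Delta)})\sim\alpha\,t^{-d/2}$, a strictly stronger datum than the Dixmier trace. In short, Weyl's law determines the Dixmier trace, not conversely; no amount of monotonicity lets you invert this.

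For contrast, the paper's proof is short and elementary: it quotes the classical Weyl asymptotics $N_\Delta(\lambda)\sim\tfrac{\Vol(\mathbb{S}^{d-1})}{d(2\pi)^{d}}\Vol(M)\,\lambda^{d/2}$ for the scalar Laplacian, multiplies by the rank of the spinor bundle for the spinor Laplacian, and invokes the Schr\"odinger--Lichnerowicz formula (Proposition \ref{Lich}) to see that $\Ds^{2}$ differs from $\Delta^{S}$ by the bounded operator $\tfrac14 s$, which cannot affect the leading term; the substitution $\lambda\mapsto\lambda^{2}$ converts $\lambda^{d/2}$ into $\lambda^{d}$. If you want to keep a ``spectral'' route, the correct repair is to feed a genuine Karamata theorem with the heat-trace asymptotics $\Tr(e^{-t\Ds^{2}})\sim a_0(\Ds^{2})\,t^{-d/2}$, as is done in the remark after Theorem \ref{WresD-p} --- but that is precisely the heat-kernel input you set out to avoid, and note also that the normalisation of Connes' trace theorem \ref{TrDix} in these notes is itself calibrated by the torus computation of Example \ref{Dixtrace}, which already uses Weyl-type asymptotics.
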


\begin{proof}
By Weyl's theorem, we know the asymptotics of $N_{\Delta}(\lambda)$ for the the scalar Laplacian 
$\Delta \vcentcolon = -\Tr_g\big(\nabla^{T^*M \otimes T^*M} \circ \nabla^{T^*M} \big)$ which in coordinates is 
$\Delta= -g^{ij}(\partial_i\partial_j-\Gamma_{ij}^k \partial_k)$. It is given by:
$$
N_\Delta(\lambda)  \underset{\lambda \to \infty}{\sim} \,\tfrac{\text{Vol}(\mathbb{S}^{d-1})}{d(2\pi)^d}  \text{ Vol}(M) \, 
\lambda^{d/2}.
$$
For the spinor Laplacian, we get the same formula with an extra factor of $\Tr(1_S)=2^d$ and 
Proposition \ref{Lich} shows that $N_{\Ds^2}(\lambda)$ has the same asymptotics than $N_\Delta(\lambda)$ 
since $s$ gives rise to a bounded operator.
\end{proof}
We already encounter such computation in Example \ref{Dixtrace}.

\subsection{Dirac operators and change of metrics}

Recall that the spinor bundle $S_g$ and square integrable spinors $\H_g$ defined in \eqref{Hilbert} depends on the 
chosen metric $g$, so we note $M_g$ instead of $M$ and $\H_g:=L^2(M_g,S_g)$ and a natural question is: what 
happens to a Dirac operator when the metric changes?

Let $g'$ be another Riemannian metric on $M$. Since the space of $d$-forms is one-dimensional, there exists a 
positive function $f_{g,g'}:M \to \R^+$ such that $dvol_{g'}=f_{g',g}\,dvol_{g}$. 

Let $I_{g,g'}(x):  S_g \to  S_{g'}$ the natural injection on the spinors spaces above point $x\in M$ 
which is a pointwise linear isometry: $\vert I_{g,g'}(x) \,\psi(x) \vert_{g'}=\vert \psi(x) \vert_{g}$. Let us first see its 
construction: there always exists a $g$-symmetric automorphism $H_{g,g'}$ of the $2^{\lfloor d/2 \rfloor}$-
dimensional vector 
space $TM$ such that $g'(X,Y)=g(H_{g,g'}X,Y)$ for $X,Y \in TM$ so define $\iota_{g,g'} \,X\vc H_{g,g'}^{-1/2}\, X$. 
Note that $\iota_{g,g'}$ commutes with right action of the orthogonal group $O_d$ and can be lifted up to a 
diffeomorphism $Pin_d$-equivariant on the spin structures associated to $g$ and $g'$ and this lift is denoted by 
$I_{g,g'}$ (see \cite{Bourguignon}). This isometry is extended as operator on the Hilbert spaces 
$I_{g,g'}: \H_g  \to \H_{g'}$ with $(I_{g,g'} \,\psi)(x):=I_{g,g'}(x)\,\psi(x)$. 

Now define
\begin{align}
\label{UI}
U_{g',g}:=\sqrt{f_{g,g'}}\,I_{g',g} \,: \H_{g'} \to \H_{g}\,.
\end{align}
Then by construction, $U_{g',g}$ is a unitary operator from $\H_{g'}$ onto $\H_{g}$: for $\psi' \in \H_{g'}$, 
\begin{align*}
\langle U_{g',g} \psi',U_{g',g} \psi' \rangle_{\H_g}&=\int_M \vert U_{g,g'} \psi' \vert^2_g \,dvol_g =
\int_M \vert I_{g',g} \psi' \vert ^2_g \,f_{g',g} \,dvol_g = \int_M \vert \psi'\vert^2_{g'} \, dvol_{g'}\\
&= 
\langle \psi_{g'},\psi_{g'} \rangle_{\H_{g'}}.
\end{align*}
So we can realize $\Ds_{g'}$ as an operator $D_{g'}$ acting on $\H_g$ with
\begin{align}
\label{UDU*}
D_{g'}:\H_g\to\H_{g} , \quad D_{g'}:=U_{g,g'}^{-1}\, \Ds_{g'} \, U_{g,g'}.
\end{align}
This is an unbounded operator on $\H_g$ which has the same eigenvalues as $\Ds_{g'}$. 

In the same vein, the $k$-th Sobolev space $H^k(M_g,S_g)$ (which is the completion of the space 
$\Gamma^\infty(M_g,S_g)$ under the norm 
$\norm{\psi}^2_k=\sum_{j=0}^k \int_M \vert \nabla^j \psi (x) \vert^2 \,dx$; be careful, $\nabla$ applied to 
$\nabla \psi$ is the tensor product connection on $T^*M_g\otimes S_g$ etc, see Theorem \ref{compact}) can be 
transported: the map $U_{g,g'}:\,  H^k(M_{g},S_{g}) \to H^k(M_{g'},S_{g'})$ is an isomorphism, see\cite{Schopka}. 
In particular, (after the transport map $U$), the domain of $D_{g'}$ and $\Ds_{g'}$ are the same.

A nice example of this situation is when $g'$ is in the conformal class of $g$ where we can compute explicitly 
$\Ds_{g'}$ and $D_{g'}$ \cite{Bourguignon,Ginoux, Hijazi, Baer}.

\begin{theorem}
Let $g'=e^{2h}g$ be a conformal transformation of $g$ with $h\in C^\infty(M,\R)$. Then there exists an  
isometry $I_{g,g'}$ between the spinor bundle $S_g$ and $S_{g'}$ such that 
\begin{align*}
& \Ds_{g'} \,I_{g,g'} \, \psi= e^{-h}  \, I_{g,g'}\, \big(\Ds_g \,\psi-i  \tfrac{d-1}{2} \, c_g(\grad h)\,\psi \big),\\
& \Ds_{g'}=e^{-\tfrac{d+1}{2}h} \, I_{g,g'}\, \Ds_g \, I_{g,g'}^{-1} \, e^{\tfrac{d-1}{2}h},\\
& D_{g'}=e^{-h/2}\,\Ds_g\, e^{-h/2}.
\end{align*}
for $\psi \in \Gamma^\infty(M,S_g)$.
\end{theorem}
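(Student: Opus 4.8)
The plan is to establish the three displayed formulas in sequence, the first being the substantive one and the other two being algebraic reorganizations of it together with the bookkeeping of the unitary $U_{g,g'}$ from \eqref{UI}. First I would set up the conformal change very explicitly: for $g'=e^{2h}g$ one has $dvol_{g'}=e^{dh}\,dvol_g$, so $f_{g',g}=e^{dh}$, and the $g$-symmetric automorphism $H_{g,g'}$ of $TM$ implementing $g'(X,Y)=g(H_{g,g'}X,Y)$ is simply $e^{2h}\,\mathrm{id}_{TM}$; hence $\iota_{g,g'}=H_{g,g'}^{-1/2}=e^{-h}\,\mathrm{id}$. This means the isometry $I_{g,g'}(x):S_g\to S_{g'}$ is essentially the identity on spinors (after the canonical identification $S_g\simeq S_{g'}$ permitted because the orthonormal frame bundles are canonically isomorphic via scaling), while the Clifford multiplications are related by $c_{g'}(\xi)=e^{-h}\,I_{g,g'}\,c_g(\xi)\,I_{g,g'}^{-1}$ for a covector $\xi$, since a $g'$-unit covector is $e^{-h}$ times a $g$-unit covector.

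Next I would compute the change in the spin connection. The Levi-Civita connections are related by the classical conformal change formula $\nabla^{LC,g'}_X Y=\nabla^{LC,g}_X Y+X(h)Y+Y(h)X-g(X,Y)\,\mathrm{grad}_g h$; lifting this to the spin connection via the usual formula $\nabla^S=d+\tfrac14\sum_{ij}\omega_{ij}\,c(e^i)c(e^j)$ and tracking how the connection one-form transforms, one gets, after contracting with Clifford multiplication, the identity
\begin{align*}
\Ds_{g'}\,I_{g,g'}\,\psi=e^{-h}\,I_{g,g'}\Bigl(\Ds_g\psi-i\,\tfrac{d-1}{2}\,c_g(\mathrm{grad}_g h)\,\psi\Bigr).
\end{align*}
The factor $e^{-h}$ in front comes from the rescaling of the Clifford multiplication noted above (the operator $\Ds_{g'}=-i\,c_{g'}\circ\nabla^{S,g'}$ carries one factor of $c_{g'}$), and the $\tfrac{d-1}{2}$ coefficient emerges because the $d$ terms of the form $c(e^i)c(\mathrm{grad}\,h\lrcorner\,e^i)$ collapse: $d-1$ of them survive after using the Clifford relation $c(e^i)c(e^i)=1$ to cancel one. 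I would then rewrite this using $c_g(\mathrm{grad}_g h)\,I_{g,g'}^{-1}=I_{g,g'}^{-1}e^{?}c(\cdot)$ and the Leibniz-type identity $\Ds_g(e^{ah}\psi)=e^{ah}\Ds_g\psi-ia\,e^{ah}c_g(\mathrm{grad}_g h)\psi$ to absorb the gradient term into conjugation by a power of $e^h$, yielding the second formula $\Ds_{g'}=e^{-\frac{d+1}{2}h}\,I_{g,g'}\,\Ds_g\,I_{g,g'}^{-1}\,e^{\frac{d-1}{2}h}$; one checks the exponents add up ($-\tfrac{d+1}{2}+\tfrac{d-1}{2}=-1$, matching the scalar prefactor $e^{-h}$).

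Finally, the third formula is a direct consequence of the definitions \eqref{UI} and \eqref{UDU*}: with $U_{g,g'}=\sqrt{f_{g,g'}}\,I_{g,g'}=e^{-dh/2}I_{g,g'}$ (note $f_{g,g'}=f_{g',g}^{-1}=e^{-dh}$), one has $D_{g'}=U_{g,g'}^{-1}\Ds_{g'}U_{g,g'}=e^{dh/2}I_{g,g'}^{-1}\,\Ds_{g'}\,I_{g,g'}e^{-dh/2}$; substituting the second formula and collecting the powers of $e^h$ on each side — $e^{dh/2}\cdot e^{-\frac{d+1}{2}h}=e^{-h/2}$ on the left and $e^{\frac{d-1}{2}h}\cdot e^{-dh/2}=e^{-h/2}$ on the right — gives $D_{g'}=e^{-h/2}\,\Ds_g\,e^{-h/2}$, manifestly selfadjoint on $\H_g$ as it should be. I expect the main obstacle to be the second step: carefully deriving the conformal transformation of the spin connection one-form and verifying that the contraction with Clifford multiplication produces exactly the coefficient $\tfrac{d-1}{2}$ and no spurious lower-order terms; this is a genuine computation with the structure equations rather than formal bookkeeping, and it is where sign and normalization conventions (the $-i$ in $\Ds=-ic\circ\nabla^S$, the factor $\tfrac14$ in the spin connection) must be handled consistently. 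The remaining algebra — commuting powers of $e^h$ past $\Ds_g$ using the Leibniz rule and matching exponents — is routine once that identity is in hand.
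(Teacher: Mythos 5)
Your proposal follows essentially the same route as the paper: construct $I_{g,g'}$ from the frame-bundle scaling $X\mapsto e^{-h}X$, derive the conformal change of the Levi-Civita connection (the paper does this from the Koszul formula \eqref{gnabla}, you invoke the standard formula), lift it to the spin connection, contract with Clifford multiplication to obtain the first identity, then use $[\Ds_g,f]=-ic_g(\grad f)$ with $f=e^{-\frac{d-1}{2}h}$ for the second formula and unwind \eqref{UI}--\eqref{UDU*} for the third. The only imprecision is your account of where $\tfrac{d-1}{2}$ comes from: in the paper's computation it arises as $\tfrac d2-\tfrac12$, the $\tfrac d2$ from summing $\tfrac12 c_g(e_k)c_g(e_k)c_g(\grad h)$ over $k$ and the $-\tfrac12$ from the scalar term $-\tfrac12 e_k(h)$ in \eqref{Delta-Delta'}, rather than from ``$d-1$ surviving terms''.
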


\begin{proof}
The isometry $X \to X' \vc e^{-h} X $ from $(TM, g)$ onto $(TM, g')$ defines a principal bundle isomorphism 
$SO_g(TM) \to SO_{g'}(TM)$ lifting to the spin level. More precisely, it induces	a vector-bundle isomorphism 
$I_{g,g'}:\,S_g \to S_{g'}$, preserving the pointwise hermitean inner product (i.e. $H_{g,g'}=e^{2h}$), such that 
$e^{-h}\, c_{g'}(X)\, I_{g,g'} \,\psi=c_{g'}(X') I_{g,g'} \,\psi=I_{g,g'} \,c_g(X) \psi$. 

For a connection 
$\widetilde \nabla$ compatible with a metric $k$ and without torsion, we have for $X,Y,Z$ in $\Gamma^\infty(TM)$
\begin{align}
\label{gnabla}
2k(\widetilde\nabla_XY,Z)&=k([X,Y],Z)+k([Z,Y],X)+k([Z,X],Y)  \nonumber\\
&\qquad+X\cdot k(Y,Z)+Y\cdot k(X,Z)-Z\cdot k(X,Y)
\end{align}
which is obtained via $k(\widetilde \nabla_X Y,Z)+ k(Y,\widetilde\nabla_X Z)=X \cdot k(Y,Z)$ minus two cyclic 
permutations.\\
The set $\set{e'_j \vc e^{-h}e_j \, \vert \,1\leq j \leq d}$ is a local $g'$-orthonormal basis of $TU$ 
for $g'$ if and only if $\set{e_j}$ is a local $g$-orthonormal basis of $TU$ where $U$ is a trivializing open subset 
of $M$. Applying \eqref{gnabla} to ${\nabla'}^{LC}$, we get
\begin{align*}
2g'({\nabla'}^{LC}_X e'_i,e'_j)& = e^{2h}g([X,e^{-h}e_i],e^{-h}e_j)+e^{2h}g([e^{-h}e_j,e^{-h}e_i],X)
+e^{2h}g([e^{-h}e_j,X],e^{-h}e_i) \\
&\qquad+X\cdot e^{2h}g(e^{-h}e_i,e^{-h}e_j)+e^{-h}e_i\cdot e^{2h}g(X,e^{-h}e_j) 
- e^{-h}e_j\cdot e^{2h}g(X,e^{-h}e_i)\\
&=2g(\nabla_X^{LC} e_i, e_j) +2(e_i\cdot h)\, g(e_j,X)- 2(e_j\cdot h)\,g(e_i,X).
\end{align*}
Since $\nabla^{S_g}_X\,\psi=-\tfrac{1}{4} g(\nabla_X^{LC} e_i,e_j)\,c_g(e_i)\,c_g(e_j) \,\psi$, for 
$\psi \in \Gamma^\infty(U,S_g)$,
\begin{align}
\label{Delta-Delta'}
\nabla_{e_k}^{S_{g'}}\, I_{g,g'} \,\psi=I_{g,g'} \,\big[ \nabla^{S_g}_{e_k} 
+\tfrac{1}{2} c_{g}(e_k)\,c_g(\grad h)-\tfrac{1}{2} e_k(h) \big] \,\psi.
\end{align}
Hence
\begin{align*}
\Ds_{g'}\, I_{g,g'}\,\psi &=-ic_{g'}(e_k'\,^)\nabla^{S_{g'}}_{e'_k} \, I_{g,g'} \,\psi
= -ie^{-h}\,c_{g'}(e_k')\nabla^{S_{g'}}_{e_k}\, I_{g,g'}\, \psi\\
&=-ie^{-h}\,c_{g'}(e_k') I_{g,g'} \, \big[\nabla^{S_g}_{e_k} 
+\tfrac{1}{2} \, c_{g}(e_k)\,c_g(\grad h) -\tfrac{1}{2} e_k(h) \big] \,\psi\\
&= -ie^{-h} \, I_{g,g'}\, c_{g}(e_k)\,\big[\nabla^{S_g}_{e_k} 
+\tfrac{1}{2} \, c_{g}(e_k)\,c_g(\grad h) -\tfrac{1}{2} e_k(h) \big]\,\psi\\
&= e^{-h}  \, I_{g,g'}\, \big[\Ds_g \,\psi-i  \tfrac{d-1}{2} \, c_g(\grad h)\big]\,\psi.
\end{align*}
So, using $[\Ds_g\,,f]=-ic_g(\grad f)$ for $f=e^{-\tfrac{d-1}{2}h}$,
\begin{align*}
\Ds_{g'} \,e^{-\tfrac{d-1}{2}h}\, I_{g,g'} \, \psi &= e^{-h} \,   I_{g,g'}\, \big[ \Ds_g \,e^{-\tfrac{d-1}{2}h} \,\psi 
- i\tfrac{d-1}{2} \, e^{-\tfrac{d-1}{2}h}\,c_g(\grad h) \,\psi \big] \\
&=e^{-h} \,   I_{g,g'} \big[ \,e^{-\tfrac{d-1}{2}h}\,\Ds_g \,\psi + [\Ds_g, e^{-\tfrac{d-1}{2}h}] \,\psi- i\tfrac{d-1}{2} \, 
e^{-\tfrac{d-1}{2}h}\,c_g( \grad h) \,\psi \big] \\
&=e^{-h} \,   I_{g,g'} \big[ \,e^{-\tfrac{d-1}{2}h}\,\Ds_g \,\psi -\tfrac{d-1}{2} \,e^{-\tfrac{d-1}{2}h}\,(-i)c_g(\grad h) \,\psi \\
&\hspace{2cm}- i\tfrac{d-1}{2} \, e^{-\tfrac{d-1}{2}h}\,c_g(\grad h)\,\psi\big]   \\
&= e^{-\tfrac{d+1}{2}h}  I_{g,g'}  \,\Ds_g \,\psi.
\end{align*}
Thus $\Ds_{g'}=e^{-\tfrac{d+1}{2}h} \, I_{g,g'}\, \Ds_g \, I_{g,g'}^{-1} \, e^{\tfrac{d-1}{2}h}$ and since 
$dvol_{g'}=e^{dh}\, dvol_g$, using \eqref{UI}
\begin{align*}
\Ds_{g'}=e^{-h/2}\, U_{g',g}^{-1} \, \Ds_g \, U_{g',g} \, e^{-h/2}.
\end{align*}
Finally, \eqref{UDU*} yields $D_{g'}=e^{-h/2}\,\Ds_g\, e^{-h/2}$.
\end{proof}
Note that $D_{g'}$ is not a Dirac operator as defined in \eqref{theDirac} since its principal symbol has an 
$x$-dependence:  $\sigma^{D_{g'}}(x,\xi)=e^{-h(x)}\, c_{g}(\xi)$. \\
The principal symbols of $\Ds_{g'}$ and $\Ds_g$ are related by
\begin{align*}
\sigma^{\Ds_{g'}}_d(x,\xi)=e^{-h(x)/2} \,U_{g',g}^{-1}(x)  \,\sigma^{\Ds_g}_d(x,\xi) \, U_{g',g}(x) \,e^{-h(x)/2}, 
\quad \xi \in T^*_xM.
\end{align*}
Thus
\begin{align}
\label{cg}
c_{g'}(\xi)=e^{-h(x)} \,U_{g',g}^{-1}(x)  \, c_g(\xi) \, U_{g',g}(x),  \quad \xi \in T^*_xM.
\end{align}
Using $c_g(\xi)c_g(\eta)+c_g(\eta)c_g(\xi)=2g(\xi,\eta)\, \text{id}_{S_g}$, formula \eqref{cg} gives a verification of 
the formula $g'(\xi,\eta)=e^{-2h} g(\xi,\eta)$.

Note that two volume forms $\mu,\mu'$ on a compact connected manifold $M$ are related by an orientation 
preserving diffeomorphism $\a$ of $M$ in the following sense \cite{Moser}: there exists a constant 
$c={(\int_M \mu')}^{-1}\,\int_M \mu$ such that $\mu=c\,\,\a^*\mu'$ where $\a^*\mu'$ is the pull-back of $\mu'$ (i.e. 
$\int_{\a(S)} \a^*\mu=\int_S \mu$ for any set $S\subset M$). The proof is based on the construction of an orientation 
preserving automorphism homotopic to the identity.

It is also natural to look at the changes on a Dirac operator when the metric $g$ is modified by a diffeomorphism 
$\a$ which preserves the spin structure. The diffeomorphism $\a$ can be lifted to a diffeomorphism 
$O_d$-equivariant on the $O_d$-principal bundle of $g$-orthonormal frames with 
$\tilde{\a} \vc  H_{\a^*g,g}^{-1/2}\,T\a$, and this lift also exists on $S_g$ when $\a$ preserves both the orientation 
and the spin structure. However, the last lift is defined up to a $\Z_2$-action which disappears if $\a$ is 
connected to the identity.

The pull-back $g'\vc \a^*g$ of the metric $g$ is defined by 
$(\a^*g)_x(\xi,\eta)=g_{\a(x)}(\a_*(\xi), \a_*\eta)$, $x\in M$, where $\a_*$ is the push-forward map 
$: \,T_xM \to T_{\a(x)}M$. Of course, the metric $g'$ and $g$ are different but the geodesic distances are the 
same. Let us check that $d_{g'}=\a^*d_g$:
\\
In local coordinates, we note $\partial_\mu \vc \partial/\partial x^\mu$ and 
$\partial'_{\mu'} \vc \partial/\partial (\a(x))^{\mu'}$. Thus $\partial' = (\Lambda^{-1T})\partial$ where 
$\Lambda^{\mu'}\,_{\mu} \vc \partial (\a(x))^{\mu'}/\partial x^\mu$. The dependence in the metric $g$ of Cristoffel 
symbols is  $\Gamma_{\mu \nu}^\rho = \half g^{\rho \beta}(\partial_\mu g_{\beta\nu}+ 
\partial_\nu g_{\mu\beta} -\partial_\beta g_{\mu\nu})$. Thus the same symbols $\Gamma'$ associated to $g'$ are 
\begin{align}
\label{Cristoffel}
{\Gamma'}^{\,\rho'}_{\mu' \nu'}=\Lambda^{\rho'}\,_\rho ( \Lambda^{-1T})_{\mu'}\,^\mu ( \Lambda^{-1T})_{\nu'}\,^\nu \,
\Gamma^{\rho}_{\mu\nu} + \Lambda^{\rho'}\,_\rho ( \Lambda^{-1T})_{\mu'}\,^\mu \, \partial _\mu \,
( \Lambda^{-1T})_{\nu'}\,^\nu \,.
\end{align}
The geodesic equation is 
$\ddot{x}^\rho+ \Gamma^\rho_{\mu\nu}\, \dot{x}^\mu\, \dot{x}^\nu=0$ for all $\rho$ (note that neither $x^\mu$ nor 
$\ddot{x}^\mu$ are 4-vectors in the sense that they are not transformed like 
$v'^{\,\mu'}=\Lambda^{\mu'}\,_\mu \,v^\mu$, while $\dot{x}^\mu$ is a 4-vector; in fact 
$\ddot{\a(x)}\,^{\rho'}=\Lambda^{\rho'}\,_\rho \, \ddot{x}^\rho + \partial_\mu \Lambda^{\rho'}\,_\rho \,\dot{x}^\mu \, 
\dot{x}^\rho$. This relation and \eqref{Cristoffel} give the invariance of the geodesic equation and the same for the 
distance since for any path $\ga$ joining points $x=\ga(0),\,y=\ga(1)$
$$
\int_0^1\sqrt{(\a^*g)_{\ga(t)}\big(\ga'(t),\ga'(t)\big)} \,dt = 
\int_0^1\sqrt{g_{\a \circ \ga(t)}\big((\a \circ \ga)'(t),(\a
\circ \ga)'(t)\big)} \,dt
$$
and $(\a\circ \ga)(0)=\a(x), (\a\circ\ga)(1)=\a(y)$. Note that $\a$ is an isometry only if $\a^*d_g=d_g$.

Recall that the principal symbol of a Dirac operator $D$ is $\sigma_d^D(x,\xi)=c_g(\xi)$ so gives the metric $g$ by 
\eqref{cliffmetric} as we checked above. This information will be used later in the definition of a spectral triple. A 
commutative spectral triple associated to a manifold generates the so-called Connes' distance which is nothing 
else but the metric distance; see the remark after \eqref{Cdist}. Again, the link between $d_{\a^*g}$ and 
$d_{g}$ is explained by \eqref{UDU*}, since the unitary induces an automorphism of the $C^*$-algebra 
$C^\infty(M)$.

\newpage
\section{Heat kernel expansion}
\label{Heatkernel}

References for this section: \cite{Gilkey,Gilkey2,Berline} and especially \cite{Vassilevich}.

Recall that the heat kernel is a Green function of the heat operator $e^{t\Delta}$ (recall that $-\Delta$ is a 
positive operator) which measures the temperature evolution in a 
domain whose boundary has a given temperature. For instance, the heat kernel of the Euclidean space 
$\R^d$ is 
\begin{align}
\label{hkt}
k_t(x,y) = \tfrac{1}{(4\pi t)^{d/2}} e^{-|x-y|^2/4t} \text{ for } x\neq y
\end{align} 
and it solves the heat equation
$$
\left\{ \begin{array}{ll}
\partial_t k_t(x,y) = \Delta_x k_t(x,y),\quad \forall t > 0,\,x,y \in \R^d\\
\text{initial condition: }\lim_{t\downarrow 0} k_t(x,y) = \delta(x-y).
\end{array}\right.
$$
Actually, $k_t(x,y)=\tfrac{1}{2\pi}\int_{-\infty}^\infty e^{-ts^2}\,e^{is(x-y)}\,ds$ when $d=1$.\\
Note that for $f\in\DD(\R^d)$, we have $ \lim_{t\downarrow 0}\int_{\mathbf{R}^d} k_t(x,y)\,f(y)\,dy = f(x)$.
\\
For a connected domain (or manifold with boundary with vector bundle $V$) $U$, let $\lambda_n$ be 
the eigenvalues for the Dirichlet problem of minus the Laplacian 
$$
\left\{ \begin{array}{ll}
-\Delta \phi = \lambda \psi  & \mathrm{in\ }\ U\\ \psi=0 & \mathrm{on\ }\ \partial U. 
\end{array}\right.
$$
If $\psi_n\in L^2(U)$ are the normalized eigenfunctions, the inverse Dirichlet Laplacian $\Delta^{-1}$ is a selfadjoint  
compact operator, $ 0 \leq \lambda_1 \leq \lambda_2\le \lambda_3\leq\cdots, \lambda_n\to\infty$.

The interest for the heat kernel is that, if $f(x)=\int_0^\infty dt\, e^{\,-tx} \,\phi(x)$ is the Laplace transform of 
$\phi$, then $\Tr\big(f(-\Delta)\big)=\int_0^\infty dt\, \phi(t)\,\Tr\big(e^{t\,\Delta}\big)$ (if everything makes sense)
is controlled by $\Tr\big(e^{t\Delta}\big)=\int_M dvol(x)\,\tr_{V_x}k_t(x,x)$ since 
$\Tr\big(e^{\,t\,\Delta}\big)=\sum_{n=1}^\infty e^{\,t\,\lambda_n}$ and
$$
k_t(x,y)=\langle x,e^{t\,\Delta}\,y \rangle=\sum_{n,m=1}^\infty 
\langle x, \psi_m\rangle\,\langle \psi_m , e^{t\,\Delta}\psi_n \rangle\,\langle \psi_n,y \rangle
= \sum_{n=1}^\infty \overline{{\psi_n(x)}} \,\psi_n(y)\, e^{t\,\lambda_n}.
$$
So it is useful to know the asymptotics of the heat kernel $k_t$ on the diagonal of $M\times M$ especially near $t=0$.

\subsection{The asymptotics of heat kernel}

Let now $M$ be a smooth compact Riemannian manifold without boundary, $V$ be a vector bundle over $M$ 
and $P\in \Psi DO^{m}(M,V)$ be a positive elliptic operator of order $m>0$. If $k_t(x,y)$ is the kernel of the heat 
operator $e^{-tP}$, then the following asymptotics exits on the diagonal: 
$$
k_t(x,x) \underset{t \downarrow 0^+}{\sim} \sum_{k=0}^\infty a_k(x) \,t^{(-d+k)/m}
$$
which means that
$$
\bigl| k_t(x,x)-\sum_{k=0}^na_k(x) \,t^{(-d+k)/m} \bigl|_{\infty,n} < c_n \,t^n \text{ for } 0<t<1
$$
where $\vert f \vert_{\infty,n}:=\sup_{x\in M} \sum_{\vert\a\vert\leq n} \vert \partial_x^\a f \vert$ (since $P$ is elliptic, 
$k_t(x,y)$ is a smooth function of $(t,x,y)$ for $t>0$, see \cite[section 1.6, 1.7]{Gilkey}).

More generally, we will use
$$
k(t,f,P)\vc \Tr\big(f\,e^{-tP}\big)
$$
where $f$ is a smooth function. We have similarly
\begin{align}
\label{heatasympt}
k(t,f,P) \underset{t \downarrow 0^+}{\sim} \sum_{k=0}^\infty a_k(f,P) \,t^{(-d+k)/m}.
\end{align}
The utility of function $f$ will appear later for the computation of coefficients $a_k$.\\
The following points are of importance:

\quad 1) The existence of this asymptotics is non-trivial \cite{Gilkey,Gilkey2}.

\quad 2) The coefficients $a_{2k}(f,P)$ can be computed locally as integral of local invariants: Recall that a locally 
computable quantity is the integral on the manifold of a local frame-independent smooth 
function of one variable, depending only on a finite number of derivatives of a finite number of terms in the 
asymptotic expansion of the total symbol of $P$. \\
In noncommutative geometry, local generally means that it is concentrated at infinity in momentum space.

\quad 3) The odd coefficients are zero: $a_{2k+1}(f,P)=0$.
\\
For instance, let us assume from now on that $P$ is a Laplace type operator of the form 
\begin{align}
\label{Lapl}
P = - (g^{\mu\nu} \del_\mu \del_\nu + \mathbb{A}^\mu\del_\mu +\mathbb{B}) 
\end{align}
where $(g^{\mu\nu})_{1\leq \mu,\nu\leq d}$ is the inverse matrix associated to the metric $g$ on $M$, and 
$\mathbb{A}^\mu$ and $\mathbb{B}$ are smooth $L(V)$-sections on $M$ (endomorphisms) (see also Definition 
\ref{deflaplace}). Then (see \cite[Lemma 1.2.1]{Gilkey2}) there is a unique connection $\nabla$ on $V$ and a 
unique endomorphism $E$ such that 
\begin{align*}
&P=  -(\Tr_g \nabla^2  + E), \quad \nabla^2(X,Y):= [\nabla_X,\nabla_Y] -\nabla_{\nabla^{LC}_X Y} \, ,
\end{align*}
$X,Y$ are vector fields on $M$ and $\nabla^{LC}$ is the Levi-Civita connection on $M$. Locally 
$$
\Tr_g \nabla^2 := g^{\mu\nu}(\nabla_\mu \nabla_\nu -\Ga^{\rho}_{\mu\nu} \nabla_\rho)
$$
where $\Ga^{\rho}_{\mu \nu}$ are the Christoffel coefficients of $\nabla^{LC}$.
Moreover (with local frames of $T^*M$ and $V$), $\nabla =
dx^\mu\ox (\del_\mu +\om_\mu)$ and $E$ are related to $g^{\mu\nu}$,
$\mathbb{A}^\mu$ and $\mathbb{B}$ through
\begin{align}
\om_\nu&=  \half g_{\nu\mu}(\mathbb{A}^\mu +g^{\sg\eps} \Ga_{\sg \eps}^{\mu}\ \text{id}_V )
\label{omeganu}\, ,\\
E&=  \mathbb{B}-g^{\nu\mu}(\del_{\nu} \om_\mu +\om_\nu\om_\mu -\om_\sg
\Ga_{\nu\mu}^\sg ) \label{EEquation}  \, .
\end{align}
In this case, the coefficients $a_k(f,P)=\int_M dvol_g \, \tr_V \big(f(x) \, a_k(P)(x)\big)$ and the 
$a_k(P)=c_i\,\a_k^i(P)$ are linear combination with constants $c_i$ of all possible independent invariants 
$\a_k^i(P)$ of dimension $k$ constructed from $E, \Omega, R$ and their 
derivatives ($\Omega$ is the curvature of the connection $\omega$, and $R$ is the Riemann curvature tensor). 
As an example, for $k=2$, $E$ and $s$ are the only independent invariants. \\
Point 3) follow since there is no odd-dimension invariant.

\subsection{Computations of heat kernel coefficients}

The computation of coefficients $a_k(f,P)$ is made by induction using first a variational method: for any 
smooth functions 
$f,h$ one has
\begin{align}
&\tfrac{d\,}{d\epsilon}_{\vert \epsilon=0}\,a_k(1,e^{-2\epsilon f}P)=(d-k)\,a_k(f,P),\label{akepsilon1}\\
& \tfrac{d\,}{d\epsilon}_{\vert \epsilon=0}\,a_k(1,P-\epsilon h)=a_{k-2}(h,P),\label{akepsilon2}\\
& \tfrac{d\,}{d\epsilon}_{\vert \epsilon=0}\,a_{d-2}(e^{-2\epsilon f}h, e^{-2\epsilon f}P)=0.\label{akepsilon3}
\end{align}
The first equation follows from
\begin{align*}
\tfrac{d\,}{d\epsilon}_{\vert \epsilon=0}\,\Tr\big(e^{-e^{-2\epsilon f} \,t P} \big)=-2t \tfrac{d\,}{dt} \Tr \big(f\,e^{-tP}\big)
\end{align*}
with an expansion in power series in $t$. Same method for \eqref{akepsilon2}.\\
For the proof of \eqref{akepsilon3}, we use $P(\epsilon,\delta) \vc e^{-2f}(P-\delta h)$; with \eqref{akepsilon1} 
for $k=d$, 
\begin{align*}
0=\tfrac{d\,}{d\epsilon}_{\vert \epsilon=0}\,a_d \big(1,P(\epsilon,\delta)\big),
\end{align*}
thus after a variation of $\delta$, 
\begin{align*}
0=\tfrac{d\,}{d\delta}_{\vert \delta=0}\, \tfrac{d\,}{d\epsilon}_{\vert \epsilon=0}\,a_d \big(1,P(\epsilon,\delta)\big)=
\tfrac{d\,}{d\epsilon}_{\vert \epsilon=0}\,\tfrac{d\,}{d\delta}_{\vert \delta=0}\, a_d \big(1,P(\epsilon,\delta)\big),
\end{align*}
we derive \eqref{akepsilon3} from \eqref{akepsilon2}.

The idea behind equations  \eqref{akepsilon1}, \eqref{akepsilon2} and \eqref{akepsilon3} is that \eqref{akepsilon2} 
shows dependence of coefficients $a_k$ on $E$, while the two others describe their behaviors under local scale 
transformations.

Then, the $a_k(P)=c_i \, \a_k^i(P)$ are computed with arbitrary constants $c_i$ (they are dependent only of the 
dimension $d$) and these constants are inductively calculated using \eqref{akepsilon1}, \eqref{akepsilon2} and 
\eqref{akepsilon3}. If $s$ is the scalar curvature and `;' denote multiple covariant derivative with respect to Levi-Civita 
connection on $M$, one finds, with rescaled $\a$'s,
\begin{align}
\label{coefficientsa}
a_{0}(f,P)&=(4\pi)^{-d/2}\,\int_M dvol_g\,\tr_V ( \a_0f) \, , \nonumber\\
a_{2}(f,P)&=\tfrac{(4\pi)^{-d/2}}{6} \int_Mdvol_g\, \tr_V\big[(f(\a_1E +\a_2 s)\big], \\
a_{4}(f,P)&=\tfrac{(4 \pi )^{-d/2}}{360} \int_M dvol_g \, \tr_V \big[ f(\a_3 E_{;kk}+\a_4Es+\a_5E^2+\a_6R_{;kk}
+\a_7 s^2 \nonumber\\
& \hspace{4.5cm} +\a_8 R_{ij}R_{ij}+\a_9 R_{ijkl}R_{ijkl}+\a_{10} \Omega_{ij}\Omega_{ij})  \big] \nonumber.
\end{align}
In $a_4$, they are no other invariants: for instance, $R_{ij;ij}$ is proportional to $R_{;ij}$.\\
Using the scalar Laplacian on the circle, one finds $\a_0=1$.\\
Using \eqref{akepsilon2} with $k=2$, under the change $P \to P-\epsilon h$, $E$ becomes $E+\epsilon h$, so
\begin{align*}
\tfrac{1}{6}\int_M dvol_g \,\tr_V(\a_1 h)=\int_M dvol_g\,\tr_V(h)
\end{align*}
yielding $\a_1=6$. For $k=4$, it gives now:
\begin{align*}
\tfrac{1}{360}\int_M dvol_g\,\tr_V(\a_4 h s+2\a_5 hE)=\tfrac{1}{6}\int_M dvol_g \,\tr_V(\a_1hE+\a_2hs),
\end{align*}
thus $\a_5=180$ and $\a_4=60 \a_2$.\\
To go further, one considers the scale transformation on $P$ given in \eqref{akepsilon1} and \eqref{akepsilon3}.
In \eqref{akepsilon1}, $P$ is transformed covariantly, the metric $g$ is changed into $e^{-2\epsilon f}g$ implying 
conformal transformation of the Riemann tensor, Ricci tensor and scalar curvature giving the modifications on 
$\omega$ and $E$ via \eqref{omeganu}, \eqref{EEquation}. This gives (we collect here all terms appearing in 
$a_2$ and only few terms appearing in $a_4$)
\begin{align*}
&\tfrac{d\,}{d\epsilon}_{\vert \epsilon=0}\,dvol_g=d \,f\,dvol_g,\\
& \tfrac{d\,}{d\epsilon}_{\vert \epsilon=0}\,E=-2fE+\tfrac{1}{2}(d-2)\,f_{;ii}\,,\\
& \tfrac{d\,}{d\epsilon}_{\vert \epsilon=0}\,s=-2fs-2(d-1)\,f_{;ii}\,,\\
& \tfrac{d\,}{d\epsilon}_{\vert \epsilon=0}\,Es=-4fEs+\tfrac{1}{2}(d-2)s\,f_{;ii}-2(d-1)f_{;ii}E\,,\\
& \tfrac{d\,}{d\epsilon}_{\vert \epsilon=0}\,E^2=-4fE^2+(d-2)\,f_{;ii}E \,,\\
& \tfrac{d\,}{d\epsilon}_{\vert \epsilon=0}\,s^2=-4fs^2-4(d-1)\,f_{;ii}s\,,\\
& \tfrac{d\,}{d\epsilon}_{\vert \epsilon=0}\,R_{ijkl}=
-2fR_{ijkl} +\delta_{jl}f_{;ik}+\delta_{;ik}f_{;jl}-\delta_{il}f_{;jk}-\delta_{jk}f_{;il}\,,\\
& \tfrac{d\,}{d\epsilon}_{\vert \epsilon=0}\,\Omega_{ij}\Omega_{ij}=-4f\,\Omega_{ij}\Omega_{ij}\,,\\
&\cdots
\end{align*}
Applying \eqref{akepsilon3} with $d=4$, we get
\begin{align*}
&\tfrac{d\,}{d\epsilon}_{\vert \epsilon=0}\,a_2(e^{-2\epsilon f}h,e^{-2\epsilon f}P)=0.
\end{align*}
Picking terms with $\int_M dvol_g\,\tr_V (hf_{;ii})$, we find $\a_1=6\a_2$, so $\a_2=1$ and $\a_4=60$. Thus 
$a_2(f,P)$ has been determined.\\
Similar method gives $a_4(f,P)$, but only after lengthy computation despite the use of Gauss--Bonnet theorem 
for the determination of $\a_{10}$! One finds:
$$
\a_3=60,\,\a_5=180,\,\a_6=12,\,\a_7=5,\,\a_8=-2,\,\a_9=2,\,\a_{10}=30.
$$

The coefficient $a_6$ was computed by Gilkey, $a_8$ by Amsterdamski, Berkin and O'Connor and $a_{10}$ in 
1998 by van de Ven \cite{Ven}. Some higher coefficients are known in flat spaces.

\subsection{Wodzicki residue and heat expansion}
\label{Wodzicki residue and heat expansion}
 
Wodzicki has proved that, in \eqref{heatasympt}, $a_k(P)(x) =\tfrac{1}{m}\,c_{P^{(k-d)/m}}(x)$ is true not only 
for $k=0$ as seen in Theorem \ref{TrDix} (where $P\leftrightarrow P^{-1}$), but for all $k\in \N$. In this section, we will 
prove this result when $P$ is is the inverse of a Dirac operator and this will be generalized in the next section.

Let $M$ be a compact Riemannian manifold of dimension $d$ even, $E$ a Clifford module over $M$ and 
$D$ be the Dirac operator (definition \ref{defdir}) given by a Clifford connection on $E$. By Theorem \ref{D=D*}, 
$D$ is a selfadjoint (unbounded) operator on $\H\vcentcolon = L^2(M,S)$. 

We are going to use the heat operator $e^{-tD^2}$ since $D^2$ is related to the Laplacian via the 
Schr\" odinger--Lichnerowicz formula \eqref{defdir} and since the asymptotics of the heat kernel of this Laplacian is 
known.

For $t>0$, we have $e^{-tD^2} \in \L^1$: the result follows from the decomposition  
$$
e^{-tD^2}= (1+D^2)^{(d+1)/2} \, e^{-tD^2} (1+D^2)^{-(d+1)/2},
$$
since $(1+D^2)^{-(d+1)/2} \in \L^1$ and the function: $\lambda \to (1+\lambda^2)^{(d+1)/2}e^{-t \lambda^2}$ 
is bounded. \\
Thus 
$\Tr \big( e^{-tD^2}\big)=\sum_n e^{-t\lambda_n^2} < \infty$. \\
Another argument is the following: 
$(1+D^2)^{-d/2}$ maps $L^2(M,S)$ into the Sobolev space $H^k(M,S)$ (see Theorem \ref{compact}) and the 
injection $H^k(M,S) \hookrightarrow L^2(M,S)$ is Hilbert--Schmidt operator for $k > \half d$. Thus $t\to e^{-tD^2}$ is
a semigroup of Hilbert--Schmidt operators for $t>0$.

Moreover, the operator $e^{-tD^2}$ has a smooth kernel since it is regularizing, see Remark \ref{defsmoothing} (or 
 \cite[Theorem 6.2]{Lawson}) and the asymptotics of its kernel is (recall \eqref{hkt}), see \cite[Theorem 2.30]{Berline}:
$$
k_t(x,y) \underset{t \downarrow 0^+}{\sim} \tfrac{1}{(4\pi t)^{d/2}} \sqrt{det\,g_x} \sum_{j\geq0} k_j(x,y) \, t^j 
\, e^{-d_g(x,y)^2/4t}
$$
where $k_j$ is a smooth section on $E^*\otimes E$. Thus 
\begin{align}
\label{traceasympt}
\Tr \big( e^{-tD^2}\big) \underset{t \downarrow 0^+}{\sim}  \sum_{j\geq 0} t^{(j-d)/2}\,a_j(D^2)
\end{align}
with for $j\in \N$,
$$
\left \{\begin{array}{ll} a_{2j}(D^2)\vcentcolon = \tfrac{1}{(4\pi)^{d/2}}\int_M \tr\big( k_j(x,x)\big) \sqrt{det\,g_x} 
\,\vert dx \vert  ,\\ 
a_{2j+1}(D^2)=0.
\end{array} \right.
$$
The aim now is to compute $\text{\it WRes}\big(D^{-p}\big)$ for an integer $p$ such that $0\leq p \leq d$:

\begin{theorem}
\label{WresD-p}
For any integer $p$, $0\leq p \leq d$, $D^{-p} \in \Psi DO^{-p}(M,E)$ and 
\begin{align*}
&\text{WRes}\big( D^{-p}\big)=0,\quad \text{for odd } p,\\
&\text{WRes}\big( D^{-p}\big)=\tfrac{2}{\Gamma(p/2)} a_{d-p}(D^2)= \tfrac{2}{(4\pi)^{d/2}\Gamma(p/2)} \int_M 
\tr \big( k_{(d-p)/2}(x,x) \big) \,dvol_g(x),\quad \text{for even }p.
\end{align*}
\end{theorem}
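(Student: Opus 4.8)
The plan is to treat the two parities of $p$ separately, in both cases reducing everything to the zeta function $\zeta_{D^2}(w):=\Tr\big((D^2)^{-w}\big)$ (defined by restriction to the orthogonal complement of the finite-dimensional space $\ker D$, so that $D^2$ is invertible there). First I would record that $D^{-p}\in\Psi DO^{-p}(M,E)$: by Theorem \ref{D=D*} the operator $D$ is elliptic of order $1$, so $D^2$ (after adding the smoothing projection onto $\ker D$) is a positive invertible elliptic operator of order $2$, and its complex powers $(D^2)^{-z}$ are classical pseudodifferential operators of order $-2z$ by the contour construction of Section \ref{Wodzicki residue}; then $D^{-p}=(D^2)^{-p/2}$ for $p$ even and $D^{-p}=D\,(D^2)^{-(p+1)/2}$ for $p$ odd, modulo smoothing operators, of order $-p$. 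Since by Theorem \ref{Wresresult}(ii) the Wodzicki residue does not depend on the auxiliary order-one elliptic operator $\DD$ entering \eqref{Wresdef}, I may take $\DD=\vert D\vert$ and work with $\text{WRes}(D^{-p})=\Res_{s=0}\Tr\big(D^{-p}\vert D\vert^{-s}\big)$ (the smoothing perturbation used to invert $D$ does not affect symbols, hence not $\text{WRes}$).

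For $p$ odd I would use the $\Z_2$-grading. Let $\chi=\mathrm{id}_{E^+}\oplus(-\mathrm{id}_{E^-})$ be the chirality; it is a genuine bundle endomorphism, hence an element of $\Psi DO^0(M,E)$ with $\chi^2=1$. Because $D$ is odd (Definition \ref{defdir}), $\chi D\chi^{-1}=-D$, and since $\chi$ preserves $\ker D$ this passes to the regularised inverse, giving $\chi D^{-p}\chi^{-1}=(-D)^{-p}=-D^{-p}$. As $\text{WRes}$ is a trace on the algebra $\Psi DO^{\Z}(M,E)$ (Theorem \ref{Wresresult}(iii)), $\text{WRes}(D^{-p})=\text{WRes}\big(\chi D^{-p}\chi^{-1}\big)=\text{WRes}(-D^{-p})=-\text{WRes}(D^{-p})$, so $\text{WRes}(D^{-p})=0$. (Equivalently, one checks that for $p$ odd the symbol of $D^{-p}$ is fibrewise an odd element of the Clifford algebra acting on $E_x$, whence $\tr\big(\sigma^{D^{-p}}_{-d}(x,\xi)\big)=0$ pointwise and $c_{D^{-p}}\equiv0$.)

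For $p$ even (and $p>0$; the case $p=0$ being $\text{WRes}(1)=0$) I would use the Mellin transform. Here $D^{-p}\vert D\vert^{-s}=(D^2)^{-(p+s)/2}$, so $\Tr\big(D^{-p}\vert D\vert^{-s}\big)=\zeta_{D^2}\big((p+s)/2\big)$, and for $\Re(w)>d/2$,
\begin{align*}
\Gamma(w)\,\zeta_{D^2}(w)=\int_0^\infty t^{w-1}\big(\Tr(e^{-tD^2})-\dim\ker D\big)\,dt .
\end{align*}
Splitting at $t=1$, the tail is entire, and on $(0,1)$ I would insert the asymptotic expansion \eqref{traceasympt}, $\Tr(e^{-tD^2})\sim\sum_{j\ge0}a_j(D^2)\,t^{(j-d)/2}$, together with a controlled remainder; integrating term by term shows $\Gamma(w)\zeta_{D^2}(w)$ is meromorphic with at most simple poles at $w=(d-j)/2$, of residue $a_j(D^2)$ (and an extra pole at $w=0$ from $\dim\ker D$). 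Since $0<p\le d$, $w=p/2$ is not a zero of $1/\Gamma$, so $\zeta_{D^2}$ has a simple pole at $w=p/2$ with residue $a_{d-p}(D^2)/\Gamma(p/2)$; therefore $\Tr\big(D^{-p}\vert D\vert^{-s}\big)$ has a simple pole at $s=0$ of residue $2\,a_{d-p}(D^2)/\Gamma(p/2)$, i.e. $\text{WRes}(D^{-p})=\tfrac{2}{\Gamma(p/2)}\,a_{d-p}(D^2)$. Finally I would substitute the local formula $a_{d-p}(D^2)=a_{2\cdot(d-p)/2}(D^2)=\tfrac{1}{(4\pi)^{d/2}}\int_M\tr\big(k_{(d-p)/2}(x,x)\big)\sqrt{\det g_x}\,\vert dx\vert$ to get the claimed expression.

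The main obstacle is the analytic bookkeeping in the even case: one must justify that $\zeta_{D^2}$ is holomorphic for $\Re(w)$ large, that the remainder terms in \eqref{traceasympt} are integrable and holomorphic in the relevant half-plane so that interchanging the series with the integral is legitimate, and that the finite-rank $\ker D$ contribution only affects the (irrelevant) pole at $w=0$. The odd case, by contrast, is purely algebraic and immediate once the $\DD$-independence and the trace property of $\text{WRes}$ from Theorem \ref{Wresresult} are granted.
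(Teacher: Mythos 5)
Your proof is correct, but both halves take a genuinely different route from the paper's. For odd $p$ the paper argues at the level of symbols: the relevant symbol of $D^{-p}$ is an odd product of $\gamma$-matrices whose pointwise trace vanishes, so $c_{D^{-p}}\equiv 0$ — exactly your parenthetical remark. Your primary argument, conjugation by the chirality together with the trace property of $\text{WRes}$ from Theorem \ref{Wresresult}(iii), is a clean structural alternative that avoids inspecting the symbol at all; just note explicitly that $\chi(D+P)\chi^{-1}=-(D-P)$ and that $(D+P)^{-p}$ and $(D-P)^{-p}$ differ by a smoothing operator, so the regularisation of the kernel is harmless. For even $p$ the paper starts from the same Mellin formula $D^{-p}=\Gamma(p/2)^{-1}\int_0^\infty t^{p/2-1}e^{-tD^2}\,dt$, but then works locally: it inserts the pointwise heat-kernel expansion into the Schwartz kernel of $D_\epsilon^{-p}$, isolates the coefficient of $\log d_g(x,y)$ near the diagonal, identifies it with $c_{D^{-p}}(x)$ via Theorem \ref{kernelbehavior}, and integrates using \eqref{Wresint}. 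You instead work globally, taking traces first and reading $\text{WRes}$ off \eqref{Wresdef} as $2\,\Res_{w=p/2}\,\zeta_{D^2}(w)$, computed from the poles of $\Gamma(w)\zeta_{D^2}(w)$. Your route is shorter and is essentially the mechanism the paper formalises later in \eqref{calculdea}, but it delivers only the number $\text{WRes}(D^{-p})$, whereas the paper's kernel computation exhibits the density $c_{D^{-p}}(x)$ pointwise, which is the finer local statement underlying \eqref{Wresint}.
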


\begin{proof}
Assume $D$ is invertible, otherwise swap $D$ for the invertible operator $D +P$ where $P$ is the projection 
on the kernel of $D$. Since the kernel is finite dimensional, $P$ has a finite rank and generates a smoothing 
operator. 
\\
Since the trace of $\gamma$-matrices in \eqref{gammamatrices} is zero, $\text{WRes}\big( D^{-p}\big)=0$ for 
$p$ odd. 
\\
Assuming now that $p$ is an even integer, by spectral theory,
\begin{align*}
D^{-p}=\vert D \vert^{-p}=\tfrac{1}{\Ga(p/2)} \int_0^\infty t^{p/2}\,e^{-t\,D^2}\, t^{-1}dt=\tfrac{1}{\Ga(p/2)}\, (\int_0^\epsilon 
+\int_\epsilon^\infty) \,t^{p/2}\,e^{-t\,D^2}\, t^{-1}dt
\end{align*}
The second integral is a smooth operator since the map $x \to \int_\epsilon^\infty  t^{p/2}\,e^{-tx^2}\, t^{-1}dt$ is in 
the Schwartz space $\SS$.\\
Define the first integral as the operator $D_\epsilon^{-p}$ and choose $\epsilon$ small enough 
such that for $0<t\leq \epsilon$ and $x$ and $y$ close enough,
$$
\vert \, k_t(x,y)-\tfrac{1}{(4\pi t)^{d/2}} \sum_{j=0}^{(d-p)/2} t^j \,\sqrt{\det\, g_x}\, k_j(x,y)\,e^{-d_g(x,y)^2/4t} \,\vert \leq 
c \,t^{p/2}\, e^{-d_g(x,y)^2/4t}.
$$
Thus
\begin{align*}
\Ga(\tfrac{d}{2})\,\tr \big( k^{D_{\epsilon}^{-p}}(x,y) \big)&=\int_0^\infty t^{p/2} \, \tr \big(k_t(x,y) \big) \,t^{-1} dt \\
&=\tfrac{\sqrt{\det\, g_x}}{(4\pi)^{d/2}} \sum_{j=0}^{(d-p)/2} \tr \big(k_j(x,y)\big) 
\int_0^\epsilon t^{j-(p-d)/2} e^{-d_g(x,y)^2/4t} \,dt\\
& \hspace{3cm}+ \mathcal{O} \big( \int_0^\epsilon e^{-d_g(x,y)^2/4t} dt \big).
\end{align*}
For $m$ integer and $\mu>0$, we get after a change of variable $t \to t^{-1}$, 
\begin{align*}
\int_0^\epsilon t^m \,e^{-\mu/t} \,t^{-1} dt=\mu^m \int_{\mu \epsilon}^\infty t^{-m}\,e^{-t} t^{-1}dt=
\left \{ \begin{array}{lll} \text{Polynomial in }\tfrac{1}{\mu} + \mathcal{O}(1) \text{ for }m<0 ,\\
-\log \, \mu +  \mathcal{O}(1) \text{ for }m=0,\\
 \mathcal{O}(1) \text{ for }m>0.
 \end{array} \right.
\end{align*}
Thus, the logarithmic behavior of $\Ga(\tfrac{d}{2})\,\tr \big( k^{D_{\epsilon}^{-p}}(x,y) \big)$ comes from
\begin{align*}
&\tfrac{\sqrt{\det\, g_x}}{(4\pi)^{d/2}} \,\tr \big(k_{(d-p)/2}(x,y)\big) \int_0^\epsilon e^{-d_g(x,y)^2/4t} t^{-1} dt\\
&\hspace{3cm}= \tfrac{\sqrt{\det\, g_x}}{(4\pi)^{d/2}} \,\tr\big(k_{(d-p)/2}(x,y)\big) \, \big( -\log \big(d_g(x,y)^2/4 \big) 
+ \mathcal{O}(1) \big)\\
&\hspace{3cm}= \tfrac{\sqrt{\det\, g_x}}{(4\pi)^{d/2}} \,\tr\big(k_{(d-p)/2}(x,y)\big) \, \big( -2\log \big(d_g(x,y) \big) 
+ \mathcal{O}(1) \big).
\end{align*}
Thus (see Theorem \ref{kernelbehavior} for the sign)
\begin{align*}
\text{\quad \it WRes}(\big( D^{-p}\big) & =\text{\it WRes}\big( D_\epsilon^{-p}\big)\\
&= \int_M c_{D_\epsilon^{-p}}(x)\, \vert dx \vert = \tfrac{2}{(4\pi)^{d/2}} \int_M \tr \big(k_{(d-p)/2}(x,x) \big) \sqrt{\det \,g_x}\, \vert dx \vert,
\end{align*}
which is, by definition, $\tfrac{2}{\Ga(p/2)}\,a_{d-p}(D^2)$.
\end{proof}
Few remarks are in order:

1) If $p=d$ is even, {\it WRes}$\big(D^{-d}\big)=\tfrac{2}{\Ga(p/2)}\,a_{0}(D^2)= \tfrac{2}{\Ga(p/2)}\,
\tfrac{Rank(E)}{(4\pi)^{d/2}} \Vol (M)$. 
\\
Since $\Tr(e^{-tD^2}) \underset{t \downarrow 0^+}{\sim} a_0(D^2)\,t^{-d/2}$, the Tauberian theorem used in Example 
\ref{Dixtrace} implies that $D^{-d}=(D^{-2})^{d/2}$ is measurable and we obtain Connes' trace theorem \ref{TrDix}
$$
\Tr_{Dix}(D^{-d})=\Tr_\omega (D^{-d})=\tfrac{a_0(D^2)}{\Ga(d/2+1)}=\tfrac{1}{d} \,\text{\it WRes} (D^{-d}).
$$

2) When $D=\Ds$ and $E$ is the spinor bundle, the Seeley-deWit coefficient $a_2(\Ds^2)$ (see 
\eqref{coefficientsa} with $f=1$) can be easily computed (see \cite{Gilkey,Polaris}): if $s$ is the scalar curvature,
\begin{align}
\label{a2(D2)}
a_2(\Ds^2)=-\tfrac{1}{12(4\pi)^{d/2}}\,\int_M s(x) \,dvol_g(x).
\end{align}
So {\it WRes}$\big(\Ds^{-d+2}\big)=\tfrac{2}{\Ga(d/2-1)}\,a_2(\Ds^2)=c\, \int_M s(x) \,dvol_g(x)$. This is a quite 
important result since this last integral is nothing else but the Einstein--Hilbert action \eqref{EH action}. 
In dimension 4, this is an example of invariant by diffeomorphisms, see \eqref{Invdiff}.

\newpage
\section{Noncommutative integration}
\label{Noncommutative integration}

We already saw that the Wodzicki residue is a trace and, as such, can be viewed as an integral. But of course, it is 
quite natural to relate this integral to zeta functions used in \eqref{Wresdef}: with notations of Section 
\ref{Wodzicki residue}, let $P\in \Psi DO^\Z(M,E)$ and $D \in \Psi DO^1(M,E)$ which is elliptic. The definition of 
zeta function
$$
\zeta_D^P(s) \vcentcolon = \Tr \big(P\, \vert D \vert^{-s} \big)
$$
has been useful to prove that $\text{\it WRes} \,P= \underset{s=0}{\Res}\,\zeta_D^P(s)=\int_M c_P(x) \, \vert dx\vert$.

The aim now is to extend this notion to noncommutative spaces encoded in the notion of spectral triple.

References: \cite{Book, ConnesMarcolli,Polaris,CM2,EILS}.

\subsection{Notion of spectral triple}

The main properties of a compact spin Riemannian manifold $M$ can be recaptured using the following triple 
$(\A=C^\infty(M), \H=L^2(M,S), \Ds)$. The coordinates $x=(x^1,\cdots,x^d)$ are exchanged with the algebra 
$C^\infty(M)$, 
the Dirac operator $\DD$ gives the dimension $d$ as we saw in Theorem \ref{Weyl}, but also the metric of $M$ via 
Connes formula and more generally generates a quantized calculus. The idea of noncommutative geometry is to 
forget about the commutativity of the algebra and to impose axioms on a triplet $(\A,\H,\DD)$ to generalize the above 
one in order to be able to obtain appropriate definitions of important notions: pseudodifferential operators, measure 
and integration theory, $KO$-theory, orientability, Poincar\'e duality, Hochschild (co)homology etc.

An important remark, probably due to Atiyah, is that the commutator of a pseudodifferential operator of order 1 (resp. 
order 0) with the multiplication by a function is a bounded operator (resp. compact). This is at the origin of the 
notion of Fredholm module (or K-cycle) with its K-homology class and via duality to its K-theory culminating with the 
Kasparov KK-theory. Thus, it is quite natural to define (unbounded) Fredholm module since for instance $\Ds$ is 
unbounded:

\begin{definition}
\label{deftriplet}
A spectral triple $(\A,\H,\DD)$ is the data of an involutive (unital) algebra $\A$ with a faithful representation $\pi$ on 
a Hilbert space $\H$ and a selfadjoint operator $\DD$ with compact resolvent (thus with discrete spectrum) such that 
$[\DD,\pi(a)]$ extends to a bounded operator for any $a \in \A$.
\end{definition}
\noindent We could impose the existence of a $C^*$-algebra $A$ such that 
$$\A:=\set{ a\in A \, \vert \, [\DD,\pi(a)] \text{ is bounded} }$$ is norm dense in $A$ so $\A$ is a pre-$C^*$-algebra 
stable by holomorphic calculus. Such $\A$ is always a $^*$-subalgebra of $A$.

When there is no confusion, we will write $a$ instead of $\pi(a)$.
\\
We now give useful definitions:

\begin{definition}
\label{defspectraltriple}
Let $(\A,\H,\DD)$ be a spectral triple.

It is even if there is a grading operator $\chi$ such that $\chi=\chi^*$, 
$$
[\chi,\pi(a)]=0, \, \forall a \in \A\text{ and }\DD \chi=-\chi \DD.
$$

It is real of KO-dimension $d \in \Z/8$ if there is an antilinear isometry $J:\H \to \H$ such that
$$
J\DD=\epsilon \,\DD J,\qquad J^2=\epsilon',\qquad J\chi=\epsilon'' \,\chi J
$$
with the following table for the signs $\epsilon, \epsilon',\epsilon''$
\begin{align}
\label{commu}
\begin{tabular}{|c| cccccccc|}
\hline
d & 0 & 1 & 2 & 3 & 4 & 5 & 6 & 7 \\
\hline
$\epsilon$ & 1&-1&1&1&1&-1&1&1\\
$\epsilon'$ &1&1&-1&-1&-1&-1&1&1\\
$\epsilon''$ &1& &-1 & &1 & &-1&\\
\hline
\end{tabular}
\end{align}
and the following commutation rules
\begin{align}
\label{oppcommut}
[\pi(a),\pi(b)^\circ)=0,\qquad \big[ [\DD,\pi(a)],\pi(b)^\circ \big]=0, \, \forall a,b\in \A
\end{align}
where $\pi(a)^\circ \vcentcolon = J \pi(a^*)J^{-1}$ is a representation of the opposite algebra $\A^\circ$.

It is $d$-summable (or has metric dimension $d$) if the singular values of $\DD$ behave like 
$\mu_n(\DD^{-1})=\mathcal{O}(n^{-1/d})$.

It is regular if $\A$ and $[\DD,\A]$ are in the domain of $\delta^n$ for all $n\in \N$ where 
$$
\delta(T) \vc [\vert \DD \vert ,T].
$$
(Recall that the domain of the unbounded derivation $\delta$ is the set of all bounded operators $T$ on $\H$ which map 
$\Dom(\vert \DD \vert)\subset \H$ into itself and $[\vert \DD \vert,T]$ can be (uniquely) extended to a bounded operator.)

It satisfies the finiteness condition if the space of smooth vectors $\H^\infty \vc \bigcap_k \Dom \DD^k$ is a finitely 
projective left $\A$-module.

It satisfies the orientation condition if there is a Hochschild cycle $c\in Z_d(\A,\A \otimes \A^\circ)$ such that 
$\pi_\DD(c)=\chi$, where 
$\pi_\DD\big( (a\otimes b^\circ)\otimes a_1 \otimes \cdots \otimes a_d \big) \vc \pi(a)\pi(b)^\circ[\DD,\pi(a_1)]
\cdots [\DD,\pi(a_d)]$ and $d$ is its metric dimension.
\end{definition}

The above definition of $KO$-dimension comes from the fact that a Dirac operator is a square root a Laplacian. 
This generates a sign problem which corresponds to a choice of a spin structure (or orientation). Up to some 
subtleties, the choice of a manifold of a chosen homotopy needs a Poincar\'e duality between homology and 
cohomology and the necessary refinement yields to the $KO$-homology introduced by Atiyah and Singer. 

An interesting example of noncommutative space of non-zero $KO$-dimension is given by the finite part of the 
noncommutative standard model \cite{Connesmixing, CCM,ConnesMarcolli}.

Moreover, the reality (or charge conjugation in the commutative case) operator $J$ is related to the problem of the 
adjoint: If $\M$ is a von Neumann algebra acting on the Hilbert space $\H$ with a cyclic and separating vector 
$\xi \in \H$ (which means $\M\xi$ is dense in $\H$ and $a\xi=0$ implies $a=0$, for $a\in \M$), then the closure $S$ 
of the map: $a\xi \to a^*\xi$ has an unbounded extension to $\H$ with a 
polar decomposition $S=J  \Delta^{1/2}$ where $\Delta\vc S^*S$ is a positive operator and $J$ is antilinear 
operator such that $J \M J^{-1}=\M'$, see Tomita theory in \cite{Takesaki}. This explains the commutation relations 
\eqref{oppcommut}. Moreover $\Delta^{it}\M\Delta^{-it}=\M$, a point related to Definition \ref{defOP}.

A fundamental point is that a reconstruction of the manifold is possible, starting only with a spectral triple where the 
algebra is commutative (see \cite{CReconstruction} for a more precise formulation, and also \cite{RV}):

\begin{theorem}
\label{Recons}
\cite{CReconstruction} Given a commutative spectral triple $(\A,\H,\DD)$ satisfying the above axioms, then there 
exists a compact spin$^c$ manifold $M$ such that $\A \simeq C^\infty(M)$ and $\DD$ is a Dirac operator.
\end{theorem}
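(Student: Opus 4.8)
The plan is to follow Connes' spectral reconstruction \cite{CReconstruction}: recover $M$ as the character space of the norm closure of $\A$, then upgrade Gelfand duality to the smooth and Riemannian levels using the remaining axioms. Since $\A$ is commutative, its norm closure $A$ is a commutative unital $C^*$-algebra, so $A\simeq C(X)$ with $X$ compact Hausdorff; as $\A$ is dense in $A$ and stable under holomorphic functional calculus, $(X,\A)$ is the candidate. The crucial first move is to produce local coordinates on $X$ from the orientation cycle: writing the Hochschild cycle of the orientation condition as $c=\sum_j (a^j\otimes b^{j\circ})\otimes a^j_1\otimes\cdots\otimes a^j_d$, the identity $\pi_\DD(c)=\chi$ together with regularity (all $a^j_i$ lie in $\bigcap_n\Dom\delta^n$) and the first-order symbolic behaviour of $[\DD,\cdot]$ should force that, near each point, some $d$-tuple $(a^j_1,\dots,a^j_d)$ has everywhere independent differentials. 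These tuples would give an atlas with smooth transition maps, so $X$ becomes a smooth compact manifold of dimension $d$ (consistent with $d$-summability and Weyl's law, Theorem \ref{Weyl}), and simultaneously $\A\subseteq C^\infty(X)$; the reverse inclusion follows from regularity and elliptic estimates, so $\A$ coincides with $C^\infty(X)$, henceforth denoted $C^\infty(M)$.

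Next I would reconstruct the geometry. The finiteness condition says $\H^\infty=\bigcap_k\Dom\DD^k$ is a finitely generated projective left $\A$-module, hence by the Serre--Swan theorem $\H^\infty\simeq\Gamma^\infty(M,E)$ for a smooth vector bundle $E$. Boundedness of $[\DD,a]$ and regularity make $\DD$ a pseudodifferential operator of order $1$ on $E$ in the sense of Section \ref{Wod}, with a well-defined principal symbol $\sigma_1^\DD(x,\xi)$; the order-one (reality) condition \eqref{oppcommut} then forces $\DD$ to be a genuine first-order \emph{differential} operator. The Clifford-type relation $\sigma_1^\DD(x,\xi)^2=\sigma_2^{\DD^2}(x,\xi)\,\mathrm{id}$ recovers a Riemannian metric $g$ on $M$ through $\sigma_2^{\DD^2}(x,\xi)=\vert\xi\vert^2_g\,\mathrm{id}$, exactly as in Proposition \ref{carac}, and Connes' distance formula matches the induced geodesic distance with the one the triple already carries.

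It then remains to identify the spin$^c$ structure. By the previous step $c(df)\vc i[\DD,f]$ is a Clifford action, so $E$ is a Clifford module over $M$; the real structure $J$ with its sign table \eqref{commu} supplies the charge conjugation, which is precisely a spin$^c$ structure on $M$ (equivalently, a Morita equivalence between $C(M)$ and $\Gamma(\CC\ell\,M)$). Hence $E$ contains the spinor bundle, $\DD$ is a Dirac operator in the sense of Definition \ref{defdir}, and assembling the steps produces the compact spin$^c$ manifold $M$ with $\A\simeq C^\infty(M)$ and $\DD$ a Dirac operator.

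The hard part will be the first paragraph: showing that the purely algebraic orientation cycle yields honest \emph{smooth} charts — that the character space $X$ is a manifold at all, with nondegenerate coordinate differentials and smooth transition functions — which requires a careful local analysis of the pseudodifferential calculus generated by $\DD$ and is the technical core of \cite{CReconstruction}. Once the smooth structure and the bundle $E$ are available, the identification of the metric, the differential-operator nature of $\DD$, and the spin$^c$ data are comparatively routine, though still lengthy.
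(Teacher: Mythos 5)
The paper offers no proof of this theorem: it is quoted verbatim from \cite{CReconstruction} (with a pointer to \cite{RV}), so there is nothing internal to compare your argument against. Judged on its own terms, your proposal is a correct \emph{roadmap} of Connes' reconstruction strategy — Gelfand duality for the $C^*$-closure, charts from the Hochschild orientation cycle, Serre--Swan for $\H^\infty$, the principal symbol of $\DD$ recovering the metric, and $J$ supplying the spin$^c$ data — but it is not a proof, and you say so yourself: the entire mathematical content of the theorem is concentrated in the step you defer, namely showing that the components $a^j_1,\dots,a^j_d$ of the cycle $c$ with $\pi_\DD(c)=\chi$ actually give local homeomorphisms onto open sets of $\R^d$ with smooth, nondegenerate transition data. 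Asserting that regularity and the symbolic behaviour of $[\DD,\cdot]$ ``should force'' independent differentials is precisely the point that resisted proof for over a decade; without it the character space $X$ is only a compact Hausdorff space and nothing downstream goes through.

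Two further cautions on steps you label ``comparatively routine.'' First, the identification $\A = C^\infty(M)$ (rather than a proper subalgebra) does not follow from regularity alone; in \cite{CReconstruction} it requires an extra hypothesis (strong regularity of the endomorphism algebra of $\H^\infty$, or a multiplicity assumption), and this is one of the delicate points of the paper. Second, invoking Serre--Swan for $\H^\infty$ presupposes that $\A\simeq C^\infty(M)$ has already been established, so the logical order of your second paragraph must be arranged with care to avoid circularity. As written, your text is an accurate summary of the theorem's proof architecture, suitable as expository commentary, but it cannot stand in for the proof itself.
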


The manifold is known as a set, $M=\Sp(\A)=\Sp(A)$. Notice that $\DD$ is known only 
via its principal symbol, so is not unique. $J$ encodes the nuance between spin 
and spin$^c$ structures. The spectral action selects the Levi-Civita connection so {\it the} Dirac operator $\Ds$.

The way, the operator $\DD$ recaptures the original Riemannian metric $g$ of $M$ is via the Connes' distance:
\begin{definition}
Given a spectral triple $(\A,\H,\DD)$, 
\begin{align}
\label{Cdist}
d(\phi_1,\phi_2) \vc \sup \set{\vert \phi_1(a)-\phi_2(a) \vert \,  \,\vert \, \, \norm{[\DD,\pi(a)]}\leq 1, a \in \A}
\end{align}
defines a distance (eventually infinite) between two states $\phi_1,\phi_2$ on the $C^*$-algebra $A$.
\end{definition}
\noindent In a commutative geometry, any point $x\in M$ defines a state via 
$\phi_x :a\in C^\infty(M) \to a(x)\in \C$. Since the geodesic distance is also given by 
$$d_g(x,y)=\sup \set{\vert a(x)-a(y) \vert \,  \,\vert \, \, a \in C^\infty(M), \norm{\text{grad }a }_\infty \leq 1},$$
we get $d(x,y)=d_g(x,y)$ because $\norm{c(da)}=\norm{\text{grad }a}_\infty$. Recall that $g$ is uniquely determined 
by its distance function by Myers--Steenrod theorem: if $\a:\,(M,g) \to (M',g')$ is a bijection such that 
$d_{g'}\big(\a(x),\a(y)\big)=d_g(x,y)$ for $x,y \in M$, then $g=\a^*g'$.

The role of $\DD$ is non only to provide a metric by \eqref{Cdist}, but its homotopy class represents the 
$K$-homology fundamental class of the noncommutative space $\A$.

It is known that one cannot hear the shape of a drum since the knowledge of the spectrum of a Laplacian does not 
determine the metric of the manifold, even if its conformal class is given \cite{Brooks}. But Theorem 
\ref{Recons} shows that one can hear the shape of a spinorial drum (or better say, of a spectral triple) since the 
knowledge of the spectrum of the 
Dirac operator and the volume form, via its cohomological content, is sufficient to recapture the metric and spin 
structure. See however the more precise refinement made in \cite{ConnesUnitary}: for instance, if $(M,g)$ is a 
compact oriented smooth Riemannian manifold, the spectral triple 
$\big(L^\infty(M), L^2(M,\exter \,T^*M),\DD\big)$ where $\DD = d + d^*$ is the signature operator (see example 
after definition \ref{defdir}) uniquely determines the manifold $M$.

\subsection{Notion of pseudodifferential operators}

\begin{definition}
\label{defOP}
Let $(\A,\H,\DD)$ be a spectral triple.
\\
For $t\in \R$ define the map $F_t:\, T \in \B(\H) \to e^{it \DDD}T e^{-i t\DDD}$ and for $\a \in \R$
\begin{align*}
OP^0 & \vc \set{T \,\, \vert\,\, t \to F_t(T) \in C^\infty \big(\R,\B(\H)\big)} \text{ is the set of operators or order 
$\leq 0$}, \\
OP^\a &\vc \set{ T \,\, \vert \,\, T \DDD^{-\a} \in OP^0}  \text{ is the set of operators of order $\leq \a$}.
\end{align*}
Moreover, we set 
$$
\delta(T) \vc [ \DDD,T], \qquad \nabla(T) \vc [\DD^2,T].
$$
\end{definition}

For instance, $C^\infty(M)=OP^0 \bigcap L^\infty(M)$ and $L^\infty(M)$ is the von Neumann algebra generated by 
$\A=C^\infty(M)$.

\begin{prop}
Assume that $(\A,\H,\DD)$ is regular so $\A \subset OP^0=\bigcap_{k\geq 0} \Dom\,\delta^k \subset \B(\H)$. 
Then, for any 
$\a, \beta \in \R$,
$$
OP^\a \,OP^\beta \subset OP^{\a+\beta},
\quad OP^\a \subset OP^\beta\text{ if }\a\leq \beta, \quad \delta(OP^\a) \subset OP^\a,
\quad \nabla(OP^\a) \subset OP^{\a+1}.
$$
\end{prop}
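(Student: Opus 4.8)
The idea is to reduce all four inclusions to two inputs: (a) the structural fact, granted by regularity, that $OP^0=\bigcap_{k\ge 0}\Dom\,\delta^k$ is a unital subalgebra of $\B(\H)$ containing $\DDD^{-\gamma}$ for every $\gamma\ge 0$, and (b) a single analytic lemma, namely that conjugation by a real power of $\DDD$ preserves $OP^0$. Throughout one assumes, as is implicit in Definition~\ref{defOP}, that $\DDD$ is invertible; otherwise replace $\DD$ by $\DD+P$ with $P$ the finite-rank projection onto $\ker\DD$, which changes nothing since a smoothing perturbation lies in every $OP^{-N}$.

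For (a): the map $\delta=[\DDD,\,\cdot\,]$ is a derivation, i.e. $\delta(AB)=\delta(A)B+A\delta(B)$ on $\Dom\,\delta$ (all three terms being bounded when $A,B\in\Dom\,\delta$), so by induction $\delta^{k}(AB)=\sum_{j=0}^{k}\binom{k}{j}\delta^{j}(A)\,\delta^{k-j}(B)$; taking $A,B\in OP^0$ shows $AB\in\bigcap_k\Dom\,\delta^k=OP^0$, and taking $B=\DDD^{-\gamma}$, for which $\delta(B)=0$ since powers of $\DDD$ commute, shows $OP^0\,\DDD^{-\gamma}\subset OP^0$ for $\gamma\ge 0$. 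In particular $\DDD^{-\gamma}\in OP^0$, and for $T\in OP^\alpha$ the element $S:=T\DDD^{-\alpha}\in OP^0$ will be the basic variable in what follows.

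Three of the four inclusions are then routine bookkeeping. \emph{Inclusion $OP^\alpha\subset OP^\beta$ for $\alpha\le\beta$:} $T\DDD^{-\beta}=S\,\DDD^{-(\beta-\alpha)}\in OP^0$ since $\beta-\alpha\ge 0$. \emph{Inclusion $\delta(OP^\alpha)\subset OP^\alpha$:} writing $T=S\DDD^{\alpha}$, one computes on $\Dom\,\DDD^{\alpha+1}$ that $\delta(T)=[\DDD,S\DDD^{\alpha}]=\delta(S)\DDD^{\alpha}$, hence $\delta(T)\DDD^{-\alpha}=\delta(S)\in OP^0$. \emph{Inclusion $\nabla(OP^\alpha)\subset OP^{\alpha+1}$:} the identity $\nabla(T)=[\DDD^2,T]=\delta^2(T)+2\delta(T)\DDD$ holds (expand both sides), and iterating the previous point gives $\delta^2(T)\in OP^\alpha$, so $\nabla(T)\DDD^{-(\alpha+1)}=\delta^2(T)\DDD^{-\alpha-1}+2\,\delta(T)\DDD^{-\alpha}\in OP^0$.

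The substantive inclusion is $OP^\alpha\,OP^\beta\subset OP^{\alpha+\beta}$. Writing $T_1=S_1\DDD^{\alpha}$, $T_2=S_2\DDD^{\beta}$ with $S_1,S_2\in OP^0$, one gets $T_1T_2\,\DDD^{-(\alpha+\beta)}=S_1\,\bigl(\DDD^{\alpha}S_2\DDD^{-\alpha}\bigr)$, so by (a) it suffices to prove the \emph{conjugation lemma}: $\sigma_\alpha(S):=\DDD^{\alpha}S\DDD^{-\alpha}\in OP^0$ whenever $S\in OP^0$, $\alpha\in\R$. For $\alpha=\pm1$ this is elementary: $\sigma_1(S)=S+\delta(S)\DDD^{-1}$ and $\sigma_{-1}(S)=S-\DDD^{-1}\delta(S)$ lie in $OP^0$ by (a); since $\sigma_{m+n}=\sigma_m\circ\sigma_n$, this disposes of all integer $\alpha$. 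For $\alpha=n+\theta$ with $0\le\theta<1$, I would use the integral representation $\DDD^{\theta}=\tfrac{\sin\pi\theta}{\pi}\int_0^\infty\lambda^{\theta-1}\DDD(\DDD+\lambda)^{-1}\,d\lambda$, compute $[\DDD(\DDD+\lambda)^{-1},S]=\lambda(\DDD+\lambda)^{-1}\delta(S)(\DDD+\lambda)^{-1}$, and repeatedly commute $\delta^{j}(S)$ past $(\DDD+\lambda)^{-1}$ to obtain, after integrating in $\lambda$, the asymptotic expansion $\sigma_\theta(S)\sim\sum_{k\ge 0}\binom{\theta}{k}\delta^{k}(S)\DDD^{-k}$ in which every partial sum and every remainder of order $-N$ lies in $OP^0$; hence $\sigma_\theta(S)\in OP^0$ and $\sigma_\alpha=\sigma_n\circ\sigma_\theta$ preserves $OP^0$. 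The one real difficulty is the uniform-in-$\lambda$ norm estimate on the remainder term in this expansion — the familiar but somewhat technical core of the pseudodifferential calculus of a regular spectral triple — and once it is in hand all four inclusions follow from the reductions above.
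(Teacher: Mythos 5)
The paper states this proposition without proof — it passes directly to the example $X=a\,\DDD\,[\DD,b]\,\DD^{-3}$ — so there is no in-text argument to compare you against; your write-up supplies the missing proof, and it follows the standard Connes--Moscovici route. Your reductions are correct: the Leibniz rule for $\delta$ makes $OP^0=\bigcap_k\Dom\,\delta^k$ an algebra containing $\DDD^{-\gamma}$ for $\gamma\ge 0$ (with the usual replacement $\DD\to\DD+P_0$ to ensure invertibility), and writing $T=S\DDD^{\a}$ with $S=T\DDD^{-\a}\in OP^0$ gives the inclusions $OP^\a\subset OP^\b$ for $\a\le\b$, $\delta(OP^\a)\subset OP^\a$, and, via the identity $\nabla(T)=\delta^2(T)+2\,\delta(T)\DDD$, also $\nabla(OP^\a)\subset OP^{\a+1}$. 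You correctly isolate the one substantive ingredient, namely that $\sigma_\a(S)=\DDD^{\a}S\DDD^{-\a}$ stays in $OP^0$; note that this is precisely the one-parameter group whose expansion the paper itself invokes later as \eqref{one-par}, citing \cite{CC1}, so your key lemma is the same tool the paper relies on downstream.

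The only point where your argument is not self-contained is the non-integer case of the conjugation lemma: the integer case is fully handled by $\sigma_{1}(S)=S+\delta(S)\DDD^{-1}$, $\sigma_{-1}(S)=S-\DDD^{-1}\delta(S)$ and $\sigma_{m+n}=\sigma_m\circ\sigma_n$, and your integral representation of $\DDD^{\theta}$ with the commutator identity $[\DDD(\DDD+\la)^{-1},S]=\la(\DDD+\la)^{-1}\delta(S)(\DDD+\la)^{-1}$ is the right setup, but the uniform-in-$\la$ bound showing that the $N$-th remainder of the expansion $\sigma_\theta(S)\sim\sum_k\binom{\theta}{k}\delta^k(S)\DDD^{-k}$ lies in $OP^{-N}$ is asserted rather than proved. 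Since you name that estimate explicitly and it is the well-known technical core of the pseudodifferential calculus for regular spectral triples, I would count this as an honest incompleteness at a standard step rather than a flaw in the approach; everything else is complete and correct.
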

As an example, let us compute the order of $X=a \,\DDD\,[\DD,b] \,\DD^{-3}$: since the order of $a$ is 0, of $\DDD$ 
is 1, of $[\DD,b]$ is 0 and of $\DD^{-3}$ is -3, we get $X \in OP^{-2}$.

\begin{definition}
\label{defDA}
Let $(\A,\H,\DD)$ be a spectral triple and $\DD(\A)$ be the polynomial algebra generated by 
$\A$, $\A^\circ$, $\DD$ and $\DDD$. 
\\
Define the set of pseudodifferential operators as
$$
\Psi(\A)\vc \set{ T \,\, \vert\, \, \forall N\in \N,\,\exists P \in \DD(\A), \,R \in OP^{-N}, \,p\in \N \text{ such that } 
T=P \DDD^{-p} +R}
$$
\end{definition}
The idea behind this definition is that we want to work modulo the set $OP^{-\infty}$ {\it of smoothing operators}. This 
explains the presence of the arbitrary $N$ and $R$. In the commutative case of a manifold $M$ with spectral triple 
$\big( C^\infty(M),L^2(M,E),\DD \big)$ where $\DD \in \text{Diff}^{\,1}(M,E)$, we get the natural inclusion 
$\Psi \big(C^\infty(M) \big) \subset \Psi DO(M,E)$. \\
The reader should be aware that Definition \ref{defDA} is not exactly the same as in 
\cite{ConnesMarcolli, Polaris, CM2} since it pays attention to the reality operator $J$ when it is present.

\subsection{Zeta-functions and dimension spectrum}

\begin{definition}
\label{ncintdef}
For $P \in \Psi^*(\A)$, we define the zeta-function associated to $P$ (and $\DD$) by 
\begin{align}
\label{zetaPD}
\zeta_\DD^P: \, s \in \C \to \Tr \big(P \DDD^{-s} \big)
\end{align}
which makes sense since for $\Re(s) \gg 1$, $P \DDD^{-s} \in \L^1(\H)$.

The dimension spectrum $Sd(\A,\H,\DD)$ of $(\A,\H,\DD)$ is the set of all poles of 
$\,\zeta_\DD^P(s)$ such that $P \in \Psi(\A) \cap OP^0$. It is said simple if it contains poles of order at most one.

The noncommutative integral of $P$ is defined by
\begin{align}
\label{ncP}
\ncint P \vc \underset{s=0}{\Res} \,\zeta_\DD^P(s).
\end{align}
\end{definition}
In \eqref{zetaPD}, we assume $\DD$ invertible since otherwise, one can replace $\DD$ by the invertible operator 
$\DD + P$, $P$ being the projection on $\Ker\DD$. This change does not modify the computation of 
the integrals $\ncint$ which follow since $\ncint X=0$ when $X$ is a trace-class operator.

The notion of dimension spectrum contains more informations than the usual dimension even 
for a manifold as we will see in Proposition \ref{spectrcomm}.

\begin{remark}
\label{remark-spectrum}
If $Sp(\A,\H,\DD)$ denotes the set of all poles of the functions $s\mapsto \Tr \big(P |D|^{-s}\big)$ where $P$
is any pseudodifferential operator, then, $Sd(\A,\H,\DD) \subseteq Sp(\A,\H,\DD)$.

When $Sp(\A,\H,\DD)=\Z$, $Sd(\A,\H,\DD) = \set{n-k \ : \ k\in \N_0}$: indeed, if  $P$ is a pseudodifferential operator
in $OP^0$, and $q\in \N$ is such that $q>n$, $P|D|^{-s}$ is in $OP^{-\Re(s)}$ so is trace-class for $s$ in a 
neighborhood of $q$; as a consequence, $q$ cannot be a pole of $s\mapsto \Tr \big(P|D|^{-s}\big)$.
\end{remark}

Due to the little difference of behavior between scalar and nonscalar pseudodifferential operators (i.e. 
when coefficients like $[\DD, a]$, $a\in \A$ appears in $P$ of Definition \ref{defDA}), it is convenient to also 
introduce

\begin{definition}
\label{defpseudo1} Let $\DD_{1}(\A)$ be the algebra generated by $\A$, $J\A J^{-1}$ and $\DD$, and 
$\Psi_{1}(\A)$ be the set of pseudodifferential operators constructed as before with $\DD_{1}(\A)$ instead of 
$\DD(\A)$. Note that $\Psi_{1}(\A)$ is subalgebra of $\Psi(\A)$.
\end{definition}

Remark that $\Psi_1(\A)$ does not
necessarily contain operators such as $|D|^k$ where $k\in \Z$ is odd.
This algebra
is similar to the one defined in \cite{CC1}.

\subsection{\texorpdfstring{One-forms and fluctuations of $\DD$}{One-forms and fluctuations of D}}
\label{oneform}

The unitary group $\U(\A)$ of $\A$ gives rise to the automorphism $\a_u: \, a \in \A  \to uau^*\in \A$. 
This defines the inner automorphisms group $Inn(\A)$ which is a normal subgroup of the 
automorphisms $Aut(\A) \vc \set{\a \in Aut(A) \, \vert \, \a(\A) \subset \A}$. For instance, in case of a gauge theory, the 
algebra $\A=C^\infty \big(M,M_n(\C) \big)\simeq C^\infty(M) \otimes M_n(\C)$ is typically used. Then, $Inn(\A)$ is 
locally isomorphic to $\mathcal{G}=C^\infty \big(M,PSU(n)\big)$. Since 
$Aut\big(C^\infty(M)\big) \simeq \text{Diff}(M)$, we get a complete parallel analogy between following two exact 
sequences:

\centerline{\begin{tabular}{ccccccccc}
$1 $ & $\longrightarrow$ & $Inn(\A)$ & $\longrightarrow$ & $Aut(\A)$ & $\longrightarrow$ & $Aut(\A)/Inn(\A)$ & 
$\longrightarrow$ &$1$, \\
$1$ & $\longrightarrow$ & $\mathcal{G}$ & $\longrightarrow$ & $\mathcal{G} \rtimes \text{Diff}(M)$ &
$\longrightarrow$ & $\text{Diff}(M)$ & $\longrightarrow$ & $1$.
\end{tabular}}
\noindent
This justifies that the internal symmetries of physics have to be replaced by the inner automorphisms. \\
We are looking for an equivalence relation between $(\A,\H,\DD)$ and $(\A',\H',\DD')$ giving rise to the same 
geometry. Of course, we could use unitary equivalence: there exists a unitary $U:\, \H \to \H'$ such that 
$\DD'=U\DD U^*$, $U\pi(a)U^* \vc \pi \big(\a(a)\big)$ for some $\a \in Aut(\A)$, and in the even real case 
$[U,\chi]=[U,J]=0$. But this is not useful since it does not change the metric \eqref{Cdist}. So we need to vary 
not only $\DD$ but the algebra and its representation. 

{\it The appropriate framework for inner fluctuations of a spectral triple $(\A,\H,\DD)$ is Morita equivalence that we 
describe now:} 

$\A$ is Morita equivalent to $\B$ if there is a finite projective right $\A$-module $\E$ such that 
$\B\simeq End_\A(\E)$. Thus $\B$ acts on $\H'=\E \otimes_\A \H$  and $\H'$ is endowed with scalar product 
$\langle r \otimes \eta , s \otimes \xi \rangle \vc \langle \eta, \pi(r\vert s)\xi \rangle$ where $(\cdot \vert \cdot)$ is a 
pairing $\E \times \E \to \A$ that is $\A$-linear in the second variable and satisfies $(r \vert s)=(s\vert r)^*$, 
$(r \vert sa)=(r\vert s)a$ and $(s\vert s)>0$ for $0\neq s \in \E$ (this can be seen as a $\A$-valued inner product).

A natural operator $\DD'$ associated to $\B$ and $\H'$ is a linear map
$\DD'(r \otimes \eta)=r \otimes \DD\eta + (\nabla r)\eta$ where $\nabla: \E \to \E\otimes_\A \Omega_\DD^1(\A)$ is 
a linear map obeying to Leibniz rule $\nabla (r a)=(\nabla r)a+r \otimes [\DD,a]$ for $r\in \E, \,a\in \A$ where we took 
the following

\begin{definition}
\label{Omega1}
Let $(\A,\H,\DD)$ be a spectral triple. The set of one-forms is defined as
\begin{align*}
\Omega^1_\DD(\A) \vc \text{span}\set{ a \, db\, \,\vert\, \, a,b \in \A}, \quad db \vc [\DD,b].
\end{align*}
It is a $\A$-bimodule.
\end{definition}
Such $\nabla$ is called a connection on $\E$ and by a result of Cuntz--Quillen, 
only projective modules admit (universal) connections (see \cite{Polaris}[Proposition 8.3]). 
Since we want $\DD'$ selfadjoint,  $\nabla$ must be hermitean with respect to $\DD$ which means: 
$\pi \big( \,(r \vert \nabla s)-(\nabla r \vert s) \,\big)=[\DD, \pi(r\vert s)]$.

In particular, when $\E=\A$ (any algebra is Morita equivalent to itself) and $\A$ is regarded as a right $\A$-module, 
 $\E$ has a natural hermitean connection with respect to $\DD$ given by 
$Ad_\DD:\, a \in A \to [\DD,a]\in \Omega^1_\DD(\A)$ and using the Leibniz rule, any another hermitean connection 
$\nabla$ must verify: $\nabla a =Ad_\DD \,a +A \,a$ where $A=A^* \in \Omega_\DD^1(\A)$. 
So this process, which does not change neither the algebra $\A$ nor the Hilbert space $\H$, gives a natural 
hermitean fluctuation of $\DD$:
$$
\DD \to \DD_A \vc \DD+A \text{ with }A=A^* \in \Omega_\DD^1(\A).
$$
In conclusion, the Morita equivalent geometries for $(\A,\H,\DD)$ keeping fixed $\A$ and $\H$ is an affine space 
modelled on the selfadjoint part of $\Omega_\DD^1(\A)$. 
\\
For instance, in commutative geometries, $\Omega_\Ds^1 \big(C^\infty(M)\big)=\set{c(da) \, \vert \, a \in C^\infty(M)}$.

When a reality operator $J$ exists, we also want $\DD_A J=\epsilon \, J \DD_A$, so we choose
\begin{align}
\label{fluct}
\DD_{\wt A} \vc \DD + \wt A, \quad \wt A \vc A + \epsilon JAJ^{-1}, \quad A=A^*.
\end{align}

The next two results show that, with the same algebra $\A$ and Hilbert space $\H$, a fluctuation of $\DD$ 
still give rise to a spectral triple $(\A,\H,\DD_{A})$ or $(\A,\H,\DD_{\wt A})$.

\begin{lemma}
\label{compres}
Let $(\A,\H,\DD)$ be a spectral triple with a reality operator $J$ and chirality $\chi$. If $A \in \Omega^1_{\DD}$ 
is a one-form, the fluctuated Dirac operator $\DD_{ A} $ or $\DD_{\wt A} $ is an operator with compact resolvent, 
and in particular its kernel is a finite dimensional space. This space is invariant by $J$ and $\chi$.
\end{lemma}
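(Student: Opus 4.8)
The plan is to treat the fluctuated operator as a bounded self-adjoint perturbation of $\DD$ and then invoke standard resolvent and spectral theory. First I would check that $A$ is a bounded operator: it is a finite sum of terms $a\,[\DD,b]$ with $a,b\in\A$, where each $a$ is bounded since it acts through the representation, and each $[\DD,b]$ extends to a bounded operator by the defining axiom of a spectral triple; moreover $A=A^*$ by hypothesis. Because $J$ is an antilinear isometry, $JAJ^{-1}$ is again a bounded operator, with $\norm{JAJ^{-1}}\le\norm{A}$, and a short computation using $\langle J\xi,J\eta\rangle=\langle\eta,\xi\rangle$ together with $A=A^*$ shows it is self-adjoint as well; hence $\wt A=A+\epsilon JAJ^{-1}$ is bounded and self-adjoint. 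In both cases the fluctuated operator has the form $\DD'=\DD+B$ with $B=B^*$ bounded, so by the Kato--Rellich theorem $\DD'$ is self-adjoint on $\Dom(\DD')=\Dom(\DD)$.

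Next I would prove that $\DD'$ has compact resolvent. Writing $\DD'-z=(\DD-z)\bigl(1+(\DD-z)^{-1}B\bigr)$ and choosing $z\notin\R$ with $\vert\Im z\vert>\norm{B}$, so that $\norm{(\DD-z)^{-1}B}<1$, the factor $1+(\DD-z)^{-1}B$ is invertible in $\B(\H)$, and therefore $(\DD'-z)^{-1}=\bigl(1+(\DD-z)^{-1}B\bigr)^{-1}(\DD-z)^{-1}$ is the product of a bounded operator with $(\DD-z)^{-1}$, which lies in $\K(\H)$ since $\DD$ has compact resolvent. Compactness of the resolvent at one point of the resolvent set propagates to every point of it, so $\DD'$ has discrete spectrum with finite-dimensional eigenspaces; in particular $0$ is either in the resolvent set or an isolated eigenvalue of finite multiplicity, whence $\dim\Ker\DD'<\infty$.

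For the invariance of $\Ker\DD'$ I would compute how $B$ interacts with $\chi$ and with $J$. From $[\chi,\pi(a)]=0$ and $\chi\DD=-\DD\chi$ one gets $\chi[\DD,b]=-[\DD,b]\chi$, hence $\chi A=-A\chi$; combining this with $J\chi=\epsilon''\chi J$ yields $\chi\,JAJ^{-1}=-JAJ^{-1}\chi$, so $\chi B=-B\chi$ and therefore $\chi\DD'=-\DD'\chi$ — this holds for both $\DD_A$ and $\DD_{\wt A}$, and it forces $\chi$ to map $\Ker\DD'$ into itself. For the reality operator one should take the symmetrized fluctuation $\DD_{\wt A}$: the definition of $\wt A$ in \eqref{fluct} was engineered precisely so that $\DD_{\wt A}J=\epsilon\, J\DD_{\wt A}$, and then $\DD_{\wt A}\psi=0$ implies $\DD_{\wt A}(J\psi)=\epsilon\,J\DD_{\wt A}\psi=0$, so $J$ preserves $\Ker\DD_{\wt A}$.

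The step I expect to cost the most attention — though it is care rather than real difficulty — is the bookkeeping forced by the antilinearity of $J$ and the signs $\epsilon,\epsilon',\epsilon''$: one must be careful in verifying that $JAJ^{-1}$ is bounded self-adjoint, and in pushing $\chi$ through the expression for $\wt A$ (respectively in checking the $\epsilon$-twisted commutation with $J$), since a single sign error would destroy the invariance of the kernel. Everything else is elementary perturbation theory. I would also remark explicitly that the $J$-invariance of the kernel genuinely requires the symmetrized fluctuation $\DD_{\wt A}$; the bare $\DD_A$ need not intertwine with $J$ up to $\epsilon$, although its kernel remains $\chi$-invariant.
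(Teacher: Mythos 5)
Your proof is correct and follows essentially the same route as the paper: both treat the fluctuation as a bounded self-adjoint perturbation of $\DD$, deduce compactness of the resolvent from a resolvent factorization (you via a Neumann series at $\vert\Im z\vert>\norm{B}$, the paper via the identity $(\DD+T-z)^{-1}=(\DD-z')^{-1}[1-(T+z'-z)(\DD+T-z)^{-1}]$), and obtain the invariance of the kernel from the sign-commutation relations $\chi\DD'=-\DD'\chi$ and $\DD_{\wt A}J=\epsilon J\DD_{\wt A}$. Your closing remark that the $J$-invariance genuinely requires the symmetrized fluctuation is a correct and welcome precision that the paper only makes implicitly by applying its argument to $T=A+\epsilon JAJ^{-1}$.
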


\begin{proof}
Let $T$ be a bounded operator and let $z$ be in the resolvent of $\DD+T$ and $z'$ be in the resolvent of $\DD$. 
Then
$$
(\DD+T-z)^{-1}=(\DD-z')^{-1} \, [1-(T+z'-z)(\DD+T-z)^{-1}].
$$
Since $(\DD-z')^{-1}$ is compact by hypothesis and since the term in
bracket is bounded, $\DD+T$ has a compact resolvent. Applying this to
$T=A+\epsilon JAJ^{-1}$, $\DD_A$ has a finite dimensional kernel (see for instance \cite[Theorem 6.29]{Kato}).

Since according to the dimension, $J^2=\pm 1$, $J$ commutes or anticommutes with
$\chi$,  $\chi$ commutes with the elements in the algebra $\A$ and $\DD \chi=-\chi \DD$, see \eqref{commu}, 
we get $\DD_A \chi=-\chi \DD_A$ and $\DD_A J=\pm J \DD_A$ which gives the result.
\end{proof}

Note that $\U(\A)$ acts on $\DD$ by $\DD \to \DD_u=u\DD u^*$ leaving invariant the spectrum of 
$\DD$. Since $\DD_u=\DD + u[\DD,u^*]$ and in a $C^*$-algebra, any element $a$ is a linear combination of at 
most four unitaries, Definition \ref{Omega1} is quite natural.

The inner automorphisms of a spectral triple correspond to inner fluctuation of the metric defined by \eqref{Cdist}. 

One checks directly that a fluctuation of a fluctuation is a fluctuation and that the unitary group $\U(\A)$ is
gauge compatible for the adjoint representation:

\begin{lemma}
\label{fluctuationoffluctuation}
Let $(\A,\H,\DD)$ be a spectral triple (which is eventually real) and $A \in \Omega^1_\DD(\A)$, $A=A^*$.

(i) If $B\in \Omega^1_{\DD_{A}}(\A)$ ( or $B\in \Omega^1_{\DD_{\wt A}}(\A)$), \\
\centerline{$\DD_{B}=\DD_{C}$ (or $\DD_{\wt B}=\DD_{\wt C}$) with $C\vc A+B$.}

(ii) Let $u \in \U(\A)$. Then $U_u \vc uJuJ^{-1}$ is a unitary of $\H$ such that  
$$
U_u \, \DD_{\wt A} \, {U_u}^*=\DD_{\widetilde{\ga_u(A)}}, \text{ where } \ga_u(A) \vc u[\DD,\,u^*] + uAu^*.
$$
\end{lemma}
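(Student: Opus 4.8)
The plan is to verify both parts by direct computation, using only the Leibniz rule $[\DD, ab] = a[\DD,b] + [\DD,a]b$ for commutators and the commutation rules \eqref{oppcommut} relating $\A$ to the opposite algebra $\A^\circ = J\A J^{-1}$.

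For part (i), I would start from the definition of a one-form relative to the fluctuated operator. If $B \in \Omega^1_{\DD_A}(\A)$, then $B$ is a span of elements $a\,d_{\DD_A}b = a[\DD_A, b] = a[\DD + A, b] = a[\DD,b] + a[A,b]$. The key observation is that $[A,b]$ for $A \in \Omega^1_\DD(\A)$ and $b \in \A$ is again a one-form in $\Omega^1_\DD(\A)$ (since $A$ is a sum of terms $a'[\DD,a'']$ and one uses Leibniz on $[a'[\DD,a''], b]$), so $\Omega^1_{\DD_A}(\A) \subseteq \Omega^1_\DD(\A)$; conversely $[\DD,b] = [\DD_A,b] - [A,b]$ shows the reverse inclusion, hence $\Omega^1_{\DD_A}(\A) = \Omega^1_\DD(\A)$ as sets. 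Then $\DD_{A,B} := \DD_A + B = \DD + A + B = \DD_{A+B} = \DD_C$. For the real case, one does the same with $\widetilde{A} = A + \epsilon JAJ^{-1}$: one checks $B \in \Omega^1_{\DD_{\widetilde A}}(\A)$ still lies in $\Omega^1_\DD(\A)$ (the extra term $\epsilon JAJ^{-1}$ commutes with $\A$ by \eqref{oppcommut}, so it contributes nothing to $[\DD_{\widetilde A}, b]$ for $b \in \A$), whence $\DD_{\widetilde A, \widetilde B} = \DD + \widetilde A + \widetilde B = \DD + (A+B) + \epsilon J(A+B)J^{-1} = \DD_{\widetilde C}$ with $C = A + B$.

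For part (ii), I would first check that $U_u = uJuJ^{-1}$ is unitary: since $u \in \U(\A)$ and $JuJ^{-1} \in \A^\circ$ is unitary (as $J$ is an antilinear isometry and $(\,\cdot\,)^\circ$ is a $*$-representation), and since $u$ commutes with $JuJ^{-1}$ by \eqref{oppcommut}, the product is unitary with $U_u^* = u^* J u^* J^{-1}$. The main computation is then $U_u \DD_{\widetilde A} U_u^* = U_u (\DD + A + \epsilon JAJ^{-1}) U_u^*$. I would expand this in three pieces: first $U_u \DD U_u^*$, using $J\DD = \epsilon \DD J$ to move $\DD$ past the $J$-twisted factors, producing $\DD + u[\DD,u^*] + \epsilon J(u[\DD,u^*])J^{-1}$ after collecting terms; second $U_u A U_u^* = uAu^*$ because the factor $JuJ^{-1}$ commutes with $A \in \Omega^1_\DD(\A)$ (using both relations in \eqref{oppcommut}, since $A$ involves $\A$ and $[\DD,\A]$); and third $U_u (\epsilon JAJ^{-1}) U_u^* = \epsilon J(uAu^*)J^{-1}$ symmetrically. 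Writing $\gamma_u(A) = u[\DD,u^*] + uAu^*$, the sum is exactly $\DD + \gamma_u(A) + \epsilon J\gamma_u(A) J^{-1} = \DD_{\widetilde{\gamma_u(A)}}$.

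The main obstacle I anticipate is the bookkeeping in computing $U_u \DD U_u^*$: one must carefully commute $\DD$ through $u$, $JuJ^{-1}$, and their adjoints, tracking the signs $\epsilon$ from $J\DD = \epsilon \DD J$ and verifying that the order-one commutators $[\DD,u]$ and $[\DD,u^*]$ assemble correctly into the claimed form $u[\DD,u^*]$ plus its $J$-conjugate. This is where one genuinely uses that $\DD$ is a derivation-compatible selfadjoint operator and that $[[\DD,u],JuJ^{-1}] = 0$ by the second relation in \eqref{oppcommut}, which is what allows the $\A$-part and $\A^\circ$-part of the computation to decouple. Everything else is routine application of Leibniz and the order-properties of $OP^\bullet$.
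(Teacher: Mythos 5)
Your proof is correct. Note that the paper does not actually supply a proof of this lemma: it merely remarks beforehand that ``one checks directly that a fluctuation of a fluctuation is a fluctuation and that $\U(\A)$ is gauge compatible,'' and your computation is exactly the direct check intended — the observation that $\Omega^1_{\DD_A}(\A)=\Omega^1_\DD(\A)$ (with the $J$-twisted term dropping out of commutators with $\A$ by the order-one condition \eqref{oppcommut}), and the three-piece expansion of $U_u\,\DD_{\wt A}\,U_u^*$ using $J\DD=\epsilon\DD J$ and the commutation of $\A^\circ$ and $[\DD,\A^\circ]$ with $\A$.
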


\begin{remark}
To be an inner fluctuation is not a symmetric relation. It can append that $\DD_A=0$ with $\DD \neq 0$.
\end{remark}

\begin{lemma}
\label{adjoint}
Let $(\A,\DD,\H)$ be a spectral triple and $X \in \Psi(\A)$. Then 
\begin{align*}
\ncint X^*=\overline{  \ncint X}.
\end{align*}
If the spectral triple is real, then, for $X \in \Psi(\A)$, $JXJ^{-1} \in \Psi(\A)$ and
$$
\ncint JXJ^{-1}=\ncint X^*=\overline{  \ncint X}.
$$
\end{lemma}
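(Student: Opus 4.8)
The plan is to establish the two claimed identities in sequence, first for a general spectral triple and then refine to the real case. First I would prove $\ncint X^* = \overline{\ncint X}$ for $X \in \Psi(\A)$. Recall that $\ncint X = \underset{s=0}{\Res}\,\zeta_\DD^X(s)$ with $\zeta_\DD^X(s) = \Tr(X\DDD^{-s})$. For $\Re(s) \gg 1$ the operator $X\DDD^{-s}$ is trace-class, and since $\DDD$ is a positive selfadjoint operator, $(\DDD^{-s})^* = \DDD^{-\bar s}$. Hence for such $s$ one has
\begin{align*}
\Tr\big(X^*\DDD^{-s}\big) = \Tr\big((\DDD^{-\bar s}X)^*\big) = \overline{\Tr\big(\DDD^{-\bar s}X\big)} = \overline{\Tr\big(X\DDD^{-\bar s}\big)},
\end{align*}
using cyclicity of the trace in the last step (legitimate since $X\DDD^{-\bar s}$ and $\DDD^{-\bar s}X$ differ by a trace-class cyclic rearrangement for $\Re(s)$ large, both being trace-class). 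Therefore $\zeta_\DD^{X^*}(s) = \overline{\zeta_\DD^X(\bar s)}$ as meromorphic functions. Taking residues at $s=0$: if $\zeta_\DD^X(s) = \frac{c_{-1}}{s} + c_0 + \cdots$ near $0$, then $\zeta_\DD^{X^*}(s) = \overline{\zeta_\DD^X(\bar s)} = \frac{\overline{c_{-1}}}{s} + \overline{c_0} + \cdots$, so $\underset{s=0}{\Res}\,\zeta_\DD^{X^*} = \overline{c_{-1}} = \overline{\ncint X}$. The one point needing care is that $X \in \Psi(\A)$ implies $X^* \in \Psi(\A)$: this follows because $\DD(\A)$ (or $\DD_1(\A)$) is a $*$-algebra — it is generated by the selfadjoint elements $\DD$, $\DDD$ and the $*$-closed sets $\A$, $\A^\circ$ (note $(\pi(a)^\circ)^* = (J\pi(a^*)J^{-1})^* = J\pi(a)J^{-1} = \pi(a^*)^\circ$) — together with the fact that $(OP^{-N})^* = OP^{-N}$ since $F_t$ commutes with the adjoint and $\DDD^{-\a}$ is selfadjoint, and $(P\DDD^{-p})^* = \DDD^{-p}P^*$ can be rewritten modulo $OP^{-\infty}$ in the required form $P'\DDD^{-p'} + R'$ using the commutator expansion from the regularity hypothesis.

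Next I would handle the real case: given $X \in \Psi(\A)$, show $JXJ^{-1} \in \Psi(\A)$ and $\ncint JXJ^{-1} = \ncint X^*$. For membership in $\Psi(\A)$, observe that $J$ is an antilinear isometry commuting with $\DDD = |\DD|$ (since $J\DD = \epsilon\DD J$ implies $J\DD^2 = \DD^2 J$, hence $J$ commutes with $|\DD|$ and with all $F_t$ after accounting for antilinearity: $J e^{it|\DD|} = e^{-it|\DD|}J$, so $J F_t(T) J^{-1} = F_{-t}(JTJ^{-1})$). Consequently $J(OP^0)J^{-1} = OP^0$ and $J(OP^{-N})J^{-1} = OP^{-N}$. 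Moreover $J\DD(\A)J^{-1} = \DD(\A)$ because conjugation by $J$ sends $\DD \mapsto \epsilon\DD$, $\DDD \mapsto \DDD$, $\pi(a) \mapsto \pi(a)^\circ \in \DD(\A)$, $\pi(a)^\circ \mapsto J^2\pi(a)J^{-2} = \pi(a)$ (using $J^2 = \pm 1$), so $J$ maps generators of $\DD(\A)$ into $\DD(\A)$. Thus $J(P\DDD^{-p}+R)J^{-1}$ is again of the required form, giving $JXJ^{-1} \in \Psi(\A)$.

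Finally, for the integral identity in the real case: since $J$ is antilinear and isometric, for $\Re(s)$ large,
\begin{align*}
\Tr\big(JXJ^{-1}\DDD^{-s}\big) = \Tr\big(J X \DDD^{-\bar s} J^{-1}\big) = \overline{\Tr\big(X\DDD^{-\bar s}\big)},
\end{align*}
where the last equality uses that for an antilinear isometry $J$ and a trace-class operator $T$ one has $\Tr(JTJ^{-1}) = \overline{\Tr(T)}$ (pick an orthonormal basis $(e_n)$; then $(Je_n)$ is also one, and $\sum_n \langle e_n, JTJ^{-1}e_n\rangle = \sum_n \overline{\langle Je_n, TJ^{-1}e_n\rangle}$; writing $f_n = J^{-1}e_n$, this is $\sum_n \overline{\langle f_n, Tf_n\rangle} = \overline{\Tr T}$, using $J^{-1}e_n = \pm Je_n$ so $(f_n)$ is an orthonormal basis). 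Hence $\zeta_\DD^{JXJ^{-1}}(s) = \overline{\zeta_\DD^X(\bar s)} = \zeta_\DD^{X^*}(s)$ by the computation above, and taking residues at $s=0$ gives $\ncint JXJ^{-1} = \ncint X^* = \overline{\ncint X}$. The main obstacle is really the bookkeeping that $\Psi(\A)$ and $OP^{-N}$ are stable under both the adjoint and conjugation by $J$ — the analytic content (behaviour of $\zeta$ under adjoint and antilinear conjugation) is short, but one must be careful with the antilinearity of $J$ at every step where it meets the trace or the exponential $e^{it|\DD|}$.
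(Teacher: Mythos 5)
Your proof is correct and follows essentially the same route as the paper: both identities reduce to $\zeta_\DD^{X^*}(s)=\overline{\zeta_\DD^X(\bar s)}=\zeta_\DD^{JXJ^{-1}}(s)$, obtained from $\Tr(X^*|\DD|^{-s})=\overline{\Tr(X|\DD|^{-\bar s})}$ and from $\Tr(JYJ^{-1})=\overline{\Tr(Y)}$ together with $J|\DD|=|\DD|J$, followed by taking residues at $s=0$. The extra bookkeeping you supply (stability of $\Psi(\A)$ and $OP^{-N}$ under the adjoint and under conjugation by $J$) is left implicit in the paper but is a welcome addition.
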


\begin{proof}
The first result follows from (for $\Re s$ large enough, so the operators are traceable)
\begin{align*}
\Tr(X^*\vert \DD\vert^{-s})=\Tr \big((\vert
\DD\vert^{-\bar{s}})X)^*\big)=\overline{ \Tr(\vert \DD \vert^{-\bar{s}}
X)}=\overline{\Tr(X\vert \DD \vert^{-\bar{s}})}.
\end{align*}

The second result is due to the anti-linearity of $J$,
$\Tr(JYJ^{-1})=\overline{\Tr(Y)}$, and $J\vert \DD \vert=\vert \DD \vert J$, so 
\begin{align*}
\Tr(X \vert \DD \vert^{-s})=\overline{\Tr(JX \vert \DD
\vert^{-s}J^{-1})}=\overline{\Tr(JXJ^{-1}\vert \DD \vert^{-\bar{s}})}.
\end{align*}
\end{proof}

\begin{corollary}
\label{reel}
For any one-form $A=A^*$, and for $k,\,l \in \N$,
$$
\ncint A^l \,\DD^{-k} \in \R,\quad \ncint \big(A\DD^{-1}\big)^k \in \R,
\quad \ncint A^l \,\vert \DD\vert^{-k} \in \R,\quad  \ncint \chi A^l\,\vert \DD \vert ^{-k} \in \R, 
\,\, \ncint A^l\,\DD \, \vert \DD \vert^{-k} \in \R.
$$
\end{corollary}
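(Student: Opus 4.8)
The plan is to reduce every one of the five claims to Lemma \ref{adjoint}. That lemma gives $\ncint X^{*}=\overline{\ncint X}$ for $X\in\Psi(\A)$, so to conclude $\ncint X\in\R$ it is enough, in each case, to know that $X$ lies in the pseudodifferential calculus and that $\ncint X^{*}=\ncint X$; then $\ncint X=\overline{\ncint X}$. The reality operator $J$ plays no role here: only the self-adjointness $A=A^{*}$, $\DD=\DD^{*}$, $\chi=\chi^{*}$ will be used, together with the commutations $\DD|\DD|=|\DD|\DD$ and $\chi|\DD|=|\DD|\chi$ (the latter since $\DD\chi=-\chi\DD$ forces $\DD^{2}\chi=\chi\DD^{2}$).

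First I would settle membership in $\Psi(\A)$. A one-form $A=\sum_i a_i[\DD,b_i]$, $a_i,b_i\in\A$, lies in the algebra $\DD(\A)$ since $[\DD,b_i]=\DD b_i-b_i\DD$, and $\DD^{-2j}=|\DD|^{-2j}$, $\DD^{-(2j+1)}=\DD\,|\DD|^{-2j-2}$; using regularity of $(\A,\H,\DD)$ — so that $\Psi(\A)$ is an algebra and one may commute negative powers of $|\DD|$ to the right modulo $OP^{-\infty}$ — each of $A^{l}\DD^{-k}$, $(A\DD^{-1})^{k}$, $A^{l}|\DD|^{-k}$, $A^{l}\DD|\DD|^{-k}$ and their products with $\chi$ is of the form $P|\DD|^{-p}+R$ with $P\in\DD(\A)$ (possibly times $\chi$) and $R\in OP^{-\infty}$, hence lies in $\Psi(\A)$ together with its adjoint.

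The heart of the matter is the identity of zeta functions $\zeta_{\DD}^{X^{*}}(s)=\zeta_{\DD}^{X}(s)$. For $\Re(s)$ large the operators $X|\DD|^{-s}$ and $X^{*}|\DD|^{-s}$ are trace-class, the ordinary trace is cyclic, and powers of $|\DD|$ slide past $\DD$ and $\chi$; the equality of traces then extends to all $s$ by holomorphy of the zeta functions, and in particular the residues at $s=0$ agree. For instance $(A^{l}\DD^{-k})^{*}=\DD^{-k}A^{l}$ and $\Tr(\DD^{-k}A^{l}|\DD|^{-s})=\Tr(A^{l}|\DD|^{-s}\DD^{-k})=\Tr(A^{l}\DD^{-k}|\DD|^{-s})$; $((A\DD^{-1})^{k})^{*}=(\DD^{-1}A)^{k}$ and cyclicity carries the outer $\DD^{-1}$ around, and past $|\DD|^{-s}$, back to $(A\DD^{-1})^{k}|\DD|^{-s}$; $(A^{l}|\DD|^{-k})^{*}=|\DD|^{-k}A^{l}$ and $\Tr(|\DD|^{-k}A^{l}|\DD|^{-s})=\Tr(A^{l}|\DD|^{-(s+k)})=\Tr(A^{l}|\DD|^{-k}|\DD|^{-s})$; for $\chi A^{l}|\DD|^{-k}$, using $[\chi,|\DD|]=0$, both $\Tr\big((\chi A^{l}|\DD|^{-k})^{*}|\DD|^{-s}\big)$ and $\Tr(\chi A^{l}|\DD|^{-(s+k)})$ reduce to $\Tr(A^{l}\chi|\DD|^{-(s+k)})$; and for $A^{l}\DD|\DD|^{-k}$, using $[\DD,|\DD|]=0$, $\Tr\big((A^{l}\DD|\DD|^{-k})^{*}|\DD|^{-s}\big)=\Tr(|\DD|^{-k}\DD A^{l}|\DD|^{-s})$ reduces to $\Tr(A^{l}\DD|\DD|^{-(s+k)})=\Tr(A^{l}\DD|\DD|^{-k}|\DD|^{-s})$. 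In each case $\ncint X^{*}=\ncint X$, and combined with Lemma \ref{adjoint} this gives $\ncint X\in\R$.

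I expect the only real work to be the two pieces of bookkeeping flagged above: verifying that all these operators and their adjoints genuinely sit in $\Psi(\A)$ (which is exactly where regularity is invoked, to push $|\DD|$-powers to the right), and checking that the trace rearrangements are legitimate — automatic once $\Re(s)$ is large, the passage to $s=0$ being covered by the holomorphy already built into the definition of $\ncint$. Neither step involves any idea beyond cyclicity of the trace and the self-adjointness and commutation relations listed at the outset.
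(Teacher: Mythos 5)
Your proof is correct and follows exactly the route the paper intends: Corollary \ref{reel} is stated without proof as an immediate consequence of Lemma \ref{adjoint} combined with the trace property of $\ncint$, and your argument simply makes that cyclicity explicit at the level of $\Tr(\,\cdot\,|\DD|^{-s})$ for $\Re(s)$ large before passing to the residue (the only cosmetic slip being that the extension to $s=0$ is by meromorphic, not holomorphic, continuation). The case-by-case rearrangements, including the $\chi$ case where cyclicity sidesteps the anticommutation of $\chi$ with one-forms, all check out.
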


We remark that the fluctuations leave invariant the first term of the spectral action (\ref{asympspectral}). This is 
a generalization of the fact that in the commutative case, the noncommutative integral
depends only on the principal symbol of the Dirac operator $\DD$ and this symbol is stable by adding a gauge
potential like in $\DD+A$. Note however that the symmetrized gauge potential $A+\epsilon JAJ^{-1}$ is always 
zero in this case for any selfadjoint one-form $A$, see \eqref{JAJ}.

\begin{theorem}
\label{difference}
Let $(\A,\H,\DD)$ be a regular spectral triple which is simple and of dimension $d$. Let $A\in \Omega_\DD^1(\A)$ 
be a selfadjoint gauge potential. Then,
\begin{align}
\zeta_{D_{\wt A}}(0)=\zeta_{D}(0)+\sum_{q=1}^{d} \tfrac{(-1)^{q}}{q} \ncint (\wt AD^{-1})^{q}. \label{termconstanttilde}
\end{align}
\end{theorem}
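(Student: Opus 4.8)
The plan is to compute the variation of $\zeta_{D_{\widetilde A}}(0)$ as the gauge potential $A$ is turned on continuously, i.e. to consider the one-parameter family $D_t := D_{t\widetilde A} = D + t\widetilde A$ for $t\in[0,1]$ and to show that
\begin{align*}
\frac{d}{dt}\,\zeta_{D_t}(0) = -\ncint \widetilde A\,(D_t)^{-1},
\end{align*}
or, more precisely, that $\zeta_{D_t}(0)$ is a polynomial in $t$ whose derivative at each $t$ is the residue at $s=0$ of $-s\,\Tr\big(\widetilde A D_t |D_t|^{-s-2}\cdot(\cdots)\big)$-type expressions; integrating from $0$ to $1$ then produces the stated sum. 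First I would note that $|D_t|^{-s}$ can be written via the contour (or via $|D_t|^{-s} = (D_t^2)^{-s/2}$ and Cauchy's formula) and that $\tfrac{d}{dt} D_t^2 = \widetilde A D_t + D_t \widetilde A$, so that $\tfrac{d}{dt}(D_t^2)^{-s/2}$ is, up to trace, $-\tfrac{s}{2}\,(\widetilde A D_t + D_t\widetilde A)(D_t^2)^{-s/2-1}$ inside $\Tr$, using cyclicity and holomorphy of the trace on $\Psi DO^{\mathbb{C}\setminus\mathbb{Z}}$ (Theorem \ref{Wresresult}). This gives $\tfrac{d}{dt}\Tr\big(|D_t|^{-s}\big) = -s\,\Tr\big(\widetilde A D_t |D_t|^{-s-2}\big) = -s\,\Tr\big(\widetilde A D_t^{-1}|D_t|^{-s}\big)$, and taking the residue at $s=0$ kills the explicit $s$ unless $\Tr(\widetilde A D_t^{-1}|D_t|^{-s})$ has a pole at $s=0$, in which case one picks up exactly $\ncint \widetilde A D_t^{-1}$ with the correct sign.

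The next step is to turn the $t$-dependence into the geometric series. Since $D_t^{-1} = (D+t\widetilde A)^{-1} = D^{-1}(1 + t\widetilde A D^{-1})^{-1}$, one expands $D_t^{-1} = \sum_{k\geq 0} (-t)^k (\widetilde A D^{-1})^k D^{-1}$; because $\widetilde A D^{-1}\in OP^{-1}$ and $\widetilde A\in OP^0$, the term of order $k$ lies in $OP^{-k-1}$, hence is trace-class for $k\geq d$, so only finitely many terms $k=0,\dots,d-1$ contribute to the residue of $\Tr(\widetilde A D_t^{-1}|D_t|^{-s})$ at $s=0$. One also needs to expand $|D_t|^{-s}$ itself, but modulo $OP^{-\infty}$ and modulo operators whose contribution to the residue is controlled — here the regularity and simplicity of the dimension spectrum are used so that residues exist and behave additively. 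Collecting, $\tfrac{d}{dt}\zeta_{D_t}(0) = \sum_{k=0}^{d-1}(-1)^{k+1} t^k \ncint (\widetilde A D^{-1})^{k+1}$; integrating over $t\in[0,1]$ and shifting the index $q = k+1$ yields
\begin{align*}
\zeta_{D_{\widetilde A}}(0) - \zeta_D(0) = \sum_{q=1}^{d} \frac{(-1)^q}{q}\ncint (\widetilde A D^{-1})^q,
\end{align*}
which is \eqref{termconstanttilde}.

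The main obstacle will be the rigorous bookkeeping of the holomorphic families and the justification that one may differentiate $\Tr\big(|D_t|^{-s}\big)$ under the trace and interchange $\tfrac{d}{dt}$ with $\Res_{s=0}$: the expansion of $|D+t\widetilde A|^{-s}$ (as opposed to $(D+t\widetilde A)^{-1}$) is delicate because $|D_t|$ is not a polynomial in $t$, so one must work with $D_t^2$, use the Cauchy integral $|D_t|^{-s} = \tfrac{1}{2\pi i}\int_\Gamma \lambda^{-s/2}(\lambda - D_t^2)^{-1}\,d\lambda$, and control the symbol expansion of the resolvent uniformly, exactly the machinery of Section \ref{Wod} (Proposition \ref{Lext}, Corollary \ref{Lres}) together with the regularity hypothesis. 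A cleaner route, which I would actually follow, is to invoke the known general formula for the variation of the spectral zeta value under an $OP^0$ perturbation (e.g. via the resolvent expansion of $\zeta_{D_{\widetilde A}}(s) - \zeta_D(s)$ as a sum of traces of products of $\widetilde A$, $D^{-1}$ and $|D|^{-s}$, each a holomorphic family to which Theorem \ref{Wresresult} applies), reducing the whole argument to extracting the residue at $s=0$ of each term and checking the combinatorial coefficient $\tfrac{(-1)^q}{q}$, which is precisely the coefficient in $\log(1+x) = \sum_q \tfrac{(-1)^{q-1}}{q}x^q$ after the sign from the logarithmic derivative. The subtlety with the symmetrized potential $\widetilde A = A + \epsilon JAJ^{-1}$ versus $A$ is immaterial here since only $\widetilde A\in\Omega^1_D(\A)\oplus J\Omega^1_D(\A)J^{-1}\subset OP^0$ is used.
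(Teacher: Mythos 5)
Your plan arrives at the correct formula with the right coefficients, but it follows a genuinely different route from the paper's. You integrate the first variation along the path $D_t=D+t\wt A$: from $\tfrac{d}{dt}D_t^2=\wt A D_t+D_t\wt A$ and cyclicity you get $\tfrac{d}{dt}\zeta_{D_t}(s)=-s\,\Tr\big(\wt A D_t^{-1}|D_t|^{-s}\big)$, take the residue at $s=0$, expand $D_t^{-1}$ as a geometric series in $t\wt A D^{-1}$ (only $q\le d$ terms surviving since $(\wt AD^{-1})^{q}\in OP^{-q}$), and integrate in $t$, which is where $\int_0^1 t^{q-1}dt=\tfrac1q$ produces the $\tfrac{(-1)^q}{q}$. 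The paper never introduces a path: following \cite{CC1} it sets $Y:=\log(D_A^2)-\log(D^2)$, uses Duhamel's formula to write $|D_A|^{-s}=|D|^{-s}+\sum_p K_p(Y,s)|D|^{-s}$ with $K_p(Y,s)\in OP^{-p}$ and controlled remainders (Lemma \ref{2dev}), deduces $\zeta_{D_A}(0)-\zeta_D(0)=-\tfrac12\ncint Y$ from simplicity of the dimension spectrum, and then converts $\ncint\log(1+XD^{-2})$ into $2\ncint\log(1+\wt AD^{-1})$ via the additivity $\ncint\log\big((1+S)(1+T)\big)=\ncint\log(1+S)+\ncint\log(1+T)$ for $S,T\in\Psi(\A)\cap OP^{-1}$, the series for $\log(1+x)$ giving the coefficients. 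Your route makes the combinatorics more transparent; the paper's avoids any differentiation in an external parameter.

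The caveat is that the step you yourself flag as the main obstacle is where essentially all the work of the paper's proof lives. To interchange $\tfrac{d}{dt}$ with the meromorphic continuation and with evaluation at $s=0$ you need joint holomorphy and locally uniform bounds in $(t,s)$ for the continued trace, and the only available way to produce these is an expansion of $|D_t|^{-s}$ with $OP$-controlled remainders of exactly the Duhamel/Lemma \ref{2dev} type; likewise the identification $\Res_{s=0}\Tr\big(P|D_t|^{-s}\big)=\ncint P$, which you use implicitly when replacing $|D_t|^{-s}$ by $|D|^{-s}$ under the residue, is Proposition \ref{ncintfluctuated} and is itself proved with that machinery. Two smaller points: $D_t$ must be corrected by the kernel projection $P_{t\wt A}$ (which can jump with $t$); this is harmless for the residues since $P_A\in OP^{-\infty}$ (Lemma \ref{finiterank}), but it should be said. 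So your ``cleaner fallback'' at the end is, in effect, the paper's proof; as written, the path argument is a correct and attractive skeleton whose analytic justification still has to be supplied.
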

The proof needs few preliminaries.

\begin{definition}
For an operator $T$, define the one-parameter group and notation
\begin{align*}
& \sigma_z(T) \vc|D|^{z}T|D|^{-z}, \, z\in \C.\\
&\epsilon(T)\vc\nabla(T)D^{-2}, \text{(recall that }\nabla (T)=[\DD^2,T]).
\end{align*}
\end{definition}
\noindent The expansion of the one-parameter group $\sigma_z$ gives for $T\in OP^q$
\begin{equation}
\label{one-par}
\sigma_{z}(T)\sim \sum_{r=0}^{N} g(z,r) \,\eps^r(T)  \mod OP^{-N-1+q}
\end{equation}
where $g(z,r) \vc \tfrac{1}{r!}(\tfrac{z}{2})\cdots(\tfrac{z}{2}-(r-1))=\genfrac(){0pt}{1}{z/2}{r}$
with the convention $g(z,0)\vc1$.

We fix a regular spectral triple $(\A,\H,\DD)$ of dimension $d$ and a self-adjoint 1-form $A$.
\\
Despite previous remark before Lemma \ref{adjoint}, we pay attention here to the kernel of $\DD_A$ since this 
operator can be non-invertible even if $\DD$ is, so we define
\begin{align}
\DD_A &\vc \DD+\wt A \text{  where } \wt A\vc A +\eps JAJ^{-1} ,\nonumber\\
 D_A &\vc  \DD_A + P_A \label{DAdroit}
\end{align}
where $P_A$ is the projection on $\Ker \DD_A$. Remark that
$\wt A \in\DD(\A)\cap OP^0$ and $\DD_A\in \DD(\A)\cap OP^1$.
\\
We note
$$
V_A\vc P_A - P_0\,.
$$
As the following lemma shows, $V_A$ is a smoothing operator:

\begin{lemma}
\label{finiterank}
(i) $\bigcap_{k\geq 1} \Dom (\DD_A)^{k} \subseteq \bigcap_{k\geq 1}
\Dom |D|^k$.

(ii) $\Ker \DD_A \subseteq \bigcap_{k\geq 1} \Dom |D|^k$.

(iii) For any $\a, \beta \in \R$, $|D|^\beta P_A |D|^\a$ is bounded.

(iv) $P_A \in OP^{-\infty}$.
\end{lemma}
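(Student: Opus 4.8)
\textbf{Proof proposal for Lemma \ref{finiterank}.}

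The plan is to prove the four statements in order, using that $\DD_A = \DD + \wt A$ with $\wt A \in \DD(\A)\cap OP^0$ bounded, so that $\DD_A$ differs from $\DD$ by a bounded operator, and to exploit the regularity of the spectral triple together with elliptic-type estimates coming from the $OP$-calculus. For (i), let $\xi \in \bigcap_{k\geq 1}\Dom(\DD_A)^k$. I would argue by induction on $k$ that $\xi \in \Dom|D|^k$. For $k=1$ this is immediate since $\DD_A - \DD = \wt A$ is bounded, hence $\Dom\DD_A = \Dom\DD = \Dom|D|$. For the inductive step, suppose $\xi \in \Dom|D|^k \cap \Dom(\DD_A)^{k+1}$; writing $|D|^{k}$ as a polynomial expression and commuting $\wt A$ (and its iterated commutators $\delta^j(\wt A)$, all bounded by regularity) past powers of $\DD_A$, one expresses $|D|^{k+1}\xi$ as a finite sum of bounded operators applied to vectors of the form $(\DD_A)^{j}\xi$ with $j \leq k+1$, all of which lie in $\H$. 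This is the bookkeeping heart of the lemma; the key input is that regularity gives $\wt A \in \bigcap_n \Dom\delta^n$, so every commutator that arises is again bounded.

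Statement (ii) is an immediate consequence of (i): if $\xi \in \Ker\DD_A$ then $(\DD_A)^k\xi = 0$ for all $k$, so $\xi \in \bigcap_{k\geq 1}\Dom(\DD_A)^k \subseteq \bigcap_{k\geq 1}\Dom|D|^k$. For (iii), I would use that $P_A$ is the orthogonal projection onto the finite-dimensional space $\Ker\DD_A$ (finite-dimensionality was established in Lemma \ref{compres}). Pick an orthonormal basis $\{\eta_1,\dots,\eta_N\}$ of $\Ker\DD_A$; by (ii) each $\eta_i$ lies in $\Dom|D|^m$ for every $m$, hence $|D|^\beta\eta_i$ is a well-defined vector of $\H$ for every $\beta \in \R$ (using that $|D|$ is invertible, or has been made so by the $+P$ modification). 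Then $|D|^\beta P_A|D|^\a = \sum_{i=1}^N |D|^\beta\eta_i\,\langle |D|^{\bar\a}\eta_i,\,\cdot\,\rangle$ is a finite-rank operator, in particular bounded; one should note $|D|^{\bar\a}\eta_i \in \H$ by the same reasoning applied to $\a$ (and $\a$ real here).

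Finally (iv): by definition $P_A \in OP^{-\infty}$ means $P_A|D|^k \in OP^0$ for all $k$, i.e. $t \mapsto F_t(P_A|D|^k)$ is smooth in operator norm, equivalently $P_A|D|^k \in \bigcap_n\Dom\delta^n$. Since $P_A|D|^k$ is bounded by (iii), it remains to control its images under $\delta = [\,|D|,\cdot\,]$. Each application of $\delta$ to $|D|^\beta P_A|D|^\a$ produces $|D|^{\beta+1}P_A|D|^\a - |D|^\beta P_A|D|^{\a+1}$, again of the form covered by (iii), hence bounded; iterating shows $P_A|D|^k \in \bigcap_n\Dom\delta^n = OP^0$, so $P_A \in OP^{-\infty}$. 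I expect the main obstacle to be the inductive commutator juggling in part (i): one must be careful that when expanding $|D|^{k+1}\xi$ in terms of $(\DD+\wt A)$-powers, the regularized derivations $\delta^j(\wt A)$ appearing really are bounded and map the relevant domains into themselves — this is exactly what regularity of $(\A,\H,\DD)$ provides, and once that is invoked cleanly the rest is routine.
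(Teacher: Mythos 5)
Your proof is correct and follows essentially the same route as the paper's: part (i) by induction on $k$ using that $(\DD_A)^j-\DD^j$ is a lower-order (bounded-after-regularization) perturbation thanks to $\wt A\in OP^0$ and regularity, (ii) as an immediate corollary, (iii) from finite-dimensionality of $\Ker\DD_A$ together with (ii), and (iv) by observing that $\delta^k(P_A)|D|^t$ is a combination of operators of the form $|D|^\beta P_A|D|^\a$ covered by (iii). Your explicit finite-rank expansion $|D|^\beta P_A|D|^\a=\sum_i |D|^\beta\eta_i\,\langle |D|^{\a}\eta_i,\cdot\rangle$ in (iii) is just a cleaner packaging of the paper's duality estimate, not a different argument.
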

\begin{proof}
$(i)$ Let us define for any $p\in \N$, $R_p \vc (\DD_A)^p -\DD^p$, so
$R_p \in OP^{p-1}$ and moreover $R_p \big(\Dom |D|^p\big)\subseteq \Dom |D|$.

Let us fix $k\in \N$, $k\geq 2$. Since $\Dom \DD_A = \Dom \DD =\Dom |D|$, we have
$$
\Dom (\DD_A)^k = \set{\phi \in \Dom |D| \ : \ (\DD^j + R_j)\,\phi \in \Dom |D| \ , \ \forall j\ \  1\leq j\leq k-1  }.
$$
Let $\phi \in \Dom (\DD_A)^k$. We prove by recurrence that for any
$j\in\set{1,\cdots,k-1}$, $\phi \in \Dom |D|^{j+1}$:

We have $\phi\in \Dom |D|$ and $(\DD + R_1)\, \phi \in \Dom |D|$. Thus, since $R_1\,\phi \in \Dom |D|$, 
we have $\DD \phi \in \Dom |D|$, which proves that $\phi \in \Dom |D|^2$.
Hence, case $j=1$ is done.

Suppose now $\phi \in \Dom |D|^{j+1}$ for a $j \in \set{1,\cdots,k-2}$.
Since $(\DD^{j+1} + R_{j+1})\, \phi \in \Dom |D|$, and $R_{j+1}\, \phi \in \Dom |D|$, we get $\DD^{j+1}\,
\phi \in \Dom |D|$,
which proves that $\phi\in \Dom |D|^{j+2}$.

Finally, if we set $j=k-1$, we get $\phi \in \Dom |D|^{k}$, so $\Dom (\DD_A)^k \subseteq \Dom |D|^k$.

$(ii)$ follows from $\Ker \DD_A \subseteq \bigcap_{k\geq 1} \Dom (\DD_A)^k$ and $(i)$.

$(iii)$ Let us first check that $|D|^\a P_A$ is bounded. We define $D_0$ as the operator with domain 
$\Dom D_0 = \Ima P_A \cap \Dom |D|^\a$ and such that $D_0\, \phi = |D|^\a\, \phi.$ Since $\Dom D_0$ is
finite dimensional, $D_0$ extends as a bounded operator on $\H$ with finite rank.
We have
$$
\sup_{\phi \in \Dom |D|^\a P_A,\ \norm{\phi}\leq 1} \norm{|D|^\a P_A\, \phi} \leq
\sup_{\phi \in \Dom D_0,\ \norm{\phi}\leq 1} \norm{|D|^\a\, \phi} = \norm{D_0}<\infty
$$
so $|D|^\a P_A$ is bounded. We can remark that by $(ii)$, $\Dom D_0 = \Ima P_A$ and
$\Dom |D|^\a P_A = \H$.

Let us prove now that $P_A |D|^\a$ is bounded:
Let $\phi\in \Dom P_A |D|^\a = \Dom |D|^\a$. By $(ii)$, we have $\Ima P_A \subseteq \Dom |D|^\a$ so we get
\begin{align*}
\norm{P_A |D|^\a\,\phi} & \leq \sup_{\psi \in \Ima P_A,\ \norm{\psi}\leq 1} |<\psi,|D|^\a\, \phi>|
\leq \sup_{\psi \in \Ima P_A,\ \norm{\psi}\leq 1} |<|D|^\a\psi,\phi>| \\
&\leq \sup_{\psi \in \Ima P_A,\ \norm{\psi}\leq 1} \norm{|D|^\a\psi}\norm{\phi} = \norm{D_0} \norm{\phi}.
\end{align*}

$(iv)$ For any $k\in \N_0$ and $t\in \R$, $\delta^k(P_A)|D|^t$ is a linear combination of terms of the form 
$|D|^\beta P_A |D|^\a$, so the result follows from $(iii)$.
\end{proof}

\begin{remark}
We will see later on the noncommutative torus example how important is the difference
between $\DD_{A}$ and $\DD+A$. In particular, the inclusion $\Ker \DD \subseteq \Ker \DD +A$ is not satisfied 
since $A$ does not preserve $\Ker \DD$ contrarily to $\wt A$.
\end{remark}

Let us define
\begin{align*}
&X \vc \DD_{A}^2-\DD^2 =\wt A \DD + \DD \wt A + \wt A^2 ,\\
&X_V \vc X+V_A,
\end{align*}
thus $X\in \DD_1(\A)\cap OP^1$ and by Lemma \ref{finiterank},
\begin{equation}
\label{xvsim}
X_V \sim X \mod OP^{-\infty}.
\end{equation}
\noindent 
We will use
$$
Y\vc\log(D_A^2) -\log (D^2)
$$
which makes sense since $D_A^2 = \DD_A^2 + P_A$ is invertible for any
$A$. By definition of $X_V$, we get
$$
Y= \log (D^2 + X_V) -\log (D^2).
$$

\begin{lemma} 
\label{2dev}

(i) $Y$ is a pseudodifferential operator in $OP^{-1}$ with the
following
expansion for any $N\in\N$
$$
Y \sim \sum_{p=1}^N\sum_{k_1,\cdots,k_p
=0}^{N-p}\tfrac{(-1)^{|k|_1+p+1}}{|k|_1+p}
\nabla^{k_p}(X\nabla^{k_{p-1}}(\cdots
X\nabla^{k_1}(X)\cdots)) D^{-2(|k|_1+p)} \mod OP^{-N-1}.
$$

(ii) For any $N\in\N$ and $s\in \C$,
\begin{align}
\label{expansion} 
|D_A|^{-s} \sim |D|^{-s} + \sum_{p=1}^N K_p(Y,s) |D|^{-s} \mod OP^{-N-1-\Re(s)}
\end{align}
with $K_p(Y,s)\in OP^{-p}$.
\end{lemma}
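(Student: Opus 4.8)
\emph{Part (i): the operator $Y$.} The plan is to start from the integral representation of the logarithm. Since $D^2$ and $D_A^2=\DD_A^2+P_A$ are strictly positive invertible selfadjoint operators,
\[
Y=\log(D_A^2)-\log(D^2)=\int_0^\infty\Big[(D^2+\lambda)^{-1}-(D_A^2+\lambda)^{-1}\Big]\,d\lambda ,
\]
and this integral converges in operator norm: writing $R_0(\lambda):=(D^2+\lambda)^{-1}$, using $X_V=D_A^2-D^2\in OP^{1}$ (so $X_V|D|^{-1}$ is bounded) and $\|\,|D|\,(D^2+\lambda)^{-1}\|=O\big((1+\lambda)^{-1/2}\big)$, one gets $\|R_0(\lambda)X_V\|=O\big((1+\lambda)^{-1/2}\big)$, hence $\|R_0(\lambda)-(D_A^2+\lambda)^{-1}\|=\|R_0(\lambda)X_V(D_A^2+\lambda)^{-1}\|=O\big((1+\lambda)^{-3/2}\big)$. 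Iterating the resolvent identity $R_0(\lambda)-(D_A^2+\lambda)^{-1}=R_0(\lambda)X_V(D_A^2+\lambda)^{-1}$ a total of $N$ times,
\[
R_0-(D_A^2+\lambda)^{-1}=\sum_{p=1}^{N}(-1)^{p-1}\big(R_0X_V\big)^{p}R_0\;+\;(-1)^{N}\big(R_0X_V\big)^{N+1}(D_A^2+\lambda)^{-1},
\]
and by \eqref{xvsim} one replaces $X_V$ by $X$ everywhere up to an $OP^{-\infty}$ error. Normal-ordering each $\big(R_0X\big)^{p}R_0$ by repeatedly pushing the resolvents to the right via $R_0T=TR_0-R_0\nabla(T)R_0$ (valid since $[R_0,T]=R_0[D^2,T]R_0=R_0\nabla(T)R_0$) and truncating consistently at order $N$ produces the nested strings $\nabla^{k_p}\!\big(X\nabla^{k_{p-1}}(\cdots X\nabla^{k_1}(X)\cdots)\big)$ carrying a sign $(-1)^{|k|_1}$ (with $|k|_1:=k_1+\cdots+k_p$) and a trailing factor $R_0(\lambda)^{\,|k|_1+p+1}$; integrating term by term through $\int_0^\infty(D^2+\lambda)^{-m}\,d\lambda=\tfrac1{m-1}D^{-2(m-1)}$ for $m\ge2$ then reproduces exactly the stated coefficient $\tfrac{(-1)^{|k|_1+p+1}}{|k|_1+p}$ and the factor $D^{-2(|k|_1+p)}$. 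The leading term ($p=1$, $k_1=0$) equals $XD^{-2}\in OP^{-1}$ since $X\in OP^{1}$, so $Y\in OP^{-1}$; moreover $\nabla$ preserves $\DD(\A)$ and each $D^{-2m}$ is a negative power of $\DDD$, so every term of the expansion lies in $\DD(\A)\,\DDD^{-2m}\subset\Psi(\A)$, which together with the $OP^{-N-1}$ remainder below gives $Y\in\Psi(\A)$.

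\emph{Part (ii): powers of $|D_A|$.} Here I would write $|D_A|^{-s}=(D_A^2)^{-s/2}=\exp\!\big(-\tfrac s2(\log D^2+Y)\big)$ and expand it around $|D|^{-s}=\exp\!\big(-\tfrac s2\log D^2\big)$ using the iterated Duhamel formula for $e^{A+B}$ with $A=-\tfrac s2\log D^2$ (so $e^{uA}=|D|^{-us}$) and $B=-\tfrac s2Y$. Collecting the fixed power $|D|^{-s}$ on the right and telescoping the intermediate factors $e^{(u_j-u_{j+1})A}$ turns the $p$-th Duhamel term into a $p$-fold simplex integral of products $\sigma_{z_1}(Y)\cdots\sigma_{z_p}(Y)$, where each $z_j=z_j(s,u)$ is linear in $s$; this is what defines $K_p(Y,s)$. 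Since the spectral triple is regular, $\sigma_z$ maps $OP^{a}$ into $OP^{a}$ by the expansion \eqref{one-par}; hence each $\sigma_{z_j}(Y)\in OP^{-1}$, the $p$-fold product lies in $OP^{-p}$, and integration over the compact simplex keeps it in $OP^{-p}$ with bounds polynomial in $s$, so $s\mapsto K_p(Y,s)$ is entire and $K_p(Y,s)\in OP^{-p}$. The Duhamel remainder at order $N+1$ carries one extra factor of $Y$ and the full exponential $(D_A^2)^{-s\,u_{N+1}/2}$; bounding the latter uniformly for $u_{N+1}\in[0,1]$ by the resolvent estimates of Part (i) places the remainder in $OP^{-N-1}\cdot OP^{-\Re(s)}=OP^{-N-1-\Re(s)}$, which is \eqref{expansion}.

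\emph{Main obstacle.} The genuinely delicate point, in both parts, is not the algebra of signs and of the Beta/simplex integrals but the \emph{remainder estimates}: one must verify that the truncated resolvent tail in (i) and the Duhamel tail in (ii), once integrated, are honest elements of $OP^{-N-1}$, resp.\ $OP^{-N-1-\Re(s)}$ — i.e.\ that these operators together with all their iterated commutators with $\DDD$ remain bounded, with bounds uniform in the auxiliary parameter $\lambda$, resp.\ in the simplex coordinates. This is precisely where the regularity of $(\A,\H,\DD)$ must be used in full strength, through uniform-in-$\lambda$ control of the $F_t$-derivatives of $(D^2+\lambda)^{-1}$ and $(D_A^2+\lambda)^{-1}$; the argument runs along the lines of the Connes--Moscovici pseudodifferential calculus underlying this section.
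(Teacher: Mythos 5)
Your proposal is correct and follows essentially the same route as the paper: the integral representation $Y=\int_0^\infty\big[(D^2+\lambda)^{-1}-(D_A^2+\lambda)^{-1}\big]\,d\lambda$, the iterated resolvent identity with $X_V$ replaced by $X$ modulo $OP^{-\infty}$, normal-ordering via $\nabla$ to produce the nested strings and the factor $(D^2+\lambda)^{-(|k|_1+p+1)}$, the elementary $\lambda$-integral for the coefficients, and Duhamel's formula for $|D_A|^{-s}=e^{B-(s/2)Y}e^{-B}|D|^{-s}$ in part (ii); you also correctly locate the real work in the remainder estimates, which the paper itself delegates to Chamseddine--Connes. (Only a harmless typo: the justification should read $[R_0,T]=-R_0[D^2,T]R_0$, though the identity $R_0T=TR_0-R_0\nabla(T)R_0$ you actually use is correct.)
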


\begin{proof}
$(i)$ We follow \cite[Lemma 2.2]{CC1}. By functional calculus, $Y=\int_0^\infty I(\la)\, d\la$, where
$$
I(\la)\sim\sum_{p=1}^N(-1)^{p+1}\big((D^2+\la)^{-1}X_V\big)^{p} (D^2+\la)^{-1} \mod OP^{-N-3}.
$$
By (\ref{xvsim}), $\big((D^2+\la)^{-1}X_V\big)^{p} \sim \big((D^2+\la)^{-1}X\big)^{p} \mod OP^{-\infty}$ and we get
$$
I(\la)\sim\sum_{p=1}^N(-1)^{p+1}\big((D^2+\la)^{-1}X\big)^{p} (D^2+\la)^{-1} \mod OP^{-N-3}.
$$
We set $A_p(X)\vc\big((D^2+\la)^{-1}X\big)^{p}(D^2+\la)^{-1}$ and $L\vc(D^2+\la)^{-1}\in OP^{-2}$ for a fixed $\la$.
Since $[D^2 + \la,X]\sim \nabla(X) \mod OP^{-\infty}$, a recurrence proves that if $T$ is an operator
in $OP^{r}$, then, for $q\in \N_0$,
$$
A_1(T)=L T L \sim \sum_{k=0}^q (-1)^k\nabla^k(T) L^{k+2} \mod OP^{r-q-5}.
$$
With $A_p(X)=LX A_{p-1}(X)$, another recurrence gives, for any $q\in \N_0$,
$$
A_p(X)\sim \sum_{k_1,\cdots,k_p =0}^q (-1)^{|k|_1}\nabla^{k_p}
(X \nabla^{k_{p-1}}(\cdots X\nabla^{k_1}(X)\cdots)) L^{|k|_1+p+1} \mod OP^{-q-p-3},
$$
which entails that
$$
I(\la)\sim\sum_{p=1}^N(-1)^{p+1}\sum_{k_1,\cdots,k_p =0}^{N-p}(-1)^{|k|_1}\nabla^{k_p}(X\nabla^{k_{p-1}}
(\cdots X\nabla^{k_1}(X)\cdots)) L^{|k|_1+p+1} \mod OP^{-N-3}.
$$

With $\int_{0}^\infty (D^2+\la)^{-(|k|_1+p+1)}d\la = \tfrac{1}{|k|_1+p} D^{-2(|k|_1+p)}$, we get the result 
provided we control the remainders. Such a control is given in \cite[(2.27)]{CC1}.

$(ii)$ Applied to $|D_A|^{-s}=e^{B-(s/2)Y}e^{-B}\, |D|^{-s}$ where $B\vc (-s/2)\log(D^2)$, the Duhamel's 
expansion formula 
$$
e^{U+V}e^{-U}=\sum_{n=0}^\infty \int_{0\leq t_1\leq t_2\leq\cdots\leq t_n \leq1} V(t_1)\cdots V(t_n) \,dt_1 \cdots dt_n
$$
with $V(t)\vc e^{tU}Ve^{-tU}$ gives
\begin{equation}
\label{egalite-DAs}
|D_A|^{-s} = |D|^{-s} + \sum_{p=1}^\infty K_p(Y,s)|D|^{-s}\, .
\end{equation}
and each $K_p(Y,s)$ is in $OP^{-p}$.
\end{proof}

\begin{corollary}
\label{eps-pdo} For any $p\in\N$ and $r_1,\cdots,r_p \in \N_0$, $\eps^{r_1}(Y)\cdots \eps^{r_p}(Y) \in \Psi_1(\A)$.
\end{corollary}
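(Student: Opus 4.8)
The plan is to reduce the corollary to three ingredients: first, that $Y$ itself is a pseudodifferential operator in $\Psi_{1}(\A)$; second, that the operation $\eps$ maps $\Psi_{1}(\A)$ into itself; and third, that $\Psi_{1}(\A)$ is an algebra, which is recorded in Definition~\ref{defpseudo1}. Granting these, each $\eps^{r_{i}}(Y)$ lies in $\Psi_{1}(\A)$ by induction on $r_{i}$ (the case $r_{i}=0$ being the first ingredient and the inductive step the second), and the product $\eps^{r_{1}}(Y)\cdots\eps^{r_{p}}(Y)$ then lies in $\Psi_{1}(\A)$ by the third.

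First I would check that $Y\in\Psi_{1}(\A)\cap OP^{-1}$. The order is part of Lemma~\ref{2dev}(i). For membership, fix $N\in\N$ and use the expansion of $Y$ from Lemma~\ref{2dev}(i): modulo $OP^{-N-1}$, $Y$ equals a finite linear combination of terms $\nabla^{k_{p}}\big(X\nabla^{k_{p-1}}(\cdots X\nabla^{k_{1}}(X)\cdots)\big)\,\DD^{-2(|k|_{1}+p)}$. Each copy of $X=\wt A\DD+\DD\wt A+\wt A^{2}$ belongs to $\DD_{1}(\A)$: indeed $A\in\Omega^{1}_{\DD}(\A)\subset\DD_{1}(\A)$, and writing $A=\sum_{i}a_{i}[\DD,b_{i}]$ and using $J\DD J^{-1}=\eps\DD$ gives $JAJ^{-1}=\eps\sum_{i}(Ja_{i}J^{-1})[\DD,Jb_{i}J^{-1}]\in\DD_{1}(\A)$, hence $\wt A\in\DD_{1}(\A)$. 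Since $\nabla=[\DD^{2},\,\cdot\,]$ with $\DD^{2}\in\DD_{1}(\A)$, the derivation $\nabla$ sends $\DD_{1}(\A)$ into itself, so each coefficient $P_{k,p}\vc\nabla^{k_{p}}\big(X\nabla^{k_{p-1}}(\cdots)\big)$ lies in $\DD_{1}(\A)$. Choosing $M$ larger than every exponent $|k|_{1}+p$ occurring in the (finite) truncated sum and writing $P_{k,p}\,\DD^{-2(|k|_{1}+p)}=\big(P_{k,p}\,\DD^{2(M-|k|_{1}-p)}\big)\DDD^{-2M}$, I can collect the truncated sum into a single term $P\,\DDD^{-2M}$ with $P\in\DD_{1}(\A)$, because even powers of $\DD$ remain inside $\DD_{1}(\A)$. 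Thus $Y=P\,\DDD^{-2M}+R$ with $R\in OP^{-N-1}$, and since $N$ was arbitrary, $Y\in\Psi_{1}(\A)$.

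Next I would show $\eps\big(\Psi_{1}(\A)\big)\subseteq\Psi_{1}(\A)$. Let $T\in\Psi_{1}(\A)$ and fix $N$; write $T=P\,\DDD^{-q}+R$ with $P\in\DD_{1}(\A)$ and $R\in OP^{-N}$. Using $[\DD^{2},\DDD^{-q}]=0$ one gets
\[
\eps(T)=\nabla(T)\DD^{-2}=[\DD^{2},P]\,\DDD^{-q}\DD^{-2}+\nabla(R)\DD^{-2}=[\DD^{2},P]\,\DDD^{-q-2}+\eps(R),
\]
where $[\DD^{2},P]\in\DD_{1}(\A)$ and $\eps(R)\in OP^{-N-1}\subseteq OP^{-N}$ (because $\nabla$ raises order by one and $\DD^{-2}\in OP^{-2}$). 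As $N$ was arbitrary, $\eps(T)\in\Psi_{1}(\A)$. Iterating gives $\eps^{r}(Y)\in\Psi_{1}(\A)$ for every $r\in\N_{0}$, and the algebra property of $\Psi_{1}(\A)$ finishes the proof.

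The argument is essentially bookkeeping once Lemma~\ref{2dev}(i) is in hand; the one point that genuinely needs care is the collecting performed in the second paragraph. One must use only \emph{even} powers of $\DDD$ when absorbing the factors $\DD^{-2(|k|_{1}+p)}$ into elements of $\DD_{1}(\A)$, since $\DD_{1}(\A)$ need not contain $\DDD$ or its odd powers (cf.\ the remark after Definition~\ref{defpseudo1}); it is precisely the fact that every exponent appearing in Lemma~\ref{2dev}(i) has the form $2(|k|_{1}+p)$ that makes this legitimate.
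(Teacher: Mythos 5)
Your proposal is correct and follows essentially the same route as the paper: expand $Y$ via Lemma~\ref{2dev}(i), observe that the coefficients $\nabla^{k_q}(X\nabla^{k_{q-1}}(\cdots))$ lie in $\DD_1(\A)$ so that $Y\in\Psi_1(\A)$, then pass to $\eps^{r}(Y)$ (the paper uses the identity $\eps^r(Y)=\nabla^r(Y)D^{-2r}$ where you instead prove stability of $\Psi_1(\A)$ under $\eps$, which amounts to the same computation) and invoke the algebra property of $\Psi_1(\A)$. Your extra care about absorbing the $D^{-2(|k|_1+p)}$ factors into a single even power $\DDD^{-2M}$ is a legitimate point the paper leaves implicit, and it is handled correctly.
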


\begin{proof}
If for any $q\in \N$ and $k=(k_1,\cdots,k_q)\in \N_0^q$,
$$
\Ga_q^k(X)\vc\tfrac{(-1)^{|k|_1+q+1}}{|k|_1+q} \nabla^{k_q} (X\nabla^{k_{q-1}}(\cdots X\nabla^{k_1}(X)\cdots)),
$$
then, $\Ga_q^k(X) \in OP^{|k|_1+q}$. For any $N\in\N$,
\begin{equation}
\label{Ydev}
Y \sim \sum_{q=1}^N\sum_{k_1,\cdots,k_q =0}^{N-q} \Ga_q^k(X) D^{-2(|k|_1+q)} \mod OP^{-N-1}.
\end{equation}
Since the $\Ga_q^k(X)$ are in $\DD(\A)$, this proves with (\ref{Ydev}) that $Y$
and thus $\eps^r(Y)= \nabla^r(Y)D^{-2r}$, are also in $\Psi_1(\A)$.
\end{proof}

\medskip

\begin{proof}[Proof of Theorem \ref{difference}] Again, we follow \cite{CC1}. Since the spectral triple is simple, 
equation (\ref{egalite-DAs}) entails that
$$
\zeta_{D_A}(0)-\zeta_{D}(0) = \Tr (K_1(Y,s)|D|^{-s})_{|s=0} \, .
$$
Thus, with (\ref{one-par}), we get $\zeta_{D_A}(0)-\zeta_{D}(0) = -\half \ncint Y$. 
\\
Now the conclusion follows from 
$\ncint \log\big((1+S)(1+T)\big)=\ncint \log(1+S)+ \ncint \log (1+T)$ for $S,T\in \Psi(\A) \cap OP^{-1}$ (since 
$\log(1+S)=\sum_{n=1}^\infty \tfrac{(-1)^{n+1}}{n} S^n$) with $S=AD^{-1}$ and $T=DAD^{-2}$; so 
$\ncint \log(1+X D^{-2})=2\ncint \log(1+AD^{-1})$ and 
\begin{align*}
-\half \ncint Y = \sum_{q=1}^{d} \tfrac{(-1)^{q}}{q} \ncint (\wt AD^{-1})^{q}.
\end{align*}
\end{proof}

\begin{lemma}
\label{Res-zeta-n-k}
For any $k\in \N_0$,
$$
\underset{s=d-k}{\Res} \, \zeta_{D_A}(s)= \underset{s=d-k}{\Res} \,\zeta_{D}(s) +
\sum_{p=1}^k \sum_{r_1,\cdots, r_p =0}^{k-p} \underset{s=d-k}{\Res} \, h(s,r,p) \, \Tr\big(\eps^{r_1}(Y)
\cdots\eps^{r_p}(Y) |D|^{-s}\big),
$$
where
$$
h(s,r,p)\vc(-s/2)^p\int_{0\leq t_1\leq \cdots \leq t_p\leq 1} g(-st_1,r_1)\cdots g(-st_p,r_p) \, dt\, .
$$
\end{lemma}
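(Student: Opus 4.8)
The plan is to begin, exactly as in the proof of Lemma~\ref{2dev}(ii), from the identity
\[
|D_A|^{-s}=\bigl(e^{B-(s/2)Y}\,e^{-B}\bigr)\,|D|^{-s},\qquad B\vc(-s/2)\log(D^2),
\]
and to expand $e^{B-(s/2)Y}e^{-B}$ by Duhamel's formula with $U=B$ and $V=-(s/2)Y$. Since $e^{tB}=|D|^{-st}$ one has $e^{tB}Ve^{-tB}=-(s/2)\,\sigma_{-st}(Y)$, so that, with $K_p(Y,s)\in OP^{-p}$ as in Lemma~\ref{2dev}(ii),
\[
|D_A|^{-s}=|D|^{-s}+\sum_{p\geq 1}K_p(Y,s)\,|D|^{-s},\qquad
K_p(Y,s)=(-s/2)^p\int_{0\leq t_1\leq\cdots\leq t_p\leq 1}\sigma_{-st_1}(Y)\cdots\sigma_{-st_p}(Y)\,dt.
\]
Here $Y\in OP^{-1}$ by Lemma~\ref{2dev}(i), and the Duhamel remainder beyond $p=N$ is controlled exactly as in \cite[(2.27)]{CC1}, so that its trace is holomorphic on $\{\Re(s)>d-N-1\}$; taking $N\geq k$ this remainder does not affect the residue at $s=d-k$.

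The second step is to insert into each factor $\sigma_{-st_i}(Y)$ the one-parameter-group expansion \eqref{one-par}, valid since $Y\in OP^{-1}$:
\[
\sigma_{-st_i}(Y)\ \sim\ \sum_{r_i\geq 0}g(-st_i,r_i)\,\eps^{r_i}(Y)\pmod{OP^{-N'}},
\]
where $\eps^{r_i}(Y)\in OP^{-r_i-1}$. Expanding the product $\sigma_{-st_1}(Y)\cdots\sigma_{-st_p}(Y)$ and integrating termwise over the simplex, the scalar coefficient multiplying $\eps^{r_1}(Y)\cdots\eps^{r_p}(Y)$ is precisely
\[
(-s/2)^p\int_{0\leq t_1\leq\cdots\leq t_p\leq 1}g(-st_1,r_1)\cdots g(-st_p,r_p)\,dt=h(s,r,p),
\]
so $K_p(Y,s)\sim\sum_{r_1,\dots,r_p\geq 0}h(s,r,p)\,\eps^{r_1}(Y)\cdots\eps^{r_p}(Y)\pmod{OP^{-N''}}$ for arbitrary $N''$. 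By Corollary~\ref{eps-pdo} each operator $\eps^{r_1}(Y)\cdots\eps^{r_p}(Y)$ lies in $\Psi_1(\A)\cap OP^{-(|r|_1+p)}$, hence $s\mapsto\Tr\bigl(\eps^{r_1}(Y)\cdots\eps^{r_p}(Y)|D|^{-s}\bigr)$ has a meromorphic continuation whose poles all lie in $\{\Re(s)\leq d-(|r|_1+p)\}$, while $h(s,r,p)$ is a polynomial in $s$.

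Finally, taking $\Tr(\,\cdot\,|D|^{-s})$ of the expansion of $|D_A|^{-s}$ and truncating at $p=N$ with $N\geq k$ gives
\[
\zeta_{D_A}(s)=\zeta_D(s)+\sum_{p=1}^{N}\ \sum_{r_1,\dots,r_p\geq 0}h(s,r,p)\,\Tr\bigl(\eps^{r_1}(Y)\cdots\eps^{r_p}(Y)|D|^{-s}\bigr)+H_N(s),
\]
with $H_N$ holomorphic near $s=d-k$. Applying $\underset{s=d-k}{\Res}$, a term can contribute only when $\Tr\bigl(\eps^{r_1}(Y)\cdots\eps^{r_p}(Y)|D|^{-s}\bigr)$ actually has a pole at $s=d-k$, which by the previous paragraph forces $|r|_1+p\leq k$; hence $1\leq p\leq k$ and $0\leq r_i\leq k-p$, exactly the range in the statement. (Since the spectral triple is not assumed simple, $h(s,r,p)$ is kept inside the residue.) This yields the claimed identity.

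The main obstacle is the bookkeeping of remainders: one must verify that truncating the Duhamel series and the expansion \eqref{one-par} produces, after taking traces, functions that are genuinely holomorphic on a fixed neighbourhood of $s=d-k$, uniformly in $s$ there. This rests on the quantitative estimates of \cite{CC1} together with the fact from the pseudodifferential calculus (cf.\ Remark~\ref{remark-spectrum}) that for $Q\in\Psi_1(\A)\cap OP^{-m}$ the function $\Tr(Q|D|^{-s})$ extends meromorphically with no pole to the right of $\Re(s)=d-m$. Everything else is the reorganization of the Duhamel expansion combined with the substitution \eqref{one-par}.
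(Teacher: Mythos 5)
Your proof is correct and follows essentially the same route as the paper: Duhamel's expansion of $|D_A|^{-s}$ (Lemma \ref{2dev}(ii)) followed by the one-parameter-group expansion \eqref{one-par} of each $\sigma_{-st_i}(Y)$, with remainders placed in $\L^1(\H)$ near $s=d-k$. The only cosmetic difference is that the paper truncates \eqref{one-par} at $N=k-p$ outright, whereas you formally keep all $r_i\geq 0$ and then discard the non-contributing terms by locating their poles; to be fully rigorous you should truncate at a finite order (say $N'=k-p$) before summing, since the asymptotic series does not converge, but this does not affect the argument.
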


\begin{proof}
By Lemma \ref{2dev} $(ii)$, $|D_A|^{-s} \sim |D|^{-s} + \sum_{p=1}^k K_p(Y,s)
|D|^{-s} \mod OP^{-(k+1)-\Re(s)}$, where the convention $\sum_{\emptyset}=0$ is used.
Thus, we get for $s$ in a neighborhood of $d-k$,
\begin{align*}
|D_A|^{-s}-|D|^{-s} - \sum_{p=1}^k K_p(Y,s) |D|^{-s} \in OP^{-(k+1)-\Re(s)}\subseteq \L^1(\H)
\end{align*}
which gives
\begin{equation}
\label{res-n-k-interm}
\underset{s=d-k}{\Res} \, \zeta_{D_A}(s)= \underset{s=d-k}{\Res} \,\zeta_{D}(s) + 
\sum_{p=1}^k \underset{s=d-k}{\Res} \,\Tr \big(K_p(Y,s) |D|^{-s}\big).
\end{equation}
Let us fix $1\leq p\leq k$ and $N\in \N$. By (\ref{one-par}) we get
\begin{align}
\label{K_p}
K_p(Y,s)\sim (-\tfrac s2)^p \int_{0\leq t_1\leq \cdots t_p \leq 1} \sum_{r_1,\cdots,r_p =0}^N g(-st_1,r_1)
&\cdots g(-st_p,r_p) \nonumber\\
&\eps^{r_1}(Y)\cdots \eps^{r_p}(Y)\, dt \, \mod OP^{-N-p-1}.
\end{align}
If we now take $N=k-p$, we get for $s$ in a neighborhood of $d-k$
$$
K_p(Y,s)|D|^{-s} - \sum_{r_1,\cdots,r_p =0}^{k-p}h(s,r,p)\,\eps^{r_1}(Y)\cdots
\eps^{r_p}(Y)|D|^{-s} \in OP^{-k-1-\Re(s)} \subseteq \L^1(\H)
$$
so (\ref{res-n-k-interm}) gives the result.
\end{proof}

Our operators $|D_A|^k$ are pseudodifferential operators:

\begin{lemma}
For any $k\in \Z$, $\vert D_{A} \vert^k \in \Psi^k(\A)$.
\end{lemma}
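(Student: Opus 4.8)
The plan is to specialise the asymptotic expansion of $\vert D_{A}\vert^{-s}$ established in Lemma \ref{2dev}(ii) to the integer value $s=-k$, and then to recognise every operator occurring in that expansion as a pseudodifferential operator of the appropriate order, using Corollary \ref{eps-pdo} and the (essentially trivial) membership $\DDD^{k}\in\Psi^{k}(\A)$.

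First I would check that $\vert D_{A}\vert^{k}$ is meaningful. Since $A=A^{*}$ and $\epsilon=\pm1$, the operator $\wt A=A+\epsilon JAJ^{-1}$ is bounded and self-adjoint, hence $\DD_{A}=\DD+\wt A$ is self-adjoint; $P_{A}$ is the orthogonal projection onto the $\DD_{A}$-invariant subspace $\Ker\DD_{A}$, and $D_{A}=\DD_{A}+P_{A}$ has compact resolvent by Lemma \ref{compres}, so its spectrum is discrete. As $D_{A}$ acts as the identity on $\Ker\DD_{A}$ and as $\DD_{A}$ on its orthogonal complement, $0$ is isolated in its spectrum, $D_{A}$ is invertible, and $\vert D_{A}\vert^{z}\vc(D_{A}^{2})^{z/2}$ is well defined by functional calculus for every $z\in\C$. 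I would also record that $\DDD^{k}\in\Psi^{k}(\A)$ for every $k\in\Z$: for $k\le 0$ write $\DDD^{k}=1\cdot\DDD^{k}$ with $1\in\DD(\A)$, and for $k\ge 1$ write $\DDD^{k}=\DDD^{k+1}\,\DDD^{-1}$ with $\DDD^{k+1}\in\DD(\A)$; in either case this exhibits $\DDD^{k}$ in the form of Definition \ref{defDA} with zero remainder, and $\DDD^{k}\in OP^{k}$.

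Next I would apply Lemma \ref{2dev}(ii) at $s=-k$, so that for every $N\in\N$
\[
\vert D_{A}\vert^{k}\ \sim\ \DDD^{k}+\sum_{p=1}^{N}K_{p}(Y,-k)\,\DDD^{k}\quad\mod OP^{-N-1+k},
\]
with $K_{p}(Y,-k)\in OP^{-p}$. By the one-parameter expansion \eqref{one-par}, used exactly as in \eqref{K_p}, for every $N''\in\N$ each $K_{p}(Y,-k)$ coincides modulo $OP^{-N''-p-1}$ with a finite linear combination, with scalar coefficients (at $s=-k$ the functions $g(-st_{i},r_{i})$ and the $t$-integral reduce to numbers), of the products $\eps^{r_{1}}(Y)\cdots\eps^{r_{p}}(Y)$. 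By Corollary \ref{eps-pdo} these products lie in $\Psi_{1}(\A)\subset\Psi(\A)$, i.e.\ each is of the form $P\DDD^{-q}+R$ with $P\in\DD(\A)$ and $R\in OP^{-M}$ for $M$ as large as we please; multiplying such an operator on the right by $\DDD^{k}$ again produces something of the form $P'\DDD^{-q'}+R'$, because $\DD(\A)$ absorbs non-negative powers of $\DDD$ and $OP^{-M}OP^{k}\subseteq OP^{-M+k}$.

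To conclude I would fix a target $N'\in\N$ and choose all of $N$ (in the displayed formula), $N''$ and $M$ (in the decompositions above) at least $N'+\vert k\vert$. Inserting the decompositions of the $K_{p}(Y,-k)$ into the displayed formula and collecting terms writes $\vert D_{A}\vert^{k}$ as a single $P\DDD^{-q}$ (obtained by merging finitely many terms $P'\DDD^{-q'}$ via the largest exponent) plus a finite sum of operators each lying in $OP^{-N'}$; since $N'$ was arbitrary this is exactly the membership $\vert D_{A}\vert^{k}\in\Psi(\A)$ of Definition \ref{defDA}. Moreover $\vert D_{A}\vert^{k}\in OP^{k}$, since in the displayed formula $\DDD^{k}\in OP^{k}$, $K_{p}(Y,-k)\DDD^{k}\in OP^{k-p}\subseteq OP^{k}$ and the remainder is in $OP^{-N-1+k}\subseteq OP^{k}$; hence $\vert D_{A}\vert^{k}\in\Psi^{k}(\A)$. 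The main obstacle is purely the bookkeeping of remainders — choosing the several truncation parameters (the $N$ of Lemma \ref{2dev}(ii) and the cut-offs inside each $K_{p}$) large enough at once, and invoking correctly the elementary stability of $\Psi(\A)$ under finite sums, under right multiplication by integer powers of $\DDD$, and under perturbation by operators that can be made as regularizing as desired; nothing beyond Lemma \ref{2dev}, Corollary \ref{eps-pdo} and $\DDD^{k}\in\Psi^{k}(\A)$ is required.
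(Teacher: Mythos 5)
Your proposal is correct and follows essentially the same route as the paper: the paper's (very terse) proof likewise applies the expansion \eqref{expansion} at $s=-k$, identifies each $K_p(Y,s)$ as a pseudodifferential operator in $OP^{-p}$ via \eqref{K_p} and Corollary \ref{eps-pdo}, and concludes $|D_A|^k\in\Psi^k(\A)$. Your additional checks (invertibility of $D_A$, the membership $\DDD^k\in\Psi^k(\A)$, and the bookkeeping of the truncation parameters) simply make explicit what the paper leaves implicit.
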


\begin{proof}
Using (\ref{K_p}), we see that $K_p(Y,s)$ is a pseudodifferential
operator in $OP^{-p}$, so (\ref{expansion}) proves that $|D_A|^k$
is a pseudodifferential operator in $OP^k$.
\end{proof}

The following result is quite important since it shows that one
can use $\ncint$ for $D$ or $D_{A}$:
\begin{prop}
\label{ncintfluctuated}
If the spectral triple is simple, $\underset{s=0}{\Res} \, \Tr \big(P |D_A|^{-s}\big) = \ncint P$
for any pseudodifferential operator $P$. In particular, for any $k\in \N_0$
$$
\ncint |D_A|^{-(d-k)}=\underset{s=d-k}{\Res} \,\zeta_{D_A}(s) .
$$
\end{prop}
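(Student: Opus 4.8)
The plan is to compare $|D_A|^{-s}$ with $|D|^{-s}$ through the pseudodifferential expansion of Lemma \ref{2dev}$(ii)$, and then to check that every correction term contributes nothing to the residue at $s=0$, the essential point being that the coefficient functions $h(s,r,p)$ vanish at $s=0$ to order at least $p\ge 1$ while, by simplicity, the relevant zeta functions have only simple poles there.

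First I would fix a pseudodifferential operator $P\in OP^m$. By Lemma \ref{2dev}$(ii)$, for every $N\in\N$ there is a decomposition
$$
P|D_A|^{-s}=P|D|^{-s}+\sum_{p=1}^N PK_p(Y,s)|D|^{-s}+PR_N(s),\qquad R_N(s)\in OP^{-N-1-\Re(s)},
$$
in which $s\mapsto K_p(Y,s)$ and $s\mapsto R_N(s)$ are holomorphic families (they are produced by the Duhamel expansion used to prove that lemma). Choosing $N>m+d-1$, the term $PR_N(s)$ is trace-class and holomorphic near $s=0$, hence does not affect $\underset{s=0}{\Res}\,\Tr\big(P|D_A|^{-s}\big)$. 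Since $\underset{s=0}{\Res}\,\Tr\big(P|D|^{-s}\big)=\ncint P$ by definition, it then remains to prove that $\underset{s=0}{\Res}\,\Tr\big(PK_p(Y,s)|D|^{-s}\big)=0$ for each $1\le p\le N$.

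For a fixed $p$ I would insert the expansion \eqref{K_p} of $K_p(Y,s)$, truncated at an order $M$ large enough that the remainder, multiplied by $P$, is trace-class and holomorphic near $s=0$, obtaining
$$
\Tr\big(PK_p(Y,s)|D|^{-s}\big)=\sum_{r_1,\dots,r_p=0}^{M}h(s,r,p)\,\Tr\big(P\,\eps^{r_1}(Y)\cdots\eps^{r_p}(Y)\,|D|^{-s}\big)+(\text{holomorphic near }0).
$$
Here $P\eps^{r_1}(Y)\cdots\eps^{r_p}(Y)$ is again a pseudodifferential operator, by Corollary \ref{eps-pdo} together with the fact that $\Psi(\A)$ is an algebra; hence $s\mapsto\Tr\big(P\eps^{r_1}(Y)\cdots\eps^{r_p}(Y)|D|^{-s}\big)$ has at most a simple pole at $s=0$. (Simplicity is stated for operators in $OP^0$, but it extends to all orders: for $Q\in OP^q$ one writes $\Tr(Q|D|^{-s})=\Tr\big((Q|D|^{-q})|D|^{-(s-q)}\big)$ with $Q|D|^{-q}\in\Psi(\A)\cap OP^0$.) On the other hand $h(s,r,p)=(-s/2)^p\int_{0\le t_1\le\cdots\le t_p\le1}g(-st_1,r_1)\cdots g(-st_p,r_p)\,dt$ carries the explicit prefactor $(-s/2)^p$, so it vanishes at $s=0$ to order at least $p\ge1$ (each factor $g(-st_j,r_j)$ with $r_j\ge1$ vanishes as well, only raising the order). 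Consequently every product $h(s,r,p)\,\Tr\big(\cdots|D|^{-s}\big)$ is holomorphic at $s=0$, its residue is zero, and summing over $p$ gives $\underset{s=0}{\Res}\,\Tr\big(P|D_A|^{-s}\big)=\ncint P$.

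For the particular statement I would take $P=|D_A|^{-(d-k)}$, which is a pseudodifferential operator by the preceding lemma (since $d-k\in\Z$); since $|D_A|^{-(d-k)}|D_A|^{-s}=|D_A|^{-(d-k)-s}$ one has $\Tr\big(|D_A|^{-(d-k)}|D_A|^{-s}\big)=\zeta_{D_A}(s+d-k)$, so the general case just proved yields $\ncint|D_A|^{-(d-k)}=\underset{s=0}{\Res}\,\Tr\big(|D_A|^{-(d-k)}|D_A|^{-s}\big)=\underset{s=d-k}{\Res}\,\zeta_{D_A}(s)$. The main obstacle is the bookkeeping of remainders: one must verify that all truncation errors coming from Lemma \ref{2dev} and \eqref{K_p} produce families that are \emph{holomorphic}, not merely trace-class, in a neighbourhood of $s=0$, and that the simplicity hypothesis can legitimately be invoked for the composite operators $P\eps^{r_1}(Y)\cdots\eps^{r_p}(Y)$ appearing above.
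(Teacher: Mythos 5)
Your proposal is correct and follows essentially the same route as the paper: expand $P|D_A|^{-s}$ via Lemma \ref{2dev}$(ii)$ and \eqref{K_p}, observe that $P\eps^{r_1}(Y)\cdots\eps^{r_p}(Y)$ is pseudodifferential so its zeta function has at most a simple pole by the simplicity hypothesis, and kill each residue using the zero of $h(s,r,p)$ at $s=0$; the final specialization to $P=|D_A|^{-(d-k)}$ is also the paper's. Your explicit remark on extending simplicity from $OP^0$ to arbitrary order via $Q\mapsto Q|D|^{-q}$ is a point the paper leaves implicit, but it is not a deviation in method.
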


\begin{proof}
Suppose $P\in OP^{k}$ with $k\in \Z$ and let us fix $p\geq 1$.
With (\ref{K_p}), we see that for any $N\in \N$,
$$
PK_p(Y,s)|D|^{-s}\sim \sum_{r_1,\cdots,r_p =0}^N h(s,r,p) \,
P\eps^{r_1}(Y)\cdots \eps^{r_p}(Y)|D|^{-s} \mod OP^{-N-p-1+k-\Re(s)}.
$$
Thus if we take $N=d-p+k$, we get
$$
 \underset{s=0}{\Res} \,\Tr \big(P K_p (Y,s) |D|^{-s}\big) =  \sum_{r_1,\cdots,r_p =0}^{n-p+k} \underset{s=0}{\Res} \,\,
h(s,r,p) \, \Tr \big(P\eps^{r_1}(Y)\cdots \eps^{r_p}(Y) |D|^{-s}\big).
$$
Since $s=0$ is a zero of the analytic function $s\mapsto h(s,r,p)$ and 
$s\mapsto \Tr P\eps^{r_1}(Y)\cdots \eps^{r_p}(Y)|D|^{-s}$ has only simple poles by hypothesis, we get 
 $\underset{s=0}{\Res} \, h(s,r,p) \, \Tr \big(P\eps^{r_1}(Y)\cdots \eps^{r_p}(Y) |D|^{-s}\big)=0$ and
\begin{equation}
\label{res0K_p}
\underset{s=0}{\Res} \, \Tr \big(P K_p (Y,s) |D|^{-s}\big)=0.
\end{equation}
Using (\ref{expansion}), $ P|D_A|^{-s} \sim P |D|^{-s} + \sum_{p=1}^{k+d} PK_p(Y,s) |D|^{-s} \mod OP^{-d-1-\Re(s)}$ 
and thus,
\begin{equation}\label{res0PD_A}
\underset{s=0}{\Res} \, \Tr (P|D_A|^{-s}) =\ncint P + \sum_{p=1}^{k+d} \, \underset{s=0}{\Res} \, 
\Tr \big(PK_p(Y,s) |D|^{-s}\big).
\end{equation}
The result now follows from (\ref{res0K_p}) and (\ref{res0PD_A}). To get the last equality, one uses the 
pseudodifferential operator $|D_A|^{-(d-k)}$.
\end{proof}

\begin{prop}
\label{invariance1} If the spectral triple is simple, then
\begin{align}
\label{invDAd}
\ncint {|D_{A}|}^{-d}&=\ncint |D|^{-d}.
\end{align}
\end{prop}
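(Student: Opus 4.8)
The plan is to read the identity directly off the two results immediately preceding it, specialized to the top of the dimension spectrum. First I would invoke Proposition \ref{ncintfluctuated}: for every $k\in\N_0$ one has $\ncint |D_A|^{-(d-k)}=\underset{s=d-k}{\Res}\,\zeta_{D_A}(s)$, and the same proposition applied with $A=0$ gives $\ncint |D|^{-(d-k)}=\underset{s=d-k}{\Res}\,\zeta_{D}(s)$. Taking $k=0$ in both, the proposition converts \eqref{invDAd} into the purely spectral-zeta statement $\underset{s=d}{\Res}\,\zeta_{D_A}(s)=\underset{s=d}{\Res}\,\zeta_{D}(s)$.

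Next I would apply Lemma \ref{Res-zeta-n-k} with $k=0$. The correction term appearing there is a sum $\sum_{p=1}^{k}(\cdots)$ over $1\le p\le k$, which is empty for $k=0$; hence the lemma reads simply $\underset{s=d}{\Res}\,\zeta_{D_A}(s)=\underset{s=d}{\Res}\,\zeta_{D}(s)$. Combining this with the two equalities from Proposition \ref{ncintfluctuated} yields $\ncint |D_A|^{-d}=\ncint |D|^{-d}$, which is exactly \eqref{invDAd}.

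Conceptually, the mechanism behind the vanishing of the correction is the expansion \eqref{expansion}: the difference $|D_A|^{-s}-|D|^{-s}$ is a finite sum (modulo trace-class remainders) of terms $K_p(Y,s)|D|^{-s}$ with $K_p(Y,s)\in OP^{-p}$ and $p\ge 1$, so each such term lies in $OP^{-p-\Re(s)}$, which is trace-class for $s$ in a neighbourhood of $d$; it therefore contributes no pole at $s=d$, and the leading residues of $\zeta_{D_A}$ and $\zeta_{D}$ coincide. There is essentially no analytic obstacle here — the statement is a corollary of the preceding machinery — and the only point demanding care is that one must be precisely at $k=0$ (i.e.\ $s=d$): for $k\ge 1$ the fluctuation genuinely alters the residue, as the full Lemma \ref{Res-zeta-n-k} records and as Theorem \ref{difference} makes explicit at $s=0$. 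This is the noncommutative counterpart of the fact, noted just before Theorem \ref{difference}, that the fluctuation $\DD\to\DD_{\wt A}$ leaves the leading (volume) term of the spectral action unchanged because it does not affect the principal symbol of $\DD$.
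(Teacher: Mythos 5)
Your proof is correct and is exactly the paper's argument: the paper's own proof reads ``Lemma \ref{Res-zeta-n-k} and previous proposition for $k=0$,'' which is precisely the combination of Proposition \ref{ncintfluctuated} and the $k=0$ case of Lemma \ref{Res-zeta-n-k} that you spell out. Your closing remark on why the correction terms vanish only at the top degree is a faithful gloss on the mechanism, not a deviation.
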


\begin{proof} Lemma \ref{Res-zeta-n-k} and previous proposition for $k=0$.
\end{proof}

\begin{lemma}
\label{residus-particuliers}
If the spectral triple is simple,
\begin{align*}
\hspace{-2.2cm} (i) & \quad \ncint |D_A|^{-(d-1)}= \ncint |D|^{-(d-1)} -(\tfrac{d-1}{2})\ncint X|D|^{-d-1}.\\
\hspace{-2.2cm} (ii)  & \quad \ncint |D_A|^{-(d-2)}= \ncint |D|^{-(d-2)}+\tfrac{d-2}{2}\big(-\ncint X|D|^{-d} + \tfrac{d}{4}
\ncint X^2 |D|^{-2-d} \big).
\end{align*}
\end{lemma}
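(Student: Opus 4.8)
The plan is to read both identities off Lemma~\ref{Res-zeta-n-k} applied with $k=1$ and $k=2$, using Proposition~\ref{ncintfluctuated} to rewrite $\underset{s=d-k}{\Res}\,\zeta_{D_A}(s)=\ncint|D_A|^{-(d-k)}$ and $\underset{s=d-k}{\Res}\,\zeta_{D}(s)=\ncint|D|^{-(d-k)}$. Two elementary bookkeeping facts will be used throughout. First, an operator $T\in OP^{-j}$ makes $s\mapsto\Tr(T|D|^{-s})$ holomorphic on $\{\Re(s)>d-j\}$, so only the terms of order $\ge -k$ in any asymptotic expansion of $Y$, $\eps(Y)$, $Y^2$ can contribute to the residue at $s=d-k$. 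Second, the change of variable $u=s+2j$ together with \eqref{ncP} gives $\underset{s=d-k}{\Res}\,\Tr\big(TD^{-2j}|D|^{-s}\big)=\underset{u=d-k+2j}{\Res}\,\Tr\big(T|D|^{-u}\big)=\ncint T|D|^{-(d-k+2j)}$ (recall $D^{-2}=|D|^{-2}$). Since the triple is simple all relevant poles are at most simple, hence $\underset{s=d-k}{\Res}\big(f(s)g(s)\big)=f(d-k)\,\underset{s=d-k}{\Res}\,g(s)$ for $f$ holomorphic near $d-k$.

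For $(i)$, take $k=1$: the double sum in Lemma~\ref{Res-zeta-n-k} reduces to $p=1$, $r_1=0$, with $h(s,0,1)=-s/2$ and $\eps^0(Y)=Y$, so $\ncint|D_A|^{-(d-1)}=\ncint|D|^{-(d-1)}-\tfrac{d-1}{2}\,\underset{s=d-1}{\Res}\,\Tr(Y|D|^{-s})$. By Lemma~\ref{2dev}$(i)$, $Y\sim XD^{-2}\ \mathrm{mod}\ OP^{-2}$, and the $OP^{-2}$ remainder does not contribute at $s=d-1$; hence $\underset{s=d-1}{\Res}\,\Tr(Y|D|^{-s})=\underset{s=d-1}{\Res}\,\Tr(X|D|^{-s-2})=\ncint X|D|^{-d-1}$, which is $(i)$.

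For $(ii)$, take $k=2$. Three terms survive: $(p,r)=(1,(0))$ with $h=-s/2$; $(p,r)=(1,(1))$ with $h(s,1,1)=(-s/2)\int_0^1 g(-st_1,1)\,dt_1=s^2/8$; and $(p,r)=(2,(0,0))$ with $h(s,(0,0),2)=(s^2/4)\int_{0\le t_1\le t_2\le 1}dt=s^2/8$. I would then expand, via Lemma~\ref{2dev}$(i)$ with $N=2$, $Y\sim XD^{-2}-\tfrac12\nabla(X)D^{-4}-\tfrac12 X^2D^{-4}\ \mathrm{mod}\ OP^{-3}$, whence (using $[D^2,D^{-2}]=0$) $\eps(Y)=\nabla(Y)D^{-2}\sim\nabla(X)D^{-4}\ \mathrm{mod}\ OP^{-3}$ and, after commuting one $D^{-2}$ past an $X$, $Y^2\sim X^2D^{-4}\ \mathrm{mod}\ OP^{-3}$. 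Keeping only orders $\ge-2$, and using the identity $\Tr\big([D^2,X]|D|^{-w}\big)=0$ (cyclicity of the trace for $\Re(w)$ large, then analytic continuation) to kill all $\nabla(X)$-contributions, one finds $\underset{s=d-2}{\Res}\,\Tr(Y|D|^{-s})=\ncint X|D|^{-d}-\tfrac12\ncint X^2|D|^{-d-2}$, $\underset{s=d-2}{\Res}\,\Tr(\eps(Y)|D|^{-s})=0$, and $\underset{s=d-2}{\Res}\,\Tr(Y^2|D|^{-s})=\ncint X^2|D|^{-d-2}$. Substituting the values $-(d-2)/2$ and $(d-2)^2/8$ of $h$ at $s=d-2$ and collecting the coefficient of $\ncint X^2|D|^{-d-2}$, namely $\tfrac{d-2}{4}+\tfrac{(d-2)^2}{8}=\tfrac{d(d-2)}{8}$, yields exactly $(ii)$.

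The main obstacle is the bookkeeping behind the last paragraph: producing the $Y$, $\eps(Y)$ and $Y^2$ expansions modulo $OP^{-3}$ with the correct numerical coefficients, reliably discarding the pieces that are holomorphic at $s=d-2$, and justifying the vanishing of $\ncint\nabla(X)|D|^{-d-2}$. Once these are in place, the statement is pure substitution into Lemma~\ref{Res-zeta-n-k} and Proposition~\ref{ncintfluctuated}.
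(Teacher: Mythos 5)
Your proposal is correct and follows essentially the same route as the paper: both read the identities off Lemma~\ref{Res-zeta-n-k} (for $(i)$ the paper uses \eqref{expansion} directly, which is the $k=1$ case), substitute the expansion of $Y$, $\eps(Y)$, $Y^2$ from Lemma~\ref{2dev}, kill the $\nabla(X)$ terms by traciality of $\ncint$, and evaluate $h$ at $s=d-k$ using simplicity of the poles; your coefficient bookkeeping ($\tfrac{d-2}{4}+\tfrac{(d-2)^2}{8}=\tfrac{d(d-2)}{8}$) matches the paper's.
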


\begin{proof}
$(i)$ By (\ref{expansion}),
$$
\underset{s=d-1}{\Res} \, \zeta_{D_A}(s) -\zeta_{D}(s)= \underset{s=d-1}{\Res} \,(-s/2)
\Tr \big(Y |D|^{-s}\big) = -\tfrac{d-1}{2} \, \underset{s=0}{\Res} \, \Tr \big(Y|D|^{-(d-1)}|D|^{-s} \big)
$$
where for the last equality we use the simple dimension spectrum
hypothesis. Lemma \ref{2dev} $(i)$ yields
$Y\sim XD^{-2} \mod OP^{-2}$ and $Y|D|^{-(d-1)}\sim X|D|^{-d-1} \mod OP^{-d-1}\subseteq \L^1(\H)$. Thus,
$$
\underset{s=0}{\Res} \, \Tr \big( Y|D|^{-(d-1)}|D|^{-s}\big) = \underset{s=0}{\Res} \,
\Tr \big(X|D|^{-d-1} |D|^{-s}\big) = \ncint  X |D|^{-d-1}.
$$
$(ii)$ Lemma \ref{Res-zeta-n-k} $(ii)$ gives
$$
\underset{s=d-2}{\Res} \, \zeta_{D_A}(s) = \underset{s=d-2}{\Res} \, \zeta_{D}(s) +
\underset{s=d-2}{\Res} \, \sum_{r=0}^1 h(s,r,1) \, \Tr \big(\eps^r(Y)|D|^{-s}\big) + h(s,0,2) \,\Tr \big(Y^2 |D|^{-s}\big).
$$
We have $h(s,0,1)=-\tfrac s2$, $h(s,1,1)= \half(\tfrac s2)^2$ and $h(s,0,2)= \half (\tfrac s2)^2$. Using again 
Lemma \ref{2dev} $(i)$,
$$
Y\sim XD^{-2}-\half \nabla(X)D^{-4} -\half X^2 D^{-4} \mod OP^{-3}.
$$
Thus,
$$
\underset{s=d-2}{\Res} \, \Tr \big(Y|D|^{-s}\big) =
\ncint X|D|^{-d} -\half \ncint
(\nabla(X)+X^2)|D|^{-2-d}.
$$
Moreover, using $\ncint \nabla(X)|D|^{-k}=0$ for any $k\geq 0$ since $\ncint$ is a trace,
$$
\underset{s=d-2}{\Res} \, \Tr \big(\eps(Y)|D|^{-s}\big) = \underset{s=d-2}{\Res} \, \Tr
\big(\nabla(X)D^{-4}|D|^{-s}\big) = \ncint \nabla(X)|D|^{-2-d}=0.
$$
Similarly, since $Y\sim XD^{-2}$ mod $OP^{-2}$ and $Y^2\sim X^2D^{-4} \mod OP^{-3}$, we get
$$
\underset{s=d-2}{\Res} \, \Tr \big(Y^2 |D|^{-s}\big) = \underset{s=d-2}{\Res} \, \Tr
\big(X^2D^{-4}|D|^{-s}\big) = \ncint X^2 |D|^{-2-d}.
$$
Thus,
\begin{align*}
\underset{s=d-2}{\Res} \, \zeta_{D_A}(s) = \underset{s=d-2}{\Res} \,\zeta_{D}(s) +
&(-\tfrac {d-2}{2})(\ncint X|D|^{-d}-\half \ncint (\nabla(X)+X^2)|D|^{-2-d})\\
& \quad +\half(\tfrac {d-2}{2})^2 \ncint \nabla(X)|D|^{-2-n}+\half(\tfrac {d-2}{2})^2 \ncint X^2 |D|^{-2-d}.
\end{align*}
Finally,
$$
\underset{s=d-2}{\Res} \, \zeta_{D_A}(s) = \underset{s=d-2}{\Res} \,\zeta_{D}(s) +
(-\tfrac {d-2}{2}) \big(\ncint X|D|^{-d} - \half \ncint X^2|D|^{-2-d}\big)+\half(\tfrac {d-2}{2})^2 \ncint X^2 |D|^{-2-d}
$$
and the result follows from Proposition \ref{ncintfluctuated}.
\end{proof}

\begin{corollary}
\label{res-n-2-A}
If the spectral triple satisfies $\ncint |D|^{-(d-2)}= \ncint \wt A \DD |D|^{-d} = \ncint  \DD \wt A |D|^{-d}=0$, 
then
$$
 \ncint |D_A|^{-(d-2)} = \tfrac{d(d-2)}{4} \big(\ncint \wt A \DD \wt A \DD |D|^{-d-2}+\tfrac{d-2}{d}
 \ncint \wt A ^2|D|^{-d}\big).
$$
\end{corollary}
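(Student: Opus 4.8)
The proof rests entirely on Lemma \ref{residus-particuliers}$(ii)$, so the plan is to feed the three vanishing hypotheses into that formula and evaluate the two noncommutative integrals $\ncint X|D|^{-d}$ and $\ncint X^{2}|D|^{-2-d}$ that occur in it, where $X=\wt A\DD+\DD\wt A+\wt A^{2}$. First I would record the two standing facts used throughout: since $\DD$ is selfadjoint, $|D|$ is a Borel function of $\DD$, hence $|D|$, every power $|D|^{z}$ and $|D|^{2}=\DD^{2}$ commute with $\DD$; and since the spectral triple is simple, $\ncint$ is a trace on $\Psi(\A)$, while any operator in $OP^{\alpha}$ with $\alpha\le -d-1$ lies in $\L^{1}(\H)$ and therefore has vanishing noncommutative integral — the same negligibility mechanism already used in the proof of Lemma \ref{residus-particuliers}.

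For the first integral, expanding $X$ gives $\ncint X|D|^{-d}=\ncint \wt A\DD|D|^{-d}+\ncint \DD\wt A|D|^{-d}+\ncint \wt A^{2}|D|^{-d}$, and the first two terms vanish by hypothesis, so $\ncint X|D|^{-d}=\ncint \wt A^{2}|D|^{-d}$.

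For the second integral I would expand $X^{2}=(\wt A\DD+\DD\wt A+\wt A^{2})^{2}$ into its nine monomials and count orders: only $\wt A\DD\wt A\DD$, $\wt A\DD^{2}\wt A$, $\DD\wt A^{2}\DD$ and $\DD\wt A\DD\wt A$ lie in $OP^{2}$, so after multiplication by $|D|^{-2-d}\in OP^{-2-d}$ they are the only terms outside $OP^{-d-1}\subseteq\L^{1}(\H)$ and the other five contribute nothing. Then, using the trace property of $\ncint$ together with $\DD^{2}=|D|^{2}$ and $[\,|D|^{z},\DD]=0$: the term $\DD\wt A^{2}\DD|D|^{-2-d}$ reduces to $\wt A^{2}\DD^{2}|D|^{-2-d}=\wt A^{2}|D|^{-d}$; the term $\DD\wt A\DD\wt A|D|^{-2-d}$ reduces to $\wt A\DD\wt A\DD|D|^{-d-2}$; and $\wt A\DD^{2}\wt A|D|^{-2-d}=\wt A|D|^{2}\wt A|D|^{-2-d}$, after commuting the middle $|D|^{2}$ past $\wt A$, equals $\wt A^{2}|D|^{-d}$ plus the negligible term $\wt A\,\nabla(\wt A)\,|D|^{-2-d}\in OP^{-d-1}$. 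Hence $\ncint X^{2}|D|^{-2-d}=2\ncint \wt A\DD\wt A\DD|D|^{-d-2}+2\ncint \wt A^{2}|D|^{-d}$.

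Substituting these evaluations and $\ncint|D|^{-(d-2)}=0$ into Lemma \ref{residus-particuliers}$(ii)$ and collecting terms, the coefficient of $\ncint \wt A\DD\wt A\DD|D|^{-d-2}$ comes out $\tfrac{d(d-2)}{4}$ and that of $\ncint \wt A^{2}|D|^{-d}$ comes out $\tfrac{(d-2)^{2}}{4}=\tfrac{d(d-2)}{4}\cdot\tfrac{d-2}{d}$, which is exactly the asserted identity. The computation is mechanical; the only delicate point — and the one most likely to hide a misplaced factor or sign — is the order bookkeeping in $\ncint X^{2}|D|^{-2-d}$ and the repeated legitimate use of the trace property and of $[\,|D|,\DD]=0$ to drive each surviving monomial into one of the two normal forms $\wt A\DD\wt A\DD|D|^{-d-2}$ or $\wt A^{2}|D|^{-d}$.
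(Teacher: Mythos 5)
Your proof is correct and follows exactly the route of the paper: substitute the hypotheses into Lemma \ref{residus-particuliers}$(ii)$, reduce $\ncint X|D|^{-d}$ to $\ncint\wt A^{2}|D|^{-d}$, keep only the four order-two monomials of $X^{2}$, and use the trace property of $\ncint$ together with $\nabla(\wt A)\in OP^{1}$ to obtain $\ncint X^{2}|D|^{-2-d}=2\ncint\wt A\DD\wt A\DD|D|^{-d-2}+2\ncint\wt A^{2}|D|^{-d}$, after which the coefficients $\tfrac{d(d-2)}{4}$ and $\tfrac{(d-2)^{2}}{4}$ come out as stated. The paper compresses all of this into two lines, so your version is simply a fully spelled-out form of the same argument.
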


\begin{proof}
By previous lemma,
$$
\underset{s=d-2}{\Res} \, \zeta_{D_A}(s) = \tfrac{d-2}{2}\big( -\ncint \wt A^2 |D|^{-d}
+\tfrac{d}{4}\ncint (\,\wt A\DD \wt A \DD+ \DD \wt A \DD \wt A+
\wt A \DD^2 \wt A + \DD \wt A^2 \DD \,) |D|^{-d-2} \big).
$$
Since $\nabla(\wt A) \in OP^1$, the trace property of $\ncint$ yields the result.
\end{proof}

\subsection{Tadpole}

In \cite{ConnesMarcolli}, the following definition is introduced:

\begin{definition}
\label{Deftadpole}
In $(\A,\,\H,\,\DD)$, the tadpole $Tad_{\DD+ A}(k)$ of order $k$, for  $k \in \set{d-l \, : \, l \in \N}$  is
the term linear in $A=A^{*}\in \Omega_\DD^1$, in the $\Lambda^k$ term of 
\eqref{asympspectral} (considered as an infinite series) where $\DD\to\DD+ A$.

If moreover, the triple $(\A,\,\H,\,\DD,\,J)$ is real, the tadpole $Tad_{\DD+\tilde A}(k)$ is the term linear in $A$, 
in the $\Lambda^k$ term of \eqref{asympspectral} where $\DD\to\DD+\wt A$.
\end{definition}

\begin{prop}
\label{valeurtadpole}
Let $(\A,\,\H,\,\DD)$ be a spectral triple of dimension $d$ with simple dimension spectrum. Then 
\begin{align}
    \label{tadpolen-k}
&\Tad_{\DD+A}(d-k) =-(d-k)\ncint A \DD |\DD|^{-(d-k) -2}, \quad \forall  k\neq d,\\
    \label{tadpole0}
& \Tad_{\DD+ A}(0)=-\ncint  A \DD^{-1}.
\end{align}
Moreover, if the triple is real,  $\Tad_{\DD+\wt A}= 2\Tad_{\DD+A}$.
\end{prop}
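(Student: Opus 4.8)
The plan is to read off $\Tad_{\DD+A}(d-k)$ from the spectral action expansion \eqref{asympspectral} and to reduce it to the first‑order perturbation theory of $|\DD+A|^{-s}$ already developed in Lemmas \ref{2dev}, \ref{Res-zeta-n-k} and \ref{residus-particuliers}. Throughout I would assume $\DD$ invertible, the general case following by replacing $\DD$ by $D=\DD+P_0$ as in \eqref{DAdroit} — a smoothing perturbation, hence invisible to $\ncint$; likewise $\DD+A$ (resp. $\DD+\wt A$) may be replaced by the invertible operator $D_A$ of \eqref{DAdroit}. By Definition \ref{Deftadpole}, for $k\neq d$ the number $\Tad_{\DD+A}(d-k)$ is the part linear in $A$ of the $\Lambda^{d-k}$‑coefficient of \eqref{asympspectral}, which, up to the universal moment of $f$, is $\ncint|D_A|^{-(d-k)}=\underset{s=d-k}{\Res}\zeta_{D_A}(s)$ by Proposition \ref{ncintfluctuated}; and $\Tad_{\DD+A}(0)$ is the part linear in $A$ of $\zeta_{D_A}(0)$ (up to the moment $f(0)$).

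For $k\neq d$ I would argue exactly as in Lemmas \ref{Res-zeta-n-k} and \ref{residus-particuliers}. Set $Y=\log(D_A^2)-\log(D^2)$ and $X=D_A^2-D^2=A\DD+\DD A+A^2$; since $X$ vanishes at $A=0$, both $X$ and, by Lemma \ref{2dev}$(i)$, $Y$ are $\mathcal O(A)$. In the expansion of Lemma \ref{Res-zeta-n-k}, every term with $p\geq 2$ carries the factor $\eps^{r_1}(Y)\cdots\eps^{r_p}(Y)=\mathcal O(A^p)$ and is irrelevant at first order; and for $p=1$ each summand with $r_1\geq 1$ has $\eps^{r_1}(Y)=[\DD^2,\eps^{r_1-1}(Y)]D^{-2}$, so that $\Tr\big(\eps^{r_1}(Y)|D|^{-s}\big)=\Tr\big([\DD^2,\eps^{r_1-1}(Y)]|D|^{-s-2}\big)\equiv 0$ because $\DD^2$ commutes with $|\DD|^{-s-2}$. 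Hence only $r_1=0,\ p=1$ survives, with $h(s,0,1)=-s/2$, so by the simple dimension spectrum hypothesis the part of $\underset{s=d-k}{\Res}\zeta_{D_A}(s)$ linear in $A$ equals $-\tfrac{d-k}{2}$ times the part of $\underset{s=d-k}{\Res}\Tr(Y|D|^{-s})$ linear in $A$. By Lemma \ref{2dev}$(i)$, $Y$ expands as $XD^{-2}$ plus terms which are either $\mathcal O(A^2)$ or of the form $\nabla(\,\cdot\,)D^{-2m}$; the latter have vanishing $\ncint$ since $\ncint$ is a trace and $\DD^2$ commutes with every power of $|\DD|$. Therefore the part of $\underset{s=d-k}{\Res}\Tr(Y|D|^{-s})$ linear in $A$ is $\ncint(A\DD+\DD A)|\DD|^{-(d-k)-2}$, and since $[\DD,|\DD|]=0$ the trace property gives $\ncint(A\DD+\DD A)|\DD|^{-(d-k)-2}=2\ncint A\DD|\DD|^{-(d-k)-2}$. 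Altogether $\Tad_{\DD+A}(d-k)=-\tfrac{d-k}{2}\cdot 2\,\ncint A\DD|\DD|^{-(d-k)-2}=-(d-k)\ncint A\DD|\DD|^{-(d-k)-2}$, which is \eqref{tadpolen-k}.

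For $k=d$ the relevant coefficient is $\zeta_{D_A}(0)$, and I would invoke Theorem \ref{difference}, whose proof applies verbatim in the non‑real case with $\wt A$ replaced by $A$: $\zeta_{D_A}(0)=\zeta_{D}(0)+\sum_{q=1}^{d}\tfrac{(-1)^q}{q}\ncint(A\DD^{-1})^q$. The only contribution linear in $A$ is the $q=1$ term $-\ncint A\DD^{-1}$, which is \eqref{tadpole0}. Finally, in the real case the tadpole is computed with $\DD\to\DD+\wt A$, $\wt A=A+\epsilon JAJ^{-1}$, so the two formulas above give $\Tad_{\DD+\wt A}(d-k)=-(d-k)\ncint\wt A\DD|\DD|^{-(d-k)-2}$ and $\Tad_{\DD+\wt A}(0)=-\ncint\wt A\DD^{-1}$. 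Splitting $\wt A$ and using $J\DD J^{-1}=\epsilon\DD$ together with $J|\DD|J^{-1}=|\DD|$, one rewrites $JAJ^{-1}\DD|\DD|^{-(d-k)-2}=\epsilon\,J\big(A\DD|\DD|^{-(d-k)-2}\big)J^{-1}$; by Lemma \ref{adjoint} its noncommutative integral is $\overline{\ncint A\DD|\DD|^{-(d-k)-2}}$, which is real by Corollary \ref{reel}, whence $\epsilon\,\ncint JAJ^{-1}\DD|\DD|^{-(d-k)-2}=\epsilon^2\,\ncint A\DD|\DD|^{-(d-k)-2}=\ncint A\DD|\DD|^{-(d-k)-2}$. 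Hence $\ncint\wt A\DD|\DD|^{-(d-k)-2}=2\,\ncint A\DD|\DD|^{-(d-k)-2}$, i.e. $\Tad_{\DD+\wt A}=2\,\Tad_{\DD+A}$; the order‑$0$ identity is identical with $\DD^{-1}$ in place of $\DD|\DD|^{-(d-k)-2}$ and $\ncint A\DD^{-1}\in\R$. The main obstacle is the bookkeeping of the second paragraph: isolating the part linear in $A$ of the meromorphically continued trace near $s=d-k$ and checking that, after the expansions of Lemmas \ref{2dev} and \ref{Res-zeta-n-k}, all $p\geq 2$, all $r_1\geq 1$, and all subleading pieces of $Y$ drop out — once this is in place, the rest is the short trace manipulations above.
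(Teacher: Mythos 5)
Your proof is correct and follows essentially the same route as the paper: isolate the linear part of $\ncint|D_A|^{-(d-k)}$ via Lemma \ref{Res-zeta-n-k}, keep only $p=1$, expand $\Lin_A(Y)$ by Lemma \ref{2dev}$(i)$, and kill everything except the $r=l=0$ term using that noncommutative integrals of commutators with $\DD^2$ vanish, then get \eqref{tadpole0} from Theorem \ref{difference} and the factor $2$ from $J\DD=\epsilon\DD J$ and Lemma \ref{adjoint}. The only (cosmetic) difference is that you discard the $\eps^{r}(Y)$, $r\geq1$, terms at the level of $\Tr(\eps^{r}(Y)|D|^{-s})\equiv 0$ before taking residues, whereas the paper carries all indices $(r,l)$ to the end and observes that only $r+l=0$ survives.
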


\begin{proof}
We already proved the following formula, for any $k\in \N$,
$$
\ncint |\DD_A|^{-(d-k)}=
\ncint |\DD|^{-(d-k)}+
\sum_{p=1}^k \sum_{r_1,\cdots, r_p =0}^{k-p}
\underset{s=d-k}{\Res} \, h(s,r,p) \, \Tr\big(\eps^{r_1}(Y)
\cdots\eps^{r_p}(Y) |\DD|^{-s}\big),
$$
with here $X:=\wt A \DD + \DD \wt A+\wt A^2, \wt A :=A+\epsilon JAJ^{-1}$.

As a consequence, for $k\neq n$, only the terms with $p=1$ contribute to the linear part:
$$
\Tad_{\DD + \wt A}(d-k)= \Lin_A(\ncint |\DD_A|^{-(d-k)}) =\sum_{r=0}^{k-1}
\underset{s=d-k}{\Res} \, h(s,r,1) \,
\Tr\big(\eps^{r}(\Lin_A(Y))|\DD|^{-s}\big)\, .
$$
We check that for any $N\in \N^*$, 
$$
\Lin_A(Y) \sim \sum_{l=0}^{N-1} \Ga_1^l(\wt A \DD + \DD \wt A)
\DD^{-2(l+1)} \mod OP^{-N-1}.
$$
Since $\Ga_1^l(\wt A \DD + \DD \wt A) = \tfrac{(-1)^{l}}{l+1}
\nabla^{l}
(\wt A \DD + \DD \wt A)= \tfrac{(-1)^{l}}{l+1} \{ \nabla^l(\wt A),\DD
\}$, we get, assuming the dimension spectrum to be simple
\begin{align*}
\Tad_{\DD+ \wt A}(d-k)&= \sum_{r=0}^{k-1}
\underset{s=d-k}{\Res} \, h(s,r,p) \,
\Tr\big(\eps^{r}(\Lin_A(Y))|\DD|^{-s}\big) \\
&= \sum_{r=0}^{k-1} h(n-k,r,1) \sum_{l=0}^{k-1-r}\tfrac{(-1)^{l}}{l+1}
\underset{s=d-k}{\Res} \,\Tr\big( \eps^{r}(\{ \nabla^l(\wt A),\DD\})|\DD|^{-s-2(l+1)}\big)\\
 & =  2\sum_{r=0}^{k-1} h(d-k,r,1) \sum_{l=0}^{k-1-r}\tfrac{(-1)^{l}}{l+1}
\ncint \nabla^{r+l}(\wt A) \DD |\DD|^{-(d-k + 2(r+l)) -2}  \\ 
& = -(n-k)
\ncint \wt A \DD |\DD|^{-(d-k) -2},
\end{align*}
because in the last sum it remains only the case $r+l=0$, so $r=l=0$.

Formula \eqref{tadpole0} is a direct application of Theorem \ref{difference}.

The link between  $\Tad_{\DD+\wt A}$ and $\Tad_{\DD+A}$ follows from $J\DD=\epsilon 
\DD J$ and Lemma \ref{adjoint}.
\end{proof}

\begin{corollary} 
\label{Atilde=0}
In a real spectral triple $(\A,\H,\DD)$, if $A=A^*\in \Omega_\DD^1(\A)$ 
is such that $\wt A=0$,  then $\Tad_{D+A}(k) =0$ for any $k\in \Z$, $k\leq d$. 
\end{corollary}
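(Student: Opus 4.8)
The statement to prove is Corollary \ref{Atilde=0}: in a real spectral triple, if $A=A^*\in\Omega_\DD^1(\A)$ satisfies $\wt A=0$, then $\Tad_{D+A}(k)=0$ for all $k\in\Z$, $k\le d$.

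\medskip

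\textbf{Plan of proof.} The plan is to reduce the tadpole of $\DD+A$ to the tadpole of $\DD+\wt A$, which is trivially zero under the hypothesis $\wt A=0$. First I would recall from Proposition \ref{valeurtadpole} that $\Tad_{\DD+\wt A}=2\,\Tad_{\DD+A}$ (for every order $k$ in the relevant range), this identity being a consequence of $J\DD=\epsilon\DD J$ together with the reality property of the noncommutative integral established in Lemma \ref{adjoint}. Hence it suffices to show that $\Tad_{\DD+\wt A}(k)=0$ for all admissible $k$.

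\medskip

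Next I would invoke the explicit formulas \eqref{tadpolen-k} and \eqref{tadpole0} from Proposition \ref{valeurtadpole}: for $k\neq d$ (more precisely for orders $d-k$ with $k\in\N$, $k\neq d$) one has
\begin{align*}
\Tad_{\DD+\wt A}(d-k)=-(d-k)\ncint \wt A\,\DD\,|\DD|^{-(d-k)-2},
\end{align*}
and for the order-$0$ term
\begin{align*}
\Tad_{\DD+\wt A}(0)=-\ncint \wt A\,\DD^{-1}.
\end{align*}
(Strictly speaking these formulas are stated for $\Tad_{\DD+A}$ with $A$ replaced by $\wt A$ inside $X$; the displayed expressions are precisely what the proof of Proposition \ref{valeurtadpole} yields with $X:=\wt A\DD+\DD\wt A+\wt A^2$.) Under the hypothesis $\wt A=A+\epsilon JAJ^{-1}=0$, every integrand in these expressions contains the factor $\wt A$ and therefore vanishes identically as an operator; consequently each $\ncint(\cdots)=0$, so $\Tad_{\DD+\wt A}(k)=0$ for every $k\le d$. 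Combining with the factor-of-two relation gives $\Tad_{\DD+A}(k)=\tfrac12\Tad_{\DD+\wt A}(k)=0$, which is the claim.

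\medskip

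\textbf{Main point to be careful about.} There is essentially no analytic obstacle here; the only subtlety is bookkeeping about the range of $k$. The tadpole $\Tad_{\DD+A}(k)$ is only defined for $k\in\{d-l: l\in\N\}$ together with $k=0$, so the statement "$\Tad_{D+A}(k)=0$ for any $k\in\Z$, $k\le d$" should be read as: the tadpole vanishes at every order at which it is defined. For all such orders the formulas of Proposition \ref{valeurtadpole} apply verbatim and the factor $\wt A=0$ kills the answer. I would also note that this corollary is the spectral-triple analogue of the vanishing of the symmetrized gauge potential $A+\epsilon JAJ^{-1}$ in the commutative case (cf.\ the remark after Proposition \ref{ncintfluctuated}), which is why it is natural to phrase the hypothesis as $\wt A=0$.
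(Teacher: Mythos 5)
Your proposal is correct and is essentially the argument the paper intends (it gives no separate proof, the corollary being immediate from Proposition \ref{valeurtadpole}): the explicit formulas for $\Tad_{\DD+\wt A}$ all carry a factor $\wt A$, hence vanish when $\wt A=0$, and the relation $\Tad_{\DD+\wt A}=2\,\Tad_{\DD+A}$ transfers this to $\Tad_{\DD+A}$. Your remark about the range of $k$ is the right reading of the statement, and one could even shortcut the formulas by noting that $\wt A=0$ forces $\DD_{\wt A}=\DD$, so the spectral action is literally independent of $A$ and its linear part vanishes at every order.
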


The vanishing tadpole of order 0 has the following equivalence (see \cite{CC1})
\begin{align}
  \label{equ}
\ncint A \DD^{-1}=0, \,\forall A \in \Omega_{\DD}^1(\A)
\,\Longleftrightarrow  \,\ncint a b=\ncint a \a(b), \,\forall a,b
\in \A, 
\end{align}
where $\a(b):=\DD b\DD^{-1}$.
\vskip0.2cm
The existence of tadpoles is important since, for instance, $A=0$ is not necessarily a stable solution of the classical 
 field equation deduced from spectral action expansion, 
\cite{GW}.

\subsection {Commutative geometry}
\label{Commutative geometry}

\begin{definition}
\label{rieman}
Consider a commutative spectral triple given by a compact Riemannian spin manifold  $M$ of 
dimension $d$ without boundary and its Dirac operator $\Ds$ associated to the Levi--Civita connection.
This means $\big( \A:=C^{\infty}(M),\, \H:=L^2(M,S),\,\Ds \big)$ where $S$ is the spinor bundle over $M$. 
This triple is real since, due to the existence of a spin structure, the charge conjugation operator
generates an anti-linear isometry $J$ on $\H$ such that
\begin{align}
\label{JaJ(-1)}
JaJ^{-1}=a^*,\quad \forall a \in \A,
\end{align}
and when $d$ is even, the grading is given by the chirality matrix
\begin{align}
\label{chi}
\chi:=(-i)^{d/2}\,\ga^1\ga^2\cdots\ga^d.
\end{align}
Such triple is said to be a commutative geometry.
\end{definition}

In the polynomial algebra $\DD(\A)$ of Definition \ref{defDA}, we added $\A^\circ$. In the commutative case, 
$\A^\circ \simeq J\A J^{-1} \simeq \A$ as indicated by \eqref{JaJ(-1)} which also gives
\begin{align}
\label{JAJ}
JAJ^{-1}=-\epsilon\,A^* , \quad \forall A\in \Omega^1_\DD(\A) \text{ or } \wt A=0 \text{ when } A=A^*.
\end{align}

As noticed by Wodzicki, $\ncint P$ is equal to $-2$ times the coefficient in log $t$ of the asymptotics of 
$\Tr(P\, e^{-t\,\Ds^2}$) as $t \rightarrow 0$. It is remarkable that this coefficient is independent of $\Ds$ as 
seen in Theorem \ref{Wresresult} and this gives a close relation between the $\zeta$ function and 
heat kernel expansion with {\it WRes}. Actually, by \cite[Theorem 2.7]{GS}

\begin{align}
\label{heat}
\Tr(P\, e^{-t\,\Ds^2}) \underset{t \downarrow 0^+}{\sim} \, \sum_{k=0}^{\infty} a_k\,t^{(k-ord(P)-d)/2} 
+ \sum_{k=0}^{\infty} (-a'_k\,\log t+b_k)\,t^k,
\end{align}
so 
$$
\ncint P=2 a'_0.
$$

Remark that $\ncint$, {\it WRes} are traces on $\Psi\big(C^\infty(M)\big)$, thus Corollary \ref{uniquetrace} implies 
\begin{align}
\label{int=cWres}
\ncint P=c \,\text{\it WRes }P.
\end{align}
Since, via Mellin transform, 
$\Tr(P\,\Ds^{-2s})=\tfrac{1}{\Gamma (s)} \int_0^\infty t^{s-1}\, \Tr(P\, e^{-t\,\Ds^2})\,dt$, the non-zero coefficient $a'_k$, 
$k\neq0$ creates a pole of $\Tr(P\,\Ds^{-2s}) $ of order $k+2$ because we get 
$\int_0^1 t^{s-1} \log(t)^k\, dt=\tfrac{(-1)^k k!}{s^{k+1}}$ and 
\begin{align}
\label{Gamma}
\Gamma(s)=\frac{1}{s} +\gamma+s\,g(s)
\end{align} 
where $\gamma$ is the Euler constant and the function $g$ is also holomorphic around zero.

We have $\ncint 1=0$ and more generally, $\text{\it WRes}(P)=0$ for all zero-order pseudodifferential projections 
\cite{Wodzicki1}.

As the following remark shows, being a commutative geometry is more than just having a commutative algebra:

\begin{remark}
\label{commutativenon}
Since $J\pi(a)J^{-1}=\pi(a^*)$ for all $a \in \A$ and $\tilde A=0$ for all $A=A^{*}\in \Omega_\DD^1$ 
when $\A$ is commutative by \eqref{JAJ}, one can only use $\DD_A=\DD+A$ to get fluctuation of $\DD$: 
It is amazing to see that in the context of noncommutative geometry, to get an abelian gauge field, we need to go 
outside of abelian algebras. In particular, as pointed out in \cite{Var}, a commutative manifold could support 
relativity but not electromagnetism. 

However,  we can have $\A$ commutative and 
$J\pi(a)J^{-1}\neq \pi(a^*)$ for some $a\in \A$ \cite{CGravity,Kraj}:\\
Let $\A_1=\C \oplus \C$ represented on $\H_1=\C^3$ with, for some complex number $m\neq0$, 
\begin{align*} 
\pi_1(a)&:=  \left( \begin{array}{ccc}
b_1 & 0 & 0\\
0 & b_1 & 0\\
0 & 0 & b_2
\end{array} \right), \,\,for \,\, a=(b_1,\,b_2) \in \A,
\end{align*}
and 
\begin{align*} 
\DD_1&:= \left( \begin{array}{ccc}
0 & m & m\\
\bar m & 0 & 0\\
\bar m & 0 & 0
\end{array} \right),  \,\,
\chi_1:=  \left( \begin{array}{ccc}
1 & 0 & 0\\
0 & -1 & 0\\
0 & 0 & -1
\end{array} \right),  
\,\,
J_1:=\left( \begin{array}{ccc}
1 & 0 & 0\\
0 & 0 & 1\\
0 & 1 & 0
\end{array} \right) \circ \, cc
\end{align*}
where $cc$ is the complex conjugation. Then ($\A_1,\,\H_1,\,\DD_1$) is a commutative real spectral triple of 
dimension $d=0$ with non zero one-forms and such that $J_1\pi_1(a)J_1^{-1}= \pi_1(a^*)$ only if $a=(b_1,b_1)$. 

Take now a commutative geometry \big($\A_2=C^{\infty}(M), \,\H=L^2(M,S),\,\DD_2,\, \chi_2,\, J_2$\big) defined in 
\ref{rieman} where $d=dim M$ is even, and then take the tensor product of the two spectral triples, see Section \ref{tens}, namely 
$\A=\A_1\otimes \A_2$, $\H=\H_1 \otimes \H_2$, $\pi=\pi_1\otimes \pi_2$, 
$\DD=\DD_1 \otimes \chi_2 + 1\otimes \DD_2$, $\chi=\chi_1 \otimes \chi_2$ and $J$ is either 
$\chi_1 J_1\otimes J_2$ when $d\in \set{2,6}$ mod 8 or $J_1 \otimes J_2$ in the other cases, see \cite{CGravity,Vanhecke}.

Then $(\A,\, \H,\, \DD)$ is a real commutative spectral triple of dimension $d$ such that $\tilde A \neq 0$ for 
some selfadjoint one-forms $A$, so is not exactly like in Definition \ref{rieman}.
\end{remark}

\begin{prop}
\label{spectrcomm}
Let $Sp(M)$ be the dimension spectrum of a commutative geometry of dimension $d$. Then $Sp(M)$ is simple and 
$Sp(M) = \set{d-k \, \vert  \, k \in \N}$.
\end{prop}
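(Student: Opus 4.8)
The plan is to compute the zeta functions $\zeta_{\Ds}^P(s) = \Tr(P\,|\Ds|^{-s})$ for $P \in \Psi DO^{\Z}(M)\cap OP^0$ (and for $P$ built from the algebra elements, their commutators with $\Ds$, and powers of $|\Ds|$) and to read off where the poles sit. First I would recall from Section~\ref{Heatkernel} (the Seeley--de~Witt expansion \eqref{heat}, due to Gilkey--Seeley) that for any classical $\Psi DO$ $P$ of integer order on the compact manifold $M$ one has
\begin{align*}
\Tr(P\,e^{-t\Ds^2}) \underset{t\downarrow 0^+}{\sim} \sum_{k=0}^\infty a_k\,t^{(k-\mathrm{ord}(P)-d)/2} + \sum_{k=0}^\infty(b_k - a_k'\log t)\,t^k.
\end{align*}
Via the Mellin transform $\Tr(P\,|\Ds|^{-2s}) = \frac{1}{\Gamma(s)}\int_0^\infty t^{s-1}\Tr(P\,e^{-t\Ds^2})\,dt$ and the fact that $1/\Gamma(s)$ has a simple zero at $s=0$ (see \eqref{Gamma}), each term $t^{(k-\mathrm{ord}(P)-d)/2}$ contributes at most a simple pole at $s = (\mathrm{ord}(P)+d-k)/2$, while the $\log$-terms $t^k$ with $k\ge 1$ would a priori give higher-order poles but are killed by the zero of $1/\Gamma(s)$ at the nonnegative integers. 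This already shows that for $P\in OP^0$, i.e. $\mathrm{ord}(P)\le 0$, the only possible poles of $s\mapsto \Tr(P\,|\Ds|^{-s})$ lie in $\{d-k : k\in\N\}$ (here $|\Ds|^{-s}$, not $|\Ds|^{-2s}$, so the half-integers are absorbed because the heat expansion of a Dirac Laplacian on $M$ has no half-integer powers — odd heat coefficients vanish, point 3) of Section~\ref{Heatkernel}) and that they are simple.

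Next I would handle the full dimension spectrum, which requires $P$ ranging over $\Psi(\A)\cap OP^0$, the pseudodifferential operators of the spectral triple built from $\A = C^\infty(M)$, $\A^\circ \simeq J\A J^{-1}$, $\Ds$ and $|\Ds|$. By the identification $\Psi(C^\infty(M)) \subset \Psi DO(M,S)$ (stated after Definition~\ref{defDA}), together with \eqref{JaJ(-1)} which makes $J\A J^{-1}$ act as $C^\infty(M)$ again, every element of $\Psi(\A)\cap OP^0$ is a classical $\Psi DO$ of order $\le 0$ with values in $\mathrm{End}(S)$, so the heat-kernel argument of the previous paragraph applies verbatim with the fibrewise trace $\tr_S$ inserted. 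This gives $Sd(M) \subseteq \{d-k : k\in\N\}$ and simplicity. For the reverse inclusion I would exhibit, for each $k\in\N$, a pseudodifferential operator $P_k\in\Psi(\A)\cap OP^0$ with $\Res_{s=d-k}\Tr(P_k|\Ds|^{-s})\ne 0$; the cleanest choice is $P_k = Op(\chi(\xi)\,\|\xi\|_x^{-k})$ localized away from $\xi=0$, or more intrinsically to use that $a_{d-k}(\Ds^2)\ne 0$ can be arranged after multiplying by a suitable positive $f\in C^\infty(M)$ — indeed $a_0$, $a_2$ etc. are honest integrals of curvature invariants \eqref{coefficientsa} and one can choose $f$ supported where the relevant local invariant is nonzero, or simply invoke that $|\Ds|^{-(d-k)}$ itself has $\Res_{s=d-k}\zeta_{\Ds}(s) = \frac1{d}\,\mathit{WRes}(|\Ds|^{-(d-k)})$, which by Theorem~\ref{WresD-p} equals $\frac{2}{\Gamma((d-k)/2)}a_k(\Ds^2)$ up to a constant and is generically nonzero.

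The main obstacle is the reverse inclusion done \emph{uniformly in $k$}: one must be sure that for every $k$ there is a genuine pole, not an accidental cancellation. The subtle case is $k$ with $d-k$ a nonpositive even integer or $k > d$: for $k>d$ the operator $|\Ds|^{-(d-k)}$ has positive order and is not itself in $OP^0$, so one cannot just test with it; instead one tests with $P_k = |\Ds|^{-(d-k)}(1+\Ds^2)^{-N}\cdot(\text{positive invariant})$ for $N$ large, reducing to order $\le 0$ while shifting the pole back to $s=d-k$ only after subtracting the trace-class tail — here the ``$\exists P\in\Psi(\A), R\in OP^{-N}$'' flexibility in Definition~\ref{defDA} and the residue being insensitive to $OP^{-\infty}$ perturbations is what saves the day. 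I would also double-check that no pole is lost to the zeros of $1/\Gamma$: a pole of $\Tr(P|\Ds|^{-s})$ at $s=d-k$ comes from a term $t^{(k-\mathrm{ord}(P)-d)/2}$, which for $\mathrm{ord}(P)=0$ sits at $s=d-k$, and the zero of $1/\Gamma(s)$ is at $s=0,-1,-2,\dots$; these coincide only when $d-k \le 0$, in which case the honest heat coefficient $a_{k}$ still produces a simple pole because $1/\Gamma$ has only a \emph{simple} zero there while the $\log$-term (which carries the order-two part) vanishes anyway — so the net effect is a simple pole, consistent with simplicity of $Sd(M)$. Assembling these two inclusions yields $Sp(M) = Sd(M) = \{d-k : k\in\N\}$ and simplicity, proving the proposition.
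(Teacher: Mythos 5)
Your treatment of the forward inclusion and of simplicity is essentially the paper's: the Seeley--de~Witt expansion fed through the Mellin transform, with the vanishing of the $\log t$ coefficients (because $\Ds$ is differential and $M$ has no boundary) forcing every pole to be simple and to sit in $\set{d-k \,\vert\, k\in\N}$. The genuine gap is in the reverse inclusion, where you hedge between three witnesses and two of them fail. Testing with $|\Ds|^{-(d-k)}$ alone, or trying to arrange $a_{d-k}(\Ds^2)\neq 0$ by localizing with a cutoff $f$, produces a residue proportional to a heat coefficient of $\Ds^2$; these vanish identically for all odd $k$ (as you note yourself, the odd coefficients are zero) and can vanish for even $k$ too, e.g. $a_2\propto\int_M s\,dvol_g=0$ on a flat torus. ``Generically nonzero'' is not enough: the proposition asserts equality of the dimension spectrum with $\set{d-k\,\vert\,k\in\N}$ for \emph{every} commutative geometry.

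The fix --- which is the paper's proof, and which your first option ($Op(\chi(\xi)\,\Vert\xi\Vert_x^{-k})$) almost reaches without completing --- is to shift the pole instead of chasing the coefficient. Take $P_k:=a\,|\Ds|^{-k}$ with $a\in C^\infty(M)$ and $\int_M a\,dvol_g\neq 0$; this is in $\Psi(\A)\cap OP^{-k}\subset OP^0$ directly from Definition \ref{defDA} (no identification of symbols with elements of $\Psi(\A)$ left to check), and $\Tr\big(P_k|\Ds|^{-s}\big)=\zeta_\Ds^a(s+k)$, so its residue at $s=d-k$ equals the residue of $\zeta_\Ds^a$ at $s=d$, namely $\ncint a\,|\Ds|^{-d}=c\int_M a\,dvol_g\neq 0$, uniformly in $k$ and independently of any curvature invariant. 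Your closing paragraph about $k>d$ is then a red herring: $a|\Ds|^{-k}$ has order $-k\leq -1$ for every $k\in\N$, so no regularization by $(1+\Ds^2)^{-N}$ is needed, and the concern about cancellation against the zeros of $1/\Gamma$ does not arise because the pole being detected is always the one at $s=d>0$ of $\zeta_\Ds^a$.
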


\begin{proof}
Let $a\in \A=C^{\infty}(M)$ such that its trace norm $\vert\vert a \vert\vert_{L^1}$ is non zero and for $k \in \N$, let 
$P_k:=a \vert D \vert^{-k}$. Then $P_k \in OP^{-k} \subset OP^{0}$ and its associated zeta-function has a pole at 
$d-k$:
\begin{align*}
\underset{s=d-k}{\Res} \,\zeta^P_\DD(s)&=\underset{s=0}{\Res}
\,\zeta_\DD^P(s+d-k)=\underset{s=0}{\Res} \,\Tr \big(a \vert
\DD\vert^{-k}\vert \DD\vert^{-(s+d-k)}\big)=\ncint a \vert \DD \vert ^{-d}\\
&=\int_M a(x) \int_{S_x^*M} \Tr\big(({\sigma_1^{\vert \DD\vert}})^{-d}(x,\xi) \big) \vert d\xi \vert \, \vert dx \vert
=c\int_M a(x) \int_{S_x^*M} \vert\vert \xi \vert \vert^{-d}  \vert d\xi \vert \, \vert dx \vert \\
&=c\int_M a(x) \, dvol_g(x) =c\,\vert\vert a \vert\vert_{L^1}\neq 0.
\end{align*}
Conversely, since $\Psi^0(\A)$ is contained in the algebra of all pseudodifferential operators of order less or equal 
to 0, it is known\cite{Guillemin,Wodzicki1,Wodzicki} that $Sp(M) \subset \set{d-k \, :\, k \in \N}$ as seen in Theorem 
\ref{WresD-p}.
\\
All poles are simple since $\DD$ being differential and $M$ being without boundary, $a'_k=0$, for all $k\in \N^*$ in 
\eqref{heat}.
\end{proof}

\begin{remark}
Due to our efforts to mimic the commutative case, we get as in Theorem \ref{Wresresult} that the noncommutative 
integral is a trace on $\Psi^*(\A)$. However, when the dimension spectrum is not simple, the analog of {\it WRes} 
is no longer a trace.

The equation \eqref{invDAd} can be obtained via \eqref{Wresint} and \eqref{int=cWres} since 
$\sigma^{\vert \DD_A\vert^{-d}}_d=\sigma^{\vert \DD\vert^{-d}}_d$.

In dimension $d=4$, the computation in \eqref{coefficientsa} of coefficient $a_4(1,\DD_A^2)$ gives
$$
\zeta_{\DD_A}(0)=c_1\int_M(5R^2-8R{\mu\nu}r^{\mu\nu}-7R_{\mu\nu\rho\sigma}R^{\mu\nu\rho\sigma} \,dvol 
+c_2\int_M \tr(F_{\mu\nu}F^{\mu\nu})\,dvol,
$$
see Corollary \ref{asymptidentic} to see precise correspondence between $a_k(1,\DD_A^2)$ and 
$\zeta_{\DD_A}(0)$. One recognizes the Yang--Mills action which will be generalized in Section \ref{YMA} to 
arbitrary spectral triples.

According to Corollary \ref{Atilde=0}, a commutative geometry has no tadpoles.
\end{remark}

\subsection{Scalar curvature}

What could be the scalar curvature of a spectral triple $(\A,\H,\DD)$? Of course, we need to consider first the case 
of a commutative geometry $(C^\infty(M), L^2(M,S), \Ds)$ of dimension $d=4$: We know that 
$\ncint f(x) \Ds^{-d+2}=\int_M f(x)\,s(x) \,dvol(x)$ where $s$ is the scalar curvature for any $f\in C^\infty(M)$. This 
suggests the following
\begin{definition}
Let $(\A,\H,\DD)$ be a spectral triple of dimension $d$. The scalar curvature is the map $\mathcal{R}:a\in \A \to \C$ 
defined by
$$
\mathcal{R}(a)\vc \ncint a\,\DD^{-d+2}.
$$
\end{definition}
In the commutative case, $\mathcal{R}$ is a trace on the algebra. More generally
\begin{prop}
If $\mathcal{R}$ is a trace on $\A$ and the tadpoles $\ncint A\,\DD^{-d+1}$ are zero for all $A\in \Omega^1_\DD$, 
$\mathcal{R}$ is invariant by inner fluctuations $\DD\to\DD+A$.
\end{prop}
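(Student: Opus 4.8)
The plan is to unfold the definition of $\mathcal{R}$ under the fluctuation $\DD\to\DD_A\vc\DD+A$ (with $A=A^*\in\Omega^1_\DD(\A)$) and show that the correction terms all vanish under the two stated hypotheses. First I would write $\mathcal{R}_A(a)\vc\ncint a\,\DD_A^{-d+2}$ and use the expansion of $|\DD_A|^{-s}$ from Lemma \ref{2dev}(ii) (equivalently the computations behind Lemma \ref{residus-particuliers}), but now with the extra insertion of $a$ on the left. Since $a\in\A\subset OP^0$ and $\ncint$ is a trace picking out a residue at $s=d-2$, the only contributions to $\ncint a\,\DD_A^{-d+2}-\ncint a\,\DD^{-d+2}$ come, exactly as in the proof of Lemma \ref{residus-particuliers}(ii), from the first- and second-order terms in $Y=\log(D_A^2)-\log(D^2)$; schematically one gets a term proportional to $\ncint a\,X|\DD|^{-d}$ and a term proportional to $\ncint a\,X^2|\DD|^{-d-2}$, with $X=A\DD+\DD A+A^2$ (here $\tilde A=A$ is immaterial, only $A=A^*$ matters — and in the commutative normalization one would have $\tilde A=0$, but the proposition is stated for a general spectral triple, so I keep $A$).

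The key step is then to identify these correction terms with tadpole-type integrals and kill them. The second-order term $\ncint a\,X^2|\DD|^{-d-2}$ is $OP^{-2}$ relative to the leading order, hence does not contribute to the residue defining a dimension-$d$ spectral triple's scalar curvature — more precisely, $a\,X^2|\DD|^{-d-2}\in OP^{-d-? }$ is traceable near $s=0$ after the shift, so its residue at $s=d-2$ vanishes; I would check this degree count carefully. The first-order term $\ncint a\,(A\DD+\DD A)|\DD|^{-d}$ is the one that matters: using the trace property of $\ncint$ and $\nabla(a)\in OP^0$, $\ncint a\,\DD A|\DD|^{-d}$ and $\ncint a\,A\DD|\DD|^{-d}$ differ from $\ncint (aA)\DD|\DD|^{-d}$-type expressions by commutator terms that are traces of lower order, so both reduce to $\ncint a'\,A\DD|\DD|^{-d}$ with $a'$ built from $a$; invoking the trace property again, $\ncint a\,A\DD|\DD|^{-d}=\ncint aA\,\DD|\DD|^{-d}$ up to a term $\ncint[\DD|\DD|^{-d},a]A$, and here is where the hypothesis that $\mathcal{R}$ is a trace enters: the residue $\ncint (\cdot)\DD|\DD|^{-d}$ defines exactly (a multiple of) the tadpole functional $A\mapsto\Tad_{\DD+A}(d-2)=-(d-2)\ncint A\DD|\DD|^{-d}$ of Proposition \ref{valeurtadpole}, and the extra $a$-dependence is absorbed because $\mathcal{R}$ being a trace forces $\ncint ab\,\DD^{-d+2}=\ncint ba\,\DD^{-d+2}$, which is precisely the cyclicity needed to move $a$ past $A$.

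Concretely I would argue: $\mathcal{R}_A(a)-\mathcal{R}(a)$ is a multiple of $\ncint a\{A,\DD\}|\DD|^{-d}$ which, modulo traceable terms, equals a multiple of $\ncint\{aA,\DD\}|\DD|^{-d}=-\tfrac{2}{d-2}\Tad_{\DD+aA}(d-2)$ — but $aA$ need not be self-adjoint, so rather than the tadpole of $aA$ I would symmetrize and use the two hypotheses in tandem: the vanishing-tadpole assumption $\ncint A\DD^{-d+1}=0$ for all $A\in\Omega^1_\DD$ gives, via \eqref{equ}-type reasoning or directly, that $\ncint A\DD|\DD|^{-d}=0$ for every one-form, and the trace property of $\mathcal{R}$ gives that the $a$ can be commuted into the one-form slot so that the whole correction is such a vanishing tadpole integral. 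Hence $\mathcal{R}_A=\mathcal{R}$.

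The main obstacle will be the bookkeeping of exactly which commutator terms are traceable (hence residue-free) versus which survive, and making precise the claim that the two hypotheses together suffice — in particular showing that ``$\mathcal{R}$ is a trace'' is strong enough to move the extra factor $a$ past $A$ inside $\ncint(\cdot)\DD|\DD|^{-d}$, since a priori that functional is not literally $\mathcal{R}$ evaluated at a product. I expect one needs the identity $\ncint a\,A\DD|\DD|^{-d}=\ncint aA\,\DD|\DD|^{-d}+\ncint A[\DD|\DD|^{-d},a]$ and then to note $[\DD|\DD|^{-d},a]A\in OP^{-d-1}$ up to... — no, $[\DD|\DD|^{-d},a]=[\DD,a]|\DD|^{-d}+\DD[|\DD|^{-d},a]\in OP^{-d}$, so its residue need not vanish; this is exactly why the hypothesis that $\mathcal{R}$ itself be a trace is imposed rather than being automatic, and the clean way out is to first rewrite everything in terms of $\mathcal{R}$ at products using Lemma \ref{adjoint}/cyclicity and only at the end apply the tadpole-vanishing hypothesis. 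Getting that logical order right is the delicate point; the rest is the degree-counting already done in Lemma \ref{residus-particuliers}.
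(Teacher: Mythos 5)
First, note that the paper does not actually prove this proposition: it simply refers to \cite[Proposition 1.153]{ConnesMarcolli}. So your proposal cannot be matched against an in-text argument; I can only judge it on its own terms. Your treatment of the part \emph{linear} in $A$ is on the right track: expand $|\DD_A|^{-s}$ as in Lemma \ref{2dev}, use that $\Omega^1_\DD(\A)$ is an $\A$-bimodule so that $aA$ and $Aa$ are again one-forms, reduce $\ncint a(A\DD+\DD A)|\DD|^{-d}$ to tadpoles $\ncint (aA)\DD^{-d+1}$, $\ncint(Aa)\DD^{-d+1}$ plus commutator corrections, and use the hypothesis that $\mathcal{R}$ is a trace to absorb those corrections. (A small slip: Lemma \ref{adjoint} concerns adjoints and $J$, not cyclicity; the cyclicity you need is the trace property of $\ncint$ on $\Psi(\A)$.)

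The genuine gap is your dismissal of the term quadratic in $A$. You claim $a\,X^2|\DD|^{-d-2}$ is of low enough order that its residue vanishes; the degree count is wrong. Since $A\in OP^0$ one has $X=A\DD+\DD A+A^2\in OP^1$, hence $a\,X^2|\DD|^{-d-2}\in OP^{-d}$ — exactly the critical order, not trace class, and with generically nonzero noncommutative integral (just as $\ncint|\DD|^{-d}\neq 0$). The paper's own Lemma \ref{residus-particuliers}\,(ii) and Corollary \ref{res-n-2-A} retain precisely this contribution: for $a=1$ it is the Yang--Mills-type expression $\ncint \wt A\DD\wt A\DD|\DD|^{-d-2}+\tfrac{d-2}{d}\ncint\wt A^2|\DD|^{-d}$, whose vanishing on the noncommutative torus is not automatic but the outcome of a separate computation (Lemma \ref{traceAA}), and in the commutative case it vanishes only because $\wt A=0$ there. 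A complete proof must therefore show that the two hypotheses (trace property of $\mathcal{R}$ and vanishing of the order-$(d-1)$ tadpoles) force the quadratic contributions $\ncint a A\DD A\DD|\DD|^{-d-2}$, $\ncint aA^2|\DD|^{-d}$, etc., to cancel — or run a one-parameter argument $\DD_t=\DD+tA$ in which only first-order terms appear, which then requires proving that the hypotheses themselves propagate along the fluctuation. Neither is done in your proposal, so as written it establishes at best first-order invariance.
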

\noindent See \cite[Proposition 1.153]{ConnesMarcolli} for a proof.

\subsection{Tensor product of spectral triples}
\label{tens}

There is a natural notion of tensor for spectral triples which corresponds to direct product of manifolds in the 
commutative case. 
Let $(\A_i,\DD_i,\H_i)$, $i=1,2$, be two spectral triples of dimension $d_i$ with simple dimension spectrum. 
Assume the first to be of even dimension, with grading $\chi_1$.
\\
The spectral triple $(\A,\DD,\H)$ associated to the tensor product is defined by
\begin{align*}
\A:=\A_1 \otimes \A_2, \quad \DD:=\DD_1\otimes 1+\chi_1 \otimes \DD_2, \quad \H:=\H_1 \otimes \H_2.
\end{align*}
The interest of $\chi_1$ is to guarantee additivity: $\DD^2=\DD_1^2\otimes 1 + 1\otimes \DD_2^2$.
\\
We assume that
\begin{align}
\label{dimensionsomme}
\Tr(e^{-t\DD_1^2})\sim_{t\to 0} a_1\,t^{-d_1/2},\quad \Tr(e^{-t\DD_2^2})\sim_{t\to 0} a_2\,t^{-d_2/2}.
\end{align}

\begin{lemma}
\label{tensorproduct}
The triple $(\A,\DD,\H)$ has dimension $d=d_1+d_2$. \\
Moreover, the function $\zeta_\DD(s)=\Tr(\vert D\vert^{-s})$ has a simple pole 
at $s=d_1+d_2$ with
\begin{align*}
\Res_{s=d_1+d_2}\big(\zeta_{\DD}(s)\big)=\tfrac{1}{2}\, \tfrac{\Gamma(d_1/2) \Gamma(d_2/2)}{\Gamma(d/2)} 
\Res_{s=d_1}\big(\zeta_{\DD_1}(s)\big)\,\Res_{s=d_2}\big(\zeta_{\DD_2}(s)\big).
\end{align*}
\end{lemma}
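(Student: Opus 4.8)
The plan is to compute the heat trace of the product operator and read off its small-$t$ asymptotics, then use the Mellin transform to convert this into a statement about the residue of $\zeta_\DD$. Since $\chi_1$ commutes with $\DD_1$ and squares to the identity, and $\chi_1\otimes\DD_2$ anticommutes with $\DD_1\otimes 1$ only up to the cross terms vanishing, we have the clean identity $\DD^2=\DD_1^2\otimes 1 + \chi_1^2\otimes\DD_2^2 = \DD_1^2\otimes 1 + 1\otimes\DD_2^2$ (the cross term $\DD_1\chi_1\otimes\DD_2 + \chi_1\DD_1\otimes\DD_2$ cancels because $\DD_1\chi_1=\chi_1\DD_1$... actually it is $2\chi_1\DD_1\otimes\DD_2$, so the correct cancellation needs $\chi_1\DD_1=-\DD_1\chi_1$; I should double-check signs, but this is exactly the point of inserting $\chi_1$). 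Granting additivity of $\DD^2$, the two summands commute, so $e^{-t\DD^2}=e^{-t\DD_1^2}\otimes e^{-t\DD_2^2}$, and hence
\begin{align*}
\Tr(e^{-t\DD^2})=\Tr(e^{-t\DD_1^2})\,\Tr(e^{-t\DD_2^2})\underset{t\to0^+}{\sim}a_1a_2\,t^{-(d_1+d_2)/2}.
\end{align*}

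Next I would invoke the Tauberian-type correspondence between heat trace asymptotics and zeta poles already used in Example \ref{Dixtrace} and in Section \ref{Heatkernel}: since $\Tr(e^{-t\DD^2})\sim a_1a_2\,t^{-d/2}$ with $d=d_1+d_2$, the operator $\DD$ has metric dimension $d$, establishing the first claim. For the residue, recall the Mellin transform formula
\begin{align*}
\Tr(\vert\DD\vert^{-s})=\tfrac{1}{\Gamma(s/2)}\int_0^\infty t^{s/2-1}\,\Tr(e^{-t\DD^2})\,dt,
\end{align*}
valid for $\Re(s)$ large. Splitting the integral at $t=1$, the piece over $[1,\infty)$ is entire (the heat trace decays, assuming $\DD$ has a finite-dimensional kernel one handles the constant term separately or passes to $\vert\DD\vert$ suitably), while the piece over $[0,1]$ with the leading term $a_1a_2\,t^{-d/2}$ substituted produces $\tfrac{a_1a_2}{\Gamma(s/2)}\int_0^1 t^{s/2-1-d/2}\,dt=\tfrac{a_1a_2}{\Gamma(s/2)}\cdot\tfrac{1}{s/2-d/2}$, which has a simple pole at $s=d$ with residue $\tfrac{2a_1a_2}{\Gamma(d/2)}$. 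Thus $\Res_{s=d}\zeta_\DD(s)=\tfrac{2a_1a_2}{\Gamma(d/2)}$.

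The same computation applied to each factor gives $\Res_{s=d_i}\zeta_{\DD_i}(s)=\tfrac{2a_i}{\Gamma(d_i/2)}$, so $a_i=\tfrac{1}{2}\Gamma(d_i/2)\,\Res_{s=d_i}\zeta_{\DD_i}(s)$. Substituting,
\begin{align*}
\Res_{s=d}\zeta_\DD(s)=\tfrac{2}{\Gamma(d/2)}\cdot\tfrac{1}{2}\Gamma(d_1/2)\Res_{s=d_1}\zeta_{\DD_1}(s)\cdot\tfrac{1}{2}\Gamma(d_2/2)\Res_{s=d_2}\zeta_{\DD_2}(s)=\tfrac{1}{2}\tfrac{\Gamma(d_1/2)\Gamma(d_2/2)}{\Gamma(d/2)}\Res_{s=d_1}\zeta_{\DD_1}(s)\,\Res_{s=d_2}\zeta_{\DD_2}(s),
\end{align*}
which is the asserted formula.

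The main obstacle I anticipate is not the residue bookkeeping but justifying the clean tensor decomposition and the interchange of the asymptotic expansion with the Mellin integral: one must be careful that $\DD^2=\DD_1^2\otimes1+1\otimes\DD_2^2$ genuinely holds (this is where the grading $\chi_1$ and the sign convention $\DD_1\chi_1=-\chi_1\DD_1$ from Definition \ref{defspectraltriple} enter), that zero modes of $\DD$ do not spoil the analysis (handled by working with $\vert\DD\vert$ on the complement of the kernel, or noting the kernel contributes only an entire term), and that the error terms in \eqref{dimensionsomme} multiply to give genuinely subleading contributions to the product heat trace so that only the leading pole at $s=d$ is affected. All of these are routine given the hypotheses (simple dimension spectrum, the stated asymptotics), so the proof is essentially the two displayed computations above.
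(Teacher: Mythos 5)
Your proof is correct and follows essentially the same route as the paper: additivity $\DD^2=\DD_1^2\otimes1+1\otimes\DD_2^2$ gives the factorization $\Tr(e^{-t\DD^2})=\Tr(e^{-t\DD_1^2})\Tr(e^{-t\DD_2^2})\sim a_1a_2\,t^{-d/2}$, and the Mellin transform $\Gamma(s/2)\,\zeta_\DD(s)=\int_0^\infty t^{s/2-1}\Tr(e^{-t\DD^2})\,dt$ (the paper writes it as $\Gamma(s)\zeta_\DD(2s)$) then yields the simple pole at $s=d$ with residue $2a_1a_2/\Gamma(d/2)$, which combined with $a_i=\tfrac12\Gamma(d_i/2)\,\Res_{s=d_i}\zeta_{\DD_i}(s)$ gives the stated formula. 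The only difference is that the paper first establishes the a priori bound $d\le d_1+d_2$ by estimating the double sum $\sum_{n,m}\big(\mu_n(\DD_1^2)+\mu_m(\DD_2^2)\big)^{-s}$ directly, a preliminary step your argument renders unnecessary; the points you flag (signs in the cross terms, zero modes, interchanging the asymptotics with the integral) are treated at the same level of detail in the paper.
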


\begin{proof}
If $(\mu_{n}(A))$ are the singular values of $A$, 
\begin{align*}
\zeta_\DD(2s)=\sum_{n=0}^\infty \mu_n(D_1^2\otimes 1+1\otimes \DD_2^2)^{-s}
=\sum_{n,m=0}^\infty \big( \mu_n(\DD_1^2)+\mu_m(\DD_2^2)\big)^{-s}.
\end{align*}
Since 
$\big(\mu_n(\DD_1)^2+\mu_m(\DD_2)^2 \big)^{-(c_1+c_2)/2} \leq \mu_n(\DD_1)^{-c_1}\mu_m(\DD_2)^{-c_2}$, 
this shows in particular that $\zeta_\DD(c_1+c_2) \leq \zeta_{\DD_1}(c_1)\zeta_{\DD_2}(c_2)$ if $c_i>d_i$, 
and in particular that 
$$
d:=\inf\set{c\in\R^+ \,:\, \zeta_\DD(c)<\infty}\leq d_1+d_2.
$$
We claim that $d=d_1+d_2$: recall first that in \eqref{dimensionsomme}
\begin{align}
\label{coefa}
a_i:=\underset{s=d_i/2}{\Res}\big(\Gamma(s)\zeta_{\DD_i}(2s)\big)=\Gamma(d_i/2)
\underset{s=d_i/2}{\Res}\big(\zeta_{\DD_i}(2s)\big) 
=\tfrac{1}{2}\Gamma(d_i/2)\,\underset{s=d_i}{\Res}\big(\zeta_{\DD_i}(s)\big).
\end{align}
If $f(s):=\Gamma(s)\,\zeta_D(2s)$,
\begin{align*}
f(s)&=\Gamma(s)\,\Tr\big(\DD^{-2s}\big)=\Tr(\int_0^\infty e^{-t\DD^2}t^{s-1}dt\big)
=\int_0^1\Tr\big(e^{-t\DD^2}\big)\,t^{s-1}\,dt +g(s) \\
&\,\,=\int_0^1 \Tr\big(e^{-t\DD_1^2}\big)\Tr\big( e^{-t\DD_2^2}\big)\,t^{s-1}\,dt +g(s)
\end{align*}
where $g$ is a holomorphic function since the map $x\in \R\to \int_1^\infty e^{-tx^2}t^{x-1} \,dt$ is in Schwartz space.\\
Since $\Tr\big(e^{-t\DD_1^2}\big)\Tr\big( e^{-t\DD_2^2}\big) \sim_{t\to 0} a_1a_2\,t^{-(d_1+d_2)/2}$, we get that 
the function $f(s)$ has a simple pole at $s=(d_1+d_2)/2$. We conclude that $\zeta_\DD(s)$ has a simple pole at 
$s=d_1+d_2$. 
\\
Moreover, thanks to \eqref{coefa},
\begin{align*}
\tfrac{1}{2}\Gamma((d_1+d_2)/2) \Res_{s=d}\big(\zeta_{\DD}(s)\big)=\tfrac{1}{2}\Gamma(d_1/2) 
\Res_{s=d_1}\big(\zeta_{\DD_1}(s)\big)\, \tfrac{1}{2}\Gamma(d_2/2) \Res_{s=d_2}\big(\zeta_{\DD_2}(s)\big).
\end{align*}
\end{proof}

Remark that we can apply the last lemma to our Remark \ref{commutativenon} in the commutative case, but we could also use $\A_2=M_n(\C)$ acting on $\C^n$ and $\DD_2$ any selfadjoint $n\times n$-matrix. This is the way the internal degrees of freedom for fermions are implemented in particle physics, see for instance references in \cite{ConnesMarcolli}.

\newpage
\section{Spectral action}
\label{Spectral action}

\subsection{On the search for a good action functional}

We would like to obtain a good action for any spectral triple and for this it is useful to look at some examples in 
physics. 

In any physical theory based on geometry, the interest of an action functional is, by a minimization 
process, to exhibit a particular geometry, for instance, trying to distinguish between different metrics. 
This is the case in general relativity with the Einstein--Hilbert action (with its Riemannian signature).

\subsubsection{Einstein--Hilbert action}
\label{Einstein--Hilbert action}
This action is 
\begin{align}
\label{EH action}
S_{EH}(g)\vc-\int_M s_g(x) \, dvol_g(x)
\end{align}
where $s$ is the scalar curvature (chosen positive for the 
sphere). This is nothing else (up to a constant, in dimension 4) than $\ncint \Ds^{-2}$ as quoted after \eqref{a2(D2)}.
\\
This action is interesting for the following reason: Let $\mathcal{M}_1$ be the set of Riemannian metrics $g$ on 
$M$ such that $\int_M dvol_g=1$. By a theorem of Hilbert \cite{Besse}, $g\in \mathcal{M}_1$ is a critical point of 
$S_{EH}(g)$ restricted to $\mathcal{M}_1$ if and only if $(M,g)$ is an Einstein manifold (the Ricci curvature $R$ 
of $g$ is proportional by a constant to $g$: $R=c\,g$). Taking the trace, this means that $s_g=c\,\dim(M)$ and such 
manifold have a constant scalar curvature.

But in the search for invariants under diffeomorphisms, they are more quantities than the 
Einstein--Hilbert action, a trivial example being $\int_M f\big(s_g(x)\big) \,dvol_g(x)$ and they are others 
\cite{GMH}. In this desire to implement gravity in noncommutative geometry, the eigenvalues of the Dirac 
operator look as natural variables \cite{LR}.
However we are looking for observables which add up under disjoint unions of different geometries.

\subsubsection{Quantum approach and spectral action}
In a way, a spectral triple fits quantum field theory since $\DD^{-1}$ can be seen as the propagator (or line element 
$ds$) for (Euclidean) fermions and we can compute Feynman graphs with fermionic internal lines. As glimpsed in 
section \ref{oneform}, the gauge bosons are only derived objects obtained from internal fluctuations via Morita 
equivalence given by a choice of a connection which is associated to a one-form in $\Omega_\DD^1(\A)$. 
Thus, the guiding principle followed by Connes and Chamseddine is to use a theory which is pure gravity 
with a functional action based on the spectral triple, namely which depends on the spectrum of $\DD$ \cite{CC}. 
They proposed the following

\begin{definition}
The spectral action of a spectral triple $(\A,\H,\DD)$ is defined by
\begin{align*}
\SS(\DD,f,\Lambda)&:=\Tr \big(f(\DD^2/\Lambda^2) \big)
\end{align*}
where $\Lambda\in \R^+$ plays the role of a cut-off and $f$ is any positive function (such that $f(\DD^2/\Lambda^2)$ 
is a trace-class operator). 
\end{definition}
\begin{remark}
We can also define $\SS(\DD,f,\Lambda)=\Tr \big(f(\DD/\Lambda) \big)$ when $f$ is positive and even. 
With this second definition, $S(\DD,g,\Lambda) = \Tr \big(f(\DD^2/\Lambda^2)\big)$ with $g(x)\vc f(x^2)$.
\end{remark}

For $f$, one can think of the characteristic function of $[-1,1]$, thus $f(\DD/\Lambda)$ is 
nothing else but the number of eigenvalues of $\DD$ within $[-\Lambda,\Lambda]$. 

When this action has an asymptotic series in $\Lambda \to \infty$, we deal with an effective theory.
Naturally, $\DD$ has to be replaced by $\DD_A$ which is a just a decoration. To this bosonic part of the action, 
one adds a fermionic term $\tfrac{1}{2}\langle J \psi,\DD \psi\rangle$ for $\psi\in \H$ to get a full action. 
In the standard model of particle physics, this latter corresponds  to the 
integration of the Lagrangian part for the coupling between gauge bosons and Higgs bosons with fermions. 
Actually, the finite dimension part of the noncommutative standard model is of $KO$-dimension 6, thus 
$\langle \psi,\DD \psi\rangle$ has to be replaced by $\tfrac 12 \langle J \psi,\DD \psi\rangle$ for 
$\psi =\chi \psi \in \H$, see \cite{ConnesMarcolli}.

\subsubsection{Yang--Mills action}
\label{YMA}

This action plays an important role in physics so it is natural to consider it in the 
noncommutative framework. Recall first the classical situation: let G be a compact Lie group with its Lie algebra 
$\mathfrak{g}$ and let $A\in \Omega^1(M,\mathfrak{g})$ be a connection. If 
$F \vc da+ \tfrac{1}{2}[A,A] \in \Omega^2(M,\mathfrak{g})$ is the curvature (or field strength) of $A$, then the 
Yang-Mills action is $S_{YM}(A)=\int_M \tr(F \wedge \star F)\,dvol_g$.  In the abelian 
case $G=U(1)$, it is the Maxwell action and its quantum version is the quantum electrodynamics (QED) since the 
un-gauged $U(1)$ of electric charge conservation can be gauged and its gauging produces electromagnetism 
\cite{Schucker}. It is conformally invariant when the dimension of $M$ is $d=4$. 

The study of its minima and its critical values can also been made for a spectral triple $(\A,\H,\DD)$ of dimension 
$d$ \cite{Connes88,Book}: let $A \in \Omega_\DD^1(\A)$ and curvature $\theta=dA+A^2$; then it is natural 
to consider 
$$
I(A)\vc \Tr_{Dix}(\theta^2 \DDD^{-d})
$$
since it coincides (up to a constant) with the previous Yang-Mills 
action in the commutative case: if $P=\theta^2 \vert \DD \vert^{-d}$, then Theorems \ref{Wresresult} and \ref{TrDix} 
give the claim since for the principal symbol, $\tr\big(\sigma^P(x,\xi)\big)=c\,\tr(F \wedge \star F)(x)$.

There is nevertheless a problem with the definition of $dA$: if $A=\sum_j \pi(a_j)[\DD,\pi(b_j)]$,
then $dA=\sum_j[\DD,\pi(a_j)][\DD,\pi(b_j)]$ can be non-zero while $A=0$. This ambiguity means that, to get a 
graded differential algebra $\Omega_\DD^*(\A)$, one must divide by a junk, for instance 
$\Omega_\DD^2\simeq \pi(\Omega^2/\pi \big( \delta (\text{Ker}(\pi) \cap\Omega^1) \big)$ where  
$\Omega^{k}(\A)$ is the set of universal $k$-forms over $\A$ given by the set of $a_0\delta a_1\cdots \delta a_k$ 
(before representation on $\H$: $\pi(a_0\delta a_1\cdots \delta a_k)\vc a_0[\DD,a_1]\cdots[\DD,a_k]$). 
\\
Let $\H_k$ be the Hilbert space completion of $\pi(\Omega^k(\A))$ with the scalar product defined by
$\langle A_1,A_2 \rangle_k \vc \Tr_{Dix}(A_2^*A_1 \DDD^{-d})$ for $A_j \in \pi(\Omega^k(\A))$.
\\The Yang--Mills action on $\Omega^1(\A)$ is 
\begin{align}
\label{YMaction}
S_{YM}(V) \vc \langle \delta V+V^2, \delta V+V^2 \rangle.
\end{align}
It is positive, quartic and gauge invariant under $V \to \pi(u)V\pi(u^*)+\pi(u)[\DD,\pi(u^*)]$ when $u\in \U(\A)$. 
Moreover, 
$$
S_{YM}(V)=\inf \set{I(\omega) \, \vert \, \omega \in \Omega^1(\A), \,\pi(\omega)=V}
$$
since the above ambiguity disappears when taking the infimum.

This Yang--Mills action can be extended to the equivalent of Hermitean vector bundles on $M$, namely finitely 
projective modules over $\A$. 

The spectral action is more conceptual than the Yang--Mills action since it gives no fundamental role to the 
distinction between gravity and matter in the artificial decomposition $\DD_A=\DD+A$. For instance, for the minimally 
coupled standard model, the Yang--Mills action for the vector potential is part of the spectral action, as far as the 
Einstein--Hilbert action for the Riemannian metric \cite{CC2}. 

As quoted in \cite{CC4}, the spectral action has conceptual advantages:

- Simplicity: when $f$ is a cutoff function, the spectral action is just the counting function.

- Positivity: when $f$ is positive (which is the case for a cutoff function), the action 
$\Tr \big( f(\DD/\Lambda)\big) \geq 0$ has the correct sign for a Euclidean action: the positivity of the function 
$f$ will insure that the actions for gravity, Yang-Mills, Higgs couplings are all positive and the Higgs mass term 
is negative.

- Invariance:  the spectral action has a much stronger invariance group than the usual diffeomorphism group as for 
the gravitational action; this is the unitary group of the Hilbert space $\H$.

However, this action is not local. It only becomes so when it is replaced by the asymptotic expansion:

\subsection{\texorpdfstring{Asymptotic expansion for $\Lambda \to \infty$}{Asymptotic expansion for Lambda 
going to infinity}}
\label{asymptoticspectralaction}

The heat kernel method already used in previous sections will give a control of spectral action 
$S(\DD,f,\Lambda)$ when $\Lambda$ goes to infinity.

\begin{theorem}
Let $(\A,\H,\DD)$ be a spectral triple with a simple dimension spectrum $Sd$. 
\\
We assume that 
\begin{align}
\label{hyp}
\Tr\big(e^{-t\DD^2}\big)  \, \underset{t \,\downarrow \,0}{\sim}  \,\sum_{\a\in Sd} a_\a\,t^\a \quad \text{with }a_\a\neq 0.
\end{align}
Then, for the zeta function $\zeta_\DD$ defined in \eqref{zetaPD}
\begin{align}
\label{calculdea}
a_\a=\tfrac{1}{2}\Res_{s=-2\a}\big(\Gamma(s/2) \zeta_\DD(s)\big).
\end{align}

(i) If $\a<0$,  $\zeta_\DD$ has a pole at $-2\a$ with 
$a_\a=\tfrac{1}{2}\Gamma(-\a) \underset{s=-2\a}{\Res} \, \zeta_\DD(s)$.

(ii) For $\a=0$, we get $a_0=\zeta_\DD(0)+\dim \,\Ker \,\DD$.

(iii) If $\a>0$, $a_\a=\zeta(-2\a)\underset{s=-\a}{\Res} \, \Gamma(s)$.

(iv) The spectral action has the asymptotic expansion over the positive part $Sd^+$of $Sd$:
\begin{align}
\label{asympspectral}
\Tr\big(f(\DD/\Lambda)\big)  \, \underset{\Lambda \,\to+\infty}{\sim}  \, 
\sum_{\beta \in Sd^+} f_\beta\,\Lambda^\beta \ncint \vert \DD \vert^{-\beta}+ f(0)\,\zeta_\DD(0)+ \cdots
\end{align}
where the dependence of the even function $f$ is $f_\beta:=\int_0^\infty f(x)\,x^{\beta-1}\,dx$ and $\cdots$ involves 
the full Taylor expansion of $f\,$at 0. 
\end{theorem}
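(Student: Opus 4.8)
The whole statement is an exercise in the Mellin transform together with the heat expansion hypothesis \eqref{hyp}, so the plan is to establish \eqref{calculdea} first and read off (i)--(iii) as special cases, then derive (iv). The starting point is the identity
\[
\Gamma(s/2)\,\zeta_\DD(s)=\Gamma(s/2)\,\Tr\big(\vert\DD\vert^{-s}\big)=\int_0^\infty \Tr\big(e^{-t\DD^2}\big)\,t^{s/2-1}\,dt,
\]
valid for $\Re(s)$ large, which I would split as $\int_0^1+\int_1^\infty$. The integral over $[1,\infty)$ is entire in $s$ since $\Tr(e^{-t\DD^2})$ decays rapidly (here one must be slightly careful when $\DD$ is not invertible: $\Tr(e^{-t\DD^2})\to\dim\Ker\DD$ as $t\to\infty$, so strictly one writes $\Tr(e^{-t\DD^2})=\dim\Ker\DD+O(e^{-ct})$ and the constant piece contributes $\dim\Ker\DD\int_1^\infty t^{s/2-1}dt=-\tfrac{2}{s}\dim\Ker\DD$, which is where the $\dim\Ker\DD$ in (ii) will come from). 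For the integral over $[0,1]$ I substitute the asymptotic expansion \eqref{hyp}: each term $a_\a t^\a$ contributes $a_\a\int_0^1 t^{\a+s/2-1}\,dt=\tfrac{2a_\a}{s+2\a}$, a simple pole at $s=-2\a$ with residue $2a_\a$; the remainder after subtracting finitely many terms is holomorphic in a half-plane that grows as one subtracts more terms, so $\Gamma(s/2)\zeta_\DD(s)$ is meromorphic with simple poles exactly at $s=-2\a$, $\a\in Sd$, and $\Res_{s=-2\a}\big(\Gamma(s/2)\zeta_\DD(s)\big)=2a_\a$. This is precisely \eqref{calculdea}.

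From \eqref{calculdea} the three cases are bookkeeping on the poles of $\Gamma(s/2)$. For $\a<0$, $s=-2\a>0$ is not a pole of $\Gamma(s/2)$, so $\Gamma(s/2)$ is regular there and $a_\a=\tfrac12\Gamma(-\a)\Res_{s=-2\a}\zeta_\DD(s)$, giving (i); in particular $\zeta_\DD$ has a genuine pole at $-2\a$. For $\a=0$, $s=0$ is a simple pole of $\Gamma(s/2)$ with $\Gamma(s/2)\sim 2/s$, so $\Res_{s=0}\big(\Gamma(s/2)\zeta_\DD(s)\big)=2\,\zeta_\DD(0)$ provided $\zeta_\DD$ is regular at $0$ (which holds under simple dimension spectrum; alternatively one computes the Laurent expansion and picks the finite part), and combining with the $\dim\Ker\DD$ contribution isolated above yields $a_0=\zeta_\DD(0)+\dim\Ker\DD$, i.e. (ii). For $\a>0$, $s=-2\a$ is a (possibly) pole of $\Gamma(s/2)$ while $\zeta_\DD$ is regular there (being a Dirichlet series in positive powers, it is entire away from $Sd$ and in fact its values at negative integers are the relevant ones), so the residue of the product comes entirely from $\Gamma$: $a_\a=\tfrac12\,\zeta_\DD(-2\a)\,\Res_{s=-2\a}\Gamma(s/2)=\zeta_\DD(-2\a)\,\Res_{s=-\a}\Gamma(s)$ after the obvious change of variable, which is (iii). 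I would double-check the factor of $2$ bookkeeping between $\Gamma(s/2)$ and $\Gamma(s)$ here, as that is the most error-prone point.

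For (iv) I would use the Mellin representation of $f$ the other way around. Write $f(x^2)=g(x)$ and, assuming $f$ decays fast enough, represent $f$ via its Mellin transform so that
\[
\Tr\big(f(\DD/\Lambda)\big)=\Tr\big(g(\vert\DD\vert/\Lambda)\big)=\tfrac{1}{2\pi i}\int_{\Re(s)=c}\tilde g(s)\,\Lambda^s\,\Tr\big(\vert\DD\vert^{-s}\big)\,ds
\]
for a suitable vertical contour with $c$ large, where $\tilde g(s)=\int_0^\infty g(x)x^{s-1}dx$. Then I push the contour to the left, picking up residues at the poles of $\Tr(\vert\DD\vert^{-s})=\zeta_\DD(s)$ (located on $Sd^+$, i.e. at $s=2\a$ for $\a<0$ by the sign conventions, equivalently at the positive elements of the dimension spectrum) and at $s=0$. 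A pole of $\zeta_\DD$ at $s=\beta\in Sd^+$ with residue $\Res_{s=\beta}\zeta_\DD(s)=\ncint\vert\DD\vert^{-\beta}$ (by definition \eqref{ncP}) yields the term $\tilde g(\beta)\Lambda^\beta\ncint\vert\DD\vert^{-\beta}$, and identifying $\tilde g(\beta)$ with $f_\beta=\int_0^\infty f(x)x^{\beta-1}dx$ via the substitution $x\mapsto x^2$ (up to the factor $\tfrac12$ that gets absorbed, a point to verify) gives the first sum in \eqref{asympspectral}. The pole at $s=0$ contributes $\tilde g(0)$ times the constant term of $\zeta_\DD$ at $0$, i.e. $f(0)\zeta_\DD(0)$; and the ``$\cdots$'' collects the contributions of the further (trivial) zeros/poles encoding the Taylor coefficients of $f$ at $0$, which arise because $\tilde g$ has poles at the nonpositive integers with residues the Taylor coefficients of $g$. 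The main obstacle is not conceptual but a matter of care: keeping track of the interchange of $\Tr$ with the contour integral (justified by trace-class estimates for $\Re(s)$ large plus the simple-dimension-spectrum hypothesis ensuring all poles encountered are simple), and nailing the normalization constants relating $\tilde g$, $f_\beta$, and the factor-of-two from $\DD^2$ versus $\vert\DD\vert$; I would treat the case $f=$ cutoff separately as a sanity check since there $\Tr(f(\DD/\Lambda))$ is literally the eigenvalue counting function and the expansion can be compared directly with Weyl's law (Theorem \ref{Weyl}).
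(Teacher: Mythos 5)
Your proof of \eqref{calculdea} and of (i)--(iii) is essentially the paper's own argument: write $\Gamma(s/2)\zeta_\DD(s)$ as the Mellin transform of the heat trace, split the integral at $t=1$, feed in the expansion \eqref{hyp} term by term to locate the simple poles at $s=-2\a$ with residues $2a_\a$, and then sort out which factor ($\Gamma(s/2)$ or $\zeta_\DD$) carries the pole in each of the three regimes; your bookkeeping of the $\dim\Ker\DD$ contribution from the $\int_1^\infty$ piece is in fact more explicit than the paper's, which only gestures at it in part (ii). Where you genuinely diverge is part (iv). The paper assumes $f(x)=g(x^2)$ with $g$ a Laplace transform, $g(x)=\int_0^\infty e^{-sx}\phi(s)\,ds$, writes $\Tr\big(g(t\DD^2)\big)=\int_0^\infty \Tr\big(e^{-st\DD^2}\big)\phi(s)\,ds$, inserts the heat expansion termwise, and converts $\int_0^\infty s^\a\phi(s)\,ds$ back into the moment $\Gamma(-\a)^{-1}\int_0^\infty g(y)y^{-\a-1}dy$, which combined with (i) and the substitution $y=x^2$ produces $f_\beta\,\Lambda^\beta\ncint|\DD|^{-\beta}$. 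You instead run an inverse-Mellin (Mellin--Barnes) representation of $f$ and shift the contour across the poles of $\zeta_\DD$, reading off each term as a residue. Both routes are correct and land on the same normalizations (your factor-of-two worries resolve exactly as you suspect, via $\tfrac12\int_0^\infty g(y)y^{\beta/2-1}dy=f_\beta$ on the paper's side and $\ncint|\DD|^{-\beta}=\Res_{s=\beta}\zeta_\DD(s)$ on yours). The trade-off: the paper's hypothesis that $f$ be a Laplace transform is restrictive --- the paper itself flags this in Section \ref{useofLaplace} --- but requires only termwise integration of an asymptotic series; your contour shift dispenses with that hypothesis but silently requires decay of $\tilde f(s)\zeta_\DD(s)$ on vertical lines to justify moving the contour, which does not follow from the simple-dimension-spectrum assumption alone and would need to be added. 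Your final sanity check against the counting function is apt, since that is precisely the non-smooth case where both sets of hypotheses fail and the expansion survives only in the Ces\`aro sense.
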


\begin{proof}
$(i)$ Since 
$\Gamma(s/2)\,\vert \DD \vert^{-s}=\int_0^\infty e^{-t\DD^2}\,t^{s/2-1}\,dt=\int_0^1 e^{-t\DD^2}\,t^{s/2-1}\,dt+f(s)$, 
where the function $f$ is holomorphic (since the map $s\to \int_1^\infty e^{-tx^2}t^{s/2-1}\,dt$ is in the Schwartz 
space), the swap of $\Tr\big(e^{-t\DD^2}\big)$ with a sum of $a_\a\,t^\a$ and 
$a_\a\int_0^1t^{\a+s/2-1}\,dt=\tfrac{2\a_a}{s+2\a}$ yields \eqref{calculdea}.

$(ii)$ The regularity of $\Gamma(s/2)^{-1}  \, \underset{s \to 0}{\sim}  \, s/2$ around zero implies that only 
the pole part at $s=0$ of $\int_0^\infty \Tr\big(e^{-t\DD^2}\big)\,t^{s/2-1}\,dt$ contributes to $\zeta_\DD(0)$. 
This contribution is $a_0\int_0^1 t^{s/2-1}\,dt=\tfrac{2a_0}{s}$.

$(iii)$ follows from  \eqref{calculdea}.

$(iv)$ Assume $f(x)=g(x^2)$ where $g$ is a Laplace transform: $g(x):=\int_0^\infty e^{-sx}\,\phi(s)\,ds$. 
We will see in Section \ref{useofLaplace} how to relax this hypothesis. \\
Since 
$g(t\DD^2)=\int_0^\infty e^{-st\DD^2} \,\phi(s)\,ds$, 
$\Tr\big(g(t\DD^2)\big) \, \underset{t \,\downarrow \,0}{\sim} \, \sum_{\a\in \Sp} \, 
a_\a\,t^\a\int_0^\infty s^\a\, \phi(s)\,ds$. When $\a<0$, $s^\a=\Gamma(-\a)^{-1}\int_0^\infty e^{-sy}\,y^{-\a-1}\,dy$ and 
$\int_0^\infty s^\a\,\phi(s)\,ds=\Gamma(-\a)^{-1}\int_0^\infty g(y)\,y^{-\a-1}\,dy$. Thus
$$
\Tr\big(g(t\DD^2)\big)  \, \underset{t \,\downarrow \,0}{\sim} \, \sum_{\a\in \Sp^-} \, 
\big[\tfrac{1}{2}\underset{s=-2\a}{\Res} \, \zeta_\DD(s) \int_0^\infty g(y)\,y^{-\a-1} \,dy\big]\,t^\a.
$$
Finally \eqref{asympspectral} follows from $(i),\,(ii)$ and $\tfrac{1}{2}\int_0^\infty g(y)\,y^{\beta/2-1}\,dy=\int_0^\infty f(x)\,x^{\beta-1}\,dx$.
\end{proof}
It can be useful to make a connection with \eqref{traceasympt} of Section \ref{Wodzicki residue and heat expansion}:

\begin{corollary}
\label{asymptidentic}
Assume that the spectral triple $(\A,\H,\DD)$ has dimension $d$. If 
\begin{align}
\label{Traceasympto}
{\rm Tr} \big( e^{-t \, \DD^2} \big) \, \underset{t \,\downarrow \,0}{\sim}  \,
\sum_{k\in \set{0,\cdots,d}} t^{(k-d)/2} \,{a}_k (\DD^{2})+\cdots, 
\end{align}
then
\begin{align*}
\SS(\DD,f,\Lambda)\underset{\Lambda \to \infty}{\sim}  \, 
 \sum_{k \in \set{1,\cdots ,d}} \,f_k \, \Lambda^k\,a_{d-k}(\DD^2) +f(0)\,a_d(\DD^2)+\cdots
\end{align*}
with $f_k:=\tfrac{1}{\Ga(k/2)}\int_0^\infty f(s)s^{k/2-1}ds$.\\
Moreover,
\begin{align}
& a_k(\DD^2)=\tfrac 12\,\Gamma(\tfrac{d-k}{2})\,\ncint \vert\DD \vert^{-d+k} \text{ for }k=0,\cdots,d-1,\label{nlccoeff}\\
& a_d(\DD^2)= \dim\, \Ker \DD+\zeta_{\DD^2}(0). \nonumber
\end{align}
\end{corollary}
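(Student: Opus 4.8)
The plan is to read this corollary off the theorem just proved, by making the substitution $\alpha=(k-d)/2$ and then performing the Mellin/Laplace change of variables that relates the cutoff $\DD^2/\Lambda^2$ of $\SS$ to the cutoff $\DD/\Lambda$ used in part $(iv)$. First I would observe that \eqref{Traceasympto} is exactly the hypothesis \eqref{hyp}: the part of the dimension spectrum $Sd$ governing the negative- and zero-order terms is indexed by $\alpha_k:=(k-d)/2$, $k\in\{0,\dots,d\}$, and under this identification the coefficient $a_k(\DD^2)$ of \eqref{Traceasympto} is the coefficient $a_{\alpha_k}$ of \eqref{hyp}. Then \eqref{calculdea} reads $a_k(\DD^2)=\tfrac12\,\underset{s=d-k}{\Res}\big(\Gamma(s/2)\,\zeta_\DD(s)\big)$.

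For $0\le k\le d-1$ we have $\alpha_k<0$, so part $(i)$ of the theorem gives $a_k(\DD^2)=\tfrac12\,\Gamma\!\big(\tfrac{d-k}{2}\big)\,\underset{s=d-k}{\Res}\,\zeta_\DD(s)$. To turn this residue into a noncommutative integral I would use $\zeta_\DD^{|\DD|^{-(d-k)}}(s)=\Tr\big(|\DD|^{-(d-k)}|\DD|^{-s}\big)=\zeta_\DD\big(s+(d-k)\big)$, so that $\ncint|\DD|^{-(d-k)}=\underset{s=0}{\Res}\,\zeta_\DD\big(s+(d-k)\big)=\underset{s=d-k}{\Res}\,\zeta_\DD(s)$; this is precisely \eqref{nlccoeff}. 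For $k=d$ we have $\alpha_d=0$, and part $(ii)$ gives $a_d(\DD^2)=\zeta_\DD(0)+\dim\Ker\DD$; since $\zeta_{\DD^2}(s)=\Tr(\DD^{-2s})=\zeta_\DD(2s)$ one has $\zeta_{\DD^2}(0)=\zeta_\DD(0)$, which is the stated formula.

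For the expansion of $\SS(\DD,f,\Lambda)=\Tr\big(f(\DD^2/\Lambda^2)\big)$ I would argue as in the proof of part $(iv)$: assuming first that $f$ is a Laplace transform $f(x)=\int_0^\infty e^{-sx}\phi(s)\,ds$ (the general case reducing to this, or equivalently applying part $(iv)$ to the even function $g(x):=f(x^2)$), one has $\Tr\big(f(\DD^2/\Lambda^2)\big)=\int_0^\infty \Tr\big(e^{-(s/\Lambda^2)\DD^2}\big)\,\phi(s)\,ds$. Substituting \eqref{Traceasympto} with $t=s/\Lambda^2$ and interchanging sum and integral yields
\begin{align*}
\SS(\DD,f,\Lambda)\underset{\Lambda\to\infty}{\sim}\sum_{k=0}^{d}\Lambda^{d-k}\,a_k(\DD^2)\int_0^\infty s^{(k-d)/2}\,\phi(s)\,ds+\cdots .
\end{align*}
For $k<d$, writing $\mu:=(d-k)/2>0$ and using $s^{-\mu}=\Gamma(\mu)^{-1}\int_0^\infty e^{-sy}y^{\mu-1}\,dy$ together with Fubini gives $\int_0^\infty s^{-\mu}\phi(s)\,ds=\Gamma\!\big(\tfrac{d-k}{2}\big)^{-1}\int_0^\infty f(y)\,y^{(d-k)/2-1}\,dy$; relabelling $k\mapsto d-k$, the coefficient of $\Lambda^{k}$ for $k\in\{1,\dots,d\}$ equals $a_{d-k}(\DD^2)\,\Gamma(k/2)^{-1}\int_0^\infty f(y)\,y^{k/2-1}\,dy=f_k\,a_{d-k}(\DD^2)$. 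For $k=d$ the integral is $\int_0^\infty\phi(s)\,ds=\lim_{x\to0^+}f(x)=f(0)$, contributing $f(0)\,a_d(\DD^2)$. This is the asserted expansion, with $f_k=\tfrac{1}{\Gamma(k/2)}\int_0^\infty f(s)\,s^{k/2-1}\,ds$.

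The only step requiring care is purely bookkeeping: tracking the half-integer exponents, the $\Gamma$-factors produced by the Mellin representation of $s^{-\mu}$, and the rescaling $t\leftrightarrow s/\Lambda^2$ needed to match the $\DD^2/\Lambda^2$ convention of $\SS$ with the $\DD/\Lambda$ convention of part $(iv)$; there is no genuine analytic difficulty beyond what the preceding theorem already supplies.
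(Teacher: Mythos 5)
Your proof is correct and follows essentially the same route as the paper: both rewrite the heat-kernel hypothesis in the notation of the preceding theorem, rerun the Laplace-transform/Mellin argument of part $(iv)$ with the $\DD^2/\Lambda^2$ normalization to produce the $\Gamma(k/2)^{-1}$ factors in $f_k$, and convert the residues from parts $(i)$--$(ii)$ into noncommutative integrals by shifting the zeta function. The only (welcome) addition is your explicit justification of $a_d(\DD^2)=\dim\Ker\DD+\zeta_{\DD^2}(0)$ via $\zeta_{\DD^2}(0)=\zeta_\DD(0)$, which the paper leaves implicit.
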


\begin{proof}
We rewrite the hypothesis on $\Tr \big( e^{-t \, \DD^2} \big)$ as
\begin{align*}
{\rm Tr} \big( e^{-t \, \DD^2} \big) \, \underset{t \,\downarrow \,0}{\sim}  \,
\sum_{\a\in\set{-d/2,\cdots ,-1/2}} A_\a \, t^\a+A_0=\sum_{k\in \set{1,\cdots,d}} t^{(k-d)/2} \,{a}_k (\DD^{2}) +a_d(\DD^2)
\end{align*}
with $a_k(\DD^2):=A_{(k-d)/2}$.

For $\a<0$, we repeat the above proof:
\begin{align*}
\SS(\DD,f,\Lambda)&  \underset{\Lambda\to\infty}{\sim}  \, 
\sum_{\a \in \set{-d/2,\cdots ,-1/2}} A_\a \, \Lambda^{-2\a} \tfrac{1}{\Ga(-\a)} \int_0^\infty f(s)s^{-\a-1}ds +A_0\\
&=  \sum_{l \in \set{1,\cdots ,d}} A_{(l-d)/2} \, \Lambda^{d-l} \tfrac{1}{\Ga((d-l)/2)} \int_0^\infty f(s)s^{(d-l)/2-1}ds 
+ A_0f(0)\\
&=  \sum_{l \in \set{1,\cdots ,d}} a_l\, \Lambda^{d-l} \tfrac{1}{\Ga((d-l)/2)} \int_0^\infty f(s)s^{(d-l)/2-1}ds
+a_df(0)\\
&=  \sum_{k \in \set{1,\cdots ,d}} a_{d-k}\, \Lambda^{k} \tfrac{1}{\Ga(k/2)} \int_0^\infty f(s)s^{k/2-1}ds 
+a_df(0).
\end{align*}
Again, for $\a<0$,
\begin{align*}
A_\a&=\tfrac 12 \Gamma(-\a)\,\underset{s=-2\a}{\Res} \, \Tr\big( \vert \DD \vert^{-s}\big)
=\tfrac 12 \Gamma(-\a)\,\underset{s=0}{\Res} \, \Tr\big( \vert \DD \vert^{-(s-2\a)} \big)\\
&=\tfrac 12 \Gamma(-\a)\,\underset{s=0}{\Res} \, \Tr\big(\vert \DD \vert^{2\a}\vert \DD \vert^{-s}\big)
=\tfrac 12 \Gamma(-\a) \ncint \vert \DD \vert^{2\a}.
\end{align*}
Thus, for $\a=\tfrac{k-d}{2}<0$, $k=0,\cdots,d-1$,
\begin{align*}
a_k(\DD^2)=A_{(k-d)/2}=\tfrac 12  \Gamma(\tfrac{d-k}{2})\ncint\vert \DD \vert^{-d+k}.
\end{align*}
\end{proof}

The asymptotics \eqref{asympspectral} uses the value of $\zeta_\DD(0)$ in the constant term $\Lambda^0$, so it is 
fundamental to look at its variation under a gauge fluctuation $\DD \to \DD+A$ as we saw in Theorem \ref{difference}.

\subsection{Remark on the use of Laplace transform}
\label{useofLaplace}

The spectral action asymptotic behavior 
\begin{align}
\label{asympto}
S(\DD,f, \Lambda) \, \underset{\Lambda \,\to+\infty}{\sim}  \,  \sum_{n=0}^\infty c_n \,\Lambda^{d-n}\,a_n(\DD^2)
\end{align}
has been proved for a smooth function $f$ which is a Laplace transform for an arbitrary 
spectral triple (with simple dimension spectrum) satisfying \eqref{hyp}.
However, this hypothesis is too restrictive since it does not cover the heat kernel case where $f(x)=e^{-x}$.
\\
When the triple is commutative and $\DD^2$ is a generalized Laplacian on sections of a vector bundle over a 
manifold of dimension 4, Estrada--Gracia-Bond\'{\i}a--V\'arilly proved in \cite{EGBV} that previous asymptotics is
\begin{align*}
\Tr \big( f(\DD^2/\Lambda^2) \big) &\sim \tfrac{1}{(4\pi)^2} \Big[ \text{rk(E)}\int_0^\infty  xf(x)\, dx 
\, \Lambda^4 + b_2(\DD^2) \int_0^\infty f(x)\,dx\, \Lambda^2  \\
&\hspace{2cm} +\sum_{m=0}^\infty (-1)^m\, f^{(m)}(0)\,b_{2m+4} (\DD^2) \, \Lambda^{-2m} \Big],\quad 
\Lambda \to \infty
\end{align*}
where $(-1)^m b_{2m+4}(\DD^2) = \tfrac{(4\pi)^2}{m!}  \,\mu_m(\DD^2)$ are suitably normalized, integrated 
moment terms of the spectral density of $\DD^2$.

The main point is that {\it this asymptotics makes sense in the Ces\`aro sense }(see \cite{EGBV} for definition) for 
$f$ in $\mathcal{K'}(\R)$, which is the dual of $\mathcal{K}(\R)$. This latter is the space of smooth functions 
$\phi$ such that for some $a\in\R$, $\phi^{(k)}(x)=\mathcal{O}( \vert x\vert^{a-k} )$ as $\vert x \vert \to \infty$, 
for each $k\in \N$. In particular, the Schwartz functions are in $\mathcal{K}(\R)$ (and even dense). 

Of course, the counting function is not smooth but is in $\mathcal{K'}(\R)$, so such behavior \eqref{asympto} 
is wrong beyond the first term, but is correct in the Ces\`aro sense. Actually there are more derivatives of $f$ at $0$ 
as explained on examples in \cite[p. 243]{EGBV}. See also Section \ref{The commutative case}.

\subsection{About convergence and divergence, local and global aspects of the asymptotic expansion}
\label{aboutconvergence}

The asymptotic expansion series \eqref{Traceasympto} of the spectral action may or may not converge. It is known 
that each function $g(\Lambda^{-1})$ defines at most a unique expansion series when $\Lambda \to \infty$ 
but the converse is not true since several functions have the same asymptotic series. 
We give here examples of convergent and divergent series of this kind.

When $M$ is the torus $\T^d$ as in Example \ref{Dixtrace} with $\Delta=\delta^{\mu\nu} \partial_\mu\partial_\nu$, 
\begin{align*}
\Tr(e^{t\Delta})=\tfrac{(4\pi)^{-d/2} \,\Vol(\T^d)}{t^{d/2}} +\mathcal{O}(t^{-d/2}\,e^{-1/4t}),
\end{align*}
thus the asymptotic series $\Tr(e^{t\Delta}) \simeq \frac{(4\pi)^{-d/2} \,\Vol(T^d)}{t^{d/2}}$, $t\to 0$, has only one term. 

In the opposite direction, let now $M$ be the unit four-sphere $\mathbb{S}^4$ and $\Dslash$ be the usual Dirac 
operator. By Propostion \ref{spectrcomm}, equation \eqref{hyp} yields (see \cite{CC5}):
\begin{align*}
& \Tr(e^{-t\Dslash^2})=  \tfrac{1}{t^2} \big(  \tfrac{2}{3} + \tfrac{2}{3} \,t 
+ \sum_{k=0}^n a_k \,t^{k+2} +\mathcal{O}(t^{n+3}) \big) , \\
& a_k:=\tfrac{(-1)^k \,4}{3\, k!} \big( \tfrac{B_{2k+2}}{2k+2}-\tfrac{B_{2k+4}}{2k+4} \big)
\end{align*}
with Bernoulli numbers $B_{2k}$. 
Thus $ t^2\Tr(e^{-t\Dslash^2})\simeq \frac{2}{3} + \tfrac{2}{3} \,t + \sum_{k=0}^\infty a_k \,t^{k+2}$ when 
$t \to 0$ and this series is a not convergent but only asymptotic: \\
$a_k > \tfrac{4}{3\,k!}\tfrac{\vert B_{2k+4}\vert }{2k+4}  > 0$ and $\vert B_{2k+4} \vert 
=2\, \tfrac{(2k+4)!}{(2\pi)^{2k+4}} \, \zeta(2k+4) \simeq 4 \sqrt{\pi (k+2)} 
\left(\tfrac{k+2}{ \pi e} \right)^{2k+4}  \to \infty \text{ if }k \to \infty$.

More generally, in the commutative case considered above and when $\DD$ is a differential operator---like a 
Dirac operator, the coefficients of the asymptotic series of $\Tr(e^{-t\DD^2})$ are locally defined by the symbol of 
$\DD^2$ at point $x\in M$ but this is not true in general: in \cite{GilkeyGrubb} is given a positive elliptic 
pseudodifferential such that non-locally computable coefficients especially appear in \eqref{Traceasympto} 
when $2k> d$. Nevertheless, all coefficients are local for $2k \leq d$.

Recall that a locally computable quantity is the integral on the manifold of a local frame-independent smooth 
function of one variable, depending only on a finite number of derivatives of a finite number of terms in the 
asymptotic expansion of the total symbol of $\DD^2$. 
For instance, some nonlocal information contained in the ultraviolet asymptotics can be recovered if one looks at 
the (integral) kernel of $e^{-t\sqrt{-\Delta}}$: in $\T^1$, with $\Vol(\T^1)=2\pi$, we get \cite{Fulling}
$$
\Tr(e^{-t\sqrt{-\Delta}})=2\sum_{n=1}^\infty e^{-tn} +1=\tfrac{\sinh (t)}{\cosh(t)-1}=\coth(\tfrac{t}{2})=
\tfrac{2}{t}\,\sum_{k=0}^\infty \tfrac{B_{2k}}{(2k)!} \,t^{2k}=\tfrac{2}{t}[1+\tfrac{t^2}{12}-\tfrac{t^4}{720} +\mathcal{O}(t^6)]
$$
and the series converges when $t<2\pi$, since $\frac{B_{2k}}{(2k)!}=(-1)^{k+1} \,\tfrac{2\,\zeta(2k)}{(2\pi)^{2k}}$, thus 
$\tfrac{\vert B_{2k} \vert}{(2k)!} \simeq \tfrac{2}{(2\pi)^{2k}}$ when $k\to \infty$.

Thus we have an example where $t \to \infty$ cannot be used with the asymptotic series.

Thus the spectral action of Corollary \ref{asymptidentic} precisely encodes these local and 
nonlocal behavior which appear or not
in its asymptotics for different $f$. The coefficient of the action for the positive part (at least) of the dimension 
spectrum correspond to renormalized traces, namely the noncommutative integrals of \eqref{nlccoeff}. 
In conclusion, the asymptotics \eqref{asympto} of spectral action may or may not have nonlocal coefficients.

For the flat torus $\T^d$, the difference between $\Tr(e^{t\Delta})$ and its asymptotic series is an term 
which is related to periodic orbits of the geodesic flow on $\T^d$. Similarly, the counting function $N(\lambda)$ 
(number of eigenvalues including multiplicities of $\Delta$ less than $\lambda$) obeys Weyl's law: 
$N(\lambda)= \frac{(4\pi)^{-d/2} \,\Vol(\T^d)}{\Gamma(d/2 +1)}\, \lambda^{d/2} + o(\lambda^{d/2})$ --- see \cite{Arendt} 
for a nice historical review on these fundamental points.
The relationship between the asymptotic expansion of the heat kernel and the formal expansion of the spectral 
measure is clear: the small-$t$ asymptotics of heat kernel is determined by the large-$\lambda$ asymptotics of the 
density of eigenvalues (and eigenvectors). However, the latter is defined modulo some average: 
Ces\`aro sense as reminded in Section \ref{useofLaplace}, or Riesz mean of the measure which washes out 
ultraviolet oscillations, but also gives informations on intermediate values of $\lambda$ \cite{Fulling}.

In \cite{CC4,MPT1} are given examples of spectral actions on (compact) commutative geometries of dimension 4 
whose asymptotics have only two terms. In the quantum group $SU_q(2)$, the spectral action 
itself has only 4 terms, independently of the choice of function $f$.

See \cite{ILV1} for more examples.

\subsection{About the physical meaning of the spectral action via its asymptotics}

As explained before, the spectral action is non-local. Its localization does not cover all situations: consider for instance 
the commutative geometry of a spin manifold $M$ (see Section \ref{Commutative geometry}) of dimension 4. One 
adds a gauge connection $A\in \Gamma^\infty \big(M,End(S)\big)$ to the Dirac operator $\Ds$ such that 
$\DD=i\ga^\mu(\partial_\mu+A_\mu)$, thus with a field strength 
$F_{\mu \nu}=\partial_\mu A _\nu-\partial_\nu A_\mu+ [A_\mu,A_\nu]$. We can apply \eqref{Lapl} with $P=\DD^2$ 
and find the coefficients $a_i(1,P)$ of \eqref{coefficientsa} with $i=0,2,4$. The expansion \eqref{asympto} 
corresponds to a weak field expansion. 

Moreover a commutative geometry times a finite one where the finite one is algebra is a sum of matrices 
(like in Remark \ref{commutativenon}) has been deeply and intensively investigated for the noncommutative 
approach to  standard model of particle physics, see \cite{CCM,ConnesMarcolli}. This approach offers a 
lot of interesting perspectives, for instance, the possibility to compute the Higgs representations and mass 
(for each noncommutative model) is 
particularly instructive \cite{CC,CC6,CC9,IKS,LMM,JS,JS1,MGV}. Of course, since the first term in
 \eqref{asympspectral} is a cosmological term, one may be worried by its 
large value (for instance in the noncommutative standard model where the cutoff is, roughly speaking 
the Planck scale). At the classical level, one can work with unimodular gravity where the metric (so the 
Dirac operator) $\DD$ varies within the set $\M_1$ of metrics which preserve the volume as in 
Section \ref{Einstein--Hilbert action}. Thus it remains only (!) to control the inflaton: see \cite{CC3}.

The spectral action has been computed in \cite{ILS} for the quantum group $SU_q(2)$ which is not a 
deformation of $SU(2)$ of the type considered in Section \ref{The Moyal plane} on the Moyal plane. It is quite 
peculiar since \eqref{asympspectral} has only a finite number of terms.

Due to the difficulties to deal with non-compact manifolds (see nevertheless Section \ref{The non-compact case}),
the case of spheres $\mathbb{S}^4$ or $\mathbb{S}^3\times \mathbb{S}^1$ has been investigated in 
\cite{CC4,CC5} for instance in the case of  Robertson--Walker metrics.

All the machinery of spectral geometry as been recently applied to cosmology, computing the spectral action in 
few cosmological models related to inflation, see \cite{KM, MPT1,MPT2, MP,NS,Sa}. 

Spectral triples associated to manifolds with boundary have been considered in \cite{CC7,CC8,IL1,IL2,CC8,ILV}. The 
main difficulty is precisely to put nice boundary conditions to the operator $\DD$ to still get a selfadjoint 
operator and then, to define a compatible algebra $\A$. This is probably a must to obtain a result in a  
noncommutative Hamiltonian theory in dimension 1+3.

The case of manifolds with torsion has also been studied in \cite{HPS,PS,PS1}, and even with boundary 
in \cite{ILV}. These works show that the Holst action appears in spectral actions and that torsion could 
be detected in a noncommutative world.


\newpage
\section{Residues of series and integral, holomorphic continuation, etc}
\label{Residues of series}

The aim of this section is to control the holomorphy of series of holomorphic functions. 
\\The necessity of a Diophantine condition appears quite naturally. This section has its own interest, but will 
be fully applied in the next one devoted to the noncommutative torus. The main idea is to get a condition which 
guarantee the commutation of a residue and a series. 

This section is quite technical, but with only non-difficult notions. Nevertheless, the devil is hidden into the 
details and I recommend to the reader to have a look at the proofs despite their lengths. 
\\
Reference: \cite{EILS}.
\\
Notations:
\\
In the following, the prime in ${\sum}'$ means that we omit terms with division by zero in the summand.
$B^{n}$ (resp. $S^{n-1}$) is the closed ball (resp. the sphere) of $\R^n$ with center $0$ and radius 1 and the
Lebesgue measure on $S^{n-1}$ will be noted $dS$.

For any $x=(x_1,\dots,x_n) \in \R^n$ we denote by $|x|=\sqrt{x_1^2+\dots+x_n^2}$ the Euclidean norm and 
$|x|_1 :=|x_1|+\dots +|x_n|$.

By  $f(x,y) \ll_{y} g(x)$ uniformly in $x$, we mean that 
$\vert f(x,y)\vert \leq a(y) \, \vert g(x) \vert$ for all $x$ and $y$ for some $a(y)>0$.

\subsection{Residues of series and integral}
In order to be able to compute later the residues of certain series, we prove here the following

\begin{theorem}
\label{res-int} 
Let $P(X)=\sum_{j=0}^{d} P_j(X) \in \C[X_1,\cdots,X_n]$ be a polynomial function where $P_j$
is the homogeneous part of $P$ of degree $j$. The function
$$
\zeta^P(s):={\sum}'_{k\in\Z^n} \tfrac{P(k)}{|k|^s}, \,\,\, s \in \C
$$
has a meromorphic continuation to the whole complex plane
$\C$. \par Moreover $\zeta^P(s)$ is not entire if and only if
$\mathcal{P}_P:= \{j \,\,\vert\,\,\int_{u\in S^{n-1}} P_j(u)\, dS(u)\neq 0 \}\neq \varnothing$.
In that case, $\zeta^P$ has only simple poles at the points $j+n$,
$j\in \mathcal{P}_{P}$, with
$$
\underset{s=j+n}{\Res} \, \zeta^P(s) = \int_{u\in S^{n-1}} P_j(u)\, dS(u).
$$
\end{theorem}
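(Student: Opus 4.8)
The plan is to reduce the study of $\zeta^P(s)$ to that of the basic building blocks $\zeta^{P_j}(s) = {\sum}'_{k\in\Z^n} P_j(k)\,|k|^{-s}$ where $P_j$ is homogeneous of degree $j$, since $\zeta^P = \sum_{j=0}^d \zeta^{P_j}$ and a finite sum of meromorphic functions is meromorphic with poles (and residues) the sum of the individual ones. For a homogeneous $P_j$ of degree $j$, one has $P_j(k)\,|k|^{-s} = |k|^{j-s}\,P_j(k/|k|)$, so $\zeta^{P_j}(s)$ converges absolutely for $\Re(s) > j+n$ by comparison with $\sum |k|^{j-s}$ (the lattice point count in a ball of radius $R$ being $O(R^n)$). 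The task is then the meromorphic continuation of each $\zeta^{P_j}$ to all of $\C$ and the identification of its poles.

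First I would pass from the discrete sum to an integral via a comparison argument, which is the technical heart. The idea is to compare ${\sum}'_{k\in\Z^n} P_j(k)\,|k|^{-s}$ with the integral $\int_{|x|\geq 1} P_j(x)\,|x|^{-s}\,dx$. The latter is computed in polar coordinates: $\int_1^\infty r^{j-s}\,r^{n-1}\,dr \cdot \int_{S^{n-1}} P_j(u)\,dS(u) = \frac{1}{s-j-n}\int_{S^{n-1}} P_j(u)\,dS(u)$ for $\Re(s)>j+n$, which already exhibits a meromorphic function of $s$ on all of $\C$ with at most a simple pole at $s=j+n$ of residue exactly $\int_{S^{n-1}} P_j(u)\,dS(u)$. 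So everything comes down to showing that the difference $\Delta_j(s) := {\sum}'_{k} P_j(k)\,|k|^{-s} - \int_{|x|\geq 1} P_j(x)\,|x|^{-s}\,dx$ extends to an \emph{entire} function. The standard way is to write the difference as a sum over unit cubes $Q_k = k + [-\tfrac12,\tfrac12]^n$ of $\int_{Q_k}\bigl(P_j(k)|k|^{-s} - P_j(x)|x|^{-s}\bigr)\,dx$ (handling the finitely many cubes near the origin separately, which only contributes an entire function anyway), and to estimate the integrand using the mean value theorem: the gradient of $x\mapsto P_j(x)|x|^{-s}$ has size $\ll (1+|s|)\,|k|^{j-\Re(s)-1}$ on $Q_k$ for $|k|$ large. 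Hence each term is $O\bigl((1+|s|)\,|k|^{j-\Re(s)-1}\bigr)$ and the series $\sum_k |k|^{j-\Re(s)-1}$ converges for $\Re(s) > j+n-1$, i.e.\ on a half-plane strictly larger than the original one. Iterating this device — replacing the first-order Taylor comparison by higher-order Taylor expansions of $P_j(x)|x|^{-s}$ around $x=k$, whose successive integrals over $Q_k$ of the lower-order terms vanish by symmetry (odd moments of the cube) or are again explicitly meromorphic — one gains one unit of half-plane at each stage, so $\Delta_j(s)$ continues to all of $\C$ and is entire. (Alternatively one can invoke a Mellin–transform/theta–function argument or the known meromorphic continuation of Epstein-type zeta functions, but the cube-comparison is self-contained.)

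Having established that each $\zeta^{P_j}(s)$ is meromorphic on $\C$ with at most one simple pole, located at $s=j+n$ with residue $\int_{S^{n-1}} P_j(u)\,dS(u)$, I would assemble the conclusion. Summing over $j=0,\dots,d$: $\zeta^P$ is meromorphic on $\C$; its only possible poles are among $\{j+n : 0\le j\le d\}$; and since the points $j+n$ are distinct for distinct $j$, there is no cancellation between different $j$, so $\zeta^P$ has an actual (simple) pole at $j+n$ precisely when $\int_{S^{n-1}} P_j(u)\,dS(u)\neq 0$, that is when $j\in\mathcal{P}_P$, with $\underset{s=j+n}{\Res}\,\zeta^P(s) = \int_{S^{n-1}} P_j(u)\,dS(u)$. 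In particular $\zeta^P$ is entire iff $\mathcal{P}_P = \varnothing$.

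The main obstacle is the meromorphic-continuation step for the homogeneous pieces — making the comparison between the lattice sum and the radial integral rigorous with explicit control of the $s$-dependence of the error, and organizing the iteration (or the Taylor expansion over cubes) so that the remainder genuinely becomes holomorphic on larger and larger half-planes while the explicitly subtracted terms are themselves shown to be meromorphic with no new poles. Care is also needed near the origin (the primed sum omits $k=0$, and the integral is cut at $|x|\ge 1$) but this discrepancy involves only finitely many lattice points and a bounded region, hence contributes an entire function and does not affect the pole structure. The computation of the residue itself is then the easy, essentially one-line polar-coordinate calculation above.
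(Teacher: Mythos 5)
Your overall strategy coincides with the paper's: split $P$ into homogeneous parts, compare each lattice sum with the radial integral $\int_{|x|\ge 1}P_j(x)\,|x|^{-s}\,dx$, compute that integral in polar coordinates (your value $\tfrac{1}{s-j-n}$ is the correct sign), and reduce everything to showing that the difference $\Delta_j(s)$ between sum and integral is \emph{entire}. The paper's Lemma \ref{deltaf} proves this by applying the one-variable Euler--Maclaurin formula in its periodic-Bernoulli-function form to one coordinate at a time ($n$ iterations, with a smooth cutoff $\psi$); in that form the formula reads $\sum_k h(k)=\int h+\tfrac{(-1)^N}{(N+1)!}\int B_{N+1}\,h^{(N+1)}$ with \emph{no} intermediate terms, so the difference is directly a remainder holomorphic on $\{\Re(s)>\delta(P)+n-N\}$ for every $N$, hence entire.

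The gap is in your substitute for that lemma. Your cube--Taylor iteration produces, besides the controllable remainder, the even-moment terms $c_\a\,{\sum}'_k\,\partial^\a\bigl(P_j(x)|x|^{-s}\bigr)\big|_{x=k}$ with $|\a|\ge 2$, and these are \emph{not} pole-free individually. For instance, for $|\a|=2$ the sum over directions gives $\tfrac{1}{24}{\sum}'_k\Delta\bigl(P_j|x|^{-s}\bigr)(k)=\tfrac{1}{24}\bigl[\zeta^{\Delta P_j}(s)+s(s+2-n-2j)\,\zeta^{P_j}(s+2)\bigr]$, and each piece has a simple pole at $s=j+n-2$, with residues $\int_{S^{n-1}}\Delta P_j\,dS$ and $-j(j+n-2)\int_{S^{n-1}}P_j\,dS$ respectively. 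The difference $\Delta_j$ is entire only because these residues cancel, which rests on the non-obvious identity $\int_{S^{n-1}}\Delta P_j\,dS=j(j+n-2)\int_{S^{n-1}}P_j\,dS$ (split $\Delta_{\R^n}$ into radial and spherical parts and use $\int_{S^{n-1}}\Delta_{S^{n-1}}Y\,dS=0$); analogous cancellations are needed at $j+n-4,\dots$ among the higher moments. Your sketch asserts that the subtracted terms contribute ``no new poles'' but supplies no mechanism for this, and since that assertion is exactly the content of the continuation lemma, the proof is incomplete as written: without it you only get that $\zeta^{P_j}$ is meromorphic with possible poles at $j+n-2k$, $k\ge 0$, which is strictly weaker than the theorem. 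Either prove these cancellations, or use the Bernoulli-function form of Euler--Maclaurin as the paper does, which avoids intermediate terms altogether. The rest of your argument --- the polar-coordinate residue computation, the treatment of the origin, and the assembly over the distinct pole locations $j+n$ --- is correct.
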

\medskip

The proof of this theorem is based on the following lemmas.

\begin{lemma}\label{majPQs}
For any polynomial $P\in \C[X_1,\dots,X_n]$ of total
degree $\delta (P):=\sum_{i=1}^n deg_{X_i}P$ and any $\alpha \in
\N_0^n$, we have
$$\partial^{\alpha} \left(P(x) |x|^{-s}\right)\ll_{P, \alpha, n}
  (1+|s|)^{|\alpha|_1} |x|^{-\sigma -|\alpha|_1 +\delta (P)}$$
uniformly  in  $x \in \R^n$, $|x|\geq 1$, where
$\sigma=\Re(s)$.
\end{lemma}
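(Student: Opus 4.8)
The plan is to prove the bound by induction on $|\alpha|_1$ together with a careful bookkeeping of how differentiation affects both the polynomial degree and the power of $|x|$. First I would record the two elementary facts that drive everything: for $|x|\ge 1$ one has $\partial_{x_i}\big(|x|^{-s}\big) = -s\,x_i\,|x|^{-s-2}$, and $\partial_{x_i}$ applied to a polynomial $Q$ lowers its total degree $\delta(Q)$ by (at least) one while adding no power of $|x|$. Thus each derivative either hits the polynomial factor, decreasing $\delta$ by one and leaving the exponent of $|x|$ untouched, or it hits $|x|^{-s}$, producing a factor of $s$ (hence a factor $(1+|s|)$), multiplying by a coordinate $x_i$ (which raises the effective polynomial degree by one but is bounded by $|x|$), and lowering the exponent of $|x|$ by $2$ — net effect on the exponent $-\sigma-(\text{current }|\alpha|_1)+\delta$ is exactly a decrease by one in the sense demanded, since the extra $|x_i|\le |x|$ compensates one of the two units.

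Concretely I would argue as follows. Write, by the Leibniz rule,
\[
\partial^\alpha\big(P(x)|x|^{-s}\big) = \sum_{\beta\le\alpha}\binom{\alpha}{\beta}\,\big(\partial^\beta P\big)(x)\;\partial^{\alpha-\beta}\big(|x|^{-s}\big).
\]
For the polynomial part, $\partial^\beta P$ is a polynomial of total degree at most $\delta(P)-|\beta|_1$ (and $\le \delta(P)$ in any case), so $|(\partial^\beta P)(x)| \ll_{P,\alpha,n} |x|^{\delta(P)-|\beta|_1}$ for $|x|\ge 1$. For the radial part, an easy induction on $|\gamma|_1$ shows that $\partial^\gamma\big(|x|^{-s}\big)$ is a finite sum of terms of the form $c\,(s)(s+2)\cdots(s+2r-2)\,x^{\mu}\,|x|^{-s-2r}$ with $|\mu|_1 = 2r-|\gamma|_1\ge 0$ and $r\le |\gamma|_1$; here the number of scalar factors involving $s$ is $r\le|\gamma|_1$, so each such term is $\ll (1+|s|)^{|\gamma|_1}\,|x|^{(2r-|\gamma|_1)}\,|x|^{-\sigma-2r} = (1+|s|)^{|\gamma|_1}|x|^{-\sigma-|\gamma|_1}$ for $|x|\ge1$. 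Multiplying the two estimates with $\gamma = \alpha-\beta$ gives each summand bounded by $(1+|s|)^{|\alpha-\beta|_1}|x|^{-\sigma-|\alpha-\beta|_1+\delta(P)-|\beta|_1} \le (1+|s|)^{|\alpha|_1}|x|^{-\sigma-|\alpha|_1+\delta(P)}$, using $|\alpha-\beta|_1+|\beta|_1=|\alpha|_1$ and $1+|s|\ge 1$. Summing over the finitely many $\beta\le\alpha$ absorbs the combinatorial constants into the implied constant, which depends only on $P$, $\alpha$, $n$.

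The only real point requiring care — and what I'd flag as the main obstacle, though it is more bookkeeping than difficulty — is the precise induction establishing the structural form of $\partial^\gamma(|x|^{-s})$, namely keeping track simultaneously of the exponent $-s-2r$, the monomial degree $2r-|\gamma|_1$, and the count $r$ of $s$-dependent linear factors, and checking that the base case and inductive step are consistent (differentiating $x^\mu|x|^{-s-2r}$ either lands on $x^\mu$, decreasing $|\mu|_1$ by one and leaving $r$ fixed, or on $|x|^{-s-2r}$, increasing $|\mu|_1$ by one, increasing $r$ by one and contributing a factor $s+2r$, which is $\ll 1+|s|$). Once that lemma is in hand, the stated estimate follows immediately from Leibniz as above; nothing else is needed. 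I would present the radial computation as a short sub-lemma to keep the exposition clean, then give the two-line Leibniz combination as the proof proper.
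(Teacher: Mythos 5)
Your proof is correct and follows essentially the same route as the paper: the Leibniz rule combined with the key estimate $\partial^{\gamma}\left(|x|^{-s}\right)\ll_{\gamma,n}(1+|s|)^{|\gamma|_1}|x|^{-\sigma-|\gamma|_1}$ for $|x|\ge 1$, which the paper obtains from an explicit closed formula (indexed by $\beta+2\mu=\gamma$) equivalent to the structural induction you describe. The only cosmetic difference is that the paper first reduces to monomials $P=X^{\gamma}$ by linearity, whereas you bound $\partial^{\beta}P$ directly; both are fine.
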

\begin{proof}
By linearity, we may assume without loss of generality
that $P(X)=X^{\gamma}$ is a monomial. It is easy to
prove (for example by induction on $|\alpha|_1$) that for all $\alpha
\in \N_0^n$ and
$x \in \R^n \setminus \{0\}$:
$$
\partial^{\alpha} \left(|x|^{-s}\right)
=\alpha! \sum_{\genfrac{}{}{0pt}{2}{\beta, \mu \in \N_0^n}{\beta + 2
\mu =
\alpha }}
\genfrac(){0pt}{1}{-s/2}{|\beta|_1 +|\mu|_1}
\tfrac{(|\beta|_1+|\mu|_1)!}{\beta! ~ \mu!}
\tfrac{x^{\beta}}{|x|^{\sigma+2(|\beta|_1+|\mu|_1)}}.
$$
It follows that for all $\alpha \in \N_0^n$, we have uniformly in  $x
\in \R^n$,  $|x|\geq 1$:
\begin{equation}\label{majQs}
\partial^{\alpha} \left(|x|^{-s}\right) \ll_{\alpha,  n}
  (1+|s|)^{|\alpha|_1} |x|^{-\sigma -|\alpha|_1}\,.
\end{equation}

By Leibniz formula and (\ref{majQs}), we have uniformly in $x\in \R^n$,  $|x|\geq 1$:
\begin{align*}
\partial^{\alpha} \left(x^\gamma |x|^{-s}\right) & = \, \sum_{\beta
\leq \alpha}
\genfrac(){0pt}{1}{\alpha}{\beta} \,  \partial^{\beta} (x^\gamma)~
\partial^{\alpha
  -\beta} \left(|x|^{-s}\right) \\
& \ll_{\gamma, \alpha, n}  \sum_{\beta \leq \alpha; \beta \leq
\gamma} x^{\gamma
-\beta}~ (1+|s|)^{|\alpha|_1-|\beta|_1}~
|x|^{-\sigma-|\alpha|_1+|\beta|_1} \\
& \ll_{\gamma, \alpha, n}  (1+|s|)^{|\alpha|_1}~
|x|^{-\sigma-|\alpha|_1+|\gamma|_1}.
\end{align*}
\end{proof}

\begin{lemma}
\label{deltaf} 
Let $P\in \C[X_1,\dots,X_n]$ be a polynomial of degree $d$.
Then, the difference
$$
\Delta_P(s):={\sum}'_{k\in\Z^n}
\tfrac{P(k)}{|k|^s}-\int_{\R^n\setminus B^{n}}
\tfrac{P(x)}{|x|^s} \, dx
$$
which is defined for $\Re(s)>d+n$, extends holomorphically on the whole complex plane $\C$.
\end{lemma}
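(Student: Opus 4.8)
The plan is to compare the sum over the lattice $\Z^n$ with the corresponding integral over $\R^n \setminus B^n$ by means of a standard Euler--Maclaurin / lattice-point comparison argument, exploiting the decay estimates for the derivatives of $P(x)|x|^{-s}$ established in Lemma \ref{majPQs}. By linearity it suffices to treat a single monomial $P(X) = X^\gamma$ of degree $d = |\gamma|_1$, so $\Delta_P(s)$ becomes a fixed finite linear combination of such monomial differences. The key observation is that, writing $F_s(x) := P(x)|x|^{-s}$, the difference between $\sum_{k}' F_s(k)$ and $\int F_s(x)\, dx$ (over the region $|x| \geq 1$, say) can be expressed via the fundamental theorem of calculus applied cube by cube: for each unit cube $Q_k := k + [0,1]^n$ with $k \in \Z^n$, $|k|$ large, one has
$$
F_s(k) - \int_{Q_k} F_s(x)\, dx = -\int_{Q_k} \big(F_s(x) - F_s(k)\big)\, dx,
$$
and $F_s(x) - F_s(k)$ is controlled by $\sup_{Q_k}|\nabla F_s|$, hence, iterating with higher-order Taylor remainders, by $\sup_{Q_k} |\partial^\alpha F_s|$ for $|\alpha|_1$ as large as we like.

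First I would fix an integer $N$ and Taylor-expand $F_s$ around each lattice point to order $N$; the polynomial (in the local variable) terms of degree $< N$ integrate over $Q_k$ to quantities that telescope or can be handled by a secondary integral comparison, while the remainder term of order $N$ is, by Lemma \ref{majPQs}, bounded uniformly on $Q_k$ by $(1+|s|)^{N}\, |k|^{-\sigma - N + d}$ for $|k| \geq 2$ (say). Summing this bound over $k \in \Z^n$ with $|k| \geq 2$ gives a series dominated by $\sum_{|k| \geq 2} |k|^{-\sigma - N + d}$, which converges absolutely and locally uniformly for $\Re(s) = \sigma > d + n - N$. Since $N$ is arbitrary, this shows the corresponding piece of $\Delta_P(s)$ extends to a holomorphic function on the half-plane $\{\Re(s) > d + n - N\}$, and letting $N \to \infty$ covers all of $\C$. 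The finitely many ``boundary'' lattice points with small $|k|$ where $P(k)/|k|^s$ or the Taylor expansion is singular (i.e. near the origin) contribute only entire functions of $s$ — indeed $|k|^{-s}$ is entire in $s$ for each fixed $k \neq 0$, and the $k=0$ term is omitted by the prime — so they cause no trouble.

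The one point requiring care, and the main obstacle, is bookkeeping the intermediate Taylor terms: the degree-$j$ terms with $0 < j < N$ in the local expansion do not individually telescope, so one must reorganize them as a difference between a sum over $\Z^n$ and an integral over $\R^n$ of lower-degree monomials times $|x|^{-s}$ shifted appropriately, and argue inductively on the degree $d$ of $P$. Concretely, each such term has the shape $c_{\beta}\, \partial^\beta F_s(k)$ integrated against a fixed polynomial weight over $[0,1]^n$, which after summation is of the form $\sum_k' Q(k)|k|^{-s-2m}$ minus its integral analogue, with $\deg Q < d$; by the induction hypothesis (base case $d$ negative or the degree-zero analysis of $\sum_k' |k|^{-s}$, which is classical and meromorphic with a single pole at $s = n$) these extend holomorphically, and so does $\Delta_P$. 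I would organize the induction so that the hypothesis reads: for every polynomial $Q$ of degree $< d$ and every $m \in \N_0$, the difference ${\sum_k}' Q(k)|k|^{-s-2m} - \int_{\R^n \setminus B^n} Q(x)|x|^{-s-2m}\,dx$ is entire; the estimate from Lemma \ref{majPQs} then closes the induction cleanly once the degree-$N$ remainder is pushed past any prescribed half-plane.

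Finally I would remark that the region of integration being $\R^n \setminus B^n$ rather than $\{|x| \geq 2\}$ is immaterial: the integral of $P(x)|x|^{-s}$ over the annulus $1 \leq |x| \leq 2$ is an entire function of $s$ (the integrand is continuous and the region compact), so it can be absorbed into the entire ``error'' without affecting the conclusion. This will be enough to conclude, as claimed, that $\Delta_P(s)$ extends holomorphically to all of $\C$; the location and residues of the poles of $\zeta^P$ itself then come, in the proof of Theorem \ref{res-int}, from the explicit integral $\int_{\R^n \setminus B^n} P(x)|x|^{-s}\,dx$, which one evaluates in polar coordinates as $\big(\int_{S^{n-1}} P_j(u)\,dS(u)\big)\cdot \int_1^\infty r^{j + n - 1 - s}\,dr$ summed over the homogeneous parts $P_j$.
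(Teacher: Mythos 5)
Your overall strategy -- comparing the lattice sum with the integral through the derivative bounds of Lemma \ref{majPQs} and pushing the error into a remainder of order $N$ that converges on $\{\Re(s)>d+n-N\}$ for arbitrary $N$ -- is the same Euler--Maclaurin idea as the paper's, and your treatment of the small-$|k|$ terms and of the annulus $1\le|x|\le 2$ as entire contributions is fine. The organizational difference matters, though: the paper applies the \emph{one-dimensional} Euler--Maclaurin formula coordinate by coordinate to $h(t)=\psi(m',t)\,P(m',t)\,|(m',t)|^{-s}$; because $h$ and its derivatives vanish at $\pm\infty$ and are integrable, that formula reads $\sum_k h(k)=\int_\R h+\tfrac{(-1)^N}{(N+1)!}\int_\R B_{N+1}\,h^{(N+1)}$ with \emph{no} intermediate terms, so only the Bernoulli remainder has to be summed over the remaining variables. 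Your $n$-dimensional cube-by-cube Taylor comparison does generate the intermediate terms you flag, and it is exactly there that your argument has a gap.

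Concretely, Taylor expansion on $Q_k=k+[0,1]^n$ gives $F_s(k)-\int_{Q_k}F_s=-\sum_{1\le|\beta|_1<N}c_\beta\,\partial^\beta F_s(k)-\int_{Q_k}R_N$ with $c_\beta=\int_{[0,1]^n}u^\beta\,du\neq 0$, so after summation the intermediate contribution is the \emph{pure lattice sum} $\sum'_k\partial^\beta F_s(k)$, not a sum-minus-integral difference. Your assertion that each such term ``is of the form $\sum'_kQ(k)|k|^{-s-2m}$ minus its integral analogue'' is therefore unjustified, and cannot be literally right: $\sum'_k\partial^\beta F_s(k)$ is genuinely meromorphic with a pole (at $s=n+d-|\beta|_1$ in general), whereas your induction hypothesis would force it to be entire. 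To close the induction you must add and subtract $\int_{\R^n\setminus B^{n}}\partial^\beta F_s(x)\,dx$ and then check separately that this compensating integral is entire; this is true, but for a reason you never invoke: for $|\beta|_1\ge 1$ it is the exterior integral of an exact derivative, hence by the divergence theorem (valid for $\Re(s)\gg 0$, since the integrand and its primitive decay) it equals a boundary integral over $S^{n-1}$, which is manifestly entire in $s$. Two smaller repairs are also needed: the induction cannot run on $\deg Q$, since Leibniz applied to $\partial^\beta\big(P(x)|x|^{-s}\big)$ produces numerators of degree up to $d+|\beta|_1$ -- what strictly decreases is the homogeneity degree $\deg Q-2m=d-|\beta|_1$, and the induction must be indexed by that quantity; and the early remark that the low-order terms ``telescope'' should be dropped, since no telescoping occurs and everything rests on the induction just described. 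With these corrections your proof works, but the paper's coordinate-by-coordinate route avoids the issue altogether.
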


\begin{proof}
We fix in the sequel a function $\psi\in C^\infty(\R^n ,\R)$ such that for all $x\in\R^n$
$$
0\leq \psi(x) \leq 1, \quad \psi(x)=1 \text{ if }|x|\geq 1
\quad \text{and} \quad \psi(x)=0 \text{ if } |x|\leq 1/2.
$$
The function $f(x,s) :=  \psi (x)~P(x)~ |x|^{-s}$, $x\in \R^n$ and
$s\in \C$, is in ${\cal C}^\infty (\R^n \times \C)$  and depends holomorphically on $s$.

Lemma \ref{majPQs} above shows that $f$ is
a ``gauged symbol'' in the terminology of \cite[p. 4]{GSW}.
Thus \cite[Theorem 2.1]{GSW} implies that $\Delta_P(s)$ extends
holomorphically on the whole complex plane $\C$. However, to be
complete, we will give here a short proof of Lemma \ref{deltaf}:
\par It follows from the classical Euler--Maclaurin formula that for
any function $h: \R \rightarrow \C$ of class ${\cal C}^{N+1}$
satisfying $ \lim_{|t|\rightarrow +\infty} h^{(k)}(t)=0$ and $\int_{\R}
|h^{(k)} (t)| ~dt <+\infty$ for any
$k=0 \dots,N+1$, that we have
$$
\sum_{k\in \Z} h(k) = \int_{\R} h(t) + \tfrac{(-1)^N}{(N+1)!}
\int_{\R} B_{N+1}(t)~h^{(N+1)}(t) ~dt
$$
where $B_{N+1}$ is the Bernoulli function of
order $N+1$ (it is a bounded periodic function.) \par Fix $m' \in \Z^{n-1}$ and $s\in
\C$. Applying to the function $h(t):= \psi (m',t)~P(m',t)
~|(m',t)|^{-s}$ (we use Lemma \ref{majPQs} to verify hypothesis),
 we obtain that for any $N\in \N_0$:
\begin{equation}\label{*1}
\sum_{m_n \in \Z} \psi (m',m_n) ~P(m',m_n) ~|(m',m_n)|^{-s}
= \int_{\R} \psi(m',t)
~P(m',t) ~|(m',t)|^{-s} ~dt +{\cal R}_N(m';s)
\end{equation}
where $ {\cal R}_N(m';s):=\tfrac{(-1)^N}{(N+1)!} \int_{\R} B_{N+1}(t)~
\tfrac{\partial^{N+1}}{{\partial x_n}^{N+1}} \left(\psi(m',t)~P(m',t)
~|(m',t)|^{-s}\right)~dt$.\\
By Lemma \ref{majPQs},
$$
\int_{\R} {\Big |} B_{N+1}(t)~
\tfrac{\partial^{N+1}}{{\partial x_n}^{N+1}} \left(\psi (m',t)
~P(m',t)~|(m',t)|^{-s}\right){\Big |}~dt \ll_{P,n, N}
(1+|s|)^{N+1}~ (|m'|+1)^{-\sigma
-N+ \delta(P)}.
$$
Thus $ \sum_{m' \in \Z^{n-1}} {\cal R}_N(m';s)$
converges absolutely and define a holomorphic function in
the half plane $\{\sigma
=\Re (s) > \delta(P)+n-N\}$. \par Since $N$ is an arbitrary integer,
by letting
$N\rightarrow \infty$ and using $(\ref{*1})$ above, we conclude that:
$$
s\mapsto \sum_{(m',m_n) \in \Z^{n-1}\times \Z} \psi (m',m_n) P(m',m_n) |(m',m_n)|^{-s}-\sum_{m'
  \in \Z^{n-1}} \int_{\R} \psi (m',t) ~P(m',t)~|(m',t)|^{-s} dt
$$
has
a holomorphic continuation to the whole complex plane $\C$.\par
After $n$ iterations, we obtain that
$$s\mapsto {\sum}_{m\in \Z^{n}} \psi(m)~P(m)
~|m|^{-s}-\int_{\R^n} \psi(x)~P(x) ~|x|^{-s}~dx$$ has a
holomorphic continuation to the whole $\C$.\\
To finish the proof of Lemma \ref{deltaf}, it is enough to notice
that:

\hspace{1 cm} $\bullet$ $\psi(0)=0$ and  $\psi (m)=1$, $\forall m\in
\Z^n\setminus \{0\}$;

\hspace{1 cm} $\bullet$ $s\mapsto \int_{B^n} \psi(x)~P(x)~|x|^{-s}~dx
= \int_{\{x\in \R^n : 1/2\leq |x|\leq 1\}} \psi(x)~P(x)~|x|^{-s}~dx$
is a holomorphic
function on $\C$.
\end{proof}

\begin{proof}[Proof of Theorem \ref{res-int}]

Using the polar decomposition of the volume form
$dx=\rho^{n-1}\,d\rho\, dS$ in $\R^n$, we get for $\Re(s)>d+n$,
$$
\int_{\R^n \setminus B^{n}} \tfrac{P_j(x)}{|x|^s}dx =
\int_{1}^{\infty}
\tfrac{\rho^{j+n-1}}{\rho^s}\int_{S^{n-1}} P_j(u)\, dS(u) =
\tfrac{1}{j+n-s}
\int_{S^{n-1}} P_j(u)\, dS(u).
$$
Lemma \ref{deltaf} now gives the result.
\end{proof}

\subsection{Holomorphy of certain series}
Before stating the main result of this section, we give first in the following some
preliminaries from Diophantine approximation theory:

\begin{definition}\label{ba}
(i) Let $\delta >0$. A vector $a \in \R^n$ is said to be
$\delta$-badly approximable
if there exists $c >0$ such that $|q . a -m| \geq c \,|q|^{-\delta}$,
$\forall q \in \Z^n \setminus \set{0}$ and $\forall m \in \Z$. \\
We note ${\cal BV}(\delta )$ the set of $\delta$-badly approximable
vectors and ${\cal
BV} :=\cup_{\delta >0} {\cal BV}(\delta)$ the set of badly
approximable vectors.\par
(ii) A matrix $\Th \in {\cal M}_{n}(\R)$ (real $n \times n$ matrices)
will be
said to be badly approximable if there
exists $u \in \Z^n$ such that  ${}^t\Th (u)$ is a badly approximable vector of $\R^n$.
\end{definition}
{\bf Remark.} A classical result from Diophantine approximation asserts
that for $\delta >n$, the Lebesgue measure of
$\R^n \setminus {\cal BV}(\delta)$ is zero (i.e almost any element of
$\R^n$ is $\delta-$badly approximable.)
\par Let $\Th \in {\cal M}_n(\R)$. If its row
of index $i$ is a badly approximable vector of $\R^n$ (i.e. if $L_i
\in {\cal BV}$)
then ${}^t \Th (e_i) \in {\cal BV}$ and thus $\Th$ is a badly approximable matrix. It
follows that almost any matrix of ${\cal M}_n(\R)\approx \R^{n^2}$ is badly approximable.\par

\medskip

The goal of this section is to show the following

\begin{theorem}\label{analytic}
Let $P\in \C[X_1,\cdots,X_n]$ be a homogeneous polynomial of degree
$d$ and let $b$ be in
$\mathcal{S}(\Z^{n} \times \dots \times \Z^{n})$ ($q$ times,
$q\in\N$). Then,

(i)
Let $a \in \R^n$. We define $f_a(s):={\sum}'_{k\in \Z^n}
\frac{P(k)}{|k|^s}\,
e^{2\pi i k.a}$.

\quad 1.
If $a\in \Z^n$, then $f_a$ has a meromorphic
continuation to the whole complex plane
$\C$.\\ Moreover if  $S$ is the unit sphere and $dS$
its Lebesgue measure, then
$f_a$ is not entire if and only if
$\int_{u\in S^{n-1}} P(u)\, dS(u)\neq 0$. In that case, $f_a$
has only a simple pole at the point $d+n$, with
$\underset{s=d+n}{\Res} \, f_a(s) = \int_{u\in  S^{n-1}} P(u)\, dS(u)$.

\quad 2.
If $a\in \R^n\setminus \Z^n$, then $f_a(s)$ extends holomorphically
to the whole
complex plane $\C$.

(ii)
 Suppose that  $\Th \in {\cal M}_{n}(\R)$ is badly approximable. For any $(\eps_i)_i\in \{-1,0,1\}^{q}$, the function
$$
g(s):={\sum}_{l\in (\Z^n)^{q}} \, b(l) \,f_{\Th\,
\sum_i \eps_i l_i}(s)
$$
extends meromorphically to the whole complex plane  $\C$ with only one possible pole
on $s= d+n$.\par Moreover, if we set ${\cal Z}:=\{l\in(\Z^n)^{q} \,\, \vert \,\, \sum_{i=1}^q \eps_i
l_i= 0\}$ and $V:=\sum_{l\in {\cal Z}} \, b(l)$, then

1. If $V\int_{S^{n-1}} P(u)\, dS(u)\neq 0$, then $s=d+n$ is a simple pole of $g(s)$ and
$$
\underset{s=d+n}{\Res} \, g(s) = V\,\int_{u\in S^{n-1}} P(u)\, dS(u).
$$

2. If $V\int_{S^{n-1}} P(u)\, dS(u)=0$, then $g(s)$ extends holomorphically to the whole complex plane $\C$.

(iii)
Suppose that $\Th \in \mathcal{M}_n(\R)$ is badly approximable. For any 
$(\eps_i)_i\in \{-1,0,1\}^{q}$, the function
$$
g_0(s):={\sum}_{l\in (\Z^n)^{q}\setminus
  {\cal Z}} \,
b(l)\,f_{\Th\, \sum_{i=1}^q \eps_i l_i}(s)
$$
where ${\cal Z}:=\{l\in(\Z^n)^{q} \,\,\vert\,\, \sum_{i=1}^q \eps_i l_i= 0\}$ extends holomorphically to the whole
complex plane $\C$.
\end{theorem}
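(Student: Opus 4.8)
\textbf{Proof plan for Theorem \ref{analytic}.}

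The plan is to prove the three parts in order, since each feeds into the next. For part (i), case 1 is a direct consequence of Theorem \ref{res-int}: if $a\in\Z^n$ then $e^{2\pi i k.a}=1$ for all $k$, so $f_a=\zeta^P$ with $P$ homogeneous of degree $d$, and Theorem \ref{res-int} gives the meromorphic continuation, the pole at $d+n$, and the residue $\int_{S^{n-1}}P(u)\,dS(u)$. Case 2 is the heart of part (i): when $a\in\R^n\setminus\Z^n$ the oscillating factor produces cancellation. The plan is to mimic the proof of Lemma \ref{deltaf}, comparing the sum ${\sum}'_{k}P(k)|k|^{-s}e^{2\pi i k.a}$ with an integral, but now the relevant Euler--Maclaurin step produces a boundary contribution governed by the Fourier phase; alternatively (cleaner) one splits $f_a(s)$ via a smooth cutoff $\psi$ as in Lemma \ref{deltaf} into a piece supported near the origin (a finite trigonometric polynomial, hence entire) and a piece $\sum_k\psi(k)P(k)|k|^{-s}e^{2\pi i k.a}$; for the latter one applies iterated summation by parts (Abel summation) in each coordinate, using that $a\notin\Z^n$ means at least one coordinate $a_j\notin\Z$, so the partial sums of $e^{2\pi i k_j a_j}$ over $k_j$ are bounded by $|1-e^{2\pi i a_j}|^{-1}$, uniformly. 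Combined with the derivative bounds of Lemma \ref{majPQs}, this shows the series converges locally uniformly for all $s\in\C$, hence defines an entire function. The key technical input is exactly Lemma \ref{majPQs} controlling $\partial^\alpha(P(x)|x|^{-s})$ by $(1+|s|)^{|\alpha|_1}|x|^{-\sigma-|\alpha|_1+d}$.

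For part (ii), I would first reorganize the double sum. Write $g(s)=\sum_{l\in(\Z^n)^q}b(l)\,f_{\Th\sum_i\eps_i l_i}(s)$ and split the index set as ${\cal Z}\sqcup({\cal Z})^c$. On ${\cal Z}$ the argument $\Th\sum_i\eps_i l_i$ is $\Th\cdot 0=0\in\Z^n$, so each term is $f_0=\zeta^P$, and summing $b$ over ${\cal Z}$ (which converges since $b\in\mathcal{S}$) gives the contribution $V\,\zeta^P(s)$, which by part (i.1) is meromorphic with the stated pole and residue at $d+n$. On the complement, each summand is $f_{\Th\sum_i\eps_i l_i}$ with $\sum_i\eps_i l_i\neq 0$; the plan is to show the resulting series $g_0(s)$ is entire, which is precisely part (iii) — so I would actually prove (iii) first and then assemble (ii) from $g=V\zeta^P+g_0$. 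Then (ii.1) and (ii.2) follow immediately by reading off whether the residue $V\int_{S^{n-1}}P\,dS$ vanishes.

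So the real work is part (iii), and this is where the Diophantine hypothesis enters and where I expect the main obstacle. For $l$ with $m:=\sum_i\eps_i l_i\neq 0$, we need a bound on $f_{\Th m}(s)$ that is summable against the Schwartz weight $b(l)$. The idea: for $a\notin\Z^n$ the entire function $f_a(s)$ from part (i.2) must be estimated quantitatively — by the Abel-summation argument its growth is controlled by negative powers of $\mathrm{dist}(a_j,\Z)$ for the ``best'' coordinate $j$, i.e. by $\prod$ or $\max$ of $\|a_j\|^{-1}$ where $\|\cdot\|$ is distance to $\Z$. Now $a=\Th m$ with $m\in\Z^n\setminus\{0\}$; badly approximable means there is $u\in\Z^n$ with ${}^t\Th(u)$ $\delta$-badly approximable, giving $|\langle{}^t\Th u,m\rangle - p|=|\langle u,\Th m\rangle - p|\geq c|m|^{-\delta}$ for all $p\in\Z$, hence $\mathrm{dist}(\langle u,\Th m\rangle,\Z)\gg |m|^{-\delta}$, and since $\langle u,\Th m\rangle=\sum_j u_j(\Th m)_j$ this forces $\max_j\mathrm{dist}((\Th m)_j,\Z)\gg |m|^{-\delta}/\,|u|_1$. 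Therefore $f_{\Th m}(s)$ grows at most polynomially in $|m|$ (uniformly for $s$ in compacta), say $\ll (1+|m|)^{N(\delta)}$, while $b(l)$ decays faster than any polynomial in $|l|\geq |m|/q$; so $\sum_l b(l)f_{\Th\sum\eps_i l_i}(s)$ converges locally uniformly on $\C$ and $g_0$ is entire. The delicate point I anticipate is making the estimate on $f_a(s)$ genuinely uniform in $s$ on compact sets while exhibiting the correct power of $\mathrm{dist}(a_j,\Z)^{-1}$ — this requires carrying the Abel summation through all $n$ iterations with the Lemma \ref{majPQs} bounds and bookkeeping the $(1+|s|)$ factors, essentially the content of the proof in \cite{EILS}.
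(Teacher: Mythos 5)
Your overall architecture coincides with the paper's: (i.1) is read off from Theorem \ref{res-int}, (ii) is assembled as $g=V\,\zeta^P+g_0$ so that everything reduces to (iii), and (iii) is obtained from a \emph{quantitative} version of (i.2) whose constants depend on $a=\Th\sum_i\eps_i l_i$ only through a distance-to-$\Z$ quantity that the badly-approximable hypothesis bounds below polynomially in $|l|$, after which the Schwartz decay of $b$ gives locally uniform convergence. Where you differ is in how that quantitative estimate is produced. The paper proves Lemma \ref{ieffective} by an induction on the degree of $G$ in the class ${\cal F}$ of functions $P(X)(|X|^2+1)^{-r/2}$: the substitution $k\mapsto k+u$ yields the identity $(1-e^{2\pi i\,u\cdot a})F_1(G;a;s)=\sum F_1(G_{(\a,j);u};a;s)+R_N$ with all $G_{(\a,j);u}$ of strictly lower degree, and then divides by $|1-e^{2\pi i\,u\cdot a}|\geq d(u\cdot a,\Z)$ --- crucially with the \emph{same} $u\in\Z^n$ furnished by the Diophantine condition for every $l$. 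You instead do iterated finite differencing in a single coordinate direction $e_j$ (Abel summation), which is the special case $u=e_j$ of the same shift identity, and you then need the extra (correct) step $\max_j d((\Th m)_j,\Z)\geq d(u\cdot\Th m,\Z)/|u|_1\gg|m|^{-\delta}$ to transfer the Diophantine lower bound from the direction $u$ to some coordinate direction; uniformity over $l$ is preserved because there are only $n$ coordinates to choose from. Your route avoids the auxiliary class ${\cal F}$ and its induction, at the price of bounding $N$-th finite differences of $P(k)|k|^{-s}$ by the derivative estimates of Lemma \ref{majPQs}; the paper's shift by a general $u$ avoids the coordinate-reduction step but forces the binomial expansion of $|k+u|^{-s}$, hence ${\cal F}$. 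One small imprecision: you speak of summation by parts ``in each coordinate,'' but iterating in a coordinate with $a_{j'}\in\Z$ gains nothing (the exponential partial sums are unbounded there); one must iterate arbitrarily many times in the \emph{single} distinguished coordinate $j$ with $d(a_j,\Z)>0$, which suffices and is clearly what you intend.
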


{\it Proof of Theorem \ref{analytic}:}

First we remark that 

$\hspace{1cm}$ If $a \in \Z^n$ then $f_a(s)={\sum}'_{k\in \Z^n} \frac{P(k)}{|k|^s}$. So,
the point $(i.1)$ follows from Theorem \ref{res-int};

$\hspace{1cm}$
$ g(s):=\sum_{l\in (\Z^n)^{q}\setminus {\cal Z}} \, b(l) \,f_{\Th\, \sum_i \eps_i l_i}(s)
+\left(\sum_{l\in {\cal Z}} \, b(l)\right) {\sum}'_{k\in \Z^n}
\frac{P(k)}{|k|^s}$. Thus, the point $(ii)$ rises easily from $(iii)$ and Theorem \ref{res-int}.

So, to complete the proof, it remains to prove the items $(i.2)$ and $(iii)$.\par
The direct proof of $(i.2)$ is easy but is not sufficient to deduce $(iii)$ of which the proof is
more delicate and requires a more precise (i.e. more effective) version of
$(i.2)$. The next lemma gives such crucial version, but before, let us give some notations:
$$
{\cal F}:=\{\tfrac{P(X)}{(X_1^2+\dots +X_n^2+1)^{r/2}}  \,\,\vert\,\, P(X) \in \C[X_1,\dots, X_n] {\mbox { and }} r \in \N_0\}.
$$
\hspace{1cm}  We set $g=$deg$(G) =$deg$(P) -r \in \Z$, the degree of
$G=\frac{P(X)}{(X_1^2+\dots +X_n^2+1)^{r/2}}\in
{\cal F}$.

\hspace{0.4cm} By convention, we set deg$(0)=-\infty$.

\begin{lemma}
\label{ieffective}
Let $a \in \R^n$. We assume that $d\left(a . u, \Z\right):= \inf_{m\in \Z} |a . u  -m| >0$ for some $u \in \Z^n$.
For all $G\in {\cal F}$, we define formally,
\begin{align*}
    F_0(G;a;s):={\sum}'_{k\in\Z^n}
\tfrac{G(k)}{|k|^{s}}\, e^{2\pi i \,k . a}  \quad \text{and} \quad F_1(G;a;s):={\sum}_{k \in \Z^n}
\tfrac{G(k)}{(|k|^2+1)^{s/2}} \,e^{2\pi i \,k . a} .
\end{align*}
\\
Then for all $N\in \N$, $G\in {\cal F}$ and $i\in \{0,1\}$, there exist
positive constants $C_i:=C_i(G,N,u)$, $B_i:=B_i(G,N,u)$ and
$A_i:=A_i(G,N,u)$ such
that $s\mapsto F_i(G;\a;s)$ extends  holomorphically to the half-plane
$\{\Re(s)>-N\}$ and verifies in it:
$$F_i(G;a;s)\leq C_i (1+|s|)^{B_i} \,
\big(d\left(a . u, \Z\right)\big)^{-A_i}.$$
\end{lemma}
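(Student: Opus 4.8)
The plan is to reduce everything to a single, iterable \emph{summation by parts in the direction $u$}, exploiting that the hypothesis $d(a\cdot u,\Z)>0$ makes the exponential $k\mapsto e^{2\pi i k\cdot a}$ \emph{summable} in that direction. Set $\omega:=e^{2\pi i u\cdot a}$; since $|e^{2\pi i t}-1|=2|\sin\pi t|\geq 4\,d(t,\Z)$, the hypothesis gives $\omega\neq 1$ and $|\omega-1|^{-1}\leq \big(4\,d(a\cdot u,\Z)\big)^{-1}$. Writing $\phi(k):=e^{2\pi i k\cdot a}$ one has $\phi(k)=(\omega-1)^{-1}\big(\phi(k+u)-\phi(k)\big)$, whence, after the bijective reindexing $k\mapsto k-u$ of $\Z^n$, the identity
\[
{\sum}_k h(k)\,\phi(k)=-\tfrac{1}{\omega-1}\,{\sum}_k\big(h(k)-h(k-u)\big)\,\phi(k),
\]
valid for every $h$ with $\sum_k|h(k)|<\infty$. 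For $h(k)=G(k)(|k|^2+1)^{-s/2}$ (resp. $h(k)=G(k)|k|^{-s}$, $k\neq 0$) absolute convergence holds for $\Re(s)>\deg(G)+n$, so $F_1$ (resp. $F_0$) is defined and holomorphic there.

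First I would treat $F_1$, which has no excluded origin. Apply the identity with $h(k)=G(k)(|k|^2+1)^{-s/2}$ and expand the discrete difference: using $|k-u|^2+1=(|k|^2+1)\big(1+\tfrac{|u|^2-2k\cdot u}{|k|^2+1}\big)$ together with the Taylor expansion of $P(k-u)$ in $u$ and the binomial series, $h(k)-h(k-u)$ is, for $|k|$ large (the finitely many small $k$ being peeled off as an entire, polynomially $s$–bounded correction), a norm–convergent sum $\sum_{j\geq 1}\dfrac{c_j(s)\,G_j(k)}{(|k|^2+1)^{(s+j)/2}}$ with each $G_j$ again in the span of $\mathcal F$ and of degree $\leq\deg(G)-1$, and each $c_j$ a polynomial in $s$ of controlled degree; the needed size estimates are exactly of the Lemma~\ref{majPQs} type. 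Iterating this step $M$ times lowers the degree by $M$, produces a factor $(\omega-1)^{-M}$, and leaves finitely many ``boundary'' series together with a remainder series $\sum_k\tilde G(k)(|k|^2+1)^{-(s+M)/2}\phi(k)$ with $\deg\tilde G\leq\deg(G)-M$; choosing $M>\deg(G)+n+N$ makes this remainder absolutely and locally uniformly convergent on $\{\Re(s)>-N\}$, hence holomorphic there. Collecting the $\leq M$ factors $|(\omega-1)^{-1}|\leq\big(4\,d(a\cdot u,\Z)\big)^{-1}$ and the accumulated powers of $(1+|s|)$ yields $F_1(G;a;s)\leq C_1(1+|s|)^{B_1}\,d(a\cdot u,\Z)^{-A_1}$ with $A_1\leq M$ and $B_1$ explicit; the finite boundary pieces are entire in $s$ and obey the same kind of estimate, so they cause no trouble.

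It remains to transfer the statement to $F_0$. For $k\neq 0$ one has $|k|^{-s}=(|k|^2+1)^{-s/2}\big(1-(|k|^2+1)^{-1}\big)^{-s/2}$ with $(|k|^2+1)^{-1}\leq\tfrac12$, so expanding the last factor by the binomial series and interchanging the sums (valid for $\Re(s)$ large, then propagated by analytic continuation) gives
\[
F_0(G;a;s)=\sum_{j\geq 0}(-1)^j\binom{-s/2}{j}\Big(F_1(G;a;s+2j)-G(0)\Big),
\]
and each $F_1(G;a;s+2j)$ is, by the previous step, holomorphic on $\{\Re(s)>-N\}$ (indeed on $\{\Re(s)>-N-2j\}$) with the displayed bound. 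Splitting the $j$–sum into a finite initial part and a tail in which $\Re(s+2j)$ is large (where $F_1$ converges absolutely and decays geometrically in $j$) shows the series defines a holomorphic function on $\{\Re(s)>-N\}$ and produces the required estimate for $F_0$; alternatively one may run the summation by parts directly on $F_0$, peeling off the finitely many terms with $|k|\leq|u|+1$. The main obstacle is precisely the bookkeeping in the middle paragraph: checking that each discrete difference genuinely drops the degree by at least one, that all intermediate symbols stay in the span of $\mathcal F$, and that the two layers of remainder series (the iterated tails and the binomial tails) are uniformly controlled, while keeping the dependence on $s$ polynomial with an explicit exponent.
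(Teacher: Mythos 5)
Your proof is correct and follows essentially the same route as the paper: your summation by parts in the direction $u$ is exactly the paper's reindexing $k\mapsto k+u$, which yields $(1-e^{2\pi i\,u\cdot a})\,F_1(G;a;s)=\sum F_1\big(G_{(\alpha,j);u};a;s\big)+R_N$ with the $G_{(\alpha,j);u}$ of strictly smaller degree, and your reduction of $F_0$ to the $F_1(G;a;s+2j)$ via the binomial expansion of $\big(1-(|k|^2+1)^{-1}\big)^{-s/2}$ is also the paper's. The only cosmetic difference is organizational: the paper packages the degree-lowering as an induction on $\deg G$ with base case $\deg G\leq -(n+N+1)$, whereas you iterate the difference operator explicitly $M>\deg G+n+N$ times.
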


\begin{remark}
The important point here is that we obtain an explicit bound of $F_i(G;\a;s)$ in $\{\Re(s)>-N\}$ which depends on 
the vector $a$ only through $d(a.u,\Z)$, so depends on $u$ and
indirectly on $a$ (in the sequel, $a$ will vary.) In particular the
constants $C_i:=C_i(G,N,u)$, $B_i=B_i(G,N)$ and $A_i:=A_i(G,N)$ do not depend on the vector $a$
but only on $u$. {\it This is crucial for the proof of items $(ii)$ and $(iii)$ of Theorem \ref{analytic}!}
\end{remark}

\subsubsection{\texorpdfstring{Proof of Lemma \ref{ieffective} for $i=1$:}{Proof of Lemma for i=1}}

Let $N\in \N_0$ be a fixed integer, and set $g_0:= n+N+1$.\\
We will prove Lemma \ref{ieffective} by induction on $g=$deg$(G)\in
\Z$. More
precisely, in order to prove case $i=1$, it suffices to
prove that:

\hspace{1cm} Lemma \ref{ieffective} is true for all $G\in {\cal F}$ with  deg$(G)\leq -g_0$.

\hspace{1cm} Let $g\in \Z$ with $g\geq -g_0+1$. If Lemma \ref{ieffective} is true for all $G\in {\cal F}$ 
such that deg$(G)\leq g -1$,

\hspace{1cm} then it is also true for all $G\in {\cal F}$ satisfying deg$(G)= g$.

$\bullet$ {Step 1: Checking Lemma
\ref{ieffective} for
deg$(G) \leq -g_0:= -(n+N+1)$.}\\
Let $G(X)=\frac{P(X)}{(X_1^2+\dots +X_n^2+1)^{r/2}} \in {\cal F}$ with  deg$(G)\leq -g_0$.
It is easy to see that we have uniformly in  $s=\sigma +i\tau  \in \C$
and in $k \in
\Z^n$:
\begin{align*}
\tfrac{|G(k) \,e^{2\pi i \,k .
a}|}{(|k|^2+1)^{\sigma/2}}=&\tfrac{|P(k)|}{(|k|^2+1)^{(r+\sigma)/2}}\ll_G
\tfrac{1}{(|k|^2+1)^{(r+\sigma-deg(P))/2}}
\ll_G  \tfrac{1}{(|k|^2+1)^{(\sigma-deg(G))/2}}\ll_G
\tfrac{1}{(|k|^2+1)^{(\sigma+g_0)/2}}.
\end{align*}
It follows that $F_1(G;a;s)=\sum_{k \in \Z^n }
\frac{G(k)}{(|k|^2+1)^{s/2}} \,e^{2\pi i
\, k . a}$ converges absolutely and defines a holomorphic function in
the half plane
$\{\sigma >-N\}$. Therefore, we have for any $s\in \{\Re(s) >-N\}$:
$$|F_1(G;a;s)|\ll_G \sum_{k \in \Z^n }
\tfrac{1}{(|k|^2+1)^{(-N+g_0)/2}}\ll_G \sum_{k \in \Z^n }
\tfrac{1}{(|k|^2+1)^{(n+1)/2}}\ll_G 1.$$
Thus, Lemma \ref{ieffective} is true when deg$(G)\leq -g_0$.

$\bullet$ { Step 2: Induction. }\\
Now let $g\in \Z$ satisfying $g\geq -g_0+1$ and suppose that Lemma
\ref{ieffective} is
valid for all $G\in {\cal F}$ with deg$(G) \leq g-1$. Let $G\in
{\cal F}$ with deg$(G)=g$. We will prove that  $G$ also
verifies conclusions of Lemma \ref{ieffective}:\\
There exist $P\in \C[X_1,\dots,X_n]$ of degree $d\geq 0$ and
$r\in \N_0$ such that $G(X)=\frac{P(X)}{(X_1^2+\dots
+X_n^2+1)^{r/2}}$ and
$g=$deg$(G)=d-r$.\\
Since $G(k)\ll (|k|^2+1)^{g/2}$ uniformly in $k \in \Z^n$, we deduce
that
$F_1(G;a;s)$ converges absolutely in $\{\sigma=\Re(s)>n+g\}$.\\
Since $k\mapsto k+u$ is a bijection from  $\Z^n $ into $\Z^n$, it
follows
that we also have for $\Re(s)>n+g$
\begin{align*}
F_1(G;a;s)&=\sum_{k \in \Z^n }
\tfrac{P(k)}{(|k|^2+1)^{(s+r)/2}} \,e^{2\pi i \,k . a}
= \sum_{k \in \Z^n }
\tfrac{P(k+u)}{(|k+u|^2+1)^{(s+r)/2}} \,  e^{2\pi i \, (k+u) .  a}\\
&= e^{2\pi i \,u . a} \sum_{k \in \Z^n } \tfrac{P(k+u)}{(|k|^2+2 k . u
+|u|^2+1)^{(s+r)/2}} \,e^{2\pi i \,k . a}\\
&= e^{2\pi i \,u . a} \sum_{\alpha \in \N_0^n;
|\alpha|_1=\alpha_1+\dots+\alpha_n \leq
d} \tfrac{u^{\alpha}}{\alpha !} \sum_{k \in \Z^n }
\tfrac{\partial^{\alpha}
P(k)}{(|k|^2+2 k . u +|u|^2+1)^{(s+r)/2}} \, e^{2\pi i \,k . a}\\
&= e^{2\pi i \,u . a} \sum_{|\alpha|_1\leq d}
\tfrac{u^{\alpha}}{\alpha !} \sum_{k \in
\Z^n } \tfrac{\partial^{\alpha} P(k)}{(|k|^2+1)^{(s+r)/2}}
\big(1+\tfrac{2 k . u
+|u|^2}{(|k|^2+1)}\big)^{-(s+r)/2} \, e^{2\pi i \,k . a}.
\end{align*}
Let $M:= \sup(N+n+g, 0)\in \N_0$. We have uniformly in $k \in \Z^n$
$$
\big(1+\tfrac{2 k . u + |u|^2}{(|k|^2+1)}\big)^{-(s+r)/2}=
\sum_{j=0}^M \genfrac(){0pt}{1}{-(s+r)/2}{j} \tfrac{\left(2 k . u +
|u|^2\right)^j}{(|k|^2+1)^j}+
O_{M, u}\big(  \tfrac{(1+|s|)^{M+1}}{(|k|^2+1)^{(M+1)/2}}\big).
$$
Thus, for
$\sigma =\Re(s)>n+d$,
\begin{eqnarray}\label{f0dev}
F_1(G;a;s)&=& e^{2\pi i \,u . a} \sum_{|\alpha|_1 \leq d}
\tfrac{u^{\alpha}}{\alpha !}
\sum_{k \in \Z^n } \tfrac{\partial^{\alpha}
P(k)}{(|k|^2+1)^{(s+r)/2}} \big(1+\tfrac{2
k . u +|u|^2}{(|k|^2+1)}\big)^{-(s+r)/2}
e^{2\pi i \,k . a}\nonumber \\
&=& e^{2\pi i \,u . a} \sum_{|\alpha|_1 \leq d} \sum_{j=0}^M
\tfrac{u^{\alpha}}{\alpha
!}  \genfrac(){0pt}{1}{-(s+r)/2}{j} \sum_{k \in \Z^n }
\tfrac{\partial^{\alpha} P(k) \left(2 k . u
+|u|^2\right)^j } {(|k|^2+1)^{(s+r+2j)/2}} \, e^{2\pi i
  \,k . a}\nonumber \\
& & \hspace{1cm}+O_{G, M,u} \big((1+|s|)^{M+1}\sum_{k \in \Z^n }
\tfrac{1}{(|k|^2+1)^{(\sigma
+M+1-g)/2}}\big).
\end{eqnarray}
Set $I:=\left\{(\alpha ,j) \in \N_0^n \times \{0,\dots,M\} \mid
|\alpha|_1 \leq d\right\}$ and $I^*:=I\setminus \set{(0,0)}$.\\
Set also $ G_{(\alpha ,j);u}(X):= \tfrac{\partial^{\alpha} P(X)
\left(2 X . u
+|u|^2\right)^j }
{(|X|^2+1)^{(r+2j)/2}}\in {\cal F}$ for all $(\alpha ,j) \in I^*$.\\
Since $M\geq N+n+g$, it follows from (\ref{f0dev}) that
\begin{eqnarray}\label{crucial1}
(1-e^{2\pi i \,u . a})~F_1(G;a;s)= e^{2\pi i \,u . a} \sum_{(\alpha,
j)\in I^*}
\tfrac{u^{\alpha}}{\alpha !}  \genfrac(){0pt}{1}{-(s+r)/2}{j}
F_1\left(G_{(\alpha ,j);u};\a;s\right)  +R_N(G; a; u; s)
\end{eqnarray}
where $s\mapsto R_N(G; a; u; s)$ is a holomorphic function in the
half plane
$\{\sigma =\Re(s) >-N\}$, in which it satisfies the bound
$R_N(G; a; u; s)\ll_{G,N,u} 1 $.\\
Moreover it is easy to see that, for any $(\alpha, j)\in I^*$,
$$
\text{deg}\hspace{-.06cm}\left(G_{(\alpha ,j);u}\right)=\text{deg}\hspace{-.05cm}\left(\partial^{\alpha}
P \right)+j -(r+2j)\leq d-|\alpha|_1 +j -(r+2j)=g-|\alpha|_1 -j\leq g-1.
$$
Relation (\ref{crucial1}) and the induction hypothesis imply then that
\begin{equation}\label{crucial2}
(1-e^{2\pi i \,u . a })~F_1(G;a;s) {\mbox { verifies the conclusions
of Lemma \ref{ieffective}}}.
\end{equation}
Since $ |1-e^{2\pi i \,u . a}|=2|\sin(\pi u . a)|\geq d\left(u  . a, \Z\right)$,
then (\ref{crucial2}) implies that $F_1(G;a;s)$ satisfies conclusions of Lemma \ref{ieffective}. 
This completes the induction and the proof for $i=1$.

\subsubsection{\texorpdfstring{Proof of Lemma \ref{ieffective} for $i=0$:}{Proof of Lemma for i=0}}

Let  $N\in \N$ be a fixed integer.
Let $G(X)=\frac{P(X)}{(X_1^2+\dots +X_n^2+1)^{r/2}}
\in {\cal F}$ and $g=$ deg$(G)=d-r$ where $d\geq 0$ is the degree of
the polynomial $P$.
Set also $M:=\sup (N+g+n, 0)\in \N_0$.\par
Since $P(k)\ll |k|^d$ for $k \in \Z^n\setminus
\set{0}$, it follows that $F_0(G;a;s)$ and $F_1(G;a;s)$ converge
absolutely in the half plane
$\{\sigma =\Re(s)>n+g\}$.\\
Moreover, we have for $s=\sigma +i\tau  \in \C$ with $\sigma >n+g$:
\begin{align}\label{crucial3}
F_0(G;a;s)&= \sum_{k \in \Z^n \setminus \set{0}}
\tfrac{G(k)}{(|k|^2+1-1)^{s/2}} \, e^{2\pi i \,k . a}
={\sum_{k \in \Z^n }}' \tfrac{G(k)}{(|k|^2+1)^{s/2}}
\left(1-\tfrac{1}{|k|^2+1}\right)^{-s/2} \,
e^{2\pi i \,k . a} \nonumber \\
&= {\sum_{k \in \Z^n }}' \,  \sum_{j=0}^M
\genfrac(){0pt}{1}{-s/2}{j} (-1)^j
\tfrac{G(k)}{(|k|^2+1)^{(s+2j)/2}} \,
e^{2\pi i \,k . a} \nonumber\\
&\hspace{2cm} +O_M\big((1+|s|)^{M+1} {\sum_{k \in \Z^n }}'
\tfrac{|G(k)|}{(|k|^2+1)^{(\sigma +2M+2)/2}}\big)\nonumber \\
&=\sum_{j=0}^M  \genfrac(){0pt}{1}{-s/2}{j}
(-1)^j F_1(G;a; s+2j) \nonumber\\
&\hspace{2cm}+ O_M\big[(1+|s|)^{M+1} \big(1+{\sum_{k \in \Z^n }}'
\tfrac{|G(k)|}{(|k|^2+1)^{(\sigma +2M+2)/2}}\big)\big].
\end{align}
In addition we have  uniformly in $s=\sigma +i\tau \in \C$ with 
$\sigma >-N$,
$$
{\sum_{k \in \Z^n }}'
\tfrac{|G(k)|}{(|k|^2+1)^{(\sigma +2M+2)/2}}\ll {\sum_{k \in \Z^n }}'
\tfrac{|k|^g}{(|k|^2+1)^{(-N +2M+2)/2}}\ll {\sum_{k \in \Z^n }}'
\tfrac{1}{|k|^{n+1}}<+\infty.
$$
So (\ref{crucial3}) and Lemma \ref{ieffective} for $i=1$ imply that Lemma \ref{ieffective} is also true for $i=0$. This
completes the proof of Lemma \ref{ieffective}.

\subsubsection{\texorpdfstring{Proof of item $(i.2)$ of Theorem \ref{analytic}:}{Proof of item i.2 of Theorem}}

Since  $a\in \R^n\setminus \Z^n$, there exists $i_0\in \{1,\dots,n\}$ with $a_{i_0}\not \in \Z$. So 
 $d(a . e_{i_0}, \Z)=d(a_{i_0},\Z)>0$. Therefore,  $a$ satisfies the assumption of 
Lemma \ref{ieffective} with  $u=e_{i_0}$. Thus, for all $N\in \N$, $s\mapsto f_{a}(s)=F_0(P;a;s)$ has a
holomorphic continuation to the half-plane $\{\Re (s)>-N\}$. It follows, by letting
$N\rightarrow \infty$, that $s\mapsto f_{a}(s)$ has a holomorphic continuation
to the whole complex plane $\C$.

\subsubsection{\texorpdfstring{Proof of item $(iii)$ of Theorem \ref{analytic}:}{Proof of item (iii) of Theorem}}
Let $\Th\in {\cal M}_n(\R)$, $(\eps_i)_i\in \{-1,0,1\}^{q}$ and $b
\in {\cal
S}(\Z^n\times \Z^n)$. We assume that
$\Th$ is a badly approximable matrix. Set 
${\cal Z}:= \set{l=(l_1,\dots,l_q)\in (\Z^n)^q \,\,\vert\,  \, \sum_i \eps_i l_i =0}$ and
$P\in \C[X_1,\dots,X_n]$ of degree  $d\geq 0$.\\
It is easy to see that for $\sigma >n+d$:
\begin{align*}
\sum_{l\in (\Z^n)^{q}\setminus {\cal Z}} \, |b(l)| \,  {\sum_{k \in
\Z^n }}'
\tfrac{|P(k)|}{|k|^{\sigma}} \,|e^{2\pi i \,k . \Th\, \sum_i \eps_i
l_i }| & \ll_P  \sum_{l\in (\Z^n)^q\setminus {\cal Z}} |b (l)| \,
{\sum_{k \in \Z^n }}'
\tfrac{1}{|k|^{\sigma -d}}
\ll_{P,\sigma} \sum_{l \in (\Z^n)^q \setminus {\cal Z}} |b (l)|\\
&<+\infty.
\end{align*}
So
$$
g_0(s):=\sum_{l \in (\Z^n)^q \setminus {\cal Z}}
b (l) \, f_{\Th\, \sum_i \eps_i l_i}(s)= \sum_{l \in (\Z^n)^q
\setminus
  {\cal Z}}  b(l) \, {\sum_{k \in \Z^n }}'
\tfrac{P(k)}{|k|^{s}} e^{2\pi i \,k . \Th \, \sum_i \eps_i l_i}
$$
converges absolutely in the half plane $\{\Re(s) >n+d\}$.\\
Moreover with the notations of Lemma \ref{ieffective}, we have for
all $s=\sigma +i\tau \in \C$ with $\sigma >n+d$:
\begin{equation}\label{crucial4}
g_0(s)=\sum_{l \in (\Z^n)^q\setminus {\cal Z}}
b(l) f_{\Th\, \sum_i \eps_i l_i}(s)=\sum_{l
\in (\Z^n)^q  \setminus {\cal Z}} b(l) F_0(P; \Th\, {\sum}_i \eps_i
l_i; s)
\end{equation}
But $\Th$ is badly approximable, so there exists $u \in \Z^n$ and
$\delta ,c>0$ such
$$
|q . \,{}^t\Th u -m| \geq c\, (1+|q|)^{-\delta},\,  \forall q \in
\Z^n \setminus \set{0},
\, \forall m\in \Z.
$$
We deduce that $\forall l \in (\Z^n)^q \setminus {\cal Z},$
$$
\quad |\big(\Th\, {\sum}_i \eps_i l_i \big) .  u -m|= |\big({\sum}_i
\eps_i l_i\big) . {}^t\Th u -m| \geq c \,
\big(1+|{\sum}_i \eps_i l_i| \big)^{-\delta} \geq c \,
(1+|l|)^{-\delta}.
$$
It follows that there exists $u \in \Z^n$, $\delta >0$ and $c>0$ such
that
\begin{equation}\label{hypothesisOK!}
\forall l \in (\Z^n)^q \setminus {\cal Z}, \quad d\big((\Th\,
{\sum}_i \eps_i l_i) .  u;
\Z\big) \geq c \, (1+|l|)^{-\delta}.
\end{equation}
{\it Therefore, for any $l \in (\Z^n)^q \setminus {\cal Z}$,
the vector $a=\Th\, \sum_i \eps_i l_i$
verifies the assumption of Lemma \ref{ieffective} with the same $u$.
Moreover
$\delta$ and $c$ in
(\ref{hypothesisOK!}) are also independent on $l$}.\\
We fix now $N\in \N$. Lemma \ref{ieffective} implies that there exist
positive
constants $C_0:=C_0(P,N,u)$, $B_0:=B_i(P,N,u)$ and $A_0:=A_0(P,N,u)$
such that for all
$l \in (\Z^n)^q \setminus {\cal Z}$, $s\mapsto F_0(P; \Th\, \sum_i
\eps_i l_i; s)$ extends
{\it holomorphically} to the half plane $\{\Re(s)>-N\}$ and verifies
in it the bound
$$
F_0(P;\Th\, \sum_i \eps_i l_i; s)\leq C_0 \left(1+|s|\right)^{B_0} \,
d\big((\Th\, {\sum}_i \eps_i l_i) .  u; \Z\big) ^{-A_0}.
$$
This and (\ref{hypothesisOK!}) imply that for any compact set $K$
included
in the half plane $\{\Re(s)>-N\}$, there exist two constants
$C:=C(P,N,c, \delta,u, K)$ and  $D
:=D(P,N,c,\delta, u)$ (independent on $l \in (\Z^n)^q \setminus {\cal
Z}$) such that
\begin{equation}\label{crucialbig}
\forall s\in K {\mbox { and }} \forall l \in (\Z^n)^q \setminus {\cal
Z}, \quad F_0(P; \Th\,
\sum_i \eps_i l_i; s)\leq C \left(1+|l|\right)^{D}.
\end{equation}
It follows that $s\mapsto \sum_{l \in (\Z^n)^q \setminus {\cal Z}}
b(l) F_0(P;
\Th\, {\sum}_i \eps_i l_i;s)$
has a holomorphic continuation to the half plane
$\{\Re(s)>-N\}$.\\
This and ( \ref{crucial4}) imply that $
s\mapsto g_0(s)=\sum_{l \in (\Z^n)^q \setminus {\cal Z}} b(l) f_{\Th\, \sum_i \eps_i l_i}(s)$
has a holomorphic continuation to
$\{\Re(s)>-N\}$. Since $N$ is an arbitrary integer, by letting
$N\rightarrow \infty$,
it follows that $s\mapsto g_0(s)$ has a holomorphic continuation to
the whole complex plane $\C$ which completes the proof of the theorem. 

\begin{remark}
By equation (\ref{crucial2}), we see that a Diophantine
condition is sufficient to get Lemma \ref{ieffective}. Our Diophantine
condition appears also (in equivalent form) in Connes
\cite[Prop. 49]{NCDG} (see Remark 4.2 below). The following heuristic
argument shows that our condition seems to be necessary in order
to get the result of Theorem \ref{analytic}:\\
For simplicity we assume $n=1$ (but the argument extends easily to
any $n$).\\
Let $\theta \in \R \setminus \Q$. We know that for any $l\in \Z\setminus\{0\}$,
$$
g_{\theta l}(s):={\sum_{k\in \Z}}' \, \tfrac{e^{2\pi i \theta l
k}}{|k|^s}=
\tfrac{\pi^{s-1/2}}{\Gamma (\tfrac{1-s}{2})}{\Gamma (\tfrac{s}{2})}
~h_{\theta l}(1-s) {\mbox { where }}
h_{\theta l}(s):={\sum_{k\in \Z}}'\,\tfrac{1}{|\theta l+ k|^s}.
$$
So, for any $(a_l) \in {\cal S}(\Z)$, the existence of meromorphic
continuation of
$g_0(s):=\sum_{l\in \Z} ' a_l \, g_{\theta l}(s)$ is equivalent to
the
existence of meromorphic continuation of
$$
h_0(s):={\sum_{l\in \Z}}' a_l \, h_{\theta l}(s)=
{\sum_{l\in \Z}} ' a_l \, {\sum_{k\in \Z}} ' \tfrac{1}{|\theta l+
k|^s}.
$$
So, for at least one $\sigma_0 \in \R$, we must have
$\tfrac{|a_l|}{|\theta l+ k|^{\sigma_0}} = O(1)
{\mbox { uniformly in }} k, l \in \Z^*.$

It follows that for any $(a_l) \in {\cal S}(\Z)$,
$|\theta l+ k| \gg |a_l|^{1/\sigma_0}$
uniformly in $k, l \in \Z^*$. Therefore, our Diophantine condition
seems
to be necessary.
\end{remark}

\subsubsection{Commutation between sum and residue}

Let $p\in \N$. Recall that $\mathcal{S}((\Z^n)^p)$ is the set of the
Schwartz
sequences on $(\Z^n)^p$. In other words, $b\in \mathcal{S}((\Z^n)^p)$
if and only if
for all $r\in \N_{0}$, $(1+|l_1|^2+\cdots
|l_p|^2)^r\,|b(l_1,\cdots,l_p)|^2$ is bounded on
$(\Z^n)^p$. We note that if $Q\in\R[X_1,\cdots,X_{np}]$ is a
polynomial, $(a_j)\in
\mathcal{S}(\Z^n)^{p}$, $b\in \mathcal{S}(\Z^n)$ and $\phi$ a
real-valued function, then
 $l:=(l_1,\cdots,l_p)\mapsto \wt a(l)\, b(-\wh l_p)\, Q(l)\,
 e^{i\phi(l)}$ is a Schwartz
sequence on $(\Z^n)^p$, where
\begin{align*}
\wt a(l) &:= a_1(l_1)\cdots a_p(l_p),\\
\wh l_i &:= l_1+\ldots +l_i.
\end{align*}

In the following, we will use several times the fact that for any
$(k,l)\in (\Z^n)^2$
such that $k\neq0$ and $k\neq -l$, we have
\begin{equation}\label{trick-0}
\frac{1}{|k+l|^2} = \frac{1}{|k|^2} -
\frac{2k.l+|l|^2}{|k|^2|k+l|^2}\,.
\end{equation}

\begin{lemma}\label{R-poly} There exists a polynomial $P\in
\R[X_1,\cdots,X_p]$ of degree $4p$ and with positive coefficients
such that for any
$k\in \Z^n$, and $l:=(l_1,\cdots,l_p)\in (\Z^n)^p$ such that $k\neq
0$ and $k\neq -\wh
l_i$ for all $1\leq i \leq p$, the following holds:
$$
\frac{1}{|k+\wh l_1|^2\ldots |k+\wh l_p|^2} \leq \frac{1}{|k|^{2p}}\
P(|l_1|,\cdots,|l_p|).
$$
\end{lemma}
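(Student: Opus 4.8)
The plan is to reduce the multi-factor estimate to a one-factor estimate and then multiply. The only point requiring any care — and the place where the hypotheses $k\neq 0$ and $k\neq -\wh l_i$ are used — is the observation that, because $k$ and all the $l_i$ lie in $\Z^n$, each vector $k+\wh l_i$ belongs to $\Z^n\setminus\set{0}$, so $|k+\wh l_i|\geq 1$ for every $i$, while $|k|^{2p}>0$. (The statement genuinely fails without integrality: a real $k$ close to $-\wh l_i$ breaks it.) Thus it suffices to exhibit $P$ with $\prod_{i=1}^p \tfrac{|k|^2}{|k+\wh l_i|^2}\leq P(|l_1|,\cdots,|l_p|)$.

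First I would prove the elementary one-factor bound: for $k\in\Z^n\setminus\set{0}$ and $m\in\Z^n$ with $k+m\neq 0$,
\begin{align*}
\frac{|k|^2}{|k+m|^2}\leq (1+|m|)^2.
\end{align*}
Indeed $|k|\leq |k+m|+|m|$, hence $|k|^2\leq |k+m|^2+2|k+m|\,|m|+|m|^2$; dividing by $|k+m|^2$ and using $|k+m|\geq 1$ gives $\tfrac{|k|^2}{|k+m|^2}\leq 1+2|m|+|m|^2=(1+|m|)^2$. Next I would apply this with $m=\wh l_i=l_1+\cdots+l_i$ and use $|\wh l_i|\leq |l_1|+\cdots+|l_i|\leq |l_1|+\cdots+|l_p|$, so that $\tfrac{|k|^2}{|k+\wh l_i|^2}\leq (1+|l_1|+\cdots+|l_p|)^2$ for each $i$.

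Finally, taking the product over $i=1,\dots,p$ yields
\begin{align*}
\frac{1}{|k+\wh l_1|^2\cdots|k+\wh l_p|^2}\leq \frac{1}{|k|^{2p}}\,(1+|l_1|+\cdots+|l_p|)^{2p},
\end{align*}
so one may take $P(X_1,\cdots,X_p):=(1+X_1+\cdots+X_p)^{4p}$: by the multinomial theorem all its coefficients are strictly positive, its degree is $4p$, and $(1+X_1+\cdots+X_p)^{4p}\geq (1+X_1+\cdots+X_p)^{2p}$ for $X_i\geq 0$, so $P$ dominates the right-hand side above. (The positive-coefficient property, rather than the exact degree, is what matters for the later applications, since it makes $P(|l_1|,\cdots,|l_p|)$ monotone in each $|l_i|$; one could just as well keep the sharper degree-$2p$ polynomial $(1+X_1+\cdots+X_p)^{2p}$ and view it inside $\R[X_1,\cdots,X_p]$ as of degree $\le 4p$.) An alternative route would iterate the identity \eqref{trick-0} factor by factor, but the direct triangle-inequality argument is shorter and I would use it. There is no substantial obstacle here; the proof is entirely elementary once integrality is invoked to bound the denominators below.
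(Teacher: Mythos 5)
Your proof is correct. The one-factor bound $\tfrac{|k|^2}{|k+m|^2}\leq(1+|m|)^2$ follows exactly as you say from the triangle inequality together with $|k+m|\geq 1$ (which holds because $k+\wh l_i$ is a nonzero integer vector), and multiplying over $i$ with $|\wh l_i|\leq\sum_j|l_j|$ gives the claim with $(1+X_1+\cdots+X_p)^{2p}$, which you legitimately inflate to the degree-$4p$ polynomial $(1+X_1+\cdots+X_p)^{4p}$ with positive coefficients. The paper's route is different: it iterates the identity \eqref{trick-0} twice on each factor, applies Cauchy--Schwarz, and uses $|k+\wh l_i|^2\geq 1$ and $|k|\geq 1$ to arrive at $\tfrac{1}{|k+\wh l_i|^2}\leq\tfrac{5}{|k|^2}\big(1+4(\sum_j|l_j|)^4\big)$, hence $P(X)=5^p\big(1+4(\sum_j X_j)^4\big)^p$. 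Your argument is shorter and avoids the algebraic identity entirely; the paper's version is consistent with the expansion technique it reuses elsewhere (e.g.\ in Lemma \ref{zetageneral-lem}), but for the purposes of this lemma the two are interchangeable, since only the conclusion --- a polynomial bound of degree $4p$ with positive coefficients --- is used downstream in Lemma \ref{abs-som} and Corollary \ref{res-somH}.
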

\begin{proof} Let's fix $i$ such that $1\leq i\leq p$.
Using two times (\ref{trick-0}), Cauchy--Schwarz inequality and the
fact that $|k+\wh
l_i|^2\geq 1$, we get
\begin{align*}
\tfrac{1}{|k+\wh l_i|^2} &\leq \tfrac{1}{|k|^2} + \tfrac{2|k||\wh
l_i|+|\wh
l_i|^2}{|k|^4} + \tfrac{(2|k||\wh l_i|+|\wh l_i|^2)^2}{|k|^4|k+\wh
l_i|^2}\\
&\leq \tfrac{1}{|k|^2}+ \tfrac{2}{|k|^3}|\wh l_i| +
\big(\tfrac{1}{|k|^4}+\tfrac{4}{|k|^2}\big)|\wh l_i|^2 +
\tfrac{4}{|k|^3}|\wh l_i|^3 +
\tfrac{1}{|k|^4}|\wh l_i|^4.
\end{align*}
Since $|k|\geq 1$, and $|\wh l_i|^j \leq |\wh l_i|^4$ if $1\leq j\leq
4$, we find
\begin{align*}
&\tfrac{1}{|k+\wh l_i|^2} \leq \tfrac{5}{|k|^2} {\sum}_{j=0}^4 \,|\wh
l_i|^j \leq
\tfrac{5}{|k|^2} \big(1 + 4 |\wh l_i|^4\big)
\leq \tfrac{5}{|k|^2} \big(1 + 4 ({\sum}_{j=1}^p \, |l_j|)^4\big), \\
&\tfrac{1}{|k+\wh l_1|^2\ldots |k+\wh l_p|^2}  \leq
\tfrac{5^p}{|k|^{2p}} \big(1 + 4
({\sum}_{j=1}^p \,|l_j|)^{4}\big)^p.
\end{align*}
Taking $P(X_1,\cdots,X_p):= 5^p \big(1 + 4 ({\sum}_{j=1}^p
X_j)^{4}\big)^p$ now gives
the result.
\end{proof}

\begin{lemma}
\label{abs-som}
Let $b\in \mathcal{S}((\Z^n)^p)$, $p\in\N$, $P_j\in
\R[X_1,\cdots,X_n]$ be a
homogeneous polynomial function of degree $j$, $k\in \Z^n$,
$l:=(l_1,\cdots,l_p)\in
(\Z^n)^p$, $r\in \N_{0}$, $\phi$ be a real-valued function on
$\Z^n\times (\Z^n)^{p}$ and
$$
h(s,k,l):=\frac{b(l)\, P_j(k)\, e^{i\phi(k,l)}}{|k|^{s+r}|k+\wh
l_1|^2\cdots |k+\wh
l_p|^2} \, ,
$$
with $h(s,k,l):=0$ if, for $k\neq 0$, one of the
denominators is zero.

For all $s\in \C$ such that $\Re(s)>n+j-r-2p$, the series
$$
H(s):={{\sum}'}_{(k,l)\in (\Z^n)^{p+1}} h(s,k,l)
$$
is absolutely summable. In particular,
$$
{\sum_{k\in\Z^n}}' \sum_{l\in (\Z^n)^p} h(s,k,l) = \sum_{l\in
  (\Z^n)^p}
{\sum_{k\in\Z^n}} ' h(s,k,l)\,.
$$
\end{lemma}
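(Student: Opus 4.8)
The statement is a routine absolute-summability estimate, so the whole point is to reduce everything to the convergence of an ordinary Epstein-type series and then apply Fubini. First I would discard the diagonal-type terms where a denominator vanishes (these are set to $0$ by definition), so that on the remaining index set all factors $|k|$, $|k+\wh l_1|,\dots,|k+\wh l_p|$ are $\geq 1$. Next I would bound the oscillatory factor trivially, $|e^{i\phi(k,l)}|=1$, and bound $|P_j(k)|\ll |k|^j$ uniformly in $k$ (homogeneity of $P_j$ of degree $j$). This already gives
\begin{align*}
|h(s,k,l)| \ll \frac{|b(l)|\,|k|^{j}}{|k|^{\sigma+r}\,|k+\wh l_1|^2\cdots|k+\wh l_p|^2}
= \frac{|b(l)|}{|k|^{\sigma+r-j}\,|k+\wh l_1|^2\cdots|k+\wh l_p|^2}
\end{align*}
with $\sigma=\Re(s)$.

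\textbf{Key step: separating $k$ from $l$.} The crucial input is Lemma \ref{R-poly}, which bounds the product $\bigl(|k+\wh l_1|^2\cdots|k+\wh l_p|^2\bigr)^{-1}$ by $|k|^{-2p}\,P(|l_1|,\dots,|l_p|)$ for a fixed polynomial $P$ with positive coefficients of degree $4p$, valid precisely on the non-vanishing index set. Substituting this in, I get
\begin{align*}
|h(s,k,l)| \ll \frac{|b(l)|\,P(|l_1|,\dots,|l_p|)}{|k|^{\sigma+r-j+2p}}\,.
\end{align*}
Now the sum factorizes: $\sum_{(k,l)}|h(s,k,l)| \ll \bigl(\sum_{l\in(\Z^n)^p} |b(l)|\,P(|l_1|,\dots,|l_p|)\bigr)\cdot \bigl({\sum}'_{k\in\Z^n} |k|^{-(\sigma+r-j+2p)}\bigr)$. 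The first factor is finite because $b\in\mathcal{S}((\Z^n)^p)$ decays faster than any polynomial, so $|b(l)|\,P(|l_1|,\dots,|l_p|)$ is still rapidly decreasing and hence summable. The second factor is a shifted Epstein zeta-type series over $\Z^n\setminus\{0\}$, which converges iff the exponent exceeds $n$, i.e. iff $\sigma+r-j+2p>n$, which is exactly the hypothesis $\Re(s)>n+j-r-2p$. This establishes absolute summability of $H(s)$ on that half-plane, and the interchange of the two summation orders (the displayed ``In particular'' conclusion) follows immediately from Fubini/Tonelli for absolutely convergent double series.

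\textbf{Expected obstacle.} There is essentially no hard analytic obstacle here; the only thing requiring a little care is making sure Lemma \ref{R-poly} is applied on exactly the index set where $h\neq 0$ (one must check that $k\neq 0$ and $k\neq -\wh l_i$ for all $i$, which is precisely the condition under which the lemma is stated and under which $h$ is not declared to be $0$). A second minor point is to note that $P$ having \emph{positive} coefficients lets one dominate $P(|l_1|,\dots,|l_p|)$ cleanly without sign cancellations, and that multiplying a Schwartz sequence by such a polynomial keeps it in $\ell^1$. So the proof is short: reduce to the non-vanishing set, estimate $|P_j(k)|\ll|k|^j$ and $|e^{i\phi}|=1$, invoke Lemma \ref{R-poly}, factor the resulting bound into an $l$-sum (finite by rapid decay) times a $k$-sum (finite by the half-plane hypothesis), and conclude absolute summability plus Fubini.
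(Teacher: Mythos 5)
Your proof is correct and follows essentially the same route as the paper: bound the oscillatory factor by $1$, use Lemma \ref{R-poly} to separate the $k$- and $l$-dependence, and factor the double sum into a product of a rapidly decreasing $l$-sum and a convergent Epstein-type $k$-sum. The only cosmetic difference is that the paper keeps $|P_j(k)|$ inside the $k$-factor rather than first replacing it by $|k|^j$, which changes nothing.
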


\begin{proof}
Let $s=\sigma+i\tau \in \C$ such that
$\sigma>n+j-r-2p$. By Lemma \ref{R-poly} we get, for $k\neq 0$,
$$
|h(s,k,l)|\leq |b(l)\, P_j(k)|\, |k|^{-r-\sigma-2p}\, P(l),
$$
where $P(l):=P(|l_1|,\cdots,|l_p|)$ and $P$ is a polynomial of degree
$4p$ with
positive coefficients. Thus,
$|h(s,k,l)|\leq  F(l)\, G(k)$
where $F(l):=|b(l)|\, P(l)$ and $G(k):= |P_j(k)| |k|^{-r-\sigma-2p}$.
The summability
of $\sum_{l\in (\Z^n)^p} F(l)$ is implied by the fact that $b\in
\mathcal{S}((\Z^n)^p)$. The summability of ${\sum}'_{k\in \Z^n} G(k)$
is a consequence
of the fact that $\sigma>n+j-r-2p$. Finally, as a product of two
summable series,
${\sum_{k,l}} F(l) G(k)$ is a summable series, which proves that
${\sum_{k,l}}h(s,k,l)$ is also absolutely summable.
\end{proof}

\begin{definition}
Let $f$ be a function on $D\times (\Z^n)^p$ where $D$ is an open
neighborhood of $0$ in $\C$.

We say that $f$ satisfies (H1) if and only if there exists $\rho>0$ such that

\hspace{1cm} (i) for any $l$, $s\mapsto f(s,l)$ extends as a
holomorphic function on $U_\rho$,
where $U_\rho$ is the
open disk of center 0 and radius $\rho$,

\hspace{1cm}(ii) if $\norm{H(\cdot,l)}_{\infty,\rho}:=\sup_{s\in U_{\rho}}|H(s,l)|$, the series 
$\sum_{l\in (\Z^n)^p} \norm{H(\cdot,l)}_{\infty,\rho}$ is summable.\\
 We say that
$f$ satisfies (H2) if and only if there exists $\rho>0$ such that

\hspace{1cm}
(i) for any $l$, $s\mapsto f(s,l)$ extends as a holomorphic function
on
$U_\rho-\{0\}$,

\hspace{1cm}
(ii) for any $\delta$ such that $0<\delta<\rho$, the series
$\sum_{l\in (\Z^n)^p}
\norm{H(\cdot,l)}_{\infty,\delta,\rho}$ is summable, where
$\norm{H(\cdot,l)}_{\infty,\delta,\rho}:=\sup_{\delta<|s|<\rho}|H(s,l)|$.
\end{definition}

\begin{remark}
Note that (H1) implies (H2). Moreover, if $f$ satisfies (H1)
(resp. (H2) for $\rho>0$, then it is straightforward to check that
$f:s\mapsto \sum_{l\in (\Z^n)^p} f(s,l)$
extends as an holomorphic function on $U_\rho$ (resp. on $U_\rho
\setminus \set{0}$).
\end{remark}

\begin{corollary}
\label{res-somH}
With the same notations of Lemma \ref{abs-som}, suppose that $r+2p-j>n$, then, the function
$H(s,l):={\sum}'_{k\in \Z^n} h(s,k,l)$ satisfies (H1).
\end{corollary}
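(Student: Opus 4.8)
The goal is to verify that $H(s,l):={\sum}'_{k\in \Z^n} h(s,k,l)$ satisfies condition (H1), i.e.\ to produce $\rho>0$ such that each $s\mapsto H(s,l)$ extends holomorphically to the disk $U_\rho$, and such that $\sum_{l\in (\Z^n)^p} \norm{H(\cdot,l)}_{\infty,\rho}$ converges. The natural strategy is to exploit the expansion \eqref{trick-0} in each factor $|k+\wh l_i|^{-2}$ to peel off a ``leading'' term of the form $|k|^{-2}$ and an error term that gains a power of $|k|$ in the denominator while costing only a polynomial factor in $|l|$. Iterating this over the $p$ factors, and controlling the remainders with Lemma \ref{R-poly}, one rewrites $H(s,l)$, for $\Re(s)$ large, as a finite linear combination (with coefficients polynomial in $(|l_1|,\dots,|l_p|)$ and entire in $s$, such as binomial-coefficient factors) of series of the type ${\sum}'_{k\in\Z^n} \frac{\tilde P_j(k)\, e^{i\phi(k,l)}}{|k|^{s+r'}}$ together with an absolutely convergent tail. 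Here each $\tilde P_j$ is homogeneous and the exponent $r'$ satisfies $r+2p - j > n$ exactly so that the tail remainder, after enough iterations, lies in the region of absolute summability guaranteed by Lemma \ref{abs-som}.

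\textbf{Key steps in order.} First I would fix $N$ large (depending on $n,j,r,p$) and apply the geometric-series expansion \eqref{trick-0} to each of the $p$ denominators $|k+\wh l_i|^{-2}$, expanding up to order $N$ and collecting the remainder via Lemma \ref{R-poly}; this is the purely algebraic bookkeeping step and produces the decomposition described above. Second, for the ``main'' pieces I would invoke Theorem \ref{res-int} (or more precisely the homogeneous-polynomial version, Theorem \ref{analytic}(i.1) when $\phi$ is trivial, and the observation that the phase $e^{i\phi(k,l)}$ in the leading term is $e^{2\pi i k\cdot(\text{something})}$ coming from shifts) to see that each such series has a meromorphic continuation to all of $\C$ with at most a simple pole, located at $s + r' = j+n$, i.e.\ at $s = j+n-r'$. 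The hypothesis $r+2p-j>n$ forces each such $r'\geq r$ (and for the genuinely leading term $r'=r+2p$, giving pole location $s=j+n-r-2p<0$, outside a small disk around $0$), so all poles lie strictly to the left of $0$; hence choosing $\rho$ smaller than the distance from $0$ to the nearest such pole gives holomorphy on $U_\rho$ for each individual $l$. Third, and this is the crux, I would obtain a \emph{uniform-in-$l$} bound on $U_\rho$: the meromorphic continuations provided by Theorem \ref{res-int}/Lemma \ref{deltaf} come with explicit polynomial control in $(1+|s|)$ and, crucially, the polynomial coefficients in $|l|$ from Lemma \ref{R-poly} are fixed, so $\norm{H(\cdot,l)}_{\infty,\rho} \ll |b(l)|\cdot(1+|l|)^{C}$ for some constant $C=C(n,j,r,p)$ independent of $l$. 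Since $b\in\mathcal{S}((\Z^n)^p)$, the sequence $|b(l)|(1+|l|)^C$ is summable over $(\Z^n)^p$, which is precisely condition (H1)(ii).

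\textbf{Main obstacle.} The delicate point is not the existence of the continuation for fixed $l$ — that is immediate from Theorem \ref{res-int} — but the \emph{uniformity} of the holomorphic bound over the infinitely many shift vectors $\wh l_1,\dots,\wh l_p$. One must check that the finite expansion via \eqref{trick-0} does not lose this uniformity: each remainder after $N$ steps must be estimated with a bound whose dependence on $l$ is polynomial of a degree that does \emph{not} grow with $N$ (Lemma \ref{R-poly} is exactly tailored for this, giving a fixed degree $4p$), and the constants hidden in Lemma \ref{deltaf}'s Euler--Maclaurin argument must be tracked to be independent of the phase. Once these bookkeeping estimates are assembled, the Schwartz decay of $b$ absorbs everything and (H1) follows; this is why the corollary is stated as an immediate consequence of Lemmas \ref{R-poly} and \ref{abs-som} together with the analytic continuation results of this subsection.
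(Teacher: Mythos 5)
Your proposal misses the point of the hypothesis $r+2p-j>n$ and, as a consequence, builds an argument on machinery that is both unnecessary and not actually available. The whole content of the corollary is that this hypothesis makes the half-plane of absolute convergence from Lemma \ref{abs-som}, namely $\{\Re(s)>n+j-r-2p\}$, contain a full disk $U_\rho$ around $s=0$ (take any $\rho<r+2p-j-n$). On that disk the defining series for $H(s,l)$ already converges absolutely, so (H1)(i) is immediate --- there is nothing to ``extend''. For (H1)(ii) one only needs the trivial termwise bound: since $\bigl||k|^{-s}\bigr|\leq|k|^{\rho}$ on $U_\rho$, Lemma \ref{R-poly} gives $|h(s,k,l)|\leq |b(l)\,P_j(k)|\,|k|^{-r+\rho-2p}\,P(|l_1|,\cdots,|l_p|)$ uniformly in $s\in U_\rho$, and summing over $k$ (convergent because $\rho<r+2p-j-n$) yields $\norm{H(\cdot,l)}_{\infty,\rho}\leq K\,|b(l)|\,P(l)$ with $K$ independent of $l$; Schwartz decay of $b$ then gives summability over $l$. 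That is the paper's entire proof.

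The route you propose instead has a genuine gap. You want to expand each $|k+\wh l_i|^{-2}$ via \eqref{trick-0} and then meromorphically continue the resulting ``main'' series ${\sum}'_k \tilde P_j(k)\,e^{i\phi(k,l)}|k|^{-s-r'}$ using Theorem \ref{res-int}. But Theorem \ref{res-int} treats series with no phase factor, and Theorem \ref{analytic} only handles phases of the special form $e^{2\pi i\,k\cdot a}$ --- and even then, summing the continued functions over $l$ with uniform control is exactly the hard problem that requires the Diophantine hypothesis and the effective bounds of Lemma \ref{ieffective}. In Lemma \ref{abs-som} the phase $\phi$ is an arbitrary real-valued function of $(k,l)$, so none of those continuation results apply, and no Diophantine condition is assumed in the corollary. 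Your third step (``uniform-in-$l$ bound on the continuations'') therefore cannot be extracted from the cited results; it is precisely the uniformity that fails in general without further hypotheses. Fortunately, as explained above, no continuation is needed at all.
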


\begin{proof}
$(i)$ Let's fix $\rho>0$ such that $\rho < r+2p-j-n$. Since $r+2p-j>n$, $U_\rho$ is inside the half-plane of absolute
convergence of the series defined by $H(s,l)$. Thus, $s\mapsto H(s,l)$ is
holomorphic on $U_\rho$.\\ $(ii)$ Since $\big||k|^{-s}\big|\leq |k|^{\rho}$ for all $s\in U_\rho$ and
$k\in\Z^n \setminus \set{0}$, we get as in the above proof
$$
|h(s,k,l)|\leq |b(l)\, P_j(k)| \, |k|^{-r+\rho-2p} \,
P(|l_1|,\cdots,|l_p|).
$$
Since $\rho < r+2p-j-n$, the series ${\sum}'_{k\in \Z^n} |P_j(k)|
|k|^{-r+\rho-2p}$ is
summable.

Thus, $\norm{H(\cdot,l)}_{\infty,\rho} \leq K \, F(l)$ where $K :=
{{\sum_k}}'|P_j(k)| |k|^{-r+\rho-2p}<\infty$. We have already seen
that the series
$\sum_l F(l)$ is summable, so we get the result.
\end{proof}

We note that if $f$ and $g$ both satisfy (H1) (or (H2)),
then so does $f+g$. In the
following, we will use the equivalence relation
$$
f\sim g \Longleftrightarrow f-g \text{ satisfies (H1)}.
$$

\begin{lemma}
\label{res-som}
Let $f$ and $g$ be two functions on $D\times (\Z^n)^p$ where $D$ is an open
neighborhood of $0$ in $\C$, such that $f\sim g$ and such that $g$ satisfies (H2). Then
$$
\underset{s=0}{\Res}\sum_{l \in (\Z^n)^p} f(s,l)=\sum_{l \in (\Z^n)^p} \underset{s=0}{\Res}\ g(s,l)\, .
$$
\end{lemma}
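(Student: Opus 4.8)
\textbf{Proof strategy for Lemma \ref{res-som}.} The plan is to reduce the statement about an infinite sum of residues to the two structural hypotheses (H1) and (H2) and the elementary fact that residues commute with \emph{uniformly convergent} series of holomorphic functions on a punctured disk. First I would set $h := f - g$, so that $h$ satisfies (H1) by hypothesis: there is a $\rho>0$ such that each $s\mapsto h(s,l)$ is holomorphic on the full disk $U_\rho$ and $\sum_{l} \norm{h(\cdot,l)}_{\infty,\rho} < \infty$. By the remark after the definition of (H1)/(H2), the function $s\mapsto \sum_l h(s,l)$ is then holomorphic on $U_\rho$, in particular at $s=0$, so $\underset{s=0}{\Res}\sum_l h(s,l) = 0$. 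Hence $\underset{s=0}{\Res}\sum_l f(s,l) = \underset{s=0}{\Res}\sum_l g(s,l)$, and it suffices to prove the identity for $g$ alone.

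Next I would work with $g$, which satisfies (H2) for some $\rho>0$: each $s\mapsto g(s,l)$ is holomorphic on $U_\rho\setminus\set{0}$, and for every $\delta$ with $0<\delta<\rho$ the series $\sum_l \norm{g(\cdot,l)}_{\infty,\delta,\rho}$ is summable. Fix such a $\delta$, and let $\gamma$ be the circle $|s| = \delta'$ for some $\delta < \delta' < \rho$, oriented positively. On $\gamma$ the series $\sum_l g(s,l)$ converges uniformly (dominated by $\sum_l \norm{g(\cdot,l)}_{\infty,\delta,\rho}$), so $G(s):=\sum_l g(s,l)$ is holomorphic on the annulus $\delta < |s| < \rho$ and one may integrate term by term:
\begin{align*}
\underset{s=0}{\Res}\, G(s) = \tfrac{1}{2\pi i}\int_\gamma G(s)\, ds = \tfrac{1}{2\pi i}\int_\gamma \sum_{l} g(s,l)\, ds = \sum_{l} \tfrac{1}{2\pi i}\int_\gamma g(s,l)\, ds = \sum_{l} \underset{s=0}{\Res}\, g(s,l).
\end{align*}
The swap of $\int_\gamma$ and $\sum_l$ is justified exactly by the uniform convergence on the compact set $\gamma$, and each term $\tfrac{1}{2\pi i}\int_\gamma g(s,l)\,ds$ equals $\underset{s=0}{\Res}\, g(s,l)$ because $g(\cdot,l)$ is holomorphic on $U_\rho\setminus\set{0}$ with $\gamma$ encircling only the puncture $s=0$.

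Combining the two steps gives $\underset{s=0}{\Res}\sum_l f(s,l) = \underset{s=0}{\Res}\sum_l g(s,l) = \sum_l \underset{s=0}{\Res}\, g(s,l)$, which is the claim. The only genuinely delicate point — and the one I would be most careful about — is making sure the interchange of summation and contour integration is legitimate despite the puncture at the origin: this is why the definition of (H2) insists on a sup-norm bound over the \emph{annulus} $\delta < |s| < \rho$ rather than over the disk, and why one must choose the integration radius $\delta'$ strictly between $\delta$ and $\rho$ so that $\gamma$ lies inside the region of uniform summability. Everything else is a routine application of Cauchy's theorem and Weierstrass's theorem on uniformly convergent series of holomorphic functions.
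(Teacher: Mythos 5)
Your proof is correct and follows essentially the same route as the paper: both arguments dispose of the (H1) difference $f-g$ (which contributes nothing to any residue) and then justify the interchange of $\sum_l$ with the contour integral $\frac{1}{2\pi i}\oint$ over a circle inside the annulus of uniform summability guaranteed by (H2). The only cosmetic difference is that the paper transfers the contour argument to $f$ (using that $f$ inherits (H2) and that $\Res_{s=0}g(s,l)=\Res_{s=0}f(s,l)$ termwise), whereas you keep it on $g$ after equating the residues of the summed functions; both are equally valid.
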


\begin{proof}
Since $f\sim g$, $f$ satisfies (H2) for a certain
$\rho>0$.
Let's fix $\eta$ such that $0<\eta<\rho$ and define $C_\eta$ as the
circle of center 0
and radius $\eta$. We have
$$
\underset{s=0}{\Res}\ g(s,l) = \underset{s=0}{\Res}\ f(s,l) = \tfrac
{1}{2\pi
i}\oint_{C_\eta} f(s,l)\, ds = \int_I u(t,l) dt\, .
$$
where $I=[0,2\pi]$ and $u(t,l):=\tfrac {1}{2\pi} \eta e^{it}
f(\eta\,e^{i t},l) $. The
fact that $f$ satisfies (H2) entails that the series $\sum_{l\in
(\Z^n)^p}
\norm{f(\cdot,l)}_{\infty,C_\eta}$ is summable. Thus, since
$\norm{u(\cdot,l)}_{\infty} = \tfrac {1}{2\pi} \eta
\norm{f(\cdot,l)}_{\infty,C_\eta}$, the series $\sum_{l\in (\Z^n)^p}
\norm{u(\cdot,l)}_{\infty}$ is summable, so, 
$\int_I \sum_{l\in
(\Z^n)^p}   u(t,l) dt = \sum_{l\in (\Z^n)^p}  \int_I u(t,l)dt$ which
gives the result.
\end{proof}

\subsection{Computation of residues of zeta functions}

Since, we will have to compute residues of series, let us introduce the following

\begin{definition}
\begin{align*}
\zeta(s)&:=\sum_{n=1}^{\infty} n^{-s},\\
Z_{n}(s)&:={\sum_{k\in\Z^{n}}}'\,\,\vert k\vert^{-s},\\
\zeta_{p_{1},\dots,p_{n}}(s)&:={\sum_{k\in\Z^{n}}}'\,\,
\frac{k_1^{p_{1}}\cdots
k_n^{p_{n}}}{\vert k\vert^{s}}\,\text{ , for } p_{i}\in \N,
\end{align*}
\end{definition}
\noindent where $\zeta(s)$ is the Riemann zeta function (see \cite{HW} or \cite{Edery}).

By the symmetry $k\rightarrow -k$, it is clear that these functions
$\zeta_{p_{1},\dots,p_{n}}$ all vanish for odd values of $p_{i}$.

Let us now compute
$\zeta_{0,\cdots,0,1_{i},0\cdots,0,1_{j},0\cdots,0}(s)$ in terms of
$Z_{n}(s)$:\\
Since $\zeta_{0,\cdots,0,1_{i},0\cdots,0,1_{j},0\cdots,0}(s)
=A_{i}(s)\,\delta_{ij}$,
exchanging the components $k_{i}$ and $k_{j}$, we get
\begin{align*}
\zeta_{0,\cdots,0,1_{i},0\cdots,0,1_{j},0\cdots,0}(s)
=\tfrac{\delta_{ij}}{n}\,Z_{n}(s-2).
\end{align*}
Similarly,
\begin{align*}
{\sum}'_{\Z^{n}}\,\tfrac{k_{1}^{2}k_{2}^{2}}{\vert k\vert^{s+8}}
=\tfrac{1}{n(n-1)}
Z_{n}(s+4)- \tfrac{1}{n-1}{\sum}'_{\Z^{n}}\, \tfrac{k_{1}^4}{\vert
k\vert^{s+8}}
\end{align*}
but it is difficult to write explicitly
$\zeta_{p_{1},\dots,p_{n}}(s)$ in terms of
$Z_{n}(s-4)$ and other $Z_{n}(s-m)$ when at least four indices
$p_{i}$ are non zero.

When all $p_{i}$ are even, $\zeta_{p_{1},\dots,p_{n}}(s)$ is a
nonzero series of
fractions $\tfrac{P(k)}{\vert k\vert ^s}$ where $P$ is a homogeneous
polynomial of
degree $p_{1}+\cdots +p_{n}$. Theorem \ref{res-int} now gives us the
following

\begin{prop}
    \label{calculres}
$\zeta_{p_{1},\dots,p_{n}}$ has a meromorphic extension to the whole
plane with a
unique pole at $n+p_{1} +\cdots +p_{n}$. This pole is simple and the
residue at this
pole is
\begin{align}
    \label{formule1}
\underset{s=n+p_{1} +\cdots
+p_{n}}{\Res} \,\zeta_{p_{1},\dots,p_{n}}(s)= 2 \,
\tfrac{\Gamma(\tfrac{p_{1}+1}{2}) \cdots
\Gamma(\tfrac{p_{n}+1}{2})}
{\Gamma
(\tfrac{n+p_{1}+ \cdots + p_{n}}{2})}
\end{align}
when all $p_{i}$ are even or this residue is zero otherwise.\\
In particular, for $n=2$,
\begin{align}
\underset{s=0}{\Res} \,\,{\sum_{k\in\Z^2}}'\,\tfrac{k_{i}k_{j}} {\vert
k\vert^{s+4}}=\delta_{ij}\,\pi\, , \label{formulen=2}
\end{align}
and for $n=4$,
\begin{align}
&\underset{s=0}{\Res} \,\,{\sum_{k\in\Z^{4}}}'\,\tfrac{k_{i}k_{j}}
{\vert k\vert^{s+6}}=\delta_{ij}\tfrac{\pi^2}{2}\, ,\nonumber\\
&\label{formule2} \underset{s=0}{\Res}
\,{\sum_{k\in\Z^{4}}}'\,\tfrac{k_{i}k_{j}k_{l}k_{m}}
{\vert k\vert^{s+8}}=(\delta_{ij}\delta_{lm}+\delta_{il}\delta_{jm}
+\delta_{im}\delta_{jl})\,\tfrac{\pi^2}{12}\, .
\end{align}
\end{prop}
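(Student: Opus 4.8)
\textbf{Proof plan for Proposition \ref{calculres}.}
The plan is to derive everything from Theorem \ref{res-int} together with the known value of the Dirichlet-type angular integrals over the unit sphere. First I would observe that, by the symmetry $k\mapsto -k$, the summand $k_1^{p_1}\cdots k_n^{p_n}/|k|^s$ is odd whenever some $p_i$ is odd, so the series vanishes identically and there is nothing to prove in that case. So I may assume all $p_i$ are even. Then $P(X):=X_1^{p_1}\cdots X_n^{p_n}$ is a homogeneous polynomial of degree $\delta:=p_1+\dots+p_n$, and $\zeta_{p_1,\dots,p_n}(s)={\sum}'_{k\in\Z^n} P(k)|k|^{-s}$ is exactly the function $\zeta^P(s)$ of Theorem \ref{res-int} with $P_\delta=P$ and all other homogeneous components zero. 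That theorem gives the meromorphic continuation, tells us the only possible pole is at $\delta+n$, that it is simple, and that
$$
\underset{s=n+\delta}{\Res}\,\zeta_{p_1,\dots,p_n}(s)=\int_{u\in S^{n-1}} u_1^{p_1}\cdots u_n^{p_n}\,dS(u),
$$
the pole being genuinely present iff this integral is nonzero.

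The remaining work is the evaluation of the spherical integral $\int_{S^{n-1}} u_1^{p_1}\cdots u_n^{p_n}\,dS(u)$. The standard device is to integrate the corresponding monomial against the Gaussian $e^{-|x|^2}$ over $\R^n$ in two ways: in Cartesian coordinates it factorizes into a product of one-dimensional integrals $\int_\R t^{p_i}e^{-t^2}\,dt=\Gamma(\tfrac{p_i+1}{2})$ (using that $p_i$ is even), while in polar coordinates $x=\rho u$ it becomes $\big(\int_0^\infty \rho^{\delta+n-1}e^{-\rho^2}\,d\rho\big)\big(\int_{S^{n-1}} u_1^{p_1}\cdots u_n^{p_n}\,dS(u)\big)=\tfrac12\Gamma(\tfrac{n+\delta}{2})\int_{S^{n-1}}(\cdots)\,dS$. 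Equating the two expressions yields
$$
\int_{S^{n-1}} u_1^{p_1}\cdots u_n^{p_n}\,dS(u)=\frac{2\,\Gamma(\tfrac{p_1+1}{2})\cdots\Gamma(\tfrac{p_n+1}{2})}{\Gamma(\tfrac{n+p_1+\dots+p_n}{2})},
$$
which is formula \eqref{formule1}; note this is strictly positive, so for even $p_i$ the pole is always present. There is no real obstacle here — it is a routine Gaussian computation — but one must be a little careful to note that if some $p_i$ is odd the one-dimensional Gaussian integral vanishes, consistently with the symmetry argument.

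Finally I would specialize. For $n=2$, $\delta=2$ (so $(p_1,p_2)=(2,0)$ or $(0,2)$, giving $\zeta_{k_ik_j}$-type sums with $\delta_{ij}$), the pole is at $n+\delta=4$, hence the residue of ${\sum}'_{k\in\Z^2}k_ik_j|k|^{-(s+4)}$ at $s=0$ equals $\delta_{ij}\cdot\tfrac{2\Gamma(3/2)\Gamma(1/2)}{\Gamma(2)}=\delta_{ij}\pi$, which is \eqref{formulen=2}. For $n=4$: with $\delta=2$ the pole is at $6$, giving $\tfrac{2\Gamma(3/2)\Gamma(1/2)^3}{\Gamma(3)}=\tfrac{\pi^2}{2}$, i.e. the residue $\delta_{ij}\tfrac{\pi^2}{2}$; with $\delta=4$ and all four indices distinct the only nonzero contributions come from $(p_1,\dots,p_4)$ a permutation of $(2,2,0,0)$ or of $(4,0,0,0)$. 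For $(2,2,0,0)$ one gets $\tfrac{2\Gamma(3/2)^2\Gamma(1/2)^2}{\Gamma(4)}=\tfrac{\pi^2}{12}$, and for $(4,0,0,0)$ one gets $\tfrac{2\Gamma(5/2)\Gamma(1/2)^3}{\Gamma(4)}=\tfrac{\pi^2}{4}=3\cdot\tfrac{\pi^2}{12}$; assembling these according to which indices among $i,j,l,m$ coincide produces exactly the symmetric tensor $(\delta_{ij}\delta_{lm}+\delta_{il}\delta_{jm}+\delta_{im}\delta_{jl})\tfrac{\pi^2}{12}$ of \eqref{formule2} (the diagonal case $i=j=l=m$ being consistent since $3\cdot\tfrac{\pi^2}{12}$ matches). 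The only mildly delicate point in this last step is the bookkeeping of coincidences among the indices to see that the answer is genuinely the fully symmetric rank-4 tensor; everything else is direct substitution into \eqref{formule1}.
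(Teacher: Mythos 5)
Your proof is correct and follows essentially the same route as the paper: apply Theorem \ref{res-int} to reduce the residue to the spherical integral $\int_{S^{n-1}}u_1^{p_1}\cdots u_n^{p_n}\,dS(u)$, then evaluate it. The only difference is that you carry out the Gaussian computation explicitly where the paper simply cites the standard formula from Schwartz, and you check more index configurations for \eqref{formule2} than the paper does (which verifies only $i=j\neq l=m$ and $i=j=l=m$); both are fine.
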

\begin{proof}
Equation (\ref{formule1}) follows from Theorem (\ref{res-int})
$$
\underset{s=n+p_{1} +\cdots
+p_{n}}{\Res}\, \, \zeta_{p_{1},\dots,p_{n}}(s)=
\int_{k \in
S^{n-1}}k_{1}^{p_{1}} \cdots k_{n}^{p_{n}}\, dS(k)
$$
and standard formulae (see for instance \cite[VIII,1;22]{Schwartz}).
Equation
(\ref{formulen=2}) is a straightforward consequence of Equation
(\ref{formule1}).
Equation (\ref{formule2}) can be checked for the cases $i=j\neq l=m$
and
$i=j=l=m$.
\end{proof}
Remark that $Z_n(s)$ is an Epstein zeta-function which is associated to the quadratic form $q(x):=x_1^2+...+x_n^2$, 
so $Z_n$ satisfies the following functional equation
$$
Z_n(s)= \pi^{s-n/2} \Gamma (n/2 -s/2)\Gamma (s/2)^{-1}\,
Z_n(n-s).
$$
Since $\pi^{s-n/2} \Gamma (n/2 -s/2) \,\Gamma (s/2)^{-1}=0$
for any negative even integer $n$ and $Z_n(s)$ is meromorphic on $\C$
with only one pole at $s=n$ with residue $2 \pi^{n/2} \Gamma
(n/2)^{-1}$ according to previous proposition, so we get $Z_n(0)=
-1$. We have proved that
\begin{align}
\label{formule}
    \underset{s=0}{\Res} \,\, Z_{n}(s+n)&=2\pi^{n/2} \, \Gamma
(n/2)^{-1},\\
Z_n(0)&= -1.
\label{Zn0}
\end{align}
There are many applications of Proposition \ref{calculres} for instance in $\zeta$-regularization, multiplicative 
anomalies or Casimir effect, see for instance \cite{Edery}.

\subsection{Meromorphic continuation of a class of zeta functions}

Let $n,q\in \N$, $q\geq2$, and $p=(p_1,\dots,p_{q-1}) \in
\N_0^{q-1}$.\\
Set $I:=\{ i \mid p_i\neq 0\}$ and assume that $I\neq \emptyset$ and
$${\cal I}:=\{\alpha =(\alpha_i)_{i\in I} \mid \forall i\in I ~
\alpha_i=(\alpha_{i,1},\dots, \alpha_{i,p_i})\in
\N_0^{p_i}\}=\prod_{i\in I} \N_0^{p_i}.$$

We will use in the sequel also the following notations:

\hspace{1cm} - for $x=(x_1,\dots,x_t) \in \R^t$ recall that
$|x|_1=|x_1|+\dots+|x_t|$ and $|x|=\sqrt{x_1^2+\dots+x_t^2}$;

\hspace{1cm} - for all $\alpha =(\alpha_i)_{i\in I}
  \in {\cal I} =\prod_{i\in I} \N_0^{p_i}$,
$$|\alpha|_1=\sum_{i\in I} |\alpha_i|_1 =\sum_{i\in I}
\sum_{j=1}^{p_i} |\alpha_{i,j}| {\mbox { and }}
\genfrac(){0pt}{1}{1/2}{\alpha} =\prod_{i\in I}
\genfrac(){0pt}{1}{1/2}{\alpha_{i}}=
\prod_{i\in I} \prod_{j=1}^{p_i}
\genfrac(){0pt}{1}{1/2}{\alpha_{i,j}}.$$

\subsubsection{A family of polynomials}
In this paragraph we define a family of polynomials which plays an
important role later.

Consider first the variables:

- for $X_1,\dots, X_n$ we set $X=(X_1,\dots,X_n)$;

- for any $i=1,\dots,2q$, we consider the variables $Y_{i,1},\dots,
  Y_{i,n}$ and set
$Y_i:=(Y_{i,1},\dots, Y_{i,n})$ and $Y:=(Y_1,\dots,Y_{2q})$;

- for $Y=(Y_1,\dots,Y_{2q})$, we set for any $1\leq j\leq q$,
$\wt Y_j:= Y_1+\cdots+ Y_j+  Y_{q+1}+\cdots + Y_{q+j}$.

We define for all $\alpha =(\alpha_i)_{i\in I}\in {\cal I}
=\prod_{i\in I} \N_0^{p_i}$ the polynomial
\begin{equation}
\label{palphaxy}
P_\alpha(X,Y):= \prod_{i\in I} \prod_{j=1}^{p_i}
(2\langle X, \wt Y_i\rangle + |\wt
    Y_i|^2)^{\alpha_{i,j}}.
\end{equation}

It is clear that $P_\alpha(X,Y) \in \Z[X,Y]$, deg$_X P_\alpha \leq
|\alpha|_1$ and deg$_Y P_\alpha \leq 2 |\alpha|_1$.

Let us fix a polynomial $Q\in \R[X_1,\cdots,X_n]$ and
note $d:= \deg Q$.
For $\a\in  {\cal I}$, we want to expand $P_\alpha(X,Y) \, Q(X)$ in
homogeneous polynomials in $X$ and $Y$ so defining
$$
L(\a):=\set{\beta\in\N_0^{(2q+1)n}\, \,\vert\,\,
|\beta|_1-d_\beta \leq 2|\a|_1 \text{ and } d_\beta\leq |\a|_1+d}
$$
where
$d_\beta := \sum_1^n \beta_i$, we set
$$
\genfrac(){0pt}{1}{1/2}{\alpha} P_\alpha(X,Y) \, Q(X) =:
\sum_{\beta\in L(\a)} c_{\a,\beta} \,  X^\beta Y^\beta
$$
where $c_{\a,\beta}\in \R$, $X^\beta:= X_1^{\beta_{1}} \cdots
X_n^{\beta_{n}}$ and $Y^{\beta}:=
Y_{1,1}^{\beta_{n+1}}\cdots Y_{2q,n}^{\beta_{2(q+1)n}}$. By
definition, $X^\beta$ is a
homogeneous polynomial of degree in $X$ equals to $d_\beta$.
We note $$M_{\a,\beta}(Y):=c_{\a,\beta} \, Y^\beta.$$

\subsubsection{Residues of a class of zeta functions}

In this section we will prove the following result, used in
Proposition \ref{zeta(0)} for
the computation of the spectrum dimension of the noncommutative
torus:

\begin{theorem}
\label{zetageneral}
(i) Let $\tfrac{1}{2\pi}\Th$ be a
badly approximable matrix, and $\wt a \in \mathcal{S}
\big((\Z^{n})^{2q}\big)$. Then
$$
s\mapsto f(s):= \sum_{l\in [(\Z^n)^{q}]^2} \wt a_{l}\ {\sum_{k\in
\Z^n}}'\, \prod_{i=1}^{q-1}|k+\wt l_i|^{p_i} |k|^{-s}\, Q(k)\,
e^{ik.\Th \sum_1^{q} l_j}
$$
has a meromorphic continuation to the whole complex plane $\C$ with at most simple
possible poles at the points $s=n+d+|p|_1-m$ where $m\in \N_0$.

(ii) Let $m\in \N_0$ and set
$I(m):= \set{(\a,\beta)\in \mathcal{I}\times \N_0^{(2q+1)n} \, \vert \, \beta\in L(\a)$ where we have taken 
$m=2|\a|_1 -d_\beta +d } $.
Then $I(m)$ is a finite set and $s=n+d+|p|_1-m$ is a pole of $f$
if and only if
$$
C(f,m):= \sum_{l\in Z} \wt a_l
\sum_{(\a,\beta)\in I(m)} M_{\a,\beta}(l)  \int_{u\in S^{n-1}}
u^\beta\, dS(u) \neq 0,
$$
with $Z:=\{l \,\,\vert\,\, \sum_1^{q} l_j=0 \}$ and the convention
$\sum_{\emptyset} =0$.
In that case $s=n+d+|p|_1-m$ is a simple pole of residue
$\underset{s= n+d+|p|_1 -m}{\Res} \, f(s) = C (f,m)$.
\end{theorem}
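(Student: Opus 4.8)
\textbf{Proof strategy for Theorem \ref{zetageneral}.}

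The plan is to reduce the statement to the already-established machinery of Section \ref{Residues of series}, in particular Theorem \ref{analytic} and Lemma \ref{res-som}, by expanding the product $\prod_{i=1}^{q-1}|k+\wt l_i|^{p_i}$ into a sum of terms that can be handled term by term. First I would use, for each factor $|k+\wt l_i|^{p_i}$, the binomial expansion $|k+\wt l_i|^{p_i} = (|k|^2 + 2\langle k,\wt l_i\rangle + |\wt l_i|^2)^{p_i/2}$ and write $(1 + u)^{p_i/2}$ with $u = (2\langle k,\wt l_i\rangle + |\wt l_i|^2)/|k|^2$ as a finite Taylor polynomial of arbitrary length $M$ plus a controlled remainder, exactly as in the proof of Lemma \ref{ieffective}. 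This produces, modulo a remainder that is holomorphic in $\{\Re(s) > -N\}$ for $N$ as large as we please, a finite sum of terms of the form $\genfrac(){0pt}{1}{1/2}{\alpha} P_\alpha(k,\text{(shifts)})\,|k|^{-s-2|\alpha|_1}$; after multiplying by $Q(k)$ and collecting homogeneous components one obtains precisely the polynomials $M_{\a,\beta}$ and the monomials $X^\beta Y^\beta$ of the preparatory paragraph, so that $f(s)$ becomes, up to a function holomorphic on an arbitrarily large half-plane, a finite sum over $(\a,\beta)$ of expressions $\sum_{l} \wt a_l\, M_{\a,\beta}(l)\, {\sum_k}' k^\beta |k|^{-s-2|\a|_1}\, e^{ik.\Th\sum_j l_j}$.

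Next I would apply Theorem \ref{analytic}(ii) to each such inner term. Since $\tfrac{1}{2\pi}\Th$ is badly approximable, with $P(X) = X^\beta$ the homogeneous polynomial of degree $d_\beta$ and with the new Schwartz sequence $l \mapsto \wt a_l\, M_{\a,\beta}(l)$ (which is Schwartz because $\wt a$ is and $M_{\a,\beta}$ is polynomial), the function $\sum_l \wt a_l M_{\a,\beta}(l) f_{\Th\sum l_j}(s)$ extends meromorphically to $\C$ with at most a simple pole at $s = d_\beta + n$, and the residue there equals $\big(\sum_{l\in Z} \wt a_l M_{\a,\beta}(l)\big)\int_{S^{n-1}} u^\beta\, dS(u)$. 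Shifting $s \mapsto s + 2|\a|_1$ moves this pole to $s = d_\beta + n + 2|\a|_1 = n + d + |p|_1 - m$ once we record the constraint $m = 2|\a|_1 - d_\beta + d$ and note $\sum_i p_i = |p|_1$ forces $\deg_X(P_\alpha Q) \le |\a|_1 + d$, so that the relevant $(\a,\beta)$ are exactly those in $L(\a)$ with that value of $m$; the finiteness of $I(m)$ follows from the bound $d_\beta \le |\a|_1 + d$ together with $|\a|_1 = (m + d_\beta - d)/2 \le (m + |\a|_1)/2$, giving $|\a|_1 \le m - d$ wait — more simply, both $d_\beta$ and $|\a|_1$ are bounded in terms of $m$ and $d$. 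Summing the finitely many contributions at a fixed pole location $s = n+d+|p|_1 - m$ yields $\underset{s=n+d+|p|_1-m}{\Res} f(s) = C(f,m)$, and $f$ has a pole there precisely when $C(f,m)\neq 0$.

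The one genuinely delicate point — and where I expect to spend real effort — is the \emph{uniformity} of all estimates in the summation variable $l$, so that the term-by-term meromorphic continuation survives summation against the Schwartz weight $\wt a$. This is exactly the role of the effective Lemma \ref{ieffective}: its bound $F_i(G;a;s) \le C_i(1+|s|)^{B_i}\, d(a.u,\Z)^{-A_i}$ depends on $a$ only through $d(a.u,\Z)$, and the badly-approximable hypothesis on $\tfrac{1}{2\pi}\Th$ gives, as in \eqref{hypothesisOK!}, $d\big((\Th\sum_i l_i).u, \Z\big) \ge c\,(1+|l|)^{-\delta}$ with $c,\delta$ independent of $l$. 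Feeding this into the expansion above shows that on any compact $K \subset \{\Re(s) > -N\}$ the contribution of each $l$ is $\ll C(1+|l|)^{D}$ for constants independent of $l$, which is dominated by $|\wt a_l|$; hence the series converges locally uniformly and defines a holomorphic function there. Combined with Lemma \ref{abs-som} (to justify exchanging $\sum_k$ and $\sum_l$ in the region of absolute convergence) and Corollary \ref{res-somH}/Lemma \ref{res-som} (to commute the residue at each pole with the sum over $l$, noting that after subtracting the finitely many singular terms the remainder satisfies (H1)), this completes the argument. The bookkeeping of which $(\a,\beta)$ land at which pole, and checking that $I(m)$ is finite, is routine once the polynomial preparation of $P_\alpha(X,Y)Q(X)$ is in place.
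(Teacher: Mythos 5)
Your proposal is correct and follows essentially the same route as the paper: Taylor-expand each factor $|k+\wt l_i|^{p_i}$ (the paper's Lemma \ref{zetageneral-lem}, valid for $|k|$ large relative to $|l|$, with the small-$k$ part absorbed into a remainder bounded polynomially in $l$), decompose $\genfrac(){0pt}{1}{1/2}{\alpha}P_\alpha(k,l)Q(k)$ into monomials $M_{\a,\beta}(l)\,k^\beta$, and apply Theorem \ref{analytic}(ii) term by term, with the badly-approximable hypothesis supplying the $l$-uniform bounds needed to sum against the Schwartz weight. The pole bookkeeping and the finiteness of $I(m)$ are handled exactly as in the paper.
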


In order to prove the theorem above we need the following

\begin{lemma}
\label{zetageneral-lem}
For all $N\in \N$ we have
$$
 \prod_{i=1}^{q-1} |k+\wt l_i|^{p_i}=
\sum_{\alpha =(\alpha_i)_{i\in I}
  \in \prod_{i\in I}\{0,\dots,N\}^{p_i}}
  \genfrac(){0pt}{1}{1/2}{\alpha}\,
\tfrac{P_\alpha (k,l)}{|k|^{2|\alpha|_1-|p|_1}}
+\mathcal{O}_N(|k|^{|p|_1- (N+1)/2})
$$
uniformly in $k\in \Z^n$ and $l\in (\Z^n)^{2q}$ such that $|k| > U(l):=36\,
(\sum_{i=1,\, i\neq q}^{2q-1}|l_{i}|)^4$.
\end{lemma}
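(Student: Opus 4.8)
The plan is to prove Lemma \ref{zetageneral-lem} by a binomial (Taylor) expansion of each factor $|k+\wt l_i|^{p_i}$ in powers of $\tfrac{2\langle k,\wt l_i\rangle +|\wt l_i|^2}{|k|^2}$, controlling the remainder uniformly for $|k|>U(l)$, and then multiplying out the $|I|\leq q-1$ resulting finite sums. First I would write, for each $i\in I$,
\begin{align*}
|k+\wt l_i|^{p_i} = \big(|k|^2 + 2\langle k,\wt l_i\rangle + |\wt l_i|^2\big)^{p_i/2}
= |k|^{p_i}\Big(1 + \tfrac{2\langle k,\wt l_i\rangle + |\wt l_i|^2}{|k|^2}\Big)^{p_i/2},
\end{align*}
and expand the last factor via the generalized binomial series $(1+x)^{p_i/2}=\sum_{j\geq 0}\binom{p_i/2}{j}x^j$. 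Here the key quantitative input is that, under the hypothesis $|k|>U(l)=36(\sum_{i\neq q}|l_i|)^4$, one has $|\wt l_i|\leq \sum_{i\neq q}|l_i|$ (since $\wt l_i$ is a signed partial sum of the $l_j$'s, $j\neq q$) and hence $\big|\tfrac{2\langle k,\wt l_i\rangle + |\wt l_i|^2}{|k|^2}\big|\leq \tfrac{2|\wt l_i|}{|k|}+\tfrac{|\wt l_i|^2}{|k|^2} \leq \tfrac{2}{6}+\tfrac{1}{36}<\tfrac12$, so the binomial series converges and the tail $\sum_{j>N}$ is bounded by a constant times $2^{-N}\leq$ a constant times $|k|^{-(N+1)/2}\cdot(\text{powers of }|\wt l_i|)$; multiplying by the prefactor $|k|^{p_i}$ turns this into an $\mathcal O_N(|k|^{p_i-(N+1)/2})$ error, absorbed into the stated $\mathcal O_N(|k|^{|p|_1-(N+1)/2})$ after taking the product over $i\in I$.

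Next I would truncate each series at $j=N$, so that $|k+\wt l_i|^{p_i}=\sum_{\alpha_i\in\{0,\dots,N\}^{p_i}}$ — wait, more directly: truncating at order $N$ gives $|k+\wt l_i|^{p_i}=|k|^{p_i}\sum_{j=0}^{N}\binom{p_i/2}{j}\big(\tfrac{2\langle k,\wt l_i\rangle+|\wt l_i|^2}{|k|^2}\big)^j + \mathcal O_N(|k|^{p_i-(N+1)/2})$. To match the notation of the lemma, which uses a multi-index $\alpha_i=(\alpha_{i,1},\dots,\alpha_{i,p_i})\in\N_0^{p_i}$ and the polynomial $P_\alpha$ defined in \eqref{palphaxy} as $\prod_{i\in I}\prod_{j=1}^{p_i}(2\langle X,\wt Y_i\rangle+|\wt Y_i|^2)^{\alpha_{i,j}}$, I would rewrite $\big(2\langle k,\wt l_i\rangle+|\wt l_i|^2\big)^{j}$ via the trivial identity $x^{p_i\cdot(\text{average})}$ — actually the cleanest route is to observe that $\binom{p_i/2}{j}x^j$ summed over $j$ is being re-indexed by the $p_i$-tuple $\alpha_i$ with $x^{|\alpha_i|_1}$ weighted by $\binom{1/2}{\alpha_i}=\prod_j\binom{1/2}{\alpha_{i,j}}$; one checks the combinatorial identity $\sum_{|\alpha_i|_1=j,\ \alpha_i\in\{0,\dots,N\}^{p_i}}\binom{1/2}{\alpha_i}=\binom{p_i/2}{j}$ holds at least formally (Vandermonde/Chu identity for binomial coefficients with half-integer top), so that $\sum_{\alpha_i\in\{0,\dots,N\}^{p_i}}\binom{1/2}{\alpha_i}\,x^{|\alpha_i|_1}$ reproduces the truncated $(1+x)^{p_i/2}$ up to a higher-order error (the over-counting beyond $j=p_i N$ is again $\mathcal O(2^{-N})$-type and absorbable). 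Then multiplying over $i\in I$, collecting $\binom{1/2}{\alpha}=\prod_{i\in I}\binom{1/2}{\alpha_i}$ and using $\prod_i\prod_j(2\langle k,\wt l_i\rangle+|\wt l_i|^2)^{\alpha_{i,j}}/|k|^{2\alpha_{i,j}} = P_\alpha(k,l)/|k|^{2|\alpha|_1}$ together with $\prod_i |k|^{p_i}=|k|^{|p|_1}$ gives exactly the main term $\sum_\alpha\binom{1/2}{\alpha}P_\alpha(k,l)|k|^{|p|_1-2|\alpha|_1}$, and the accumulated errors combine to $\mathcal O_N(|k|^{|p|_1-(N+1)/2})$ uniformly in $l$ (the implied constant depends only on $N$, $p$, $q$, since all $|\wt l_i|$-powers appearing in the error bounds are dominated by powers of $|k|$ via the constraint $|k|>U(l)$).

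The main obstacle I anticipate is the bookkeeping in the re-indexing step: verifying that the single-variable binomial truncation $\sum_{j\le N}\binom{p_i/2}{j}x^j$ really equals $\sum_{\alpha_i\in\{0,\dots,N\}^{p_i}}\binom{1/2}{\alpha_i}x^{|\alpha_i|_1}$ modulo an error of the right order, and then tracking how the product over $i\in I$ and the extra polynomial factor $Q$ (in the companion statement, Theorem \ref{zetageneral}) redistribute degrees — i.e. confirming $\deg_X P_\alpha\le|\alpha|_1$, $\deg_Y P_\alpha\le 2|\alpha|_1$ and that the remainder's dependence on $l$ is polynomially bounded so it stays Schwartz-summable against $\wt a$. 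This is purely combinatorial/estimation work with no conceptual difficulty once the $|k|>U(l)$ smallness bound $\big|\tfrac{2\langle k,\wt l_i\rangle+|\wt l_i|^2}{|k|^2}\big|<\tfrac12$ is in hand; the rest of Theorem \ref{zetageneral} then follows by inserting this expansion into $f(s)$, exchanging the (now absolutely convergent, by Lemma \ref{abs-som}) sums over $k$ and $l$, and applying Theorem \ref{analytic}(iii) to the non-diagonal part together with Theorem \ref{res-int} (equivalently Proposition \ref{calculres}) to the diagonal part $l\in Z$ to extract the residue $C(f,m)=\sum_{l\in Z}\wt a_l\sum_{(\alpha,\beta)\in I(m)}M_{\alpha,\beta}(l)\int_{S^{n-1}}u^\beta\,dS(u)$.
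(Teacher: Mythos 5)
Your overall strategy is the paper's: expand each $|k+\wt l_i|^{p_i}$ binomially in $x_i:=\tfrac{2\langle k,\wt l_i\rangle+|\wt l_i|^2}{|k|^2}$, truncate, and multiply out. The only structural difference is cosmetic: the paper writes $|k+\wt l_i|^{p_i}$ as a product of $p_i$ copies of $|k|(1+x_i)^{1/2}$ and expands each copy separately, which produces the multi-index $\alpha_i\in\{0,\dots,N\}^{p_i}$ and the coefficient $\prod_j\binom{1/2}{\alpha_{i,j}}$ directly, with no re-indexing needed; you instead expand $(1+x_i)^{p_i/2}$ with coefficients $\binom{p_i/2}{j}$ and then invoke Chu--Vandermonde to convert. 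That identity does hold (it is just $(1+x)^{1/2}\cdots(1+x)^{1/2}=(1+x)^{p_i/2}$ as formal series), and the mismatch between the truncations (terms with $N<|\alpha_i|_1\le p_iN$) is indeed absorbable into the error, so this part is fine, if more laborious than necessary.

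There is, however, a genuine gap in your error estimate. The stated remainder $\mathcal{O}_N(|k|^{|p|_1-(N+1)/2})$ requires the tail of the binomial series to decay in $|k|$, and for that you need a smallness bound on $x_i$ that itself decays in $|k|$. The hypothesis $|k|>U(l)=36\big(\sum_{i\neq q}|l_i|\big)^4$ gives exactly this: with $L:=\sum_{i\neq q}|l_i|\ge|\wt l_i|$ one has $|x_i|\le \tfrac{2L}{|k|}+\tfrac{L^2}{|k|^2}\le\tfrac{3L^2}{|k|}=\tfrac{\sqrt{U(l)}}{2|k|}<\tfrac{1}{2\sqrt{|k|}}$, whence $\sum_{j>N}|x_i|^j=\mathcal{O}_N\big((2\sqrt{|k|})^{-(N+1)}\big)=\mathcal{O}_N(|k|^{-(N+1)/2})$ uniformly in $l$. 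You instead derive only the $k$-independent bound $|x_i|<\tfrac12$, which yields a tail of size $\mathcal{O}(2^{-N})$ --- a constant in $k$ --- and your attempted repair, the inequality ``$2^{-N}\le C\,|k|^{-(N+1)/2}\cdot(\text{powers of }|\wt l_i|)$,'' is false: take $l=0$, so the constraint is vacuous and the right-hand side tends to $0$ as $|k|\to\infty$ while the left-hand side is a fixed positive constant. The fix is immediate (use the sharper bound above, which is precisely equation \eqref{devjustification} of the paper), but as written your argument does not establish the remainder claimed in the lemma.
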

\begin{proof}
For $i=1,\dots,q-1$, we have uniformly in $k\in \Z^n$ and $l\in
(\Z^n)^{2q}$ with $|k| > U(l)$,
\begin{equation}
\label{devjustification}
\tfrac{\big|2\langle k, \wt l_i \rangle+|\wt l_i|^2\big|}{|k|^2}
\leq\tfrac{\sqrt{U(l)}}{2|k|} < \tfrac{1}{2\sqrt{|k|}}.
\end{equation}
In that case,
\begin{eqnarray*}
|k+\wt l_i|&=& \big(|k|^2+2\langle k, \wt l_i\rangle
  + |\wt l_i|^2\big)^{1/2} =
|k| \big(1+ \tfrac{2\langle k, \wt l_i\rangle
  + |\wt l_i|^2}{|k|^2}\big)^{1/2} =
 \sum_{u=0}^{\infty}  \genfrac(){0pt}{1}{1/2}{u} \,
\tfrac{1}{|k|^{2u-1}}P^i_u(k,l)
\end{eqnarray*}
where for all $i=1,\dots, q-1$ and for all $u\in \N_0$,
\begin{equation*}
P^i_u(k,l):=\big(2\langle k, \wt l_i\rangle + |\wt l_i|^2\big)^u,
\end{equation*}
with the convention $P^i_0(k,l):=1$.

In particular $P^i_u(k,l)\in \Z[k,l]$,
$\deg_{k} P^i_u\leq u$ and $\deg_{l} P^i_u\leq 2u$.
Inequality (\ref{devjustification}) implies that for all
$i=1,\dots,q-1$
and for all $u\in \N$,
$$
\tfrac{1}{|k|^{2u}}\,|P^i_u(k,l)|\leq \big(2\sqrt{|k|}\big)^{-u}
$$
uniformly in $k\in \Z^n$ and $l\in (\Z^n)^{2q}$ such that $|k| > U(l)$.

Let $N\in \N$. We deduce from the previous that for any
$k\in \Z^n$ and $l\in (\Z^n)^{2q}$ with $|k| > U(l)$ and
for all $i=1,\dots,q-1$, we have
\begin{eqnarray*}
|k+\wt l_i|&=& \sum_{u=0}^{N}
\genfrac(){0pt}{1}{1/2}{u} \, \tfrac{1}{|k|^{2u-1}}P^i_u(k,l)+
\mathcal{O}\big(\sum_{u>N}|k|\,|\genfrac(){0pt}{1}{1/2}{u}|\,
(2\sqrt{|k|})^{-u}\big)\\
&=& \sum_{u=0}^{N} \genfrac(){0pt}{1}{1/2}{u} \,
\tfrac{1}{|k|^{2u-1}}P^i_u(k,l)+\mathcal{O}_N
\big(\tfrac{1}{|k|^{(N-1)/2}}\big).
\end{eqnarray*}
It follows that for any $N\in \N$, we have uniformly in
$k\in \Z^n$ and $l\in (\Z^n)^{2q}$ with $|k| > U(l)$ and
for all $i\in I$,
$$
|k+\wt l_i|^{p_i}=\sum_{\alpha_i \in
  \{0,\dots,N\}^{p_i}} \genfrac(){0pt}{1}{1/2}{\alpha_i} \,
\tfrac{1}{|k|^{2|\alpha_i|_1-p_i}}P^i_{\alpha_i} (k,l)
+\mathcal{O}_N \left(\tfrac{1}{|k|^{(N+1)/2-p_i}}\right)
$$
where $P^i_{\alpha_i} (k,l)=\prod_{j=1}^{p_i} P^i_{\alpha_{i,j}}(k,l)$
for all $\alpha_i =(\alpha_{i,1},\dots,\alpha_{i,p_i})\in
  \{0,\dots,N\}^{p_i}$ and
$$
\prod_{i\in I} |k+\wt l_i|^{p_i}=\sum_{\alpha=(\alpha_i) \in
  \prod_{i\in I} \{0,\dots,N\}^{p_i}} \genfrac(){0pt}{1}{1/2}{\alpha}
\,
\tfrac{1}{|k|^{2|\alpha|_1-|p|_1}}P_{\alpha} (k,l)+\mathcal{O}_N
\big(\tfrac{1}{|k|^{(N+1)/2 -|p|_1}}\big)
$$
where $P_{\alpha } (k,l)=\prod_{i\in I} P^i_{\alpha_{i}}(k,l)=
\prod_{i\in I} \prod_{j=1}^{p_i} P^i_{\alpha_{i,j}}(k,l)$.
\end{proof}

\medskip

\begin{proof}[Proof of Theorem \ref{zetageneral}]

$(i)$ All $n$, $q$, $p=(p_1,\dots,p_{q-1})$ and $\wt a
\in {\cal S}\left((\Z^n)^{2q}\right)$ are fixed as above and we
define formally for any $l \in (\Z^n)^{2q}$
\begin{equation}
\label{fls}
F(l,s):= {\sum_{k\in \Z^n}}' \,
\prod_{i=1}^{q-1} |k+\wt l_i|^{p_i}\, Q(k)\,
e^{ik.\Th \sum_1^{q}l_j}\,|k|^{-s}.
\end{equation}
Thus, still formally,
\begin{equation}\label{fsexpfls}
f(s):=\sum_{l\in (\Z^n)^{2q}} \wt a_l\ F(l,s).
\end{equation}
It is clear that $F(l,s)$ converges absolutely in the half
plane $\{\sigma=\Re(s) >n+d+|p|_1\}$ where $d=\deg Q$.

Let $N\in \N$. Lemma \ref{zetageneral-lem} implies
that for any $l\in (\Z^n)^{2q}$ and for $s\in \C$ such
that $\sigma >n+|p|_1+d$,
\begin{align*}
F(l,s)&= {\sum_{|k|\leq U(l)}}' \,
\prod_{i=1}^{q-1} |k+\wt l_i|^{p_i}\, Q(k)\,
e^{ik.\Th \sum_1^{q}l_j}\,|k|^{-s} \\
& \quad \quad  +\sum_{\alpha =(\alpha_i)_{i\in I}
  \in \prod_{i\in
I}\{0,\dots,N\}^{p_i}}\genfrac(){0pt}{1}{1/2}{\alpha}
\sum_{|k| > U(l)} \tfrac{1}{|k|^{s+2|\alpha|_1-|p|_1}}P_\alpha (k,l)
Q(k)\, e^{ik.\Th \sum_1^{q}l_j}
+ G_N(l,s).
\end{align*}
where $s\mapsto G_N(l,s)$ is a holomorphic function in the half-plane
$D_N:=\{\sigma > n+d+|p|_1-\tfrac{N+1}{2}\}$ and verifies in it the
bound
$G_N(l,s) \ll_{N,\sigma} 1$ uniformly in $l$.

It follows that
\begin{equation}
\label{flsexpzeta}
F(l,s)= \sum_{\alpha =(\alpha_i)_{i\in I}
  \in \prod_{i\in I}\{0,\dots,N\}^{p_i}} H_{\a}(l,s)+ R_N(l,s),
\end{equation}
where
\begin{eqnarray*}
H_{\a}(l,s)&:=&{\sum_{k\in \Z^n}}'\,
\genfrac(){0pt}{1}{1/2}{\alpha} \,
\tfrac {1}{|k|^{s+2|\alpha|_1-|p|_1}}P_\alpha (k,l)\, Q(k)\,
e^{ik.\Th \sum_1^{q}l_j},\\
R_N(l,s)&:=& {\sum_{|k|\leq U(l)}}' \,
\prod_{i=1}^{q-1} |k+\wt l_i|^{p_i}\, Q(k)\,
e^{ik.\Th \sum_1^{q}l_j}\,|k|^{-s}\\
& & \quad -{\sum_{|k|\leq U(l)}}' \quad\sum_{\alpha =
(\alpha_i)_{i\in I} \in \prod_{i\in I}\{0,\dots,N\}^{p_i}}
  \genfrac(){0pt}{1}{1/2}{\alpha} \tfrac{P_\alpha (k,l)}
{|k|^{s+2|\alpha|_1-|p|_1}}Q(k)\,
  e^{ik.\Th \sum_1^{q}l_j}
+ G_N(l,s).
\end{eqnarray*}
In particular there exists $A(N)>0$ such that
$s\mapsto R_N(l,s)$ extends holomorphically to the half-plane
$D_N$ and verifies in it the bound
$R_N(l,s) \ll_{N,\sigma} 1 +|l|^{A(N)}$ uniformly in $l$.

Let us note formally
$$
h_\a(s):= \sum_l \wt a_l\, H_\a(l,s).
$$
Equation (\ref{flsexpzeta}) and $R_N(l,s) \ll_{N,\sigma} 1
+|l|^{A(N)}$ imply that
\begin{equation}
\label{fssimN}
f(s) \sim_N \sum_{\alpha =(\alpha_i)_{i\in I}
  \in \prod_{i\in I}\{0,\dots,N\}^{p_i}} h_\a(s),
\end{equation}
where $\sim_N$ means modulo a holomorphic function in $D_N$.

Recall the decomposition
$\genfrac(){0pt}{1}{1/2}{\alpha} \,P_\a(k,l) \, Q(k)=\sum_{\beta\in L(\a)} M_{\a,\beta}(l) \, k^\beta$ and we
decompose similarly $h_{\a}(s) =\sum_{\beta\in L(\a)} h_{\a,\beta}(s).$\\
Theorem \ref{analytic} now implies that for all 
$\alpha =(\alpha_i)_{i\in I}  \in \prod_{i\in I}\{0,\dots,N\}^{p_i}$ and $\beta\in L(\a)$,

\quad - the map $s\mapsto h_{\a,\beta}(s)$ has a meromorphic
continuation to the whole complex plane $\C$ with only one
simple possible pole at $s=n+ |p|_1 - 2|\a|_1 +d_\beta$,

\quad - the residue at this point is equal to
\begin{equation}
\label{res-halphaj}
\underset {s=n+ |p|_1 - 2|\a|_1 +d_\beta}{\Res}\,
h_{\a,\beta}(s) =
\sum_{l\in \mathcal{Z}} \wt a_l\, M_{\a,\beta}(l) \int_{u\in S^{n-1}}
u^\beta dS(u)
\end{equation}
where $\mathcal{Z}:=\{l\in (\Z)^{n})^{2q} \, : \, \sum_1^{q} l_j =0
\}$.
If the right hand side is zero, $h_{\a,\beta}(s)$ is holomorphic on
$\C$.

By (\ref{fssimN}), we deduce therefore that
$f(s)$ has a meromorphic continuation on the halfplane $D_N$,
with only simple possible poles in the set
$
\set{n+|p|_1 + k  \,: \,-2N|p|_1\leq k \leq d}.
$
Taking now $N\to \infty$ yields the result.

$(ii)$ Let $m\in \N_0$ and set
$I(m):= \set{(\a,\beta)\in \mathcal{I}\times \N_0^{(2q+1)n} \,\,\vert\,\, \beta\in L(\a)$ and  
$m=2|\a|_1 -d_\beta +d } $.
If $(\a,\beta)\in I(m)$, then $|\a|_1 \leq m$ and
$|\beta|_1\leq 3m+d$, so $I(m)$ is finite.

With a chosen $N$ such that $2N|p|_1+d>m$, we get by (\ref{fssimN})
and (\ref{res-halphaj})
$$
\underset{s= n+d+|p|_1 -m}{\Res} \,f(s) =
\sum_{l\in \mathcal{Z}} \wt a_l
\sum_{(\a,\beta)\in I(m)} M_{\a,\beta}(l)  \int_{u\in S^{n-1}}
u^\beta\, dS(u)=C(f,m)
$$
with the convention $\sum_{\emptyset} =0$. Thus, $n+d+|p|_1 - m$ is a
pole of $f$ if and only if
$C(f,m)\neq 0$.
\end{proof}

\newpage
\section{The noncommutative torus}
\label{The noncommutative torus}

The aim of this section is to compute the spectral action of the noncommutative torus. After the basic definitions, 
the result is presented in Theorem \ref{main}. Due to a fundamental appearance of small divisors, the number theory 
is involved via a Diophantine condition. As a consequence, the result which essentially says that the spectral 
action of the noncommutative torus coincide with the action of the ordinary torus (up few constants) is awfully 
technical and use the machinery of Section \ref{Residues of series}. A bunch of proofs are not given, but the 
essential lemmas are here: they show to the reader how life can be hard in noncommutative geometry!

Reference: \cite{EILS}.

\subsection{Definition of the nc-torus}
\label{Definition of the nc-torus}

Let $\Coo(\T^n_\Th)$ be the smooth noncommutative $n$-torus associated to a non-zero skew-symmetric 
deformation matrix $\Th \in M_n(\R)$. It was introduced by Rieffel \cite{RieffelRot} and Connes  \cite{ConnesTorus} 
to generalize the $n$-torus $\T^n$. \\
This means that $\Coo(\T^n_\Th)$ is the algebra generated by $n$
unitaries $u_i$, $i=1,\dots,n$ subject to the relations
\begin{equation}
\label{rel}
u_l\,u_j=e^{i\,\Th_{lj}}\,u_j\,u_l,
\end{equation}
and with Schwartz coefficients: an element $a\in\Coo(\T_\Th^n)$ can be written as
$a=\sum_{k\in\Z^n}a_k\,U_k$, where $\{a_k\}\in\SS(\Z^n)$ with the Weyl elements defined by 
$$
U_k\vc e^{-\frac i2 k.\chi k}\,u_1^{k_1}\cdots u_n^{k_n},
$$ 
$k\in\Z^n$, relation \eqref{rel} reads
\begin{equation}
\label{rel1}
U_{k}U_{q}=e^{-\frac i2 k.\Theta q} \,U_{k+q}, \text{ and } U_{k}U_{q}=e^{-i k.\Theta q} \,U_{q}U_{k}
\end{equation}
where $\chi$ is the matrix restriction of $\Theta$ to its upper triangular part.
Thus unitary operators $U_{k}$ satisfy 
\begin{align*}
U_{k}^*=U_{-k} \text{ and }[U_{k},U_{l}]=-2i\,\sin(\tfrac 12 k.\Th l)\,U_{k+l}.
\end{align*}

Let $\tau$ be the trace on $\Coo(\T^n_\Th)$ defined by 
$$
\tau\big( \sum_{k\in\Z^n}a_k\,U_k \big)\vc a_0
$$
and $\H_{\tau}$ be the GNS Hilbert space obtained by completion of $ \Coo(\T_\Th^n)$
with respect of the norm induced by the scalar product 
$$
\langle a,b\rangle\vc \tau(a^*b).
$$

On $\H_{\tau}=\set{\sum_{k\in\Z^n}a_k\,U_k \,\, \vert \, \, \{a_{k}\}_{k} \in l^2(\Z^n) }$, we consider the left and 
right regular representations of $\Coo(\T_\Th^n)$ by bounded operators, that we denote respectively
by $L(.)$ and $R(.)$.

Let also $\delta_\mu$, $\mu\in \set{1,\dots,n}$, be the $n$ (pairwise commuting) canonical derivations, defined by
\begin{equation}
\delta_\mu(U_k)\vc ik_\mu U_k. \label{dUk}
\end{equation}

We need to fix notations: let 
$$
\A_{\Th}\vc C^{\infty}(\T_{\Th}^n) \text{ acting on } \H\vc \H_{\tau}\otimes \C^{2^m}
$$ 
with $n=2m$ or $n=2m+1$ (i.e., $m=\lfloor \tfrac n2 \rfloor$ is the integer part of 
$\tfrac n2$), the square integrable sections of the trivial spin bundle over $\T^n$.

Each element of $\A_{\Th}$ is represented on $\H$ as $L(a)\otimes1_{2^m}$. The Tomita conjugation 
$$
J_{0}(a)\vc a^*
$$
satisfies $[J_{0},\delta_{\mu}]=0$ and we define 
$$
J\vc J_{0}\otimes C_{0}
$$
where $C_{0}$ is an operator on $\C^{2^m}$. The Dirac-like operator is given by
\begin{align}
\label{defDirac}
\DD\vc -i\,\delta_{\mu}\otimes \gamma^{\mu},
\end{align}
where we use hermitian Dirac matrices $\gamma$. It is defined and symmetric on the dense subset of $\H$ 
given by $C^{\infty}(\T_{\Th}^n) \otimes \C^{2^{m}}$. We still note $\DD$ its selfadjoint extension. This
implies
\begin{align}
    \label{CGamma}
C_{0}\ga^{\alpha}=-\eps \ga^\alpha C_{0},
\end{align}
and
$$
\DD\ U_k \otimes e_i = k_\mu U_k \otimes \gamma^\mu e_i ,
$$
where $(e_i)$ is the canonical basis of $\C^{2^m}$. Moreover, $C_{0}^2=\pm 1_{2^m}$
depending on the parity of $m$. Finally, one introduces the chirality, which in the even case is 
$$
\chi\vc id \otimes (-i)^{m} \gamma^1 \cdots \gamma^{n}.
$$
This yields a spectral triple:
\begin{theorem}
The 5-tuple $(\A_{\Th},\H,\DD,J,\chi)$ is a real regular spectral triple of dimension $n$. It satisfies the finiteness 
and orientability conditions of Definition \ref{defspectraltriple}. It is $n$-summable and its $KO$-dimension is 
also $n$.
\end{theorem}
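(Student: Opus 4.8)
The plan is to verify each of the properties listed in the theorem for the $5$-tuple $(\A_{\Th},\H,\DD,J,\chi)$ by direct computation on the Weyl basis $\{U_k \otimes e_i\}$, which is an orthonormal basis of $\H = \H_\tau \otimes \C^{2^m}$ of eigenvectors-adapted elements for $\DD$. First I would check that $(\A_{\Th},\H,\DD)$ is a spectral triple in the sense of Definition \ref{deftriplet}: the representation $a \mapsto L(a)\otimes 1_{2^m}$ is faithful because $\tau$ is a faithful trace, $\DD$ is selfadjoint by construction, and the commutator $[\DD, L(a)\otimes 1_{2^m}]$ is bounded since $[\delta_\mu, L(a)] = L(\delta_\mu a)$ and $\delta_\mu a \in C^\infty(\T_\Th^n)$ whenever $a$ does (the Schwartz decay of coefficients is preserved by each $\delta_\mu$, which only multiplies $a_k$ by $ik_\mu$). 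Compactness of the resolvent of $\DD$ follows from the fact that on $U_k \otimes e_i$ the operator $\DD$ acts by a matrix whose norm grows like $|k|$, so $(\DD - z)^{-1}$ is a norm-limit of finite-rank operators; this also gives the discreteness of the spectrum.

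Next I would establish $n$-summability and metric dimension $n$: since $\DD^2 = \Delta \otimes 1_{2^m}$ with $\Delta = -\sum_\mu \delta_\mu^2$ acting as $|k|^2$ on $U_k$, the eigenvalues of $|\DD|$ are $\{|k| : k \in \Z^n\}$ each with multiplicity $2^m$ (up to the kernel), so counting lattice points gives $\mu_j(|\DD|^{-1}) = \mathcal{O}(j^{-1/n})$ exactly as in Example \ref{Dixtrace} and Theorem \ref{Weyl}; hence the dimension spectrum computations of Section \ref{Residues of series} (in particular Theorem \ref{zetageneral}) show the metric dimension is $n$. For regularity I would verify that $\A_\Th$ and $[\DD,\A_\Th]$ lie in $\bigcap_k \Dom \delta^k$ with $\delta(T) = [|\DD|, T]$: this is a standard argument using that $|\DD| = (\Delta)^{1/2}\otimes 1$ commutes with the Tomita conjugation and that repeated commutators of $L(a)$ with $|\DD|$ stay bounded because $a$ is smooth — here one uses the pseudodifferential calculus of Definition \ref{defDA} applied to this commutative-in-the-derivations setting. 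The reality operator $J = J_0 \otimes C_0$ is an antilinear isometry; the signs $\epsilon,\epsilon',\epsilon''$ come from \eqref{CGamma}, from $C_0^2 = \pm 1_{2^m}$ according to the parity of $m$, and from the commutation of $C_0$ with $\gamma^1\cdots\gamma^n$, giving exactly the table \eqref{commu} with $d \equiv n \bmod 8$. The order-zero and order-one conditions \eqref{oppcommut} hold because $J_0 L(a) J_0^{-1} = R(a^*)$ commutes with every left multiplication and $[\DD, L(a)\otimes 1]$ is a left-multiplication-type operator tensored with a Dirac matrix, hence also commutes with all right multiplications; evenness $\DD\chi = -\chi\DD$ follows since $\chi$ involves the product of all $\gamma$'s which anticommutes with each $\gamma^\mu$.

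Finally I would treat the finiteness and orientation conditions. Finiteness: $\H^\infty = \bigcap_k \Dom \DD^k$ consists of elements $\sum a_k U_k \otimes v_k$ with $\{a_k\}$ of rapid decay, i.e. $\H^\infty = C^\infty(\T_\Th^n)\otimes \C^{2^m}$, which is visibly a free (hence finitely generated projective) left $\A_\Th$-module of rank $2^m$. Orientation: one must exhibit a Hochschild cycle $c \in Z_n(\A_\Th, \A_\Th \otimes \A_\Th^\circ)$ with $\pi_\DD(c) = \chi$; the natural candidate is the antisymmetrized cycle built from the unitaries, $c = (\text{const}) \sum_{\sigma \in S_n} \mathrm{sgn}(\sigma)\, (u_1\cdots u_n)^{-1}\otimes 1 \otimes u_{\sigma(1)} \otimes \cdots \otimes u_{\sigma(n)}$ (the standard volume cycle for $\T_\Th^n$), and one checks $b c = 0$ using the commutation relations \eqref{rel} and that $\pi_\DD(c) = (-i)^m \gamma^1\cdots\gamma^n \otimes (\text{id}) = \chi$ because $[\DD, u_j] = u_j \otimes (\text{something})\cdot \gamma^j$ up to scalars. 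The main obstacle I expect is precisely this orientation condition — verifying that the volume Hochschild cycle is genuinely a cycle (i.e. $bc=0$) in the deformed algebra requires a careful bookkeeping of the phase factors $e^{i\Th_{lj}}$, and confirming that $\pi_\DD(c)$ reproduces $\chi$ on the nose (with the correct normalization $(-i)^m$) is where the Clifford algebra identities and the deformation combine most delicately; everything else reduces to the lattice-point and symbol estimates already developed in the earlier sections. I would cite \cite{ConnesMarcolli, GBVF, Polaris} for the detailed verification and present only the Weyl-basis computations that make each point transparent.
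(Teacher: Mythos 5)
The paper does not actually prove this theorem: immediately after the statement it says ``We do not give a proof since most of its arguments will be emphasized in this section; see however \cite{Book,Polaris} for a specific proof,'' and only the simplicity of the dimension spectrum (under a Diophantine hypothesis) is established later, in Proposition \ref{zeta(0)}. Your outline is essentially the standard verification carried out in those references, and its main lines are sound: faithfulness via the trace $\tau$, boundedness of $[\DD,L(a)\ox 1]=L(\delta_\mu a)\ox(-i\ga^\mu)$, compact resolvent and $n$-summability from the lattice count of the eigenvalues $|k|$ with multiplicity $2^m$, the order-zero and first-order conditions from $J_0L(a)J_0^{-1}=R(a^*)$ and the commutation of left and right multiplications, finiteness from $\H^\infty=\Coo(\T^n_\Th)\ox\C^{2^m}$ being free of rank $2^m$, and the orientation via the antisymmetrized volume cycle on the generators.

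Two points deserve tightening. First, for regularity do not appeal to ``the pseudodifferential calculus of Definition \ref{defDA}'' --- that definition presupposes a regular triple, so the argument would be circular; the correct elementary input is that $[\,|\DD|,L(U_l)\ox 1]$ acts on $U_k\ox e_j$ by the scalar $|k+l|-|k|$, which is bounded by $|l|$ uniformly in $k$, so iterated commutators $\delta^p(L(a))$ converge absolutely for Schwartz coefficients. Likewise, $n$-summability needs only this Weyl-type count and holds for every $\Th$; invoking Theorem \ref{zetageneral} there would import the badly-approximable hypothesis, which is needed only for the finer statement that the dimension spectrum is $\Z$ with simple poles. Second, your candidate orientation cycle is written with $(u_1\cdots u_n)^{-1}$ fixed outside the sum over $S_n$; in the deformed algebra the inverse must be taken of the permuted product (with the phases coming from \eqref{rel}) for $bc=0$ to hold, and the overall normalization must produce exactly $(-i)^m\ga^1\cdots\ga^n$. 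You correctly identify this as the delicate step, and it is precisely the computation the paper delegates to \cite{Book,Polaris}.
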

We do not give a proof since most of its arguments will be emphasized in this section; see however 
\cite{Book,Polaris} for a specific proof.

For instance, we prove in Proposition \ref{zeta(0)} that this triple has simple dimension spectrum when 
$\Theta$ is badly approximable (see Definition \ref{ba}).

The perturbed Dirac operator $V_{u}\,\DD\, V_{u}^*$ by the unitary
$$
V_{u}\vc \big(L(u)\otimes 1_{2^m}\big)J\big(L(u)\otimes 1_{2^m}\big)J^{-1},
$$
defined for every unitary $u \in \A$, $uu^{*}=u^{*}u=U_{0}$, must satisfy condition $J\DD=\epsilon \DD J$ 
(which is equivalent to $\H$ being endowed with a structure of $\A_{\Th}$-bimodule). This
yields the necessity of a symmetrized covariant Dirac operator
$$
\DD_{A}\vc \DD + A + \epsilon J\,A\,J^{-1}$$
since
$V_{u}\,\DD\, V_{u}^{*}=\DD_{L(u)\otimes 1_{2^m}[\DD,L(u^{*}) \otimes 1_{2^m}]}$:
in fact, for $a \in \A_{\Th}$, using $J_{0}L(a){J_{0}}^{-1}=R(a^*)$, we get 
$$
\epsilon J\big(L(a)\otimes \gamma^{\alpha}\big)J^{-1}=-R(a^*)\otimes \gamma^{\alpha}
$$
and that the representation $L$ and the anti-representation $R$ are $\C$-linear, commute and satisfy
$$
[\delta_{\alpha},L(a)]=L(\delta_{\alpha}a),\quad [\delta_{\alpha},R(a)]=R(\delta_{\alpha}a).
$$
This induces some covariance property for the Dirac operator: one checks that for all $k \in \Z^{n}$,
\begin{align}
\label{puregauge1}
L(U_{k})\otimes 1_{2^m}[\DD,L(U_{k}^{*})\otimes 1_{2^m}]&=1\otimes
(-k_{\mu}\ga^{\mu}),
\end{align}
so with (\ref{CGamma}), we get $U_{k}[\DD,U_{k}^{*}]+\epsilon
JU_{k}[\DD,U_{k}^{*}]J^{-1}=0$ and
\begin{align}
\label{covariance}
V_{U_{k}} \,\DD \, V_{U_{k}}^{*}=
\DD=\DD_{L(U_{k})\otimes 1_{2^m}[\DD,L(U_{k}^{*})\otimes 1_{2^m}]}.
\end{align}
Moreover, we get the gauge transformation (see Lemma \ref{fluctuationoffluctuation}):
\begin{align}
\label{gaugeDirac}
V_{u} \DD_{A} V_{u}^{*}= \DD_{\gamma_{u}(A)}
\end{align}
where the gauged transform one-form of $A$ is
\begin{align}
\label{gaugetransform}
\gamma_{u}(A)\vc u[\DD,u^{*}]+uAu^{*},
\end{align}
with the shorthand $L(u)\otimes 1_{2^m} \longrightarrow u$. As a consequence, the spectral action is gauge invariant:
$$
\SS(\DD_{A},f,\Lambda)=\SS(\DD_{\gamma_{u}(A)},f,\Lambda).
$$

An arbitrary selfadjoint one-form $A\in \Omega_\DD^1(\A)$, can be written as
\begin{equation}
\label{connection}
A = L(-iA_{\alpha})\otimes\gamma^{\alpha},\,\, A_{\alpha} =-A_{\alpha}^* \in \A_{\Th},
\end{equation}
thus
\begin{equation}
\label{dirac}
\DD_{A}=-i\,\big(\delta_{\alpha}+L(A_{\alpha})-R(A_{\alpha})\big)
 \otimes \gamma^{\alpha}.
\end{equation}
Defining 
$$
\tilde A_{\alpha}\vc L(A_{\alpha})-R(A_{\alpha}),
$$
we get
$\DD_{A}^2=-g^{{\alpha}_{1} {\alpha}_{2}}(\delta_{{\alpha}_{1}}+\tilde A_{{\alpha}_{1}})(\delta_{{\alpha}_{2}}
+\tilde A_{{\alpha}_{2}})\otimes 1_{2^m} - \tfrac 12 \Omega_{{\alpha}_{1} {\alpha}_{2}} 
\otimes \gamma^{{\alpha}_{1}{\alpha}_{2}}
$
where
\begin{align*}
\gamma^{{\alpha}_{1} {\alpha}_{2}}
&\vc \tfrac 12(\gamma^{{\alpha}_{1}}\gamma^{{\alpha}_{2}} -\gamma^{{\alpha}_{2}}\gamma^{{\alpha}_{1}}) ,\\
\Omega_{{\alpha}_{1} {\alpha}_{2}}
&\vc [\delta_{{\alpha}_{1}}+ \tilde A_{{\alpha}_{1}},\delta_{{\alpha}_{2}} +\tilde A_{{\alpha}_{2}}]\,
=L(F_{{\alpha}_{1} {\alpha}_{2}}) - R(F_{{\alpha}_{1} {\alpha}_{2}})
\end{align*}
with
\begin{align}
\label{Fmunu}
F_{{\alpha}_{1} {\alpha}_{2}}\vc \delta_{{\alpha}_{1}}(A_{{\alpha}_{2}})
-\delta_{{\alpha}_{2}}(A_{{\alpha}_{1}})+[A_{{\alpha}_{1}},A_{{\alpha}_{2}}].
\end{align}
In summary,
\begin{align}
\label{D2}
\DD_{A}^2=-\delta^{{\alpha}_{1} {\alpha}_{2}} \Big( \delta_{{\alpha}_{1}}+L(A_{{\alpha}_{1}})-R(A_{{\alpha}_{1}})\Big)
\Big(\delta_{{\alpha}_{2}}+L(A_{{\alpha}_{2}})-R(A_{{\alpha}_{2}})\Big)
\otimes 1_{2^m} \nonumber\\
-\tfrac 12\,\big(L(F_{{\alpha}_{1} {\alpha}_{2}}) - R(F_{{\alpha}_{1} {\alpha}_{2}})\big)
\otimes \gamma^{{\alpha}_{1} {\alpha}_{2}}.
\end{align}

\subsection{Kernels and dimension spectrum}

We now compute the kernel of the perturbed Dirac operator:

\begin{prop}
\label{noyaux}
(i) $\Ker \DD=U_0\otimes \C^{2^m}$, so $\dim \Ker \DD =2^m$.

(ii) For any selfadjoint one-form $A$, $\Ker \DD \subseteq \Ker \DD_A$.

(iii) For any unitary $ u\in \A$, $\Ker \DD_{\gamma_{u}(A)}=V_{u}\, \Ker \DD_{A}$.
\end{prop}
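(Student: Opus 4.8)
The three assertions of Proposition \ref{noyaux} concern the kernel of the (perturbed) Dirac operator on the noncommutative torus, and I would treat them in the order given, since (ii) and (iii) rely on the explicit description obtained in (i).

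\emph{Part (i).} By \eqref{defDirac} the operator $\DD = -i\,\delta_\mu\otimes\gamma^\mu$ acts diagonally on the Weyl basis: $\DD\,(U_k\otimes e_i) = k_\mu\, U_k\otimes\gamma^\mu e_i$. Hence for $\psi = \sum_k U_k\otimes v_k$ with $v_k\in\C^{2^m}$, one has $\DD\psi = \sum_k U_k\otimes (k_\mu\gamma^\mu v_k)$, and since the $U_k$ are orthogonal in $\H_\tau$, $\DD\psi=0$ forces $k_\mu\gamma^\mu v_k = 0$ for every $k$. For $k\neq 0$ the matrix $c(k)=k_\mu\gamma^\mu$ satisfies $c(k)^2 = |k|^2\,\mathrm{id}$ (the Clifford relation for the flat metric, cf.\ \eqref{cliffmetric}), so $c(k)$ is invertible and $v_k=0$. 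Only $k=0$ survives, giving $\Ker\DD = U_0\otimes\C^{2^m}$ and $\dim\Ker\DD = 2^m$. This part is routine.

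\emph{Part (ii).} Here I would use the covariance identity \eqref{covariance}, namely $\DD_{L(U_k)\otimes 1_{2^m}[\DD,L(U_k^*)\otimes 1_{2^m}]} = \DD$, together with \eqref{puregauge1} which says the pure-gauge one-form built from $U_k$ is $1\otimes(-k_\mu\gamma^\mu)$. The cleaner route: for a general selfadjoint one-form $A = L(-iA_\alpha)\otimes\gamma^\alpha$ as in \eqref{connection}, formula \eqref{dirac} gives $\DD_A = -i\big(\delta_\alpha + L(A_\alpha) - R(A_\alpha)\big)\otimes\gamma^\alpha$. Apply this to $\psi_0 = U_0\otimes v\in\Ker\DD$: since $\delta_\alpha U_0 = 0$ by \eqref{dUk}, and since $L(A_\alpha)U_0 = A_\alpha$, $R(A_\alpha)U_0 = A_\alpha$ (both multiply $U_0$ on one side, and $A_\alpha U_0 = A_\alpha = U_0 A_\alpha$ because $U_0$ is the unit), the operator $\tilde A_\alpha = L(A_\alpha)-R(A_\alpha)$ annihilates $U_0$. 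Hence $\DD_A\psi_0 = 0$, i.e.\ $\Ker\DD\subseteq\Ker\DD_A$. The only point requiring a word of care is that this computation is for $\DD_A$; for the symmetrized $\DD_{\wt A} = \DD + A + \epsilon JAJ^{-1}$ the same conclusion holds since $\epsilon JAJ^{-1} = -R(A_\alpha^*)\otimes\gamma^\alpha$ (shown just before \eqref{puregauge1}) also kills $U_0\otimes v$ for the same reason; I would state both.

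\emph{Part (iii).} This is a formal consequence of the gauge-transformation law \eqref{gaugeDirac}: $V_u\,\DD_A\,V_u^* = \DD_{\gamma_u(A)}$. Since $V_u$ is unitary, $\xi\in\Ker\DD_{\gamma_u(A)}$ iff $\DD_{\gamma_u(A)}\xi = 0$ iff $V_u\DD_A V_u^*\xi = 0$ iff $\DD_A(V_u^*\xi)=0$ iff $V_u^*\xi\in\Ker\DD_A$ iff $\xi\in V_u\Ker\DD_A$. Thus $\Ker\DD_{\gamma_u(A)} = V_u\,\Ker\DD_A$. I expect the \emph{main obstacle} to be purely bookkeeping rather than conceptual: namely keeping straight the distinction between $\DD_A$ and $\DD_{\wt A}$ and verifying that the left and right multiplication operators $L(A_\alpha)$, $R(A_\alpha)$ genuinely agree on the unit vector $U_0$ — everything else follows from identities already established in the section (\eqref{dUk}, \eqref{connection}, \eqref{dirac}, \eqref{gaugeDirac}) and the Clifford relation.
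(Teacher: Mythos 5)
Your proof is correct and follows essentially the same route as the paper: part (i) by diagonality of $\DD$ on the Weyl basis (the paper passes through $\DD^2\psi=0$, which is equivalent since the Clifford relation gives $(k_\mu\gamma^\mu)^2=|k|^2$), part (ii) by observing that $L(A_\alpha)-R(A_\alpha)$ and $\epsilon JAJ^{-1}$ annihilate $U_0\otimes v$ because $U_0$ is the unit, and part (iii) as an immediate consequence of the gauge covariance \eqref{gaugeDirac}. Your explicit care about the distinction between $\DD_A$ and $\DD_{\wt A}$ is welcome and consistent with the paper's convention, where $\DD_A$ in this section already denotes the symmetrized operator.
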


\begin{proof}
$(i)$ Let $\psi =\sum_{k,j} c_{k,j} \, U_k \otimes e_j \in \Ker \DD$. Thus, 
$0=\DD^2 \psi = \sum_{k,i} c_{k,j} |k|^2\, U_k \otimes e_j$ which entails that $c_{k,j}|k|^2=0$
for any $k \in \Z^n$ and $1\leq j\leq 2^m$. The result follows.

$(ii)$ Let $\psi \in \Ker \DD$. So, $\psi = U_0 \otimes v$ with $v\in \C^{2^m}$ and from (\ref{dirac}), we get
\begin{align*}
\DD_A \psi &= \DD \psi + (A+\epsilon J AJ^{-1})\psi = (A+\epsilon J AJ^{-1})\psi=
-i[A_\a,U_0]\otimes \ga^\a v = 0
\end{align*}
since $U_{0}$ is the unit of the algebra, which proves that $\psi \in \Ker \DD_A$.

$(iii)$ This is a direct consequence of (\ref{gaugeDirac}).
\end{proof}

\begin{corollary}
Let $A$ be a selfadjoint one-form. Then $\Ker \DD_A=\Ker \DD$ in the following cases:

 (i) $A=A_{u}\vc L(u)\otimes 1_{2^m}[\DD,L(u^*)\otimes 1_{2^m}]$ when $u$ is a unitary in $\A$.

 (ii) $\vert \vert A \vert \vert <\tfrac12$.

 (iii) The matrix $\tfrac{1}{2\pi}\Th$ has only integral coefficients.
 \end{corollary}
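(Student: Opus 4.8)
The strategy is to exploit Proposition \ref{noyaux}(ii), which already gives the inclusion $\Ker\DD\subseteq\Ker\DD_A$ for \emph{any} selfadjoint one-form $A$; so in each of the three cases it suffices to establish the reverse inclusion $\Ker\DD_A\subseteq\Ker\DD$. For item (i), I would simply invoke \eqref{covariance}: we computed that $L(U_k)\otimes1_{2^m}[\DD,L(U_k^*)\otimes1_{2^m}]+\epsilon J(\cdots)J^{-1}=0$, hence $\DD_{A_u}=V_u\DD V_u^*$ when $u=U_k$ is a Weyl element, and more generally $\DD_{A_u}=V_u\DD V_u^*$ for every unitary $u$ by the gauge-transformation formula \eqref{gaugeDirac} applied with $A=0$. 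Since $V_u$ is unitary, $\Ker\DD_{A_u}=V_u\Ker\DD$; but by Proposition \ref{noyaux}(i), $\Ker\DD=U_0\otimes\C^{2^m}$ and $V_u(U_0\otimes v)=u u^*\otimes C_0\text{-action}=U_0\otimes(\text{something})$, so $V_u$ preserves $\Ker\DD$, giving $\Ker\DD_{A_u}=\Ker\DD$. (I would double-check this last point carefully: $V_u=(L(u)\otimes1)J(L(u)\otimes1)J^{-1}$ acts on $U_0\otimes v$ as $L(u)R(u^*)U_0\otimes C_0vC_0^{-1}$-type expression, which lands back in $U_0\otimes\C^{2^m}$ because $L(u)R(u^*)U_0=uu^*=U_0$.)

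For item (ii), the plan is a standard perturbation argument. Write $\DD_A=\DD+T$ with $T\vc A+\epsilon JAJ^{-1}$ bounded, and observe $\|T\|\le\|A\|+\|JAJ^{-1}\|=2\|A\|<1$. Suppose $\psi\in\Ker\DD_A$, so $\DD\psi=-T\psi$. Decompose $\psi=\psi_0+\psi_1$ along the orthogonal splitting $\H=\Ker\DD\oplus(\Ker\DD)^\perp$; on $(\Ker\DD)^\perp$ the operator $\DD$ is invertible with $\|\DD^{-1}\|\le\delta^{-1}$ where $\delta$ is the smallest nonzero singular value of $\DD$. In fact here $\delta=1$: by \eqref{dUk} the spectrum of $\DD^2$ on $U_k\otimes\C^{2^m}$ is $\{|k|^2\}$, so the nonzero eigenvalues of $\DD^2$ are $\ge1$. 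Now $\DD\psi_1=-T\psi$ (since $\DD\psi_0=0$ and $\DD$ preserves the splitting), whence $\|\psi_1\|=\|\DD^{-1}T\psi\|\le\|T\|\,\|\psi\|\le 2\|A\|\,\|\psi\|$. If $A\neq0$ is not yet small enough this is not immediately a contradiction, so I would instead argue on the compression: project the equation $\DD\psi=-T\psi$ onto $\Ker\DD$ and onto its complement separately, deducing $\psi_1=-(\DD\restriction_{(\Ker\DD)^\perp})^{-1}P^\perp T(\psi_0+\psi_1)$, then $(1+K)\psi_1=-KP^\perp T\psi_0$ with $K$ of norm $<1$, forcing $\psi_1$ to be determined by $\psi_0$; finally plug into the $\Ker\DD$-component, which reads $P_0T(\psi_0+\psi_1)=0$, and use $T\psi_0=-i[A_\a,U_0]\otimes\ga^\a v=0$ (exactly as in the proof of Proposition \ref{noyaux}(ii)) together with $\|T\|<1$ to conclude $\psi_1=0$, hence $\psi=\psi_0\in\Ker\DD$. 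The cleanest route may actually be: $\DD_A=\DD+T$ with $T$ vanishing on $\Ker\DD$, so on $\Ker\DD$ we have $\DD_A=0$, and on $(\Ker\DD)^\perp$, $\|\DD_A-\DD\|=\|T\|<1\le$ (gap), so $\DD_A$ restricted there is invertible by the Neumann series $\DD_A=\DD(1+\DD^{-1}T)$; thus $\Ker\DD_A$ cannot meet $(\Ker\DD)^\perp$ nontrivially.

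For item (iii), the hypothesis $\tfrac1{2\pi}\Th\in M_n(\Z)$, i.e. $\Th\in 2\pi M_n(\Z)$, means all the relations \eqref{rel} become $u_lu_j=u_ju_l$ and the Weyl phases $e^{-\frac i2 k.\Th q}$ are trivial, so $\A_\Th\simeq C^\infty(\T^n)$ is commutative and $J_0L(a)J_0^{-1}=R(a^*)=R(a^*)=L(a^*)$; hence $\epsilon JAJ^{-1}=-A$ for every selfadjoint one-form by \eqref{JAJ} (the commutative-geometry identity), i.e. $\wt A=A+\epsilon JAJ^{-1}=0$. Therefore $\DD_A=\DD$ identically and trivially $\Ker\DD_A=\Ker\DD$. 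The only subtlety I anticipate is bookkeeping with the sign $\epsilon$ and with $C_0$, and checking that the commutative identity $JaJ^{-1}=a^*$ of Definition \ref{rieman} genuinely applies to $(\A_\Th,\H,\DD,J,\chi)$ when $\Th\in2\pi M_n(\Z)$ (it does, since $J_0(a)=a^*$ is the Tomita conjugation and $R(a)=L(a)$ in the commutative case). The main obstacle overall is item (ii): making the perturbation argument rigorous requires knowing the spectral gap of $\DD$ above its kernel is $\ge1$ and that $T$ annihilates $\Ker\DD$; both are available from \eqref{dUk} and the computation in Proposition \ref{noyaux}(ii), so once those are cited the argument closes cleanly.
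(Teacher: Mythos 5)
Your proposal is correct and follows essentially the same route as the paper: item (i) via $V_u$ preserving $\Ker\DD=U_0\otimes\C^{2^m}$, item (iii) via commutativity forcing $\wt A=0$ as in \eqref{JAJ}, and item (ii) via the two facts that $\wt A$ annihilates $\Ker\DD$ and that $\|\wt A\|\le 2\|A\|<1$ sits below the spectral gap $1$ of $\DD$ above its kernel. The only presentational difference is in (ii), where the paper carries out your abstract gap/Neumann-series argument concretely in the Weyl basis (comparing $\sum_{k\neq0}|k|^2|c_{k,j}|^2$ with $\sum_{k\neq0}|c_{k,j}|^2$), which amounts to the same estimate.
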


\begin{proof}
$(i)$ This follows from previous result because $V_{u} (U_{0}\otimes v)= U_{0} \otimes v$ for any $v\in \C^{2^m}$.

$(ii)$ Let $\psi=\sum_{k,j}c_{k,j}\, U_{k}\otimes e_{j} $ be in $ \Ker \DD_{A}$ (so $\sum_{k,j} \vert c_{k,j}\vert^2<
\infty$) and $\phi\vc \sum_{j}c_{0,j}\, U_{0}\otimes e_{j}$. Thus
$\psi'\vc \psi-\phi \in \text{Ker }\DD_{A}$ since $\phi \in \Ker \DD \subseteq \Ker \DD_A$ and
$$
\vert \vert \sum_{0\neq k \in \Z^n,\,j} c_{k,j}\,k_{\alpha}\,U_{k}\otimes
\gamma^{\alpha}e_{j}\vert \vert^2=\vert \vert \DD \psi'\vert\vert^2=\vert \vert -(A + \epsilon
JAJ^{-1})\psi'\vert \vert^2 \leq 4\vert\vert A \vert \vert^{2}\vert \vert \psi' \vert \vert^{2} <\vert \vert \psi' \vert \vert^{2}.
$$
Defining $X_{k}\vc \sum_{\alpha}k_{\alpha}\gamma_{\alpha}$,
$X_{k}^{2}=\sum_{\alpha}\vert k_{\alpha}\vert^{2}\, 1_{2^{m}}$ is invertible and the vectors 
$\set{U_{k}\otimes X_{k}e_{j}}_{0\neq k\in \Z^{n},\,j}$ are orthogonal in $\H$, so
$$
\sum_{0\neq k\in \Z^{n},\,j}\big( \sum_{\alpha} \vert k_{\alpha}\vert^{2} \big)\, \vert
c_{k,j}\vert^{2} < \sum_{0\neq k\in \Z^{n},\,j}\vert c_{k,j}\vert^{2}
$$
which is possible only if $c_{k,j}=0, \, \forall k,\,j$ that is $\psi'=0$ and $\psi=\phi \in \text{Ker }\DD$.

$(iii)$ This is a consequence of the fact that the algebra is commutative, thus the arguments of \eqref{JAJ} 
apply and $\wt A=0$.
\end{proof}

Note that if $\wt A_{u}\vc A_{u}+\epsilon JA_{u}J^{-1}$, then by (\ref{puregauge1}), $\wt A_{U_{k}}=0$ for all
$k \in \Z^n$ and $\norm{A_{U_{k}}}=\vert k\vert$, but for an arbitrary unitary $u\in \A$, $\wt A_{u}\ne 0$ so
$\DD_{A_{u}}\ne \DD$.

Naturally the above result is also a direct consequence of the fact that the eigenspace of an isolated eigenvalue 
of an operator is not modified by small perturbations. However, it is interesting to compute the last result directly 
to emphasize the difficulty of the general case:

Let $\psi=\sum_{l\in \Z^n, 1\leq j \leq 2^m}c_{l,j}\, U_{l}\otimes
e_{j}\in \Ker
\DD_A$, so $\sum_{l\in \Z^n, 1\leq j \leq 2^m} \vert c_{l,j}\vert^2< \infty$. We
have to show that $\psi\in$ Ker $\DD$ that is $c_{l,j}=0$ when $l\ne
0$.

Taking the scalar product of $\langle U_{k} \otimes e_{i}\vert$ with
$$
0=\DD_{A}\psi=\sum_{l,\,\a,\,j} c_{l,\,j}(l^{\a}U_{l}-i[A_{\a},U_{l}] )\otimes \gamma^{\a}e_{j},
$$
we obtain
$$
0=\sum_{l,\,\a,\,j} c_{l,\,j} \big(l^{\a}\delta_{k,l}-i\langle U_{k},[A_{\a},U_{l}]\rangle \big)\langle e_{i},\gamma^{\a}e_{j}
\rangle.
$$
If $A_{\a}=\sum_{\a,l}a_{\a,l}\, U_{l} \otimes \gamma^{\a}$ with 
$\set{a_{\a,l}}_{l} \in \SS(\Z^n)$, note that $[U_{l},U_{m}]=-2i \sin(\tfrac 12 l.\Th m) \, U_{l+m}$ and
$$
\langle U_{k},[A_{\a},U_{l}]\rangle = \sum_{l'\in \Z^{n}}a_{\a,l'}(-2i \sin (\tfrac 12 l'.\Th l) \langle U_{k}, U_{l'+l}
\rangle=-2i\, a_{\a,k-l} \,\sin(\tfrac 12 k.\Th l).
$$
Thus
\begin{align}
\label{contraintenoyau}
0=\sum_{l\in \Z^{n}}\sum_{\a=1}^{n}\sum_{j=1}^{2^{m}} c_{l,\,j}
\big(l^{\a}\delta_{k,l} -2a_{\a,k-l} \, \sin(\tfrac 12 k.\Th l) \big)
\, \langle
e_{i},\gamma^{\a}e_{j} \rangle, \quad \forall k\in \Z^n, \, \forall
i, 1\leq i \leq
2^{m}.
\end{align}

\medskip

{\it We conjecture that $\Ker \DD=\Ker \DD_A$ at least for generic $\Th$'s}: the constraints (\ref{contraintenoyau}) 
should imply $c_{l,j} = 0$
for all $j$ and all $l \neq 0$ meaning $\psi \in \Ker \DD$. When
$\tfrac{1}{2\pi}\Th$ has only integer coefficients, the sin part of
these constraints disappears giving the result.
\medskip

We will use freely the notation \eqref{DAdroit} about the difference between $\DD$ and $D$.
\begin{lemma}
\label{spectrumset}
If $\tfrac{1}{2\pi} \Th$ is badly approximable (see Definition \ref{ba}),
$Sp\big(\Coo(\T^n_\Th),\H,\DD\big)=\Z$ and all these poles are simple.
\end{lemma}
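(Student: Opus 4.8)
The plan is to reduce the statement about $Sp\big(\Coo(\T^n_\Th),\H,\DD\big)$ to the zeta functions treated in Theorem \ref{zetageneral} and Theorem \ref{analytic}. Recall that $Sp$ is the set of poles of all maps $s\mapsto \Tr\big(P|D|^{-s}\big)$ with $P$ a pseudodifferential operator in the sense of Definition \ref{defDA}. Such a $P$ is, modulo smoothing operators (which contribute no poles), a finite sum of terms built from $L(U_k)$, $R(U_k)$, $\gamma$-matrices, $\DD$ and $|D|$. After using $|D|^2=\DD^2=-\delta^{\alpha\beta}\delta_\alpha\delta_\beta\otimes 1_{2^m}$ on the orthogonal complement of $\Ker\DD$, a typical summand acts diagonally (up to the finite $\gamma$-matrix factor) on the basis $U_k\otimes e_j$ and produces, after taking the trace over $\C^{2^m}$ and over $\H_\tau$, a series of exactly the form
$$
\sum_{l\in[(\Z^n)^{q}]^2}\wt a_l\ {\sum_{k\in\Z^n}}'\ \prod_{i=1}^{q-1}|k+\wt l_i|^{p_i}\,|k|^{-s}\,Q(k)\,e^{ik.\Th\sum_1^q l_j}\,,
$$
because the commutation relations \eqref{rel1} turn every product of the $L(U_k),R(U_k)$ into Weyl elements carrying phases $e^{ik.\Th(\cdots)}$ and the $\delta_\mu$'s bring down polynomial factors $Q(k)$ in $k$, while the denominators $|k+\wt l_i|^{p_i}$ come from intermediate powers of $\DD$ or $|D|$. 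So I would first write down this normal form carefully, handling the finitely many basis vectors in $\Ker\DD$ separately (they only affect trace-class error terms), and invoking Lemma \ref{finiterank} / the definition \eqref{DAdroit} to pass from $D$ to $\DD$.

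Next I would apply Theorem \ref{zetageneral}: since $\tfrac{1}{2\pi}\Th$ is badly approximable, each such series has a meromorphic continuation to $\C$ with at most \emph{simple} poles, all located at points $n+d+|p|_1-m$ with $m\in\N_0$, i.e. at integers $\le n+d+|p|_1$. Taking the union over all pseudodifferential $P$ gives $Sp\subseteq\Z$ with all poles simple — this is one inclusion and the simplicity claim. The restriction to $OP^0$ operators (as in Remark \ref{remark-spectrum}) would then refine the location, but for the statement as written we only need $Sp\subseteq\Z$. For the reverse inclusion $\Z\subseteq Sp$, I would exhibit, for each $k\in\Z$, a pseudodifferential operator whose zeta function actually has a pole at $k$: for $k\le n$ this is done exactly as in the commutative case of Proposition \ref{spectrcomm}, using $P_k=L(U_0)\otimes 1\cdot|D|^{-(n-k)}$ (or simply $|D|^{-(n-k)}$), whose zeta function is essentially $Z_n(s+n-k)$ up to the $2^m$-fold multiplicity and the $\Ker\DD$ correction, and $Z_n$ has its pole at $s=k$ by \eqref{formule}; for $k>n$ one uses $\DD^2$-type operators $P=L(\delta_{\alpha_1}\cdots\delta_{\alpha_{2r}}U_0)\otimes(\cdots)$ producing $\zeta_{p_1,\dots,p_n}$-type series whose poles sit at $n+|p|_1>n$, by Proposition \ref{calculres}. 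One must check the residues are genuinely nonzero — this is where the explicit residue formulas \eqref{formule1}, \eqref{formule} and the $C(f,m)\neq 0$ criterion of Theorem \ref{zetageneral}(ii) are used; picking $P$ with $Q=1$ and all $p_i$ even and nonnegative makes $C(f,m)$ a positive multiple of a sphere integral of a monomial, hence strictly positive.

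The main obstacle I expect is the bookkeeping in the first step: showing that \emph{every} operator in $\Psi(\A_\Th)$ — which by Definition \ref{defDA} is generated by $\A_\Th$, $\A_\Th^\circ$, $\DD$ and $|D|$ modulo $OP^{-\infty}$ — produces, after multiplication by $|D|^{-s}$ and taking the trace, a finite linear combination of series of the precise shape required by Theorem \ref{zetageneral}. The points needing care are: (a) the odd powers of $|D|$ (since $|D|$ itself is not polynomial in the $\delta_\mu$), which one handles by the binomial expansion $(|k|^2)^{-s/2}=\big((|k+\wt l_i|^2 - 2k.\wt l_i-|\wt l_i|^2)\cdots\big)$ exactly as in Lemma \ref{ieffective} and Lemma \ref{zetageneral-lem}, reducing $F_0$-type series to $F_1$-type ones; (b) controlling the cross terms between $L(U_k)$ and $R(U_l)$, whose product carries a phase $e^{ik.\Th l}$ with a \emph{different} argument than the ``diagonal'' terms, so that the relevant lattice $\mathcal Z=\{l:\sum\varepsilon_il_i=0\}$ and the badly-approximable hypothesis enter precisely as in Theorem \ref{analytic}(iii); and (c) verifying that the Schwartz decay of the coefficients $a_k$ of elements of $\A_\Th$ gives the required $b\in\SS\big((\Z^n)^{2q}\big)$, so Theorem \ref{zetageneral} applies verbatim. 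Once the normal form is established, the poles-are-simple and poles-are-integers conclusions are immediate from Theorem \ref{zetageneral}, and the nonemptiness at each integer $\le n$ (hence at each integer of the form $n-k$, $k\in\N_0$, which is what the spectrum actually is) follows from the commutative-type computation; I would close by remarking that this in particular gives the simple-dimension-spectrum statement $Sd=\{n-k:k\in\N_0\}$ needed for Proposition \ref{zeta(0)}.
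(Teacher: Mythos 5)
Your proposal follows essentially the same route as the paper: write a generic monomial of $\DD(\A)$ in the form $a_rb_r\DD^{q_{r-1}}|\DD|^{p_{r-1}}\cdots a_1b_1$, evaluate $\Tr(B|D|^{-s})$ on the eigenbasis $U_k\otimes e_j$ so that the trace $\tau$ forces $\sum_i l_i+l'_i=0$ and produces exactly the phases $e^{-i\sum l_i.\Th k}$ and factors $\prod|k+\wh l_i+\wh l'_i|^{p_i}$ of Theorem \ref{zetageneral}, and then read off from that theorem (or Theorem \ref{analytic} when all $p_i=0$) that the poles lie in $\Z$ and are simple. You go slightly beyond the paper's written proof by also supplying the reverse inclusion $\Z\subseteq Sp$ via explicit operators $|D|^{k-n}$ (and $\DD$-powers for $k>n$), a point the paper leaves implicit; your argument for it is correct.
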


\begin{proof}
Let $B\in \DD(\A)$ and $p\in \N_0$. Suppose that $B$ is of the form
$$
B= a_r b_r
\DD^{q_{r-1}}|\DD|^{p_{r-1}} a_{r-1}b_{r-1}\cdots
\DD^{q_1}|\DD|^{p_1} a_1 b_1
$$
where $r\in \N$, $a_i \in \A$, $b_i\in J\A J^{-1}$, $q_i, p_i \in
\N_0$.
We note $a_i=:\sum_l a_{i,l}\,U_l$ and
$b_i=:\sum_i b_{i,l} \,U_l$. With the shorthand
$k_{\mu_1,\mu_{q_i}}\vc k_{\mu_1}\cdots
k_{\mu_{q_i}}$ and $\ga^{\mu_1,\mu_{q_i}}=\ga^{\mu_1}\cdots
\ga^{\mu_{q_i}}$, we get
$$
\DD^{q_1}|\DD|^{p_1}  a_1 b_1 \, U_k \otimes e_j =  \sum_{l_1,\,l'_1}
a_{1,l_1} b_{1,l'_1}
U_{l_1}U_k U_{l'_1}
\,|k+l_1+l'_1|^{p_1}\,(k+l_1+l'_1)_{\mu_1,\mu_{q_1}} \otimes
\ga^{\mu_1,\mu_{q_1}} e_j
$$
which gives, after $r$ iterations,
$$
B U_k \otimes e_j = \sum_{l,l'} \wt a_{l} \wt b_{l} U_{l_r}\cdots
U_{l_1} U_k
U_{l'_1}\cdots U_{l'_r} \prod_{i=1}^{r-1} |k+\wh l_i+\wh
l'_i|^{p_i}(k+\wh l_{i} +\wh
l'_{i})_{\mu^{i}_1,\mu^i_{q_i}} \otimes
\ga^{\mu^{r-1}_1,\mu^{r-1}_{q_{r-1}}}\cdots
\ga^{\mu^1_1,\mu^1_{q_1}} e_j
$$
where $\wt a_l : = a_{1,l_1}\cdots a_{r,l_r}$ and $\wt b_{l'} : =
b_{1,l'_1}\cdots
b_{r,l'_r}$.

Let us note $F_\mu(k,l,l')\vc \prod_{i=1}^{r-1}|k+\wh l_i+\wh
l'_i|^{p_i}
(k+\wh l_{i} +\wh l'_{i})_{\mu^{i}_1,\mu^i_{q_i}}$ and $\ga^\mu
\vc \ga^{\mu^{r-1}_1,\mu^{r-1}_{q_{r-1}}}\cdots
\ga^{\mu^1_1,\mu^1_{q_1}}$. Thus, with
the shortcut 
$$
\sim_c \text{ meaning modulo a constant function towards the variable } s,
$$
$$
\Tr \big(B|D|^{-2p-s}\big) \sim_c {\sum_k}' \, \sum_{l,l'} \wt a_l
\wt b_{l'} \,
\tau\big(U_{-k}U_{l_r}\cdots U_{l_1} U_k U_{l'_1}\cdots U_{l'_r}\big)
\tfrac{F_\mu(k,l,l')}{|k|^{s+2p}} \Tr (\ga^\mu)\, .
$$
Since $U_{l_r}\cdots U_{l_1} U_k = U_k U_{l_r}\cdots U_{l_1}
e^{-i\sum_1^r l_i .\Th
k}$ we get
$$\tau\big(U_{-k}U_{l_r}\cdots U_{l_1} U_k U_{l'_1}\cdots
U_{l'_r}\big)=
\delta_{\sum_1^r l_i+l'_i,0} \, e^{i\phi(l,l')} \, e^{-i\sum_1^r
l_i.\Th k}$$ where $\phi$
is a real valued function. Thus,
\begin{align*}
\Tr \big(B |D|^{-2p-s} \big)&\sim_c {\sum_k}' \, \sum_{l,l'}
e^{i\phi(l,l')}\,\delta_{\sum_1^r l_i+l'_i,0}\, \wt a_l \wt b_{l'}\,
\tfrac{F_\mu(k,l,l')\,e^{-i\sum_1^r l_i.\Th k}}{|k|^{s+2p}} \Tr
(\ga^\mu) \\
&\sim_c f_\mu(s)\Tr (\ga^\mu).
\end{align*}
The function $f_\mu(s)$ can be decomposed as a linear combination of zeta function
of type described in Theorem \ref{zetageneral} (or, if $r=1$ or all the $p_i$ are zero,
in Theorem \ref{analytic}). Thus, $s\mapsto \Tr \big(B |D|^{-2p-s}\big)$ has only poles in $\Z$ and each 
pole is simple. Finally, by linearity, we get the result.
\end{proof}
The dimension spectrum of the noncommutative torus is simple:

\begin{prop}
\label{zeta(0)}

(i) If $\tfrac{1}{2\pi} \Th$ is badly approximable, the spectrum dimension of the spectral triple 
$\big(\Coo(\T^n_\Th),\H,\DD\big)$ is equal to the set $\set{n-k \, :\,  k\in \N_0}$ and all these poles
are simple.

(ii) $\zeta_D(0)=0.$
\end{prop}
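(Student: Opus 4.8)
The claim has two parts. Part (i), that the dimension spectrum is $\{n-k:k\in\N_0\}$ with only simple poles, is essentially Lemma \ref{spectrumset} together with Remark \ref{remark-spectrum}: Lemma \ref{spectrumset} already establishes that $Sp\big(\Coo(\T^n_\Th),\H,\DD\big)=\Z$ with simple poles when $\tfrac{1}{2\pi}\Th$ is badly approximable, and since $Sd\subseteq Sp$, the poles that survive are at most simple. The inclusion $Sd\subseteq\{n-k:k\in\N_0\}$ follows as in Remark \ref{remark-spectrum}: if $P\in\Psi(\A)\cap OP^0$ and $q\in\N$ with $q>n$, then $P|\DD|^{-s}\in OP^{-\Re(s)}$ is trace-class for $s$ near $q$, so $q$ is not a pole. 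For the reverse inclusion one must exhibit, for each $k\in\N_0$, a pseudodifferential operator whose zeta function genuinely has a pole at $n-k$; I would take $P_k:=L(a)\otimes 1_{2^m}\,|\DD|^{-k}$ for a suitable nonzero $a\in\A_\Th$, and compute $\underset{s=n-k}{\Res}\,\Tr\big(P_k|\DD|^{-s}\big)$ via the machinery of Theorem \ref{zetageneral} (or Theorem \ref{res-int}), showing it equals a nonzero multiple of $\tau(a)$, exactly as in the commutative computation of Proposition \ref{spectrcomm}. This is where the badly approximable hypothesis is used, through Lemma \ref{spectrumset}.

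For part (ii), $\zeta_\DD(0)=0$, I would argue as follows. By definition $\zeta_\DD(s)=\Tr\big(|\DD|^{-s}\big)$, computed off the kernel of $\DD$; since $\Ker\DD=U_0\otimes\C^{2^m}$ by Proposition \ref{noyaux}(i), we may write $|\DD|^{-s}$ as the operator $D^{-s}$ with $D=\DD+P_0$ and then $\zeta_\DD(s)=\Tr\big(D^{-s}\big)-2^m\cdot 1^{-s}$, i.e.\ $\zeta_\DD(0)$ differs from $\Tr(D^{-s})|_{s=0}$ by the constant $-2^m$ — but actually the cleanest route is direct. Using the orthonormal eigenbasis $U_k\otimes\gamma$-components, the Dirac operator acts by $\DD(U_k\otimes e_i)=k_\mu U_k\otimes\gamma^\mu e_i$, so $\DD^2$ acts as multiplication by $|k|^2$ with multiplicity $2^m$ on each $k\neq 0$. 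Hence
\begin{align*}
\Tr\big(e^{-t\DD^2}\big)=2^m+2^m\,{\sum_{k\in\Z^n}}'\,e^{-t|k|^2},
\end{align*}
and by the same Jacobi theta / Poisson comparison used in Example \ref{Dixtrace}, ${\sum}'_{k}e^{-t|k|^2}\sim_{t\downarrow 0}(\pi/t)^{n/2}+\cdots$ with no constant ($t^0$) term apart from what comes from the $k=0$ correction. More precisely, $\sum_{k\in\Z^n}e^{-t|k|^2}=(\pi/t)^{n/2}+\mathcal O(t^{-n/2}e^{-1/4t})$, so $\Tr\big(e^{-t\DD^2}\big)=2^m(\pi/t)^{n/2}+\mathcal O(t^\infty)$, meaning the heat expansion has a single term: in the notation of \eqref{hyp}, $a_\a=0$ for every $\a\neq -n/2$. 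Then by Theorem \ref{asymptidentic}(ii)/the formula $a_0=\zeta_\DD(0)+\dim\Ker\DD$, together with $a_0=0$ here (the $t^0$ coefficient of the heat trace vanishes) and $\dim\Ker\DD=2^m$...

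— and this is precisely the delicate point, the main obstacle: one must reconcile $\dim\Ker\DD=2^m\neq 0$ with the vanishing $t^0$-coefficient. The resolution is that $\Tr\big(e^{-t\DD^2}\big)$ computed over the \emph{full} Hilbert space includes the kernel contribution $2^m\cdot e^0=2^m$, so its $t^0$-coefficient is $a_0=2^m$, whence $\zeta_\DD(0)=a_0-\dim\Ker\DD=2^m-2^m=0$. Equivalently, working directly: $\zeta_\DD(s)=2^m\,{\sum}'_k|k|^{-s}=2^m\,Z_n(s)$, and by \eqref{Zn0} we have $Z_n(0)=-1$, so $\zeta_\DD(0)=-2^m$?? — no: $\zeta_\DD$ is defined with $|\DD|^{-s}$ restricted off the kernel, so $\zeta_\DD(s)=2^m Z_n(s)$ and $\zeta_\DD(0)=2^m\cdot(-1)=-2^m$, contradicting the claim; the point is that $\zeta_\DD(0)$ in \eqref{asympspectral} is really $\zeta_{\DD^2}(0)$ convention or includes the kernel, so one must be careful which normalization of $\zeta_\DD$ enters Proposition \ref{zeta(0)}(ii). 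I would therefore, as the key step, pin down the convention: using $\zeta_\DD(0)=a_0-\dim\Ker\DD$ from Theorem \ref{asymptidentic} with $a_0=2^m$ (from the heat trace over all of $\H$) gives $\zeta_\DD(0)=0$, and this is the statement to prove. So the proof reduces to: (1) establish the heat trace formula $\Tr(e^{-t\DD^2})=2^m\sum_{k\in\Z^n}e^{-t|k|^2}$; (2) invoke the theta-function asymptotics to get $a_0=2^m$ and no other constant contributions; (3) apply $\zeta_\DD(0)=a_0-\dim\Ker\DD$ with $\dim\Ker\DD=2^m$ from Proposition \ref{noyaux}(i). The main obstacle is bookkeeping the kernel/convention consistently; once that is fixed, everything is an immediate consequence of results already proved (Example \ref{Dixtrace}'s theta estimate, Theorem \ref{asymptidentic}, Proposition \ref{noyaux}).
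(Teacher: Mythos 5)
For part (i) your argument coincides with the paper's, which simply invokes Lemma \ref{spectrumset} and Remark \ref{remark-spectrum}; your extra remark that one should exhibit a genuine pole at each $n-k$ (via an operator like $L(a)\otimes 1_{2^m}|\DD|^{-k}$ and Theorem \ref{zetageneral}) is a reasonable supplement to the inclusion argument but not something the paper spells out.

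For part (ii) you land on the correct value, and the direct computation you write down and then second-guess is exactly the paper's proof: since $D=\DD+P_0$ acts as the identity on the $2^m$-dimensional kernel,
\begin{align*}
\zeta_D(s)=\Tr\big(|D|^{-s}\big)=2^m\Big({\sum_{k\in\Z^n}}'|k|^{-s}+1\Big)=2^m\big(Z_n(s)+1\big),
\end{align*}
and $Z_n(0)=-1$ by \eqref{Zn0} gives $\zeta_D(0)=0$. The ``resolution'' you finally adopt, however, contains a genuine error: the $t^0$-coefficient of the full heat trace is \emph{not} $2^m$. Indeed $\Tr(e^{-t\DD^2})=2^m\sum_{k\in\Z^n}e^{-t|k|^2}=2^m(\pi/t)^{n/2}+\mathcal O(t^\infty)$ already includes the $k=0$ term, so $a_0=0$; the kernel's contribution $+2^m$ is exactly cancelled by the $-2^m$ hidden in ${\sum}'_k e^{-t|k|^2}=(\pi/t)^{n/2}-1+\mathcal O(t^\infty)$. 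With $a_0=0$ the identity $a_0=\zeta_\DD(0)+\dim\Ker\DD$ yields $-2^m$ for the \emph{off-kernel} zeta function, and the proposition's $\zeta_D(0)$ differs from it by precisely $+2^m$ because $|D|^{-s}$ contributes $2^m\cdot 1^{-s}$ on $\Ker\DD$. So the bookkeeping works out to $0$ either way, but not via ``$a_0=2^m$''; you should delete that step and keep the direct computation.
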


\begin{proof}
$(i)$ Lemma \ref{spectrumset} and Remark \ref{remark-spectrum}.

$(ii)$  $\zeta_D(s)={\sum}_{k\in \Z^n}
\sum_{1\leq j\leq 2^m} \<
U_k\otimes e_j, |D|^{-s}U_k\otimes e_{j}>=2^m( {\sum}'_{k\in\Z^n}
\frac{1}{|k|^{s}} + 1) =2^m(\,Z_n(s)+1).$ By (\ref{Zn0}), we get the
result.
\end{proof}

We have computed $\zeta_D(0)$ relatively easily but the main difficulty of the present Section is precisely to 
calculate $\zeta_{D_A}(0)$.

\subsection{Noncommutative integral computations}

We fix a self-adjoint one-form $A$ on the noncommutative torus of dimension $n$.

\begin{prop}
\label{invariance}
If $\tfrac{1}{2\pi}\Th$ is badly approximable, then the first elements of the spectral action 
expansion (\ref{asympspectral}) are given by
\begin{align*}
&\ncint {|D_{A}|}^{-n}\,=\ncint |D|^{-n}= 2^{m+1}\pi^{n/2}\,\Gamma(\tfrac{n}{2})^{-1}.\\
&\ncint \vert D_{A}\vert^{-n+k}=0 \text{ for k odd}.\\
&\ncint \vert D_{A}\vert^{-n+2}=0.
\end{align*}
\end{prop}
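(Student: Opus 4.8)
The three identities concern the leading coefficients of the spectral action expansion \eqref{asympspectral}, which by Corollary \ref{asymptidentic} are controlled by the noncommutative integrals $\ncint |D_A|^{-n+k}$. The strategy is to use the three lemmas of Section \ref{Noncommutative integration} that express these integrals in terms of $\ncint |D|^{-n+k}$ plus correction terms built from $X = \DD_A^2 - \DD^2 = \wt A\DD+\DD\wt A+\wt A^2$, together with the crucial vanishing $\wt A = 0$ which holds here because... wait, $\Th$ is not integral, so $\wt A \neq 0$ in general; instead the corrections must be computed and shown to vanish using the structure of $X$ and the residue formulas of Section \ref{Residues of series}.

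\textbf{Step 1 (the $|D_A|^{-n}$ identity).} First I would invoke Proposition \ref{invariance1}, which gives $\ncint |D_A|^{-n} = \ncint |D|^{-n}$ directly for any simple regular spectral triple of dimension $d=n$; the triple $(\A_\Th,\H,\DD)$ is such a triple by Proposition \ref{zeta(0)}(i) when $\tfrac{1}{2\pi}\Th$ is badly approximable. Then I would compute $\ncint |D|^{-n}$ explicitly: by \eqref{nlccoeff} and the computation $\zeta_D(s) = 2^m(Z_n(s)+1)$ from the proof of Proposition \ref{zeta(0)}, one has $\ncint |D|^{-n} = \underset{s=n}{\Res}\,\zeta_D(s) = 2^m \underset{s=n}{\Res}\, Z_n(s) = 2^m \cdot 2\pi^{n/2}\Gamma(n/2)^{-1}$ using \eqref{formule}. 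This yields $2^{m+1}\pi^{n/2}\Gamma(\tfrac n2)^{-1}$.

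\textbf{Step 2 (odd $k$).} For $\ncint |D_A|^{-n+k}$ with $k$ odd I would use Lemma \ref{Res-zeta-n-k}: each correction term is a residue of $\Tr(\eps^{r_1}(Y)\cdots\eps^{r_p}(Y)|D|^{-s})$, and by Lemma \ref{2dev}(i) and Corollary \ref{eps-pdo} these operators lie in $\Psi_1(\A)$ and are, after expansion, linear combinations of operators of the form $B|D|^{-t}$ with $B\in\DD_1(\A)$ built from an odd number of Dirac/Clifford factors combined with gamma matrices. Since $\DD = -i\delta_\mu\otimes\gamma^\mu$, tracing over the spinor indices produces factors $\Tr(\gamma^{\mu_1}\cdots\gamma^{\mu_j})$ which vanish when $j$ is odd; the parity bookkeeping—identical to the vanishing argument in Theorem \ref{WresD-p} and in the computation of $\zeta_{D_A}(0)$—forces all residues contributing to odd order to vanish, giving $\ncint |D_A|^{-n+k}=0$. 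I would also record $\ncint |D|^{-n+k}=0$ for odd $k$ by the same gamma-trace argument applied to $\zeta_D$ (indeed $Z_n(s-1)$-type terms never appear).

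\textbf{Step 3 (the $|D_A|^{-n+2}$ identity), the main obstacle.} Here I would apply Lemma \ref{residus-particuliers}(ii): $\ncint |D_A|^{-(n-2)} = \ncint |D|^{-(n-2)} + \tfrac{n-2}{2}\big(-\ncint X|D|^{-n} + \tfrac n4 \ncint X^2|D|^{-2-n}\big)$. The first term $\ncint |D|^{-(n-2)}$ is $\tfrac12\Gamma(1)\underset{s=n-2}{\Res}\zeta_D(s) = 2^{m-1}\underset{s=n-2}{\Res} Z_n(s)$, and since $Z_n$ has its only pole at $s=n$, this residue is $0$. The real work is to show the two correction terms vanish. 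Expanding $X = \wt A\DD + \DD\wt A + \wt A^2$ with $\wt A_\a = L(A_\a)-R(A_\a)$ from \eqref{dirac}, and using the trace property of $\ncint$ together with $\ncint \nabla(\cdot)|D|^{-k}=0$, one reduces $\ncint X|D|^{-n}$ and $\ncint X^2|D|^{-2-n}$ to sums of zeta functions of the type treated in Theorem \ref{analytic} and Theorem \ref{zetageneral}; the Diophantine (badly approximable) condition on $\tfrac{1}{2\pi}\Th$ guarantees these have meromorphic continuations with computable residues, and the residue-sum commutation of Lemma \ref{res-som} applies. The expected outcome is that $\ncint X|D|^{-n}$ picks out only the $X^2$-type piece with a residue proportional to $\tau$ of a commutator-like expression that vanishes—concretely, linear-in-$A$ terms are tadpoles which vanish in the commutative-analogue bookkeeping, and the quadratic-in-$A$ terms assemble into $\int$ of a total "curvature" that integrates to zero at this order because $a_{n-2}(\DD_A^2)$ on $\T^n_\Th$ receives no scalar-curvature contribution (the flat metric). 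I anticipate the genuine difficulty is exactly this: carefully tracking which $\zeta_{p_1,\dots,p_n}$-residues survive after the spinor trace and the $\delta_{\sum l_i,0}$ constraints, and showing the surviving combination is zero; this is where the technical machinery of Section \ref{Residues of series} (especially Lemma \ref{ieffective} and Theorem \ref{zetageneral}) is indispensable, and where most of the omitted computations lie.
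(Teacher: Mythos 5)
Your Step 1 matches the paper exactly (Proposition \ref{invariance1} plus the residue of $Z_n(s)$ at $s=n$). Your Step 2 also follows the paper's route --- Lemma \ref{Res-zeta-n-k}, Corollary \ref{eps-pdo}, and then the vanishing of $\ncint P\,|D|^{-(n-k)}$ for $P\in\Psi_1(\A)$ and $k$ odd --- but the mechanism you offer for that vanishing is not the one that works. The paper's Lemma \ref{ncint-odd-pdo} does not rest on gamma-trace parity: the number of gamma matrices in the operator $B$ is fixed, whereas the homogeneous component $Q_{h,\mu}$ of the momentum polynomial that contributes to the residue at $s=0$ has degree $d=2p-q$, odd precisely because $q$ is odd; the residue is then $V\int_{S^{n-1}}Q_{h,\mu}(u)\,dS(u)=0$ by oddness of $Q_{h,\mu}$ under $u\mapsto -u$. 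These two parities need not coincide for a general element of $\Psi_1(\A)$, so the gamma-trace argument alone would not close the step.

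The genuine gap is in Step 3. You assert that "the two correction terms vanish" and that $\ncint X|D|^{-n}$ reduces to a quantity that is zero. This is false: by Lemma \ref{traceXD}, $\ncint X|D|^{-n}=\ncint\wt A^2|D|^{-n}= 2^{m+1}\big(-\sum_l a_{\a,l}\,a^\a_{-l}+a_{\a,0}\,a^\a_0\big)\,2\pi^{n/2}\Gamma(n/2)^{-1}$, which is generically nonzero, and $\ncint X^2|D|^{-2-n}$ is likewise nonzero. What actually happens is an exact cancellation between two nonzero terms: Corollary \ref{res-n-2-A} gives $\ncint|D_A|^{-(n-2)}=\tfrac{n(n-2)}{4}\big(\ncint\wt A\DD\wt A\DD|D|^{-n-2}+\tfrac{n-2}{n}\ncint\wt A^2|D|^{-n}\big)$, and Lemma \ref{traceAA} computes $\ncint\wt A\DD\wt A\DD|D|^{-n-2}=-\tfrac{n-2}{n}\ncint\wt A^2|D|^{-n}$, the coefficient $-\tfrac{n-2}{n}$ coming from $\Tr(\ga^{\a_2}\ga^{\mu}\ga^{\a_1}\ga_\mu)=2^m(2-n)\delta^{\a_2\a_1}$ combined with the residue of ${\sum}'_k k_{\mu_1}k_{\mu_2}|k|^{-s-n-2}=\tfrac{\delta_{\mu_1\mu_2}}{n}Z_n(s+n)$. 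Your "flat metric, no scalar curvature" heuristic explains why one should expect zero, but it does not substitute for this computation, and your intermediate claims about which pieces vanish are contradicted by the paper's Lemmas \ref{traceXD} and \ref{traceAA}.
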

We need a few technical lemmas:

\begin{lemma}
\label{traceAD}
On the noncommutative torus, for any $t\in \R$,
$$
\ncint \wt A \DD |D|^{-t}= \ncint \DD \wt A |D|^{-t} =0.
$$
\end{lemma}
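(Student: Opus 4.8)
The plan is to compute $\ncint \wt A \DD |D|^{-t}$ directly from the definition \eqref{ncP}, i.e.\ as the residue at $s=0$ of $\Tr\big(\wt A\DD|D|^{-t}|\DD|^{-s}\big)$, and show this residue vanishes; the same argument applies verbatim to $\ncint \DD\wt A|D|^{-t}$. Recall that $\wt A=A+\epsilon JAJ^{-1}$ with $A=L(-iA_\a)\otimes\ga^\a$, so by \eqref{dirac} and the formula $\epsilon J\big(L(a)\otimes\ga^\a\big)J^{-1}=-R(a^*)\otimes\ga^\a$, we have $\wt A=\big(L(-iA_\a)-R(-iA_\a^*)\big)\otimes\ga^\a=-i\wt A_\a\otimes\ga^\a$ in the notation $\wt A_\a=L(A_\a)-R(A_\a)$ (using $A_\a^*=-A_\a$). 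Writing $\DD=-i\delta_\mu\otimes\ga^\mu$, the product $\wt A\DD$ is a finite sum of terms of the form (operator on $\H_\tau$) $\otimes\,\ga^\a\ga^\mu$.

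First I would expand $A_\a=\sum_l a_{\a,l}U_l$ with $\{a_{\a,l}\}_l\in\SS(\Z^n)$, act on the orthonormal basis $U_k\otimes e_j$, and compute the trace of $\wt A\DD|D|^{-t-s}$ following exactly the bookkeeping in the proof of Lemma \ref{spectrumset}: the operator $\wt A\DD|D|^{-t-s}$ sends $U_k\otimes e_j$ to a sum over $l$ of terms involving $U_lU_k$ (resp.\ $U_kU_l$) times $k_\mu|k|^{-t-s}$ times Dirac-matrix factors, and taking the diagonal matrix element $\langle U_k\otimes e_i,\,\cdot\,\rangle$ forces $l=0$ in the left piece and $l=0$ in the right piece via $\tau(U_{-k}U_lU_k)=\delta_{l,0}$. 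Hence $\Tr\big(\wt A\DD|D|^{-t-s}\big)$ reduces to $\big(\tau(A_\a)-\tau(A_\a)\big)$-type contributions, which is zero because $L$ and $R$ have the same trace $\tau$ — more precisely the zeroth Weyl component of $\wt A_\a=L(A_\a)-R(A_\a)$ is $a_{\a,0}-a_{\a,0}=0$. So the trace is, up to a constant in $s$ (i.e.\ $\sim_c 0$), identically zero, and therefore its residue at $s=0$ vanishes. The Dirac trace factor $\Tr(\ga^\a\ga^\mu)=c\,\delta^{\a\mu}$ is harmless and does not affect this conclusion.

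An alternative, cleaner route I would keep in reserve: use that $\ncint$ is a trace on $\Psi(\A)$ (Definition \ref{ncintdef} together with the trace property established as in Theorem \ref{Wresresult}(iii) / Corollary \ref{uniquetrace}) and that $\ncint X=\overline{\ncint X^*}$ (Lemma \ref{adjoint}). Since $\wt A_\a^*=-\wt A_\a$ and $\DD$, $|D|$ are self-adjoint, one gets $\ncint\wt A\DD|D|^{-t}$ is purely imaginary; combined with Corollary \ref{reel}, which asserts $\ncint A^l\DD|D|^{-k}\in\R$ for a self-adjoint one-form, one concludes the integral is both real and imaginary, hence zero. For $\ncint\DD\wt A|D|^{-t}$ one first moves $\DD$ to the right using the trace property modulo $\ncint\nabla(\wt A)|D|^{-t'}=0$ (again because $\ncint$ is a trace), reducing to the previous case; the commutator $[\,|D|^{-t},\wt A\,]$ terms are lower order and their $\ncint$ vanishes by the trace property as well.

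The main obstacle is purely one of careful bookkeeping rather than genuine difficulty: I must make sure that when the extra factor $|D|^{-t}$ (with $t$ an arbitrary real, not necessarily an even integer) is inserted, the combination $\wt A\DD|D|^{-t}$ still lies in a class to which the residue machinery of Section \ref{Residues of series} applies — i.e.\ that $\Tr\big(\wt A\DD|D|^{-t-s}\big)$ has a meromorphic continuation with at most simple poles, which follows from Lemma \ref{spectrumset} and Theorem \ref{analytic} since $\wt A\DD\in\DD_1(\A)\cap OP^1$ — and then verify that the symmetry $k\mapsto -k$ (or directly $\tau(U_{-k}U_k)=1$, $\tau(U_{-k}U_lU_k)=\delta_{l,0}$) genuinely kills the off-diagonal Weyl modes and that the surviving $l=0$ term has vanishing coefficient. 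Once that is checked, the statement is immediate. I would present the trace-property argument as the primary proof since it is short and robust, and mention the direct computation as confirmation.
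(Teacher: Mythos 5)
Your direct computation in the first two paragraphs is correct and is exactly the paper's proof: one expands $\wt A$ on the Weyl basis, observes that $\wt A\,U_k=[A_\a,U_k]\otimes\ga^\a$ has no component along $U_k$ (the $l=0$ mode carries the factor $\sin(\tfrac12 k\.\Th\, 0)=0$, equivalently $L(U_0)=R(U_0)$ cancel), and similarly the diagonal matrix element of $\DD\wt A$ picks out $l=0$ where the $\sin$ vanishes; hence $\Tr(\wt A\DD|D|^{-t-s})$ and $\Tr(\DD\wt A|D|^{-t-s})$ are $\sim_c 0$ and the residues vanish. No meromorphic-continuation machinery is even needed, since the function being continued is (a constant plus) identically zero.

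However, the ``alternative, cleaner route'' that you say you would present as the \emph{primary} proof does not close, and you should not lead with it. The claim that $\ncint \wt A\DD|D|^{-t}$ is purely imaginary is false: with $A=A^*$ one has $\wt A^*=\wt A$, so $X^*=|D|^{-t}\DD\wt A$, and Lemma \ref{adjoint} combined with the trace property and $[\DD,|D|^{-t}]=0$ gives $\overline{\ncint X}=\ncint X^*=\ncint \DD\wt A|D|^{-t}=\ncint \wt A\DD|D|^{-t}=\ncint X$, i.e.\ the integral is \emph{real}. (Decomposing into the anti-self-adjoint components $\wt A_\a$, $\delta_\mu$ does not help: the two factors of $-i$ and the real trace $\Tr(\ga^\a\ga^\mu)$ again produce a real quantity.) Since Corollary \ref{reel} also only asserts reality, the two facts together yield ``real and real'', not zero. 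So the symmetry argument proves nothing beyond reality, and the explicit Weyl-basis computation must remain the actual proof rather than a ``confirmation''. (A minor further slip in that paragraph: moving $\DD$ across $\wt A$ produces $[\DD,\wt A]$, not $\nabla(\wt A)=[\DD^2,\wt A]$; the conclusion $\ncint[\DD,\wt A]|D|^{-t}=\ncint[\DD,\wt A|D|^{-t}]=0$ is nevertheless correct.)
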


\begin{proof}
Using notations of (\ref{connection}), we have
\begin{align*}
\Tr (\wt A \DD |D|^{-s})&\sim_c {\sum}_j {\sum}'_k \langle U_k\otimes e_j,-i k_\mu|k|^{-s}
[A_\a,U_k] \otimes \ga^\a \ga^\mu e_j \rangle \\
&\sim_c -i\Tr(\ga^\a\ga^\mu) \, {\sum}'_k k_\mu |k|^{-s} \langle U_k,[A_\a,U_k] \rangle=0
\end{align*}
since $\langle U_k,[A_\a,U_k] \rangle = 0$. Similarly
\begin{align*}
\Tr ( \DD \wt A  |D|^{-s})&\sim_c {\sum}_j {\sum}'_k \langle U_k\otimes
e_j,|k|^{-s}{\sum}_l a_{\a,l}\,2 \sin \tfrac{k. \Th l}{2} (l+k)_\mu U_{l+k} \otimes \ga^\mu \ga^\a e_j \rangle\\
&\sim_c 2\Tr(\ga^\mu \ga^\a){\sum}'_k {\sum}_l a_{\a,l}\sin \tfrac{k. \Th l}{2}\,(l+k)_\mu \,|k|^{-s}\langle U_k,U_{l+k} 
\rangle =0.
\end{align*}
\end{proof}

Any element $h$ in the algebra generated by $\A$ and $[\DD,\A]$ can be written as a
linear combination of terms of the form ${a_1}^{p_1}\cdots {a_n}^{p_r}$ where
$a_i$ are elements of $\A$ or $[\DD,\A]$. Such a term can be written
as a series $b\vc \sum a_{1,\a_1,l_1}\cdots a_{q,\a_q,l_q} U_{l_1}\cdots
U_{l_q} \otimes \ga^{\a_1}\cdots \ga^{\a_q}$ where $a_{i,\a_i}$ are Schwartz sequences and when 
$a_i=:\sum_l a_l U_l \in \A$, we set $a_{i,\a,l}=a_{i,l}$ with $\ga^\a =1$. We define
$$
L(b)\vc  \tau \big({\sum}_l a_{1,\a_1,l_1}\cdots a_{q,\a_q,l_q} U_{l_1}
\cdots U_{l_q}\big) \Tr (\ga^{\a_1}\cdots \ga^{\a_q}).
$$
By linearity, $L$ is defined as a linear form on the whole algebra generated by $\A$ and $[\DD,\A]$.

\begin{lemma}
\label{tracehD}
If $h$ is an element of the algebra generated by $\A$ and $[\DD,\A]$,
$$
\Tr \big(h |D|^{-s}\big) \sim_c L(h)\,  Z_n(s).
$$
In particular, $\Tr \big(h |D|^{-s}\big)$ has at most one pole at $s=n$.
\end{lemma}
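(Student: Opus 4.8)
The plan is to reduce the trace $\Tr(h|D|^{-s})$ to a sum of the scalar zeta functions $Z_n(s)$ studied above, modulo a function holomorphic in a neighborhood of $s=n$. First I would write $h$ in the stated form $b=\sum_{l\in(\Z^n)^q} a_{1,\a_1,l_1}\cdots a_{q,\a_q,l_q}\,U_{l_1}\cdots U_{l_q}\otimes\gamma^{\a_1}\cdots\gamma^{\a_q}$, with all the sequences $(a_{i,\a_i,l_i})_{l_i}$ in $\SS(\Z^n)$, and compute the action on the basis vectors $U_k\otimes e_j$. Using the Weyl relations \eqref{rel1}, $U_{l_q}\cdots U_{l_1}U_k=e^{i\psi(l)}\,U_{\hat l_q+k}$ for a real phase $\psi(l)$ depending only on $l$ (and on $\Th$), so that
\begin{align*}
\Tr\big(h|D|^{-s}\big)\sim_c {\sum_{k\in\Z^n}}'\ \sum_{l\in(\Z^n)^q} e^{i\psi(l)}\,\wt a_l\,|k|^{-s}\,
\tau\big(U_{-k}\,U_{l_q}\cdots U_{l_1}U_k\big)\,\Tr(\gamma^{\a_1}\cdots\gamma^{\a_q}),
\end{align*}
where $\wt a_l:=a_{1,\a_1,l_1}\cdots a_{q,\a_q,l_q}$. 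Here the symbol $\sim_c$ is justified because $h\in OP^0$, so $h|D|^{-s}$ is trace-class for $\Re(s)$ large and the difference between $\Tr(h|D|^{-s})$ and its principal part is holomorphic near $s=n$; this is exactly the reduction already used in the proof of Lemma \ref{spectrumset}.

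Next I would observe that $\tau(U_{-k}U_{l_q}\cdots U_{l_1}U_k)$ is nonzero only when $\sum_{i=1}^q l_i=0$, in which case $U_{l_q}\cdots U_{l_1}$ is, up to a phase, $U_0$ times a phase and the whole expression $\tau(U_{-k}U_{l_q}\cdots U_{l_1}U_k)$ equals $e^{i\phi(l)}\delta_{\hat l_q,0}$ for some real $\phi(l)$ independent of $k$ (the would-be $k$-dependent factor $e^{-ik.\Th\sum_i l_i}$ collapses to $1$ on the support $\sum_i l_i=0$). Hence the $k$-sum decouples completely from the $l$-sum:
\begin{align*}
\Tr\big(h|D|^{-s}\big)\sim_c \Big(\,{\sum_{k\in\Z^n}}'|k|^{-s}\Big)\;\sum_{\substack{l\in(\Z^n)^q\\ \sum_i l_i=0}} e^{i(\psi(l)+\phi(l))}\,\wt a_l\,\Tr(\gamma^{\a_1}\cdots\gamma^{\a_q}).
\end{align*}
The first factor is $Z_n(s)$. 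The second factor is a constant (independent of $s$): the sum over $l$ with $\sum_i l_i=0$ converges absolutely because $(\wt a_l)$ is a Schwartz sequence on $(\Z^n)^q$, and by the very definition of $L$ one checks that this constant is precisely $L(h)$ — indeed on the diagonal constraint $\sum_i l_i=0$ the phases combine to give exactly $\tau(\sum_l a_{1,\a_1,l_1}\cdots a_{q,\a_q,l_q}U_{l_1}\cdots U_{l_q})$. Summing over the finitely many monomials in the linear decomposition of $h$ and using linearity of $L$ gives $\Tr(h|D|^{-s})\sim_c L(h)\,Z_n(s)$.

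The last sentence then follows because $Z_n(s)$ is meromorphic on $\C$ with a single simple pole at $s=n$ (Proposition \ref{calculres}, equations \eqref{formule}–\eqref{Zn0}), so $\Tr(h|D|^{-s})$ can only have a pole at $s=n$. The main obstacle I anticipate is the bookkeeping of the $k$-independence of the phase $\tau(U_{-k}U_{l_q}\cdots U_{l_1}U_k)$ on the support $\{\sum_i l_i=0\}$: one must carry the Weyl cocycle carefully through the $q$ products to see that the residual $k$-dependent exponential $e^{-ik.\Th(\sum_i l_i)}$ is killed exactly by the Kronecker delta, and that the remaining phase $e^{i(\psi(l)+\phi(l))}\wt a_l$ reassembles into $L(h)$ rather than into some twisted variant. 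Everything else — absolute convergence of the $l$-sum from the Schwartz condition, the $\sim_c$ reduction from $h\in OP^0$, and the pole structure of $Z_n$ — is routine given the earlier results.
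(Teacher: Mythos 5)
Your proof is correct and follows essentially the same route as the paper's: expand $h$ into monomials, compute the trace over the basis $U_k\otimes e_j$, observe that $\tau(U_{-k}U_{l_q}\cdots U_{l_1}U_k)$ is supported on $\sum_i l_i=0$ and is $k$-independent there, and factor the result into $Z_n(s)$ times the constant $L(h)$. (One cosmetic remark: the Weyl phase in your first display depends on $k$ as well until the constraint $\sum_i l_i=0$ is imposed, as you yourself correctly note afterwards, so the phase bookkeeping there is slightly redundant but harmless.)
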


\begin{proof} We get with $b$ of the form $\sum a_{1,\a_1,l_1}\cdots a_{q,\a_q,l_q} U_{l_1}\cdots U_{l_q} \otimes
\ga^{\a_1}\cdots \ga^{\a_q}$,
\begin{align*}
\Tr\big(b|D|^{-s}\big)&\sim_c {\sum_{k\in\Z^n}}' \langle  U_k, \sum_l a_{1,\a_1,l_1}\cdots a_{q,\a_q,l_q} U_{l_1}
\cdots U_{l_q}U_k \rangle \, \Tr (\ga^{\a_1}\cdots \ga^{\a_q})|k|^{-s} \\
&\sim_c \tau(\sum_l a_{1,\a_1,l_1}\cdots a_{q,\a_q,l_q} U_{l_1}\cdots U_{l_q})\Tr(\ga^{\a_1}\cdots \ga^{\a_q})\, 
Z_n(s) =L(b) \,Z_n(s).
\end{align*}
The results follows now from linearity of the trace.
\end{proof}

\begin{lemma}
\label{traceJAJA}
If $\tfrac{1}{2\pi}\Th$ is badly approximable, the function $s\mapsto\Tr \big( \eps JAJ^{-1} A |D|^{-s} \big)$ extends 
meromorphically on the whole plane with only one possible pole at $s=n$. Moreover, this pole is simple and
$$
\underset{s=n}{\Res}\, \Tr \big(\eps JAJ^{-1} A |D|^{-s}\big) = a_{\a,0}\,a^\a_{0}\ 2^{m+1}\pi^{n/2}\,\Ga(n/2)^{-1}.
$$
\end{lemma}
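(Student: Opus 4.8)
\textbf{Plan for the proof of Lemma \ref{traceJAJA}.}
The plan is to compute $\Tr\big(\eps JAJ^{-1}A|D|^{-s}\big)$ explicitly on Weyl elements and reduce it to the zeta functions analyzed in Section \ref{Residues of series}. First I would write $A=L(-iA_\a)\otimes\ga^\a$ with $A_\a=-A_\a^*=\sum_l a_{\a,l}U_l\in\A_\Th$, so that $\eps JAJ^{-1}=\eps J\big(L(-iA_\a)\otimes\ga^\a\big)J^{-1}=-R(A_\a^*)\otimes\ga^\a=R(A_\a)\otimes\ga^\a$ using $J_0L(a)J_0^{-1}=R(a^*)$, $\eps C_0\ga^\a=-\ga^\a C_0$ and $A_\a^*=-A_\a$. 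Hence $\eps JAJ^{-1}A=-R(A_\a)L(A_\b)\otimes\ga^\a\ga^\b$, and acting on $U_k\otimes e_j$ this produces $R(A_\a)L(A_\b)U_k=\sum_{l,l'}a_{\a,l'}a_{\b,l}\,U_lU_kU_{l'}$. Taking the trace against $\langle U_k\otimes e_j|$ and summing over $j$ introduces $\Tr(\ga^\a\ga^\b)=2^m\delta^{\a\b}$ (up to the normalization used in the paper) and the condition that the total Weyl index vanishes; computing $\tau(U_{-k}U_lU_kU_{l'})$ via \eqref{rel1} forces $l+l'=0$, giving a factor $e^{i\phi(l)}$ times a phase $e^{-ik.\Th l}$.

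Next I would collect the result in the form
$$
\Tr\big(\eps JAJ^{-1}A|D|^{-s}\big)\sim_c -2^m\sum_{\a}\sum_{l\in\Z^n}a_{\a,l}a_{\a,-l}\ {\sum_{k\in\Z^n}}'\frac{e^{-ik.\Th l}}{|k|^{s}},
$$
where $\sim_c$ means modulo an entire function (the contribution of $k=0$). Each inner sum ${\sum_k}'|k|^{-s}e^{-ik.\Th l}=f_{-\Th l/2\pi}(s)$ in the notation of Theorem \ref{analytic}, with $P=1$ (degree $d=0$). Applying Theorem \ref{analytic}(ii) with $q=1$, $\eps_1=1$, $b(l):=-2^m\sum_\a a_{\a,l}a_{\a,-l}$ (a Schwartz sequence since $\{a_{\a,l}\}_l\in\SS(\Z^n)$) and the badly approximable matrix $\tfrac{1}{2\pi}\Th$, the function extends meromorphically to $\C$ with at most one pole, at $s=n$ (here $d+n=n$), and it is simple. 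The set $\mathcal Z$ reduces to $\{l=0\}$, so $V=b(0)=-2^m\sum_\a a_{\a,0}a_{\a,0}$; but $A_\a=-A_\a^*$ means $a_{\a,0}=\overline{a_{\a,-0}}\cdot(-1)$ up to sign, and one must be careful with the index placement $a^\a_0$ versus $a_{\a,0}$: writing $A^\a$ for the raised-index field one has $a_{\a,0}a^\a_0=-|a_{\a,0}|^2\cdot(\text{sign})$, which is precisely what produces the stated residue. Using $\int_{S^{n-1}}1\,dS=2\pi^{n/2}/\Gamma(n/2)$ and combining with the factor $2^m$ from the Clifford trace gives
$$
\underset{s=n}{\Res}\,\Tr\big(\eps JAJ^{-1}A|D|^{-s}\big)=a_{\a,0}a^\a_0\ 2^{m+1}\pi^{n/2}\Gamma(n/2)^{-1}.
$$

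The main obstacle I anticipate is bookkeeping rather than analysis: tracking the signs and the phase $e^{i\phi(l)}$ coming from the Weyl cocycle through the identity $\tau(U_{-k}U_lU_kU_{-l})$, and reconciling the index conventions ($a_{\a,l}$, $a^\a_l$, the sign $\eps$, and the fact that $A_\a$ is anti-selfadjoint so $a_{\a,-l}=-\overline{a_{\a,l}}$) so that the $\mathcal Z$-sum $V$ collapses exactly to $a_{\a,0}a^\a_0$ and not its negative or its modulus. The genuinely hard input — meromorphic continuation with control of poles under only a Diophantine hypothesis on $\Th$ — is already supplied by Theorem \ref{analytic}(ii), so the proof is essentially an application of that theorem once the trace is put in the canonical form; I would also remark that Lemma \ref{traceAD} and Lemma \ref{tracehD} handle the companion terms $\ncint\wt A\DD|D|^{-t}$ and $\ncint\wt A^2|D|^{-t}$ needed alongside this one in Proposition \ref{invariance}.
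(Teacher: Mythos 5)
Your proposal follows exactly the paper's route: write $\eps JAJ^{-1}A$ as a right-times-left Weyl multiplication, evaluate the trace on the basis $U_k\otimes e_j$ to produce $2^m\,{\sum}'_{k\in\Z^n}\sum_l a_{\a,l}\,a^\a_{-l}\,e^{ik.\Th l}|k|^{-s}$, and invoke Theorem \ref{analytic}(ii) with $P=1$, $\mathcal Z=\{l=0\}$ and $\int_{S^{n-1}}dS=2\pi^{n/2}\Ga(n/2)^{-1}$ (equivalently $\Res_{s=n}Z_n(s)$) to get the single simple pole at $s=n$ and the stated residue. The only slip is the sign bookkeeping you yourself flagged: since $(-iA_\a)^*=-iA_\a$, one has $\eps JAJ^{-1}=R(iA_\a)\otimes\ga^\a$ rather than $R(A_\a)\otimes\ga^\a$, so the product is $+R(A_\b)L(A_\a)\otimes\ga^\b\ga^\a$ and $V=+2^m a_{\a,0}\,a^\a_0$ with no residual minus sign to explain away.
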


\begin{proof}
With $A=L(-i A_\a)\otimes \ga^\a$, we get $\epsilon J A J^{-1}=R(i A_\a)\otimes \ga^\a$, and by multiplication
$\eps JAJ^{-1} A=R(A_\beta)
L(A_\a)\otimes \ga^{\beta}\ga^\a$. Thus,
\begin{align*}
\Tr\big(\eps JAJ^{-1} A |D|^{-s}\big)&\sim_c {\sum_{k\in\Z^n}}' \langle U_k,A_\a U_k A_\beta \rangle \,|k|^{-s}
\Tr (\ga^{\beta}\ga^{\a}) \\
&\sim_c{\sum_{k\in\Z^n}}' \,\sum_{l} a_{\a,l}\,a_{\beta,-l}\,e^{ik.\Th l}\, |k|^{-s}\Tr (\ga^{\beta}\ga^{\a})\\
&\sim_c 2^m {\sum_{k\in\Z^n}}' \, \sum_{l} a_{\a,l}\,a^\a_{-l}\,e^{ik.\Th l}\, |k|^{-s}.
\end{align*}
Theorem \ref{analytic} $(ii)$ entails that ${\sum}'_{k\in\Z^n} \, \sum_{l} a_{\a,l} \,a^{\a}_{-l}\,e^{ik.\Th l}\, |k|^{-s}$ 
extends meromorphically on the whole plane $\C$ with only one possible pole at $s=n$.
Moreover, this pole is simple and we have
$$
\underset{s=n}{\Res}\, {\sum_{k\in\Z^n}}' \,\sum_{l} a_{\a,l}
\,a^\a_{-l}\,e^{ik.\Th l}\,
|k|^{-s} =  a_{\a,0}\,a^\a_{0} \, \underset{s=n}{\Res}\, Z_n(s).
$$
Equation (\ref{formule}) now gives the result.
\end{proof}

\begin{lemma}
\label{traceXD}
If $\tfrac{1}{2\pi}\Th$ is badly approximable, then for any $t\in \R$,
$$
\ncint X|D|^{-t} = \delta_{t,n}\, 2^{m+1}\big(-\sum_l a_{\a,l}\,a^\a_{-l}+ \,a_{\a,0}\,a^\a_{0}\big)\ 2\pi^{n/2}\,
\Ga(n/2)^{-1}  .
$$
where $X=\wt A\DD + \DD \wt A + {\wt A}^2$ and $A=:-i\sum_{l} a_{\a,l}\,U_l\otimes \ga^\a$.
\end{lemma}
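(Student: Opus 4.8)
The plan is to decompose $X = \wt A\DD + \DD\wt A + {\wt A}^2$ and handle each of the three summands separately, collecting only the contributions that survive as a noncommutative integral. First I would recall that $\wt A = A + \epsilon JAJ^{-1}$, so by Lemma \ref{traceAD} the two cross terms $\ncint \wt A\DD |D|^{-t}$ and $\ncint \DD\wt A |D|^{-t}$ vanish identically for every $t\in\R$; in particular for $t=n$. Hence only the term ${\wt A}^2$ can contribute, and the whole computation reduces to evaluating $\ncint {\wt A}^2 |D|^{-t}$.

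Next I would expand ${\wt A}^2 = A^2 + \epsilon\big(A\,JAJ^{-1} + JAJ^{-1}A\big) + (JAJ^{-1})^2$ (using $(\epsilon JAJ^{-1})^2 = \epsilon^2 (JAJ^{-1})^2 = (JAJ^{-1})^2$ since $\epsilon = \pm1$). The terms $A^2$ and $(JAJ^{-1})^2$ both lie in the algebra generated by $\A$ and $[\DD,\A]$ (for $(JAJ^{-1})^2$ one uses $J$-equivariance and the fact that $R(A_\a)R(A_\beta)$ is again of that form, or more simply that $\ncint$ is invariant under $X \to JXJ^{-1}$ by Lemma \ref{adjoint}, so $\ncint (JAJ^{-1})^2 |D|^{-t} = \overline{\ncint A^2 |D|^{-t}}$), so Lemma \ref{tracehD} applies and gives a pole only at $t=n$, with residue proportional to $L(A^2)\,\underset{s=n}{\Res}Z_n(s)$. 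Writing $A = -i\sum_l a_{\a,l}U_l\otimes\ga^\a$, one computes $A^2 = -\sum_{l,l'} a_{\a,l}a_{\beta,l'} U_l U_{l'}\otimes \ga^\a\ga^\beta$ and applies $L$: only the $l' = -l$ part of $\tau(U_lU_{l'})$ survives, and $\Tr(\ga^\a\ga^\beta) = 2^m\delta^{\a\beta}$, giving $L(A^2) = -2^m \sum_l a_{\a,l}a^\a_{-l}$, hence $\ncint A^2 |D|^{-t} = \delta_{t,n}\,(-2^m)\sum_l a_{\a,l}a^\a_{-l}\,2\pi^{n/2}\Gamma(n/2)^{-1}$ by \eqref{formule}. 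The same value is obtained for $\ncint (JAJ^{-1})^2 |D|^{-t}$. For the mixed term, Lemma \ref{traceJAJA} gives $\underset{s=n}{\Res}\Tr(\epsilon JAJ^{-1}A |D|^{-s}) = a_{\a,0}a^\a_0\,2^{m+1}\pi^{n/2}\Gamma(n/2)^{-1}$, and the other ordering $\epsilon A\,JAJ^{-1}$ yields the same residue by the trace property of $\ncint$ (or by Lemma \ref{adjoint} together with self-adjointness of the relevant combination).

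Finally I would add up: $\ncint {\wt A}^2|D|^{-t} = \delta_{t,n}\big[2\cdot(-2^m)\sum_l a_{\a,l}a^\a_{-l} + 2\cdot 2^m a_{\a,0}a^\a_0\big]2\pi^{n/2}\Gamma(n/2)^{-1} = \delta_{t,n}\,2^{m+1}\big(-\sum_l a_{\a,l}a^\a_{-l} + a_{\a,0}a^\a_0\big)2\pi^{n/2}\Gamma(n/2)^{-1}$, which is exactly the claimed formula once the vanishing cross terms are discarded. The only genuinely delicate point — and the one where the badly approximable hypothesis on $\tfrac{1}{2\pi}\Th$ is essential — is the meromorphic continuation and pole location for the mixed term $\epsilon JAJ^{-1}A$, since there the oscillating phase $e^{ik.\Th l}$ appears and one must invoke Theorem \ref{analytic}$(ii)$ to ensure that the series in $l$ can be exchanged with the residue and that no pole other than $s=n$ arises; this is precisely the content already packaged in Lemma \ref{traceJAJA}, so the remaining work here is just bookkeeping of the Dirac-matrix traces and the combinatorial factors $2^m$.
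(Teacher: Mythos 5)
Your proposal is correct and follows essentially the same route as the paper: kill the cross terms via Lemma \ref{traceAD}, expand ${\wt A}^2$, handle $A^2$ and its $J$-conjugate via Lemma \ref{tracehD} (computing $L(A^2)=-2^m\sum_l a_{\a,l}a^\a_{-l}$) and the mixed term via Lemma \ref{traceJAJA}, then add the residues of $Z_n$ at $s=n$. The only cosmetic difference is that the paper combines the two mixed terms into $2\eps JAJ^{-1}A$ directly from the commutation $[A,JAJ^{-1}]=0$ of \eqref{oppcommut}, rather than arguing their equality through the trace property of $\ncint$; both arguments are valid.
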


\begin{proof} 
By Lemma \ref{traceAD}, we get $\ncint X|D|^{-t}=\Res_{s=0} \Tr({\wt A}^2 |D|^{-s-t})$. Since
$A$ and $\eps JAJ^{-1}$ commute, we have $\wt A ^2 = A^2 + JA^2J^{-1} + 2\eps JAJ^{-1}A$.
Thus,
$$
\Tr({\wt A}^2 |D|^{-s-t})=\Tr( A^2 |D|^{-s-t})+\Tr( JA^2J^{-1} |D|^{-s-t})+2\Tr ( \eps JAJ^{-1}A |D|^{-s-t}).
$$
Since $|D|$ and $J$ commute, we have with Lemma \ref{tracehD},
$$
 \Tr \big({\wt A}^2 |D|^{-s-t}\big)\sim_c 2L (A^2) \, Z_n(s+t) + 2 \Tr \big(\eps JAJ^{-1}A |D|^{-s-t}\big).
$$
Thus Lemma \ref{traceJAJA} entails that $\Tr({\wt A}^2 |D|^{-s-t})$ is holomorphic at 0 if $t\neq n$. When $t=n$,
\begin{equation}
\label{TrA^2}
\underset{s=0}{\Res}\, \Tr\big({\wt A}^2 |D|^{-s-t}\big) = 2^{m+1}\big(-\sum_l a_{\a,l} \, a^\a_{-l}+
\,a_{\a,0}\,a^\a_{0}\ \big)\, 2\pi^{n/2}\,\Ga(n/2)^{-1},
\end{equation}
which gives the result.
\end{proof}

\begin{lemma}
\label{traceAA}
If $\tfrac{1}{2\pi}\Th$ is badly approximable, then
$$
\ncint \wt A \DD \wt A \DD |D|^{-2-n}=-\tfrac{n-2}{n} \ncint \wt A^2 |D|^{-n}.
$$
\end{lemma}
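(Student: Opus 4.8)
The plan is to compute both noncommutative integrals $\ncint \wt A \DD \wt A \DD |D|^{-2-n}$ and $\ncint \wt A^2 |D|^{-n}$ in terms of the Fourier coefficients $a_{\a,l}$ of the one-form $A$, and then exhibit the claimed proportionality. First I would write $\wt A = L(-iA_\a) \otimes \ga^\a + \epsilon J(L(-iA_\a)\otimes\ga^\a)J^{-1}$, so that $\wt A\DD\wt A\DD$ expands into four terms according to whether each $\wt A$-factor is represented by a left or a right multiplication (the $L$--$L$, $L$--$R$, $R$--$L$, $R$--$R$ pieces), each carrying a product of Dirac matrices $\ga^{\a_1}\ga^{\mu_1}\ga^{\a_2}\ga^{\mu_2}$. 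Using $\DD\, U_k\otimes e_j = k_\mu U_k\otimes\ga^\mu e_j$ and the trace $\tau$, the operator $\ncint \wt A\DD\wt A\DD|D|^{-2-n}$ becomes (up to constants coming from $\Tr$ of Dirac matrices) $\underset{s=0}{\Res}$ of a sum over $k\in\Z^n\setminus\{0\}$ and over the Fourier indices $l_1,l_2$ of expressions of the shape $|k|^{-s-2-n}\, k_{\mu_1}(k+\cdot)_{\mu_2}\, e^{i(\cdots).\Th k}$, weighted by products $a_{\a_1,l_1}a_{\a_2,l_2}$.

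Next I would invoke the residue machinery of Section~\ref{Residues of series}: the relevant series are exactly of the type handled by Theorem~\ref{analytic}(ii) (or Theorem~\ref{zetageneral}), so under the Diophantine hypothesis that $\tfrac{1}{2\pi}\Th$ is badly approximable, only the ``diagonal'' contributions $l_1+l_2=0$ survive in the residue, and the oscillating factor $e^{i(\cdots).\Th k}$ drops out, reducing everything to residues of $Z_n$-type zeta functions and the monomial residues $\underset{s=0}{\Res}\,{\sum}'_k k_ik_j|k|^{-s-n-2} = \delta_{ij}\,\tfrac{1}{n}\,\underset{s=0}{\Res}\,Z_n(s+n)$ from Proposition~\ref{calculres}. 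The $\delta_{ij}$ forces $\mu_1=\mu_2$, which collapses the Dirac-matrix traces $\Tr(\ga^{\a_1}\ga^{\mu}\ga^{\a_2}\ga^{\mu})$; summing over $\mu$ and using $\ga^\mu\ga^{\a_2}\ga^\mu = (2-n)\ga^{\a_2}$ (the Clifford contraction identity in dimension $n$) produces the factor $(n-2)$ and, after the sign from $\ga^\mu\ga^{\a_2}$ anticommuting past $\ga^\mu$, converts $\Tr(\ga^{\a_1}\ga^{\a_2})$-type terms into the same combination $\sum_l a_{\a,l}a^\a_{-l}$ and $a_{\a,0}a^\a_0$ that appears in $\ncint \wt A^2|D|^{-n}$ via Lemmas~\ref{traceJAJA} and~\ref{traceXD}. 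Here one must keep careful track of the $L$/$R$ bookkeeping: the $R$--$R$ term contributes like the $L$--$L$ term (conjugated by $J$, using $J|D|=|D|J$ and anti-linearity as in Lemma~\ref{adjoint}), while the cross terms $L$--$R$ and $R$--$L$ give the $\eps JAJ^{-1}A$-type contribution.

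The comparison term $\ncint \wt A^2|D|^{-n}$ I would read off directly from the proof of Lemma~\ref{traceXD}: up to the common constant $2^{m+1}\cdot 2\pi^{n/2}\Ga(n/2)^{-1}$ it equals $-\sum_l a_{\a,l}a^\a_{-l} + a_{\a,0}a^\a_0$. So the whole statement reduces to checking that the $\wt A\DD\wt A\DD$ computation yields $-\tfrac{n-2}{n}$ times exactly that same bracket (with the same constant absorbed), i.e. that the extra factor $\tfrac{n-2}{n}$ is precisely $(n-2)$ from the Clifford contraction times $\tfrac 1n$ from the angular integral $\underset{s=0}{\Res}\,{\sum}'_k k_\mu k_\mu|k|^{-s-n-2} = \tfrac1n\underset{s=0}{\Res}\,Z_n(s+n)$.

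The main obstacle I expect is the sign and combinatorial bookkeeping in the four-term expansion: correctly matching each of the $L$--$L$, $L$--$R$, $R$--$L$, $R$--$R$ pieces of $\wt A\DD\wt A\DD$ with the corresponding pieces $A^2$, $JA^2J^{-1}$, $\eps JAJ^{-1}A$ of $\wt A^2$, getting the signs right from $C_0\ga^\a = -\eps\ga^\a C_0$ and from commuting Dirac matrices in the traces, and making sure the diagonal restriction $l_1+l_2=0$ (legitimate thanks to the badly-approximable hypothesis) is applied uniformly across all four terms. Once the Clifford identity $\sum_\mu \ga^\mu\ga^{\a}\ga^\mu = (2-n)\ga^{\a}$ is in hand and the residue-commutes-with-sum step (Lemma~\ref{res-som}) is justified, the rest is routine bookkeeping that I would not grind through here.
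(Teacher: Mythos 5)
Your proposal is correct and follows essentially the same route as the paper's proof: expand $\wt A\DD\wt A\DD$ into the $L$/$R$ pieces (pairing them via $J$ so only $A\DD A\DD$ and the cross term $\eps JAJ^{-1}\DD A\DD$ need computing), let $\tau$ enforce $l_1+l_2=0$ and the Diophantine condition kill the oscillating cross-term contributions except $l=0$, then use $\underset{s=0}{\Res}\,{\sum}'_k k_{\mu_1}k_{\mu_2}|k|^{-s-2-n}=\tfrac{\delta_{\mu_1\mu_2}}{n}\,\underset{s=0}{\Res}\,Z_n(s+n)$ together with $\Tr(\ga^{\a_2}\ga^{\mu}\ga^{\a_1}\ga_\mu)=2^m(2-n)\delta^{\a_2\a_1}$ and compare with the expression \eqref{TrA^2} for $\ncint\wt A^2|D|^{-n}$. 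The bookkeeping you defer is exactly what the paper carries out, and it confirms your identification of the factor $-\tfrac{n-2}{n}$ as $(2-n)$ from the Clifford contraction times $\tfrac1n$ from the angular residue.
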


\begin{proof}
With $\DD J = \eps J \DD$, we get
$$
\ncint \wt A \DD \wt A \DD |D|^{-2-n} = 2 \ncint  A \DD  A \DD |D|^{-2-n} + 2 \ncint \eps JAJ^{-1} \DD A \DD |D|^{-2-n}.
$$
Let us first compute $\ncint  A \DD  A \DD |D|^{-2-n}$. We have, with 
$A=:-i L(A_\a)\otimes \ga^\a=: -i\sum_{l} a_{\a,l} U_l \otimes \ga^\a$,
$$
\Tr \big(A\DD A \DD |D|^{-s-2-n}\big) \sim_c -{{\sum_{k}}}' \sum_{l_1,l_2} a_{\a_2,l_2}\,a_{\a_1,l_1}
\,\tau(U_{-k} U_{l_2} U_{l_1} U_k) \, \tfrac{k_{\mu_1}(k+l_1)_{\mu_2}}{|k|^{s+2+n}} \Tr(\ga^{\a,\mu})
$$
where $\ga^{\a,\mu}\vc  \ga^{\a_2}\ga^{\mu_2}  \ga^{\a_1}\ga^{\mu_1}$.
Thus,
$$
\ncint  A \DD  A \DD |D|^{-2-n} = -\sum_{l} a_{\a_2,-l}\,a_{\a_1,l} \, \underset{s=0}{\Res}\,\big({{\sum_{k}}}'
\tfrac{k_{\mu_1}k_{\mu_2}}{|k|^{s+2+n}} \big) \Tr(\ga^{\a,\mu}).
$$
We have also, with $\eps JAJ^{-1} = iR(A_\a)\otimes \ga^{a}$,
$$
\Tr \big(\eps JAJ^{-1}\DD A \DD |D|^{-s-2-n}\big) \sim_c {{\sum_{k}}}' \sum_{l_1,l_2}
a_{\a_2,l_2}a_{\a_1,l_1} \tau(U_{-k} U_{l_1} U_k U_{l_2}) \tfrac{k_{\mu_1}(k+l_1)_{\mu_2}}{|k|^{s+2+n}} 
\Tr(\ga^{\a,\mu}).
$$
which gives
$$
\ncint  \eps JAJ^{-1} \DD  A \DD |D|^{-2-n} =a_{\a_2,0}a_{\a_1,0} \, \underset{s=0}{\Res}\,\big({{\sum_{k}}}'
\tfrac{k_{\mu_1}k_{\mu_2}}{|k|^{s+2+n}} \big) \Tr(\ga^{\a,\mu}).
$$
Thus,
$$
\half \ncint \wt A \DD \wt A \DD |D|^{-2-n} = \big(a_{\a_2,0}a_{\a_1,0}-\sum_{l}
a_{\a_2,-l}a_{\a_1,l}\big) \Res_{s=0}\big({{\sum_{k}}}' \tfrac{k_{\mu_1}k_{\mu_2}}{|k|^{s+2+n}} \big) \Tr(\ga^{\a,\mu}).
$$
With ${\sum}'_k \tfrac{k_{\mu_1}k_{\mu_2}}{|k|^{s+2+n}} = \tfrac{\delta_{\mu_1\mu_2}}{n}Z_n(s+n)$ and
$C_n\vc \Res_{s=0} Z_n(s+n) = 2\pi^{n/2} \Ga(n/2)^{-1}$ we obtain
$$
\half \ncint \wt A \DD \wt A \DD |D|^{-2-n} = \big(a_{\a_2,0}a_{\a_1,0}-\sum_{l} a_{\a_2,-l}a_{\a_1,l}\big)\tfrac{C_n}{n}
\Tr(\ga^{\a_2}\ga^{\mu}\ga^{\a_1}\ga_\mu).
$$
Since  $\Tr(\ga^{\a_2}\ga^{\mu}\ga^{\a_1}\ga_\mu)= 2^m(2-n)\delta^{\a_2,\a_1}$,
we get
$$
\half \ncint \wt A \DD \wt A \DD |D|^{-2-n} = 2^m\big(-a_{\a,0}\,a^\a_0+\sum_{l} a_{\a,-l}\,a^\a_l\big)\tfrac{C_n (n-2)}{n}.
$$
Equation (\ref{TrA^2}) now proves the lemma.
\end{proof}

\begin{lemma}
\label{ncint-odd-pdo}
If $\tfrac{1}{2\pi}\Th$ is badly approximable, then for any $P\in \Psi_{1}(\A)$ and $q \in \N$, $q$ odd,
$$
\ncint P |D|^{-(n-q)} = 0.
$$
\end{lemma}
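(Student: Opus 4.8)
The plan is to reduce the statement to a computation of the kind already handled by Lemma \ref{spectrumset}: by definition of $\ncint$, we have $\ncint P|D|^{-(n-q)} = \underset{s=0}{\Res}\,\Tr\big(P|D|^{-(n-q)}|D|^{-s}\big)$, so it suffices to show that $s\mapsto\Tr\big(P|D|^{-(n-q)-s}\big)$ is holomorphic at $s=0$, i.e. that $s=n-q$ is not a pole of $\Tr\big(P|D|^{-t}\big)$. Since $q$ is odd, $n-q$ and $n$ have opposite parity; since $P\in\Psi_1(\A)$, the only pseudodifferential contributions involve the algebra generated by $\A$, $J\A J^{-1}$ and $\DD$, and $|D|^k$ for $k$ odd is excluded from $\Psi_1(\A)$ (as remarked after Definition \ref{defpseudo1}). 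First I would, exactly as in the proof of Lemma \ref{spectrumset}, reduce by linearity to a monomial $B = a_rb_r\DD^{q_{r-1}}a_{r-1}b_{r-1}\cdots\DD^{q_1}a_1b_1$ with $a_i\in\A$, $b_i\in J\A J^{-1}$, $q_i\in\N_0$ (no $|\DD|$ factors occur because $P\in\Psi_1(\A)$), multiplied by $|D|^{-(n-q)}$; note that the total power of $\DD$ appearing, namely $\sum_i q_i - (n-q)$, will control the parity.

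The key observation is a parity (Clifford-trace) argument. Expanding $B|D|^{-(n-q)-s}$ on the basis $U_k\otimes e_j$ exactly as in Lemma \ref{spectrumset}, the trace $\Tr\big(B|D|^{-(n-q)-s}\big)$ is, modulo a constant in $s$, a linear combination of zeta functions of the type in Theorem \ref{zetageneral} (or Theorem \ref{analytic}), each multiplied by a Clifford-matrix trace $\Tr(\ga^\mu)$ where $\ga^\mu = \ga^{\mu^{r-1}_1}\cdots\ga^{\mu^{r-1}_{q_{r-1}}}\cdots\ga^{\mu^1_1}\cdots\ga^{\mu^1_{q_1}}$ is a product of $\sum_i q_i$ gamma matrices. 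By Theorems \ref{zetageneral} and \ref{analytic}, the zeta function attached to a homogeneous polynomial $Q$ of degree $d$ in $k$ has a pole at $s+2p = n+d$, i.e. the pole in $s$ of $\Tr\big(B|D|^{-(n-q)-s}\big)$ sits at $s = n + d - (n-q) = d + q$ where $d$ is the degree in $k$ of the polynomial part; and $d$ has the same parity as $\sum_i q_i$ (each $\DD$ contributes one power of $k$, the expansion of $|k+\wh l_i|^{p_i}$ contributes terms whose degree differs from $p_i$ by an even integer, but here there are no $|\DD|$ factors so $p_i$ is absent). Thus the residue of $\Tr\big(B|D|^{-(n-q)-s}\big)$ at $s=0$ can be non-zero only if the polynomial degree is $d = -q < 0$, which is impossible, OR — and this is the real point — more carefully: the pole at $s=0$ requires $d+q=0$. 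Since $q\geq 1$ this forces $d<0$, absurd; hence there is no pole at $s=0$ at all. Wait — I must be careful: $\Tr\big(B|D|^{-t}\big)$ has poles at $t = n + d$ for the various monomials arising, so $\Tr\big(B|D|^{-(n-q)-s}\big)$ has a pole at $s=0$ only if $n-q = n+d$ for some contributing $d\geq 0$; that needs $d = -q$, impossible. So $s\mapsto\Tr\big(P|D|^{-(n-q)-s}\big)$ is holomorphic at $0$ and the residue vanishes.

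The cleanest way to organize this: (1) reduce to a monomial $B\in\DD_1(\A)$ times $|D|^{-(n-q)}$ by linearity, citing the expansion already performed in Lemma \ref{spectrumset}; (2) observe that $B$ involves a Clifford factor $\ga^\mu$ with $N:=\sum_i q_i$ gamma-matrix factors and a polynomial factor $F_\mu(k,l,l')$ of degree exactly $N$ in $k$ up to corrections of even degree coming from the expansions $|k+\wh l_i|^{p_i}$ — but since $P\in\Psi_1(\A)$ has no $|\DD|$ factors, $F_\mu$ is literally homogeneous of degree $N$ in $k$, of parity equal to the parity of $q$ only if... hmm, actually $N\equiv$ (number of $\DD$'s) and we must track $N-q$; (3) invoke Theorems \ref{analytic}/\ref{zetageneral} to locate the poles of the resulting zeta functions at $s = N - (n-q) + n - N = q$... let me restate: the zeta function $\sum'_k F_\mu(k)|k|^{-s-(n-q)}e^{i\cdots}$ with $F_\mu$ homogeneous of degree $N$ has its only possible pole at $s+(n-q) = n+N$, i.e. $s = N+q-(n-q)$... no: $s = n+N-(n-q) = N+q$. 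For a pole at $s=0$ we need $N+q=0$, impossible since $q\geq 1$, $N\geq 0$. I expect the main obstacle to be bookkeeping the exact relationship between the number $N$ of $\DD$-factors, the degree of the polynomial $F_\mu$, and whether the odd-$q$ hypothesis enters through parity of $N$ or simply through $q\geq 1$ forcing the pole location off $s=0$; one must verify that the restriction to $\Psi_1(\A)$ (no odd powers of $|\DD|$, no $\DDD$ factors of odd total degree) is exactly what prevents compensating even shifts. Once that accounting is pinned down, the vanishing is immediate from the pole-location results of Section \ref{Residues of series}.
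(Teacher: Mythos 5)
Your overall strategy (reduce by linearity to a monomial in $\DD_1(\A)$, expand the trace on the basis $U_k\otimes e_j$ as in Lemma \ref{spectrumset}, and locate the poles via Theorem \ref{analytic}) is the paper's strategy, but your final pole-location argument is wrong, and the gap is exactly at the point where you yourself express doubt. A general $P\in\Psi_1(\A)$ is of the form $P=BD^{-2p}+R$ with $B\in\DD_1(\A)$, $p\in\N_0$ and $R\in OP^{-q-1}$ (so that $R|D|^{-(n-q)}$ is trace-class and drops out of the residue); you silently set $p=0$. With the factor $D^{-2p}$ restored, the zeta function attached to a homogeneous component $Q_{h,\mu}$ of degree $d$ of the polynomial $Q_\mu(k,l,l')$ has its pole where $s+2p+n-q=n+d$, i.e.\ at $s=d+q-2p$, not at $s=d+q$. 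Since $p$ can be arbitrarily large, $d=2p-q$ is perfectly attainable and the pole \emph{can} sit at $s=0$: for instance $P=\wt A\,\DD\,D^{-2}\in\Psi_1(\A)$ with $q=1$ gives $\Res_{s=0}\Tr(\wt A\,\DD\,|D|^{-s-n-1})$, whose underlying zeta function does have its pole at $s=0$. So your conclusion ``there is no pole at $s=0$ at all'' is false, and your argument, using only $q\geq 1$, would equally ``prove'' the statement for $q$ even, where it fails.

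The missing idea is the parity argument, which is where the hypothesis that $q$ is odd actually enters. In the critical case $d+q-2p=0$ one has $d=2p-q$, which is \emph{odd} because $q$ is odd; Theorem \ref{analytic} $(ii)$ then gives
$$
\underset{s=0}{\Res}\, f_{h,\mu}(s)= V\int_{u\in S^{n-1}} Q_{h,\mu}(u)\,dS(u),
$$
and this vanishes because $Q_{h,\mu}(-u)=-Q_{h,\mu}(u)$ for a homogeneous polynomial of odd degree. So the residue is zero in both cases: either the pole is elsewhere, or it is at $s=0$ but its residue is the integral of an odd function over the sphere. Note also that the restriction to $\Psi_1(\A)$ matters precisely because it keeps $Q_\mu$ a genuine polynomial (no $|k+\wh l_i|^{p_i}$ factors), so that the homogeneous decomposition and the parity of $d$ are well defined; your proposal gestures at this but never closes the loop.
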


\begin{proof}
There exist $B\in \DD_{1}(\A)$ and $p\in \N_0$ such that $P= BD^{-2p}+R$ where $R$ is in $OP^{-q-1}$.
As a consequence, $\ncint P |D|^{-(n-q)} = \ncint B|D|^{-n-2p+q}$. Assume 
$B= a_r b_r\DD^{q_{r-1}}a_{r-1}b_{r-1}\cdots \DD^{q_1} a_1 b_1 $ where
$r\in \N$, $a_i \in \A$, $b_i\in J\A J^{-1}$, $q_i\in \N$. If we prove that $\ncint B|D|^{-n-2p+q} =0$, then
the general case will follow by linearity. We note $a_i=:\sum_l a_{i,l}\,U_l$ and
$b_i=:\sum_l b_{i,l} \,U_l$. With the shorthand $k_{\mu_1,\mu_{q_i}}\vc k_{\mu_1}\cdots
k_{\mu_{q_i}}$ and $\ga^{\mu_1,\mu_{q_i}}=\ga^{\mu_1}\cdots \ga^{\mu_{q_i}}$, we get
$$
\DD^{q_1}  a_1 b_1 U_k \otimes e_j = \sum_{l_1,l'_1} \,a_{1,l_1}\, b_{1,l'_1}\, U_{l_1}U_k
U_{l'_1} \,(k+l_1+l'_1)_{\mu_1,\mu_{q_1}} \otimes \ga^{\mu_1,\mu_{q_1}} e_j
$$
which gives, after iteration,
$$
B\, U_k \otimes e_j = \sum_{l,l'} \wt a_{l} \wt b_{l} U_{l_r} \cdots U_{l_1} U_k
U_{l'_1}\cdots U_{l'_r} \prod_{i=1}^{r-1} (k+\wh l_{i} +\wh l'_{i})_{\mu^{i}_1,\mu^i_{q_i}} \otimes \ga^{\mu^{r-1}_1,
\mu^{r-1}_{q_{r-1}}}\cdots \ga^{\mu^1_1,\mu^1_{q_1}} e_j
$$
where $\wt a_l : = a_{1,l_1}\cdots a_{r,l_r}$ and$\wt b_{l'} : = b_{1,l'_1}\cdots b_{r,l'_r}$. Let's note 
$Q_\mu(k,l,l')\vc \prod_{i=1}^{r-1} (k+\wh l_{i} +\wh l'_{i})_{\mu^{i}_1,\mu^i_{q_i}}$ and
$\ga^\mu \vc \ga^{\mu^{r-1}_1,\mu^{r-1}_{q_{r-1}}}\cdots \ga^{\mu^1_1,\mu^1_{q_1}}$. Thus,
$$
\ncint B\,|D|^{-n-2p+q} =\underset{s=0}{\Res}\,  {\sum_k}' \, \sum_{l,l'} \wt a_l
\,\wt b_{l'} \, \tau\big(U_{-k}U_{l_r}\cdots U_{l_1} U_k U_{l'_1}\cdots U_{l'_r}\big)
\,\tfrac{Q_\mu(k,l,l')}{|k|^{s+2p+n-q}} \,\Tr (\ga^\mu)\, .
$$
Since $U_{l_r}\cdots U_{l_1} U_k = U_k U_{l_r} \cdots U_{l_1} e^{-i\sum_1^r l_i .\Th k}$, we get
$$
\tau\big(U_{-k}U_{l_r}\cdots U_{l_1} U_k U_{l'_1}\cdots U_{l'_r}\big)= \delta_{\sum_1^r l_i+l'_i,0} \,e^{i\phi(l,l')} \,
e^{-i\sum_1^r l_i.\Th k}$$ where $\phi$ is a real valued function. Thus,
\begin{align*}
\ncint B\,|D|^{-n-2p+q} &=\underset{s=0}{\Res}\, {\sum_k}'\,\sum_{l,l'}
e^{i\phi(l,l')}\,\delta_{\sum_1^r l_i+l'_i,0}\, \wt a_l \,\wt b_{l'} \,\tfrac{Q_\mu(k,l,l')e^{-i\sum_1^r l_i.\Th k}}{|k|^{s+2p+n-q}}
\Tr (\ga^\mu) \\
&=:\underset{s=0}{\Res}\, f_\mu(s)\Tr (\ga^\mu).
\end{align*}

We decompose $Q_{\mu}(k,l,l')$ as a sum $\sum_{h=0}^r M_{h,\mu}(l,l') \, Q_{h,\mu}(k)$
where $Q_{h,\mu}$ is a homogeneous polynomial in $(k_1,\cdots,k_n)$
and $M_{h,\mu}(l,l')$ is a polynomial in $\big((l_1)_1,\cdots,(l_{r})_n,(l'_1)_1,\cdots,(l'_{r})_n \big)$.

Similarly, we decompose $f_{\mu}(s)$ as $\sum_{h=0}^{r} f_{h,\mu}(s)$. Theorem \ref{analytic} $(ii)$ entails that 
$ f_{h,\mu}(s)$ extends meromorphically to the whole complex plane $\C$ with only one
possible pole for $s+2p+n-q=n+d$ where $d\vc \text{deg } Q_{h,\mu}$. In other words, if $d+q-2p\neq 0$,
$f_{h,\mu}(s)$ is holomorphic at $s=0$. Suppose now $d+q-2p =0$
(note that this implies that $d$ is odd, since $q$ is odd by hypothesis), then, by Theorem \ref{analytic} $(ii)$
$$
\underset{s=0}{\Res}\ f_{h,\mu}(s) = V \int_{u\in S^{n-1}} Q_{h,\mu}(u)\,dS(u)
$$
where $V\vc \sum_{l,l'\in Z} M_{h,\mu}(l,l')\,e^{i \phi(l,l')}\, \delta_{\sum_1^r l_i+l'_i,0}\, \wt a_{l} \,\wt b_{l'}$ and
$Z\vc \set{l,l' \, :\, \sum_{i=1}^{r} l_i=0}$. Since $d$ is odd, $Q_{h,\mu}(-u)=-Q_{h,\mu}(u)$ and 
$\int_{u\in S^{n-1}}Q_{h,\mu}(u)\,dS(u)=0$. Thus, $\underset{s=0}{\Res} \ f_{h,\mu}(s)=0$ in any case, which 
gives the result.
\end{proof}

As we have seen, the crucial point of the preceding lemma is the decomposition of the numerator of the series 
$f_\mu(s)$ as polynomials in $k$. This has been possible because we restricted our pseudodifferential
operators to $\Psi_1(\A)$.

\bigskip

\begin{proof}[Proof of Proposition \ref{invariance}]
The top element follows from Proposition \ref{invariance1} and according to (\ref{formule}),
\begin{align*}
\ncint |D| ^{-n}= \underset{s=0}{\Res}\ \Tr\big(|D|^{-s-n}\big)=2^m\,\underset{s=0}{\Res}\,
Z_n(s+n)=\tfrac{2^{m+1}\pi^{n/2}}{\Gamma(n/2)}\, .
\end{align*}

For the second equality, we get from Lemmas \ref{tracehD} and \ref{Res-zeta-n-k}
$$
\underset{s=n-k}{\Res}\,  \zeta_{D_A}(s)= \sum_{p=1}^k \sum_{r_1,\cdots, r_p =0}^{k-p} h(n-k,r,p)
\ncint \eps^{r_1}(Y)\cdots\eps^{r_p}(Y) |D|^{-(n-k)}.
$$
Corollary  \ref{eps-pdo} and Lemma \ref{ncint-odd-pdo} imply that
$\ncint \eps^{r_1}(Y)\cdots\eps^{r_p}(Y) |D|^{-(n-k)} =0$, which gives the result.

Last equality follows from Lemma \ref{traceAA} and Corollary \ref{res-n-2-A}.
\end{proof}

\subsection{The spectral action}

Here is the main result of this section.

\begin{theorem}
\label{main}
Consider the noncommutative torus $\big(\Coo(\T^n_\Th),\H,\DD\big)$ of dimension $n\in \N$ 
where $\tfrac{1}{2\pi}\Th$ is a real $n\times n$ real skew-symmetric badly approximable matrix, and a 
selfadjoint one-form
$A=L(-iA_{\alpha})\otimes \ga^{\alpha}$. Then, the full spectral action of $\DD_{A}=\DD +A +\epsilon JAJ^{-1}$ is
\\
(i) for $n=2$,
$$
\SS(\DD_{A},f,\Lambda)=4\pi\,f_{2} \, \Lambda^{2} + \mathcal{O}(\Lambda^{-2}),
$$
(ii) for $n=4$,
$$
\SS(\DD_{A},f,\Lambda)= 8\pi^2\,f_{4} \, \Lambda^{4} -\tfrac{4\pi^{2}}{3}\
f(0) \,\tau(F_{\mu\nu}F^{\mu\nu})+  \mathcal{O}(\Lambda^{-2}),
$$
(iii) More generally, in
$$
\SS(\DD_{A},f,\Lambda) \, = \,\sum_{k=0}^n f_{n-k}\, c_{n-k}(A) \,\Lambda^{n-k} +\mathcal{O}(\Lambda^{-1}),
$$
$c_{n-2}(A)=0$, $c_{n-k}(A)=0$ for $k$ odd. In particular, $c_0(A)=0$ when $n$ is odd.
\end{theorem}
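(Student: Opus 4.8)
The computation of $\SS(\DD_A,f,\Lambda)$ reduces, via Corollary \ref{asymptidentic} and the asymptotic expansion \eqref{asympspectral}, to evaluating the noncommutative integrals $\ncint |\DD_A|^{-(n-k)}$ for $0\le k\le n$, together with $\zeta_{D_A}(0)$ for the constant term. The strategy is therefore: first dispose of all the coefficients that vanish, then compute the two nontrivial ones ($c_n(A)$ and, in dimension $4$, $c_{n-4}(A)$). For the vanishing statements, Proposition \ref{invariance} already gives $\ncint|\DD_A|^{-n}=\ncint|\DD|^{-n}=2^{m+1}\pi^{n/2}\Gamma(n/2)^{-1}$, $\ncint|\DD_A|^{-(n-k)}=0$ for $k$ odd, and $\ncint|\DD_A|^{-(n-2)}=0$. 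So $c_n(A)=f_n\,\ncint|\DD|^{-n}$ produces the first term ($4\pi f_2\Lambda^2$ for $n=2$, $8\pi^2 f_4\Lambda^4$ for $n=4$, and in general $f_n c_n(A)\Lambda^n$ with $c_n(A)$ independent of $A$), while $c_{n-1}(A)=c_{n-2}(A)=0$ and more generally $c_{n-k}(A)=0$ for odd $k$; when $n$ is odd this forces $c_0(A)=0$ since then $k=n$ is odd. The $\mathcal O(\Lambda^{-2})$ (resp. $\mathcal O(\Lambda^{-1})$) remainder comes from the negative part of the dimension spectrum, controlled by Proposition \ref{zeta(0)}(i), which guarantees simple poles and the set $\{n-k : k\in\N_0\}$, hence that the expansion \eqref{asympspectral} is legitimate term by term; the $\Lambda^0$-coefficient $f(0)\zeta_{D_A}(0)$ must be shown to vanish except in dimension $4$ — for $n$ odd, $n-k=0$ means $k=n$ odd so its coefficient is already covered by the odd-vanishing, and for $n=2$ one checks $\zeta_{D_A}(0)=0$ directly from Theorem \ref{difference} together with the vanishing tadpole-type integrals in Lemmas \ref{traceAD}, \ref{traceXD}.

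\textbf{The dimension-four term.} The only genuinely new computation is $c_{n-4}(A)$, relevant for $n=4$ where it gives the $\Lambda^0$ Yang--Mills term. By Lemma \ref{Res-zeta-n-k} and Proposition \ref{ncintfluctuated}, $\ncint|\DD_A|^{-(n-4)}$ (equivalently $\zeta_{D_A}(0)$ when $n=4$) is a sum of noncommutative integrals of the form $\ncint X^j|\DD|^{-2j-n}$ and $\ncint \eps^{r}(Y)\cdots|\DD|^{-(n-4)}$ built from $X=\wt A\DD+\DD\wt A+\wt A^2$. I would first reduce these using: Lemma \ref{traceAD} (so linear-in-$\wt A\DD$ terms drop), Lemma \ref{ncint-odd-pdo} (odd pseudodifferential pieces in $\Psi_1(\A)$ drop), and the trace property of $\ncint$ to kill all $\ncint\nabla(X)|\DD|^{-k}$. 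What survives is controlled by Lemma \ref{traceXD}, which evaluates $\ncint X|\DD|^{-n}$ in terms of $\sum_l a_{\a,l}a^\a_{-l}$ and $a_{\a,0}a^\a_0$, and Lemma \ref{traceAA}, which relates $\ncint \wt A\DD\wt A\DD|\DD|^{-2-n}$ to $\ncint\wt A^2|\DD|^{-n}$ with the crucial factor $-\tfrac{n-2}{n}$ coming from the Clifford trace $\Tr(\ga^{\a_2}\ga^\mu\ga^{\a_1}\ga_\mu)=2^m(2-n)\delta^{\a_2\a_1}$. Assembling these via Corollary \ref{res-n-2-A} with $n=4$, the $\sum_l a_{\a,l}a^\a_{-l}$ and $a_{\a,0}a^\a_0$ contributions must recombine into the gauge-invariant quantity $\tau(F_{\mu\nu}F^{\mu\nu})$ with $F_{\mu\nu}$ from \eqref{Fmunu}; the numerical constant $-\tfrac{4\pi^2}{3}$ then follows from $\Res_{s=0}Z_4(s+4)=2\pi^2/\Gamma(2)=2\pi^2$ (equation \eqref{formule}) and the combinatorial factors in $h(s,r,p)$.

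\textbf{Main obstacle.} The hardest part is not any single residue but establishing that the sum of the various $\ncint X^j|\DD|^{-\cdots}$ contributions in dimension $4$ actually assembles into $\tau(F_{\mu\nu}F^{\mu\nu})$ — i.e. that the $[A_\a,A_\beta]$ quartic terms, the $\delta_\a(A_\beta)$ derivative terms, and the cross terms appear with exactly the coefficients dictated by $F_{\mu\nu}=\delta_\a(A_\beta)-\delta_\beta(A_\a)+[A_\a,A_\beta]$. This requires carefully tracking the Clifford-algebra traces $\Tr(\ga^{\a_1}\cdots\ga^{\a_4})$, the symmetry $k\to-k$ on $\Z^n$ that kills odd-degree numerators, and the Diophantine machinery of Theorem \ref{zetageneral}/\ref{analytic} that legitimizes interchanging residue and the (infinite) sums over the Schwartz coefficients $a_{\a,l}$ — this last interchange being exactly where the badly-approximable hypothesis on $\tfrac{1}{2\pi}\Th$ is indispensable (Lemma \ref{abs-som}, Corollary \ref{res-somH}, Lemma \ref{res-som}). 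Gauge invariance of the final answer, guaranteed abstractly by $\SS(\DD_A,f,\Lambda)=\SS(\DD_{\ga_u(A)},f,\Lambda)$, serves as a strong consistency check: any term not expressible through $F_{\mu\nu}$ would violate it, so I would use this both as a guide and as a verification that no contribution has been dropped or mis-weighted. For general $n$ the statement is weaker — only $c_{n-2}(A)=0$ and $c_{n-k}(A)=0$ for odd $k$ are claimed — so beyond $n=4$ one needs only the vanishing arguments above, not a closed form for the remaining even coefficients.
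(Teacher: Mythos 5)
Your overall architecture matches the paper's: the nonconstant coefficients come from \eqref{asympspectral} together with Proposition \ref{invariance} (leading term $\ncint|\DD|^{-n}=2^{m+1}\pi^{n/2}\Gamma(n/2)^{-1}$, vanishing of the odd and of the $\Lambda^{n-2}$ coefficients), and the whole difficulty is concentrated in the constant term. But there is a genuine gap in how you propose to compute that constant term in dimension $4$. The $\Lambda^0$ coefficient in \eqref{asympspectral} is $f(0)\,\zeta_{D_A}(0)$, the \emph{value} of the zeta function at $s=0$, whereas the quantity you propose to compute --- $\ncint|D_A|^{-(n-4)}=\ncint|D_A|^{0}$ via Lemma \ref{Res-zeta-n-k} and Proposition \ref{ncintfluctuated} --- is the \emph{residue} $\underset{s=0}{\Res}\,\zeta_{D_A}(s)$. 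For the noncommutative torus this residue is zero (the only pole of $Z_n$ is at $s=n$, and Proposition \ref{ncintfluctuated} transports this to $D_A$), so your route would produce $0$ instead of $-\tfrac{4\pi^2}{3}\,\tau(F_{\mu\nu}F^{\mu\nu})$. Likewise Corollary \ref{res-n-2-A}, which you invoke to ``assemble'' the answer, computes $\ncint|D_A|^{-(d-2)}$, i.e.\ the $\Lambda^{2}$ coefficient (already known to vanish), not the $\Lambda^{0}$ one. The same residue/value conflation appears in your claim that $c_0(A)=0$ for odd $n$ is ``already covered by the odd-vanishing'': Proposition \ref{invariance} kills residues, not the value $\zeta_{D_A}(0)$; the paper needs Corollary \ref{zetaimpair} for that.

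The correct tool --- which you do use, but only for $n=2$ --- is Theorem \ref{difference}: $\zeta_{D_A}(0)-\zeta_{D}(0)=\sum_{q=1}^{d}\tfrac{(-1)^q}{q}\ncint(\wt A D^{-1})^{q}$. The paper's Lemma \ref{double} applies this for $n=4$: after splitting $(\wt AD^{-1})^q$ into the $2^q$ terms $\mathbb{A}^\sigma$ (Lemma \ref{ncadmoins1}), the Diophantine hypothesis kills all crossed terms (Lemma \ref{Termaterm} $(iv)$), the pure terms $\ncint(\mathbb{A}^{\pm})^q$ for $q=2,3,4$ are evaluated explicitly in Lemma \ref{Termaterm} $(i)$--$(iii)$, and the resulting quadratic, cubic and quartic expressions in the Fourier coefficients $a_{\a,l}$ are matched term by term against the expansion of $\tau(F_{\mu\nu}F^{\mu\nu})$. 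Your ``main obstacle'' paragraph correctly identifies where the real work lies (Clifford traces, the interchange of residue and series legitimized by Theorem \ref{analytic}, gauge invariance as a check), but the chain of lemmas you name would not reach the Yang--Mills term: you need the $-\tfrac12\ncint Y$ computation of Theorem \ref{difference}, not the residue formulas of Lemma \ref{Res-zeta-n-k}.
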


\quad

This result (for $n=4$) has also been obtained in \cite{GIVas} using the heat kernel method. It is however 
interesting to get the result via direct computations of (\ref{asympspectral}) since it shows how this formula is efficient.
As we will see, the computation of all the noncommutative integrals require a lot of technical steps.
One of the main points, namely to isolate where the Diophantine condition on $\Th$ is assumed, is outlined here.

\begin{remark}
Note that all terms must be gauge invariants, namely, according
to (\ref{gaugetransform}), invariant by
$A_{\alpha}\longrightarrow \gamma_{u}(A_{\alpha})=
uA_{\alpha}u^{*}+u\delta_{\alpha}(u^{*})$. A particular case is
$u=U_{k}$ where
$U_{k}\delta_{\alpha}(U_{k}^{*})=-ik_{\alpha} U_{0}$.

In the same way, note that there is no contradiction with
the commutative case where, for any selfadjoint one-form
$A$, $\DD_{A}=\DD$ (so $A$ is equivalent to 0!), since
we assume in Theorem \ref{main} that
$\Th$ is badly approximable, so $\A$ cannot be commutative.
\end{remark}

\begin{conjecture}
The constant term of the
spectral action of $\DD_{A}$ on the noncommutative n-torus is
proportional to the constant term of the spectral action
of $\DD+A$ on the commutative n-torus.
\end{conjecture}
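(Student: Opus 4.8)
The plan is to reduce the conjecture, through the asymptotic expansion \eqref{asympspectral}, to an equality between two $\zeta$-regularised quantities at the origin, and then to recognise both of them as \emph{the same} universal gauge-invariant polynomial in the curvature $F_{\mu\nu}$ of \eqref{Fmunu}.

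First I would pin down the ``constant term''. By Corollary \ref{asymptidentic}, the $\Lambda^0$-coefficient of $\SS(\DD_A,f,\Lambda)$ on $\Coo(\T^n_\Th)$ is $f(0)\,a_n(\DD_A^2)=f(0)\,\zeta_{D_A}(0)$; by Theorem \ref{difference} combined with $\zeta_\DD(0)=0$ (Proposition \ref{zeta(0)}), this equals $f(0)\sum_{q=1}^{n}\tfrac{(-1)^q}{q}\ncint(\wt A\DD^{-1})^q$, a polynomial functional of $A$ vanishing at $A=0$. On the commutative $n$-torus, $\DD+A=-i(\delta_\alpha+A_\alpha)\otimes\gamma^\alpha$ is a $U(1)$-twisted Dirac operator, so $(\DD+A)^2$ is of Laplace type \eqref{Lapl}; since the metric is flat, the heat coefficient $a_n((\DD+A)^2)$ of \eqref{coefficientsa} retains only the monomials built from $E\propto F_{\mu\nu}\gamma^{\mu\nu}$ and $\Omega=F$, and the $\Lambda^0$-coefficient of $\SS(\DD+A,f,\Lambda)$ is again a polynomial functional of $A$ vanishing at $A=0$. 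The conjecture thus asks that these two functionals coincide up to one numerical factor, degenerating to an equality as $\Th\to0$.

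On the noncommutative side I would expand each $\ncint(\wt A\DD^{-1})^q$ with the machinery of this section: Lemma \ref{traceAD} removes the pieces linear in $\DD$, Lemma \ref{ncint-odd-pdo} kills all odd contributions, Lemma \ref{traceAA} relates the quadratic ones, and every surviving residue is evaluated via Theorem \ref{analytic} and Theorem \ref{zetageneral} --- only the ``diagonal'' configurations (those in the set $\mathcal Z$, on which the momentum phase $e^{ik\cdot\Th(\cdots)}$ disappears) contribute, while the leftover Weyl phases $e^{i\phi(l)}$ reassemble exactly into the commutators $[A_\mu,A_\nu]$ inside $\tau(\cdot)$, consistently with the structure of $\DD_A^2$ in \eqref{D2}; the angular integrations are those of Proposition \ref{calculres}. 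The expected outcome is $\zeta_{D_A}(0)=c_n\,\tau\!\left(\mathcal W_n(F,\delta F,\dots)\right)$ where $\mathcal W_n$ is a universal polynomial in $F_{\mu\nu}$ and its $\delta$-derivatives; on the commutative side the same formula \eqref{coefficientsa} with $R=0$, $\Omega=F$, $E\propto F_{\mu\nu}\gamma^{\mu\nu}$ gives $a_n((\DD+A)^2)=c_n'\int_{\T^n}\tr\,\mathcal W_n(\partial A,\partial^2 A,\dots)$ with \emph{the same} $\mathcal W_n$ --- there $[A_\mu,A_\nu]$ drops out of $F$ and $\tau$ becomes $\int_{\T^n}$. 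Dividing, the two constant terms are proportional with ratio $c_n/c_n'$ independent of $A$ and of $\Th$. As consistency checks: $n=2$ and $n$ odd give zero on both sides (Theorem \ref{main}(iii)), and $n=4$ is exactly Theorem \ref{main}(ii) against \eqref{coefficientsa} with the curvature-free terms removed.

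The hard part will be the general-$n$ universality, i.e.\ proving that \emph{no} genuinely noncommutative invariant --- one not expressible through $F$ and its $\delta$-derivatives, or carrying a weight different from Gilkey's --- occurs in $\zeta_{D_A}(0)$ on $\T^n_\Th$. Commutatively this is Gilkey's theorem that heat coefficients are universal polynomials in the jets of the symbol; the noncommutative analogue rests on the fact that the pseudodifferential symbol calculus on $\Coo(\T^n_\Th)$ is formally identical to the commutative one, so that the Seeley--DeWitt recursion for the resolvent expansion behind \eqref{D2} yields the same universal expressions. Making this rigorous demands controlling that expansion term by term and, decisively, the commutation of infinite summation with residue extraction (Lemma \ref{res-som}), which is precisely where the Diophantine hypothesis on $\tfrac{1}{2\pi}\Th$ of Section \ref{Residues of series} is unavoidable. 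A secondary obstacle is showing that $c_n/c_n'$ is genuinely monomial-independent --- one must check each surviving monomial in $F$ comes from the same spinorial trace and the same $Z_n$-residue on both sides --- and settling the kernel question $\Ker\DD_A=\Ker\DD$ (Proposition \ref{noyaux}) at least generically in $\Th$, so that the splitting $a_n=\dim\Ker+\zeta$ hides no $A$-dependence.
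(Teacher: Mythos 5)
The statement you are trying to prove is not proved in the paper at all: it is stated there as an open conjecture, whose only support is the explicitly computed low-dimensional cases of Theorem \ref{main} ($n=2$, $n=4$, $n$ odd), so there is no proof of record to compare yours against. Your reduction of the problem is the correct one and matches how the paper frames it: by Corollary \ref{asymptidentic}, Theorem \ref{difference} and Proposition \ref{zeta(0)}, the noncommutative constant term is $f(0)\big(\dim\Ker\DD_A+\zeta_{D_A}(0)\big)$ with $\zeta_{D_A}(0)=\sum_{q=1}^{n}\tfrac{(-1)^q}{q}\ncint(\wt A\DD^{-1})^q$, while the commutative one is $f(0)$ times a Seeley--DeWitt coefficient of a twisted Laplacian on the flat torus; and your consistency checks against Theorem \ref{main} for $n=2,4$ and odd $n$ are right.

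However, what you have written is a programme, not a proof, and the step you defer as ``the hard part'' is precisely the content of the conjecture. For general even $n$ nothing in Sections \ref{Residues of series}--\ref{The noncommutative torus} establishes that the surviving residues in $\zeta_{D_A}(0)$ reassemble into the \emph{same} universal polynomial $\mathcal W_n(F,\delta F,\dots)$ as in Gilkey's commutative coefficient, with one overall constant independent of the monomial: Theorems \ref{analytic} and \ref{zetageneral} control each residue separately, but the combinatorics of the Weyl phases $e^{i\phi(l,l')}$ and of the $\sin(\tfrac12 k.\Th l)$ factors in products of length $q\leq n$ has only been carried out by hand up to $n=4$ (Lemmas \ref{Termaterm}--\ref{double}), and it is not ruled out that for $n\geq 6$ a genuinely noncommutative invariant, or a commutative invariant entering with a different weight, survives. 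Your appeal to ``the symbol calculus on $\Coo(\T^n_\Th)$ is formally identical to the commutative one'' begs the question, because the trace extracting the residue is not the fibrewise integral of the commutative calculus but a sum over $\Z^n$ whose evaluation needs the Diophantine condition and whose outcome can genuinely differ from the commutative value (see Section \ref{aboutconvergence} and the discussion of non-Diophantine $\Theta$ after Theorem \ref{main}). Two further points you flag but do not settle are also required for a complete argument: that the proportionality constant does not depend on which monomial of $\mathcal W_n$ it multiplies, and that $\dim\Ker\DD_A$ carries no $A$-dependence, for which Proposition \ref{noyaux} gives only the inclusion $\Ker\DD\subseteq\Ker\DD_A$ and equality is itself conjectural. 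So the proposal maps the territory correctly and is consistent with everything the paper proves, but it does not close the conjecture.
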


\begin{remark}
The appearance of a Diophantine condition for $\Th$ has been
characterized in dimension 2 by Connes \cite[Prop. 49]{NCDG} where in
this case, $\Th=\th\genfrac{(}{)}{0pt}{1}{\,0 \quad1}{-1\,\,\,\, 0 }$
with $\th \in \R$. In fact, the Hochschild cohomology
$H(\A_{\Th},{\A_{\Th}}^*)$ satisfies dim
$H^j(\A_{\Th},{\A_{\Th}}^*)=2$ (or $1$) for $j=1$ (or $j=2$) if and
only if the irrational number $\th$ satisfies a Diophantine condition
like $\vert 1-e^{i2\pi n \th} \vert^{-1} =\mathcal{O}(n^k)$ for some
$k$.

Recall that when the matrix $\Th$ is quite irrational (the lattice generated by its columns is dense after 
translation by $\Z^n$, see \cite[Def. 12.8]{Polaris}), then the C$^*$-algebra generated by $\A_{\Th}$ is simple.
\end{remark}

\begin{remark}
It is possible to generalize above theorem to the case $\DD=-i\,{g^{\mu}}_{\nu} \, \delta_\mu \otimes \ga^\nu$ 
instead of \eqref{defDirac} when $g$ is a positive definite constant matrix. The formulae in Theorem \ref{main} 
 are still valid, up to  obvious modifications due to volume variation.
\end{remark}

\subsection{\texorpdfstring{Computations of $\ncint$}{Computations of noncommutative integral}}

In order to get this theorem, let us prove a few technical lemmas.

We suppose
from now on that $\Th$ is a skew-symmetric matrix in
$\mathcal{M}_n(\R)$. No other
hypothesis is assumed for $\Th$, except when it is explicitly stated.

When $A$ is a selfadjoint one-form, we define for $n\in N$, $q\in \N$,
$2\leq q \leq n$ and
$\sigma\in \{-,+\}^q$
\begin{align*}
\mathbb{A}^{+}&:=A\DD D^{-2},\\
\mathbb{A}^{-}&:= \epsilon JAJ^{-1} \DD D^{-2},\\
\mathbb{A}^{\sigma}&:=\mathbb{A}^{\sigma_q}\cdots
\mathbb{A}^{\sigma_1} .
\end{align*}

\begin{lemma}
\label{ncadmoins1}
We have for any $q\in \N$,
$$
\ncint (\wt A D^{-1})^q = \ncint (\wt A \DD D^{-2})^q =
\sum_{\sigma\in \set{+,-}^{q}}
\ncint \mathbb{A}^{\sigma}.
$$
\end{lemma}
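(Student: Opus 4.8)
The plan is to prove the identity $\ncint (\wt A D^{-1})^q = \ncint (\wt A \DD D^{-2})^q = \sum_{\sigma \in \{+,-\}^q} \ncint \mathbb{A}^\sigma$ by reducing it to purely algebraic manipulations inside the algebra $\Psi(\A)$ modulo the subspace on which $\ncint$ vanishes, namely $OP^{-\infty}$ (smoothing operators), together with exploiting the trace property of $\ncint$ from Theorem \ref{Wresresult}(iii) / the commutative-case Corollary \ref{uniquetrace}.

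\textbf{First step: replacing $D^{-1}$ by $\DD D^{-2}$.} Recall that $D = \DD + P_0$ where $P_0$ is the projection on $\Ker \DD$ (see \eqref{DAdroit}), and by Lemma \ref{finiterank}(iv), $P_0 \in OP^{-\infty}$. Since $\DD$ commutes with $P_0$ and $D$ is invertible with $D^2 = \DD^2 + P_0$, one has $D^{-1} = D \, D^{-2} = (\DD + P_0) D^{-2} = \DD D^{-2} + P_0 D^{-2}$, and $P_0 D^{-2} \in OP^{-\infty}$. Hence $\wt A D^{-1} = \wt A \DD D^{-2} + S$ with $S \in OP^{-\infty}$ (using $\wt A \in OP^0$). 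Expanding $(\wt A D^{-1})^q = (\wt A \DD D^{-2} + S)^q$ and collecting terms: every term containing at least one factor $S$ lies in $OP^{-\infty}$ (since $OP^{-\infty}$ is a two-sided ideal and all other factors are in $OP^0$), and $\ncint$ annihilates trace-class operators, in particular $OP^{-\infty}$; also $\ncint$ of a smoothing operator is $0$ since such operators have trace-class resolvent contributions making $\Tr(S|\DD|^{-s})$ entire. Therefore $\ncint (\wt A D^{-1})^q = \ncint (\wt A \DD D^{-2})^q$.

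\textbf{Second step: splitting $\wt A$.} By \eqref{fluct}, $\wt A = A + \epsilon J A J^{-1}$, so $\wt A \DD D^{-2} = A \DD D^{-2} + \epsilon J A J^{-1} \DD D^{-2} = \mathbb{A}^+ + \mathbb{A}^-$ with the notation introduced just before the lemma. Then $(\wt A \DD D^{-2})^q = (\mathbb{A}^+ + \mathbb{A}^-)^q = \sum_{\sigma \in \{+,-\}^q} \mathbb{A}^{\sigma_q} \cdots \mathbb{A}^{\sigma_1} = \sum_{\sigma \in \{+,-\}^q} \mathbb{A}^\sigma$, by multilinear expansion. Since $\ncint$ is linear, $\ncint (\wt A \DD D^{-2})^q = \sum_{\sigma \in \{+,-\}^q} \ncint \mathbb{A}^\sigma$, which is the claimed formula. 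One should check that each $\mathbb{A}^\sigma$ is genuinely a pseudodifferential operator so that $\ncint \mathbb{A}^\sigma$ makes sense: $A, JAJ^{-1} \in \DD(\A) \cap OP^0$ (for the noncommutative torus this is immediate; in general $A \in \Omega^1_\DD(\A) \subset OP^0$ and $JAJ^{-1} \in OP^0$ too), $\DD \in OP^1$, $D^{-2} \in OP^{-2}$, so $\mathbb{A}^{\pm} \in OP^{-1}$ and $\mathbb{A}^\sigma \in OP^{-q} \subset \Psi(\A)$ (using Definition \ref{defDA} and the fact that $D^{-2}$ differs from $\DDD^{-2}$ by a smoothing operator).

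\textbf{Main obstacle.} The only subtle point is justifying that $\ncint$ vanishes on $OP^{-\infty}$ and that the replacement $D^{-1} \leftrightarrow \DD D^{-2}$ (rather than $|D|^{-1}$ or $\DDD^{-1}$) does not introduce spurious contributions — essentially a careful bookkeeping of which residual operators are smoothing, invoking Lemma \ref{finiterank} for $P_0 \in OP^{-\infty}$ and the ideal property of $OP^{-\infty}$. This is routine but must be done cleanly; no Diophantine condition is needed at this stage, since all identities are operatorial and the analytic input (meromorphy of the zeta functions) is only invoked through the already-established fact that $\ncint$ is a well-defined linear functional on $\Psi(\A)$ vanishing on trace-class operators. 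Everything else is formal expansion.
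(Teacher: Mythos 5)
Your proof is correct and follows essentially the same route as the paper: the paper's own (one-line) argument is precisely that $P_0\in OP^{-\infty}$ gives $D^{-1}=\DD D^{-2}\bmod OP^{-\infty}$, whence the first equality, the second being the definitional multilinear expansion of $(\mathbb{A}^++\mathbb{A}^-)^q$. You simply spell out the bookkeeping (the ideal property of $OP^{-\infty}$ and the vanishing of $\ncint$ on trace-class operators) that the paper leaves implicit.
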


\begin{proof} Since $P_0 \in OP^{-\infty}$, $D^{-1} = \DD D^{-2} \mod
OP^{-\infty}$
and $\ncint (\wt A D^{-1})^q = \ncint (\wt A \DD D^{-2})^q$.
\end{proof}

\begin{lemma}
\label{symetrie}
Let $A$ be a selfadjoint one-form, $n\in \N$ and $q\in \N$ with
$2\leq q \leq n$ and
$\sigma\in \{-,+\}^q$. Then
$$
\ncint \mathbb{A}^{\sigma} =\ncint \mathbb{A}^{-\sigma}.
$$
\end{lemma}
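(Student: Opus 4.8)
The statement asserts a sign-flip symmetry: $\ncint \mathbb{A}^{\sigma} =\ncint \mathbb{A}^{-\sigma}$, where $-\sigma$ means reversing each $\pm$. The natural tool is the reality operator $J$ together with the trace property of $\ncint$ and Lemma~\ref{adjoint}. Recall that $\mathbb{A}^{+}=A\DD D^{-2}$ and $\mathbb{A}^{-}=\epsilon JAJ^{-1}\DD D^{-2}$, so conjugating by $J$ turns $A$ into $\epsilon JAJ^{-1}$ and vice versa (using $J^2=\epsilon'$, $JDJ^{-1}=\epsilon D$, and $J|D|=|D|J$). The first step is therefore to compute $J\,\mathbb{A}^{\pm}\,J^{-1}$ explicitly. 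Since $J$ is antilinear and commutes with $|D|$ while $JDJ^{-1}=\epsilon D$, one gets $J(A\DD D^{-2})J^{-1}=\epsilon\,(JAJ^{-1})\DD D^{-2}=\mathbb{A}^{-}$ (the factors of $\epsilon$ from $JDJ^{-1}$ and the defining $\epsilon$ in $\mathbb{A}^-$ must be tracked carefully, and $D^{-2}$ is $J$-invariant), and similarly $J\mathbb{A}^{-}J^{-1}=\mathbb{A}^{+}$ up to the same bookkeeping. Hence $J\mathbb{A}^{\sigma_q}\cdots\mathbb{A}^{\sigma_1}J^{-1}=\mathbb{A}^{-\sigma_q}\cdots\mathbb{A}^{-\sigma_1}$, which is $\mathbb{A}^{-\sigma}$ in the notation of the excerpt.

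\textbf{Second step.} Apply Lemma~\ref{adjoint}, which gives $\ncint JXJ^{-1}=\ncint X^*=\overline{\ncint X}$ for $X\in\Psi(\A)$. Taking $X=\mathbb{A}^{\sigma}$ (which lies in $\Psi_1(\A)$, hence in $\Psi(\A)$, since $A\in\DD(\A)\cap OP^0$, $\DD\in\DD(\A)\cap OP^1$, $D^{-2}\in OP^{-2}$), we obtain
\begin{align*}
\ncint \mathbb{A}^{-\sigma}=\ncint J\mathbb{A}^{\sigma}J^{-1}=\overline{\ncint \mathbb{A}^{\sigma}}.
\end{align*}
To conclude I would then invoke Corollary~\ref{reel}: since $A=A^*$, the noncommutative integrals $\ncint (A\DD^{-1})^k$ and more generally the integrals of monomials in $A$, $\DD$, $|\DD|^{-1}$ are real. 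A small argument is needed to see that $\ncint\mathbb{A}^{\sigma}\in\R$ for each fixed $\sigma$, not merely for the symmetrized sum; this should follow from writing $\mathbb{A}^{\sigma}$ as a product of $A$'s and $\epsilon JAJ^{-1}$'s against powers of $\DD,|D|$ and using Lemma~\ref{adjoint} on each such term plus the fact that $(\epsilon JAJ^{-1})^*=\epsilon JAJ^{-1}$. Once reality of $\ncint\mathbb{A}^{\sigma}$ is in hand, $\overline{\ncint\mathbb{A}^{\sigma}}=\ncint\mathbb{A}^{\sigma}$ and the lemma follows.

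\textbf{Alternative / supplementary route.} If pinning down the factors of $\epsilon,\epsilon'$ through the $J$-conjugation proves delicate, one can instead argue directly with the trace property of $\ncint$: $\ncint\mathbb{A}^{\sigma}$ is a residue of $\Tr(\mathbb{A}^{\sigma}|D|^{-s})$, and by cyclicity of the trace one may cyclically permute the factors $\mathbb{A}^{\sigma_j}$, which together with $J$-conjugation of a single block realizes the full reversal $\sigma\mapsto-\sigma$. In either approach the only genuine obstacle is careful sign bookkeeping: making sure that the $\epsilon$ built into the definition of $\mathbb{A}^{-}$, the $\epsilon$ from $JDJ^{-1}=\epsilon D$, the $\epsilon'$ from $J^2$, and the antilinearity of $J$ all combine to leave no residual sign — and that the operator produced is genuinely $\mathbb{A}^{-\sigma}$ and not, say, its adjoint-reversed form (reversal of the product order is harmless here because $\ncint$ is a trace). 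Everything else is a routine appeal to Lemma~\ref{adjoint} and Corollary~\ref{reel}.
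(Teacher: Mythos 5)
The paper does not reproduce a proof of this lemma (it is among the proofs deferred to \cite{EILS}), so I can only judge your argument on its own terms. Your first two steps are correct and are surely part of any proof: $J\mathbb{A}^{+}J^{-1}=\mathbb{A}^{-}$ and $J\mathbb{A}^{-}J^{-1}=\mathbb{A}^{+}$ (the signs do cancel, using $J\DD J^{-1}=\epsilon\DD$, $J^2=\epsilon'$, $JD^{-2}J^{-1}=D^{-2}$), hence $J\mathbb{A}^{\sigma}J^{-1}=\mathbb{A}^{-\sigma}$ with the order of the factors preserved, and Lemma~\ref{adjoint} then gives $\ncint\mathbb{A}^{-\sigma}=\overline{\ncint\mathbb{A}^{\sigma}}$. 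This correctly reduces the lemma to the reality of $\ncint\mathbb{A}^{\sigma}$ for each fixed word $\sigma$; indeed the two statements are equivalent. The gap is in your third step: reality does not follow from Lemma~\ref{adjoint} and Corollary~\ref{reel} in the way you suggest. Taking adjoints reverses the order of the (non-commuting) factors, so Lemma~\ref{adjoint} yields $\overline{\ncint\mathbb{A}^{\sigma}}=\ncint(\mathbb{A}^{\sigma})^{*}=\ncint\mathbb{A}^{\sigma^{\mathrm{rev}}}$, where $\sigma^{\mathrm{rev}}$ is the reversed word; the trace property of $\ncint$ only identifies this with $\ncint\mathbb{A}^{\sigma}$ when $\sigma^{\mathrm{rev}}$ is a cyclic permutation of $\sigma$. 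That holds for the pure words $(\pm,\dots,\pm)$ and for every binary word of length $q\leq 5$, but fails already for $q=6$ (e.g.\ $\sigma=(+,+,-,+,-,-)$), so your argument does not cover the lemma as stated for all $2\leq q\leq n$. Corollary~\ref{reel} is of no help here since it only treats words built from a single letter. Your ``alternative route'' ($J$-conjugating a single block inside the product) is not a well-defined trace-preserving operation and cannot rescue this.

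In short: the $J$-conjugation argument proves $\ncint\mathbb{A}^{-\sigma}=\overline{\ncint\mathbb{A}^{\sigma}}$ and, combined with the adjoint, the weaker symmetry $\ncint\mathbb{A}^{-\sigma^{\mathrm{rev}}}=\ncint\mathbb{A}^{\sigma}$ (simultaneous sign flip and order reversal); the pure sign flip requires in addition the reversal invariance $\ncint\mathbb{A}^{\sigma}=\ncint\mathbb{A}^{\sigma^{\mathrm{rev}}}$, which is a genuinely separate statement. To establish it (equivalently, the reality of $\ncint\mathbb{A}^{\sigma}$) one has to use the concrete structure of the torus, i.e.\ the explicit series representation of Lemma~\ref{formegenerale} together with Theorem~\ref{zetageneral}: the selfadjointness of $A$ enters through $a_{\a,-l}=-\overline{a_{\a,l}}$, the change of variables $(k,l)\mapsto(-k,-l)$, and the reality and reversal invariance of traces of products of $\ga$-matrices. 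If you restrict to the cases actually used later (pure words, or $q\leq 4$ as in dimensions $2$ and $4$), your argument does close; but as a proof of the lemma for general $n$ it is incomplete.
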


\begin{definition}

In \cite{CC1} has been introduced the vanishing tadpole hypothesis:
\begin{align}
    \label{vanishtad}
    \ncint A D^{-1}=0, \text{ for all } A\in \Omega_{\DD}^{1}(\A).
\end{align}
\end{definition}
By the following lemma, this condition is satisfied for the noncommutative torus.

\begin{lemma}
\label{tadpole}
Let $n\in \N$, $A= L(-iA_{\a})\otimes \gamma^{\a}=-i\sum_{l\in \Z^n}a_{\alpha,l} \, U_{l}\otimes \ga^{\alpha}$, 
$A_{\a}\in \A_{\Th}$, where $\set{a_{\alpha,l}}_{l}\in \SS (\Z^n)$, be a hermitian one-form.
Then, \\
    (i) $\ncint A^p D^{-q}  =  \ncint (\epsilon JAJ^{-1})^p
D^{-q}=0$ for $p\geq0$ and $1\leq q < n$ (case $p=q=1$ is tadpole
hypothesis.)\\
    (ii) If $\tfrac{1}{2\pi} \Th$ is badly-approximable, then $\ncint
BD^{-q}=0$ for $1\leq q < n$ and any $B$ in the algebra generated by
$\A$, $[\DD,\A]$, $J\A J^{-1}$ and $J[\DD,\A]J^{-1}$. 
\end{lemma}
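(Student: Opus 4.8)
The plan is to reduce every integral to a residue at $s=0$ of a trace $\Tr\big(X|D|^{-q-s}\big)$ and then to recognise the resulting momentum series as one of the zeta functions of Section \ref{Residues of series}. First I would use, as in the proof of Lemma \ref{ncadmoins1}, that $D=\DD+P_0$ with $P_0\in OP^{-\infty}$ (Lemma \ref{finiterank}), so that modulo $OP^{-\infty}$ one has $D^{-q}=|D|^{-q}$ for $q$ even and $D^{-q}=\DD\,|D|^{-q-1}$ for $q$ odd; since $\ncint$ kills trace-class (hence $OP^{-\infty}$) operators, it is enough to study $\Res_{s=0}\Tr\big(X|D|^{-q-s}\big)$ and $\Res_{s=0}\Tr\big(X\DD|D|^{-q-1-s}\big)$, with $X\in\set{A^p,(\epsilon JAJ^{-1})^p}$ for part $(i)$ and $X$ a monomial in the generators for part $(ii)$.

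For $(i)$ I would expand $A^p=(-i)^p\sum_{\ell}a_{\a_p,\ell_p}\cdots a_{\a_1,\ell_1}\,L(U_{\ell_p}\cdots U_{\ell_1})\otimes\ga^{\a_p}\cdots\ga^{\a_1}$ and act on the orthonormal basis $U_k\otimes e_j$. Since every factor acts by \emph{left} multiplication, \eqref{rel1} gives $U_{\ell_p}\cdots U_{\ell_1}U_k=e^{i\psi(\ell)}e^{-\frac i2(\ell_1+\cdots+\ell_p).\Th k}U_{\ell_1+\cdots+\ell_p+k}$ with $\psi(\ell)$ independent of $k$, so the trace forces $\ell_1+\cdots+\ell_p=0$ and on that diagonal the residual $k$-phase is trivial. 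Therefore $\Tr\big(A^p|D|^{-q-s}\big)\sim_c C\,Z_n(q+s)$ for a Schwartz-summable constant $C$, while $\Tr\big(A^p\DD|D|^{-q-1-s}\big)\sim_c\sum_\mu C_\mu\big({\sum_k}'k_\mu|k|^{-q-1-s}\big)\Tr(\ga^\mu)=0$ by the symmetry $k\mapsto-k$. As $Z_n(q+s)$ has its only pole at $s=n-q$ and $n-q\ge1>0$ for $1\le q<n$, both traces are holomorphic at $s=0$, so the residue vanishes; crucially, no Diophantine hypothesis is used here because the phase disappears on the diagonal. The case of $(\epsilon JAJ^{-1})^p$ is obtained by the identical computation with the right regular representation in place of the left one.

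For $(ii)$ I would write $B$ by linearity as a sum of monomials $c_1\cdots c_r$, each $c_i$ a left or right multiplication by a Weyl element tensored with a product of at most one $\ga$-matrix; the key point is that none of the generators $\A$, $[\DD,\A]$, $J\A J^{-1}$, $J[\DD,\A]J^{-1}$ contributes any polynomial in $k$ when acting on $U_k\otimes e_j$, since $[\DD,a]=-iL(\delta_\mu a)\otimes\ga^\mu$ and $\delta_\mu$ only multiplies the Weyl coefficients by $i\ell_\mu$, which stays Schwartz. Proceeding exactly as in the proof of Lemma \ref{ncint-odd-pdo}, acting on $U_k\otimes e_j$ and tracing yields $\Tr\big(B|D|^{-q-s}\big)\sim_c\sum_\mu f_\mu(s)\Tr(\ga^\mu)$ when $q$ is even and $\Tr\big(B\DD|D|^{-q-1-s}\big)\sim_c\sum_\mu f_\mu(s)\Tr(\ga^\mu)$ when $q$ is odd, where each $f_\mu$ is a finite combination of series $\sum_\ell b(\ell){\sum_k}'|k|^{-(q+s)}e^{2\pi i\,k.\big(\tfrac{1}{2\pi}\Th\big)\lambda(\ell)}$ (resp.\ with an extra numerator factor $k_\mu$ and exponent $q+1+s$), with $\{b(\ell)\}$ Schwartz and $\lambda(\ell)$ a signed sum of the $\ell_i$'s. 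Since $\tfrac{1}{2\pi}\Th$ is badly approximable, Theorem \ref{analytic}$(ii)$ applies to each $f_\mu$ and shows it meromorphic with at most a simple pole, located at $s=n-q$ in both parities (the numerator has degree $0$, resp.\ $1$). As $n-q\ge1>0$ for $1\le q<n$, $\Tr\big(BD^{-q}|D|^{-s}\big)$ is holomorphic at $s=0$, whence $\ncint BD^{-q}=0$.

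The main obstacle is the combinatorial bookkeeping in $(ii)$: one must track, through all left/right multiplications and $J$-conjugations, the signed combination $\lambda(\ell)$ of momenta left over in the residual $k$-phase and check that the accompanying coefficients genuinely form a Schwartz sequence on the appropriate $(\Z^n)^{q}$, so that the hypotheses of Theorem \ref{analytic}$(ii)$ are met with exactly the normalisation $\tfrac{1}{2\pi}\Th$; one must also be sure that the only $k$-polynomial entering the numerator is the single factor $k_\mu$ produced by the $\DD$ in $D^{-q}$ for odd $q$, so that the numerator degree never exceeds $1$ and the sole possible pole stays at $s=n-q$, away from $s=0$ throughout the range $1\le q<n$.
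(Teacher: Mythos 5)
Your proposal is correct and follows essentially the same route as the paper: expand in the Weyl basis, let the trace impose the constraint $\sum_i l_i=0$ (which kills the $k$-phase in part $(i)$, so no Diophantine hypothesis is needed there), and identify the surviving $k$-series with those of Proposition \ref{calculres} resp.\ Theorem \ref{analytic}$(ii)$, whose only possible pole sits at $s=n-q\neq 0$. The only cosmetic difference is that you split $D^{-q}$ by parity into $|D|^{-q}$ or $\DD|D|^{-q-1}$, whereas the paper keeps all $q$ momentum factors as $k_{\mu_1}\cdots k_{\mu_q}|k|^{-2q-s}$; both give the same pole location.
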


\begin{proof}
$(i)$ Let us compute $$\ncint A^p(\epsilon
JAJ^{-1})^{p'}D^{-q}.$$ 
With $A=L(-i A_\a)\otimes \ga^\a$ and
$\epsilon J A
J^{-1}=R(i A_\a)\otimes \ga^\a$, we get
$$
A^p=L(-i A_{\a_1})\cdots L(-i A_{\a_p}) \otimes \ga^{\a_1}\cdots
\ga^{\a_p}
$$
and
$$
(\epsilon J A J^{-1})^{p'}=R(i A_{\a'_1})\cdots R(i A_{\a'_{p'}})
\otimes
\ga^{\a'_1}\cdots \ga^{\a'_{p'}}.
$$
We note $\wt a_{\a,l}:= a_{\a_1,l_1}\cdots a_{\a_p,l_p}$. Since
$$
L(-i A_{\a_1})\cdots L(-i A_{\a_p})R(i A_{\a'_1})\cdots R(i
A_{\a'_{p'}}) U_k= (-i)^p
\, i^{p'} \sum_{l,l'} \wt a_{\a,l} \, \wt a_{\a',l'} \, U_{l_1}\cdots
U_{l_p} U_k
U_{l'_{p'}}\cdots U_{l'_1},
$$
and
$$
U_{l_1}\cdots U_{l_p} U_k= U_k U_{l_1}\cdots U_{l_p} \, e^{-i(\sum_i
l_i).\Th k},
$$
we get, with
\begin{align*}
&U_{l,l'}:=U_{l_1}\cdots U_{l_p}U_{l'_{p'}}\cdots U_{l'_1},\\
&g_{\mu,\a,\a'}(s,k,l,l'):= e^{ik. \Th \sum_j l_j} \,
\tfrac{k_{\mu_{1}}\ldots
k_{\mu_{q}}}{\vert k \vert^{s+2q}} \, \wt a_{\a,l} \,  \wt a_{\a',l'},
\\
&\ga^{\a,\a',\mu}:=\ga^{\a_1}\cdots\ga^{\a_{p}}\ga^{\a'_1}\cdots
\ga^{\a'_{p'}} \ga^{\mu_1}\cdots \ga^{\mu_q},
\end{align*}
$$
A^p(\epsilon JAJ^{-1})^{p'}D^{-q}|D|^{-s} U_k\otimes e_i\sim_c (-i)^p
\, i^{p'}
\sum_{l,l'} g_{\mu,\a,\a'}(s,k,l,l') \, U_k U_{l,l'} \otimes
\ga^{\a,\a',\mu} e_i.
$$
Thus, $\ncint A^p(\epsilon JAJ^{-1})^{p'}D^{-q} =
\underset{s=0}{\Res}\ f(s)$
where
\begin{align*}
f(s):&=\Tr\big(A^p(\epsilon JAJ^{-1})^{p'}D^{-q}|D|^{-s}\big)\\
&\sim_c (-i)^p \, i^{p'} {\sum_{k\in\Z^{n}}}' \langle U_{k}\otimes
e_{i},\sum_{l,l'}
g_{\mu,\a,\a'}(s,k,l,l')  U_k
U_{l,l'} \otimes \ga^{\a,\a',\mu} e_i \rangle\\
&\sim_c (-i)^p \, i^{p'} {\sum_{k\in\Z^{n}}}' \, \,\tau\big(
\sum_{l,l'}
g_{\mu,\a,\a'}(s,k,l,l')
U_{l,l'} \big) \Tr(\ga^{\mu,\a,\a'})\\
&\sim_c (-i)^p \, i^{p'} {\sum_{k\in\Z^{n}}}' \sum_{l,l'}
g_{\mu,\a,\a'}(s,k,l,l') \, \tau
\big(U_{l,l'} \big) \Tr(\ga^{\mu,\a,\a'}).
\end{align*}
It is straightforward to check that the series
${\sum}'_{k,l,l'} g_{\mu,\a,\a'}(s,k,l,l') \, \tau\big( U_{l,l'}
\big)$
is absolutely sum-mable if $\Re(s)>R$ for a $R>0$. Thus, we can
exchange the summation
on $k$ and $l,l'$, which gives
$$
f(s)\sim_c (-i)^p \, i^{p'} \sum_{l,l'}  {\sum_{k\in\Z^{n}}}'
g_{\mu,\a,\a'}(s,k,l,l') \,
\tau \big( U_{l,l'} \big) \Tr(\ga^{\mu,\a,\a'}).
$$
If we suppose now that $p'=0$, we see that,
$$
f(s)\sim_c  (-i)^p \sum_{l}  {\sum_{k\in\Z^{n}}}'   \,
\tfrac{k_{\mu_{1}}\ldots
k_{\mu_{q}}}{\vert k \vert^{s+2q}} \, \wt a_{\a,l} \,
\delta_{\sum_{i=1}^p l_i,0}
\Tr(\ga^{\mu,\a,\a'})
$$
which is, by Proposition \ref{calculres}, analytic at 0. In
particular, for
$p=q=1$, we see that $\ncint A D^{-1} =0$, i.e. the
vanishing tadpole
hypothesis is satisfied. Similarly, if we suppose $p=0$,
we get
$$
f(s)\sim_c  (-i)^{p'} \sum_{l'}  {\sum_{k\in\Z^{n}}}'   \,
\tfrac{k_{\mu_{1}}\ldots
k_{\mu_{q}}}{\vert k \vert^{s+2q}} \, \wt a_{\a,l'} \,
\delta_{\sum_{i=1}^{p'} {l'}_i,0}
\Tr(\ga^{\mu,\a,\a'})
$$
which is holomorphic at 0.

$(ii)$ Adapting the proof of Lemma \ref{ncint-odd-pdo} to our setting
(taking $q_i=0$, and 
adding gamma matrices components), we see that
$$
\ncint B\,D^{-q} =\underset{s=0}{\Res}\, {\sum_k}'\,\sum_{l,l'}
e^{i\phi(l,l')}\,\delta_{\sum_1^r l_i+l'_i,0}\, \wt a_{\a,l} \,\wt
b_{\beta,l'}
\,\tfrac{k_{\mu_1}\cdots k_{\mu_q}\,e^{-i\sum_1^r l_i.\Th
k}}{|k|^{s+2q}}
\Tr (\ga^{(\mu,\a,\beta)})
$$
where $\ga^{(\mu,\a,\beta)}$ is a complicated product of gamma matrices.
By Theorem \ref{analytic} $(ii)$, since we suppose here that
$\tfrac{1}{2\pi} \Th$ is
badly approximable, this residue is 0.
\end{proof}

\subsubsection{Even dimensional case}
\begin{corollary}

Same hypothesis as in Lemma \ref{tadpole}.

(i) Case $n=2$:
    \begin{align*}
\ncint A^q D^{-q}= -\delta_{q,2}\,4\pi  \,\tau\big(A_{\a} A^{\a}
\big) \,.
\end{align*}
    (ii) Case $n=4$: with the shorthand
$\delta_{\mu_1,\ldots,\mu_4}:=
\delta_{\mu_1\mu_2}\delta_{\mu_3\mu_4}+\delta_{\mu_1\mu_3}\delta_{\mu_2\mu_4}
+\delta_{\mu_1\mu_4}\delta_{\mu_2\mu_3}$,
\begin{align*}
\ncint A^q D^{-q}= \delta_{q,4}\,\tfrac{\pi^2}{12} \tau\big(A_{\a_1}\cdots A_{\a_4}
\big)\Tr(\ga^{\a_1}\cdots\ga^{\a_{4}}\ga^{\mu_1}\cdots\ga^{\mu_4}) \delta_{\mu_1,\ldots,\mu_4} \,.
\end{align*}
\end{corollary}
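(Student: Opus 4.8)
The plan is to specialize the computation carried out in the proof of Lemma~\ref{tadpole}(i) to the boundary case $p=q=n$, where a pole at $s=0$ finally appears, and then feed the resulting lattice sum into Proposition~\ref{calculres}. First I would dispose of the values $q\neq n$: for $1\le q<n$ the vanishing $\ncint A^q D^{-q}=0$ is exactly Lemma~\ref{tadpole}(i), while for $q>n$ one has $A^q D^{-q}\in OP^{-q}\subset \L^1(\H)$, so $\ncint A^q D^{-q}=0$ because $\ncint$ kills trace-class operators. This leaves only the term $\ncint A^n D^{-n}$.

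For that term, write $A=L(-iA_\a)\otimes\ga^\a$ with $A_\a=\sum_l a_{\a,l}\,U_l\in\A_\Th$ anti-hermitian, so that $A^n=(-i)^n L(A_{\a_1}\cdots A_{\a_n})\otimes \ga^{\a_1}\cdots\ga^{\a_n}$ (sum over the repeated $\a_j$). Using $D^{-n}=\DD^n|D|^{-2n}$ modulo smoothing operators (which $\ncint$ ignores, and which also lets us drop the $k=0$ contribution lying in $\Ker\DD$), together with $\DD^n|D|^{-2n}\,(U_k\otimes e_i)=|k|^{-2n}\,k_{\mu_1}\cdots k_{\mu_n}\,U_k\otimes\ga^{\mu_1}\cdots\ga^{\mu_n}e_i$ for $k\neq0$, and the identity $\langle U_k,L(B)U_k\rangle=\tau(U_{-k}BU_k)=\tau(B)$ (trace property of $\tau$ and $U_kU_{-k}=U_0$), one obtains
$$\Tr\big(A^n D^{-n}|D|^{-s}\big)\;\sim_c\;(-i)^n\,\tau(A_{\a_1}\cdots A_{\a_n})\,\Tr\!\big(\ga^{\a_1}\cdots\ga^{\a_n}\ga^{\mu_1}\cdots\ga^{\mu_n}\big)\,{\sum_{k\in\Z^n}}'\frac{k_{\mu_1}\cdots k_{\mu_n}}{|k|^{s+2n}}\,,$$
the lattice sum converging absolutely for $\Re s$ large. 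I would point out that, in contrast with Lemma~\ref{spectrumset}, no Diophantine hypothesis on $\Th$ enters here: the $\Th$-dependent phases collapse because $\tau$ is conjugation-invariant. Taking $\underset{s=0}{\Res}$ and invoking Proposition~\ref{calculres} — equation \eqref{formulen=2} for $n=2$, equation \eqref{formule2} for $n=4$ — yields $\ncint A^n D^{-n}$ in closed form.

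For $n=4$ this is already the asserted formula $(ii)$ (use $(-i)^4=1$ and the shorthand $\delta_{\mu_1,\ldots,\mu_4}$). For $n=2$ one contracts one step further: $\Tr(\ga^{\a_1}\ga^{\a_2}\ga^{\mu_1}\ga^{\mu_2})\,\delta_{\mu_1\mu_2}=\Tr(\ga^{\a_1}\ga^{\a_2}\ga^\mu\ga_\mu)=n\,\Tr(\ga^{\a_1}\ga^{\a_2})=n\,2^m\,\delta^{\a_1\a_2}=4\,\delta^{\a_1\a_2}$ (here $n=2$, $m=1$), so $\ncint A^2 D^{-2}=(-i)^2\,\pi\cdot 4\,\delta^{\a_1\a_2}\tau(A_{\a_1}A_{\a_2})=-4\pi\,\tau(A_\a A^\a)$, which is $(i)$. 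The only steps needing genuine care are the ``$\sim_c$'' bookkeeping (that replacing $D$ by $\DD$ modulo $OP^{-\infty}$, discarding $k=0$, and rewriting $D^{-n}=\DD^n|D|^{-2n}$ alter $\Tr(A^nD^{-n}|D|^{-s})$ only by a function holomorphic — indeed constant — near $s=0$) and the identification of the analytic continuation of ${\sum}'_k k_{\mu_1}\cdots k_{\mu_n}/|k|^{s+2n}$ with the zeta function $\zeta_{p_1,\dots,p_n}$ of Proposition~\ref{calculres}; everything after that is the $\ga$-matrix algebra displayed above.
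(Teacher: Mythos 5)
Your proposal is correct and follows essentially the same route as the paper: the paper's proof is precisely "the same computation as in Lemma \ref{tadpole}(i) with $p'=0$, $p=q=n$", yielding $\ncint A^nD^{-n}=\underset{s=0}{\Res}\,(-i)^n\big({\sum}'_k k_{\mu_1}\cdots k_{\mu_n}|k|^{-s-2n}\big)\,\tau(A_{\a_1}\cdots A_{\a_n})\,\Tr(\ga^{\a_1}\cdots\ga^{\a_n}\ga^{\mu_1}\cdots\ga^{\mu_n})$, followed by Proposition \ref{calculres}. You merely make explicit two points the paper leaves implicit — the vanishing for $q\neq n$ (Lemma \ref{tadpole}(i) for $q<n$, trace-class for $q>n$) and the final $\ga$-matrix contraction in the case $n=2$ — both of which are carried out correctly.
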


\begin{proof}
$(i,ii)$ The same computation as in Lemma \ref{tadpole} $(i)$ (with
$p'=0$, $p=q=n$) gives
\begin{align*}
\ncint A^n D^{-n}=\underset{s=0}{\Res}(-i)^{n}
\big({\sum_{k\in\Z^{n}}}'\tfrac{k_{\mu_{1}}\ldots k_{\mu_{n}}}{\vert k
    \vert^{s+2n}}\big) \, \tau\big(\sum_{l\in (\Z^n)^n}
\wt a_{\a,l} U_{l_1}\cdots U_{l_{n}}
    \big) \,
\Tr(\ga^{\a_1}\cdots\ga^{\a_{n}}\ga^{\mu_1}\cdots\ga^{\mu_n})
\end{align*}
and the result follows from Proposition \ref{calculres}.
\end{proof}

We will use few notations:

For $n\in \N$, $q\geq 2$, $l:=(l_1,\cdots,l_{q-1})\in
(\Z^n)^{q-1}$, $\a:=(\a_1,\cdots,\a_q)\in \{1,\cdots,n\}^q$, $k\in \Z^n \backslash \{0\}$, $\sigma\in \{-,+\}^q$, 
$(a_i)_{1\leq i\leq n}\in (\mathcal{S}(\Z^n))^n$,
\begin{align*}
&l_q:=-\sum_{1\leq j\leq q-1} l_j \, , \quad
\lambda_\sigma:=(-i)^q\prod_{j=1\ldots
q}\sigma_j \, , \quad
\wt a_{\a,l}:= a_{\a_1,l_1}\ldots a_{\a_q,l_q}\,,\\
& \phi_\sigma(k,l):=\sum_{1\leq j\leq q-1} (\sigma_j-\sigma_q)\,
k.\Th l_j +
\sum_{2\leq
j\leq q-1} \sigma_j \, (l_1+\ldots +l_{j-1}).\Th l_j \, ,\\
& g_\mu(s,k,l):=\tfrac{k_{\mu_1}(k+l_1)_{\mu_2}\ldots (k+l_1+\ldots
+l_{q-1})_{\mu_q}}{|k|^{s+2}|k+l_1|^2\ldots|k+l_1+\ldots+l_{q-1}|^2}
\, ,
\end{align*}
with the convention $\sum_{2\leq j\leq q-1} = 0$ when $q=2$, and
$g_\mu(s,k,l)=0$
whenever $\wh l_i=-k$ for a $1\leq i\leq q-1$.

\begin{lemma}\label{formegenerale}
Let $A= L(-iA_{\a})\otimes \gamma^{\a}=-i\sum_{l\in \Z^n}a_{\alpha,l}
\, U_{l}\otimes
\ga^{\alpha}$ where $A_{\a}=-A_{\a}^*\in \A_{\Th}$ and
$\set{a_{\alpha,l}}_{l}\in \SS
(\Z^n)$, with $n\in \N$, be a hermitian one-form, and let $2\leq q
\leq n$, $\sigma\in
\{-,+\}^q$.

Then, $\ncint \mathbb{A}^{\sigma}= \underset{s=0}{\Res}\ f(s)$ where
$$
f(s):=\sum_{l\in (\Z^{n})^{q-1}} {\sum_{k\in\Z^n}}' \,
\lambda_\sigma\  e^{\tfrac i2
\phi_\sigma(k,l)}\ g_\mu(s,k,l)\ \wt a_{\a,l}\
\Tr(\ga^{\a_q}\ga^{\mu_q}\cdots\ga^{\a_1}\ga^{\mu_1}).
$$
\end{lemma}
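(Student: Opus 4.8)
The statement is an explicit unwinding of the definition $\ncint \mathbb{A}^\sigma = \underset{s=0}{\Res}\,\Tr(\mathbb{A}^\sigma |D|^{-s})$, so the proof is a bookkeeping computation and the work is to organize the combinatorics cleanly. First I would recall that $D^{-1} = \DD D^{-2} \bmod OP^{-\infty}$ (Lemma \ref{ncadmoins1}), so each factor $\mathbb{A}^{+} = A\DD D^{-2}$ and $\mathbb{A}^{-} = \epsilon JAJ^{-1}\DD D^{-2}$ can be treated on equal footing: $A = L(-iA_\a)\otimes\ga^\a$ acts on the left by $L$, while $\epsilon JAJ^{-1} = R(iA_\a)\otimes\ga^\a$ acts on the right by $R$, using $J_0 L(a) J_0^{-1} = R(a^*)$ and $C_0\ga^\a = -\epsilon\ga^\a C_0$ from \eqref{CGamma}. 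This is the origin of the sign factor $\sigma_j \in\{-,+\}$ and of $\lambda_\sigma = (-i)^q\prod_j \sigma_j$: each $\mathbb{A}^{+}$ contributes $-i$ (from $-iA_\a$) and each $\mathbb{A}^{-}$ contributes $+i$ (from $R(iA_\a)$, after passing $\epsilon J\cdot J^{-1}$ through, the two extra minus signs from $C_0$ cancel in pairs over the product).

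Next I would apply the product $\mathbb{A}^\sigma = \mathbb{A}^{\sigma_q}\cdots\mathbb{A}^{\sigma_1}$ to a basis vector $U_k\otimes e_i$ of $\H = \H_\tau\otimes\C^{2^m}$. Each application of $\DD D^{-2}$ to $U_{k'}\otimes e'$ produces $k'_\mu |k'|^{-2} U_{k'}\otimes \ga^\mu e'$ (using \eqref{dUk} and $\DD^2 U_{k'} = |k'|^2 U_{k'}$), and each left multiplication $L(a_{\a,l})$ (resp. right $R(a_{\a,l})$) by a Weyl generator shifts $k' \mapsto k'+l$ (resp. $k'-l$) and produces a Weyl cocycle phase from \eqref{rel1}. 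Iterating $q$ times, the momentum running through the $i$-th internal propagator is $k + \widehat l_i$ where $\widehat l_i = l_1+\cdots+l_i$ (with the $l$'s signed according to whether the corresponding vertex is an $L$ or an $R$ vertex; I would absorb the signs so that $l_q := -\sum_{j<q} l_j$ enforces the closure constraint coming from $\tau(U_{l_1}\cdots U_{l_q}) = 0$ unless $\sum l_j = 0$). The accumulated phase is exactly $\tfrac12\phi_\sigma(k,l)$: the terms $(\sigma_j-\sigma_q)k.\Th l_j$ come from commuting the final $U_k$ through, and the terms $\sigma_j(l_1+\cdots+l_{j-1}).\Th l_j$ come from reordering the vertex generators among themselves. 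Taking $\Tr(\cdot\,|D|^{-s})$ then pairs $\langle U_k\otimes e_i, \cdot\rangle$ and sums over $k\ne 0$ and $i$; the spinor part collapses to $\Tr(\ga^{\a_q}\ga^{\mu_q}\cdots\ga^{\a_1}\ga^{\mu_1})$ (note the order is reversed because the operators act from the right of the ordered product), the scalar part gives $\sum_l \wt a_{\a,l}\, g_\mu(s,k,l)$ where $g_\mu$ collects the numerator $k_{\mu_1}(k+l_1)_{\mu_2}\cdots$ and the denominator $|k|^{s+2}|k+l_1|^2\cdots$, and the constant $\lambda_\sigma$ is the product of vertex constants. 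Hence $\Tr(\mathbb{A}^\sigma|D|^{-s}) \sim_c f(s)$ with $f$ as claimed, and taking $\underset{s=0}{\Res}$ gives the result.

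\textbf{Main obstacle.} The genuinely delicate point is not the algebra but the analysis justifying that $\Tr(\mathbb{A}^\sigma|D|^{-s})$ really equals (modulo a function holomorphic near $0$, which does not affect the residue) the doubly-indexed series $f(s)$ with the sum over $l$ pulled outside the sum over $k$. For $\Re(s)$ large the operator $\mathbb{A}^\sigma|D|^{-s}$ is trace class and the manipulations are legitimate by absolute convergence (the $\{a_{\a,l}\}$ being Schwartz sequences controls the $l$-sum, and the $|k|^{-\Re(s)-2q}$ decay with polynomial numerator controls the $k$-sum once $\Re(s) > n + q - 2q$); the subtlety is that after unwinding we need $f(s)$ to admit a meromorphic continuation past $\Re(s)=0$ so that the residue at $s=0$ makes sense, and that the interchange of $\sum_l$ and $\underset{s=0}{\Res}$ (i.e. $\underset{s=0}{\Res}$ of the $l$-series equals the $l$-series of residues, when needed in later lemmas) is valid. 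This is precisely what Lemma \ref{abs-som}, Corollary \ref{res-somH}, Lemma \ref{res-som} and ultimately Theorem \ref{analytic} and Theorem \ref{zetageneral} were set up to provide — and it is where the Diophantine (badly approximable) hypothesis on $\tfrac{1}{2\pi}\Th$ enters, through the factor $e^{\tfrac i2\phi_\sigma(k,l)}$ whose non-resonant oscillation in $k$ is what tames the continuation. For the present lemma, however, I would only need to state $f(s)$ as a formal/absolutely-convergent-in-a-half-plane object whose residue at $0$ is the definition of $\ncint\mathbb{A}^\sigma$, deferring the continuation to the cited theorems; so the proof here is complete once the bookkeeping of phases, momenta, signs and gamma-matrix ordering is carried out carefully.
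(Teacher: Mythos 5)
The paper itself does not print a proof of this lemma (it is imported from \cite{EILS}), but your plan is exactly the computation the paper carries out in the analogous Lemmas \ref{tadpole}, \ref{spectrumset} and \ref{ncint-odd-pdo}: apply $\mathbb{A}^{\sigma_q}\cdots\mathbb{A}^{\sigma_1}|D|^{-s}$ to $U_k\otimes e_j$, let $\tau$ enforce the closure constraint $\sum_j l_j=0$, collect the Weyl cocycles into $\phi_\sigma$, the propagators into $g_\mu$, the vertex constants into $\lambda_\sigma$ and the spinor factors into the reversed gamma trace, justifying everything by absolute convergence for $\Re(s)$ large and deferring the continuation to Section \ref{Residues of series}; this is correct and is the intended argument. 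One small correction to your bookkeeping: a right vertex does \emph{not} shift the momentum by $-l$, since $R(U_l)\,U_{k'}=U_{k'}U_l=e^{-\tfrac i2 k'.\Th l}\,U_{k'+l}$, so $L$- and $R$-vertices both shift by $+l$ and differ only in the sign of the accumulated phase --- which is precisely why the intermediate momenta $k+l_1+\cdots+l_i$ in $g_\mu$ carry no $\sigma$-dependence while $\phi_\sigma$ does; with that fixed (and no sign "absorption" needed beyond the convention $l_q=-\sum_{j<q}l_j$), your computation lands on the stated formula.
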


In the following, we will use the shorthand
$$
c:=\tfrac{4\pi^{2}}{3}.
$$

\begin{lemma}
\label{Termaterm}
Suppose $n=4$. Then, with the same hypothesis of Lemma \ref{formegenerale},
\begin{align*}
\hspace{-2cm}\text{(i)} \quad \quad &
\tfrac 12 \ncint (\mathbb A^+)^2= \tfrac 12 \ncint (\mathbb A^-)^2= c
\,
\sum_{l\in\Z^4} \, a_{\alpha_{1},l} \, a_{\alpha_{2},-l} \,
\big(l^{\alpha_{1}}l^{\alpha_{2}}
- \delta^{\alpha_{1}\alpha_{2}} \vert l \vert^2\big).\\
\hspace{-2cm}\text{(ii)} \quad  \quad & \hspace{-0.45cm}
-\tfrac 13\ \ncint (\mathbb A^+)^3=-\tfrac 13 \ncint (\mathbb
A^-)^3=4c
\,\sum_{l_i \in \Z^4}
a_{\a_3,-l_1-l_2}\,a^{\a_1}_{l_2}\,a_{\a_1,l_1}\ \sin \tfrac{l_1.\Th
l_2}{2}\,l_1^{\a_3}.\\
\hspace{-1cm}\text{(iii)} \quad \quad &
\tfrac 14 \ncint (\mathbb A^+)^4=\tfrac 14 \ncint (\mathbb A^-)^4=
2c\,
\sum_{l_i \in \Z^4} a_{\alpha_{1},-l_1-l_2-l_3}\,
a_{{\alpha_{2}},l_3} \, a^{\alpha_{1}}_{l_2}
\, a^{\alpha_{2}}_{l_1} \sin \tfrac{l_1 .\Th (l_2+l_3)}{2}\, \sin
\tfrac{ l_2 .\Th
l_3}{2}.
\end{align*}
\vspace{-0.3cm}

(iv) Suppose $\tfrac {1}{2\pi}\Th$ badly approximable. Then the
crossed terms in
$\ncint (\mathbb A^+ + \mathbb A^-)^q$ vanish:  if $C$ is the set of
all $\sigma\in \{-,+\}^q$ with $2\leq q\leq 4$, such that there exist
$i,j$ satisfying
$\sigma_i \neq \sigma_j$, we have $ \sum_{\sigma \in C} \, \ncint \mathbb{A}^{\sigma} =0.$
\end{lemma}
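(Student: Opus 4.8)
\textbf{Proof proposal for Lemma \ref{Termaterm}.}

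The plan is to reduce all four statements to the single master formula of Lemma \ref{formegenerale}, namely $\ncint \mathbb{A}^\sigma = \underset{s=0}{\Res}\, f(s)$ with $f(s) = \sum_{l\in(\Z^n)^{q-1}} {\sum_k}' \lambda_\sigma\, e^{\frac i2\phi_\sigma(k,l)}\, g_\mu(s,k,l)\,\wt a_{\a,l}\,\Tr(\ga^{\a_q}\ga^{\mu_q}\cdots\ga^{\a_1}\ga^{\mu_1})$, specialized to $n=4$ and $q=2,3,4$. For each fixed $q$, the first step is to expand the denominators $|k+\wh l_i|^{-2}$ around $|k|^{-2}$ using the identity \eqref{trick-0} (and Lemma \ref{R-poly}/Lemma \ref{zetageneral-lem} to control the tails), so that the numerator becomes a finite sum of homogeneous polynomials $k^\beta$ divided by $|k|^{s+2q}$. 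By Theorem \ref{analytic}(ii) (applicable here because the commutation and summability hypotheses hold once the polynomial structure is extracted), only the terms where $\deg_k$ matches $2q$ minus the extra powers of $|k|$ survive the residue at $s=0$, and among those, the symmetry $k\to -k$ kills the odd-degree pieces. What remains is governed by the explicit spherical integrals of Proposition \ref{calculres} — in dimension $4$ these are exactly \eqref{formule2}, giving factors $\delta_{\mu_i\mu_j}\tfrac{\pi^2}{2}$ or $(\delta\delta+\delta\delta+\delta\delta)\tfrac{\pi^2}{12}$. The $\C$-valued trace of the product of gamma matrices $\Tr(\ga^{\a_q}\ga^{\mu_q}\cdots\ga^{\a_1}\ga^{\mu_1})$ is evaluated by the standard Clifford contraction rules (contracting the $\mu$'s against the Kronecker deltas coming from the spherical integral), and the phases $\phi_\sigma$ collapse: for $\sigma=(+,\dots,+)$ the first sum $\sum(\sigma_j-\sigma_q)k.\Th l_j$ vanishes identically, leaving only the $l$-dependent phases $\sum_{2\le j\le q-1}\sigma_j(l_1+\cdots+l_{j-1}).\Th l_j$, which for $q=3,4$ reorganize into the sine factors displayed (using $U_l U_m - U_m U_l = -2i\sin(\tfrac12 l.\Th m)U_{l+m}$ and that only $\sum l_i=0$ contributes via $\tau$). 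Collecting the numerical constants $\lambda_\sigma=(-i)^q$, the gamma traces, the $\pi^2$-factors and the combinatorial multiplicities yields the constant $c=\tfrac{4\pi^2}{3}$ with the stated coefficients $1$, $-\tfrac13\cdot(\text{overall})=4c$ after sign bookkeeping, and so on.

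For parts (i), (ii), (iii) I would also verify the equality $\ncint(\mathbb{A}^+)^q = \ncint(\mathbb{A}^-)^q$ directly from Lemma \ref{symetrie} (which already asserts $\ncint\mathbb{A}^\sigma=\ncint\mathbb{A}^{-\sigma}$ for any $\sigma$), so no separate computation for $\mathbb{A}^-$ is needed; alternatively one observes that replacing $A$ by $\epsilon JAJ^{-1}$ amounts to $L\leftrightarrow R$, which under the trace $\tau$ and the residue extraction produces the identical scalar. For part (iv), the point is precisely where the Diophantine hypothesis enters: a ``crossed'' $\sigma$ (with $\sigma_i\neq\sigma_j$ for some $i,j$) forces the phase $\phi_\sigma(k,l)$ to contain a genuinely $k$-dependent term $\sum_j(\sigma_j-\sigma_q)k.\Th l_j$ with nonzero coefficient vector $\Th\sum_j(\sigma_j-\sigma_q)l_j$ on the relevant sublattice. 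Then $f(s)$ is a series of the type $g_0(s)$ in Theorem \ref{analytic}(iii) (sum over $l$ outside the zero set $\mathcal{Z}$), and badly-approximability of $\tfrac{1}{2\pi}\Th$ guarantees it extends holomorphically to all of $\C$, hence has zero residue at $s=0$. Summing over all crossed $\sigma\in C$ gives $\sum_{\sigma\in C}\ncint\mathbb{A}^\sigma=0$.

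The main obstacle I anticipate is the bookkeeping in parts (ii) and (iii): correctly tracking the gamma-matrix traces $\Tr(\ga^{\a_q}\ga^{\mu_q}\cdots\ga^{\a_1}\ga^{\mu_1})$ against the partially-symmetrized Kronecker deltas from the sphere integrals, while simultaneously keeping the numerator expansion of $g_\mu(s,k,l)$ to the correct order (terms of the wrong homogeneity degree are holomorphic at $s=0$ and must be discarded, but one must be sure not to discard a term that becomes resonant after a $|k|^2$ is cancelled by \eqref{trick-0}), and verifying that the residual phases really do assemble into the compact sine expressions $\sin\tfrac{l_1.\Th l_2}{2}$ and $\sin\tfrac{l_1.\Th(l_2+l_3)}{2}\sin\tfrac{l_2.\Th l_3}{2}$ rather than some less symmetric combination. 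This is ``only'' a finite computation for each $q\in\{2,3,4\}$, but it is the step where errors are easy to make; the structural inputs (Theorem \ref{analytic}, Proposition \ref{calculres}, Lemma \ref{symetrie}, Lemma \ref{tadpole}) are all in place and do the conceptual work.
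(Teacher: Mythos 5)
Your strategy for (i)--(iii) is the right one --- expand $g_\mu(s,k,l)$ via \eqref{trick-0}, keep only the terms whose $k$-homogeneity makes them resonant at $s=0$, evaluate the spherical integrals with Proposition \ref{calculres} and contract against the Clifford traces --- and Lemma \ref{symetrie} does indeed dispose of the $\mathbb A^-$ cases, since $-(+,\dots,+)=(-,\dots,-)$. You should also invoke Lemma \ref{abs-som}, Corollary \ref{res-somH} and Lemma \ref{res-som} to justify pulling $\underset{s=0}{\Res}$ inside the sum over $l$, but that machinery is in place; the remaining work in (i)--(iii) is the finite computation you flag, which you have not carried out but for which the method is sound.

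The argument for (iv), however, has a genuine gap. For a crossed $\sigma$ the phase contributes $e^{ik.\Th\sum_j\epsilon_j l_j}$ with $\epsilon_j=(\sigma_j-\sigma_q)/2$ not all zero, but the sum over $l$ runs over \emph{all} of $(\Z^4)^{q-1}$, not over the complement of $\mathcal{Z}=\set{l \,:\, \sum_j\epsilon_j l_j=0}$. So the series is of the type $g(s)$ in Theorem \ref{analytic}(ii), not $g_0(s)$ in Theorem \ref{analytic}(iii): the Diophantine condition kills only the off-diagonal part $l\notin\mathcal{Z}$, while the diagonal part --- on which the $k$-dependent phase is identically $1$ --- contributes a residue of the form $V\int_{S^{3}}P(u)\,dS(u)$ with $V=\sum_{l\in\mathcal{Z}}b(l)$, and this must be computed. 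For $q=2$ one is lucky: $\mathcal{Z}=\set{l_1=0}$ and every term resonant at $s=0$ carries a factor of $l_1$, so the diagonal contribution vanishes. But for $q=3,4$ the set $\mathcal{Z}$ is a positive-dimensional sublattice (for instance $l_2=-l_1$, hence $l_3=0$, for $\sigma=(+,+,-)$), on which resonant terms such as $k_{\mu_1}(l_1)_{\mu_2}k_{\mu_3}|k|^{-s-6}$ survive with coefficients $a_{\a_1,l_1}a_{\a_2,-l_1}a_{\a_3,0}$ that have no reason to vanish. Showing that these diagonal contributions are zero (or cancel in the sum over $\sigma\in C$) is precisely the nontrivial content of (iv); it requires explicit evaluation of the corresponding gamma traces and spherical integrals and cannot be replaced by the holomorphy argument you give.
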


\begin{lemma}
\label{double}
Suppose $n=4$ and $\tfrac{1}{2\pi}\Th$ badly approximable. For any self-adjoint one-form $A$,
$$
\zeta_{D_A}(0)-\zeta_{D}(0)=-c\, \tau(F_{\a_1,\a_2}F^{\a_1\a_2}).
$$
\end{lemma}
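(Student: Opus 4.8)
The strategy is to apply Theorem \ref{difference} in dimension $n=4$ and then evaluate each noncommutative integral appearing in the sum $\sum_{q=1}^{4} \tfrac{(-1)^q}{q}\ncint(\wt A D^{-1})^q$ using the results assembled in the preceding subsection. First I would write, from \eqref{termconstanttilde},
\begin{align*}
\zeta_{D_A}(0)-\zeta_D(0)=\sum_{q=1}^{4}\tfrac{(-1)^q}{q}\ncint(\wt A D^{-1})^q,
\end{align*}
and then use Lemma \ref{ncadmoins1} to replace each $\ncint(\wt A D^{-1})^q$ by $\sum_{\sigma\in\{+,-\}^q}\ncint\mathbb{A}^\sigma$. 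Since $\tfrac{1}{2\pi}\Th$ is badly approximable, Lemma \ref{Termaterm}(iv) kills all crossed terms, so only the ``pure'' terms $\ncint(\mathbb{A}^+)^q$ and $\ncint(\mathbb{A}^-)^q$ survive, and by Lemma \ref{symetrie} (or the explicit equalities in Lemma \ref{Termaterm}(i)--(iii)) these two are equal. The $q=1$ term vanishes by the tadpole Lemma \ref{tadpole}(i) (case $p=q=1$). Hence
\begin{align*}
\zeta_{D_A}(0)-\zeta_D(0)= -\ncint(\mathbb{A}^+)^2 +\tfrac{2}{3}\cdot\big(-\ncint(\mathbb{A}^+)^3\big)\cdot\tfrac{3}{2}\cdot\tfrac{1}{?}\ \cdots
\end{align*}
— more cleanly, collecting the combinatorial factors: the $q$-th term contributes $\tfrac{(-1)^q}{q}\cdot 2\,\ncint(\mathbb{A}^+)^q$, and using the $-$ signs already absorbed into the statements of Lemma \ref{Termaterm}(i)--(iii), the total is $-c\,\tau(F_{\a_1\a_2}F^{\a_1\a_2})$ once the three pieces are added.

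\textbf{Key steps in order.} (1) Substitute Theorem \ref{difference} and Lemma \ref{ncadmoins1}; (2) invoke Lemma \ref{Termaterm}(iv) to discard crossed terms and Lemma \ref{symetrie} to identify $(\mathbb{A}^+)^q$ with $(\mathbb{A}^-)^q$; (3) discard $q=1$ via the vanishing tadpole, Lemma \ref{tadpole}(i); (4) insert the explicit values of $\tfrac12\ncint(\mathbb{A}^+)^2$, $-\tfrac13\ncint(\mathbb{A}^+)^3$, $\tfrac14\ncint(\mathbb{A}^+)^4$ from Lemma \ref{Termaterm}(i)--(iii), each expressed in terms of the Fourier coefficients $a_{\a,l}$ of $A$ and the structure constants $\sin\tfrac{l.\Th l'}{2}$; (5) recognize the resulting sum as (a multiple of) $\tau(F_{\a_1\a_2}F^{\a_1\a_2})$, where $F_{\a_1\a_2}=\delta_{\a_1}(A_{\a_2})-\delta_{\a_2}(A_{\a_1})+[A_{\a_1},A_{\a_2}]$ as in \eqref{Fmunu}. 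Concretely, one expands $\tau(F_{\a_1\a_2}F^{\a_1\a_2})$ into its quadratic, cubic and quartic parts in $A$ using \eqref{dUk}, \eqref{rel1} and the trace property $\tau(U_kU_{k'})=\delta_{k+k',0}$ (so that $\delta_\mu(U_l)=il_\mu U_l$ produces the momenta $l^{\a_j}$ and the commutators produce the $\sin$ factors), and matches term by term with the three contributions, checking that the numerical constant that emerges is exactly $-c=-\tfrac{4\pi^2}{3}$.

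\textbf{Main obstacle.} The genuine work is all hidden upstream in Lemma \ref{Termaterm}: the absolute convergence arguments, the repeated use of identity \eqref{trick-0} to peel off powers of $|k+\wh l_i|^{-2}$, the reduction of the residues to the sphere integrals of Proposition \ref{calculres} (giving the factors $\pi^2/12$, $\pi^2/2$, etc.), and — crucially — the vanishing of the crossed terms (iv), which is precisely where the Diophantine (badly approximable) hypothesis on $\tfrac{1}{2\pi}\Th$ enters, via Theorem \ref{analytic}(iii) and the effective bounds of Lemma \ref{ieffective}. Granting those, the present lemma is the comparatively easy bookkeeping step: the only real care needed is to get the combinatorial coefficients $\tfrac{(-1)^q}{q}$, the factor $2$ from $\mathbb{A}^+ \leftrightarrow \mathbb{A}^-$ symmetry, and the signs in Lemma \ref{Termaterm}(i)--(iii) to conspire into the single clean constant $-c$, and to verify that the algebraic expansion of $\tau(F_{\a_1\a_2}F^{\a_1\a_2})$ in Fourier modes reproduces exactly the three sums — in particular that the quartic piece $2c\sum a_{\a_1,-l_1-l_2-l_3}a_{\a_2,l_3}a^{\a_1}_{l_2}a^{\a_2}_{l_1}\sin\tfrac{l_1.\Th(l_2+l_3)}{2}\sin\tfrac{l_2.\Th l_3}{2}$ matches the $[A,A]^2$ contribution to the Yang--Mills density.
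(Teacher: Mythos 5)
Your proposal follows exactly the paper's own proof: apply Theorem \ref{difference} together with Lemma \ref{ncadmoins1}, kill the crossed terms by Lemma \ref{Termaterm}(iv) and fold $\mathbb{A}^-$ into $\mathbb{A}^+$ by Lemma \ref{symetrie} to get $2\sum_{q}\tfrac{(-1)^q}{q}\ncint(\mathbb{A}^+)^q$, then match the $q=2,3,4$ contributions of Lemma \ref{Termaterm}(i)--(iii) term by term against the Fourier expansion of $\tau(F_{\a_1\a_2}F^{\a_1\a_2})$. The bookkeeping (the vanishing $q=1$ tadpole, the factor $2$, and the final constant $-c$) is handled correctly, so the argument is sound and essentially identical to the paper's.
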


\begin{proof}
By (\ref{termconstanttilde}) and Lemma \ref{ncadmoins1} we get
$$
\zeta_{D_A}(0)-\zeta_{D}(0) =\sum_{q=1}^n \tfrac {(-1)^q}{q} \sum_{\sigma\in \set{+,-}^q}\ncint
\mathbb{A}^\sigma.
$$
By Lemma \ref{Termaterm} $(iv)$, we see that the crossed terms all
vanish. Thus,
with Lemma \ref{symetrie}, we get
\begin{equation}
\label{n4eq1}
\zeta_{D_A}(0)-\zeta_{D}(0) = 2 \sum_{q=1}^n
\tfrac {(-1)^q}{q} \ncint (\mathbb{A}^+)^q.
\end{equation}

By definition,
\begin{align*}
F_{\alpha_{1} \alpha_{2}}&=i\sum_{k}\big(a_{\alpha_{2},k}\,
k_{\alpha_{1}}-a_{\alpha_{1},k}\, k_{\alpha_{2}}\big)U_{k}
+\sum_{k,\,l}a_{\alpha_{1},k}\,a_{\alpha_{2},l}\,[U_{k},U_{l}]\\
&=i\sum_{k}\big[(a_{\alpha_{2},k} \, k_{\alpha_{1}} -a_{\alpha_{1},k}
\,
k_{\alpha_{2}}) -2\sum_{l}a_{\alpha_{1},k-l}\,a_{\alpha_{2},l}\,
\sin(\tfrac{k.\Th
l}{2})\big] \, U_{k}.
\end{align*}
Thus
\begin{align*}
\tau(F_{\alpha_{1}\alpha_{2}} F^{\alpha_{1}\alpha_{2}})
&=\sum_{\alpha_{1},\,\alpha_{2}=1}^{2^m}\, \sum_{k\in \Z^4}\big[
(a_{\alpha_{2},k} \, k_{\alpha_{1}}-a_{\alpha_{1},k} \,
k_{\alpha_{2}})
-2\sum_{l'\in\Z^4}a_{\alpha_{1},k-l'}\,a_{\alpha_{2},l'}\,
\sin(\tfrac{k.\Th l'}{2})\big]\\
&\hspace{2.2cm} \big[(a_{\alpha_{2},-k} \, k_{\alpha_{1}}
-a_{\alpha_{1},-k}\, k_{\alpha_{2}})
-2\sum_{l"\in
\Z^4}a_{\alpha_{1},-k-l"}\,a_{\alpha_{2},l"}\,\sin(\tfrac{k.\Th
l"}{2})\big].
\end{align*}
One checks that the term in $a^q$ of $\tau(F_{\alpha_{1}\alpha_{2}}
F^{\alpha_{1}\alpha_{2}})$ corresponds to the term $\ncint
(\mathbb{A}^+)^q$ given by
Lemma \ref{Termaterm}. For $q=2$, this is
$$
-2\sum_{l\in\Z^4,\,\alpha_{1},\,\alpha_{2}} \, a_{\alpha_{1},l} \,
a_{\alpha_{2},-l}
\, \big(l_{\alpha_{1}}l_{\alpha_{2}} - \delta_{\alpha_{1}\alpha_{2}}
\vert l \vert
^2\big).
$$
\noindent For $q=3$, we compute the crossed terms:
$$
i\sum_{k,k',l} (a_{{\a_{2}},k} \, k_{\a_{1}} - a_{{\a_{1}},k} \,
k_{\a_{2}})\
a_{k'}^{\a_1}\ a_{l}^{\a_2} \big(U_k[U_{k'},l]+[U_{k'},U_l]U_k \big),
$$
which gives the following $a^3$-term in
$\tau(F_{\alpha_{1}\alpha_{2}} F^{\alpha_{1}\alpha_{2}})$
$$
-8\sum_{l_i} a_{\a_3,-l_1-l_2}\,a^{\a_1}_{l_2}\,a_{\a_1,l_1}\ \sin
\tfrac{l_1.\Th
l_2}{2}\,l_1^{\a_3} .
$$
\noindent For $q=4$, this is
$$-4\sum_{l_i}a_{\alpha_{1},-l_1-l_2-l_3}\,
a_{{\alpha_{2}},l_3} \, a^{\alpha_{1}}_{l_2} \, a^{\alpha_{2}}_{l_1}
\sin \tfrac{l_1
.\Th (l_2+l_3)}{2}\, \sin \tfrac{ l_2 .\Th l_3}{2}
$$
which corresponds to the term $\ncint (\mathbb{A}^+)^4$.
We get finally,
\begin{equation}\label{n4eq2}
\sum_{q=1}^n
\tfrac {(-1)^q}{q} \ncint (\mathbb{A}^+)^q
=- \tfrac c2 \tau(F_{\a_1,\a_2}F^{\a_1\a_2}).
\end{equation}
Equations (\ref{n4eq1}) and (\ref{n4eq2}) yield the result.
\end{proof}

\begin{lemma}
    \label{term-n=2}
Suppose $n=2$. Then, with the same hypothesis as in Lemma
\ref{formegenerale},
\begin{align*}
\hspace{-5.9cm}\text{(i)} \quad \quad &  \ncint (\mathbb
A^+)^2= \ncint (\mathbb A^-)^2=0.
\end{align*}
(ii) Suppose $\tfrac{1}{2\pi}\Th$ badly approximable. Then
$$
\ncint \mathbb A^+ \mathbb A^{-}
=  \ncint \mathbb A^- \mathbb A^+=0.
$$
\end{lemma}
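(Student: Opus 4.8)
The statement to prove is Lemma~\ref{term-n=2}: for $n=2$ on the noncommutative torus, $\ncint(\mathbb A^+)^2 = \ncint(\mathbb A^-)^2 = 0$, and moreover (under the Diophantine hypothesis on $\tfrac{1}{2\pi}\Th$) the crossed terms $\ncint \mathbb A^+\mathbb A^- = \ncint \mathbb A^-\mathbb A^+ = 0$.

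\medskip

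\textbf{Plan.} The natural starting point is Lemma~\ref{formegenerale}, which expresses $\ncint \mathbb A^\sigma$ as $\underset{s=0}{\Res}\, f(s)$ with
$$
f(s)=\sum_{l\in (\Z^{n})^{q-1}} {\sum_{k\in\Z^n}}' \lambda_\sigma\, e^{\frac i2 \phi_\sigma(k,l)}\, g_\mu(s,k,l)\, \wt a_{\a,l}\, \Tr(\ga^{\a_q}\ga^{\mu_q}\cdots\ga^{\a_1}\ga^{\mu_1}).
$$
Here $q=2$ and $n=2$. First I would treat part (i), the pure terms $\sigma=(+,+)$ and $\sigma=(-,-)$. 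By Lemma~\ref{symetrie} the two are equal, so it suffices to handle $(\mathbb A^+)^2$. For $q=2$ the phase $\phi_\sigma$ has no $\sin$-contribution from the second sum (the convention $\sum_{2\le j\le q-1}=0$ kicks in), and since $\sigma_1=\sigma_2=+$ the term $(\sigma_1-\sigma_2)k.\Th l_1$ vanishes too, so $\phi_\sigma\equiv 0$. Thus $f(s)$ reduces to $\sum_{l\in\Z^2}\wt a_{\a,l}\,\Tr(\ga^{\a_2}\ga^{\mu_2}\ga^{\a_1}\ga^{\mu_1})\,{\sum_k}' g_\mu(s,k,l)$. Now I would invoke the machinery of Section~\ref{Residues of series}: the residue at $s=0$ of ${\sum_k}' g_\mu(s,k,l)$ (after summing over $l$ against the Schwartz sequence $\wt a_{\a,l}$) picks out, via Theorem~\ref{analytic}(ii) / Theorem~\ref{zetageneral}, only the zero-degree homogeneous part of the numerator integrated over $S^{n-1}$. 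The key numerical fact is that in dimension $n=2$ the relevant angular integrals combined with the Clifford trace $\Tr(\ga^{\a_2}\ga^{\mu_2}\ga^{\a_1}\ga^{\mu_1})$ produce a vanishing coefficient — this is the $n=2$ analogue of the computation behind Lemma~\ref{Termaterm}(i), where the pattern $l^{\a_1}l^{\a_2}-\delta^{\a_1\a_2}|l|^2$ appears; for $n=2$ one checks directly that $\int_{S^1}u_{\mu_1}u_{\mu_2}\,dS = \pi\delta_{\mu_1\mu_2}$ (Proposition~\ref{calculres}, eq.~\eqref{formulen=2}) and the contraction against the trace gives a term proportional to $2-n=0$. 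So I would expand $g_\mu$ using the identity \eqref{trick-0} applied to $|k+l_1|^{-2}$, keep only the degree-$0$ piece of the resulting homogeneous expansion, apply \eqref{formulen=2}, and show the $\Tr(\ga\ga\ga\ga)$ contraction carries the factor $(2-n)$.

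\medskip

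Second, for part (ii), the crossed terms $\mathbb A^+\mathbb A^-$ and $\mathbb A^-\mathbb A^+$, I would use the same residue formula but now $\sigma=(+,-)$ or $(-,+)$, so $\phi_\sigma(k,l)$ contains the nontrivial term $(\sigma_1-\sigma_2)k.\Th l_1 = \pm 2\,k.\Th l_1$. This is exactly the situation covered by Lemma~\ref{ncint-odd-pdo}'s proof technique and by Theorem~\ref{analytic}(iii)/(ii): the summand has a genuine oscillatory factor $e^{ik.\Th l_1}$ with $l_1$ ranging off the "diagonal" set $\mathcal Z=\{l : l_1=0\}$, and the Diophantine (badly approximable) hypothesis on $\tfrac{1}{2\pi}\Th$ forces the off-diagonal part of the resulting zeta function to extend \emph{holomorphically} to all of $\C$, hence contribute no residue at $s=0$. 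The diagonal part $l_1=0$ gives $\phi_\sigma=0$ and reduces to a scalar multiple of $Z_n(s)$-type series times the Clifford trace — but this is again a pure-term type contribution where the $n=2$ trace identity $\Tr(\ga^{\a_2}\ga^{\mu}\ga^{\a_1}\ga_\mu)\propto(2-n)$ makes it vanish (alternatively the $l_1=0$ restriction leaves the numerator $k_{\mu_1}k_{\mu_2}|k|^{-2}$, whose angular average contracted against the trace vanishes as above). So both mechanisms — Diophantine-killed off-diagonal and trace-killed diagonal — combine to give zero.

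\medskip

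\textbf{Main obstacle.} The delicate point is \emph{not} the vanishing itself but making precise the interchange of the $k$-sum with the $l$-sum and the legitimacy of taking residues term by term; this is where the "commutation between sum and residue" results (Lemma~\ref{abs-som}, Corollary~\ref{res-somH}, Lemma~\ref{res-som}) must be cited carefully, after checking that the summands satisfy the absolute-summability bound $\Re(s)>n+j-r-2p$ with the appropriate degrees, and that the function $H(s,l)$ satisfies the hypothesis (H1) or at least (H2). The other subtlety is bookkeeping the homogeneous decomposition of the numerator $k_{\mu_1}(k+l_1)_{\mu_2}/(|k|^{s+2}|k+l_1|^2)$ after expanding via \eqref{trick-0}: one must verify that all pieces of positive $k$-degree either give poles at points $s\neq 0$ or vanish by parity, so that only the degree-$0$ piece survives at $s=0$, and then that the surviving piece is annihilated by the Clifford trace in dimension $2$. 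I expect the argument to be almost entirely a specialization of Lemma~\ref{Termaterm}(i),(iv) and Lemma~\ref{ncint-odd-pdo} to $n=2$, $q=2$, with the decisive input being the trace identity $\Tr(\ga^{\a_2}\ga^{\mu}\ga^{\a_1}\ga_\mu)=2^m(2-n)\delta^{\a_1\a_2}$ which is identically zero when $n=2$.
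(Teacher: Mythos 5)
Your proposal is correct and follows exactly the route the paper sets up (the paper itself omits the proof of this lemma): specialize Lemma \ref{formegenerale} to $q=2$, observe that at $s=0$ only the leading homogeneous piece $k_{\mu_1}k_{\mu_2}/|k|^{s+4}$ of $g_\mu$ can contribute a residue, evaluate it by \eqref{formulen=2} as $\pi\delta_{\mu_1\mu_2}$, and kill the result with the trace identity $\Tr(\ga^{\a_2}\ga^{\mu}\ga^{\a_1}\ga_\mu)=2^m(2-n)\delta^{\a_1\a_2}=0$ for $n=2$; for the crossed terms the Diophantine hypothesis plus Theorem \ref{analytic}(iii) removes the off-diagonal sum and the diagonal part vanishes by the same trace identity. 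You also correctly flag that the only genuinely delicate step is the sum--residue interchange handled by Lemma \ref{abs-som}, Corollary \ref{res-somH} and Lemma \ref{res-som}, which is precisely the role of that machinery in the paper.
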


\begin{lemma}
\label{termeconstantn=2}
Suppose $n=2$ and $\tfrac{1}{2\pi}\Th$ badly approximable. Then, for any self-adjoint one-form $A$,
\begin{align*}
\zeta_{D_A}(0)-\zeta_{D}(0)= 0.
\end{align*}
\end{lemma}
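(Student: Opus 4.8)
The plan is to reproduce, in dimension two, the argument already used for Lemma~\ref{double}. Since the noncommutative torus $\big(\Coo(\T^n_\Th),\H,\DD\big)$ has simple dimension spectrum when $\tfrac{1}{2\pi}\Th$ is badly approximable (Proposition~\ref{zeta(0)}), Theorem~\ref{difference} applies, and specialized to $d=n=2$ formula~\eqref{termconstanttilde} reads
$$
\zeta_{D_A}(0)-\zeta_{D}(0)=-\ncint \wt A D^{-1}+\tfrac12\,\ncint (\wt A D^{-1})^2 .
$$
It then suffices to show that both terms on the right-hand side vanish.

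For the linear term I would invoke Lemma~\ref{tadpole}(i) in the case $p=q=1$, i.e.\ the vanishing tadpole condition \eqref{vanishtad}: this is legitimate because $1\le q<n=2$, and it gives $\ncint A D^{-1}=\ncint\big(\epsilon JAJ^{-1}\big) D^{-1}=0$; since $\wt A=A+\epsilon JAJ^{-1}$ and $\ncint$ is linear, $\ncint \wt A D^{-1}=0$. For the quadratic term I would first expand it via Lemma~\ref{ncadmoins1} into $\mathbb A^{\pm}$-monomials,
$$
\tfrac12\,\ncint (\wt A D^{-1})^2=\tfrac12\sum_{\sigma\in\{+,-\}^2}\ncint \mathbb{A}^{\sigma}
=\tfrac12\Big(\ncint (\mathbb A^+)^2+\ncint (\mathbb A^-)^2+\ncint \mathbb A^+\mathbb A^-+\ncint \mathbb A^-\mathbb A^+\Big),
$$
and then kill the four summands: the two diagonal ones by Lemma~\ref{term-n=2}(i), and the two crossed ones by Lemma~\ref{term-n=2}(ii) — this last step being precisely where the hypothesis that $\tfrac{1}{2\pi}\Th$ be badly approximable is used. (One could equally combine Lemma~\ref{symetrie}, which reduces the work to $\ncint (\mathbb A^+)^2$ and $\ncint \mathbb A^+\mathbb A^-$, with these two facts.) Summing everything yields $\zeta_{D_A}(0)=\zeta_{D}(0)$.

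Once Lemma~\ref{term-n=2} is granted there is no genuine obstacle; the only thing needing a little bookkeeping is to verify that every monomial produced by expanding $(\wt A D^{-1})^2$ into products $\mathbb{A}^{\sigma_2}\mathbb{A}^{\sigma_1}$ is indeed covered by parts (i) and (ii) of Lemma~\ref{term-n=2}, in particular that both orderings $\mathbb A^+\mathbb A^-$ and $\mathbb A^-\mathbb A^+$ are accounted for. The substantive content — the residue computations behind Lemma~\ref{term-n=2}, where the Diophantine condition enters through the effective bounds of Lemma~\ref{ieffective} and the decomposition of Lemma~\ref{formegenerale} — is exactly the part I would leave to that lemma rather than redo here.
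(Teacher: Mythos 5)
Your proof is correct and follows essentially the same route as the paper: apply formula \eqref{termconstanttilde} from Theorem \ref{difference} (valid since the dimension spectrum is simple by Proposition \ref{zeta(0)}), kill the tadpole term via Lemma \ref{tadpole}(i) (equivalently Lemma \ref{traceAD}), and expand the quadratic term with Lemma \ref{ncadmoins1} so that all four monomials $\mathbb{A}^{\sigma_2}\mathbb{A}^{\sigma_1}$ vanish by Lemma \ref{term-n=2}(i)--(ii). Your bookkeeping of the orderings $\mathbb A^+\mathbb A^-$ and $\mathbb A^-\mathbb A^+$ and of where the Diophantine hypothesis enters matches the paper's (much terser) argument.
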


\begin{proof}
As in Lemma \ref{double}, we use (\ref{termconstanttilde}) and Lemma \ref{ncadmoins1}
so the result follows from Lemma \ref{term-n=2}.
\end{proof}

\subsubsection{Odd dimensional case}

\begin{lemma}
\label{impair}
Suppose $n$ odd and $\tfrac{1}{2\pi}\Th$ badly approximable.
Then for any self-adjoint one-form $A$ and $\sigma\in \{-,+\}^q$ with
$2\leq q\leq n$,
$$
\ncint \mathbb{A}^\sigma = 0\, .
$$
\end{lemma}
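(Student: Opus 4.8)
The plan is to reduce $\ncint \mathbb{A}^\sigma$ to a residue of a zeta function of the type controlled in Theorem~\ref{analytic} and Theorem~\ref{zetageneral}, and then exploit a parity argument in the momentum variable $k$ that kills the residue when $n$ is odd. First I would invoke Lemma~\ref{formegenerale}, which already expresses $\ncint \mathbb{A}^\sigma = \underset{s=0}{\Res}\, f(s)$ with
$$
f(s)=\sum_{l\in(\Z^n)^{q-1}} {\sum_{k\in\Z^n}}'\, \lambda_\sigma\, e^{\frac i2 \phi_\sigma(k,l)}\, g_\mu(s,k,l)\, \wt a_{\a,l}\, \Tr(\ga^{\a_q}\ga^{\mu_q}\cdots\ga^{\a_1}\ga^{\mu_1}),
$$
where $g_\mu(s,k,l)=\dfrac{k_{\mu_1}(k+l_1)_{\mu_2}\cdots(k+\wh l_{q-1})_{\mu_q}}{|k|^{s+2}|k+l_1|^2\cdots|k+\wh l_{q-1}|^2}$. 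The starting point is purely formal convergence in a right half-plane, so the genuine analytic content is the meromorphic continuation, which is exactly what the machinery of Section~\ref{Residues of series} provides under the badly approximable hypothesis on $\frac{1}{2\pi}\Th$.

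Next I would perform the expansion of each denominator factor $|k+\wh l_i|^{-2}$ using the Taylor expansion \eqref{trick-0} (as in Lemma~\ref{R-poly} and the proof of Lemma~\ref{ncint-odd-pdo}), writing $g_\mu(s,k,l)$ modulo a holomorphic remainder as a finite sum of terms of the form $b(l)\, Q_h(k)\, |k|^{-s-2q}\, e^{i\psi(k,l)}$, where $Q_h$ is a homogeneous polynomial in $k=(k_1,\dots,k_n)$ of some degree $d_h$, $b\in\mathcal S((\Z^n)^{q-1})$ is a Schwartz sequence, and $\psi$ is real-valued linear in $k$ (combining $\phi_\sigma$ with the $2\langle k,\wh l_i\rangle$ numerators produced by the expansion). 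Since $\frac{1}{2\pi}\Th$ is badly approximable, Theorem~\ref{analytic}~$(ii)$ (or Theorem~\ref{zetageneral}) applies to each such term: the associated zeta function has a meromorphic continuation with at most a simple pole at $s+2q=n+d_h$, i.e.\ at $s=n+d_h-2q$, and the residue there equals $V_h\int_{u\in S^{n-1}} Q_h(u)\, dS(u)$ with $V_h=\sum_{l\in\mathcal Z} M_h(l)$ the sum over the diagonal sublattice $\mathcal Z=\{\sum l_i=0\}$. Consequently $\underset{s=0}{\Res}\, f(s)$ receives a contribution from the $h$-th term only when $d_h=2q-n$.

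The key observation, and the place the oddness of $n$ does its work, is the following parity dichotomy. The residue contribution from a given term is nonzero only if $d_h=2q-n$; but when $n$ is odd, $2q-n$ is odd, so $Q_h$ is a homogeneous polynomial of odd degree, hence $Q_h(-u)=-Q_h(u)$ and $\int_{u\in S^{n-1}} Q_h(u)\, dS(u)=0$. Therefore every term that could in principle contribute a residue at $s=0$ in fact contributes zero, and the remaining terms are holomorphic at $s=0$; summing the finitely many relevant terms (after fixing $N$ large enough, exactly as at the end of the proof of Lemma~\ref{ncint-odd-pdo}) gives $\ncint\mathbb{A}^\sigma=\underset{s=0}{\Res}\,f(s)=0$. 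I expect the main obstacle to be purely bookkeeping: one must be careful that when $g_\mu(s,k,l)$ is expanded, the new numerators $2\langle k,\wh l_i\rangle$ change the polynomial degree in $k$ in a controlled way (each factor raises $d_h$ by its own degree but simultaneously lowers the power of $|k|^{-1}$ by $2$, so the ``effective'' degree governing the pole location is stable), and that the Diophantine condition is genuinely used only to guarantee the meromorphic continuation and the uniform bound \eqref{crucialbig} needed to interchange the $l$-sum with the residue — this is precisely the content of Corollary~\ref{res-somH} and Lemma~\ref{res-som}, which legitimize computing the residue term by term. Once these interchanges are justified, the parity argument finishes the proof immediately.
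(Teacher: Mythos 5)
Your argument is correct, and its mathematical core — reduce $\ncint\mathbb{A}^\sigma$ to the residue at $s=0$ of a zeta function of the type governed by Theorem~\ref{analytic}, then observe that any residue contribution would require a homogeneous polynomial of degree $2q+2j-n$ in $k$, which is odd when $n$ is odd, so its integral over $S^{n-1}$ vanishes — is exactly the parity argument the paper relies on. The difference is one of packaging: the paper's proof is a single line, noting that $\mathbb{A}^\sigma\in\Psi_1(\A)$ and citing the general Lemma~\ref{ncint-odd-pdo} (with $q=n$ odd, so $|D|^{-(n-q)}=1$), whereas you re-derive that lemma's content inline, starting from the explicit formula of Lemma~\ref{formegenerale} and expanding the denominators $|k+\wh l_i|^{-2}$ via \eqref{trick-0}. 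Your route buys nothing extra here but costs the bookkeeping you yourself flag; the paper's route buys brevity and reuses a statement already needed elsewhere (e.g.\ in Proposition~\ref{invariance}). One small imprecision worth fixing: the condition for a term to contribute a residue at $s=0$ is not $d_h=2q-n$ but $d_h=2q+2j-n$ for the various shifts $j\geq 0$ produced by the expansion — you acknowledge this in your closing paragraph, and since the parity of $2q+2j-n$ is independent of $j$, the conclusion is unaffected; but the statement as first written in your third paragraph should be corrected to match.
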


\begin{proof}
Since $\mathbb{A}^\sigma \in \Psi_1(\A)$, Lemma \ref{ncint-odd-pdo} with $k=n$ gives the result.
\end{proof}

\begin{corollary}
\label{zetaimpair}
With the same hypothesis of Lemma \ref{impair}, $\zeta_{D_A}(0)-\zeta_{D}(0)=0.$
\end{corollary}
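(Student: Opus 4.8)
\textbf{Proof plan for Corollary \ref{zetaimpair}.}

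The statement to prove is: for $n$ odd and $\tfrac{1}{2\pi}\Th$ badly approximable, $\zeta_{D_A}(0)-\zeta_{D}(0)=0$ for any selfadjoint one-form $A$. The plan is to feed the machinery of Section \ref{Noncommutative integration} into the structural input provided by Lemma \ref{impair}. Concretely, I would start from Theorem \ref{difference}, which applies because the spectral triple $\big(\Coo(\T^n_\Th),\H,\DD\big)$ is regular (stated in the theorem of Section \ref{Definition of the nc-torus}) and has simple dimension spectrum (Proposition \ref{zeta(0)}(i), using the badly approximable hypothesis). That formula reads
\begin{align*}
\zeta_{D_{\wt A}}(0)=\zeta_{D}(0)+\sum_{q=1}^{n} \tfrac{(-1)^{q}}{q} \ncint (\wt AD^{-1})^{q}.
\end{align*}
So it suffices to show each term $\ncint (\wt AD^{-1})^{q}$ vanishes, for $1\leq q\leq n$.

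The next step is to unwind $\ncint (\wt AD^{-1})^{q}$ via Lemma \ref{ncadmoins1}, which gives
$\ncint (\wt A D^{-1})^q = \sum_{\sigma\in \set{+,-}^{q}} \ncint \mathbb{A}^{\sigma}$,
where the notations $\mathbb{A}^{\pm}$ and $\mathbb{A}^{\sigma}$ are as introduced before Lemma \ref{ncadmoins1}. For $q\geq 2$, Lemma \ref{impair} says precisely that $\ncint \mathbb{A}^{\sigma}=0$ for every $\sigma\in\{-,+\}^q$ (its proof being an instance of Lemma \ref{ncint-odd-pdo}, since $\mathbb{A}^\sigma \in \Psi_1(\A)$ and $n$ is odd). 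For $q=1$, the two terms are $\ncint \mathbb{A}^{+}=\ncint A\DD D^{-2}$ and $\ncint \mathbb{A}^{-}=\ncint \epsilon JAJ^{-1}\DD D^{-2}$; these are the tadpole terms and vanish by Lemma \ref{tadpole}(i) with $p=q=1$ (equivalently, they equal $\ncint A D^{-1}$ and $\ncint \epsilon JAJ^{-1} D^{-1}$ modulo $OP^{-\infty}$, which are zero). Hence every summand on the right of Theorem \ref{difference}'s formula is zero, so $\zeta_{D_{\wt A}}(0)=\zeta_D(0)$, and since $D_A$ in the statement of the corollary refers to the fluctuation by $\wt A = A + \epsilon J A J^{-1}$ (notation \eqref{DAdroit}), this is exactly $\zeta_{D_A}(0)-\zeta_D(0)=0$.

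There is essentially no analytic obstacle left once Lemma \ref{impair} is granted — the content of the corollary is genuinely just the assembly of Theorem \ref{difference}, Lemma \ref{ncadmoins1}, Lemma \ref{tadpole}(i), and Lemma \ref{impair}. The only point requiring a word of care is the case $q=1$, which is not covered by Lemma \ref{impair} (stated for $2\leq q\leq n$) and must be handled separately through the vanishing tadpole Lemma \ref{tadpole}; one should also note that $q=1$ with $n$ odd, $n\geq 1$, is always in range. If one wanted a self-contained argument rather than citing Theorem \ref{difference}, the alternative would be to invoke Lemma \ref{Res-zeta-n-k} at $k=n$ directly: $\Res_{s=0}\zeta_{D_A}(s)-\Res_{s=0}\zeta_D(s)$ is a sum of residues of traces of products $\eps^{r_1}(Y)\cdots\eps^{r_p}(Y)|D|^{-n}$, each of which lies in $\ncint \Psi_1(\A)\,|D|^{-n}$ by Corollary \ref{eps-pdo} and vanishes by Lemma \ref{ncint-odd-pdo} with $q=n$ odd — this is in fact the route taken in the proof of Proposition \ref{invariance}, and I would mention it as the cleaner underlying reason. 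Either way the conclusion follows immediately.
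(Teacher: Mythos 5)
Your proof is correct and follows essentially the same route as the paper, whose own proof simply cites \eqref{termconstanttilde}, Lemma \ref{ncadmoins1} and Lemma \ref{impair}; your explicit treatment of the $q=1$ term via the vanishing tadpole Lemma \ref{tadpole}(i) is a welcome point of care, since Lemma \ref{impair} only covers $2\leq q\leq n$ and the paper leaves that case implicit. One caveat on your closing aside: Lemma \ref{Res-zeta-n-k} at $k=n$ controls $\underset{s=0}{\Res}\,\zeta_{D_A}(s)$, i.e.\ a residue, whereas the corollary concerns the \emph{value} $\zeta_{D_A}(0)$, so that route is not actually an alternative proof of this statement — the passage through Theorem \ref{difference} (or equivalently the $K_1(Y,s)$ expansion behind it) is needed to reach the constant term.
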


\begin{proof}
As in Lemma \ref{double}, we use (\ref{termconstanttilde}) and Lemma \ref{ncadmoins1}
so the result follows from Lemma \ref{impair}.
\end{proof}

\subsection{Proof of the main result}

\begin{proof}[Proof of Theorem \ref{main}.]
$(i)$ By (\ref{asympspectral}) and Proposition \ref{invariance}, we
get
$$
\SS(\DD_{A},f,\Lambda) \, = \, 4f_{2}\, \Lambda^{2}  + f(0) \, \zeta_{D_{A}}(0) + \mathcal{O}
(\Lambda^{-2}),
$$
where $f_{2}= \half\int_{0}^{\infty} f(t) \, dt$. By Lemma \ref{termeconstantn=2},
$\zeta_{D_A}(0) - \zeta_{D}(0) = 0$ and from Proposition \ref{zeta(0)}, $\zeta_{D}(0)=0$, so we get the result.

$(ii)$ Similarly,
$\SS(\DD_{A},f,\Lambda) \, =  8 \pi^2\, f_{4}\, \Lambda^{4} + f(0) \,\zeta_{D_{A}}(0) + \mathcal{O}(\Lambda^{-2})$
with $f_{4}= \half\int_{0}^{\infty} f(t) \, t \, dt$.
Lemma \ref{double} implies that $\zeta_{D_A}(0) - \zeta_D(0)=-c\,\tau(F_{\mu\nu}F^{\mu\nu})$
and Proposition \ref{zeta(0)} yields  the equality $\zeta_{D_A}(0)=-c\,\tau(F_{\mu\nu}F^{\mu\nu})$ and the result.

$(iii)$ is a direct consequence of (\ref{asympspectral}), Propositions \ref{zeta(0)}, \ref{invariance}, and Corollary
\ref{zetaimpair}.
\end{proof}

\subsection{Beyond Diophantine equation}

This section is an attempt to understand what happens if $\Theta$ is `in between' rational numbers and 
``Diophantine numbers''. Consider the simplest case: $\T^2$ with 
\begin{equation*}
\Theta=\theta \left( \begin{array}{cc} 0 & -1 \\ 1 & 0 \end{array} \right),
\end{equation*}

To proceed, we need some results from number theory \cite{Bugeaud}: 

\begin{definition}
Let $f\,:\, \R_{\geq 1} \rightarrow \R_{>0}$ be a continuous
function such that $x \rightarrow x^2\,f(x)$ is
non-increasing. Consider the set
$$
\mathcal{F}(f):=\set{\theta \in \R \, : \, \vert\theta q -p
\vert < q\,f(q) \text{ for infinitely many rational numbers
$\tfrac pq$} }.
$$
The elements of $\mathcal{F}(f)$ are termed $f$-approximable. 
\end{definition}
Note that we cannot expect the above estimate to be valid for all rational 
numbers $\tfrac pq$ since for all irrational numbers $\th$, the set of
fractional values of $(\th q)_{q\geq 1}$ is dense in $[0,1]$.

\begin{theorem}
There exists an uncountable family of real numbers $\theta/(2\pi)$ which are $f$-approximable but not
$cf$-approximable for any $0<c<1$.
\end{theorem}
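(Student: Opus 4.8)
The statement is a classical result in metric number theory, essentially a refinement of the Jarník--Besicovitch framework. The plan is to build the uncountable family directly as a Cantor-type set of continued-fraction expansions, tuned so that the partial quotients grow exactly at the rate dictated by $f$. First I would recast the condition $\theta \in \mathcal{F}(f)$ in terms of the continued fraction expansion $\theta = [a_0;a_1,a_2,\dots]$ and its convergents $p_k/q_k$. Using the standard estimate $\frac{1}{q_k(q_{k+1}+q_k)} \le |\theta q_k - p_k| \le \frac{1}{q_{k+1}}$ together with $q_{k+1} = a_{k+1}q_k + q_{k-1}$, one sees that $\theta$ is $f$-approximable as soon as $|\theta q_k - p_k| < q_k f(q_k)$ infinitely often, which (since $x^2 f(x)$ is non-increasing, hence $f$ decays) is governed by how large $a_{k+1}$ is relative to $q_k$: roughly $a_{k+1} \gtrsim 1/(q_k^2 f(q_k))$ forces $f$-approximability along that index, while $a_{k+1} \ll 1/(q_k^2 f(q_k))$ along a cofinal set of indices prevents it, up to the usual two-sided constants. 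The key quantitative lemma to isolate and prove is: $\theta$ is $f$-approximable if and only if $a_{k+1} q_k^2 f(q_k) \ge 1$ for infinitely many $k$ (with the caveat that the constant hidden here is what must be controlled when comparing $f$ with $cf$).

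\textbf{Construction of the family.} Next I would construct, for each binary sequence $\varepsilon = (\varepsilon_k)_{k\ge 1} \in \{0,1\}^{\mathbb{N}}$, a number $\theta_\varepsilon$ by prescribing its partial quotients inductively. Having chosen $a_1,\dots,a_k$ (hence $q_k$ is determined), I set $a_{k+1}$ to be the integer nearest to $\lambda_k / (q_k^2 f(q_k))$, where $\lambda_k \in (\tfrac12, 2)$ is a scaling factor chosen according to $\varepsilon_k$: take $\lambda_k$ close to $2$ when $\varepsilon_k = 1$ and close to $\tfrac12$ when $\varepsilon_k = 0$. Since $x^2 f(x)$ is non-increasing, $q_k^2 f(q_k)$ is itself controlled in terms of $q_k$, and one checks that the resulting $a_{k+1}$ is a well-defined positive integer tending to infinity, so $\theta_\varepsilon$ is a genuine (automatically irrational) number. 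Distinct $\varepsilon$ give distinct continued fractions and hence distinct $\theta_\varepsilon$, and since $\{0,1\}^{\mathbb{N}}$ is uncountable, the family $\{\theta_\varepsilon / (2\pi) : \varepsilon\}$ (reindexing by the $2\pi$ normalization the theorem asks for — just apply the construction to $2\pi\theta$) is uncountable.

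\textbf{Separating $f$ from $cf$.} Then I would verify the two approximation properties. For $f$-approximability: by the scaling choice, $a_{k+1} q_k^2 f(q_k)$ is close to $\lambda_k \in (\tfrac12,2)$, so for the infinitely many $k$ with $\varepsilon_k = 1$ we get $a_{k+1} q_k^2 f(q_k)$ bounded below by a constant strictly exceeding $1$; feeding this into the quantitative lemma shows $|\theta_\varepsilon q_k - p_k| < q_k f(q_k)$ for those $k$, so $\theta_\varepsilon \in \mathcal{F}(f)$ — provided $\varepsilon$ has infinitely many $1$'s, which I restrict to (still an uncountable set). For \emph{non}-$cf$-approximability with $0<c<1$: the point is that for \emph{any} rational $p/q$ with $|\theta_\varepsilon q - p| < q\, cf(q)$, a standard best-approximation argument forces $p/q$ to be a convergent $p_k/q_k$ (once $q$ is large, since $cf(q) < f(q)$ and convergents realize the local minima of $\|q\theta\|$), and then the inequality reads $a_{k+1} q_k^2 f(q_k) \gtrsim 1/c$. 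But by construction $a_{k+1} q_k^2 f(q_k) < 2 + o(1)$ for all $k$, so once $c$ is fixed the inequality $a_{k+1} q_k^2 f(q_k) \gtrsim 1/c$ can hold only for finitely many $k$; hence only finitely many rationals satisfy the $cf$-estimate and $\theta_\varepsilon \notin \mathcal{F}(cf)$. I expect the main obstacle to be bookkeeping the absolute constants in the equivalence ``$f$-approximable $\iff a_{k+1}q_k^2 f(q_k) \gtrsim 1$'': one must make the lower bound in the non-$cf$ direction and the upper bound from the construction compatible uniformly in $k$, which is exactly where the monotonicity hypothesis on $x \mapsto x^2 f(x)$ is used to tame the fluctuation of $q_{k+1}/q_k$ and of $f(q_k)$ across consecutive convergents.
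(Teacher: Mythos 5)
The paper does not actually prove this statement — it defers to \cite[Exercise 1.5]{Bugeaud} — so your continued-fraction construction has to stand on its own. Its architecture is the right one: encode both properties in the partial quotients via $|q_k\theta-p_k|=(\alpha_{k+1}q_k+q_{k-1})^{-1}$ with $a_{k+1}\le\alpha_{k+1}<a_{k+1}+1$, restrict to convergents using $f(q)\le f(1)q^{-2}$, and get uncountability from a binary choice at each stage. But the quantitative target you aim for is wrong, and this breaks the second half of the argument. You choose $a_{k+1}q_k^2f(q_k)\approx\lambda_k$ with $\lambda_k$ close to $2$ at the active indices, which forces $|q_k\theta_\varepsilon-p_k|\approx\tfrac12\,q_kf(q_k)$ there. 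For any $c\in(\tfrac12,1)$ this gives $|q_k\theta_\varepsilon-p_k|<c\,q_kf(q_k)$ for infinitely many $k$, so $\theta_\varepsilon$ \emph{is} $cf$-approximable — the opposite of what is required. Your closing step ("$a_{k+1}q_k^2f(q_k)<2+o(1)$ for all $k$, so $a_{k+1}q_k^2f(q_k)\gtrsim 1/c$ holds only finitely often") is valid only when $1/c>2$, i.e. $c<\tfrac12$; for $c$ near $1$ the upper bound $2$ excludes nothing.

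What is actually needed is that the ratio $|q_k\theta-p_k|/(q_kf(q_k))$ tend to $1$ from \emph{below} along the active subsequence and stay bounded below by $1-o(1)$ everywhere else. Concretely: at active indices take $a_{k+1}$ to be the \emph{smallest} integer with $a_{k+1}q_k+q_{k-1}\ge 1/(q_kf(q_k))$ (and, say, $a_{k+1}=1$ at inactive ones). Then $\alpha_{k+1}>a_{k+1}$ gives $|q_k\theta-p_k|<q_kf(q_k)$, while $(a_{k+1}+1)q_k+q_{k-1}\le 1/(q_kf(q_k))+2q_k$ gives $|q_k\theta-p_k|\ge q_kf(q_k)\,(1+2q_k^2f(q_k))^{-1}\ge c\,q_kf(q_k)$ once $q_k^2f(q_k)$ is small compared with $1-c$. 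This is where the real content sits: the argument needs $x^2f(x)\to 0$, so that the integer quantization of $a_{k+1}$ (a step of size $q_k^2f(q_k)$ in the ratio) becomes negligible relative to the gap between $f$ and $cf$. That limit holds for the $f(x)=(2\pi x)^{-1}e^{-2x}$ used later in the paper, but it is not implied by the stated hypothesis that $x^2f(x)$ is merely non-increasing (for $f(x)=x^{-2}/2$ the statement is in fact false by Hurwitz's theorem), so it must be invoked explicitly. With that correction, the remaining pieces of your outline — convergent-only solutions, uncountability from binary sequences with infinitely many $1$'s, the $2\pi$ rescaling — do go through.
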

\noindent See \cite[Exercise 1.5]{Bugeaud} for a proof.

Let us choose 
\begin{align*}
f(x)= (2\pi x)^{-1}e^{-2x},\label{fx}
\end{align*}
and fix a constant $c < 1$. Let us pick a $\theta$ which is $f$-approximable, but not $cf$-approximable. 
Consider now $g(t)\vc\Tr\big(aJbJ^{-1}\,e^{-t\,\DD^2}\big)$. It is shown in \cite{GIVas} that, by tuning 
$a,b\in \A_\Th$, it is possible make the difference $g(t)-g(t)_{Dioph}$ (of $g(t)$ and its value if we 
suppose that $\th$ is a Diophantine number) of arbitrary order in $t$.

This shows how subtle can be the computation of spectral action!

\newpage
\section{The non-compact case}
\label{The non-compact case}

\subsection{The matter is not only technical}

When a Riemannian spin manifold $M$ is non-compact, the Dirac operator, which exists as a selfadjoint 
extension when $M$ is (geodesically) complete has no more a compact resolvent: its spectrum is not discrete but is 
$\R$ (\cite[Theorem 7.2.1]{Ginoux} and similar results for hyperbolic spaces \cite[p. 106]{Ginoux}.)
\\
To see what happens, let us consider for instance the flat space $M=\R^d$ and the Hilbert space $\H=L^2(\R^d)$. 
Then the operator $f(x)\,g(-i\nabla)$ is formally given on $\psi,\phi$ in appropriate domains by 
$$
\langle \psi, f(x)\,g(-i\nabla)\,\phi \rangle= \int_{\R^d} \bar f(x)\, \phi(x)\,  (g \,\widehat \psi\,)\,\,\widecheck{ }\,\,(x) \,dx.
$$
For $k\in\Z^d$, let $\chi_k$ be the characteristic function of the unit cube in $\R^d$ with center at $k$ and 
define for $p,q>0$ 
$$
\ell^q\big(L^p(\R^d)\big) \vc \set{f \,\, \vert \,\, \norm{f}_{p,q}\vc\big(\sum_k \norm{f\,\chi_k}_p^q\big)^{1/q} < \infty }
$$
where $\norm{g}_p \vc  \big(\int_{\R^d} \vert g(x)\vert^p\,dx\big)^{1/p}$ is the usual norm of $L^p(\R^d)$.

\begin{theorem}
Birman--Solomjak. 

(i) If $f,g \in \ell^p\big(L^2(\R^d)\big)$ for $1\leq p \leq2$, then $f(x)\,g(-i\nabla)$ is in the Schatten class $\L^p$ and 
$\norm{f(x)\,g(-i\nabla)}_p \leq c_p \norm{f}_{2,p} \,\norm{g}_{2,p}$.

(i) If $f,g$ are non zero, then $f(x)\,g(-i\nabla)\in \L^1(\H)$ if and only $f$ and $g$ are in $\ell^1\big(\H)\big)$.
\end{theorem}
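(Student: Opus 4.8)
The plan is to establish the two endpoint estimates $p=1$ and $p=2$ directly and then fill in the range $1<p<2$ by bilinear complex interpolation; part (ii) is treated at the end, its substantive content being the converse implication. The case $p=2$ is immediate: $\ell^2\big(L^2(\R^d)\big)=L^2(\R^d)$ isometrically, the operator $f(x)g(-i\nabla)$ has the Schwartz kernel $K(x,y)=(2\pi)^{-d/2}f(x)\,\widecheck g(x-y)$, so by Plancherel
\[
\norm{f(x)g(-i\nabla)}_{\L^2(\H)}^2=\iint_{\R^d\times\R^d}|K(x,y)|^2\,dx\,dy=(2\pi)^{-d}\norm{f}_2^2\,\norm{g}_2^2 ,
\]
i.e. $c_2=(2\pi)^{-d/2}$, with no loss.

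The heart is the case $p=1$. I would decompose $f=\sum_{k\in\Z^d}f_k$ with $f_k:=f\chi_k$, and $g=\sum_{m\in\Z^d}g_m$ with $g_m:=g\chi_m$ (this splits the symbol of the multiplier over unit cubes in frequency). Then $f(x)g(-i\nabla)=\sum_{k,m}f_k(x)g_m(-i\nabla)$, and everything reduces to the Birman--Solomjak localization lemma: there is a constant $C_d$ depending only on $d$ such that $\norm{a(x)b(-i\nabla)}_{\L^1(\H)}\le C_d\,\norm{a}_2\,\norm{b}_2$ whenever $a$ is supported in a unit cube and $b$ (the multiplier symbol) is supported in a unit cube. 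Granting it,
\[
\norm{f(x)g(-i\nabla)}_1\le\sum_{k,m}\norm{f_k(x)g_m(-i\nabla)}_1\le C_d\sum_{k,m}\norm{f_k}_2\norm{g_m}_2=C_d\,\norm{f}_{2,1}\norm{g}_{2,1},
\]
the double series converging absolutely, hence in $\L^1(\H)$. To prove the lemma I would first reduce, by translation in $x$ (which moves the position cube while preserving trace and $L^2$ norms) and by multiplying $a$ by a pure phase (which moves the frequency cube), to the model case of cubes centred at the origin; there one writes $a(x)b(-i\nabla)=a(x)\,\Phi\,b(-i\nabla)$ with $\Phi:=\chi_0(x)\,\chi_0(-i\nabla)$ a fixed operator whose real-analytic kernel forces its singular values to decay super-exponentially, so that $\Phi\in\L^1(\H)$, and then one controls $\norm{a(x)b(-i\nabla)}_1$ through the singular value decomposition of $\Phi$, with $\norm{a}_2$ and $\norm{b}_2$ entering via the pairings against the singular vectors. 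This quantitative phase-space localization, together with the eigenvalue asymptotics it rests on, is the main technical obstacle; I would invoke the classical references for it rather than re-deriving it. Note that it is precisely at this point that $p\le 2$ is used: it is what makes the sum over pairs of cubes collapse to a product of $\ell^1$ norms rather than demanding more integrability of $f,g$.

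For $1<p<2$, the bilinear map $B(f,g):=f(x)g(-i\nabla)$ is bounded $\ell^1(L^2)\times\ell^1(L^2)\to\L^1(\H)$ and $\ell^2(L^2)\times\ell^2(L^2)\to\L^2(\H)$; since $\big[\ell^1(L^2),\ell^2(L^2)\big]_\theta=\ell^p(L^2)$ and $\big[\L^1(\H),\L^2(\H)\big]_\theta=\L^p(\H)$ for $\tfrac1p=1-\tfrac\theta2$, $\theta\in[0,1]$, bilinear complex interpolation gives $B\colon\ell^p(L^2)\times\ell^p(L^2)\to\L^p(\H)$ with norm $c_p$, which is (i).

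Finally (ii). The implication ``$f,g\in\ell^1(L^2)\Rightarrow f(x)g(-i\nabla)\in\L^1(\H)$'' is just (i) with $p=1$, so only the converse requires work. Assume $f(x)g(-i\nabla)\in\L^1(\H)$ with $f,g\ne 0$. Choosing $m_0$ with $g_{m_0}\ne 0$ and observing that $f(x)g_{m_0}(-i\nabla)=f(x)g(-i\nabla)\cdot\chi_{m_0}(-i\nabla)$ and $f_k(x)g_{m_0}(-i\nabla)=\chi_k(x)\,f(x)g_{m_0}(-i\nabla)$ remain in $\L^1(\H)$ (multiplication by a bounded projection preserving $\L^1(\H)$), one is reduced to showing $\sum_k\norm{f_k}_2<\infty$, and symmetrically $\sum_m\norm{g_m}_2<\infty$. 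For this I would combine a per-cell lower bound $\norm{f_k(x)g_{m_0}(-i\nabla)}_1\ge c(g,m_0)\,\norm{f_k}_2$ with $c(g,m_0)>0$ independent of $k$ — obtained by testing against a rank-one operator adapted to the cube $Q_k$ in position and to a region where $|g|$ is bounded below in frequency — with a control of $\sum_k\norm{f_k(x)g_{m_0}(-i\nabla)}_1$ in terms of $\norm{f(x)g_{m_0}(-i\nabla)}_1$, using the mutual orthogonality of the ranges $\chi_k(x)\H$ and the uniform trace-class bound on the phase-space cell operators. This lower estimate and the accompanying bookkeeping — which is exactly where the $\L^1$ (rather than merely $\L^2$) hypothesis must genuinely be exploited, and where one again leans on Birman--Solomjak — are the delicate points of (ii); the ``if'' half, by contrast, is a direct corollary of (i).
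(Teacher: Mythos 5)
The paper itself offers no proof of this theorem --- it simply points to \cite{SimonTrace} --- so I can only measure your proposal against the standard Birman--Solomjak argument, and for part (i) your architecture is exactly that: Plancherel at $p=2$, cube decomposition in position and frequency plus a localized trace-norm estimate at $p=1$, bilinear complex interpolation in between. That skeleton is sound. Two caveats on the $p=1$ core, though, even granting that you defer the localization lemma to the literature: the kernel of $\Phi=\chi_0(x)\chi_0(-i\nabla)$ is $(2\pi)^{-d/2}\chi_0(x)\widecheck{\chi_0}(x-y)$, which is \emph{not} real-analytic (it jumps at the boundary of the position cube), and the mechanism you describe --- expanding $\Phi=\sum_n s_n\,|u_n\rangle\langle v_n|$ and ``pairing $\norm{a}_2,\norm{b}_2$ against the singular vectors'' --- does not close as stated, because estimating $\norm{a\,u_n}_2\le\norm{a}_2\norm{u_n}_\infty$ requires uniform $L^\infty$ control of the singular vectors, and the left singular vectors of $\Phi$ (supported in the position cube, not band-limited) only satisfy $\norm{u_n}_\infty\lesssim s_n^{-1}$, which destroys the summability of $\sum_n s_n$. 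The usual proofs factor $a(x)b(-i\nabla)$ through Hilbert--Schmidt pieces with smooth cutoffs instead.

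The genuine gap is in part (ii). Your plan is $\sum_k\norm{f_k}_2\le c^{-1}\sum_k\norm{f_k(x)g_{m_0}(-i\nabla)}_1\le c^{-1}C\,\norm{f(x)g_{m_0}(-i\nabla)}_1$, with the second inequality justified by ``mutual orthogonality of the ranges $\chi_k(x)\H$''. That inequality is false for general trace-class $T$: orthogonality of the ranges of $\chi_k T$ only gives the useless direction $\norm{T}_1\le\sum_k\norm{\chi_kT}_1$, and already for a rank-one $T=|u\rangle\langle v|$ with $u$ spread over $N$ cubes one has $\sum_k\norm{\chi_kT}_1=\big(\sum_k\norm{\chi_ku}_2\big)\norm{v}_2$, which can exceed $\norm{T}_1$ by a factor $\sqrt{N}$. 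Worse, for your specific $T=f(x)g_{m_0}(-i\nabla)$ the localization lemma gives the two-sided bound $\norm{f_k(x)g_{m_0}(-i\nabla)}_1\asymp\norm{f_k}_2$, so the ``control'' you invoke is \emph{equivalent} to the assertion $\norm{f}_{2,1}\lesssim\norm{f(x)g_{m_0}(-i\nabla)}_1$ you are trying to prove: the argument is circular. The correct route to the converse is to bound $\norm{T}_1$ from below directly via $\norm{T}_1\ge\sum_k|\langle\phi_k,T\psi_k\rangle|$ for suitably chosen orthonormal systems $\{\phi_k\}$ (e.g.\ $f_k/\norm{f_k}_2$, orthonormal because of the disjoint supports) and $\{\psi_k\}$ adapted to a frequency region where $g$ is effectively bounded below; constructing the $\psi_k$ so that the diagonal matrix elements are $\gtrsim\norm{f_k}_2$ uniformly in $k$ is exactly the delicate point that your sketch bypasses.
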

For a proof, see \cite[Chapter 4]{SimonTrace}.
\\
This shows that even if $g(x)=e^{-tx^2}$, the heat kernel $e^{-t\Delta}$ is never trace-class since $f=1$ is not in 
$\ell^1\big(L^2(\R^d) \big)$.

Thus, to cover at least the non-compact manifold case, Definition \ref{deftriplet} has to be improved: 

\begin{definition}
\label{deftripletnoncompact}
A non-compact spectral triple $(\A,\H,\DD)$ is the data of an involutive algebra $\A$ with a faithful representation 
$\pi$ on a Hilbert space $\H$, a preferred unitization $\wt \A$ of $\A$ and a selfadjoint operator $\DD$ such that

- $a(\DD-\lambda)^{-1}$ is compact for all $a\in \A$ and $\lambda \notin \Sp \,\DD$. 

- $[\DD,\pi(a)]$ is bounded for any $a \in \wt A$.
\end{definition}
All definitions of regularity, finiteness and orientation have to be modified with $\wt \A$ instead of $\A$, see also 
\cite{CGRS}.

In the first constraint of this definition we recover a certain discreteness which, with $a=1$, is the compact case 
(the algebra can have a unit). This matter is not only technical since now there is a deeper intertwining of the 
choice of the algebra $\A$ and the operator $\DD$ to get a spectral triple. Moreover, a tentative of 
modification of $\DD$ is quite often forbidden by the second constraint.
 
The case of non-compact spin manifold has been considered by Rennie 
\cite{RennieSpin,RennieProj, RennieLocal}. This has been improved in \cite{GGBISV} which studied the Moyal 
plane. Actually, a compactification of this plane is the noncommutative torus!

\subsection{The Moyal product}

Reference: \cite{GGBISV}.

For any finite dimension $k$, let $\Theta$ be a real skewsymmetric $k \x k$ matrix, let $s\.t$ denote the usual scalar
product on Euclidean $\R^k$ and let $\SS(\R^k)$ be the space of
complex Schwartz functions on $\R^k$. One
defines, for $f,h \in \SS(\R^k)$, the corresponding Moyal or twisted product:
\begin{equation}
f \star_\Theta h(x) := (2\pi)^{-k} \iint f(x - \half\Theta u) \, h(x + t) \, e^{-iu\.t} \,d^ku \,d^kt.
\label{eq:moyal-prod-slick}
\end{equation}
In Euclidean field theory, the entries of~$\Theta$ have the dimensions of
an area. Because $\Theta$ is skewsymmetric, complex conjugation
reverses the product: $(f \star_\Theta h)^* = h^* \star_\Theta f^*$.

Assume $\Theta$ to be nondegenerate, that is to say,
$\sigma(s,t) := s\.\Theta t$ to be symplectic. This implies even
dimension, $k = 2N$. We note that $\Theta^{-1}$ is also skewsymmetric;
let $\th > 0$ be defined by $\th^{2N} := \det\Theta$. Then
formula~\eqref{eq:moyal-prod-slick} may be rewritten as
\begin{equation}
f \star_\Theta h(x) = (\pi\th)^{-2N} \iint f(x + s)\, h(x + t)\, e^{-2is\.\Theta^{-1}t} \,d^{2N}s \,d^{2N}t.
\label{eq:moyal-prod-gen}
\end{equation}

This form is very familiar from phase-space quantum mechanics, where $\R^{2N}$ is parametri-zed by $N$
conjugate pairs of position and momentum variables, and the entries
of $\Theta$ have the dimensions of an action; one then selects
$
\Theta = \hbar S
:= \hbar \left(\begin{smallmatrix} 0 & 1_N \\  -1_N & 0 \end{smallmatrix}\right)
$
Indeed, the product $\star$ (or rather, its commutator) was introduced
in that context by Moyal~\cite{Moyal}, using a series development in
powers of~$\hbar$ whose first nontrivial term gives the Poisson
bracket; later, it was rewritten in the above integral form. These are
actually oscillatory integrals, of which Moyal's series development,
\begin{equation}
f\star_\hbar g(x)
= \sum_{\a\in\N^{2N}} \bigl(\tfrac{i\hbar}{2}\bigr)^{|\a|}
\tfrac{1}{\a!}\, \pd{f}{x^\a}(x) \, \pd{g}{(Sx)^\a}(x),
\label{eq:moyal-asymp}
\end{equation}
is an asymptotic expansion. The first integral form~\eqref{eq:moyal-prod-slick}
of the Moyal product was exploited by Rieffel in a remarkable
monograph~\cite{RieffelDefQ}, who made it the starting point for a
more general deformation theory of $C^*$-algebras.

With the choice $\Theta = \th S$ made, the Moyal product can also
be written
\begin{equation}
f \mop g(x) := (\pi\th)^{-2N} \iint f(y) g(z)\,
e^{\tfrac{2i}{\th}(x-y)\,\.\,S(x-z)} \,d^{2N}y \,d^{2N}z.
\label{eq:moyal-prod}
\end{equation}

Of course, our definitions make sense only under certain hypotheses on $f$ and $g$ \cite{Phobos,Deimos}.

\begin{lem} {\rm\cite{Phobos}}
\label{lm:propriete}
Let $f,g\in \SS(\R^{2N})$. Then
\begin{enumerate}
\item[(i)]
$f \mop g \in \SS(\R^{2N})$.
\item[(ii)]
$\mop$ is a bilinear associative product on $\SS(\R^{2N})$. Moreover,
complex conjugation of functions $f \mapsto f^*$ is an
involution for~$\mop$.
\item[(iii)]
Let $j = 1,2,\dots,2N$. The Leibniz rule is satisfied:
\begin{equation}
\pd{}{x_j}(f\mop g) = \pd{f}{x_j} \mop g + f \mop \pd{g}{x_j}.
\label{eq:Leibniz}
\end{equation}
\item[(iv)]
Pointwise multiplication by any coordinate $x_j$ obeys
\begin{equation}
x_j (f \mop g)
= f \mop (x_j g) + \tfrac{i\,\th}{2} \pd{f}{(Sx)_j} \mop g
= (x_j f) \mop g - \tfrac{i\,\th}{2} f \mop \pd{g}{(Sx)_j}.
\label{eq:mult-rule}
\end{equation}
\item[(v)]
The product has the tracial property:
$$
\<f,g> := \tfrac{1}{(\pi\th)^N} \int f \mop g(x) \,d^{2N}x
= \tfrac{1}{(\pi\th)^N} \int g \mop f(x) \,d^{2N}x
= \tfrac{1}{(\pi\th)^N} \int f(x) \,g(x) \,d^{2N}x.
$$
\item[(vi)]
Let $L^\th_f \equiv L^\th(f)$ be the left multiplication
$g \mapsto f \mop g$. Then
$\lim_{\th\downarrow0}{L^\th_f\,g}(x) = f(x)\,g(x)$, for
$x \in \R^{2N}$.
\end{enumerate}
\end{lem}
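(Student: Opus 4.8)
The plan is to derive all six items from the single integral representation \eqref{eq:moyal-prod} (equivalently \eqref{eq:moyal-prod-gen}), treating it throughout as an \emph{absolutely} convergent integral obtained from the oscillatory one by integration by parts. The only genuinely analytic assertion is (i): once one knows that $\mop$ sends $\SS(\R^{2N})\times\SS(\R^{2N})$ into $\SS(\R^{2N})$ continuously, items (ii)--(vi) all follow from elementary manipulations — changes of variable, Fubini, and differentiation under the integral sign — each of which is legitimate precisely because the integrand is Schwartz in the remaining variables. So I would first establish (i) carefully and then dispatch the rest in one stroke.

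For (i), I would work from \eqref{eq:moyal-prod-gen}, where the $x$-dependence sits only in $f(x+s)\,g(x+t)$, and integrate by parts against the oscillatory factor $e^{-2is\.\Th^{-1}t}$: each pair of integrations by parts (in $s$ using the $t$-gradient of the phase, and conversely, using that $\Th^{-1}$ is invertible) gains two powers of polynomial decay, so the double integral converges absolutely and may be differentiated under the integral sign arbitrarily often. Combining this with the Schwartz decay of $f,g$ and the elementary inequality $(1+|x|)^m\le C_m(1+|x+s|)^m(1+|s|)^m$ yields, for every Schwartz seminorm $\|\cdot\|_{\SS,m}$, a bound $\|f\mop g\|_{\SS,m}\le C_{m,\th}\,\|f\|_{\SS,m'}\,\|g\|_{\SS,m'}$ with $m'$ depending only on $m$ and $N$; this is the continuity statement quoted from \cite{Phobos}. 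Associativity in (ii) is then a computation: substitute \eqref{eq:moyal-prod} twice, apply Fubini (justified by (i)), and collapse the resulting Gaussian-type integral in the intermediate variables, checking that the composite phase is again of Moyal form. The involution property $(f\mop g)^*=g^*\mop f^*$ comes from conjugating \eqref{eq:moyal-prod}, swapping the dummy variables $y\leftrightarrow z$, and using $S^T=-S$; bilinearity is immediate.

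The remaining items are short. For (iii) I would use \eqref{eq:moyal-prod-gen}: $\partial/\partial x_j$ passes under the integral and falls on $f(x+s)\,g(x+t)$, giving \eqref{eq:Leibniz} directly. For (iv), again from \eqref{eq:moyal-prod-gen}, write $x_j=(x_j+s_j)-s_j$; the first term gives $(x_jf)\mop g$, and in the second the factor $s_j$ multiplying $e^{-2is\.\Th^{-1}t}$ is turned into a $t$-derivative of the exponential, which after integration by parts in $t$ becomes a derivative of $g$ along the direction $\Th^{-1}$-dual to $e_j$, i.e.\ along $(Sx)_j$ for $\Th=\th S$; the symmetric computation with $x_j=(x_j+t_j)-t_j$ gives the other equality in \eqref{eq:mult-rule}. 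For the tracial property (v), integrate \eqref{eq:moyal-prod} over $x$ and perform the $x$-integration first: since $S$ is skewsymmetric the quadratic part of the phase drops out, the integral of $e^{\frac{2i}{\th}(x-y)\.S(x-z)}$ collapses (with the right normalization) to a constant times $\delta(y-z)$, leaving $\frac{1}{(\pi\th)^N}\int f(x)\,g(x)\,d^{2N}x$, which is manifestly symmetric in $f,g$. Finally (vi) follows either from the asymptotic expansion \eqref{eq:moyal-asymp}, whose leading term is $fg$, or directly from \eqref{eq:moyal-prod-gen} by rescaling $s\mapsto\sqrt{\th}\,s$, $t\mapsto\sqrt{\th}\,t$: the oscillation then localizes at $s=t=0$ as $\th\downarrow0$, and a dominated-convergence argument gives $\lim_{\th\downarrow0}L^\th_f g(x)=f(x)\,g(x)$ pointwise.

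The main obstacle is entirely in (i): making the oscillatory integral rigorous and extracting Schwartz-seminorm estimates with the structure above, so that every later use of Fubini and of differentiation under the integral sign in (ii)--(vi) is automatically justified. Once that continuity lemma is in hand, the rest is bookkeeping with the symplectic form $\sigma(s,t)=s\.\Th t$ and its inverse.
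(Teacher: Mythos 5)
The paper does not actually prove this lemma: it is quoted from \cite{Phobos}, and the only proof content supplied in the text is the remark immediately following the statement that (vi) is a consequence of the distributional identity $\lim_{\eps\downarrow0}\eps^{-k}e^{ia\cdot b/\eps}=(2\pi)^k\delta(a)\delta(b)$. So there is no in-paper argument to compare yours against; what you have written is essentially the standard proof from the cited reference, and its overall architecture is sound. Front-loading all the analysis into (i) --- non-stationary phase in \eqref{eq:moyal-prod-gen}, trading powers of $s,t$ for derivatives of $f,g$ via the invertibility of $\Theta^{-1}$, combined with the Peetre inequality to recover decay in $x$ --- is exactly how the Schwartz-seminorm continuity estimate is obtained, and once it is in place (ii)--(iv) are indeed the bookkeeping you describe (your two decompositions $x_j=(x_j+s_j)-s_j$ and $x_j=(x_j+t_j)-t_j$ produce the two equalities of \eqref{eq:mult-rule}, just in the opposite order from how they are printed). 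Two caveats. First, in (v) your blanket justification ``the integrand is Schwartz in the remaining variables'' does not apply: after you pull the $x$-integration inside, the kernel $e^{\frac{2i}{\th}(x-y)\cdot S(x-z)}$ is not absolutely integrable in $x$, so the collapse of $\int e^{\frac{2i}{\th}x\cdot S(y-z)}\,d^{2N}x$ to $(\pi\th)^{2N}\delta(y-z)$ must be taken in the oscillatory/distributional sense (Gaussian regularization, or compute $\int f\mop g$ via Plancherel); this is routine but is a different mechanism from plain Fubini. Second, for (vi) the dominated-convergence step after rescaling $s\mapsto\sqrt{\th}\,s$, $t\mapsto\sqrt{\th}\,t$ runs into the same non-absolute convergence of $\iint e^{-2is\cdot S^{-1}t}\,ds\,dt$; the clean formulation is precisely the distributional identity the paper quotes, which your rescaling argument is reproducing in disguise.
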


Property~(vi) is a consequence of the distributional identity
$$\lim_{\eps\downarrow0}\eps^{-k} e^{ia\.b/\eps} =
(2\pi)^k \delta(a) \delta(b),
$$
for $a,b\in\R^k$; convergence takes
place in the standard topology~\cite{Schwartz} of $\SS(\R^{2N})$. To
simplify notation, we put $\SS := \SS(\R^{2N})$ and let
$\SS' := \SS'(\R^{2N})$ be the dual space of tempered distributions.
In view of~(vi), we may denote by $L^0_f$ the pointwise product
by~$f$.

\begin{thm} {\rm\cite{Phobos}}
$\A_\th := (\SS,\mop)$ is a nonunital associative, involutive
Fr\'echet algebra with a jointly continuous product and a
distinguished faithful trace.
\end{thm}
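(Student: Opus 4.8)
The plan is to verify the four asserted properties of $\A_\th = (\SS,\mop)$ one at a time, treating each as a consequence of Lemma~\ref{lm:propriete}, together with standard Schwartz-space topology. First I would recall that $\SS = \SS(\R^{2N})$ carries its usual Fr\'echet topology generated by the seminorms $p_{\a,\beta}(f) = \sup_x |x^\a \partial^\beta f(x)|$, and that this topology is complete; so ``Fr\'echet'' needs no separate argument once we know $\mop$ is an internal operation. Associativity and the existence of the involution $f \mapsto f^*$ are exactly parts (ii) of the lemma, and non-unitality is immediate since a unit would have to be the constant function~$1 \notin \SS$ (one can make this precise: if $e \mop f = f$ for all $f \in \SS$, testing against a Gaussian and using part~(vi) or the explicit oscillatory-integral formula \eqref{eq:moyal-prod} forces $e \equiv 1$).

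The substantive points are joint continuity of the product and faithfulness of the trace. For joint continuity, I would show that for each pair of seminorms there are finitely many seminorms $p_i, q_j$ and a constant $C$ with $p(f\mop g) \le C \big(\sum_i p_i(f)\big)\big(\sum_j q_j(g)\big)$; the cleanest route is to use the Leibniz rule \eqref{eq:Leibniz} and the multiplication rule \eqref{eq:mult-rule} to reduce estimates on $x^\a \partial^\beta(f\mop g)$ to estimates on $x^{\a'}\partial^{\beta'} f \mop x^{\a''}\partial^{\beta''} g$, and then bound a single Moyal product $h \mop k$ in, say, the $L^\infty$ norm by $\|h\|_{L^1}\|k\|_{L^1}$-type quantities controlled by Schwartz seminorms, exploiting the explicit kernel $e^{\frac{2i}{\th}(x-y)\.S(x-z)}$ in \eqref{eq:moyal-prod}. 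I would then cite \cite{Phobos} for the detailed bookkeeping rather than carry it out, since the excerpt permits assuming that reference's results. For the trace, part~(v) of the lemma gives $\langle f, g\rangle = \frac{1}{(\pi\th)^N}\int f(x)g(x)\,d^{2N}x$, which is manifestly a trace (it equals $\langle g, f\rangle$), and it is \emph{faithful} on the $*$-algebra because $\langle f^* \mop f, 1\rangle$ — interpreted via the tracial property as $\frac{1}{(\pi\th)^N}\int (f^*\mop f)(x)\,d^{2N}x = \frac{1}{(\pi\th)^N}\int |f(x)|^2\,d^{2N}x$ using part~(v) again with $g = f^*\mop f$ reduced to pointwise integrands — vanishes only when $f = 0$.

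The main obstacle I expect is the joint-continuity estimate: one must produce genuinely uniform seminorm bounds on an oscillatory integral, and the naive triangle inequality diverges because the integrand is not absolutely integrable without first integrating by parts against the phase $e^{\frac{2i}{\th}(x-y)\.S(x-z)}$. The standard device is to insert the identity $(1 + |S(x-z)|^2)^{-M}(1 - \Delta_y)^M e^{\frac{2i}{\th}(x-y)\.S(x-z)} = (\tfrac{2}{\th})^{2M}\,(\text{const})\,e^{(\cdots)}$ up to constants — i.e. repeated non-stationary-phase integration by parts in the $y$ variable — to gain arbitrary polynomial decay, and symmetrically in~$z$; combined with the rapid decay of $f$ and $g$ this yields absolute convergence with the required seminorm control. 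I would organize this as a single lemma (``$\mop$ extends to a jointly continuous bilinear map $\SS\times\SS\to\SS$ and all seminorm estimates are explicit''), prove the one-product estimate carefully, and then deduce the general case by the Leibniz/multiplication reductions above, after which the theorem follows by assembling parts (i)–(vi) of Lemma~\ref{lm:propriete}.
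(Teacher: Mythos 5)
Your proposal is correct, and it matches the paper's treatment: the paper gives no proof of this theorem beyond citing \cite{Phobos}, the content being exactly the assembly of parts (i)--(vi) of Lemma~\ref{lm:propriete} that you carry out, with joint continuity as the only substantive analytic point and faithfulness following from the tracial identity $\tau(f^*\mop f)=(\pi\th)^{-N}\|f\|_2^2$ as you say. One small remark: for the form \eqref{eq:moyal-prod} the integrand \emph{is} absolutely integrable for $f,g\in\SS$ (the phase has modulus one, so $|f\mop g(x)|\le(\pi\th)^{-2N}\|f\|_1\|g\|_1$ trivially), and the real work in the seminorm estimates is controlling the polynomial weights $x^\a$ via \eqref{eq:mult-rule} and the derivatives via \eqref{eq:Leibniz}, rather than a non-stationary-phase argument.
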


\begin{defn}
\label{df:basis-fns}
The algebra $\A_\th$ has a natural basis of eigenvectors $f_{mn}$
of the harmonic oscillator, indexed by $m,n \in \N^N$. If
$$
H_l := \half(x_l^2 + x_{l+N}^2) \sepword{for $l=1,\dots,N$\quad and}
H := H_1 + H_2 +\cdots+ H_N,
$$
then the~$f_{mn}$ diagonalize these harmonic oscillator Hamiltonians:
\begin{align}
H_l \mop f_{mn} &= \th(m_l+\half) f_{mn},
\nonumber \\
f_{mn} \mop H_l &= \th(n_l+\half) f_{mn}.
\label{eq:moyal-haml}
\end{align}
They may be defined by
\begin{equation}
f_{mn}
:= \tfrac{1}{\sqrt{\th^{|m|+|n|}\,m!n!}}\,(a^*)^m \mop f_{00} \mop a^n,
\label{eq:basis}
\end{equation}
where $f_{00}$ is the Gaussian function $f_{00}(x) := 2^N e^{-2H/\th}$,
and the annihilation and creation functions respectively are
\begin{equation}
a_l := \tfrac{1}{\sqrt{2}} (x_l + ix_{l+N})  \sepword{and}
a_l^* := \tfrac{1}{\sqrt{2}} (x_l - ix_{l+N}).
\label{eq:crea-annl}
\end{equation}
One finds that $a^n := a_1^{n_1} \dots a_N^{n_N} =
a_1^{\mop n_1} \mop\cdots\mop a_N^{\mop n_N}$.
\end{defn}

\begin{prop} {\rm \cite[p.~877]{Phobos}}
\label{pr:factorization}
The algebra $(\SS,\mop)$ has the (nonunique) factorization
property: for all $h \in \SS$ there exist $f,g \in \SS$ such that
$h = f \mop g$.
\end{prop}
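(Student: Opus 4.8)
\textbf{Proof plan for Proposition \ref{pr:factorization}.}

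The plan is to exploit the harmonic-oscillator basis $\{f_{mn}\}_{m,n\in\N^N}$ of Definition \ref{df:basis-fns} together with its multiplication rule, which mimics matrix units: from \eqref{eq:basis} and the relations \eqref{eq:moyal-haml} one deduces $f_{mn}\mop f_{kl}=\delta_{nk}\,f_{ml}$. Thus, writing an arbitrary $h\in\SS$ in this basis as $h=\sum_{m,n}c_{mn}(h)\,f_{mn}$ (the expansion converges in the Fréchet topology of $\SS$ because, as shown in \cite{Phobos}, the coefficient map $h\mapsto(c_{mn}(h))$ is a topological isomorphism onto the space of rapidly decreasing double sequences), the factorization problem becomes a statement about factoring such a rapidly decreasing ``matrix''. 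First I would record precisely the growth/decay estimates on $c_{mn}(h)$ equivalent to $h\in\SS$: for every $k$, $\sup_{m,n}(1+|m|+|n|)^k|c_{mn}(h)|<\infty$.

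Next I would produce the factorization at the level of sequences. The idea is to choose $g$ with $c_{mn}(g)=\sqrt{|c_{mn}(h)|}\,u_{mn}$ for a suitable unimodular phase correction, and $f$ with matching ``square-root'' coefficients, arranged so that the $\mop$-product $f\mop g$ reproduces $h$. More concretely, because $f_{mn}\mop f_{kl}=\delta_{nk}f_{ml}$, one has $(f\mop g)_{ml}=\sum_n c_{mn}(f)\,c_{nl}(g)$, so I need a decomposition $c_{ml}(h)=\sum_n c_{mn}(f)\,c_{nl}(g)$ with both factor-sequences rapidly decreasing. A clean way: fix a reference index, say $0:=(0,\dots,0)\in\N^N$, and set $c_{mn}(f):=\delta_{n0}\,a_m$, $c_{nl}(g):=\delta_{n0}\,b_l$ would give a rank-one product $a_m b_l$, which is not general enough; instead I would use the ``diagonal square root'' $c_{mn}(f):=|c_{mn}(h)|^{1/2}\overline{\epsilon_{mn}}$ and $c_{mn}(g):=|c_{mn}(h)|^{1/2}\epsilon_{mn}$ where $c_{mn}(h)=|c_{mn}(h)|\epsilon_{mn}^2$; here one checks $(f\mop g)_{ml}=\sum_n|c_{mn}(h)|^{1/2}|c_{nl}(h)|^{1/2}\overline{\epsilon_{mn}}\epsilon_{nl}$, which is \emph{not} equal to $c_{ml}(h)$ in general, so this naive recipe fails and the genuine argument must be more careful. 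The correct route is the one used in \cite{Phobos}: reduce to $N=1$ by a tensor-factorization over the $N$ oscillator directions, and in one dimension use that $\A_\th\cong$ a matrix-type algebra of rapidly decreasing sequences on which one can factor by a direct, explicit construction (or invoke a Cohen-type factorization argument adapted to this nonunital Fréchet algebra with its approximate units built from $\sum_{|m|\le R}f_{mm}$).

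The step I expect to be the main obstacle is precisely this factorization at the sequence level while keeping Schwartz decay: one must show that the ``square root'' of a rapidly decreasing matrix can be taken to remain rapidly decreasing, and that the resulting series for $f$ and $g$ converge in $\SS$. The cleanest implementation is a Cohen--Hewitt factorization theorem for Fréchet algebras with a bounded (multiplier) approximate identity: the truncated projections $e_R:=\sum_{|m|_1\le R}f_{mm}$ act as an approximate unit, $e_R\mop h\to h$ and $h\mop e_R\to h$ in $\SS$ by the decay estimates on $c_{mn}(h)$, and they are uniformly bounded as left/right multipliers; one then runs the standard iterative Cohen construction, checking at each stage that the Fréchet seminorms stay controlled, to obtain $f,g\in\SS$ with $h=f\mop g$. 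Non-uniqueness is then manifest, since one may insert any invertible multiplier $v$ with $v^{-1}$ also a multiplier, replacing $(f,g)$ by $(f\mop v, v^{-1}\mop g)$; concretely, translating $f$ and $g$ oppositely in the oscillator grading already changes the factors without changing the product. I would present the approximate-identity/Cohen argument as the backbone and relegate the seminorm bookkeeping to the estimates on $(c_{mn})$ established in the first step.
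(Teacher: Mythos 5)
Your setup is the right one: the oscillator basis $\{f_{mn}\}$ with the matrix-unit rule $f_{mn}\mop f_{kl}=\delta_{nk}f_{ml}$, the identification of $\SS$ with rapidly decreasing ``matrices'' $(c_{mn})$, and the correct diagnosis that an entrywise square root does not respect the matrix product. But at the decisive step the proposal does not close: you either defer to an unspecified ``direct, explicit construction'' or to a Cohen--Hewitt argument, and the latter is not justified here. Cohen--Hewitt factorization requires a \emph{bounded} approximate identity in a Banach algebra (or, in the Fr\'echet/module versions, an approximate unit that is uniformly bounded with respect to the defining seminorms). The truncations $e_R=\sum_{|m|_1\le R}f_{mm}$ do satisfy $e_R\mop h\to h$, but they are not bounded in the Fr\'echet topology of $\SS$: their Schwartz seminorms (equivalently, the weighted sup-norms of their coefficient matrices) grow without bound in $R$. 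So the ``backbone'' of your implementation rests on a theorem whose hypotheses fail in this setting, and the iterative Cohen construction cannot be run as stated.

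The idea you are missing is much more elementary: make \emph{one} factor diagonal. Write $h=\sum_{m,n}c_{mn}f_{mn}$ and set $g:=\sum_n r_nf_{nn}$ with $r_n>0$; then $f\mop g$ merely rescales the columns of the coefficient matrix of $f$, so $h=f\mop g$ with $f:=\sum_{m,n}(c_{mn}/r_n)f_{mn}$, and the whole proposition reduces to a scalar division lemma: for every rapidly decreasing double sequence $(c_{mn})$ there exists a positive rapidly decreasing sequence $(r_n)$ such that $(c_{mn}/r_n)$ is still rapidly decreasing. (This is a standard property of the sequence space $\mathbf{s}$; note that $r_n$ must be chosen \emph{adapted to} $h$ --- a fixed choice such as $r_n=e^{-|n|}$ fails for general $h\in\SS$.) This is exactly the construction behind the cited result of \cite{Phobos}, it avoids any tensor reduction to $N=1$, and it makes the non-uniqueness evident, e.g.\ by replacing $(r_n)$ with any other admissible sequence.
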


\begin{lem} {\rm\cite{Phobos,Deimos}}
\label{lm:propriete-bis}
Let $f,g\in L^2(\R^{2N})$. Then
\begin{enumerate}
\item[(i)]
For $\th \neq 0$, $f \mop g$ lies in $L^2(\R^{2N})$. Moreover,
$f \mop g$ is uniformly continuous.
\item[(ii)]
$\mop$ is a bilinear associative product on $L^2(\R^{2N})$. The
complex conjugation of functions $f \mapsto f^*$ is an involution
for~$\mop$.
\item[(iii)]
The linear functional $f \mapsto \int f(x)\,dx$ on $\SS$ extends to
$\I_{00}(\R^{2N}) := L^2(\R^{2N}) \mop L^2(\R^{2N})$, and the product
has the tracial property:
$$
\<f,g> := (\pi\th)^{-N} \! \int f \mop g(x) \,d^{2N}x
= (\pi\th)^{-N} \! \int g \mop f(x) \,d^{2N}x
= (\pi\th)^{-N} \! \int f(x) \,g(x) \,d^{2N}x.
$$
\item[(iv)]
$\lim_{\th\downarrow0}{L^\th_f\,g}(x) = f(x)\,g(x)$ almost
everywhere on $\R^{2N}$.
\end{enumerate}
\end{lem}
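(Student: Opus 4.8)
The plan is to pass, for fixed $\th\neq0$, to the Weyl--Wigner picture in which $\big(L^2(\R^{2N}),\mop\big)$ becomes the algebra of Hilbert--Schmidt operators on $L^2(\R^N)$, prove the statements there, and transport them back. First I would record that the matrix basis $\{f_{mn}\}_{m,n\in\N^N}$ of Definition \ref{df:basis-fns} is an orthogonal basis of the Hilbert space $L^2(\R^{2N})$ with $\norm{f_{mn}}_2^2$ a fixed power of $\pi\th$ and with the product rule $f_{mn}\mop f_{kl}=\delta_{nk}\,f_{ml}$, which is read off from \eqref{eq:moyal-haml}, \eqref{eq:basis} and Lemma \ref{lm:propriete}. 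Writing $f=\sum_{mn}c_{mn}f_{mn}$, $g=\sum_{kl}d_{kl}f_{kl}$ for $f,g\in\SS$ then gives $f\mop g=\sum_{m,l}\big(\sum_{n}c_{mn}d_{nl}\big)f_{ml}$: the coefficient array of $f\mop g$ is the \emph{matrix product} of those of $f$ and of $g$. Since $\norm{AB}_2\le\norm{A}_2\,\norm{B}_2$ for matrices (indeed $AB$ is trace class, $\norm{AB}_1\le\norm{A}_2\norm{B}_2$), this yields on the dense subspace $\SS$ the basic estimate $\norm{f\mop g}_2\le C_\th\,\norm{f}_2\,\norm{g}_2$, equivalently the map $W:f_{mn}\mapsto|m\rangle\langle n|$ extends to a scaled unitary $L^2(\R^{2N})\to\L^2\big(L^2(\R^N)\big)$ intertwining $\mop$ with operator composition. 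Establishing this boundedness rigorously, and justifying the expansion for arbitrary $L^2$ elements rather than just Schwartz ones, is the main obstacle; parts (ii)--(iii) are then pure transport, while (iv) needs a separate analytic argument.

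Granting the isometry $W$, part (i) is immediate: the bounded bilinear map $\mop$ extends from $\SS\times\SS$ to $L^2\times L^2$ with values in $L^2$, and the extension still corresponds under $W$ to composition of Hilbert--Schmidt (in fact trace-class) operators, so $f\mop g\in L^2(\R^{2N})$ jointly continuously. For the remaining claim of (i) I would combine two facts. From the integral formula \eqref{eq:moyal-prod} one reads $f\mop g(x)=(\pi\th)^{-N}\langle f^{*},R_xg\rangle_{L^2}$ with $R_x$ a unitary (a modulation composed with reflection about $x$), whence the bound $\norm{f\mop g}_\infty\le(\pi\th)^{-N}\norm{f}_2\,\norm{g}_2$. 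On $\SS$ the product is again Schwartz by Lemma \ref{lm:propriete}(i), hence uniformly continuous. Approximating $f,g$ in $L^2$ by $f_k,g_k\in\SS$, the $L^\infty$ bound gives $f_k\mop g_k\to f\mop g$ uniformly on $\R^{2N}$, and a uniform limit of uniformly continuous functions is uniformly continuous.

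Parts (ii) and (iii) are then formal. Bilinearity is clear; associativity and the identity $(f\mop g)^{*}=g^{*}\mop f^{*}$ hold on $\SS$ by Lemma \ref{lm:propriete}(ii), and each side is continuous in the $L^2$ topology by (i), so they persist on $L^2(\R^{2N})$ --- alternatively they are the associativity of composition and $(ST)^{*}=T^{*}S^{*}$ under $W$. For (iii), for $h=f\mop g\in\I_{00}(\R^{2N})$ I would \emph{define} the extended functional by $h\mapsto(\pi\th)^{-N}\int f(x)g(x)\,d^{2N}x$; this is well defined because under $W$ it equals a fixed constant times $\Tr\big(W(f)W(g)\big)=\Tr\big(W(h)\big)$, which depends only on $h$ (here $W(f)W(g)$ is trace class). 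The two equalities in the tracial identity are then the cyclicity $\Tr(ST)=\Tr(TS)$ and the Schwartz identity $\int f\mop g=\int fg$ of Lemma \ref{lm:propriete}(v), the latter extended from the dense set $\SS\times\SS$ since both $(f,g)\mapsto\int f\mop g\,d^{2N}x$ and $(f,g)\mapsto\int fg\,d^{2N}x$ are bounded by $C\,\norm{f}_2\norm{g}_2$.

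Finally, for (iv) I expect a genuinely analytic, rather than algebraic, argument. After the scaling $y=x+\sqrt\th\,u$, $z=x+\sqrt\th\,v$ in \eqref{eq:moyal-prod} one gets $L^\th_fg(x)=\pi^{-2N}\iint f(x+\sqrt\th\,u)\,g(x+\sqrt\th\,v)\,e^{-2i\,u\cdot Sv}\,d^{2N}u\,d^{2N}v$ as an oscillatory integral, and the point is to let $\th\downarrow0$ inside it, using $f(x+\sqrt\th\,u)\to f(x)$ for a.e.\ $x$ (Lebesgue points) together with $\iint e^{-2i\,u\cdot Sv}\,du\,dv=\pi^{2N}$ in the oscillatory sense. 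For $f,g\in\SS$ this is precisely Lemma \ref{lm:propriete}(vi); for $f,g\in L^2$ I would import the oscillatory-integral estimates of \cite{Phobos,Deimos} to control the passage to the limit. This step, and the $L^2$-boundedness of $\mop$ underlying (i)--(iii), are the two places where one cannot avoid the explicit integral representations; everything else is a density argument plus standard Hilbert--Schmidt operator theory.
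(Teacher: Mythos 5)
The paper offers no proof of this lemma --- it is quoted from \cite{Phobos,Deimos} --- so there is nothing in the text to check you against line by line. Your route for (i)--(iii) is nonetheless exactly the technique the paper itself deploys two lemmas later to prove Lemma \ref{lm:norm-HS}: expand in the orthogonal matrix basis $\{f_{mn}\}$, use $f_{mn}\mop f_{kl}=\delta_{nk}\,f_{ml}$ to turn $\mop$ into matrix multiplication of coefficient arrays, and get $\norm{f\mop g}_2\leq(2\pi\th)^{-N/2}\norm{f}_2\,\norm{g}_2$ from Cauchy--Schwarz; the pointwise bound $\norm{f\mop g}_\infty\leq(\pi\th)^{-N}\norm{f}_2\,\norm{g}_2$ coming from \eqref{eq:moyal-prod} then gives uniform continuity by uniform approximation from $\SS$, and (ii)--(iii) follow by density, with well-definedness of the trace on $\I_{00}$ being cyclicity of $\Tr$ on trace-class operators, as you say. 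Two small corrections. First, the functional being extended in (iii) is $f\mapsto\int f$, so its value on $h=f\mop g$ should be $\int f(x)g(x)\,d^{2N}x$ with no extra factor $(\pi\th)^{-N}$; that factor belongs to the normalized pairing $\langle f,g\rangle$ in the displayed identity, not to the extension itself. Second, since you define $\mop$ on $L^2\times L^2$ by continuous extension from $\SS\times\SS$ but then use the integral representation \eqref{eq:moyal-prod} for the sup-norm bound and for (iv), you should note explicitly that the two agree on all of $L^2\times L^2$; your own observation that $f\mop g(x)$ is a jointly $L^2$-continuous pairing supplies this in one line.

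The one genuine soft spot is (iv). For Schwartz data this is Lemma \ref{lm:propriete}(vi), but for $f,g\in L^2$ the rescaled double integral is not absolutely convergent, and ``let $\th\downarrow0$ inside the oscillatory integral at Lebesgue points'' is a heuristic rather than an argument: to upgrade convergence on the dense subspace $\SS$ to almost-everywhere convergence for $L^2$ data one needs a quantitative estimate uniform in $\th$ (a maximal-function or dominated-convergence bound on $L^\th_f g - fg$), and that is precisely the content you defer to \cite{Phobos,Deimos}. Since the paper cites the whole lemma without proof, this deferral is defensible, but as written your part (iv) is a sketch, not a proof.
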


\begin{defn}
Let $A_\th := \set{T \in \SS' : T \mop g \in L^2(\R^{2N})
\text{ for all } g \in L^2(\R^{2N})}$, provided with the operator norm
$\|L^\th(T)\|_{\mathrm{op}} :=
\sup\set{\|T \mop g\|_2/\|g\|_2 : 0 \neq g \in L^2(\R^{2N})}$.

Obviously $\A_\th = \SS \hookto A_\th$. But $\A_\th$ is not dense in
$A_\th$.
\end{defn}

Note that $\G_{00} \subset A_\th$. This is clear from the following
estimate.

\begin{lem} {\rm\cite{Phobos}}
\label{lm:norm-HS}
If $f,g \in L^2(\R^{2N})$, then $f \mop g \in L^2(\R^{2N})$ and
$\|L^\th_f\|_{\mathrm{op}} \leq (2\pi\th)^{-N/2} \|f\|_2$.
\end{lem}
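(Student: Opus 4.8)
The statement to prove is Lemma \ref{lm:norm-HS}: for $f,g\in L^2(\R^{2N})$ one has $f\mop g\in L^2(\R^{2N})$ with the explicit bound $\|L^\th_f\|_{\mathrm{op}}\le(2\pi\th)^{-N/2}\|f\|_2$. The plan is to realize the left Moyal multiplication operator $L^\th_f$ as an integral operator against an $L^2$ kernel on $\R^{2N}\times\R^{2N}$, and then apply the Hilbert--Schmidt inequality $\|L^\th_f\|_{\mathrm{op}}\le\|L^\th_f\|_{\mathrm{HS}}$, computing the Hilbert--Schmidt norm by a change of variables that turns it into $\|f\|_2$ up to the stated constant. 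Since the Hilbert--Schmidt norm dominates the operator norm, this simultaneously proves $f\mop g\in L^2$ (for any $g\in L^2$) and the bound.

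\textbf{Step 1: identify the kernel.} Starting from the integral form \eqref{eq:moyal-prod} of the Moyal product, write $f\mop g(x)=\int K_f(x,z)\,g(z)\,d^{2N}z$ where
\begin{align*}
K_f(x,z)=(\pi\th)^{-2N}\int f(y)\,e^{\frac{2i}{\th}(x-y)\,.\,S(x-z)}\,d^{2N}y.
\end{align*}
The inner $y$-integral is, up to normalization, a Fourier transform of $f$ evaluated at the point $\frac{2}{\th}S(x-z)$ (times a phase depending on $x$ and $z$), so $K_f(x,z)=(\pi\th)^{-2N}e^{\frac{2i}{\th}x.S(x-z)}\,\widehat f\!\big(\tfrac{2}{\th}S(x-z)\big)$ for a suitably normalized Fourier transform $\widehat f$. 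First I would carry out this computation carefully, keeping track of all the $(2\pi)$ and $\th$ factors; this is the one place where the precise constant $(2\pi\th)^{-N/2}$ is pinned down, so it deserves attention but is otherwise routine.

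\textbf{Step 2: compute the Hilbert--Schmidt norm.} By definition $\|L^\th_f\|_{\mathrm{HS}}^2=\iint|K_f(x,z)|^2\,d^{2N}x\,d^{2N}z$. The phase factor has modulus one, so $|K_f(x,z)|^2=(\pi\th)^{-4N}\,\big|\widehat f\big(\tfrac{2}{\th}S(x-z)\big)\big|^2$. Fix $x$ and substitute $w=\tfrac{2}{\th}S(x-z)$ in the $z$-integral; since $S$ is orthogonal (indeed $S^tS=1_{2N}$) the Jacobian of $z\mapsto w$ is $(\tfrac{2}{\th})^{2N}$, giving $\int|K_f(x,z)|^2\,d^{2N}z=(\pi\th)^{-4N}(\tfrac{\th}{2})^{2N}\|\widehat f\|_2^2$. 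Then integrate in $x$ --- but the integrand no longer depends on $x$, which would diverge; the resolution is that after the $w$-substitution one should instead substitute first in $x$ (or change to variables $u=x-z$, $v=x+z$ or similar) so that the double integral factors as $\|f\|_2^2$ times a finite constant. Concretely, use the variables $(u,x)$ with $u=x-z$: then $\iint|K_f|^2 = (\pi\th)^{-4N}\int\big(\int|\widehat f(\tfrac2\th Su)|^2\,d^{2N}u\big)d^{2N}x$ still diverges, so the correct bookkeeping is that $L^\th_f$ is \emph{not} Hilbert--Schmidt in general and one must be more careful: the clean route is to combine the kernel estimate with Plancherel applied in the $z$ variable directly, i.e. bound $\|f\mop g\|_2$ by freezing $g$ and using that $g\mapsto\int K_f(x,z)g(z)dz$ is, for each fixed $x$, evaluation of a convolution-type operator. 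I would therefore recast Step 2 as: write $f\mop g$ via \eqref{eq:moyal-prod-gen}, apply the unitary (up to constant) ``twisted convolution'' structure, and invoke Plancherel once in a single variable; the constant $(2\pi\th)^{-N/2}$ then emerges as the square root of the Jacobian $(\tfrac{\th}{2})^{2N}\cdot(2\pi)^{-?}$ bookkeeping.

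\textbf{Main obstacle.} The genuinely delicate point is exactly the normalization: getting the constant $(2\pi\th)^{-N/2}$ rather than some other power of $\pi$, $2$, $\th$ requires pedantic tracking of the $(2\pi)^{-k}$ in \eqref{eq:moyal-prod-slick}, the $(\pi\th)^{-2N}$ in \eqref{eq:moyal-prod}, and the Jacobians of the symplectic substitutions; a sign or a factor of $2$ error propagates into the final bound. A secondary subtlety is that $L^\th_f$ need not be Hilbert--Schmidt (only bounded), so the argument cannot literally be ``$\|\cdot\|_{\mathrm{op}}\le\|\cdot\|_{\mathrm{HS}}$'' applied naively to a divergent quantity; instead one estimates $\|f\mop g\|_2$ directly for fixed $g\in L^2$, using the explicit kernel together with the Cauchy--Schwarz inequality in a way that the $x$-integration is performed \emph{after} bounding the $z$-integral pointwise, exploiting that $|K_f(x,z)|$ depends on $x,z$ only through $x-z$ up to the unimodular phase. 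This is the argument given in \cite{Phobos}, and I would follow that reference for the precise manipulation, filling in the symplectic change of variables and one application of Plancherel's theorem.
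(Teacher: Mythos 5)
Your proposal has a genuine gap, and you in fact identify it yourself midway through Step 2 without repairing it. The kernel of $L^\th_f$ satisfies $|K_f(x,z)|=(\text{const})\,|\widehat f(\tfrac{2}{\th}S(x-z))|$, a function of $x-z$ alone, so $\iint|K_f|^2$ diverges and the inequality $\|\cdot\|_{\mathrm{op}}\le\|\cdot\|_{\mathrm{HS}}$ is unavailable: $L^\th_f$ is bounded but in general not even compact (e.g.\ $L^\th_{f_{00}}$ is a multiple of an infinite-rank projection). The repairs you then sketch do not close the gap. Applying Cauchy--Schwarz pointwise in the $z$-integral yields $|f\mop g(x)|\le\|K_f(x,\cdot)\|_2\,\|g\|_2$, i.e.\ an $L^\infty$ bound on $f\mop g$ (essentially Lemma \ref{lm:propriete-bis}(i)), not the $L^2$ bound you need; and ``invoke Plancherel once in a single variable'' is not an argument, since after Fourier transform the problem becomes the twisted convolution $\hat f\diamond\hat g$, whose kernel $\hat f(\xi-\eta)e^{ic\,\sigma(\xi,\eta)}$ has exactly the same non-Hilbert--Schmidt defect. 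The whole content of the lemma is the non-obvious fact that the \emph{nondegenerate} twist improves $L^2\ast L^2\subset L^\infty$ to $L^2\mop L^2\subset L^2$; your proposal never establishes this, and explicitly defers the decisive manipulation to the reference.

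For comparison, the paper's proof is short and algebraic: expand $f=\sum c_{mn}\a_{mn}$ and $g=\sum d_{mn}\a_{mn}$ in the orthonormal basis $\a_{mn}=(2\pi\th)^{-N/2}f_{mn}$ of Definition \ref{df:basis-fns}. The matrix-unit relation $f_{mn}\mop f_{kl}=\delta_{nk}\,f_{ml}$ turns the Moyal product into multiplication of the coefficient matrices, so that
\begin{align*}
\|f\mop g\|_2^2=(2\pi\th)^{-N}\sum_{m,l}\Bigl|\sum_n c_{mn}d_{nl}\Bigr|^2\le(2\pi\th)^{-N}\|f\|_2^2\,\|g\|_2^2
\end{align*}
by Cauchy--Schwarz; this is just $\|CD\|_{\mathrm{HS}}\le\|C\|_{\mathrm{HS}}\|D\|_{\mathrm{HS}}$ for the coefficient matrices, and the constant $(2\pi\th)^{-N/2}$ comes from the normalization $\|f_{mn}\|_2^2=(2\pi\th)^N$ rather than from any Fourier/Jacobian bookkeeping. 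If you want to salvage your Fourier-side route, the analogous step is to identify $(L^2(\R^{2N}),\mop)$ with the Hilbert--Schmidt operators on $L^2(\R^N)$ under the Weyl correspondence and use $\|AB\|_{\mathrm{HS}}\le\|A\|_{\mathrm{HS}}\|B\|_{\mathrm{HS}}$ there --- but that identification is precisely what the basis $\{f_{mn}\}$ encodes, so some version of this structural input cannot be avoided.
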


\begin{proof}
Expand $f = \sum_{m,n} c_{mn} \a_{mn}$ and
$g = \sum_{m,n} d_{mn} \a_{mn}$ with respect to the orthonormal basis
$\{\a_{nm}\} := (2\pi\th)^{-N/2} \{f_{nm}\}$ of $L^2(\R^{2N})$. Then
\begin{align*}
\|f\mop g\|_2^2
&= (2\pi\th)^{-2N} \biggl\| \sum_{m,l}
\Bigl( \sum_n c_{mn}\,d_{nl} \Bigr) f_{ml} \biggr\|_2^2
= (2\pi\th)^{-N} \sum_{m,l} \Bigl|\sum_n c_{mn}\,d_{nl}\Bigr|^2
\\
&\leq (2\pi\th)^{-N} \sum_{m,j} |c_{mj}|^2 \sum_{k,l} |d_{kl}|^2
= (2\pi\th)^{-N} \|f\|_2^2 \, \|g\|_2^2,
\end{align*}
on applying the Cauchy--Schwarz inequality.
\end{proof}

\begin{prop} {\rm\cite{Deimos}}
\label{pr:algebra}
$(A_\th,\|.\|_{\mathrm{op}})$ is a unital $C^*$-algebra of
operators on $L^2(\R^{2N})$, isomorphic to $\L(L^2(\R^N))$ and
including $L^2(\R^{2N})$. Moreover, there is a continuous injection of $*$-algebras
$\A_\th \hookto A_\th$, but $\A_\th$ is not dense in~$A_\th$.
\end{prop}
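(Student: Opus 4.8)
The plan is to realize $(L^2(\R^{2N}),\mop)$ as the Hilbert--Schmidt operators on an auxiliary separable Hilbert space and then to identify $A_\th$ with the associated multiplier algebra. \emph{First} I would record the multiplication table of the matrix basis of Definition~\ref{df:basis-fns}: a direct computation from \eqref{eq:basis} using the relations of Definition~\ref{df:basis-fns} gives $f_{mn}\mop f_{kl}=\delta_{nk}\,f_{ml}$ and $f_{mn}^{\,*}=f_{nm}$, while the $f_{mn}$ are mutually orthogonal in $L^2(\R^{2N})$ with common norm $(2\pi\th)^{N/2}$, so that $\{(2\pi\th)^{-N/2}f_{mn}\}_{m,n\in\N^N}$ is the orthonormal basis used in the proof of Lemma~\ref{lm:norm-HS}. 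Fixing any separable infinite-dimensional Hilbert space $\H_0$ with orthonormal basis $(e_m)_{m\in\N^N}$ --- necessarily $\H_0\simeq L^2(\R^N)$ since $\N^N$ is countably infinite --- the assignment $f_{mn}\mapsto \langle e_n,\cdot\rangle\,e_m$ extends, using Lemma~\ref{lm:propriete-bis}, to a $*$-homomorphism $\Phi$ of $(L^2(\R^{2N}),\mop)$ onto the Hilbert--Schmidt ideal $\L^2(\H_0)$ which is $(2\pi\th)^{-N/2}$ times a unitary between the $L^2$- and Hilbert--Schmidt norms.

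\emph{Next} I would transport the left regular representation $L^\th$. Given $T\in\SS'$ with $T\mop g\in L^2(\R^{2N})$ for every $g\in L^2(\R^{2N})$, expanding $g$ in the basis $\{f_{kl}\}$ and using $f_{mn}\mop f_{kl}=\delta_{nk}f_{ml}$ shows that, through $\Phi$, the operator $L^\th_T$ is intertwined with left composition $S\mapsto B_T\,S$ on $\L^2(\H_0)$ by a linear operator $B_T$ on $\H_0$; a uniform boundedness argument on finite-rank $S$ forces $B_T$ to be bounded, and conversely every $B\in\B(\H_0)$ yields such a multiplier. One checks $\norm{L^\th(T)}_{\mathrm{op}}=\norm{B_T}$, so $T\mapsto B_T$ is an isometric $*$-isomorphism of $A_\th$ onto $\B(\H_0)=\L(L^2(\R^N))$; in particular $A_\th$ is a unital $C^*$-algebra. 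The inclusion $L^2(\R^{2N})\subset A_\th$ is exactly Lemma~\ref{lm:norm-HS}, and the embedding $\A_\th=\SS\hookto A_\th$ is continuous because the Fr\'echet topology of $\SS$ dominates the $L^2$-topology while $L^2(\R^{2N})\to A_\th$ is continuous by that same lemma; injectivity is the faithfulness of the trace in Lemma~\ref{lm:propriete-bis}(iii).

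\emph{Finally}, for non-density, note that $\Phi\big(L^2(\R^{2N})\big)=\L^2(\H_0)$, whose operator-norm closure is $\K(\H_0)$ (finite-rank operators being op-norm dense in $\K(\H_0)$ and contained in $\L^2(\H_0)$). Since $\SS$ is $L^2$-dense in $L^2(\R^{2N})$ and $L^2$-convergence implies $A_\th$-norm convergence, the closure of $\A_\th$ inside $A_\th$ equals $\K(\H_0)$, a proper norm-closed ideal of $\B(\H_0)\simeq A_\th$; hence $\A_\th$ is not dense in $A_\th$. The step I expect to be the main obstacle is the middle one: proving cleanly that a densely defined operator whose left Moyal multiplication maps all of $L^2(\R^{2N})$ into itself must be bounded, with operator norm equal to its norm as an operator on $\H_0$ --- i.e. that the multiplier algebra of the Hilbert--Schmidt ideal is isometrically $\B(\H_0)$ --- together with the careful bookkeeping of the $(2\pi\th)^{\pm N/2}$ normalization constants so that the asserted isomorphism is genuinely isometric.
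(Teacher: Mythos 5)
The paper does not actually prove this proposition: it is quoted verbatim from the reference \cite{Deimos}, so there is no internal proof to compare against. Your plan is, in substance, the argument of that reference: the matrix units $f_{mn}$ with $f_{mn}\mop f_{kl}=\delta_{nk}f_{ml}$, $f_{mn}^{*}=f_{nm}$ and $\norm{f_{mn}}_2=(2\pi\th)^{N/2}$ turn $(L^2(\R^{2N}),\mop)$ into the Hilbert--Schmidt class $\L^2(\H_0)$ with $\H_0\simeq L^2(\R^N)$, $A_\th$ becomes its left multiplier algebra $\B(\H_0)$, and the norm closure of $\A_\th$ is the compacts, hence proper. All of this is correct, and the normalization bookkeeping you worry about is consistent with the basis $\alpha_{mn}=(2\pi\th)^{-N/2}f_{mn}$ used in the proof of Lemma \ref{lm:norm-HS}.

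Two remarks on where the weight of the argument really sits. First, the boundedness of $L^\th_T$ for $T\in A_\th$, which you single out as the main obstacle, is the easy half: $L^\th_T$ is everywhere defined on $L^2(\R^{2N})$ by hypothesis and has closed graph (separate continuity of $\mop$ in $\SS'$), so it is bounded; and a bounded operator on $\L^2(\H_0)$ commuting with all right multiplications is left multiplication by some $B_T\in\B(\H_0)$ with equal norm, by the standard commutant computation. Second, the step you dispatch with ``conversely every $B\in\B(\H_0)$ yields such a multiplier'' is the genuinely delicate one: given a bounded matrix $(b_{mn})$ one must show that $\sum_{m,n}b_{mn}\,(2\pi\th)^{-N}f_{mn}$ converges in $\SS'$ and that its left Moyal multiplication can be computed term by term against any $g\in L^2(\R^{2N})$. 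This requires polynomial control of the Schwartz seminorms of the $f_{mn}$, i.e.\ precisely the weighted-$\ell^2$ spaces $\G_{st}$ that the paper invokes (without defining) in the proof of Proposition \ref{pr:pre-S}, where the inclusion $A_\th\subset\G_{-r,0}$ for $r>N$ is quoted. With that surjectivity supplied, unitality ($B=1$ corresponds to $T=1$), completeness, the inclusion $L^2(\R^{2N})\subset A_\th$ (Lemma \ref{lm:norm-HS}), continuity and injectivity of $\A_\th\hookto A_\th$, and the non-density argument via $\overline{\A_\th}=\K(\H_0)\subsetneq\B(\H_0)$ all go through exactly as you describe.
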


\begin{prop}
\label{pr:pre-S}
$\A_\th$ is a (nonunital) Fr\'echet pre-$C^*$-algebra.
\end{prop}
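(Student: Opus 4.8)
**Proof plan for Proposition \ref{pr:pre-S}: $\A_\th$ is a (nonunital) Fréchet pre-$C^*$-algebra.**

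The plan is to check the two defining properties of a Fréchet pre-$C^*$-algebra: (1) $\A_\th$ is a Fréchet $*$-algebra with jointly continuous product, and (2) $\A_\th$ is stable under the holomorphic functional calculus inherited from its $C^*$-completion. Property (1) is essentially already in hand: by Theorem in the excerpt, $\A_\th = (\SS,\mop)$ is a nonunital associative involutive Fréchet algebra with jointly continuous product and a faithful trace, and by Proposition \ref{pr:algebra} there is a continuous injection of $*$-algebras $\A_\th \hookto A_\th$ into the unital $C^*$-algebra $(A_\th,\|.\|_{\mathrm{op}})$. So the completion of $\A_\th$ in the operator norm sits inside $A_\th$, and the $*$-structure is compatible. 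The whole content of the proposition is therefore property (2), spectral invariance.

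For spectral invariance, first I would reduce to a statement about the \emph{unitization}: since $\A_\th$ is nonunital, ``pre-$C^*$-algebra'' means that the minimal unitization $\A_\th^+ = \A_\th \oplus \C$ is a unital Fréchet pre-$C^*$-algebra, i.e.\ that an element $a + \lambda 1 \in \A_\th^+$ which is invertible in the unital $C^*$-algebra $A_\th$ (or its unitization) has its inverse again in $\A_\th^+$. Equivalently, it suffices to show: if $a \in \A_\th = \SS$ and $1 + L^\th_a$ is invertible as an operator on $L^2(\R^{2N})$, then $(1 + L^\th_a)^{-1} = 1 + L^\th_b$ for some $b \in \SS$. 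The key tool is that $\A_\th$ carries a \emph{faithful smooth action} compatible with the Fréchet topology: translations (and, via the Leibniz rule \eqref{eq:Leibniz} and the multiplication rule \eqref{eq:mult-rule}, the derivations $\pd{}{x_j}$ and the pointwise-coordinate multipliers) act by continuous derivations/multipliers, and the seminorms defining the Fréchet topology of $\SS$ are built from $\sup$-norms of $x^\a \partial^\b$. The standard argument (as in the treatment of the smooth noncommutative torus, or Rieffel's deformation theory) is: given $b_0 := (1+L^\th_a)^{-1} - 1 \in A_\th$, one shows by induction that $b_0$ lies in the domain of every monomial $x^\a \partial^\b$ acting on $A_\th$, using that these operators are derivations or satisfy a Leibniz-type identity \eqref{eq:mult-rule}, that they commute with inversion up to lower-order terms, and that $a \in \SS$ already has all such ``derivatives'' bounded; one then upgrades boundedness of all $x^\a\partial^\b b_0$ in operator norm to the conclusion $b_0 \in \SS$ via Lemma \ref{lm:norm-HS} together with the factorization property (Proposition \ref{pr:factorization}) and Lemma \ref{lm:propriete-bis}, which let one pass between operator-norm estimates, $L^2$ estimates and the Schwartz topology.

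Concretely the induction step runs as follows. Write $R := (1+L^\th_a)^{-1}$, so $R = 1 - R L^\th_a = 1 - L^\th_a R$. Applying a first-order operator $T$ (one of $\partial_j$ or multiplication by $x_j$, realised on the algebra through \eqref{eq:Leibniz}--\eqref{eq:mult-rule}) gives an identity of the schematic form $T(R) = -R\,T(L^\th_a)\,R + (\text{terms with fewer }T\text{'s times }R)$; since $T(L^\th_a)$ is again a multiplier by a Schwartz function and $R$ is bounded, $T(R)$ is bounded, and iterating controls $x^\a\partial^\b(R-1)$ for all $\a,\b$. The main obstacle — and the reason this is not completely routine — is precisely making the bookkeeping of these ``curvature'' terms rigorous in the nonunital setting: one must be careful that all intermediate expressions are genuine multipliers in $A_\th$, that the operators $x_j(\cdot)$ really act continuously on $\A_\th$ (this is where \eqref{eq:mult-rule} is essential, as naive multiplication by $x_j$ need not preserve $\mop$-structure), and that the final step from ``all $x^\a\partial^\b b_0$ bounded'' to ``$b_0 \in \SS$'' is justified; for the latter I would invoke the known identification (Proposition \ref{pr:algebra}) of $A_\th$ with $\L(L^2(\R^N))$ and the fact that $\SS(\R^{2N})$ corresponds there to a smooth/Schwartz ideal, so that $b_0$, lying in the domain of all such operators, must be Schwartz. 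Finally one notes $T$-continuity gives joint continuity of the product, already recorded, completing the verification that $\A_\th$ is a Fréchet pre-$C^*$-algebra.
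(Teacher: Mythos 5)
Your overall strategy --- reduce to showing that the quasi-inverse $g$ of $f\in\SS$ (defined by $f+g+f\mop g=0$) again lies in $\SS$, then establish smoothness and decay of $g$ --- is sound in outline, but it is genuinely different from the paper's argument and it has a real gap at its final step. The paper runs no induction over the operators $x^\a\del^\b$ at all. It right-multiplies the quasi-inverse identity by $f$ to obtain $f\mop f+g\mop f+f\mop g\mop f=0$, so that everything reduces to the single claim $f\mop g\mop f\in\SS$; and this follows in one line from the Moyal multiplier-space inclusions of \cite{Deimos}: $A_\th\subset\G_{-r,0}$ for $r>N$, while $\G_{s,p+r}\mop\G_{-r,0}\mop\G_{qt}\subset\G_{st}$ and $\SS=\bigcap_{s,t}\G_{st}$. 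In other words, sandwiching \emph{any} bounded element between two Schwartz functions is automatically Schwartz, and the algebraic identity above manufactures exactly such a two-sided sandwich. No Leibniz bookkeeping is needed, and the nonunitality causes no extra trouble.

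The gap in your plan is the closing step: you assert that if $L^\th_{x^\a\del^\b g}$ is bounded for every $\a,\b$ then $g\in\SS$, and you propose to ``invoke'' this from the identification $A_\th\simeq\L(L^2(\R^N))$. The implication is true, but it is neither in the paper nor obvious: boundedness of all iterated commutators and anticommutators of an operator with the position and momentum operators has to be converted into rapid decay of its matrix coefficients in the oscillator basis $\{f_{mn}\}$ (equivalently, membership in every $\G_{st}$), and proving that conversion is essentially the same multiplier-space analysis that the paper's sandwich argument exploits directly. As written, your proof is therefore incomplete precisely where the analytic content lies; one-sided products $a\mop h$ with $a\in\SS$ and $h\in A_\th$ need not be Schwartz, which is why the two-sided sandwich is the right object. (The inductive boundedness part of your plan is fine as far as it goes: $\del_j$ is a $\mop$-derivation, \eqref{eq:mult-rule} handles the coordinate multiplications, and your resolvent identities do propagate operator-norm bounds.)
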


\begin{proof}
We adapt the argument for the commutative case
in~\cite[p.~135]{Polaris}. To show that $\A_\th$ is stable under
the holomorphic functional calculus, we need only check that if
$f \in \A_\th$ and $1 + f$ is invertible in $A^0_\th$
with inverse $1 + g$, then the quasi-inverse $g$ of $f$ must lie in
$\A_\th$. From  $f + g + f \mop g = 0$, we obtain
$f \mop f + g \mop f + f \mop g \mop f = 0$, and it is enough to show
that $f \mop g \mop f \in \A_\th$, since the previous relation then
implies $g \mop f \in \A_\th$, and then
$g = -f - g \mop f \in \A_\th$ also.

Now, $A_\th \subset \G_{-r,0}$ for any $r > N$ \cite[p.~886]{Deimos}.
Since $f \in \G_{s,p+r} \cap \G_{qt}$, for $s,t$ arbitrary and $p,q$
positive, we conclude that $f \mop g \mop f \in
\G_{s,p+r} \mop \G_{-r,0} \mop \G_{qt} \subset \G_{st}$; as
$\SS = \bigcap_{s,t\in\R} \G_{st}$, the proof is complete.
\end{proof}

\begin{lem}
\label{lm:cojoreg}
If $f \in \SS$, then $L^\th_f$ is a regularizing $\PsiDO$.
\end{lem}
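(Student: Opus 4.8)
\textbf{Proof plan for Lemma \ref{lm:cojoreg}.}

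The statement to prove is: if $f \in \SS = \SS(\R^{2N})$, then $L^\th_f$, the operator of left Moyal multiplication by $f$, is a regularizing pseudodifferential operator. Here ``regularizing'' means $L^\th_f \in OP^{-\infty}$ in the sense of the non-compact spectral triple machinery, equivalently its Schwartz kernel is smooth and all its seminorms are controlled; in the language of the Moyal plane, $L^\th_f$ maps the Hilbert space into the space of smooth vectors $\H^\infty = \bigcap_k \Dom \DD^k$ and does so together with all its commutators with $\DD$ and $\DDD$. The plan is to exhibit the kernel of $L^\th_f$ explicitly and read off its smoothness and decay directly from $f \in \SS$.

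First I would write down the integral kernel. From \eqref{eq:moyal-prod}, $L^\th_f g(x) = (\pi\th)^{-2N} \iint f(y)\, g(z)\, e^{\frac{2i}{\th}(x-y)\cdot S(x-z)}\, d^{2N}y\, d^{2N}z$, so that $L^\th_f$ is an integral operator with kernel $K_f(x,z) = (\pi\th)^{-2N} \int f(y)\, e^{\frac{2i}{\th}(x-y)\cdot S(x-z)}\, d^{2N}y$; performing the Gaussian-type oscillatory integral in $y$ (a Fourier transform of $f$ evaluated at a point linear in $x-z$) gives $K_f(x,z) = c_\th\, \widehat f\bigl(\tfrac{2}{\th} S(x-z)\bigr)\, e^{\frac{2i}{\th} x \cdot S z}$ up to an explicit constant and affine reparametrization, where $\widehat f \in \SS$ since $f \in \SS$. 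The key observations are then: (i) $K_f \in C^\infty(\R^{2N} \times \R^{2N})$ because $\widehat f$ is smooth and the exponential is smooth; (ii) in the ``diagonal'' variable $x-z$ the kernel is Schwartz-decaying, while the remaining oscillatory factor $e^{\frac{2i}{\th} x \cdot Sz}$ is bounded with bounded derivatives along the diagonal directions — this is exactly the structure of a smoothing kernel in the Moyal setting. Thus $L^\th_f$ is smoothing, and in particular $\PsiDO$-regularizing.

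Next I would promote this to the statement that $L^\th_f \in OP^{-\infty}$ with respect to $\DD = -i\,\delta_\mu \otimes \ga^\mu$ (or its Moyal-plane analogue). Using the Leibniz rule Lemma~\ref{lm:propriete}(iii), $\delta_j (f \mop g) = (\delta_j f) \mop g + f \mop (\delta_j g)$, so the derivations act on $L^\th_f$ by $[\delta_j, L^\th_f] = L^\th_{\delta_j f}$. Since $\delta_j f \in \SS$ whenever $f \in \SS$, iterating shows that every multiple commutator of $L^\th_f$ with $\DD$ is again of the form $L^\th_{f'}$ with $f' \in \SS$, hence again bounded (by Lemma~\ref{lm:norm-HS}, $\|L^\th_{f'}\|_{\mathrm{op}} \le (2\pi\th)^{-N/2}\|f'\|_2 < \infty$) and Hilbert--Schmidt (by Lemma~\ref{lm:propriete-bis}(i), since $L^\th_{f'}$ has $L^2$ kernel). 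Combining the boundedness of all iterated commutators $\ad_\DD^k(L^\th_f)$, $\ad_{\DDD}^k(L^\th_f)$ with the stability under the one-parameter group $F_t$ gives $L^\th_f \in OP^{-m}$ for all $m$, i.e. $L^\th_f \in OP^{-\infty}$; composed with the smoothness of the kernel above, this is the assertion that $L^\th_f$ is a regularizing $\PsiDO$.

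The main obstacle I anticipate is bookkeeping rather than a genuine difficulty: one must be careful that ``regularizing $\PsiDO$'' in the non-compact / Moyal context really is captured by the combination ``smooth, rapidly decaying kernel transverse to the diagonal'' plus ``all commutators with $\DD,\DDD$ bounded,'' and that the residual oscillatory factor $e^{\frac{2i}{\th}x\cdot Sz}$ does not spoil membership in $OP^{-\infty}$ — it does not, because that factor is killed precisely by the $\delta$-commutator computation (the derivations produce only $\SS$-functions, never polynomial growth). A secondary point requiring a line of justification is that the oscillatory integral in $y$ producing $\widehat f$ is legitimate for $f \in \SS$; this is standard and can be cited from \cite{Phobos,Deimos}. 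Everything else is a direct consequence of the Leibniz rule, the Hilbert--Schmidt estimate of Lemma~\ref{lm:norm-HS}, and the factorization/Schwartz-stability properties already recorded in Lemma~\ref{lm:propriete} and Proposition~\ref{pr:pre-S}.
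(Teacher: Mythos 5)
Your first two paragraphs prove the lemma and are essentially the paper's own argument read in the kernel picture rather than the symbol picture: the paper observes directly from \eqref{eq:moyal-prod-slick} that $L^\th_f$ is the $\PsiDO$ with symbol $\sigma(x,\xi)=f(x-\tfrac{\th}{2}S\xi)$ and notes that, $f$ being Schwartz, this symbol satisfies the estimates of Definition \ref{defsym} for every order $d$ (uniformly for $x$ in compacts); your explicit kernel $c_\th\,\widehat f\bigl(\tfrac{2}{\th}S(x-z)\bigr)e^{\frac{2i}{\th}x\cdot Sz}$ is just the partial Fourier transform of that symbol, and its smoothness yields the same conclusion via Remark \ref{defsmoothing}. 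Two caveats. First, the oscillatory factor does \emph{not} have bounded derivatives: differentiating $e^{\frac{2i}{\th}x\cdot Sz}$ produces factors linear in $z$ (or $x$), so the kernel's derivatives grow polynomially along the diagonal; this is harmless only because the symbol and amplitude estimates of Definition \ref{defsym} require uniformity merely for $x$ in compact sets, and you should say so rather than claim boundedness. Second, your third paragraph over-interprets ``regularizing'' as membership in the spectral triple's $OP^{-\infty}$; the lemma asserts the classical $\PsiDO$ notion (symbol of order $-\infty$, equivalently $C^\infty$ kernel), as the remark immediately following it --- that regularizing $\PsiDO$s need not be compact here --- makes clear. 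Moreover, as a proof of $OP^{-\infty}$ that paragraph has a genuine gap: bounded iterated commutators with $\DD$ and $|\DD|$ establish only $L^\th_f\in OP^0$, whereas $L^\th_f\in OP^{-m}$ for $m>0$ requires controlling $L^\th_f(1+\DD^2)^{m/2}$, which rests on the factorization and Schatten-norm estimates of the subsequent lemmas and not on the Leibniz rule alone. Since that paragraph is superfluous to the statement, the proof stands once it is dropped.
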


\begin{proof}
{}From~\eqref{eq:moyal-prod-slick}, one at once sees that left Moyal
multiplication by~$f$ is the pseudodifferential operator on~$\R^{2N}$
with symbol $f(x - \frac{\th}{2}S\xi)$. Clearly $L^\th_f$ extends to a
continuous linear map from
$\Coo(\R^{2N})' \hookto \SS'$ to $\Coo(\R^{2N})$.
The lemma also follows from the inequality
$$
|\del_x^\a \del_\xi^\b f(x - \tfrac{\th}{2} S\xi)|
\leq C_{K\a\b} (1 + |\xi|^2)^{(d-|\b|)/2},
$$
valid for all $\a,\b \in \N^{2N}$, any compact $K \subset \R^{2N}$,
and any $d \in \R$, since $f\in \SS$.
\end{proof}

\begin{rem}
Unlike for the case of a compact manifold, regularizing $\PsiDO$s are not necessarily compact operators!
\end{rem}

\subsection{The preferred unitization of the Schwartz Moyal algebra}

\begin{defn}
\label{df:distr-zoo}
Following Schwartz, we denote $\B := \Oh_0$, the space of smooth functions bounded together with all derivatives.
\end{defn}
A unitization of~$\A_\th$ is given by the algebra $\Aun_\th := (\B,\mop)$. The
inclusion of~$\A_\th$ in~$\B$ is not dense, but this is not needed.
$\Aun_\th$ contains the constant functions and the plane waves, but no
nonconstant polynomials and no imaginary-quadratic exponentials, such
as $e^{iax_1x_2}$ in the case $N = 1$ (we will see later the
pertinence of this).

Since $\B$ is a unital $*$-algebra with the Moyal
product, 

\begin{prop}
\label{pr:pre-Aun}
$\Aun_\th$ is a unital Fr\'echet pre-$C^*$-algebra.
\end{prop}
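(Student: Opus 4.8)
The plan is to show that $\Aun_\th = (\B,\mop)$ is a unital Fréchet pre-$C^*$-algebra by assembling three ingredients: (1) $\B$ with the Moyal product is a unital Fréchet $*$-algebra, (2) it embeds continuously as a $*$-subalgebra of the $C^*$-algebra $A_\th \cong \L(L^2(\R^N))$, and (3) it is stable under the holomorphic functional calculus of that $C^*$-algebra. Items (1) and (2) are essentially bookkeeping; the real content, as with Proposition \ref{pr:pre-S}, is (3).

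First I would recall that $\B = \Oh_0$ carries its standard Fréchet topology given by the seminorms $p_{\a,K}(f) = \sup_{x\in K}|\del^\a f(x)|$ extended to all of $\R^{2N}$ (equivalently $\sup_x |\del^\a f(x)|$), and that this topology is complete. One then checks that $\mop$ restricts to a jointly continuous bilinear map $\B\times\B\to\B$: this follows from the oscillatory-integral estimates already used for $\SS$ (the same estimates in Lemma \ref{lm:cojoreg}, applied with $f\in\B$ rather than $f\in\SS$, show that $\del^\a(f\mop g)$ is controlled by finitely many seminorms of $f$ and $g$), together with associativity, which holds on $\B$ by continuity and density arguments from the $\SS$ case. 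The constant function $1$ is the Moyal unit since $1\mop g = g\mop 1 = g$, and $f\mapsto f^*$ is an involution by Lemma \ref{lm:propriete}(ii) extended to $\B$. This gives (1). For (2), I would invoke Proposition \ref{pr:algebra}: $\B\subset A_\th$ because left Moyal multiplication by a bounded-with-bounded-derivatives function is a bounded operator on $L^2(\R^{2N})$ — again the symbol $f(x-\tfrac\th2 S\xi)$ is a bounded symbol — and the embedding $\B\hookto A_\th$ is a continuous injective $*$-homomorphism. Thus $\Aun_\th$ is a Fréchet $*$-algebra densely(-or-not) and continuously included in a $C^*$-algebra, with the same unit.

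The hard part is (3): showing $\Aun_\th$ is stable under holomorphic functional calculus, i.e. that if $f\in\B$ and $1+f$ is invertible in $A_\th$ with inverse $1+g$, then $g\in\B$. Following the template of the proof of Proposition \ref{pr:pre-S} and of the commutative argument in \cite[p.~135]{Polaris}, I would write the quasi-inverse relation $f + g + f\mop g = 0$, hence $f\mop f + g\mop f + f\mop g\mop f = 0$, so it suffices to prove $f\mop g\mop f\in\B$, which then forces $g\mop f\in\B$ and finally $g = -f - g\mop f\in\B$. To control $f\mop g\mop f$ one uses that $g = L^\th(1+g) - 1$ defines a bounded operator, so $g\in A_\th$, combined with a ``smoothing from both sides'' estimate: conjugating/multiplying a merely bounded operator-symbol by two Schwartz-type rapidly-decaying-derivative factors coming from $f\in\B$ produces a symbol all of whose derivatives are bounded. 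Concretely, I expect to estimate $\del^\a(f\mop g\mop f)$ by repeatedly applying the Leibniz rule \eqref{eq:Leibniz} and the multiplication rule \eqref{eq:mult-rule} to move derivatives and coordinate factors onto the two $f$'s, while the middle factor $g$ is only ever hit by the bounded-operator norm — exactly the mechanism behind Lemma \ref{lm:cojoreg} but now with a non-Schwartz $f$, so that rapid decay is replaced by uniform boundedness of all derivatives. This is where care is genuinely needed, since the ideal-property arguments used for $\SS$ (``$\G_{s,p+r}\mop\G_{-r,0}\mop\G_{qt}\subset\G_{st}$'') must be replaced by the coarser statement that the two-sided $\B$-module generated inside $A_\th$ lands back in $\B$.

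Finally I would note that joint continuity of the product plus completeness plus stability under holomorphic functional calculus is precisely the definition of a (unital) Fréchet pre-$C^*$-algebra, so $\Aun_\th$ qualifies; the nonunital subalgebra $\A_\th=(\SS,\mop)$ being an (not necessarily dense, but that is irrelevant) ideal-like subalgebra, as already observed. I expect the only subtlety beyond routine estimation to be making precise the ``two-sided smoothing'' lemma that replaces the Schwartz-ideal trick, and I would state and prove that as a short lemma before the main argument.
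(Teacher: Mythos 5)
Your items (1) and (2) are fine modulo standard facts: joint continuity of $\mop$ on $\B\times\B$ and the continuous inclusion $\B\hookto A_\th$ are both instances of the Calder\'on--Vaillancourt theorem for symbols bounded together with all derivatives. (For what it is worth, the paper states this proposition without proof, delegating the work to the literature, so the only internal template is Proposition \ref{pr:pre-S}.) The problem is your item (3), which is where all the content lies. The ``two-sided smoothing lemma'' you propose as a replacement for the Schwartz-ideal trick is false: $\B\mop A_\th\mop\B\not\subset\B$, since $1\in\B$ and $1\mop T\mop 1=T$ for every $T$ in $A_\th$, which is a unital $C^*$-algebra of operators on $L^2(\R^{2N})$ containing, e.g., all of $L^2(\R^{2N})$ and hence much more than $\B$. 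The sandwiching in Proposition \ref{pr:pre-S} works only because each Schwartz factor injects genuine decay ($f\in\G_{s,p+r}$ for arbitrary $s$ and $p>0$); an element of $\B$ injects none. Nor does your Leibniz bookkeeping rescue it: $\del_\mu(f\mop g\mop f)$ unavoidably contains the term $f\mop\del_\mu g\mop f$, so derivatives do land on the middle factor, about which nothing is known a priori. Worse, for the specific quasi-inverse $g$ the reduction is vacuous: from $f\mop g=g\mop f=-f-g$ one computes $f\mop g\mop f=-f\mop f+f+g$, so proving $f\mop g\mop f\in\B$ is literally equivalent to proving $g\in\B$. In the Schwartz case this circularity is harmless because $f\mop g\mop f\in\SS$ is established directly from the module inclusions using only $g\in A_\th$; here no such direct argument exists.

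The argument that actually works uses that each $\del_\mu$ is a derivation of $\mop$ together with an operator-theoretic characterization of $\B$ inside $A_\th$. Writing $v:=1+g=(1+f)^{-1}$ and differentiating $(1+f)\mop v=1$ gives $\del_\mu v=-v\mop(\del_\mu f)\mop v\in A_\th$, and by induction every $\del^\a v$ is a finite sum of Moyal products of $v$ with derivatives of $f$, hence defines a bounded operator. One then needs the converse Calder\'on--Vaillancourt (Beals-type) statement: a tempered distribution all of whose derivatives give bounded left Moyal multiplications lies in $\B$. Equivalently, $L^\th(\B)$ is characterized inside the bounded operators by boundedness of all iterated commutators with the position operators and the $\del_j$ (by \eqref{eq:Leibniz} and \eqref{eq:mult-rule} these commutators reproduce the derivatives of the symbol), and that commutator condition is manifestly stable under inversion. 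This characterization is the ingredient your smoothing lemma cannot supply, and it is what you would have to prove, or cite from \cite{GGBISV}, to close the argument.
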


An advantage of~$\Aun_\th$ is that the covering relation of the
noncommutative plane to the NC torus is made transparent. To wit, the
smooth noncommutative torus algebra $\Coo(\T_\Theta^{2N})$ seen in Section \ref{Definition of the nc-torus} can be
embedded in~$\B$ as periodic functions (with a fixed period
parallelogram). This is in fact a Hopf algebra homomorphism: recall
that $\Coo(\T_\Theta^{2N})$ is a cotriangular Hopf algebra by exploiting the integral
form~\eqref{eq:moyal-prod-slick} of (a periodic version of) the Moyal
product.

We finally note the main reason for suitability of $\Aun_\th$, namely, that each
$[\Dslash, L^\th(f) \ox 1_{2^N}]$ lies in $A_\th \ox M_{2^N}(\C)$, for
$f \in \Aun_\th$ and $\Dslash$ the Dirac operator on~$\R^{2N}$.

\subsection{The commutative case}
\label{The commutative case}

When $\Theta=0$ the Moyal product is the ordinary product. 

Let $\A$ be some appropriate subalgebra of $C^\infty(M)$ and $\Dslash$
be the Dirac operator, with $k$ equal to the ordinary dimension of the
spin manifold $M=\R^k$. Let $\H$ be the space of square-integrable spinors.
Then $[\Dslash, f] = \Dslash(f)$, just as in the unital case, and so
the boundedness of $[D,\A]$ is unproblematic. In order to check
whether $(\A,\H,\Dslash,\chi)$ is a spectral triple, one
first needs to determine whether products of the form
$f(|\Dslash| + \eps)^{-k}$ are compact operators of Dixmier trace
class, whose Dixmier trace is (a standard multiple of)
$\int f(x) \,d^kx$. This compactness condition is guaranteed in the
flat space case (taking $\A = \SS(\R^k)$, say) by celebrated estimates
in scattering theory~\cite{SimonTrace}. 

The summability condition is a bit tougher. The Ces\`aro summability
theory of~\cite{EGBV} establishes that, for a positive
pseudodifferential operator $H$ of order~$d$, acting on spinors, the
spectral density asymptotically behaves as
$$
d_H(x,x;\la') \sim \tfrac{2^{\piso{k/2}}}{d\,(2\pi)^k} \bigl(\text{\it WRes } H^{-k/d}\,(\la')^{(k-d)/d} +\cdots \bigr),
$$
in the Ces\`aro sense. (If the operator is not positive, one uses the
``four parts'' argument.) In our case, $H = a(|\Dslash| + \eps)^{-k}$ is pseudodifferential of order~$-k$, so
$$
d_H(x,x;\la')
\sim -\tfrac{2^{\piso{k/2}}\,\Omega_k\,a(x)}{k\,(2\pi)^k} \, ({\la'}^{-2} + \cdots),
$$
as $\la' \to \infty$ in the Ces\`aro sense; here $\Omega_k$ is the
hyper-area of the unit sphere in $\R^k$. We independently know that $H$
is compact, so on integrating the spectral density over~$x$ and over
$0 \leq \la' \leq \la$, we get that the number of eigenvalues of $H$ less than $\lambda$ is
$$
N_H(\la)
\sim \tfrac{2^{\piso{k/2}}\,\Omega_k\int a(x)\,d^kx}{k\,(2\pi)^k}
\,\la^{-1}\, \as \la \to \infty.
$$
This holds in the ordinary asymptotic sense, and not merely the
Ces\`aro sense, by the ``sandwich'' argument used in the proof of
\cite[Cor.~4.1]{EGBV}. So finally,
\begin{equation}
\la_m(H)
\sim \tfrac{2^{\piso{k/2}}\,\Omega_k\int a(x)\,d^kx}{k\,(2\pi)^k}\,
m^{-1} \as m \to \infty,
\label{eq:eastindian-discovery}
\end{equation}
and the Dixmier traceability of $a(|\Dslash| + \eps)^{-k}$, plus the
value of its trace, follow at once.

The rest is a long but almost trivial verification. For instance, $J$
is the charge conjugation operator on spinors; the algebra
$(\B,\star_0)$ is a suitable compactification; the domain $\H^\infty$
consists of the smooth spinors; and so on. 
Thus, we get the following

\begin{thm}
\label{th:commuters}
The triple $(\SS(\R^k), L^2(\R^k)\ox\C^{2^{\piso{k/2}}}, \Dslash)$ on
$\R^k$ defines a noncompact commutative geometry of spectral
dimension~$k$.
\end{thm}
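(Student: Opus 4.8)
The plan is to verify, one at a time, the requirements of Definition \ref{deftripletnoncompact} and of the (unitized) regularity, finiteness and orientation conditions, for the candidate triple $\big(\SS(\R^k),\,L^2(\R^k)\ox\C^{2^{\piso{k/2}}},\,\Dslash\big)$ with preferred unitization $\Aun_0=(\B,\star_0)=\B=\Oh_0$. First I would recall that $[\Dslash,L^0(f)]=L^0\big(c(df)\big)$ acts by Clifford multiplication by the gradient of $f$; for $f\in\B$ this gradient is itself in $\B$, hence bounded, so the commutator condition of Definition \ref{deftripletnoncompact} holds on the unitization. Next, for the ``discreteness'' condition I would show $a(\Dslash-\lambda)^{-1}$ is compact for $a\in\SS(\R^k)$ and $\lambda\notin\R$: since $(\Dslash-\lambda)^{-1}$ is a pseudodifferential operator of order $-1$ with symbol decaying in $\xi$, and $a$ decays rapidly in $x$, the operator $a(\Dslash-\lambda)^{-1}$ is of the form $f(x)g(-i\nabla)$ with $f\in\SS$ and $g\in C_0$, so the Birman--Solomjak theorem (quoted just before Definition \ref{deftripletnoncompact}) gives compactness; in fact it gives membership in $\L^p$ for suitable $p$, which I would use again below.

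The bulk of the argument is the spectral-dimension count, i.e. that $a(|\Dslash|+\eps)^{-k}$ is Dixmier traceable with $\Tr_{Dix}$ equal to a universal multiple of $\int a(x)\,d^kx$, for $a\in\SS(\R^k)$. Here I would follow exactly the sketch in Section \ref{The commutative case}: write $H:=a(|\Dslash|+\eps)^{-k}$, a pseudodifferential operator of order $-k$ on spinors, and invoke the Ces\`aro summability theory of \cite{EGBV} to get the leading asymptotics of the spectral density $d_H(x,x;\lambda')\sim -\tfrac{2^{\piso{k/2}}\,\Omega_k\,a(x)}{k\,(2\pi)^k}\,{\lambda'}^{-2}$ in the Ces\`aro sense (using the ``four parts'' decomposition when $a$ is not positive). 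Integrating over $x$ and over $0\le\lambda'\le\lambda$ and using that $H$ is compact (from Birman--Solomjak, so that a genuine, not merely Ces\`aro, asymptotic statement is licit by the sandwich argument of \cite[Cor.~4.1]{EGBV}) yields $N_H(\lambda)\sim \tfrac{2^{\piso{k/2}}\,\Omega_k\int a(x)\,d^kx}{k\,(2\pi)^k}\,\lambda^{-1}$ and hence $\lambda_m(H)\sim \tfrac{2^{\piso{k/2}}\,\Omega_k\int a(x)\,d^kx}{k\,(2\pi)^k}\,m^{-1}$ as $m\to\infty$, from which Dixmier traceability and the value of the Dixmier trace follow immediately.

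Finally I would dispatch the remaining structural conditions by the ``long but almost trivial verification'' indicated: $J$ is the charge conjugation operator on spinors, which is an antilinear isometry with the sign table \eqref{commu} for $KO$-dimension $k$ and intertwines $\Dslash$ and the representation in the required way; $\chi=(-i)^{k/2}\gamma^1\cdots\gamma^k$ for $k$ even; the preferred unitization $\Aun_0=(\B,\star_0)$ is a suitable (nonunital-compatible) compactification, with $\SS(\R^k)\hookrightarrow\B$ continuous but not dense, exactly as tolerated by Definition \ref{deftripletnoncompact}; the smooth domain $\H^\infty=\bigcap_k\Dom\Dslash^k$ consists of Schwartz spinors and is finitely generated projective over $\Aun_0$; regularity follows since $\delta^n$ applied to $L^0(f)$ and to $[\Dslash,L^0(f)]$ stays bounded for $f\in\B$; and the orientation cycle is the standard volume Hochschild cycle of $\R^k$ with $\pi_\Dslash(c)=\chi$. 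I expect the main obstacle to be the spectral-dimension step: making the passage from the Ces\`aro asymptotics of \cite{EGBV} to a genuine Weyl-type asymptotic for $N_H(\lambda)$ rigorous requires the compactness input from Birman--Solomjak together with the sandwich/monotonicity argument, and is the only place where real analysis (rather than bookkeeping) enters; everything else is a matter of transcribing the compact-manifold verification to the non-compact, unitized setting.
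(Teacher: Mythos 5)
Your proposal is correct and follows essentially the same route as the paper: compactness of $a(\Dslash-\la)^{-1}$ via the scattering-theory (Birman--Solomjak) estimates, the spectral dimension via the Ces\`aro summability theory of~\cite{EGBV} upgraded to a genuine Weyl asymptotic for $N_H(\la)$ by the sandwich argument, and the remaining axioms ($J$, $\chi$, the unitization $(\B,\star_0)$, finiteness, regularity, orientation) dispatched as a routine verification. You have merely written out in slightly more detail the sketch that the paper itself gives immediately before the theorem.
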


What about the non-flat case (of a spin manifold such that $\Dslash$
is selfadjoint)? Mainly because the previous Ces\`aro summability
argument is purely local, everything carries over, if we choose
for~$\A$ the algebra of smooth and compactly supported functions. Of
course, in some contexts it may be useful to demand that $M$ also has
conic exits.

\subsection{The Moyal plane}
\label{The Moyal plane}

Let $\A = (\SS(\R^{2N}),\mop)$, with preferred unitization
$\Aun := (\B(\R^{2N}),\mop)$. The Hilbert space will be
$\H := L^2(\R^{2N}) \ox \C^{2^N}$ of ordinary square-integrable
spinors. The representation of $\A$ is given by
$\pi^\th \: \A \to \L(\H) : f \mapsto L^\th_f \ox 1_{2^N}$, where
$L^\th_f$ acts on the ``reduced'' Hilbert space
$\H_r := L^2(\R^{2N})$. In other words, if $a \in \A$ and
$\Psi \in \H$, to obtain $\pi^\th(a)\Psi$ we just left Moyal multiply
$\Psi$ by~$a$ componentwise.

This operator $\pi^\th(f)$ is bounded, since it acts diagonally
on $\H$ and $\|L^\th_f\| \leq (2\pi\theta)^{-N/2} \|f\|_2$ was proved
in Lemma~\ref{lm:norm-HS}. Under this action, the elements of $\H$ get
the lofty name of \textit{Moyal spinors}.

The selfadjoint Dirac operator is not ``deformed'': it will be the
ordinary Euclidean Dirac operator $\Dslash := -i\,\ga^\mu \del_\mu$,
where the hermitian matrices $\ga^1,\dots,\ga^{2N}$ satisfying 
$\{\ga^\mu, \ga^\nu\} = +2\,\delta^{\mu\nu}$ irreducibly represent the
Clifford algebra $\CC\ell \,\R^{2N}$ associated to $(\R^{2N},\eta)$, with
$\eta$ the standard Euclidean metric.

As a grading operator $\chi$ we take the usual chirality associated to
the Clifford algebra:
$$
\chi := \ga_{2N+1} := 1_{\H_r} \ox (-i)^N \ga^1 \ga^2 \dots \ga^{2N}.
$$
The notation $\ga_{2N+1}$ is a nod to physicists' $\ga_5$. Thus
$\chi^2 = (-1)^N (\ga^1\dots\ga^{2N})^2 = (-1)^{2N} = 1$ and
$\chi \ga^\mu = -\ga^\mu \chi$.

The real structure $J$ is chosen to be the usual charge conjugation
operator for spinors on $\R^{2N}$ endowed with an Euclidean metric.
Here, we only assume that $J^2 = \pm 1$ according to the ``sign table''
\eqref{commu} and that
$$
J (1_{\H_r} \ox \ga^\mu) J^{-1} = - 1_{\H_r} \ox \ga^\mu
$$
which guarantees the other requirements of the table. In
general, in a given representation, it can be written as
\begin{equation}
J := CK,
\label{eq:charge-conjn}
\end{equation}
where $C$ denotes a suitable $2^N \x 2^N$ unitary matrix and $K$ means
complex conjugation. 
An important property of~$J$ is
\begin{equation}
J( L^\th(f^*) \ox 1_{2^N}) J^{-1} = R^\th(f) \ox 1_{2^N},
\label{eq:right-Moyal}
\end{equation}
where $R^\th(f) \equiv R^\th_f$ is the right Moyal multiplication by
$f$; this follows from the antilinearity of~$J$ and the reversal of the
twisted product under complex conjugation.

\medskip

Lemma~\ref{lm:propriete}(iii) implies that $[\Dslash,\pi^\th(f)] = -i L^\th(\del_\mu f)\ox \gamma^\mu 
=: \pi^\th(\Dslash(f))$ and this is bounded for $f \in \Aun_\th = \B(\R^{2N})$ just as in the commutative case.

\subsubsection{The compactness condition}
\label{sec:axiom-comp}

In this subsection and the next, the main tools are techniques
developed some time ago for scattering theory problems, as summarized
in Simon's booklet~\cite[Chap.~4]{SimonTrace}. We adopt the convention
that $\L^\infty(\H) := \K(\H)$, with
$\|A\|_\infty := \|A\|_{\mathrm{op}}$.

Let $g\in L^\infty(\R^{2N})$. We define the operator $g(-i\nb)$ on
$\H_r$ as
$$
g(-i\nb)\psi := \F^{-1}(g\,\F\psi),
$$
where $\F$ is the ordinary Fourier transform. More in detail, for
$\psi$ in the correct domain,
$$
g(-i\nb)\psi(x)
= (2\pi)^{-2N}\iint e^{i\xi\.(x-y)}\,g(\xi)\psi(y)\,d^{2N}\xi\,d^{2N}y.
$$
The inequality
$\|g(-i\nb)\psi\|_2 = \|\F^{-1}g\F\psi\|_2\leq \|g\|_\infty\|\psi\|_2$
entails that $\|g(-i\nb)\|_\infty \leq \|g\|_\infty$.

\begin{thm}
\label{th:compacite}
Let $f\in \A$ and $\la \notin \spec \Dslash$. Then, if
$R_\Dslash(\la)$ is the resolvent operator of $\Dslash$, then
$\pi^\th(f)\, R_\Dslash(\la)$ is compact.
\end{thm}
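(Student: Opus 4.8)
The strategy is to reduce everything to a factorization of $f$ and the Birman--Solomjak/Simon estimates for operators of the form $a(x)\,g(-i\nabla)$ quoted in Theorem (Birman--Solomjak) above. First I would observe that it suffices to treat the reduced operator on $\H_r=L^2(\R^{2N})$, since $\pi^\th(f)$ and $R_\Dslash(\la)$ act as $L^\th_f\otimes 1_{2^N}$ and a $2^N\times 2^N$ matrix of scalar operators respectively, and a matrix of compact operators is compact. Next, since $\Dslash^2 = -\Delta\otimes 1_{2^N}$, one can write $R_\Dslash(\la)=(\Dslash+\la)(\Dslash^2-\la^2)^{-1}$, so the resolvent is a matrix whose entries are of the form $P(-i\nabla)\,h(-i\nabla)$ with $h(\xi)=(|\xi|^2-\la^2)^{-1}$ and $P$ a first-order polynomial; in particular each entry is $g(-i\nabla)$ for a function $g\in L^2(\R^{2N})+L^\infty(\R^{2N})$ when $2N\ge 3$ (and for $2N=2$ one splits off a bounded piece near the sphere $|\xi|^2=\la^2$ and keeps an $L^2$-tail). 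The point is that $g$ decays like $|\xi|^{-1}$ at infinity, hence $g\chi_k\in L^2$ uniformly and in fact $g\in\ell^\infty(L^2(\R^{2N}))$ for $2N\ge 3$; more care with the local singularity is needed for small $2N$, handled by the usual splitting $g=g_1+g_2$ with $g_1$ compactly supported and bounded and $g_2\in L^2$.

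Then I would use the factorization property of the Schwartz Moyal algebra, Proposition \ref{pr:factorization}: write $f=f_1\star_\th f_2$ with $f_1,f_2\in\SS$, so that $L^\th_f=L^\th_{f_1}L^\th_{f_2}$. By Lemma \ref{lm:cojoreg}, left Moyal multiplication by a Schwartz function is the pseudodifferential operator on $\R^{2N}$ with symbol $f_i(x-\tfrac\th2 S\xi)$, which is again of the multiplicative type controlled by the theory: it is a product of the form (Schwartz function of $x$)$\times$(Schwartz function of $-i\nabla$) after a linear change of the Fourier variable, so in particular $L^\th_{f_i}\in\ell^2(L^2)$-type bounds hold. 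Composing: $L^\th_f\,g(-i\nabla) = L^\th_{f_1}\bigl(L^\th_{f_2}\,g(-i\nabla)\bigr)$, and $L^\th_{f_2}g(-i\nabla)$ is bounded (Lemma \ref{lm:norm-HS} and $\|g(-i\nabla)\|\le\|g\|_\infty$ after the splitting), while $L^\th_{f_1}$ itself, being a $\PsiDO$ with a Schwartz symbol in~$x$, composed with a negative-order symbol, is compact. Alternatively, and perhaps cleaner: apply the Birman--Solomjak theorem directly to $L^\th_{f_i}$, viewing its symbol $f_i(x-\tfrac\th2 S\xi)$ as lying in the appropriate mixed $\ell^p(L^2)$ space, to conclude $L^\th_{f_i}$ itself is in some Schatten class $\L^p$, whence $\pi^\th(f)R_\Dslash(\la)\in\L^p(\H_r)\subset\K(\H_r)$.

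The main obstacle I anticipate is the low-dimensional case $2N=2$, where $h(\xi)=(|\xi|^2-\la^2)^{-1}$ is not in $L^2$ near the sphere and $P(-i\nabla)h(-i\nabla)$ is only in $L^2+L^\infty$ in the Fourier variable; there one cannot directly invoke the $\L^p$ estimate for the resolvent factor and must instead keep the resolvent bounded and extract all the smoothing/compactness from the factor $L^\th_f$. This is exactly why the factorization $f=f_1\star_\th f_2$ is essential rather than a convenience: one needs to peel off a genuinely trace-class (or Hilbert--Schmidt) piece $L^\th_{f_1}$ and leave $L^\th_{f_2}\,R_\Dslash(\la)$ merely bounded. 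A secondary subtlety is bookkeeping the tensor factor $\C^{2^N}$ and the matrix entries of $R_\Dslash(\la)$ coming from $(\Dslash+\la)$: the odd-order piece $-i\ga^\mu\partial_\mu(\Dslash^2-\la^2)^{-1}$ has entries $\partial_\mu(-\Delta-\la^2)^{-1}$ whose Fourier symbol $\xi_\mu(|\xi|^2-\la^2)^{-1}$ decays like $|\xi|^{-1}$, so the same $L^2$-tail analysis applies uniformly in the components. Once compactness of $L^\th_{f}R_\Dslash(\la)$ is established for $f\in\SS$, it is automatic for all $f\in\A=(\SS,\star_\th)$ since that is already the whole algebra, and the statement is proved.
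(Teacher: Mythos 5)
There is a genuine gap, and it sits exactly where the paper's Remark after Lemma \ref{lm:cojoreg} warns: \emph{a regularizing $\PsiDO$ on the Moyal plane need not be compact}. Your argument factorizes $f=f_1\mop f_2$ and then claims that $L^\th_{f_1}$, ``being a $\PsiDO$ with a Schwartz symbol'', is compact --- or, in your alternative, lies in a Schatten class by Birman--Solomjak applied to its symbol. Both claims are false. Take $f_1=f_{00}$, the Gaussian of Definition \ref{df:basis-fns}: since $f_{mn}\mop f_{kl}\propto\delta_{nk}f_{ml}$, the operator $L^\th_{f_{00}}$ is a nonzero multiple of the orthogonal projection onto the infinite-dimensional closed span of $\set{f_{0l}}_{l\in\N^N}$, hence not compact, let alone Schatten (Proposition \ref{pr:algebra}, which identifies the multiplier algebra with all of $\L(L^2(\R^N))$, already signals this). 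Birman--Solomjak does not apply to $L^\th_{f_1}$ alone because its symbol $f_1(x-\tfrac{\th}{2}S\xi)$ is not of the product form $a(x)\,g(\xi)$ and has no decay along the affine subspace $x=\tfrac{\th}{2}S\xi$. So the step ``compact $L^\th_{f_1}$ times bounded $L^\th_{f_2}\,g(-i\nabla)$'' collapses, and with it the proof. (Two smaller points: the factorization $f=f_1\mop f_2$ is not needed for this theorem --- the paper only invokes it later, for the trace-class estimates of Lemma \ref{lm:existence}; and your worry about a singularity of $(|\xi|^2-\la^2)^{-1}$ on a sphere is moot, since $\la\notin\spec\Dslash=\R$ forces $|\xi|^2-\la^2\neq0$ for all real $\xi$.)

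The correct strategy is never to detach $L^\th_f$ from the Fourier multiplier: compactness must be extracted from the \emph{pair}. The paper computes the integral kernel of $L^\th_f\,g(-i\nabla)$ explicitly and finds it Hilbert--Schmidt with $\|L^\th_f\,g(-i\nabla)\|_2=(2\pi)^{-N}\|f\|_2\|g\|_2$ for $f,g\in L^2$ (Lemma \ref{lm:HSO}), then interpolates to get $L^\th_f\,g(-i\nabla)\in\L^p$ for $g\in L^p$, $2\leq p<\infty$ (Lemma \ref{lm:schatten}). To kill the matrix structure of the resolvent it uses the elementary fact that $T$ is compact iff $TT^*$ is (Lemma \ref{lm:comp}), reducing to $\pi^\th(f)\,|R_\Dslash(i\mu)|^2=L^\th_f(-\Delta+\mu^2)^{-1}\ox1_{2^N}$, whose multiplier $(|\xi|^2+\mu^2)^{-1}$ lies in $L^p(\R^{2N})$ for every $p>N$. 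Your decomposition $R_\Dslash(\la)=(\Dslash+\la)(\Dslash^2-\la^2)^{-1}$ can in fact be salvaged without the $TT^*$ trick and without any factorization of $f$: each matrix entry is a Fourier multiplier by a linear combination of $\xi_\mu(|\xi|^2-\la^2)^{-1}$ and $\la(|\xi|^2-\la^2)^{-1}$, which are smooth, bounded, and $O(|\xi|^{-1})$, hence in $L^p(\R^{2N})$ for all $p>2N$; Lemma \ref{lm:schatten} then puts each entry of $\pi^\th(f)R_\Dslash(\la)$ in $\L^p\subset\K(\H_r)$ directly. Either way, the compactness comes from the $L^p$ membership of the resolvent's symbol fed into the mixed estimate, not from any compactness of $L^\th_f$ by itself.
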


Thanks to the first resolvent equation, $R_\Dslash(\la) =
R_\Dslash(\la') + (\la' - \la) R_\Dslash(\la) R_\Dslash(\la')$, we may
assume that $\la = i\mu$ with $\mu \in \R^*$. The theorem will follow
from a series of lemmas interesting in themselves.

\begin{lem}
\label{lm:comp}
If $f \in \SS$ and $0 \neq \mu \in \R$, then
$$
\pi^\th(f)  R_\Dslash(i\mu)    \in \K(\H)
\iff
\pi^\th(f) |R_\Dslash(i\mu)|^2 \in \K(\H).
$$
\end{lem}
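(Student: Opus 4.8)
The claim is an equivalence of compactness, so the natural strategy is to show that the two operators $\pi^\th(f)R_\Dslash(i\mu)$ and $\pi^\th(f)|R_\Dslash(i\mu)|^2$ differ by a bounded operator, modulo a compact operator, in a way that can be iterated. First I would exploit the algebra factorization property of Proposition~\ref{pr:factorization}: write $f = g \mop h$ with $g,h \in \SS$, so that $\pi^\th(f) = \pi^\th(g)\pi^\th(h)$. Since $\pi^\th(h)$ is bounded (Lemma~\ref{lm:norm-HS}), it suffices to analyze $\pi^\th(g)R_\Dslash(i\mu)$. The key algebraic identity is $|R_\Dslash(i\mu)|^2 = R_\Dslash(i\mu)^* R_\Dslash(i\mu) = (\Dslash^2 + \mu^2)^{-1}$, whereas $R_\Dslash(i\mu) = (\Dslash - i\mu)^{-1} = (\Dslash + i\mu)(\Dslash^2 + \mu^2)^{-1}$. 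Thus $\pi^\th(f)R_\Dslash(i\mu) = \pi^\th(f)(\Dslash + i\mu)|R_\Dslash(i\mu)|^2$, and the whole point is to move the unbounded factor $(\Dslash + i\mu)$ past things in a controlled manner.

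Second, I would handle the "$\Leftarrow$" direction: assume $\pi^\th(f)|R_\Dslash(i\mu)|^2 \in \K(\H)$ and write, using $f = g\mop h$ again,
\[
\pi^\th(f)R_\Dslash(i\mu) = \pi^\th(g)\,[\pi^\th(h)(\Dslash + i\mu)]\,|R_\Dslash(i\mu)|^2 .
\]
Now $[\Dslash, \pi^\th(h)] = \pi^\th(\Dslash(h))$ is bounded (the boundedness of commutators established just before this subsection), so $\pi^\th(h)(\Dslash + i\mu) = (\Dslash + i\mu)\pi^\th(h) - \pi^\th(\Dslash(h))$ on the appropriate domain. Hence
\[
\pi^\th(f)R_\Dslash(i\mu) = \pi^\th(g)(\Dslash + i\mu)\pi^\th(h)|R_\Dslash(i\mu)|^2 - \pi^\th(g)\pi^\th(\Dslash(h))|R_\Dslash(i\mu)|^2 .
\]
The second term is $\pi^\th(g)\pi^\th(\Dslash(h))|R_\Dslash(i\mu)|^2$; since $\Dslash(h) \in \SS$, one rewrites $\pi^\th(g)\pi^\th(\Dslash(h)) = \pi^\th(g \mop \Dslash(h))$ with $g\mop\Dslash(h)\in\SS$, and by the factorization property and the hypothesis applied to this Schwartz element, it is compact. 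For the first term I would commute $(\Dslash + i\mu)$ back to the left of $\pi^\th(g)$, again at the cost of a bounded commutator, landing on $(\Dslash + i\mu)\pi^\th(g)\pi^\th(h)|R_\Dslash(i\mu)|^2 = (\Dslash + i\mu)\pi^\th(f)|R_\Dslash(i\mu)|^2$; but $(\Dslash+i\mu)|R_\Dslash(i\mu)|^2 = R_\Dslash(i\mu)$ is bounded, and I reassemble to see that this term is $\pi^\th(f)R_\Dslash(i\mu)$ up to compacts plus a manipulation — so the argument must be organized so that the bounded resolvent absorbs the unbounded Dirac factor before the inductive step closes. The cleaner route is: $\pi^\th(f)R_\Dslash(i\mu) = \pi^\th(g)R_\Dslash(-i\mu)^{-1}\cdot R_\Dslash(-i\mu)|R_\Dslash(i\mu)|^2\pi^\th(h)$-type rearrangements are awkward, so instead I would use that $R_\Dslash(i\mu) = |R_\Dslash(i\mu)|\,U$ for a bounded (unitary on the range) $U$ via polar decomposition and the functional identity $|R_\Dslash(i\mu)| = (\Dslash^2+\mu^2)^{-1/2}$; then $\pi^\th(f)|R_\Dslash|^2 \in \K$ iff $\pi^\th(f)|R_\Dslash| \in \K$ (square root trick: if $T^*T$ compact with $T = \pi^\th(f)|R_\Dslash|$ then $T$ compact), and the same square-root argument bridges $|R_\Dslash|$ and $|R_\Dslash|^2$.

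Third, the "$\Rightarrow$" direction is the easy half: if $\pi^\th(f)R_\Dslash(i\mu)$ is compact, then so is its adjoint $R_\Dslash(i\mu)^*\pi^\th(f)^* = R_\Dslash(-i\mu)\pi^\th(f^*)$, and $\pi^\th(f)|R_\Dslash(i\mu)|^2 = \pi^\th(f)R_\Dslash(i\mu)\,R_\Dslash(i\mu)^*$ is then a compact operator composed with the bounded $R_\Dslash(i\mu)^*$, hence compact — actually even simpler, $\pi^\th(f)|R_\Dslash(i\mu)|^2 = [\pi^\th(f)R_\Dslash(i\mu)]\,[R_\Dslash(i\mu)]^*$ is a product of a compact and a bounded operator. \textbf{The main obstacle} is the "$\Leftarrow$" direction, specifically controlling the unbounded Dirac operator when passing from the square of the resolvent to a single resolvent: one must repeatedly insert and extract bounded commutators $[\Dslash,\pi^\th(\cdot)]$ and Schwartz-function factorizations, verifying at each stage that the relevant products of Moyal multiplication operators remain of the hypothesized form (i.e., $\pi^\th$ of a Schwartz function times a resolvent power). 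The square-root lemma (if $T^*T \in \K(\H)$ then $T \in \K(\H)$, valid since $T$ is then the norm-limit of $T(T^*T + 1/n)^{-1/2}(T^*T)^{1/2}$ with the last factor compact) together with $\Dslash$-commutator boundedness is what makes this go through, and assembling these without circularity in the bookkeeping is the delicate point.
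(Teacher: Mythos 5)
Your closing argument (polar decomposition plus the square-root trick) is correct and is essentially the paper's proof: everything rests on the criterion that an operator $T$ is compact if and only if $TT^*$ is. The paper applies it in one step to $T=\pi^\th(f)R_\Dslash(i\mu)$, using normality of the resolvent to write $TT^*=\pi^\th(f)\,|R_\Dslash(i\mu)|^2\,\pi^\th(f^*)$, which is compact (compact times bounded) under the hypothesis; the other direction is, as you say, just compact times bounded. Your detour through $|R_\Dslash(i\mu)|=(\Dslash^2+\mu^2)^{-1/2}$ and $R_\Dslash(i\mu)=|R_\Dslash(i\mu)|\,U$ with $U$ unitary is a harmless variant of the same idea; note only that when you invoke the square-root lemma for $T=\pi^\th(f)|R_\Dslash(i\mu)|$ you should apply it to $TT^*=\pi^\th(f)|R_\Dslash(i\mu)|^2\pi^\th(f^*)$ (or, equivalently, to the adjoint $T^*$), since $T^*T=|R_\Dslash(i\mu)|\,\pi^\th(f^*)\pi^\th(f)\,|R_\Dslash(i\mu)|$ is not directly seen to be compact from the hypothesis.

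You should, however, drop the first route (factorization $f=g\mop h$ plus commutators with $\Dslash$): as written it is not a proof of the stated lemma, because at the step where you dispose of $\pi^\th(g)\pi^\th(\Dslash(h))\,|R_\Dslash(i\mu)|^2$ you invoke ``the hypothesis applied to this Schwartz element'', while the hypothesis concerns only the fixed $f$, not $g\mop\del_\mu h$. Compactness of $\pi^\th(k)\,|R_\Dslash(i\mu)|^2$ for arbitrary $k\in\SS$ is precisely what Lemma~\ref{lm:neo-schatten} establishes afterwards, so it cannot be assumed inside this lemma; and since your square-root/polar-decomposition argument needs none of that machinery, the factorization and commutator bookkeeping you describe as ``the main obstacle'' is in fact unnecessary here.
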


\begin{proof}
We know that $L^\th(f)^* = L^\th(f^*)$. The ``only if'' part is
obvious since $R_\Dslash(i\mu)$ is a bounded normal operator.
Conversely, if $\pi^\th(f)|R_\Dslash(i\mu)|^2$ is compact, then the operator 
$\pi^\th(f)|R_\Dslash(i\mu)|^2 \pi^\th(f^*)$ is compact. Since
an operator $T$ is compact if and only if $TT^*$ is compact, the proof
is complete.
\end{proof}

The usefulness of this lemma stems from the diagonal nature of the
action of the operator $\pi^\th(f)|R_\Dslash(i\mu)|^2$ on $\H = \H_r \ox \C^{2^N}$;
so in our arguments it is feasible to replace $\H$ by $\H_r$,
$\pi^\th(f)$ by $L^\th_f$, and to use the scalar Laplacian
$-\Delta := -\sum_{\mu=1}^{2N} \del_\mu^2$ instead of the square of
the Dirac operator $\Dslash^2$.

\begin{lem}
\label{lm:HSO}
When $f,g \in \H_r$, $L^\th_f \,g(-i\nb)$ is a Hilbert--Schmidt
operator such that, for all real~$\th$,
$$
\|L^\th_f \,g(-i\nb)\|_2 = \|L_f^0 \,g(-i\nb)\|_2
= (2\pi)^{-N} \|f\|_2\,\|g\|_2.
$$
\end{lem}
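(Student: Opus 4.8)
The plan is to exhibit the Schwartz kernel of $T_\th := L^\th_f\, g(-i\nb)$ on $\H_r = L^2(\R^{2N})$ explicitly and to recognize that it lies in $L^2(\R^{2N}\times\R^{2N})$ with $L^2$-norm exactly $(2\pi)^{-N}\norm{f}_2\norm{g}_2$; since a bounded operator on $L^2$ is Hilbert--Schmidt precisely when its kernel is square-integrable, and then $\norm{T_\th}_2$ equals the $L^2$-norm of that kernel, this settles the lemma, the right-hand side being visibly independent of~$\th$ (so in particular equal to its value at $\th=0$).

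First I would write down the kernel. By Lemma \ref{lm:cojoreg} (equivalently, straight from \eqref{eq:moyal-prod-slick} with $\Th=\th S$ after the substitution $y=x+t$), $L^\th_f$ is the pseudodifferential operator with symbol $f(x-\tfrac\th2 S\xi)$, while $g(-i\nb)$ is the Fourier multiplier with the $x$-independent symbol $g(\xi)$. Because the symbol of the right-hand factor does not depend on~$x$, the composition is $Op(a)$ \emph{exactly}, with $a(x,\xi):=f(x-\tfrac\th2 S\xi)\,g(\xi)$, and hence, as in \eqref{kernel},
\[
K_\th(x,y) = (2\pi)^{-2N}\int_{\R^{2N}} e^{i\xi\cdot(x-y)}\, f\big(x-\tfrac\th2 S\xi\big)\, g(\xi)\,d^{2N}\xi .
\]
Then I would compute $\iint |K_\th(x,y)|^2\,d^{2N}x\,d^{2N}y$. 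To keep all integrals absolutely convergent I first take $f,g\in\SS$; the general $L^2$ case follows by density, since the bilinear map $(f,g)\mapsto K_\th$ is, by the very identity being proved, bounded from $L^2\times L^2$ into $L^2(\R^{2N}\times\R^{2N})$, and one checks that its continuous extension coincides on the dense domain $\SS\subset\H_r$ with $L^\th_f\, g(-i\nb)$ (using Lemma \ref{lm:norm-HS} for the $L^2$-boundedness of $L^\th_f$), so that extension is the closure of $L^\th_f\,g(-i\nb)$. For Schwartz $f,g$: fixing $x$ and applying Plancherel in the $\xi$-variable to $\xi\mapsto f(x-\tfrac\th2 S\xi)\,g(\xi)\,e^{i\xi\cdot x}$ gives $\int|K_\th(x,y)|^2\,d^{2N}y = (2\pi)^{-2N}\int |f(x-\tfrac\th2 S\xi)|^2\,|g(\xi)|^2\,d^{2N}\xi$; integrating over $x$ and using translation invariance of Lebesgue measure ($\int|f(x-\tfrac\th2 S\xi)|^2\,d^{2N}x = \norm{f}_2^2$ for each fixed $\xi$) leaves $(2\pi)^{-2N}\norm{f}_2^2\,\norm{g}_2^2$, with no residual dependence on~$\th$. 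Taking square roots yields $\norm{L^\th_f\, g(-i\nb)}_2 = (2\pi)^{-N}\norm{f}_2\,\norm{g}_2 = \norm{L^0_f\, g(-i\nb)}_2$.

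The computation is pure Fubini/Plancherel bookkeeping; the only points needing real care are (a) the passage from Schwartz to $L^2$ data, handled by reducing to $f,g\in\SS$ and extending by density, and (b) verifying that the square-integrable kernel $K_\th$ genuinely represents $L^\th_f\,g(-i\nb)$ rather than some other extension — the place where Lemma \ref{lm:norm-HS} and the fact that $g(-i\nb)$ is a priori defined only on a dense domain (as $g$ need not be bounded) are invoked. Neither is a serious obstacle; the substance of the lemma is simply the explicit $\th$-independent value of the kernel norm, which comes down to the $x$-integral being a mere translation.
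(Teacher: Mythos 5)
Your proof is correct and is essentially the paper's own argument: both compute the Schwartz kernel $K_\th(x,y)=(2\pi)^{-2N}\int f(x-\tfrac{\th}{2}S\xi)\,g(\xi)\,e^{i\xi\cdot(x-y)}\,d^{2N}\xi$ from the symbol of $L^\th_f\,g(-i\nb)$ and then evaluate $\iint|K_\th|^2$ by Plancherel in $\xi$ followed by translation invariance in $x$. The extra care you take about extending from Schwartz to $L^2$ data is a reasonable refinement that the paper simply omits.
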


\begin{proof}
To prove that an operator $A$ with integral kernel $K_A$ is
Hilbert--Schmidt, it suffices to check that
$\int |K_A(x,y)|^2 \,dx\,dy$ is finite, and this will be equal to
$\|A\|_2^2$~\cite[Thm.~2.11]{SimonTrace}. So we compute
$K_{L^\th(f)\,g(-i\nb)}$. In view of Lemma~\ref{lm:cojoreg},
$$
[L^\th(f)\,g(-i\nb)\psi](x) = \tfrac{1}{(2\pi)^{2N}} \iint
f(x - \tfrac{\th}2 S\xi) \,g(\xi) \psi(y)\, e^{i\xi\.(x-y)}
\,d^{2N}\xi \,d^{2N}y.
$$
Thus
$$
K_{L^\th(f)\,g(-i\nb)}(x,y) = \tfrac{1}{(2\pi)^{2N}} \int
f(x - \tfrac{\th}2 S\xi)\, g(\xi) \,e^{i\xi\.(x-y)} \,d^{2N}\xi,
$$
and $\int |K_{L^\th(f)\,g(-i\nb)}(x,y)|^2 \,dx\,dy$ is given by
\begin{align*}
\tfrac{1}{(2\pi)^{4N}}
&\int\cdots\int \bar{f}(x - \tfrac{\th}2 S\xi)\, \bar{g}(\xi)\,
f(x - \tfrac{\th}2 S\zeta)\, g(\zeta)\, e^{i(x-y)\.(\zeta-\xi)}
\,d^{2N}x \,d^{2N}y \,d^{2N}\zeta \,d^{2N}\xi
\\
&\quad = \tfrac{1}{(2\pi)^{2N}} \iint |f(x - \tfrac{\th}2 S\xi)|^2\, |g(\xi)|^2 \,d^{2N}x \,d^{2N}\xi
= (2\pi)^{-2N} \|f\|_2^2 \,\|g\|_2^2 < \infty.
\end{align*}
\end{proof}

\begin{rem}
As a consequence, we get
$$
\|.\|_2\mbox{-}\lim_{\th\to 0} L^\th_f \,g(-i\nb) = L_f^0 \,g(-i\nb).
$$
\end{rem}

\begin{lem}
\label{lm:schatten}
If $f \in \H_r$ and $g \in L^p(\R^{2N})$ with $2 \leq p < \infty$,
then $L^\th_f\,g(-i\nb) \in \L^p(\H_r)$ and
$$
\|L^\th_f\,g(-i\nb)\|_p
\leq (2\pi)^{-N(1/2+1/p)} \th^{-N(1/2-1/p)} \,\|f\|_2\,\|g\|_p.
$$
\end{lem}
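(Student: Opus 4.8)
The plan is to interpolate between the Hilbert--Schmidt estimate of Lemma~\ref{lm:HSO} (the case $p=2$) and the operator-norm estimate $\|L^\th_f\,g(-i\nb)\|_\infty \leq \|L^\th_f\|_\infty\,\|g\|_\infty \leq (2\pi\th)^{-N/2}\|f\|_2\,\|g\|_\infty$ (the ``case $p=\infty$'', using Lemma~\ref{lm:norm-HS} and $\|g(-i\nb)\|_\infty\leq\|g\|_\infty$). Both endpoint bounds are bilinear in $(f,g)$ with $f$ fixed in $\H_r=L^2(\R^{2N})$ and $g$ ranging over $L^2$, respectively $L^\infty$; the target estimate should then fall out of the complex interpolation method (Stein interpolation / Riesz--Thorin for the Schatten scale).

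First I would fix $f\in\H_r$ and consider the linear map $T\: g\mapsto L^\th_f\,g(-i\nb)$, regarded as a map from $L^p(\R^{2N})$ into the Schatten ideal $\L^p(\H_r)$ for the two endpoint values $p=2$ and $p=\infty$ (with $\L^\infty(\H_r)=\K(\H_r)$ as declared in the text). Lemma~\ref{lm:HSO} gives $\|Tg\|_2 = (2\pi)^{-N}\|f\|_2\,\|g\|_2$, so $\|T\|_{L^2\to\L^2} = (2\pi)^{-N}\|f\|_2$. For the other endpoint, $\|Tg\|_\infty = \|L^\th_f\,g(-i\nb)\|_{\mathrm{op}} \leq (2\pi\th)^{-N/2}\|f\|_2\,\|g\|_\infty$, so $\|T\|_{L^\infty\to\L^\infty} \leq (2\pi\th)^{-N/2}\|f\|_2$. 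Writing $\tfrac1p = \tfrac{1-\vartheta}{\infty} + \tfrac{\vartheta}{2}$, i.e. $\vartheta = 2/p$, the interpolation theorem for the Schatten classes (the scale $\L^p$ being obtained by complex interpolation between $\K$ and $\L^2$, see e.g. \cite[Chap.~2]{SimonTrace}) yields
\begin{align*}
\|Tg\|_p \;\leq\; \|T\|_{L^\infty\to\L^\infty}^{1-\vartheta}\,\|T\|_{L^2\to\L^2}^{\vartheta}\,\|g\|_p
\;=\; \big((2\pi\th)^{-N/2}\big)^{1-2/p}\,\big((2\pi)^{-N}\big)^{2/p}\,\|f\|_2\,\|g\|_p.
\end{align*}
Collecting the powers of $2\pi$ gives the exponent $-\tfrac N2(1-\tfrac2p) - N\cdot\tfrac2p = -N(\tfrac12+\tfrac1p)$, and collecting the powers of $\th$ gives $-\tfrac N2(1-\tfrac2p) = -N(\tfrac12-\tfrac1p)$, which is exactly the claimed bound.

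To make this rigorous one needs the analytic ingredients of Stein's interpolation theorem applied to an analytic family $T_z$ obtained by inserting $|g|^{z}\,\mathrm{sgn}(g)$ (the usual trick to move between $L^p$ spaces), checking that $z\mapsto \mathrm{Tr}(T_z\,B)$ is analytic and of admissible growth on the strip for $B$ in a dense subclass of finite-rank operators, and using the two endpoint bounds on the edges of the strip; membership of $T g$ in $\L^p(\H_r)$ for $2\le p<\infty$ then follows since $\K(\H_r)\cap\L^2(\H_r)$ is dense in $\L^p(\H_r)$ and $T$ maps $L^2\cap L^p$ into $\L^2\subset\L^p$. The main obstacle I anticipate is purely bookkeeping: one must be careful that the Schatten norms interpolate with the right constants (Riesz--Thorin for Schatten classes holds with constant $1$, which is what makes the sharp powers of $2\pi$ and $\th$ come out), and that the analytic family is set up so the interpolation parameter $\vartheta$ matches $2/p$ rather than $1-2/p$. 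No genuinely hard analysis is involved beyond invoking the standard interpolation machinery already cited in the text.
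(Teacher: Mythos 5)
Your proposal is correct and follows essentially the same route as the paper: the paper also reduces to $g\geq 0$, forms the analytic family $F_p(z)=L^\th_f\,g^{zp}(-i\nabla)$ on the strip $0\leq\Re z\leq\tfrac12$, bounds the edges by the operator-norm estimate (via Lemma~\ref{lm:norm-HS}) and the Hilbert--Schmidt estimate (Lemma~\ref{lm:HSO}), and applies Stein interpolation at $z=1/p$, finishing by density of compactly supported bounded functions in $L^p$. Your constant bookkeeping matches the paper's exactly.
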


\begin{proof}
The case $p = 2$ (with equality) is just the previous lemma. For
$p = \infty$, we estimate $\|L^\th_f\,g(-i\nb)\|_\infty \leq
(2\pi\th)^{-N/2} \|f\|_2 \,\|g\|_\infty$: since
$\|L^\th_f\,g(-i\nb)\|_\infty \leq
\|L^\th_f\|_\infty \,\|g(-i\nb)\|_\infty$, this follows from
Lemma~\ref{lm:norm-HS} and a previous remark.

Now use complex interpolation for $2 < p < \infty$. For that, we first
note that we may suppose $g \geq 0$: defining the function $a$ with
$|a| = 1$ and $g = a|g|$, we see that
\begin{align*}
\|L^\th_f\,g(-i\nb)\|_2^2
&= \Tr(|L^\th_f\,g(-i\nb)|^2)
= \Tr(\bar{g}(-i\nb)\,L^\th_{f^*}\,L^\th_f\,g(-i\nb))
\\
&= \Tr(|g|(-i\nb)\,\bar{a}(-i\nb)\,L^\th_{f^*}
\,L^\th_f\,a(-i\nb)\,|g|(-i\nb))
\\
&= \Tr(\bar{a}(-i\nb)\,|g|(-i\nb)\,L^\th_{f^*}
\,L^\th_f\,|g|(-i\nb)\,a(-i\nb))
\\
&= \Tr(|L^\th_f\,|g|(-i\nb)|^2) = \|L^\th_f\,|g|(-i\nb)\|_2^2,
\end{align*}
and
\begin{align*}
\|L^\th_f\,g(-i\nb)\|_\infty
&= \|L^\th_f\,a(-i\nb)\,|g|(-i\nb)\|_\infty
=\|L^\th_f\,|g|(-i\nb)\,a(-i\nb)\|_\infty
\\
&\leq \|L^\th_f\,|g|(-i\nb)\|_\infty \,\|a(-i\nb)\|_\infty
= \|L^\th_f\,|g|(-i\nb)\|_\infty.
\end{align*}
Secondly, for any positive, bounded function $g$ with compact support,
we define the maps:
$$
F_p : z \mapsto  L^\th_f\, g^{zp}(-i\nb) : S = \set{z\in\C \,\, \vert \,\, 0 \leq \Re z \leq \half} \to \L(\H_r).
$$
For all $y\in \R$,
$F_p(iy) = L^\th_f \, g^{iyp}(-i\nb) \in \L^\infty(\H_r)$ by
Lemma~\ref{lm:HSO} since $g$, being compactly supported, lies in
$\H_r$. Moreover, $\|F_p(iy)\|_\infty \leq(2\pi\th)^{-N/2} \|f\|_2$.

Also, by Lemma~\ref{lm:HSO}, $F_p(\half+iy) \in \L^2(\H_r)$ and
$\|F_p(\half+iy)\|_2 = (2\pi)^{-N} \|f\|_2 \,\|g^{p/2}\|_2$. Then
complex interpolation (see \cite[Chap.~9]{ReedSII}
and~\cite{SimonTrace}) yields $F(z) \in \L^{1/\Re z}(\H_r)$, for all
$z$ in the strip $S$. Moreover,
$$
\|F_p(z)\|_{1/\Re z}
\leq \|F(0)\|_\infty^{1-2\Re z}\,\|F(\half)\|_2^{2\Re z}
= \|f\|_2 (2\pi\th)^{-\frac{N}{2}(1-2\Re z)}(2\pi)^{-2N\Re z}\,
\|g^{p/2}\|_2^{2\Re z},
$$
and applying this result at $z = 1/p$, we get for such $g$:
$$
\|L^\th_f\,g(-i\nb)\|_p =  \|F(1/p)\|_p
\leq(2\pi)^{-N(1/2+1/p)}\th^{-N(1/2-1/p)}\|f\|_2\,\|g\|_p .
$$
We finish by using the density of compactly supported bounded
functions in $L^p(\R^{2N})$.
\end{proof}

\begin{lem}
\label{lm:neo-schatten}
If $f \in \SS$ and $0 \neq \mu \in \R$, then
$\pi^\th(f)\,|R_\Dslash(i\mu)|^2 \in \L^p$ for $p > N$.
\end{lem}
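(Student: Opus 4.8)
The plan is to reduce the operator $\pi^\th(f)\,|R_\Dslash(i\mu)|^2$, which acts diagonally on $\H=\H_r\ox\C^{2^N}$, to the scalar operator $L^\th_f\,(-\Delta+\mu^2)^{-1}$ on $\H_r$, exactly as was done for the compactness statement: since $\Dslash^2=-\Delta\ox 1_{2^N}$ and $|R_\Dslash(i\mu)|^2=(\Dslash^2+\mu^2)^{-1}$, the $\L^p$ membership of the full operator is equivalent to that of $L^\th_f\,(-\Delta+\mu^2)^{-1}$ on $\H_r$. So it suffices to show $L^\th_f\,g(-i\nb)\in\L^p(\H_r)$ for $g(\xi):=(|\xi|^2+\mu^2)^{-1}$ and $p>N$.

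The key step is then a direct application of Lemma~\ref{lm:schatten}. For $f\in\SS\subset\H_r$, that lemma gives $L^\th_f\,g(-i\nb)\in\L^p(\H_r)$ with norm controlled by $\|f\|_2\,\|g\|_p$, provided $g\in L^p(\R^{2N})$ with $2\le p<\infty$. So I would check exactly when $g(\xi)=(|\xi|^2+\mu^2)^{-1}$ lies in $L^p(\R^{2N})$: near infinity $g(\xi)\sim|\xi|^{-2}$, so in polar coordinates $\int_{|\xi|\ge1}|g(\xi)|^p\,d^{2N}\xi$ behaves like $\int_1^\infty \rho^{2N-1-2p}\,d\rho$, which converges iff $2N-2p<0$, i.e. $p>N$; and near the origin $g$ is bounded since $\mu\ne0$, so there is no issue there. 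Hence $g\in L^p(\R^{2N})$ precisely for $p>N$, and for any such $p$ with $p\ge2$ Lemma~\ref{lm:schatten} yields $L^\th_f\,g(-i\nb)\in\L^p(\H_r)$. For $N=1$ one notes that $p>N=1$ together with $p\ge2$ just means $p\ge2$, and for $N\ge2$ the constraint $p>N$ already forces $p>2$, so in all cases the hypotheses of Lemma~\ref{lm:schatten} are met. Transporting back through the diagonal identification $\pi^\th(f)\,|R_\Dslash(i\mu)|^2=\big(L^\th_f\,g(-i\nb)\big)\ox 1_{2^N}$ (up to the obvious bounded matrix factors), and using that $\L^p(\H_r\ox\C^{2^N})$ membership is unaffected by tensoring with a finite-dimensional factor, gives $\pi^\th(f)\,|R_\Dslash(i\mu)|^2\in\L^p(\H)$ for $p>N$.

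I do not anticipate a serious obstacle here: the statement is essentially a corollary of Lemma~\ref{lm:schatten}, and the only genuine verification is the elementary integrability computation for $g(\xi)=(|\xi|^2+\mu^2)^{-1}$ at infinity. The one point requiring a little care is the passage from a general $\la\notin\spec\Dslash$ to $\la=i\mu$ with $\mu\in\R^*$ — but this was already handled in the remarks following Theorem~\ref{th:compacite} via the first resolvent equation $R_\Dslash(\la)=R_\Dslash(\la')+(\la'-\la)R_\Dslash(\la)R_\Dslash(\la')$, which shows the ideal property is insensitive to the choice of spectral parameter since the extra factors are bounded; so I would simply invoke that reduction. The remaining bookkeeping — that $\Dslash^2=-\Delta\ox1_{2^N}$, that $|R_\Dslash(i\mu)|^2$ acts as a Fourier multiplier on each spinor component, and that the matrix factors from $J$ and $\chi$ are bounded — is routine and parallels the proof of Theorem~\ref{th:compacite}.
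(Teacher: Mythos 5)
Your proof is correct and follows essentially the same route as the paper's: one writes the operator exactly as $L^\th_f\,(-\Delta+\mu^2)^{-1}\ox 1_{2^N}$, applies Lemma~\ref{lm:schatten} with $g(\xi)=(|\xi|^2+\mu^2)^{-1}$, and checks that $g\in L^p(\R^{2N})$ precisely for $p>N$. Your side remark about the hypothesis $p\ge 2$ in Lemma~\ref{lm:schatten} is well taken --- strictly speaking the argument (the paper's included) only yields $p\ge\max(2,N)$ when $N=1$ --- but this is harmless for the application, since only compactness is needed downstream and $\L^p\subset\L^q$ for $p\le q$.
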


\begin{proof} We see that
$$
\pi^\th(f)\,|R_\Dslash(i\mu)|^2
= (L^\th_f \ox 1_{2^N})\,(\Dslash-i\mu)^{-1} (\Dslash+i\mu)^{-1}
= L^\th_f \,(-\del^\nu\del_\nu +\mu^2)^{-1} \ox 1_{2^N}.
$$
So this operator acts diagonally on $\H_r \ox \C^{2^N}$ and
Lemma~\ref{lm:schatten} implies that
$$
\bigl\| L^\th_f \,(-\del^\nu\del_\nu +\mu^2)^{-1} \bigr\|_p
\leq (2\pi)^{-N(1/2+1/p)}\th^{-N(1/2-1/p)}\,\|f\|_2
\biggl(\int \tfrac{d^{2N}\xi}{(\xi^\nu\xi_\nu + \mu^2)^p}\biggr)^{1/p},
$$
which is finite for $p > N$.
\end{proof}

\begin{proof}[Proof of Theorem~\ref{th:compacite}]
By Lemma~\ref{lm:comp}, it was enough to prove that
$\pi^\th(f)\,|R_\Dslash(i\mu)|^2$ is compact for a nonzero
real~$\mu$.
\end{proof}

\subsubsection{Spectral dimension of the Moyal planes}

\begin{thm}
\label{th:dim-spectrale}
The spectral dimension of the Moyal $2N$-plane spectral triple is~$2N$.
\end{thm}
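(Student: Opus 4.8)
The plan is to run, in the twisted setting, the same argument that established Theorem \ref{th:commuters} for the noncompact commutative geometry on $\R^k$ (Section \ref{The commutative case}), replacing pointwise multiplication $f\cdot$ by left Moyal multiplication $L^\th_f$ and using the Schatten-class estimates of Lemmas \ref{lm:HSO}, \ref{lm:schatten} and \ref{lm:neo-schatten} in place of the classical scattering-theory bounds. The structural fact that makes this transfer possible is that $L^\th_f$ is itself a well-behaved pseudodifferential operator on $\R^{2N}$, with symbol $f(x-\tfrac\th2 S\xi)$ (Lemma \ref{lm:cojoreg}), so the Ces\`aro summability theory of \cite{EGBV} for pseudodifferential operators applies to $\pi^\th(a)(\Dslash^2+\mu^2)^{-s/2}$ exactly as in the untwisted case. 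First I would reduce everything to the reduced Hilbert space: since $\pi^\th(a)=L^\th_a\otimes 1_{2^N}$ acts diagonally and $(\Dslash^2+\mu^2)^{-1}=(-\Delta+\mu^2)^{-1}\otimes 1_{2^N}$ on $\H=\H_r\otimes\C^{2^N}$, Lemma \ref{lm:comp} together with the behaviour of singular values under tensoring by a finite-dimensional factor reduces all Schatten and Dixmier-trace statements about $\pi^\th(a)(1+\Dslash^2)^{-s/2}$ to the corresponding ones for $L^\th_a(1-\Delta)^{-s/2}$ on $\H_r=L^2(\R^{2N})$, up to the overall factor $2^N$.

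For the upper bound, I would take the symbol $g^{(s)}_\mu(\xi):=(|\xi|^2+\mu^2)^{-s/2}$. When $\Re s>2N$ it lies in $L^1\cap L^\infty(\R^{2N})$, so Lemma \ref{lm:schatten} (applied after the factorization $a=a_1\star_\th a_2$, $a_i\in\SS$, of Proposition \ref{pr:factorization}, to pass from the Hilbert--Schmidt case $p=2$ down to trace class) gives $\pi^\th(a)(1+\Dslash^2)^{-s/2}\in\L^1(\H)$. At the borderline $s=2N$ the symbol behaves like $|\xi|^{-2N}$ and thus lies in weak $L^1$, i.e. $L^{1,\infty}(\R^{2N})$, and the weak-type (Cwikel/Birman--Solomjak) refinement of Lemma \ref{lm:schatten} from \cite[Chap.~4]{SimonTrace} yields $\pi^\th(a)(1+\Dslash^2)^{-N}\in\L^{1,\infty}(\H)$. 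Together with Lemma \ref{lm:neo-schatten} this shows the spectral dimension is at most $2N$.

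For the value of the Dixmier trace and the sharpness of the bound I would invoke the Ces\`aro-summability computation of Section \ref{The commutative case} essentially verbatim. The operator $H:=L^\th_a(1-\Delta)^{-N}$ is pseudodifferential of order $-2N$, its spectral density has the Ces\`aro asymptotics $d_H(x,x;\lambda')\sim -\tfrac{2^{N}\Omega_{2N}\,a(x)}{2N(2\pi)^{2N}}\,(\lambda'^{-2}+\cdots)$ — with $a(x)$ appearing because the symbol of $L^\th_a$ equals $a(x)$ at $\xi=0$ — whence $\lambda_m(H)\sim c\,\big(\int a\,d^{2N}x\big)\,m^{-1}$ by the sandwich argument of \cite[Cor.~4.1]{EGBV}, and $\Tr_{Dix}\big(\pi^\th(a)(1+\Dslash^2)^{-N}\big)=c'\int a\star_\th b\,d^{2N}x=c'\int a\,d^{2N}x$ using the tracial property of $\star_\th$ (Lemma \ref{lm:propriete}(v)). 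Choosing $a\geq 0$ with $\int a\,d^{2N}x\neq 0$ makes the Dixmier trace nonzero, so the spectral dimension cannot be strictly smaller than $2N$; combined with the upper bound it is exactly $2N$. The remaining items of a (noncompact) spectral triple — boundedness of $[\Dslash,\pi^\th(a)]$ for $a\in\Aun_\th$, the reality and chirality relations \eqref{commu}, the orientation/finiteness conditions, and $\H^\infty$ being the space of smooth Moyal spinors — are routine and were already noted in Section \ref{The Moyal plane}.

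The main obstacle is the borderline step: upgrading the Schatten bounds $\pi^\th(a)|R_\Dslash(i\mu)|^2\in\L^p$ for $p>N$ (Lemma \ref{lm:neo-schatten}) to the sharp weak-type statement $\pi^\th(a)|R_\Dslash(i\mu)|^{2N}\in\L^{1,\infty}$ together with the identification of the Dixmier trace. This is precisely where one must bring in the weak-type Cwikel/Birman--Solomjak estimates and the spectral-density asymptotics of \cite{EGBV}, and check that the $x$-dependence of the symbol $a(x-\tfrac\th2 S\xi)$ does not spoil the Ces\`aro argument — which it does not, since that argument only feels the radial behaviour at large $|\xi|$ and the ``mass'' $\int a\,d^{2N}x$, both unaffected by the $\th$-twist.
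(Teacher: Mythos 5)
Your upper-bound strategy (reduce to the reduced Hilbert space, then invoke Cwikel/Birman--Solomjak-type estimates at the borderline) is in the spirit of what the paper does, but even there you are too quick: there is no valid endpoint Cwikel estimate of the form ``$f\in L^2$, $g\in$ weak-$L^1$ $\Rightarrow f(x)\,g(-i\nabla)\in\L^{1,\infty}$''. The paper instead proves the symmetrized estimate $\pi^\th(f)(|\Dslash|+\eps)^{-1}\pi^\th(f^*)\in\L^{2N+}(\H)$ (Lemma \ref{lm:Cwikel}, via a dyadic decomposition of the symbol and Fan's inequality), and only reaches $\pi^\th(f)(|\Dslash|+\eps)^{-2N}\in\L^{1+}(\H)$ through the factorization property of $\SS$ combined with the commutator estimates of Lemma \ref{lm:commutateur} (this is Lemma \ref{lm:existence}).

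The genuine gap is in your computation of the Dixmier trace. You propose to run the Ces\`aro spectral-density argument of Section \ref{The commutative case} ``verbatim'' on $H=L^\th_a(-\Delta+\eps^2)^{-N}$, asserting that the twist is harmless because the argument ``only feels the radial behaviour at large $|\xi|$'' and the mass $\int a$. This is backwards. The argument of \cite{EGBV} is local: it computes the pointwise density $d_H(x,x;\la')$ from the large-$|\xi|$ asymptotics of the symbol at fixed $x$, and in the commutative case that symbol is $a(x)(|\xi|^2+\eps^2)^{-N}\sim a(x)|\xi|^{-2N}$, which is what produces the factor $a(x)$. In the Moyal case the symbol is $a(x-\tfrac{\th}{2}S\xi)(|\xi|^2+\eps^2)^{-N}$, and for fixed $x$ and $a\in\SS$ this decays \emph{rapidly} in $\xi$: the classical order-$(-2N)$ part of the symbol at each point vanishes, so the pointwise density argument does not yield $a(x)$ at all; the correct answer $\int a(x)\,d^{2N}x$ only emerges after integrating over $x$ and changing variables $x\mapsto x+\tfrac{\th}{2}S\xi$. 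This is precisely why the paper declares the $\Tr_\La$ computation merely heuristic and switches to a ``subtler strategy'': it proves the exact trace formula for $\Re z>2N$ (Lemma \ref{lm:in-extremis}), reduces $\pi^\th(f)(\Dslash^2+\eps^2)^{-N}$ modulo $\L^1(\H)$ to the positive power $\bigl(\pi^\th(e)(\Dslash^2+\eps^2)^{-1}\pi^\th(e)\bigr)^N$ using the local units $\ee_K$ (Lemma \ref{lm:traceclass}), and then computes the Dixmier trace as the residue $\lim_{s\downarrow1}(s-1)\Tr(\cdot)$ via \cite[Thm.~5.6]{CareyPS}, which also delivers measurability. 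Without a substitute for this step your argument establishes neither the value of the Dixmier trace nor the sharpness of the dimension bound.
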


We shall first establish existence properties. \\
Thanks to
Lemma~\ref{lm:schatten} and because
$[\Dslash,\pi^\th(f)] = -iL^\th(\del_\mu f) \ox \ga^\mu$, we see that
$\pi^\th(f)(\Dslash^2 + \eps^2)^{-l}$ and
$[\Dslash,\pi^\th(f)]\,(\Dslash^2 + \eps^2)^{-l}$ lie in $\L^p(\H)$
whenever $p > N/l$ (we always assume $\eps > 0$). In the next lemma, we
show that $[|\Dslash|,\pi^\th(f)]\, (\Dslash^2 + \eps^2)^{-l}$ has the
same property of summability; this will become our main technical
instrument for the subsection.

\begin{lem}
\label{lm:commutateur}
If $f \in \SS$ and $\half \leq l \leq N$, then
$[|\Dslash|,\pi^\th(f)]\,(\Dslash^2 + \eps^2)^{-l} \in \L^p(\H)$
for $p > N/l$.
\end{lem}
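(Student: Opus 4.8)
The plan is to represent $|\Dslash|$ through resolvents of $\Dslash^{2}$, to push the commutator with $\pi^{\th}(f)$ inside, and then to isolate in every resulting term a factor of the shape ``$L^{\th}(g)\,m(-i\nb)$'' with $g\in\SS$ and $m\in L^{p}(\R^{2N})$, which Lemma~\ref{lm:schatten} places in $\L^{p}(\H_{r})$ (and hence, tensored with the finite spinor matrices, in $\L^{p}(\H)$). Since $0$ lies in the spectrum of $\Dslash^{2}$, I would first split off the infrared part by writing $|\Dslash|=(\Dslash^{2}+\eps^{2})^{1/2}-g(\Dslash^{2})$, where $g(x):=(x+\eps^{2})^{1/2}-x^{1/2}=\tfrac12\int_{0}^{\eps^{2}}(x+s)^{-1/2}\,ds$ is a bounded, decaying (``order $-1$'') function. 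The summand $[\,g(\Dslash^{2}),\pi^{\th}(f)]\,(\Dslash^{2}+\eps^{2})^{-l}$ is treated by the same method and is milder (indeed it gains decay), so the bulk of the work is the term $[(\Dslash^{2}+\eps^{2})^{1/2},\pi^{\th}(f)]\,(\Dslash^{2}+\eps^{2})^{-l}$.

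For that term I would use
\[
(\Dslash^{2}+\eps^{2})^{1/2}=\tfrac1\pi\int_{0}^{\infty}(\Dslash^{2}+\eps^{2})(\Dslash^{2}+\eps^{2}+\mu)^{-1}\,\mu^{-1/2}\,d\mu ,
\]
together with $[(\Dslash^{2}+\eps^{2}+\mu)^{-1},a]=(\Dslash^{2}+\eps^{2}+\mu)^{-1}[\Dslash^{2},a](\Dslash^{2}+\eps^{2}+\mu)^{-1}$, to get
\[
[(\Dslash^{2}+\eps^{2})^{1/2},\pi^{\th}(f)]=\tfrac1\pi\int_{0}^{\infty}\!\mu^{1/2}\,(\Dslash^{2}+\eps^{2}+\mu)^{-1}\,[\Dslash^{2},\pi^{\th}(f)]\,(\Dslash^{2}+\eps^{2}+\mu)^{-1}\,d\mu ,
\]
where $[\Dslash^{2},\pi^{\th}(f)]=b\Dslash+\Dslash b$ with $b:=[\Dslash,\pi^{\th}(f)]=-iL^{\th}(\partial_{\mu}f)\ox\ga^{\mu}$ bounded (Leibniz rule, Lemma~\ref{lm:propriete}(iii)). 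After right multiplication by $(\Dslash^{2}+\eps^{2})^{-l}$, in each integrand all the factors that are functions of $\Dslash$ commute among themselves, so one can slide $b$ so that it stands next to those $\Dslash$-functions, writing the integrand as $R_{1}(\Dslash)\cdot b\cdot\rho_{\mu}(-i\nb)$ with $R_{1}$ a bounded function of $\Dslash$ and $\rho_{\mu}(\xi)$ essentially $|\xi|^{j}\bigl[(\xi^{2}+\eps^{2}+\mu)(\xi^{2}+\eps^{2})^{l}\bigr]^{-1}$, $j\in\{0,1\}$, up to a bounded matrix. Then $\|R_{1}(\Dslash)\|_{\infty}$ is bounded by spectral calculus (at most $(\eps^{2}+\mu)^{-1}$, or $(\eps^{2}+\mu)^{-1/2}$ when $R_{1}$ absorbs a $\Dslash$), while $\|b\,\rho_{\mu}(-i\nb)\|_{p}\le C_{f}\,\|\rho_{\mu}\|_{L^{p}(\R^{2N})}$ by Lemma~\ref{lm:schatten} (legitimate since $\partial_{\mu}f\in\SS\subset L^{2}$ and, because $p>N/l$, $\rho_{\mu}\in L^{p}$). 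Evaluating $\|\rho_{\mu}\|_{L^{p}}$ by the rescaling $\xi\mapsto\sqrt{\mu}\,\eta$, one finds $\|\rho_{\mu}\|_{L^{p}}\le C(\eps)\,\mu^{-\kappa}$ at infinity with a $\kappa$ that, combined with the bound on $\|R_{1}\|_{\infty}$ and the $\mu^{1/2}$ of the spectral measure, makes the integrand $\L^{p}$-norm integrable over $(0,\infty)$ exactly when $p>N/l$. Taking the $\L^{p}$-quasinorm through the Bochner integral by the triangle inequality finishes the main term.

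The step I expect to be the main obstacle is this $\mu$-integral. Because $[\,|\Dslash|,\pi^{\th}(f)]$ is an operator of order $0$, the crudest operator-norm estimates make the integral only logarithmically divergent at infinity; one recovers the missing power of $\mu$ by \emph{not} detaching $b$ from the neighbouring resolvent and instead estimating the Schatten norm of the product $b\,\rho_{\mu}(-i\nb)$ via the $L^{p}$-multiplier bound of Lemma~\ref{lm:schatten}, whose scaling supplies it; and by a commutator rearrangement (writing $\Dslash b=b\Dslash+[\Dslash,b]$ and re-expressing the first-order piece of $[\Dslash,b]$ through $\sum_{\mu}\partial_{\mu}\ox\ga^{\mu}=i\Dslash$) so that any residual factor of $\Dslash$ always ends up to the \emph{right} of $b$, where it contributes to $\rho_{\mu}$ rather than weakening $R_{1}$. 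Near $\mu=0$, and at the bottom of the spectrum, the $\eps^{2}$-shift keeps everything finite; the bounded remainder $g(\Dslash^{2})$ is handled in the same way after averaging the analogous $\mu$-integrals over $s\in(0,\eps^{2})$. Finally, Lemma~\ref{lm:schatten} as stated covers $p\ge2$; in the range $N/l<p<2$ one invokes the companion Birman--Solomjak-type estimate in exactly the same fashion. Throughout, one must also check that commutators may be interchanged with the integrals and that the operator-valued integrals converge in $\L^{p}$, which follows from the very same bounds.
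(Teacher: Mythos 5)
Your proof is correct and rests on the same three pillars as the paper's: a spectral-integral representation of a square root of $\Dslash^2$, the resolvent commutator identity \eqref{eq:deriv-inv}, and the Schatten estimate of Lemma~\ref{lm:schatten} combined with a rescaling of the multiplier's $L^p$-norm in $\mu$. The execution differs in two genuine ways. First, the paper regularizes by replacing $|\Dslash|$ with $|\Dslash|+\rho$ and applying the spectral formula to $(|\Dslash|+\rho)^2$; the price is that $[(|\Dslash|+\rho)^2,\pi^\th(f)]$ contains the term $2\rho\,[|\Dslash|,\pi^\th(f)]$, so identity \eqref{eq:spectid} is self-referential, and this piece is quietly discarded when the proof is reduced to the integral \eqref{eq:integrale}. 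Your decomposition $|\Dslash|=(\Dslash^2+\eps^2)^{1/2}-g(\Dslash^2)$ avoids that loop: the main commutator involves only $[\Dslash^2,\pi^\th(f)]=b\Dslash+\Dslash b$ with $b=[\Dslash,\pi^\th(f)]$ bounded. Moreover your remainder is even easier than you make it: since $g(\Dslash^2)$ is bounded and the symbol $g(|\xi|^2)\,(|\xi|^2+\eps^2)^{-l}\sim\tfrac{\eps^2}{2}|\xi|^{-1-2l}$ lies in $L^p(\R^{2N})$ for $p>N/l$, both $g(\Dslash^2)\,\pi^\th(f)\,(\Dslash^2+\eps^2)^{-l}$ and $\pi^\th(f)\,g(\Dslash^2)\,(\Dslash^2+\eps^2)^{-l}$ are in $\L^p(\H)$ directly by Lemma~\ref{lm:schatten}, with no further spectral integral or $s$-averaging needed. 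Second, you attach the Schatten norm to $b\cdot\rho_\mu(-i\nb)$, whereas the paper peels off $\Dslash(\Dslash^2+\eps^2)^{-1/2}$ in operator norm and attaches it to $\pi^\th(f)\cdot(\text{multiplier})$; both groupings yield an integrand of order $\mu^{-1-l+N/p}$ at infinity and hence convergence exactly for $p>N/l$. Two phrasings should be tightened: $b$ does not commute with functions of $\Dslash$, so nothing is ``slid'' past $b$ --- one merely regroups the $\Dslash$-functions already standing on each side of it (which is all your argument needs); and the caveat you raise for $N/l<p<2$, where Lemma~\ref{lm:schatten} as stated does not apply and a Birman--Solomjak-type bound must be substituted, is a real point that the paper itself passes over in silence.
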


\begin{proof}
We use the following spectral identity for a positive operator $A$:
$$
A = \tfrac{1}{\pi} \int_0^\infty \tfrac{A^2}{A^2 + \mu} \,\tfrac{d\mu}{\sqrt{\mu}},
$$
and another identity for any operators $A$, $B$ and
$\la \notin \spec A$:
\begin{equation}
[B, (A-\la)^{-1}] = (A - \la)^{-1} [A,B] (A - \la)^{-1}.
\label{eq:deriv-inv}
\end{equation}
\\
Hence, for any $\rho > 0$,
\begin{align}
\label{eq:spectid}
[|\Dslash|, \pi^\th(f)]
&= [|\Dslash|+\rho, \pi^\th(f)]
= \tfrac{1}{\pi} \int_0^\infty \biggl[
\tfrac{(|\Dslash|+\rho)^2}{(|\Dslash|+\rho)^2+\mu},
\pi^\th(f) \biggr] \, \tfrac{d\mu}{\sqrt{\mu}}
\nonumber\\
&= \tfrac{1}{\pi} \int_0^\infty
\biggl(1 - \tfrac{(|\Dslash|+\rho)^2}{(|\Dslash|+\rho)^2+\mu}\biggr)
\bigl[(|\Dslash|+\rho)^2, \pi^\th(f)\bigr]
\tfrac{1}{(|\Dslash|+\rho)^2+\mu} \, \tfrac{d\mu}{\sqrt{\mu}}
\nonumber\\
&= \tfrac{1}{\pi} \int_0^\infty
\tfrac{1}{(|\Dslash|+\rho)^2+\mu}
\bigl[(|\Dslash|+\rho)^2,\pi^\th(f)\bigr]
\tfrac{1}{(|\Dslash|+\rho)^2+\mu} \,\sqrt{\mu} \,d\mu
\\
&= \tfrac{1}{\pi} \int_0^\infty
\tfrac{1}{(|\Dslash|+\rho)^2+\mu}
\biggl(-\pi^\th(\del^\mu \del_\mu f)
- 2i(L^\th(\del_\mu f) \ox \ga^\mu) \Dslash
+ 2\rho \bigl[ |\Dslash|, \pi^\th(f) \bigr] \biggr)
\nonumber\\
&\hspace{13em} \x
\tfrac{1}{(|\Dslash|+\rho)^2+\mu} \,\sqrt{\mu} \,d\mu.
\nonumber
\end{align}
This implies that
\begin{align*}
\bigl\| [|\Dslash|,\pi^\th(f)]\,(\Dslash^2 + \eps^2)^{-l} \bigr\|_p
&\leq \tfrac{1}{\pi} \int_0^\infty
\bigl\| \tfrac{1}{(|\Dslash|+\rho)^2+\mu}
\Bigl( -\pi^\th(\del^\mu \del_\mu f)
- 2i(L^\th(\del_\mu f) \ox \ga^\mu) \Dslash
\\
&\hspace{4em} + 2\rho \bigl[ |\Dslash|, \pi^\th(f) \bigr] \Bigr)
\tfrac{1}{(|\Dslash|+\rho)^2+\mu}\,(\Dslash^2 + \eps^2)^{-l}\bigr\|_p
\,\sqrt{\mu} \,d\mu.
\end{align*}

Thus, the proof reduces to show that for any $f \in \SS$,
\begin{equation}
\label{eq:integrale}
\tfrac{1}{\pi} \int_0^\infty
\bigl\| \tfrac{1}{(|\Dslash|+\rho)^2+\mu} \,\pi^\th(f)\Dslash
\,\tfrac{1}{(|\Dslash|+\rho)^2+\mu} \,(\Dslash^2 + \eps^2)^{-l}
\bigr\|_p \,\sqrt{\mu} \,d\mu < \infty.
\end{equation}

Since the Schatten $p$-norm is a symmetric norm, and since, as in the
proof of Theorem~\ref{th:compacite}, only the reduced Hilbert space is
affected, expression~\eqref{eq:integrale} is bounded by by
\begin{align*}
\tfrac{1}{\pi} \int_0^\infty
&\bigl\| \tfrac{1}{(|\Dslash|+\rho)^2+\mu} \bigr\|^{3/2} \,
\bigl\| \tfrac{\Dslash}{(\Dslash^2 + \eps^2)^{1/2}} \bigr\| \,
\bigl\| \pi^\th(f) \tfrac{1}{(\Dslash^2 + \eps^2)^{l-1/2}}\,
\tfrac{1}{((|\Dslash|+\rho)^2+\mu)^{1/2}} \bigr\|_p
\,\sqrt{\mu} \,d\mu
\\
&\leq \tfrac{1}{\pi} \int_0^\infty
\bigl\| \pi^\th(f)\, (\Dslash^2 + \eps^2)^{-l+1/2}
((|\Dslash|+\rho)^2 + \mu)^{-1/2} \bigr\|_p
\, \tfrac{\sqrt{\mu} \,d\mu}{(\mu+\rho^2)^{3/2}}.
\end{align*}

Thanks to Lemma~\ref{lm:schatten}, we can estimate the
$\mu$-dependence of the last $p$-norm:
\begin{align*}
\bigl\| \pi^\th(f) &((|\Dslash|+\rho)^2+\mu)^{-1/2}
(\Dslash^2+\eps^2)^{-l+1/2} \bigr\|_p
\\
&\leq (2\pi)^{-N(1/2+1/p)}\th^{-N(1/2-1/p)} \|f\|_2\,
\bigl\| ((|\xi|+\rho)^2 +\mu)^{-1/2} (|\xi|^2 + \eps^2)^{-l+1/2}
\bigr\|_p
\\
&\leq C(p,\th) \bigl\| ((|\xi|+\rho)^2 +\mu)^{-1/2} \bigr\|_q \, \bigl\| (|\xi|^2+\eps^2)^{-l+1/2} \bigr\|_r;
\end{align*}
with $p^{-1} = q^{-1} + r^{-1}$ appropriately chosen, these integrals
are finite for all $q > 2N$ and $r > 2N/(2l-1)$; for $l = \half$, take
$r = \infty$ and $q = p$. For such values,
\begin{align*}
&\bigl\| \pi^\th(f) ((|\Dslash|+\rho)^2+\mu)^{-1/2}
(\Dslash^2+\eps^2)^{-l+1/2} \bigr\|_p
\\
&\leq C(p,\th,N;f) \|(|\xi|^2+\eps^2)^{-l+1/2}\|_r \, \Omega_{2N}^{1/q}
\biggl( \int_0^\infty \tfrac{R^{2N-1}}{((R+\rho)^2+\mu)^{q/2}}\,dR
\biggr)^{1/q}
\\
&= C(p,\th,N;f) \|(|\xi|^2+\eps^2)^{-l+1/2}\|_r \, \pi^{N/q} \,
\tfrac{\Ga^{1/q}(\tfrac{q}{2} - N)}{\Ga^{1/q}(\tfrac{q}{2})}\,
\mu^{-1/2 + N/q} =: C'(p,q,\th,N;f) \,\mu^{-1/2 + N/q}.
\end{align*}
Finally, the integral~\eqref{eq:integrale} is less than
$$
C'(p,q,\th,N;f) \int_0^\infty
\tfrac{\mu^{N/q}}{(\mu+\rho^2)^{3/2}} \,d\mu,
$$
which is finite for $q > 2N$ and $p > N/l$. This concludes the proof.
\end{proof}

\begin{lem}
\label{lm:Cwikel}
If  $f \in \SS$, then
$\pi^\th(f) \,(|\Dslash|+\eps)^{-1}\, \pi^\th(f^*) \in \L^{2N+}(\H)$.
\end{lem}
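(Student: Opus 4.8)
The plan is to exploit the symmetry of the operator and reduce to a (Lorentz-space) Cwikel estimate. Set $h_\eps(\xi):=(|\xi|+\eps)^{-1/2}$. Since $\Dslash^2=-\del^\mu\del_\mu\ox 1_{2^N}$, one has $|\Dslash|=\sqrt{-\Delta}\ox 1_{2^N}$ and hence $(|\Dslash|+\eps)^{-1/2}=h_\eps(-i\nb)\ox 1_{2^N}$. Put
$$
A:=\pi^\th(f)\,(|\Dslash|+\eps)^{-1/2}=\big(L^\th_f\,h_\eps(-i\nb)\big)\ox 1_{2^N}.
$$
Using $L^\th(f)^*=L^\th(f^*)$ and the self-adjointness of $(|\Dslash|+\eps)^{-1/2}$ we get $A^*=(|\Dslash|+\eps)^{-1/2}\,\pi^\th(f^*)$, so that
$$
\pi^\th(f)\,(|\Dslash|+\eps)^{-1}\,\pi^\th(f^*)=AA^*.
$$
As $AA^*\geq 0$ we have $\mu_n(AA^*)=\mu_n(A)^2$; therefore it suffices to show that $A$ lies in the weak Schatten ideal $\L^{4N+}(\H)$, i.e. $\mu_n(A)=\OO(n^{-1/(4N)})$, for then $\mu_n(AA^*)=\OO(n^{-1/(2N)})$, which is precisely $AA^*\in\L^{2N+}(\H)$. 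Finally, the singular values of $A=(L^\th_f\,h_\eps(-i\nb))\ox 1_{2^N}$ are those of the reduced operator $L^\th_f\,h_\eps(-i\nb)$ on $\H_r=L^2(\R^{2N})$, each repeated $2^N$ times, so it is enough to prove $L^\th_f\,h_\eps(-i\nb)\in\L^{4N+}(\H_r)$.

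The next step is to upgrade Lemma \ref{lm:schatten} to a Cwikel-type bound on Lorentz ideals. For fixed $f\in\SS\subset L^2(\R^{2N})$, the linear map $\Phi\colon g\mapsto L^\th_f\,g(-i\nb)$ is bounded from $L^p(\R^{2N})$ into $\L^p(\H_r)$ for every $p\in[2,\infty)$, by Lemma \ref{lm:schatten} (with Lemma \ref{lm:HSO} furnishing the endpoint $p=2$), with operator-ideal norm bounded by a constant depending only on $p,\th,N$ times $\|f\|_2$. Choosing exponents $2<p_0<4N<p_1<\infty$ — always possible since $N\ge 1$, hence $4N\ge 4>2$ — and applying real (Marcinkiewicz) interpolation between the $\L^{p_0}$- and $\L^{p_1}$-bounds, whose target ideals interpolate to the Lorentz--Schatten ideal $\L^{4N+}$, we obtain that $\Phi$ maps the weak space $L^{(4N,\infty)}(\R^{2N})$ into $\L^{4N+}(\H_r)$. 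It remains to check $h_\eps\in L^{(4N,\infty)}(\R^{2N})$: for $0<t<\eps^{-1/2}$ one has $|\{\,|h_\eps|>t\,\}|=|\{\,|\xi|<t^{-2}-\eps\,\}|\ll_N t^{-4N}$, and this set is empty for $t\ge \eps^{-1/2}$, so $\sup_{t>0}t^{4N}\,|\{\,|h_\eps|>t\,\}|<\infty$, which is the weak-$L^{4N}$ condition. Hence $L^\th_f\,h_\eps(-i\nb)\in\L^{4N+}(\H_r)$, so $A\in\L^{4N+}(\H)$ and, by the previous paragraph, $\pi^\th(f)\,(|\Dslash|+\eps)^{-1}\,\pi^\th(f^*)=AA^*\in\L^{2N+}(\H)$.

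The main obstacle is exactly the weak Cwikel inclusion $\Phi\big(L^{(4N,\infty)}\big)\subset\L^{4N+}$, that is, passing from the fixed-exponent Schatten bounds of Lemma \ref{lm:schatten} to their Lorentz-space endpoint; the interpolation argument above is the economical route, but one should verify that real interpolation genuinely applies to operator-ideal–valued targets (standard; cf. \cite{GGBISV,SimonTrace}) and that the resulting bound is uniform in $\eps$, which it is, since only $\|h_\eps\|_{(4N,\infty)}$ — bounded as $\eps\downarrow 0$ — enters the estimate. A welcome by-product of the factorization $T=AA^*$ is that the relevant exponent is $4N>2$, which sidesteps the borderline case $p=2$ that a direct treatment of $(|\Dslash|+\eps)^{-1}$ would produce in dimension $2N=2$.
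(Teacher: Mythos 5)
Your proof is correct, but it follows a genuinely different route from the paper's. The paper reproduces Cwikel's original argument by hand: it decomposes the symbol $(\sqrt{-\Delta}+\eps)^{-1}$ over its dyadic level sets, splits the sandwiched operator into a bounded head $A_n$ and a trace-class tail $B_n$ (the tail being trace class precisely because both copies of $f$ supply a Hilbert--Schmidt factor via Lemma \ref{lm:HSO}), and then optimizes over $n$ using Fan's inequality to read off $\mu_m=\mathcal{O}(m^{-1/2N})$ directly. You instead symmetrize through the square root of the resolvent, $T=AA^*$ with $A=\pi^\th(f)(|\Dslash|+\eps)^{-1/2}$, which shifts the problem to a weak Schatten bound at exponent $4N>2$ for $A$ alone, and you obtain that bound by real interpolation of the fixed-$f$ linear map $g\mapsto L^\th_f\,g(-i\nb)$ between two strong Schatten estimates from Lemma \ref{lm:schatten}, applied to $h_\eps\in L^{(4N,\infty)}$. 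Both devices serve the same purpose, namely circumventing the failure of the naive Cwikel bound at the endpoint $p=2$, which is the relevant case when $N=1$: the paper does it through the double factor of $f$ in the sandwich, you do it by halving the operator so that the effective exponent doubles. What the paper's route buys is complete self-containedness (only Fan's inequality and the explicit Hilbert--Schmidt computation of Lemma \ref{lm:HSO} are used); what yours buys is brevity and a slightly stronger intermediate statement ($\pi^\th(f)(|\Dslash|+\eps)^{-1/2}\in\L^{4N+}(\H)$), at the price of invoking the identification of the real interpolation spaces $(\L^{p_0},\L^{p_1})_{\theta,\infty}=\L^{(p,\infty)}$ and $(L^{p_0},L^{p_1})_{\theta,\infty}=L^{(p,\infty)}$, which is standard but not developed in the paper. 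Your remaining checks (the distribution function of $h_\eps$, the $2^N$-fold multiplicity of singular values under tensoring with $1_{2^N}$, and $\mu_n(AA^*)=\mu_n(A)^2$) are all correct.
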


\begin{proof}
This is an extension to the Moyal context of the renowned inequality by
Cwikel~\cite{SimonTrace}. As remarked before, it is
possible to replace $\Dslash^2$ by $-\Delta$, $\pi^\th(f)$ by $L^\th_f$
and $\H$ by $\H_r$. Consider
$g(-i\nb) := (\sqrt{-\Delta} + \eps)^{-1}$. Since $g$ is positive, it
can be decomposed as $g = \sum_{n\in\Z} g_n$ where
$$
g_n(x) := \begin{cases} g(x) &\text{if $2^{n-1} < g(x) \leq 2^n$}, \\
0 &\text{otherwise}. \end{cases}
$$

For each $n \in \Z$, let $A_n$ and $B_n$ be the two operators
$$
A_n := \sum_{k\leq n} L^\th_f \,g_k(-i\nb) \,L^\th_{f^*},  \quad
B_n := \sum_{k>n}     L^\th_f \,g_k(-i\nb) \,L^\th_{f^*}.
$$
We estimate the uniform norm of the first part:
\begin{align*}
\|A_n\|_\infty
&\leq \|L^\th_f\|^2\, \bigl\|\sum_{k\leq n} g_k(-i\nb)\bigr\|_\infty
\leq (2\pi\th)^{-N} \|f\|_2^2
\bigl\| \sum_{k\leq n} g_k \bigr\|_\infty\\
&\leq (2\pi\th)^{-N} \|f\|_2^2 \,2^n =: 2^n\,c_1(\th,N;f).
\end{align*}
The trace norm of $B_n$ can be computed using Lemma~\ref{lm:HSO}:
\begin{align*}
\|B_n\|_1
&= \bigl\|\Bigl(\smash[b]{\sum_{k>n}g_k(-i\nb)}\Bigr)^{1/2}
L^\th_{f^*} \bigr\|_2^2
= \bigl\| L^\th_f \Bigl(\smash[b]{\sum_{k>n}g_k(-i\nb)}\Bigr)^{1/2}
\bigr\|_2^2
= (2\pi)^{-2N} \|f\|_2^2 \,
\bigl\| \Bigl(\smash[b]{\sum_{k>n} g_k}\Bigr)^{1/2} \bigr\|_2^2
\\
&= (2\pi)^{-2N} \|f\|_2^2\, \bigl\| \sum_{k>n} g_k \bigr\|_1
= (2\pi)^{-2N} \|f\|_2^2 \,\sum_{k>n} \|g_k\|_1
\\
&\leq (2\pi)^{-2N} \|f\|_2^2 \,
\sum_{k>n} \|g_k\|_\infty \,\nu\{\supp(g_k)\},
\end{align*}
where $\nu$ is the Lebesgue measure on $\R^{2N}$. By definition,
$\|g_k\|_\infty \leq 2^k$ and
\begin{align*}
\nu\{\supp(g_k)\}
&= \nu\set{x \in \R^{2N} : 2^{k-1} < g(x) \leq 2^k}
\leq \nu\set{x \in \R^{2N} \,\,\vert \,\, (|x|+\eps)^{-1} \geq 2^{k-1}}
\\
&\leq 2^{2N(1-k)} \,c_2.
\end{align*}
Therefore
\begin{align*}
\|B_n\|_1
&\leq (2\pi)^{-2N} \|f\|_2^2 \, 2^{2N} c_2\, \sum_{k>n} 2^{k(1-2N)}
\\
&< \pi^{-2N} \,c_2 \,\|f\|_2^2 \, 2^{n(1-2N)}
=: 2^{n(1-2N)} \,c_3(N;f),
\end{align*}
where the second inequality follows because $N > \half$.

We can now estimate the $m$th singular value $\mu_m$ of $B_n$
(arranged in decreasing order with multiplicity): $\|B_n\|_1 = \sum_{k=0}^\infty \mu_k(B_n)$. Note that, for
$m=1,2,3,\cdots$, we get that 
$\|B_n\|_1 \geq \sum_{k=0}^{m-1} \mu_k(B_n) \geq m\,\mu_m(B_n)$. Thus,
$\mu_m(B_n) \leq  \|B_n\|_1\, m^{-1}\leq 2^{n(1-2N)} \,c_3 \, 
m^{-1}$. Now Fan's
inequality~\cite[Thm.~1.7]{SimonTrace} yields
\begin{align*}
\mu_m(L^\th_f \,g(-i\nb) \,L^\th_{f^*})
&= \mu_m(A_n + B_n) \leq \mu_1(A_n) + \mu_m(B_n)
\\
&\leq \|A_n\| +  \|B_n\|_1 \,m^{-1}
\leq 2^n\,c_1 + 2^{n(1-2N)} \,c_3 \,m^{-1}.
\end{align*}
Given $m$, choose $n \in \Z$ so that $2^n \leq m^{-1/2N} < 2^{n+1}$.
Then
$$
\mu_m(L^\th_f \,g(-i\nb) \,L^\th_{f^*})
\leq  c_1 \, m^{-1/2N} + c_3 \, m^{-(1-2N)/2N} m^{-1}
=: c_4(\th,N;f) \, m^{-1/2N}.
$$
Therefore
$L^\th_f\,(\sqrt{-\Delta}+\eps)^{-1}\,L^\th_{f^*} \in \L^{2N+}(\H_r)$,
and the statement of the lemma follows.
\end{proof}

\begin{cly}
\label{cr:fragmentation}
If  $f,g \in \SS$, then
$\pi^\th(f)\, (|\Dslash|+\eps)^{-1} \,\pi^\th(g) \in \L^{2N+}(\H)$.
\end{cly}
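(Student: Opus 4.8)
The plan is to deduce the statement from Lemma~\ref{lm:Cwikel}, which is exactly the special case $g=f^{*}$, by a polarization argument. The key observation is that the map
$$
B\colon (h_{1},h_{2})\longmapsto \pi^{\th}(h_{1})\,(|\Dslash|+\eps)^{-1}\,\pi^{\th}(h_{2}^{*})
$$
is sesquilinear with values in $\B(\H)$: it is linear in $h_{1}$ because $\pi^{\th}$ is linear, and antilinear in $h_{2}$ because $\pi^{\th}(h_{2}^{*})=\pi^{\th}(h_{2})^{*}$ (the representation $f\mapsto L^{\th}_{f}\ox 1_{2^{N}}$ is a $*$-homomorphism, by Lemma~\ref{lm:propriete}(ii)) and complex conjugation is antilinear. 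Moreover $\SS$ is a vector space stable under $h\mapsto h^{*}$, so $h_{1}+i^{k}h_{2}\in\SS$ for all $h_{1},h_{2}\in\SS$ and $k\in\{0,1,2,3\}$.

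Next I would record the operator polarization identity valid for any such sesquilinear $B$:
$$
4\,B(h_{1},h_{2})=\sum_{k=0}^{3} i^{k}\,B\bigl(h_{1}+i^{k}h_{2},\,h_{1}+i^{k}h_{2}\bigr),
$$
which one checks by expanding the right-hand side and using $\sum_{k=0}^{3} i^{k}=\sum_{k=0}^{3} i^{2k}=0$. Applying this with $h_{1}=f$ and $h_{2}=g^{*}$ gives
$$
\pi^{\th}(f)\,(|\Dslash|+\eps)^{-1}\,\pi^{\th}(g)=B(f,g^{*})=\tfrac14\sum_{k=0}^{3} i^{k}\,\pi^{\th}(f+i^{k}g^{*})\,(|\Dslash|+\eps)^{-1}\,\pi^{\th}\bigl((f+i^{k}g^{*})^{*}\bigr).
$$
By Lemma~\ref{lm:Cwikel} each of the four summands lies in $\L^{2N+}(\H)$, and since $\L^{2N+}(\H)$ is a (quasi-normed, two-sided) ideal of $\B(\H)$, in particular a vector space, the linear combination lies in $\L^{2N+}(\H)$ as well, which is the desired conclusion.

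There is no genuine analytic obstacle here: all the hard work — the Cwikel-type control of the singular values of $\pi^{\th}(h)(|\Dslash|+\eps)^{-1}\pi^{\th}(h^{*})$ — is already packaged in Lemma~\ref{lm:Cwikel}, and the only structural inputs are that $\L^{2N+}$ is a vector space and that $\pi^{\th}$ is a $*$-representation. An essentially equivalent route would be to set $A:=(|\Dslash|+\eps)^{-1/2}\geq 0$; then Lemma~\ref{lm:Cwikel} reads $(\pi^{\th}(h)A)(\pi^{\th}(h)A)^{*}\in\L^{2N+}$, i.e.\ $\pi^{\th}(h)A\in\L^{4N+}$ (using $\mu_{m}(TT^{*})=\mu_{m}(T)^{2}$), whence $\pi^{\th}(f)A\in\L^{4N+}$ and $A\pi^{\th}(g)=(\pi^{\th}(g^{*})A)^{*}\in\L^{4N+}$, and the Hölder inequality for weak Schatten ideals ($\tfrac1{4N}+\tfrac1{4N}=\tfrac1{2N}$) gives $\pi^{\th}(f)(|\Dslash|+\eps)^{-1}\pi^{\th}(g)=(\pi^{\th}(f)A)(A\pi^{\th}(g))\in\L^{2N+}$. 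I would present the polarization version in the text, since it invokes neither the square root nor a Hölder inequality in weak ideals, and thus keeps the proof to a two-line computation.
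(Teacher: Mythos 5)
Your proof is correct and is essentially identical to the paper's: the paper's one-line proof of this corollary simply says to consider $\pi^\th(f\pm g^*)(|\Dslash|+\eps)^{-1}\pi^\th(f^*\pm g)$ and $\pi^\th(f\pm ig^*)(|\Dslash|+\eps)^{-1}\pi^\th(f^*\mp ig)$, which are exactly the four terms of the polarization identity you write out. You have merely made the polarization explicit (and correctly verified the sesquilinearity and the vector-space property of $\L^{2N+}$), so no further comment is needed.
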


\begin{proof}
Consider
$\pi^\th(f \pm g^*)\,(|\Dslash|+\eps)^{-1}\,\pi^\th(f^* \pm g)$ and
$\pi^\th(f \pm ig^*)\,(|\Dslash|+\eps)^{-1}\,\pi^\th(f^* \mp ig)$.
\end{proof}

\begin{cly}
\label{cr:spec-dim-one}
If $h \in \SS$, then
$\pi^\th(h)\,(|\Dslash|+\eps)^{-1} \in \L^{2N+}(\H)$.
\end{cly}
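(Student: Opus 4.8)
The plan is to reduce the statement to the Cwikel-type estimate of Corollary~\ref{cr:fragmentation} by means of the factorization property, one commutator identity, and some careful bookkeeping of Schatten exponents at the weak endpoint. Write $A := |\Dslash| + \eps > 0$, so that $A^{-1} = (\sqrt{-\Delta}+\eps)^{-1}\ox 1_{2^N}$ acts on $\H = \H_r\ox\C^{2^N}$; since tensoring with the finite-dimensional factor $\C^{2^N}$ does not change membership in a (weak) Schatten ideal, it is harmless to keep working on $\H$. First I would use Proposition~\ref{pr:factorization} to write $h = f\mop g$ with $f,g\in\SS$, so that $\pi^\th(h) = \pi^\th(f)\,\pi^\th(g)$ (because $L^\th(\cdot)$ is a homomorphism for $\mop$). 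Then, using that $\pi^\th(g)$ preserves $\Dom|\Dslash|$ and that $[\,|\Dslash|,\pi^\th(g)]$ extends to a bounded operator, the elementary identity $[\pi^\th(g),A^{-1}] = A^{-1}[\,|\Dslash|,\pi^\th(g)]A^{-1}$ gives the decomposition
\[
\pi^\th(h)\,A^{-1} = \pi^\th(f)\,A^{-1}\,\pi^\th(g) \;+\; \pi^\th(f)\,A^{-1}\,[\,|\Dslash|,\pi^\th(g)]\,A^{-1}.
\]

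For the first summand, Corollary~\ref{cr:fragmentation} applies directly and gives $\pi^\th(f)\,A^{-1}\,\pi^\th(g)\in\L^{2N+}(\H)$; this is the term that carries the sharp (weak) endpoint exponent. For the second summand I would show that it lies in an ordinary Schatten class $\L^{r}$ with $N < r < 2N$, which is comfortably inside $\L^{2N+}$. Indeed $\pi^\th(f)\,A^{-1} = L^\th_f(\sqrt{-\Delta}+\eps)^{-1}\ox 1_{2^N}$, and $(\,|\cdot|+\eps)^{-1}\in L^{p}(\R^{2N})$ precisely for $p>2N$, so Lemma~\ref{lm:schatten} gives $\pi^\th(f)\,A^{-1}\in\L^{p}(\H)$ for every $p>2N$. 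On the other side, writing $[\,|\Dslash|,\pi^\th(g)]\,A^{-1} = \bigl([\,|\Dslash|,\pi^\th(g)](\Dslash^2+\eps^2)^{-1/2}\bigr)\bigl((\Dslash^2+\eps^2)^{1/2}A^{-1}\bigr)$ with the second factor bounded (it is $\leq 1$), Lemma~\ref{lm:commutateur} with $l=\tfrac12$ gives $[\,|\Dslash|,\pi^\th(g)]\,A^{-1}\in\L^{p'}(\H)$ for every $p'>2N$. By Hölder's inequality for Schatten norms the product then lies in $\L^{r}$ with $1/r = 1/p + 1/p'$; choosing $p=p'$ slightly above $2N$ yields $r$ slightly above $N$, hence $r<2N$, so this summand is in $\L^{r}\subset\L^{2N}\subset\L^{2N+}$. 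Since $\L^{2N+}$ is a linear subspace (in fact a $C^*$-ideal of $\B(\H)$), the sum of the two summands is in $\L^{2N+}$, which is the claim.

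The main obstacle is exactly this bookkeeping of exponents at the endpoint. The naive shortcuts — bounding $A^{-1}$ by $\eps^{-1}$, or applying the control of $\|T\|_{\L^{2N+}}$ to $TT^* = \pi^\th(h)A^{-2}\pi^\th(h^*)$ via Lemma~\ref{lm:Cwikel} or Lemma~\ref{lm:schatten} — only produce $T$ in $\bigcap_{q>2N}\L^{q}$, i.e. $\mu_m(T)=o(m^{-1/q})$ for $q>2N$, which is strictly weaker than $\L^{2N+}=\L^{2N,\infty}$ because one power of decay is thrown away. Routing through Corollary~\ref{cr:fragmentation} is what keeps the full ``order $-1$ with two-sided localization'' strength in the leading term, so that it already has the right weak summability; the commutator remainder is genuinely lower order (it behaves like an order $-2$, hence $\L^{N}$-type, operator) and can therefore be handled by the cruder Lemmas~\ref{lm:schatten} and~\ref{lm:commutateur}. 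The only routine points to check along the way are that $\pi^\th(g)$ maps $\Dom|\Dslash|$ into itself and that $(\Dslash^2+\eps^2)^{1/2}(|\Dslash|+\eps)^{-1}$ is bounded; both are immediate.
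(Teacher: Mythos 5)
Your proof is correct and follows essentially the same route as the paper: factorize $h=f\mop g$, apply the commutator identity \eqref{eq:deriv-inv} to split $\pi^\th(h)(|\Dslash|+\eps)^{-1}$ into the Cwikel-type term handled by Corollary~\ref{cr:fragmentation} plus the remainder $\pi^\th(f)(|\Dslash|+\eps)^{-1}[|\Dslash|,\pi^\th(g)](|\Dslash|+\eps)^{-1}$, which lies in $\L^p$ for $p>N$. Your explicit H\"older bookkeeping via Lemma~\ref{lm:schatten} and Lemma~\ref{lm:commutateur} with $l=\tfrac12$ just makes precise what the paper dismisses as ``arguments similar to those of lemmata \ref{lm:schatten} and \ref{lm:commutateur}''.
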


\begin{proof}
Let $h = f \mop g$. Then
$$
\pi^\th(h)\, (|\Dslash|+\eps)^{-1}
= \pi^\th(f)\, (|\Dslash|+\eps)^{-1}\,\pi^\th(g)
+ \pi^\th(f)\, [\pi^\th(g), (|\Dslash|+\eps)^{-1}],
$$
and we obtain from the identity \eqref{eq:deriv-inv} that
$$
\pi^\th(h) \,(|\Dslash|+\eps)^{-1}
= \pi^\th(f) \,(|\Dslash|+\eps)^{-1} \,\pi^\th(g)
+ \pi^\th(f) \,(|\Dslash|+\eps)^{-1} \,[|\Dslash|, \pi^\th(g)]\,
(|\Dslash|+\eps)^{-1}.
$$
By arguments similar to those of lemmata \ref{lm:schatten} and
\ref{lm:commutateur}, the last term belongs to $\L^p$ for $p > N$, and
thus to~$\L^{2N+}$.
\end{proof}

Boundedness of $(|\Dslash|+\eps)(\Dslash^2+\eps^2)^{-1/2}$ follows
from elementary Fourier analysis. And so the last corollary means that
the spectral triple is ``$2N^+$-summable''. We have taken care of the
first assertion of the theorem. The next lemma is the last property of
existence that we need.

\begin{lem}
\label{lm:existence}
If $f\in \SS$, then $\pi^\th(f) (|\Dslash|+\eps)^{-2N}$ and
$\pi^\th(f) (\Dslash^2+\eps^2)^{-N}$ are in $\L^{1+}(\H)$.
\end{lem}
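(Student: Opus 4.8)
\textbf{Proof plan for Lemma \ref{lm:existence}.}
The plan is to reduce the statement about $(|\Dslash|+\eps)^{-2N}$ and $(\Dslash^2+\eps^2)^{-N}$ to the already-established $\L^{2N+}$-estimates of Corollary \ref{cr:spec-dim-one} and Corollary \ref{cr:fragmentation} by writing the operator as a product of $2N$ factors, each of which carries one negative power of $|\Dslash|$, and then applying the H\"older inequality for Schatten ideals. Since $(|\Dslash|+\eps)^{-2N}$ and $(\Dslash^2+\eps^2)^{-N}$ differ by a bounded operator with bounded inverse (elementary Fourier analysis, as used just before the lemma), it suffices to treat, say, $\pi^\th(f)(|\Dslash|+\eps)^{-2N}$.

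First I would use the factorization property of the Schwartz Moyal algebra (Proposition \ref{pr:factorization}): write $f = f_1 \mop f_2 \mop \cdots \mop f_{2N}$ with all $f_i \in \SS$ — this is obtained by iterating $h = f\mop g$. Then $\pi^\th(f) = \pi^\th(f_1)\cdots \pi^\th(f_{2N})$, and I would insert resolvent factors to get
$$
\pi^\th(f)(|\Dslash|+\eps)^{-2N} = \pi^\th(f_1)(|\Dslash|+\eps)^{-1}\,\pi^\th(f_2)(|\Dslash|+\eps)^{-1}\cdots \pi^\th(f_{2N})(|\Dslash|+\eps)^{-1} + (\text{correction terms}).
$$
The correction terms arise from commuting each $(|\Dslash|+\eps)^{-1}$ past the $\pi^\th(f_j)$'s using the identity \eqref{eq:deriv-inv}, $[\pi^\th(f_j),(|\Dslash|+\eps)^{-1}] = -(|\Dslash|+\eps)^{-1}[|\Dslash|,\pi^\th(f_j)](|\Dslash|+\eps)^{-1}$; by Lemma \ref{lm:commutateur} (with $l=\tfrac12$) the factor $[|\Dslash|,\pi^\th(f_j)](|\Dslash|+\eps)^{-1}$ already lies in $\L^p$ for $p>2N$, so each correction term is at least as summable as the main term. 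For the main term, each factor $\pi^\th(f_j)(|\Dslash|+\eps)^{-1}$ lies in $\L^{2N+}(\H)$ by Corollary \ref{cr:spec-dim-one}; the weak-type H\"older inequality $\L^{2N,\infty}\cdot \L^{2N,\infty}\cdots \L^{2N,\infty}$ ($2N$ factors) $\subset \L^{1,\infty} = \L^{1+}$ then gives the conclusion.

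The main obstacle I expect is bookkeeping of the commutator/correction terms and checking that the weak-Schatten H\"older inequality is available in the exact form needed (the ideals $\L^{p+}$ are Lorentz ideals $\L^{p,\infty}$, and one needs $\L^{p_1,\infty}\cdot\L^{p_2,\infty}\subseteq\L^{r,\infty}$ with $1/r = 1/p_1+1/p_2$, which holds and is standard — see \cite{SimonTrace}). A cleaner variant, to avoid commutators altogether, is to use Corollary \ref{cr:fragmentation} directly: grouping the $2N$ Schwartz factors into pairs $f = (f_1\mop f_2)\mop\cdots$, one writes the operator as a product of $N$ blocks of the form $\pi^\th(g)(|\Dslash|+\eps)^{-2}\pi^\th(h)$ — but $(|\Dslash|+\eps)^{-2}$ still needs to be split, so one is led back to inserting one resolvent between $\pi^\th(g)$ and $\pi^\th(h)$ and commuting the remaining one through, i.e.\ essentially the same computation. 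Either way the estimates of lemmata \ref{lm:schatten}, \ref{lm:commutateur}, \ref{lm:Cwikel} together with the factorization property \ref{pr:factorization} supply every ingredient, and the verification is routine once the product decomposition is set up.
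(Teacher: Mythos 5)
Your overall architecture---factorize $f$ into $2N$ Schwartz factors via Proposition \ref{pr:factorization}, peel the resolvents through one at a time via \eqref{eq:deriv-inv}, control the commutators with Lemma \ref{lm:commutateur}, and finish the main term by applying the weak-Schatten H\"older inequality to a product of $2N$ factors each in $\L^{2N+}(\H)$ (Corollary \ref{cr:spec-dim-one})---is exactly the paper's proof, including the preliminary reduction of $(\Dslash^2+\eps^2)^{-N}$ to $(|\Dslash|+\eps)^{-2N}$. The one step that would fail as written is your disposal of the correction terms. You attach a \emph{single} resolvent to each commutator and argue that, since $[|\Dslash|,\pi^\th(f_j)]\,(|\Dslash|+\eps)^{-1}\in\L^p$ for every $p>2N$, each correction term is ``at least as summable as the main term.'' But Lemma \ref{lm:commutateur} only gives membership in $\bigcap_{p>2N}\L^p$, which \emph{strictly contains} $\L^{2N,\infty}=\L^{2N+}$ (singular values $(\log n)^{1/2N}n^{-1/2N}$ lie in the former, not the latter), so the commutator factor is weaker, not stronger, than the plain factors. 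Worse, a product of $2N$ operators each merely in $\L^p$ for all $p>2N$ lands only in $\bigcap_{r>1}\L^r$, and this does not imply membership in $\L^{1,\infty}=\L^{1+}$ (singular values $(\log n)/n$ have partial sums $\sim(\log N)^2$). So your correction terms are not controlled by this route.

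The repair is precisely the paper's bookkeeping: at each peeling step, leave the \emph{entire} block of remaining resolvents attached to the commutator. The first correction term is then a product of exactly two operators, $\pi^\th(f_1)(|\Dslash|+\eps)^{-1}\in\L^p$ for $p>2N$ and $[|\Dslash|,\pi^\th(f_2)]\,(|\Dslash|+\eps)^{-2N}\in\L^q$ for all $q>1$ (Lemma \ref{lm:commutateur} with $l=N$); choosing $p$ and $q$ conjugate makes it honestly trace-class, hence negligible in $\L^{1+}$. At subsequent steps the already-peeled product lies in $\L^p$ for $p>2N/k$ while the commutator, still carrying $2N-k$ resolvents, lies in $\L^q$ for $q>2N/(2N-k)$, and again conjugate exponents are available. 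With that modification your argument coincides with the paper's; the ``cleaner variant'' you sketch via Corollary \ref{cr:fragmentation} runs into the same issue and, as you note, collapses back to the same computation.
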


\begin{proof}
It suffices to prove that
$\pi^\th(f) (|\Dslash|+\eps)^{-2N} \in \L^{1+}(\H)$. We factorize
$f \in \SS$ according to Proposition~\ref{pr:factorization}, with the
following notation:
\begin{align*}
f &= f_1 \mop f_2 = f_1 \mop f_{21} \mop f_{22}
= f_1 \mop f_{21} \mop f_{221} \mop f_{222}
\\
&= \cdots = f_1 \mop f_{21} \mop f_{221} \mop\cdots\mop
f_{22\cdots 21} \mop f_{22\cdots 22}.
\end{align*}
Therefore,
\begin{align}
\pi^\th(f) \,(|\Dslash|+\eps)^{-2N}
&= \pi^\th(f_1) \,(|\Dslash|+\eps)^{-1} \,\pi^\th(f_2)
\,(|\Dslash|+\eps)^{-2N+1}
\nonumber \\
&\qquad + \pi^\th(f_1) \,(|\Dslash|+\eps)^{-1}
\,[|\Dslash|,\pi^\th(f_2)] \,(|\Dslash|+\eps)^{-2N}.
\label{eq:modtrace}
\end{align}

By Lemma~\ref{lm:schatten},
$\pi^\th(f_1)(|\Dslash|+\eps)^{-1} \in \L^p(\H)$ whenever $p > 2N$;
and by Lemma~\ref{lm:commutateur}, the term
$[|\Dslash|,\pi^\th(f_2)] (|\Dslash|+\eps)^{-2N}$ lies in $\L^q(\H)$
for $q > 1$. Hence, the last term on the right hand side of
equation~\eqref{eq:modtrace} lies in $\L^1(\H)$. We may write the
following equivalence relation:
$$
\pi^\th(f) (|\Dslash|+\eps)^{-2N}
\sim \pi^\th(f_1) (|\Dslash|+\eps)^{-1} \pi^\th(f_2)
(|\Dslash|+\eps)^{-2N+1},
$$
where $A \sim B$ for $A,B \in \K(\H)$ means that $A - B$ is
trace-class. Thus,
\begin{align*}
\pi^\th(f) &(|\Dslash|+\eps)^{-2N}
\sim \pi^\th(f_1) (|\Dslash|+\eps)^{-1} \pi^\th(f_2)
(|\Dslash|+\eps)^{-2N+1}
\\
&= \pi^\th(f_1) (|\Dslash|+\eps)^{-1} \pi^\th(f_{21})
(|\Dslash|+\eps)^{-1} \pi^\th(f_{22}) (|\Dslash|+\eps)^{-2N+2}
\\
&\qquad + \pi^\th(f_1) (|\Dslash|+\eps)^{-1} \pi^\th(f_{21})
(|\Dslash|+\eps)^{-1} \,[|\Dslash|, \pi^\th(f_{22})]\,
(|\Dslash|+\eps)^{-2N+1}
\\
&\sim \pi^\th(f_1) (|\Dslash|+\eps)^{-1} \pi^\th(f_{21})
(|\Dslash|+\eps)^{-1} \pi^\th(f_{22}) (|\Dslash|+\eps)^{-2N+2}
\sim \cdots \\
&\sim \pi^\th(f_1) (|\Dslash|+\eps)^{-1} \pi^\th(f_{21})
(|\Dslash|+\eps)^{-1} \pi^\th(f_{221}) (|\Dslash|+\eps)^{-1}
\dots \pi^\th(f_{22\cdots 22}) (|\Dslash|+\eps)^{-1}.
\end{align*}

The second equivalence relation holds because
$\pi^\th(f_1)(|\Dslash|+\eps)^{-1}\pi^\th(f_{21})(|\Dslash|+\eps)^{-1}
\in \L^p(\H)$ for $p > N$ by Lemma~\ref{lm:schatten}, and
$[|\Dslash|,\pi^\th(f_{22})] (|\Dslash|+\eps)^{-2N+1} \in \L^q(\H)$
for $q > 2N/(2N - 1)$ by Lemma~\ref{lm:commutateur} again. The other
equivalences come from similar arguments.
Corollary~\ref{cr:fragmentation}, the H\"older inequality (see
\cite[Prop.~7.16]{Polaris}) and the inclusion
$\L^1(\H) \subset \L^{1+}(\H)$ finally yield the result.
\end{proof}

\smallskip

Now we go for the computation of the Dixmier trace. Using the
regularized trace for a $\Psi$DO:
$$
\Tr_\La(A) := (2\pi)^{-2N} \iint_{|\xi|\leq\La} \sigma[A](x,\xi)
\,d^{2N}\xi \,d^{2N}x,
$$
the result can be conjectured because
$\lim_{\La\to\infty}\, \Tr_\La(\.)/\log(\La^{2N})$ is heuristically
linked with the Dixmier trace, and the following computation:
\begin{align*}
\lim_{\La\to\infty} &\tfrac{1}{2N\log\La}
\Tr_\La \bigl(\pi^\th(f)(\Dslash^2 + \eps^2)^{-N}\bigr)
\\
&= \lim_{\La\to\infty}\,t\,\tfrac{2^N}{2N(2\pi)^{2N}\log\La}
\iint_{|\xi|\leq\La} f(x - \tfrac{\th}{2}S\xi)\,
(|\xi|^2 + \eps^2)^{-N} \,d^{2N}\xi\,d^{2N}x
\\
&= \tfrac{2^N\,\Omega_{2N}}{2N\,(2\pi)^{2N}} \int f(x) \,d^{2N}x.
\end{align*}
This is precisely the same result of~\eqref{eq:eastindian-discovery},
in the commutative case, for $k = 2N$. However, to establish it
rigorously in the Moyal context requires a subtler strategy. We shall
compute the Dixmier trace of $\pi^\th(f)\,(\Dslash^2 + \eps^2)^{-N}$
as the residue of the ordinary trace of a related meromorphic family
of operators. 
In turn we are allowed to introduce the explicit symbol formula that
will establish measurability~\cite{Book,Polaris}, too.

We seek first to verify that $\A_\th$ has \textit{analytical dimension} equal to $2N$; that is, for
$f \in \A_\th$ the operator $\pi^\th(f)\,(\Dslash^2 + \eps^2)^{-z/2}$
is trace-class if $\Re z>2N$.

\begin{lem}
\label{lm:in-extremis}
If $f \in \SS$, then $L^\th_f\,(\Dslash^2 + \eps^2)^{-z/2}$ is
trace-class for $\Re z>2N$, and
$$
\Tr[L^\th_f\,(\Dslash^2+\eps^2)^{-z/2}] = (2\pi)^{-2N} \iint
f(x)\, (|\xi|^2 + \eps^2)^{-z/2} \,d^{2N}\xi \,d^{2N}x.
$$
\end{lem}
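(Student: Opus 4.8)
The plan is to read $L^{\th}_{f}$ as the smoothing pseudodifferential operator furnished by Lemma \ref{lm:cojoreg}, i.e.\ the $\PsiDO$ on $\R^{2N}$ with symbol $(x,\xi)\mapsto f(x-\tfrac{\th}{2}S\xi)$, and to use $\Dslash^{2}=-\Delta\ox 1_{2^{N}}$, so that $(\Dslash^{2}+\eps^{2})^{-z/2}$ is the scalar Fourier multiplier $g_{z}(-i\nb)$, $g_{z}(\xi):=(|\xi|^{2}+\eps^{2})^{-z/2}$, (times $1_{2^{N}}$); hence it suffices to study $T_{z}:=L^{\th}_{f}\,g_{z}(-i\nb)$ on the reduced space $\H_{r}=L^{2}(\R^{2N})$ (the spinor indices only contribute an overall multiplicity, and the displayed formula is the one on $\H_{r}$). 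Composing the $\PsiDO$ $L^{\th}_{f}$ with the multiplier $g_{z}(-i\nb)$ exactly as in the proof of Lemma \ref{lm:HSO}, one gets that $T_{z}$ is the integral operator with kernel
$$
K_{z}(x,y)=(2\pi)^{-2N}\int_{\R^{2N}} f\bigl(x-\tfrac{\th}{2}S\xi\bigr)\,(|\xi|^{2}+\eps^{2})^{-z/2}\,e^{i\xi\.(x-y)}\,d^{2N}\xi .
$$
For $\Re z>2N$ the integrand is dominated by $\|f\|_{\infty}(|\xi|^{2}+\eps^{2})^{-\Re z/2}\in L^{1}(\R^{2N},d^{2N}\xi)$ uniformly in $(x,y)$, so by dominated convergence $K_{z}$ is continuous and bounded, with diagonal $K_{z}(x,x)=(2\pi)^{-2N}\int f(x-\tfrac{\th}{2}S\xi)(|\xi|^{2}+\eps^{2})^{-z/2}\,d^{2N}\xi$.

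To establish trace-class-ness, I would first factorize $f=f_{1}\mop f_{2}$ with $f_{1},f_{2}\in\SS$ (Proposition \ref{pr:factorization}), whence $L^{\th}_{f}=L^{\th}_{f_{1}}L^{\th}_{f_{2}}$, and then commute a power of the resolvent inward:
$$
T_{z}=L^{\th}_{f_{1}}(-\Delta+\eps^{2})^{-z/2}L^{\th}_{f_{2}}+L^{\th}_{f_{1}}\bigl[L^{\th}_{f_{2}},(-\Delta+\eps^{2})^{-z/2}\bigr].
$$
The first term equals $\bigl(L^{\th}_{f_{1}}(-\Delta+\eps^{2})^{-z/4}\bigr)\cdot\bigl((-\Delta+\eps^{2})^{-z/4}L^{\th}_{f_{2}}\bigr)$, a product of two Hilbert--Schmidt operators: indeed $L^{\th}_{f_{1}}(-\Delta+\eps^{2})^{-z/4}$ is Hilbert--Schmidt by Lemma \ref{lm:HSO} (or Lemma \ref{lm:schatten} with $p=2$) since $(|\xi|^{2}+\eps^{2})^{-z/4}\in L^{2}(\R^{2N})$ precisely when $\Re z>2N$, and $(-\Delta+\eps^{2})^{-z/4}L^{\th}_{f_{2}}$ is its analogue by taking adjoints. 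Hence the first term is trace-class. For the commutator term, using the Leibniz rule of Lemma \ref{lm:propriete}(iii) one has $[-\Delta,L^{\th}_{f_{2}}]=L^{\th}(\Delta f_{2})+2L^{\th}(\del^{\mu}f_{2})\del_{\mu}$, so that a spectral identity of the type used in the proof of Lemma \ref{lm:commutateur} exhibits $[L^{\th}_{f_{2}},(-\Delta+\eps^{2})^{-z/2}]$ as an operator of strictly lower order; iterating the peeling argument of Lemma \ref{lm:existence} then writes $L^{\th}_{f_{1}}[L^{\th}_{f_{2}},(-\Delta+\eps^{2})^{-z/2}]$ again as a product of two Hilbert--Schmidt operators modulo an arbitrarily smoothing (hence trace-class) remainder. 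Therefore $T_{z}$ is trace-class for $\Re z>2N$.

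Finally, since $T_{z}$ is trace-class with continuous kernel $K_{z}$ integrable on the diagonal, we have $\Tr T_{z}=\int_{\R^{2N}}K_{z}(x,x)\,d^{2N}x$; inserting the diagonal value, applying Fubini (legitimate because $\iint |f(x-\tfrac{\th}{2}S\xi)|\,(|\xi|^{2}+\eps^{2})^{-\Re z/2}\,dx\,d\xi=\|f\|_{1}\int(|\xi|^{2}+\eps^{2})^{-\Re z/2}\,d\xi<\infty$ for $\Re z>2N$) and using translation invariance of Lebesgue measure in $x$, one obtains
$$
\Tr T_{z}=(2\pi)^{-2N}\Bigl(\int f(x)\,d^{2N}x\Bigr)\Bigl(\int (|\xi|^{2}+\eps^{2})^{-z/2}\,d^{2N}\xi\Bigr),
$$
which is the asserted identity. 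The two delicate points are: (a) the passage $\Tr T_{z}=\int K_{z}(x,x)\,dx$, which genuinely requires the combination ``trace-class $+$ continuous kernel'' rather than a naive integration of the diagonal; and (b) organizing $T_{z}$ into an honest product of Hilbert--Schmidt operators, since neither $L^{\th}_{f}$ nor $g_{z}(-i\nb)$ alone is Hilbert--Schmidt and one must interleave the $\A_{\th}$-factorization with the commutator estimates — this bookkeeping, where the Schwartz regularity of $f$ is really used, is where I expect the main technical friction.
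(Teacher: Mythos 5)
Your overall route is genuinely different from the paper's, which disposes of the lemma in a few lines of symbol calculus: since $L^\th_f$ is the $\PsiDO$ with exact symbol $f(x-\tfrac{\th}{2}S\xi)$ (Lemma \ref{lm:cojoreg}) and the symbol of $(-\Delta+\eps^2)^{-z/2}$ is independent of $x$, every term with $|\a|>0$ in the composition formula of Theorem \ref{pseudo} vanishes identically, so the product is the $\PsiDO$ with symbol $f(x-\tfrac{\th}{2}S\xi)\,(|\xi|^2+\eps^2)^{-z/2}$ with no remainder; one then invokes the general fact that a $\PsiDO$ on $\R^{2N}$ whose symbol lies in a suitable class of order $<-2N$ is trace-class with trace equal to $(2\pi)^{-2N}$ times the phase-space integral of its symbol, and concludes by translation invariance in $x$. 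Your explicit kernel computation plus the final Fubini and translation steps reproduce exactly this conclusion, and your reading of the statement as living on the reduced space $\H_r$ is consistent with how the lemma is used in Proposition \ref{pr:calcul}.

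The weak point is your trace-class argument, specifically the term $L^\th_{f_1}\,[L^\th_{f_2},(-\Delta+\eps^2)^{-z/2}]$. Two concrete problems. First, the parenthetical ``arbitrarily smoothing (hence trace-class)'' is invalid on $\R^{2N}$: the paper warns explicitly (Remark after Lemma \ref{lm:cojoreg}) that regularizing operators need not even be compact in the non-compact setting, so every remainder you discard must retain a Schwartz factor on the left --- as the peeling arguments of Lemmas \ref{lm:commutateur} and \ref{lm:existence} are careful to arrange --- and you would need to verify that this survives your iteration. Second, the spectral identity underlying Lemma \ref{lm:commutateur} is specific to the first power $|\Dslash|$; for a general complex power $(-\Delta+\eps^2)^{-z/2}$ you need a different representation (say $A^{-z/2}=A^{-k}A^{k-z/2}$ with $k=\lfloor \Re z/2\rfloor$ and a resolvent or Mellin formula for the fractional remainder), together with Schatten--H\"older estimates for the resulting pieces; none of this is covered by the lemmas you cite, so as written the trace-class claim is asserted rather than proved. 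These defects are repairable, but they are precisely what the paper's symbol-calculus proof avoids by never separating $L^\th_f$ from the Fourier multiplier. (The step $\Tr T_z=\int K_z(x,x)\,dx$ that you flag as delicate is fine once the Hilbert--Schmidt factorization and trace-classness are actually in place.)
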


\begin{proof}
If $a(x,\xi) \in \K_p(\R^{2k})$, for $p < -k$, is the symbol of a
pseudodifferential operator $A$, then the operator is trace-class and
moreover
$$
\Tr A = (2\pi)^{-k} \iint a(x,\xi) \,d^kx\,d^k\xi.
$$
This is easily proved by taking $a \in \SS(\R^{2k})$ first and
extending the resulting formula by continuity.

In our case, the symbol formula for a product of $\PsiDO$s yields, for
$p > N$,
\begin{align*}
\sigma \bigl[L^\th_f(-\Delta + \eps^2)^{-p}\bigr](x,\xi)
&= \sum_{\a\in\N^N} \frac{(-i)^{|\a|}}{\a!} \;
\del_\xi^\a \sigma[L^\th_f](x,\xi) \,
\del_x^\a \sigma\bigl[(-\Delta + \eps^2)^{-p}\bigr](x,\xi)
\\
&= \sigma[L^\th_f](x,\xi) \,
\sigma\bigl[(-\Delta + \eps^2)^{-p}\bigr](x,\xi)
\\
&= f(x - \tfrac{\th}{2}S\xi)\,(|\xi|^2 + \eps^2)^{-p}.
\end{align*}
Therefore, for $p > N$,
\begin{align*}
\Tr\bigl(L^\th_f(-\Delta + \eps^2)^{-p}\bigr)
&= (2\pi)^{-2N}
\iint f(x - \tfrac{\th}{2} S\xi)\,(|\xi|^2 + \eps^2)^{-p}
\,d^{2N}\xi \,d^{2N}x
\\
&= (2\pi)^{-2N} \iint f(x)\, (|\xi|^2 + \eps^2)^{-p} \,d^{2N}\xi \,d^{2N}x.
\end{align*}
\end{proof}

We continue with a technical lemma, in the spirit of~\cite{RennieLocal}. Consider the approximate unit
$\{\ee_K\}_{K\in\N} \subset \A_c$ where
$\ee_K := \sum_{0\leq|n|\leq K} f_{nn}$. These $\ee_K$ are projectors
with a natural ordering: $\ee_K \mop \ee_L = \ee_L \mop \ee_K = \ee_K$
for $K \leq L$, and they are local units for~$\A_c$.

\begin{lem}
\label{lm:traceclass}
Let $f \in \A_{c,K}$. Then
$$
\pi^\th(f) \,(\Dslash^2+\eps^2)^{-N} - \pi^\th(f) \,
\bigl( \pi^\th(\ee_K)(\Dslash^2+\eps^2)^{-1}\pi^\th(\ee_K) \bigr)^N
\in \L^1(\H).
$$
\end{lem}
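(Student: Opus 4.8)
The plan is to exploit the local-unit structure of $\A_c$ together with the Schatten-class estimates already established (Lemmata \ref{lm:schatten}, \ref{lm:commutateur} and the factorization Proposition \ref{pr:factorization}) so that $\pi^\th(\ee_K)$ can be inserted or removed next to $\pi^\th(f)$ at the cost only of trace-class operators. The starting point is the observation that $f\in\A_{c,K}$ means $f=\ee_K\mop f=f\mop \ee_K$, hence $\pi^\th(f)=\pi^\th(f)\pi^\th(\ee_K)=\pi^\th(\ee_K)\pi^\th(f)$. First I would write the telescoping identity
\begin{align*}
(\Dslash^2+\eps^2)^{-N} - \bigl(\pi^\th(\ee_K)(\Dslash^2+\eps^2)^{-1}\pi^\th(\ee_K)\bigr)^N
= \sum_{j=0}^{N-1} \bigl(\pi^\th(\ee_K)(\Dslash^2+\eps^2)^{-1}\pi^\th(\ee_K)\bigr)^{j} \, R \, (\Dslash^2+\eps^2)^{-(N-1-j)} ,
\end{align*}
where $R := (\Dslash^2+\eps^2)^{-1} - \pi^\th(\ee_K)(\Dslash^2+\eps^2)^{-1}\pi^\th(\ee_K)$, multiply on the left by $\pi^\th(f)=\pi^\th(f)\pi^\th(\ee_K)$, and show that each summand becomes trace-class after this left multiplication.

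The heart of the matter is to control $R$, or rather $\pi^\th(f)\pi^\th(\ee_K)\,R$ and the conjugated versions appearing in the telescoping sum. Writing $G:=(\Dslash^2+\eps^2)^{-1}$ one has $R = [\,1-\pi^\th(\ee_K)\,]G + \pi^\th(\ee_K)G\,[\,1-\pi^\th(\ee_K)\,]$, but since $\pi^\th(\ee_K)$ is a projector and $\pi^\th(f)\pi^\th(\ee_K)=\pi^\th(f)$, multiplying on the left by $\pi^\th(f)$ kills the first term when one also multiplies on the right appropriately; the surviving contributions all contain a commutator factor $[\,\pi^\th(\ee_K),G\,]$, which by the identity \eqref{eq:deriv-inv} equals $G\,[\Dslash^2,\pi^\th(\ee_K)]\,G = G\bigl(-\pi^\th(\del^\mu\del_\mu\ee_K) - 2i(L^\th(\del_\mu\ee_K)\ox\ga^\mu)\Dslash\bigr)G$. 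Because $\ee_K\in\A_c\subset\SS$, Lemma \ref{lm:commutateur} (or directly Lemma \ref{lm:schatten} applied to $\pi^\th(\del_\mu\ee_K) G^{1/2}$ and $\Dslash G\cdot G^{1/2}$) gives that $[\pi^\th(\ee_K),G]$ lies in $\L^p(\H)$ for every $p>N/2$; more simply, $[|\Dslash|,\pi^\th(\ee_K)](\Dslash^2+\eps^2)^{-1}\in\L^q$ for $q>1$. Then I would pair this with a factor $\pi^\th(f)\,(\Dslash^2+\eps^2)^{-(N-1-j)}$-type piece, using Proposition \ref{pr:factorization} to split $f$ into enough Moyal factors $f=f_1\mop\cdots\mop f_{2N}$ so that, exactly as in the proof of Lemma \ref{lm:existence}, one distributes the $2(N-1-j)$ remaining resolvent powers among left Moyal multiplications, each pair $\pi^\th(f_i)(\Dslash^2+\eps^2)^{-1}$ contributing $\L^p$ membership for $p>N$ by Lemma \ref{lm:schatten}. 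Counting exponents via Hölder's inequality (\cite[Prop.~7.16]{Polaris}), the total Schatten index of each summand is $<1$, hence trace-class, and the sum of finitely many trace-class operators is trace-class.

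The main obstacle I anticipate is bookkeeping the Hölder exponent count so that every summand in the telescoping identity genuinely lands in $\L^1$: the conjugating factors $\bigl(\pi^\th(\ee_K)G\pi^\th(\ee_K)\bigr)^{j}$ are only bounded, not small in Schatten norm, so the $\L^1$-gain must come entirely from the single commutator factor together with the $\pi^\th(f)(\Dslash^2+\eps^2)^{-(N-1-j)}$ block, and one must check there are always enough resolvent powers left after extracting $G$'s into the commutator. This is handled precisely by the factorization trick of Lemma \ref{lm:existence}: factor $f$ into $2N$ Moyal factors and interleave resolvents, using Lemma \ref{lm:schatten} to turn each $\pi^\th(f_i)(|\Dslash|+\eps)^{-1}$ into an $\L^{2N+\delta}$ element and Lemma \ref{lm:commutateur} for the leftover commutator in $\L^{1+\delta'}$, so that the product of $2N$ such factors is in $\L^{1}$. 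A minor secondary point is justifying the use of $f=\ee_K\mop f=f\mop\ee_K$ and the resulting simplifications $\pi^\th(f)[1-\pi^\th(\ee_K)]=0$; these are immediate from $\ee_K$ being a local unit for $\A_{c,K}$ and from $\pi^\th$ being a $*$-homomorphism. With these ingredients the lemma follows.
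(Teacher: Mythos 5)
Your telescoping identity, together with the observation that $\pi^\th(f)(1-\pi^\th(\ee_K))=0$ and that each block $(\pi^\th(\ee_K)(\Dslash^2+\eps^2)^{-1}\pi^\th(\ee_K))^j$ ends in the projector $\pi^\th(\ee_K)$, correctly reduces each summand to one containing a single commutator $[(\Dslash^2+\eps^2)^{-1},\pi^\th(\ee_K)]$. This is a genuinely different route from the paper, which first reduces to $f=\ee_K$ and then exploits the \emph{increasing} family of local units $\ee_{K+n}$, using the exact vanishing $\pi^\th(\ee_K)[\Dslash,\pi^\th(\ee_{K+n})]=0$ to write $\pi^\th(\ee_K)(\Dslash+\la)^{-1}(1-\pi^\th(\ee_{K+2N}))$ as a product of $2N+1$ factors each in $\L^{2N+1}(\H)$, and thereby to insert larger and larger cutoffs to the right of each resolvent before trading them back for $\ee_K$. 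However, your justification that each summand is trace-class has a genuine gap. First, the quoted Schatten exponents are wrong: Lemma \ref{lm:commutateur} with $l=1$ gives $[|\Dslash|,\pi^\th(\ee_K)](\Dslash^2+\eps^2)^{-1}\in\L^q$ only for $q>N$, not $q>1$ (the exponent $q>1$ requires $l=N$); and the best these lemmas give for $[\pi^\th(\ee_K),(\Dslash^2+\eps^2)^{-1}]=(\Dslash^2+\eps^2)^{-1}[\Dslash^2,\pi^\th(\ee_K)](\Dslash^2+\eps^2)^{-1}$ is $\L^p$ for $p>2N/3$ (one Schwartz factor against three half-powers of the resolvent), not $p>N/2$. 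Second, and more seriously, your proposed mechanism for absorbing the naked trailing resolvents $(\Dslash^2+\eps^2)^{-(N-1-j)}$ --- factorizing $f=f_1\mop\cdots\mop f_{2N}$ and interleaving the factors with the resolvents as in Lemma \ref{lm:existence} --- cannot be executed here: $\pi^\th(f)$ sits on the far left, separated from those resolvents by the conjugating blocks and the commutator, and pushing a factor $\pi^\th(f_i)$ to the right would require commuting it past $\pi^\th(\ee_K)$, where $[\pi^\th(f_i),\pi^\th(\ee_K)]$ is merely bounded, with no Schatten gain. Since a pure function of $-i\nb$ is never compact on $\R^{2N}$, the trailing resolvents contribute nothing to summability unless a Schwartz function is placed next to them, and your argument does not supply one.

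The gap is repairable, and in fact more directly than you suggest: the Schwartz function needed next to the trailing resolvents is already present inside the commutator. Writing $G:=(\Dslash^2+\eps^2)^{-1}$ and using \eqref{eq:deriv-inv}, the $j$-th summand ($j\geq1$) equals
\begin{align*}
\pi^\th(f)\,\bigl(\pi^\th(\ee_K)G\pi^\th(\ee_K)\bigr)^{j-1}\,\bigl(\pi^\th(\ee_K)G\bigr)\bigl(\pi^\th(\ee_K)G\bigr)\,[\Dslash^2,\pi^\th(\ee_K)]\,G^{N-j},
\end{align*}
and $[\Dslash^2,\pi^\th(\ee_K)]=-\pi^\th(\del^\mu\del_\mu\ee_K)-2i\,(L^\th(\del_\mu\ee_K)\ox\ga^\mu)\Dslash$ carries Schwartz coefficients, so Lemma \ref{lm:schatten} puts $[\Dslash^2,\pi^\th(\ee_K)]G^{N-j}$ in $\L^p$ for every $p>2N/(2(N-j)-1)$. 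Each factor $\pi^\th(\ee_K)G$ lies in $\L^p$ for $p>N$, so H\"older's inequality gives a total reciprocal exponent that can be taken arbitrarily close to $(j-1)/N+2/N+\bigl((N-j)/N-1/(2N)\bigr)=1+1/(2N)>1$; the $j=0$ term is handled the same way. You should carry out this count explicitly rather than invoking Lemma \ref{lm:commutateur} with the wrong exponent or the factorization trick, which does not apply in this configuration.
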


\begin{proof}
For simplicity we use the notation $e := \ee_K$ and
$e_n := \ee_{K+n}$. By the boundedness of $\pi^\th(f)$, we may assume
that $f = e \in \A_{c,K}$.

Because $e_n \mop e = e \mop e_n = e$, it is clear that
\begin{equation}
\pi^\th(e) (\Dslash+\la)^{-1} \bigl( 1 - \pi^\th(e_n) \bigr)
= \pi^\th(e) \,(\Dslash+\la)^{-1} \,[\Dslash,\pi^\th(e_n)]
\,(\Dslash+\la)^{-1}.
\label{eq:difference}
\end{equation}
Also, $\pi^\th(e)\,[\Dslash, \pi^\th(e_n)] =
[\Dslash,\pi^\th(e\mop e_n)] - [\Dslash,\pi^\th(e)]\,\pi^\th(e_n) = 0$
because we have the relation $[\Dslash,\pi^\th(e)]\,\pi^\th(e_n) = [\Dslash,\pi^\th(e)]$ for
$n \geq1$. We
obtain
\begin{align*}
& A_n := \pi^\th(e) (\Dslash+\la)^{-1} [\Dslash,\pi^\th(e_n)]
(\Dslash+\la)^{-1}
\nonumber\\
&= \pi^\th(e) (\Dslash+\la)^{-1} [\Dslash,\pi^\th(e_1)]
(\Dslash+\la)^{-1} [\Dslash,\pi^\th(e_n)] (\Dslash+\la)^{-1}
\nonumber\\
&= \pi^\th(e) (\Dslash+\la)^{-1} [\Dslash,\pi^\th(e_1)] \pi^\th(e_2)
(\Dslash+\la)^{-1} [\Dslash,\pi^\th(e_n)] (\Dslash+\la)^{-1}
= \cdots
\nonumber\\
&= \bigl( \pi^\th(e) (\Dslash+\la)^{-1} \bigr)
\bigl( [\Dslash,\pi^\th(e_1)] (\Dslash+\la)^{-1} \bigr)
\bigl( [\Dslash,\pi^\th(e_2)] (\Dslash+\la)^{-1} \bigr)
\cdots \\
&\hspace{4cm} \cdots \bigl( [\Dslash,\pi^\th(e_n)] (\Dslash+\la)^{-1} \bigr).
\end{align*}

Taking $n = 2N$ here, $A_{2N}$ appears as a product of $2N+1$ terms in
parentheses, each in $\L^{2N+1}(\H)$ by Lemma~\ref{lm:schatten}. Hence,
by H\"older's inequality, $A_{2N}$ is trace-class and therefore
$\pi^\th(e) (\Dslash+\la)^{-1} (1 - \pi^\th(\ee_{2N})) \in \L^1(\H)$.
Thus,
\begin{align}
\pi^\th(e)\, &(\Dslash^2+\eps^2)^{-1} \bigl(1-\pi^\th(\ee_{4N})\bigr)
\nonumber\\
&= \pi^\th(e) (\Dslash-i\eps)^{-1}
\bigl(1-\pi^\th(\ee_{2N}) + \pi^\th(\ee_{2N})\bigr)
(\Dslash+i\eps)^{-1} \bigl(1-\pi^\th(\ee_{4N})\bigr)
\nonumber\\
&= \pi^\th(e)(\Dslash-i\eps)^{-1} \bigl(1-\pi^\th(\ee_{2N})\bigr)
(\Dslash+i\eps)^{-1} \bigl(1-\pi^\th(\ee_{4N})\bigr)
\nonumber\\
&\qquad + \pi^\th(e) (\Dslash-i\eps)^{-1} \pi^\th(\ee_{2N})
(\Dslash+i\eps)^{-1} \bigl(1-\pi^\th(\ee_{4N})\bigr) \in \L^1(\H).
\label{eq:difference2}
\end{align}
This is to say $\pi^\th(e) (\Dslash^2+\eps^2)^{-1} \sim
\pi^\th(e) (\Dslash^2+\eps^2)^{-1} \pi^\th(\ee_{4N})$.
Shifting this property, we get
\begin{align*}
\pi^\th(e) (\Dslash^2+\eps^2)^{-N}
&\sim \pi^\th(e) (\Dslash^2+\eps^2)^{-1} \pi^\th(\ee_{4N})
(\Dslash^2+\eps^2)^{-N+1}
\\
&\sim \pi^\th(e) (\Dslash^2+\eps^2)^{-1} \pi^\th(\ee_{4N})
(\Dslash^2+\eps^2)^{-1} \pi^\th(\ee_{8N}) (\Dslash^2+\eps^2)^{-N+2}
\sim \cdots
\\
&\sim \pi^\th(e) (\Dslash^2+\eps^2)^{-1} \pi^\th(\ee_{4N})
(\Dslash^2+\eps^2)^{-1} \pi^\th(\ee_{8N}) \cdots
(\Dslash^2+\eps^2)^{-1} \pi^\th(\ee_{4N^2}).
\end{align*}
By identity \eqref{eq:deriv-inv}, the last term on the right equals
\begin{align*}
&\pi^\th(e) (\Dslash+i\eps)^{-1} \pi^\th(e) (\Dslash-i\eps)^{-1}
\pi^\th(\ee_{4N}) (\Dslash^2+\eps^2)^{-1} \pi^\th(\ee_{8N}) \cdots
(\Dslash^2+\eps^2)^{-1} \pi^\th(\ee_{4N^2})
\\
&\quad + \pi^\th(e) (\Dslash+i\eps)^{-1} [\Dslash,\pi^\th(e)]
(\Dslash^2+\eps^2)^{-1} \pi^\th(\ee_{4N}) (\Dslash^2+\eps^2)^{-1}
\pi^\th(\ee_{8N}) \cdot\cdot (\Dslash^2+\eps^2)^{-1} \pi^\th(\ee_{4N^2}).
\end{align*}

The last term is trace-class because it is a product of $N$ terms in
$\L^p(\H)$ for $p > N$ and one term in $\L^q(\H)$ for $q > 2N$, by
Lemma~\ref{lm:schatten}. Removing the second $\pi^\th(e)$ once again,
by the ordering property of the local units $\ee_K$ yields
\begin{align*}
&\pi^\th(e) (\Dslash+i\eps)^{-1} \pi^\th(e) (\Dslash-i\eps)^{-1}
\pi^\th(\ee_{4N}) (\Dslash^2+\eps^2)^{-1} \pi^\th(\ee_{8N}) \cdots
(\Dslash^2+\eps^2)^{-1} \pi^\th(\ee_{4N^2})
\\
&= \pi^\th(e) (\Dslash^2+\eps^2)^{-1} \pi^\th(e)
(\Dslash^2+\eps^2)^{-1} \pi^\th(\ee_{8N}) \cdots
(\Dslash^2+\eps^2)^{-1} \pi^\th(\ee_{4N^2})
\\
&\quad + \pi^\th(e) (\Dslash^2+\eps^2)^{-1} [\Dslash,\pi^\th(e)]
(\Dslash-i\eps)^{-1} \pi^\th(\ee_{4N}) (\Dslash^2+\eps^2)^{-1}
\pi^\th(\ee_{8N}) \cdot\cdot (\Dslash^2+\eps^2)^{-1} \pi^\th(\ee_{4N^2}).
\end{align*}
The last term is still trace-class, hence
$$
\pi^\th(e) (\Dslash^2+\eps^2)^{-N} \sim \pi^\th(e) (\Dslash^2+\eps^2)^{-1} \pi^\th(e)
(\Dslash^2+\eps^2)^{-1} \pi^\th(\ee_{8N}) \cdots (\Dslash^2+\eps^2)^{-1} \pi^\th(\ee_{4N^2}).
$$
This algorithm, applied another $(N-1)$ times, yields the result:
\begin{align*}
\pi^\th(e) (\Dslash^2+\eps^2)^{-N} \sim \bigl( \pi^\th(e) (\Dslash^2+\eps^2)^{-1} \pi^\th(e) \bigr)^N.
\end{align*}
\end{proof}

We retain the following consequence.

\begin{cly}
\label{cr:killproj}
$\Tr^+\bigl(\pi^\th(g)\,[\pi^\th(f),(\Dslash^2+\eps^2)^{-N}]\bigr) = 0$ for any $g \in \SS$ and any projector $f \in \A_c$.
\end{cly}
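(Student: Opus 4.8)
The plan is to reduce the statement, via the trace property of $\Tr^+$, to the vanishing of a translation-invariant linear functional on a Schwartz function of zero integral. Write $G:=\pi^\th(g)$, $P:=\pi^\th(f)$ and $X:=(\Dslash^2+\eps^2)^{-N}$. By Lemma~\ref{lm:existence}, $\pi^\th(h)X\in\L^{1+}(\H)$ for every $h\in\SS$, and $X\pi^\th(h)=\big(\pi^\th(h^*)X\big)^{*}\in\L^{1+}(\H)$ as well; hence $GPX$, $PGX$ and $GXP$ all lie in $\L^{1+}(\H)$, so the Dixmier trace $\Tr^+$ (a trace on $\L^{1,\infty}$ vanishing on $\L^{1}$) applies to them and can be cycled against the bounded operators $G$ and $P$. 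First I would record that, by cyclicity, $\Tr^+(GXP)=\Tr^+(PGX)$, so
$$
\Tr^+\big(G[P,X]\big)=\Tr^+(GPX)-\Tr^+(GXP)=\Tr^+(GPX)-\Tr^+(PGX)=\Tr^+\big([G,P]\,X\big).
$$
Now $[G,P]=\pi^\th(g\mop f)-\pi^\th(f\mop g)=\pi^\th(u)$ with $u:=g\mop f-f\mop g\in\SS$, and since the Moyal product is tracial (Lemma~\ref{lm:propriete}(v): $\int g\mop f=\int f\mop g$) we get $\int_{\R^{2N}}u(x)\,d^{2N}x=0$. Thus the whole claim comes down to showing $\Tr^+\big(\pi^\th(u)X\big)=0$ for every $u\in\SS$ with vanishing integral. (Note that the hypothesis that $f$ be a projection is not actually used here.)

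For that step I would introduce $\phi\colon u\in\SS(\R^{2N})\mapsto\Tr^+\big(\pi^\th(u)X\big)$ and show it is translation-invariant: the Moyal product is translation-covariant, so for $v\in\R^{2N}$ and the translation unitary $T_v$ on $L^2(\R^{2N})$ one has $\pi^\th(T_v u)=(T_v\ox\unb)\,\pi^\th(u)\,(T_v\ox\unb)^{-1}$, while $T_v\ox\unb$ commutes with $X=(-\Delta+\eps^{2})^{-N}\ox\unb$; cycling the unitary through $\Tr^+$ gives $\phi(T_v u)=\phi(u)$. A translation-invariant tempered distribution on $\R^{2N}$ is a constant multiple of Lebesgue measure, so $\phi(u)=c\int u$ for some $c\in\C$, and in particular $\phi$ kills every $u$ with $\int u=0$. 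Combining with the reduction above yields $\Tr^+\big(\pi^\th(g)[\pi^\th(f),X]\big)=\phi(u)=0$, as claimed; and since the argument is uniform in the state defining $\Tr^+$, one also gets independence of that choice on these operators.

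The one point needing care — and what I expect to be the main, though essentially routine, obstacle — is that $\phi$ is genuinely a tempered distribution, i.e. continuous for the Schwartz topology; without this, the ``translation-invariant $\Rightarrow$ multiple of Lebesgue'' step is unavailable. Here the quantitative content of the earlier lemmas has to be invoked: one must revisit the proofs of Lemma~\ref{lm:Cwikel} and Lemma~\ref{lm:existence} to extract, for $\pi^\th(u)(\Dslash^2+\eps^2)^{-N}$, a bound on its $\L^{1,\infty}$-norm (hence on $|\Tr^+|$) by finitely many Schwartz seminorms of $u$ — concretely, the Cwikel-type estimate $\mu_m\big(\pi^\th(u)(|\Dslash|+\eps)^{-2N}\big)\ll_u m^{-1}$ with the implicit constant controlled by $\|u\|_{2}$ and a fixed number of its derivatives. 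Granting that seminorm bookkeeping, the remainder of the proof is exactly the cyclicity-plus-translation-invariance argument above.
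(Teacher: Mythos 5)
Your route is genuinely different from the paper's, and it contains one real gap. The paper disposes of the corollary in a single line: by Lemma~\ref{lm:traceclass}, $\pi^\th(f)(\Dslash^2+\eps^2)^{-N}$ agrees modulo $\L^1(\H)$ with its ``localized'' version $\pi^\th(f)\bigl(\pi^\th(\ee_K)(\Dslash^2+\eps^2)^{-1}\pi^\th(\ee_K)\bigr)^N$, and taking adjoints (this is where $f=f^*=f\mop f$ and $f\in\A_c$ are used) gives the same statement for $(\Dslash^2+\eps^2)^{-N}\pi^\th(f)$; the commutator is then controlled entirely by the local-unit relations $f\mop\ee_K=\ee_K\mop f=f$ and the vanishing of $\Tr^+$ on $\L^1$. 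No distributional argument enters. Your reduction to $\Tr^+\bigl(\pi^\th(u)(\Dslash^2+\eps^2)^{-N}\bigr)$ with $u=g\mop f-f\mop g$, $\int u=0$, is sound --- with the technical caveat that $(\Dslash^2+\eps^2)^{-N}$ is bounded but not in $\L^{1,\infty}$, so to cycle $\pi^\th(f\mop g)$ past it you must factor $f\mop g=h_1\mop h_2$ and apply the trace property twice, each intermediate product lying in $\L^{1+}$ by Lemma~\ref{lm:existence} and its adjoint --- and you are right that it makes the projector hypothesis irrelevant, so your statement is stronger than the paper's. The translation-covariance step is also fine.

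The gap is exactly where you put it, but it is not routine bookkeeping: to use ``translation-invariant tempered distribution $\Rightarrow$ multiple of Lebesgue'' you must show $\phi:u\mapsto\Tr^+\bigl(\pi^\th(u)(\Dslash^2+\eps^2)^{-N}\bigr)$ is continuous on $\SS(\R^{2N})$, i.e.\ bound $\norm{\pi^\th(u)(\Dslash^2+\eps^2)^{-N}}_{1,\infty}$ by finitely many Schwartz seminorms of $u$. The only such control available from Lemma~\ref{lm:Cwikel}, Corollary~\ref{cr:spec-dim-one} and Lemma~\ref{lm:existence} is expressed through a factorization $u=u_1\mop u_{21}\mop\cdots$ of the symbol, and the factorization of Proposition~\ref{pr:factorization} is neither linear nor canonical in~$u$, so the resulting product of norms of the factors is not a seminorm estimate in $u$ itself. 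Filling this requires an explicit continuous factorization (e.g.\ through the matrix basis $f_{mn}$, writing $u$ as a $\mop$-product whose factors have $L^2$-norms controlled by weighted $\ell^2$-norms of the coefficients of $u$) or a closed-graph argument; that is the real content of your proof, not an afterthought. Note finally that, once completed, your argument essentially re-derives the qualitative part of Proposition~\ref{pr:calcul} ($\phi(u)=c\int u$) by soft means, whereas the paper needs the corollary as an ingredient \emph{inside} the proof of that proposition --- which is precisely why it proves the corollary by the hard localization of Lemma~\ref{lm:traceclass} rather than by any integral identity.
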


\begin{proof}
This follows from Lemma~\ref{lm:traceclass} applied to $\pi^\th(f)\,(\Dslash^2+\eps^2)^{-N}$ and its adjoint.
\end{proof}

Now we are finally ready to evaluate the Dixmier traces.

\begin{prop}
\label{pr:calcul}
For $f \in \SS$, any Dixmier trace $\Tr^+$ of
$\pi^\th(f)\,(\Dslash^2 + \eps^2)^{-N}$ is independent of~$\eps$, and
$$
\Tr^+ \bigl( \pi^\th(f)\,(\Dslash^2 + \eps^2)^{-N} \bigr)
= \tfrac{2^N\,\Omega_{2N}}{2N\,(2\pi)^{2N}} \int f(x) \,d^{2N}x
= \tfrac{1}{N!\,(2\pi)^N} \int f(x) \,d^{2N}x.
$$
\end{prop}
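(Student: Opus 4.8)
The plan is to realize the Dixmier trace of $\pi^\th(f)(\Dslash^2+\eps^2)^{-N}$ as the residue at $z=2N$ of the holomorphic family $z\mapsto\Tr\big(\pi^\th(f)(\Dslash^2+\eps^2)^{-z/2}\big)$, to evaluate that residue by an explicit integral, and to settle measurability through the localization lemmas proved above; the residue computation is elementary and the measurability is where the work lies.

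\textbf{Step 1 (the zeta family).} I would set $\La:=\Dslash^2+\eps^2$, a positive operator with bounded inverse, and $\zeta_f(z):=\Tr\big(\pi^\th(f)\,\La^{-z/2}\big)$. Since $\pi^\th(f)=L^\th_f\ox 1_{2^N}$ acts diagonally on $\H=\H_r\ox\C^{2^N}$, Lemma \ref{lm:in-extremis} gives that $\zeta_f$ is holomorphic for $\Re z>2N$ with
\[
\zeta_f(z)=\frac{2^N}{(2\pi)^{2N}}\Big(\int_{\R^{2N}}f(x)\,d^{2N}x\Big)\int_{\R^{2N}}\big(|\xi|^2+\eps^2\big)^{-z/2}\,d^{2N}\xi.
\]
Passing to polar coordinates and recognizing a Beta integral,
\[
\int_{\R^{2N}}\big(|\xi|^2+\eps^2\big)^{-z/2}\,d^{2N}\xi=\Omega_{2N}\int_0^\infty\frac{\rho^{2N-1}\,d\rho}{(\rho^2+\eps^2)^{z/2}}=\frac{\Omega_{2N}}{2}\,\eps^{2N-z}\,\frac{\Gamma(N)\,\Gamma(\tfrac z2-N)}{\Gamma(\tfrac z2)},
\]
so $\zeta_f$ extends meromorphically to $\C$ with only simple poles, at $z=2N,2N-2,2N-4,\dots$ (those of $\Gamma(\tfrac z2-N)$). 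In particular $z=2N$ is the unique pole on the line $\Re z=2N$, and since $\eps^{2N-z}=1$ there,
\[
\Res_{z=2N}\zeta_f(z)=\frac{2^N\,\Omega_{2N}}{(2\pi)^{2N}}\int_{\R^{2N}}f(x)\,d^{2N}x,
\]
visibly independent of $\eps$.

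\textbf{Step 2 (measurability).} By Lemma \ref{lm:existence}, $\pi^\th(f)\La^{-N}\in\L^{1+}(\H)$, and I would need to show $\Tr^+\big(\pi^\th(f)\La^{-N}\big)$ is independent of the state $\omega$. It suffices to treat $f\in\A_c$, say $f\in\A_{c,K}$ (on which $\ee_K$ is a unit): every $\Tr^+$ is $\|\cdot\|_{1,\infty}$-continuous, and $f\mapsto\pi^\th(f)\La^{-N}$ is $\|\cdot\|_{1,\infty}$-bounded by a Schwartz seminorm of $f$ (as is implicit in the factorization argument behind Lemma \ref{lm:existence}), so agreement on the dense subalgebra $\A_c$ forces agreement on $\SS$. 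For $f\in\A_{c,K}$, Lemma \ref{lm:traceclass} replaces $\pi^\th(f)\La^{-N}$, \emph{modulo a trace-class operator} on which every $\Tr^+$ vanishes, by $\pi^\th(f)\big(\pi^\th(\ee_K)\La^{-1}\pi^\th(\ee_K)\big)^N$; using $\ee_K\mop f=f\mop\ee_K=f$ and Corollary \ref{cr:killproj}, this operator is, again modulo trace-class, confined to the corner cut out by $\pi^\th(\ee_K)$, hence (after splitting into positive/negative, resp.\ real/imaginary, parts via Proposition \ref{pr:factorization}) a combination of genuinely positive operators whose zeta functions inherit from Step 1 a single simple pole on their abscissa of convergence. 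An Ikehara-type Tauberian theorem, in the spirit of Example \ref{Dixtrace}, then yields honest singular-value asymptotics $\mu_m\sim c\,m^{-1}$ for these pieces, proving measurability and, at the same time, $\Tr^+\big(\pi^\th(f)\La^{-N}\big)=\tfrac1{2N}\Res_{z=2N}\zeta_f(z)$; the factor $\tfrac1{2N}$ reflects that $\La^{1/2}$ enters with order $1$ whereas $\pi^\th(f)\La^{-N}$ has order $-2N$, exactly as in Connes' trace theorem (Theorem \ref{TrDix}).

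\textbf{Step 3 (the constant) and the obstacle.} Combining the two steps,
\[
\Tr^+\big(\pi^\th(f)(\Dslash^2+\eps^2)^{-N}\big)=\frac{2^N\,\Omega_{2N}}{2N\,(2\pi)^{2N}}\int_{\R^{2N}}f(x)\,d^{2N}x,
\]
which is $\eps$-independent and is the first asserted value; since $\Omega_{2N}=2\pi^N/\Gamma(N)=2\pi^N/(N-1)!$ and $2N(N-1)!=2\cdot N!$, this equals $\dfrac{2^{N+1}\pi^N}{2\cdot N!\,2^{2N}\pi^{2N}}=\dfrac{1}{N!\,(2\pi)^N}$, the second value. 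The main obstacle is Step 2: one must go beyond a Ces\`aro/averaged version of the eigenvalue asymptotics—which the symbol calculus would deliver as in the commutative case (Theorem \ref{th:commuters})—to genuine measurability for \emph{every} singular trace, and for an operator that is neither positive nor a classical $\PsiDO$ on a compact manifold; here regularizing operators need not be compact, so the purely local argument cannot be transcribed, and the detour through the approximate units $\ee_K$ (Lemmas \ref{lm:traceclass}, \ref{lm:existence} and Corollary \ref{cr:killproj}) is precisely what makes the Tauberian step available. A secondary point is to make the reduction to $\A_c$ respect the Fr\'echet topology of $\SS$.
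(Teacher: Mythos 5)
Your overall architecture coincides with the paper's: reduce to $f\in\A_c$, confine the operator to the corner cut out by a local unit via Lemma \ref{lm:traceclass}, compute $\Tr\bigl(\pi^\th(f)(\Dslash^2+\eps^2)^{-Ns}\bigr)$ explicitly with Lemma \ref{lm:in-extremis}, extract the residue (your Beta-integral evaluation and the final constant are both correct, including the factor $\tfrac{1}{2N}$), and extend to all of $\SS$ by approximation with the $\ee_K$. The one step that does not go through as written is the passage from the simple pole of the zeta function to measurability. You propose to split the non-positive operator $\pi^\th(f)\bigl(\pi^\th(\ee_K)(\Dslash^2+\eps^2)^{-1}\pi^\th(\ee_K)\bigr)^N$ into positive pieces and apply an Ikehara-type Tauberian theorem to each; but the positive and negative parts of such an operator are not of the form treated in your Step 1, so their individual zeta functions are not computable and do not ``inherit a single simple pole'' from the computation for the whole operator. (Also, Proposition \ref{pr:factorization} is the Moyal factorization $h=f\mop g$, not a decomposition into positive elements, so it cannot serve the role you assign it there.)

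The paper sidesteps this by invoking the Carey--Phillips--Sukochev residue formula \cite[Thm.~5.6]{CareyPS}: for $T\geq 0$ in $\L^{1+}(\H)$ and $A$ bounded, if $\lim_{s\downarrow 1}(s-1)\Tr(AT^s)$ exists then it equals $\Tr_\omega(AT)$ for \emph{every} Dixmier trace $\omega$. Taking $A=\pi^\th(f)$ and $T=\bigl(\pi^\th(e)(\Dslash^2+\eps^2)^{-1}\pi^\th(e)\bigr)^N$ --- positive and in $\L^{1+}(\H)$ precisely because of Lemma \ref{lm:traceclass} --- one only has to check that $\pi^\th(f)T^{s}$ may be replaced, modulo trace class and for $s>1$, by $\pi^\th(f)\pi^\th(e)(\Dslash^2+\eps^2)^{-Ns}\pi^\th(e)$, after which Lemma \ref{lm:in-extremis} and your residue computation finish the proof. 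So the missing ingredient is not a new idea but the correct Tauberian statement for $\Tr_\omega(AT)$ with $A$ merely bounded; with that citation your argument closes. A smaller point: your density step asserts that $f\mapsto\pi^\th(f)(\Dslash^2+\eps^2)^{-N}$ is $\norm{\cdot}_{1,\infty}$-bounded by a Schwartz seminorm of $f$, which would require the Moyal factorization to depend continuously on $f$; the paper avoids this by fixing one factorization $f=g\mop h$ and estimating $\norm{\pi^\th(g)-\pi^\th(\ee_K\mop g)}_\infty\leq(2\pi\th)^{-N/2}\,\norm{g-\ee_K\mop g}_2\to 0$ against $\Tr^+\bigl|\pi^\th(h)(\Dslash^2+\eps^2)^{-N}\bigr|$, using Corollary \ref{cr:killproj}.
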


\begin{proof}
We will first prove it for $f \in \A_c$. Choose $e$ a unit for $f$,
that is, $e \mop f = f \mop e = f$. By Lemmata~\ref{lm:existence}
and~\ref{lm:traceclass}, and because $\L^1(\H)$ lies inside the kernel
of the Dixmier trace, we obtain
$$
\Tr^+( \pi^\th(f) \,(\Dslash^2+\eps^2)^{-N}) = \Tr^+ \bigl(
\pi^\th(f) \,(\pi^\th(e) (\Dslash^2+\eps^2)^{-1} \pi^\th(e))^N \bigr).
$$
Lemma~\ref{lm:traceclass} applied to $f = e$ implies that
$\bigl( \pi^\th(e) (\Dslash^2+\eps^2)^{-1} \pi^\th(e) \bigr)^N$ is a
positive operator in $\L^{1+}(\H)$, since it is equal to
$\pi^\th(e) (\Dslash^2+\eps^2)^{-N}$ plus a term in $\L^1(\H)$. Thus,
\cite[Thm.~5.6]{CareyPS} yields (since the limit converges, any
Dixmier trace will give the same result):
\begin{align}
\Tr^+ \bigl( \pi^\th(f) \,(\Dslash^2+\eps^2)^{-N} \bigr)
&= \lim_{s\downarrow 1} (s-1) \Tr\bigl[ \pi^\th(f)\,
(\pi^\th(e) (\Dslash^2+\eps^2)^{-1} \pi^\th(e))^{Ns} \bigr]
\nonumber \\
&= \lim_{s\downarrow 1} (s-1) \Tr\bigl( \pi^\th(f) \pi^\th(e)
(\Dslash^2+\eps^2)^{-Ns} \pi^\th(e) + E_{Ns} \bigr),
\label{eq:decompo}
\end{align}
where
$$
E_{Ns} := \pi^\th(f) \,
\bigl( \pi^\th(e) (\Dslash^2+\eps^2)^{-1} \pi^\th(e) \bigr)^{Ns}
- \pi^\th(f)\pi^\th(e) (\Dslash^2+\eps^2)^{-Ns} \pi^\th(e).
$$
Lemma~\ref{lm:traceclass} again shows that $E_N \in \L^1(\H)$.

Now for $s > 1$,  the first term $\pi^\th(f) \,\bigl(\pi^\th(e)
(\Dslash^2+\eps^2)^{-1} \pi^\th(e) \bigr)^{Ns}$ of $E_{Ns}$ is in
$\L^1(\H)$. In effect, using Lemma~\ref{lm:schatten} and since
$\pi^\th(e) (\Dslash^2+\eps^2)^{-1} \in \L^p(\H)$ for $p > N$, we get that 
$\pi^\th(e) (\Dslash^2+\eps^2)^{-1} \pi^\th(e) \in \L^{Ns}(\H)$. This
operator being positive, one concludes
$$
\bigl( \pi^\th(e) (\Dslash^2+\eps^2)^{-1} \pi^\th(e) \bigr)^{Ns}
\in \L^1(\H).
$$
The second term
$\pi^\th(f) \pi^\th(e) (\Dslash^2+\eps^2)^{-Ns} \pi^\th(e)$ lies in
$\L^1(\H)$ too, because
$$
\|\pi^\th(e) (\Dslash^2+\eps^2)^{-Ns} \pi^\th(e)\|_1
= \|(\Dslash^2+\eps^2)^{-Ns/2} \pi^\th(e)\|_2^2
= \|\pi^\th(e) (\Dslash^2+\eps^2)^{-Ns/2}\|_2^2
$$
is finite by Lemma~\ref{lm:HSO}. So $E_{Ns} \in \L^1(\H)$ for
$s \geq 1$, and \eqref{eq:decompo} implies
\begin{align*}
\Tr^+ \bigl( \pi^\th(f) \,(\Dslash^2+\eps^2)^{-N} \bigr)
&= \lim_{s\downarrow 1} (s-1) \Tr \bigl(
\pi^\th(f) \pi^\th(e) (\Dslash^2+\eps^2)^{-Ns} \pi^\th(e) \bigr)
\\
&= \lim_{s\downarrow 1}
(s-1) \Tr \bigl(\pi^\th(f) (\Dslash^2+\eps^2)^{-Ns} \bigr).
\end{align*}

Applying now Lemma~\ref{lm:in-extremis}, we obtain
\begin{align*}
\Tr^+ \bigl( \pi^\th(f) \,(\Dslash^2+\eps^2)^{-N} \bigr)
&= \lim_{s\downarrow 1} (s-1) \Tr(1_{2^N}) \Tr\bigl(L^\th_f (-\Delta +\eps^2)^{-Ns}\bigr)\\
&= 2^N(2\pi)^{-2N} \lim_{s\downarrow 1} (s-1)
\iint f(x)\, (|\xi|^2+\eps^2)^{-Ns} \,d^{2N}\xi \,d^{2N}x\\
&= \tfrac{1}{N!\,(2\pi)^N} \int f(x) \,d^{2N}x,
\end{align*}
where the identity
$$
\int (|\xi|^2+\eps^2)^{-Ns} \,d^{2N}\xi
= \pi^N \tfrac{\Ga(N(s-1))}{\Ga(Ns)\,\eps^{2N(s-1)}},
$$
and $\Ga(N\a) \sim 1/N\a \as \a \downarrow 0$ have been used. The
proposition is proved for $f \in \A_c$.

Finally, take $f$ arbitrary in $\SS$, and recall that $\{e_K\}$ is an
approximate unit for $\A_\th$. Since $f = g \mop h$ for some
$g,h \in \SS$, Corollary~\ref{cr:killproj} implies
\begin{align*}
\bigl| \Tr^+ &\bigl( (\pi^\th(f) - \pi^\th(\ee_K \mop f \mop \ee_K))
(\Dslash^2+\eps^2)^{-N} \bigr) \bigr|
\\
&= \bigl| \Tr^+ \bigl( (\pi^\th(f) - \pi^\th(\ee_K \mop f))\,
(\Dslash^2+\eps^2)^{-N} \bigr) \bigr|
\\
&= \bigl| \Tr^+ \bigl( (\pi^\th(g) - \pi^\th(\ee_K \mop g))\,
\pi^\th(h) (\Dslash^2+\eps^2)^{-N} \bigr) \bigr|
\\
&\leq \| \pi^\th(g) -\pi^\th(\ee_K \mop g) \|_\infty \,
\Tr^+ \bigl| \pi^\th(h)\,(\Dslash^2+\eps^2)^{-N} \bigr|.
\end{align*}
Since $\|\pi^\th(g) -\pi^\th(\ee_K \mop g)\|_\infty \leq (2\pi\th)^{-N/2}\|g - \ee_K \mop g\|_2$ tends to zero when $K$
increases, the proof is complete because $\ee_K\mop f\mop \ee_K$ lies in $\A_c$ and
\begin{align*}
\int[\ee_K\mop f\mop \ee_K](x)\,d^{2N}x \to \int f(x)\,d^{2N}x\as K\uparrow\infty.
\end{align*}
\end{proof}

\begin{rem}
Similar arguments to those of this section (or a simple comparison 
argument) show that for
$f\in\SS$,
$$
\Tr^+ \bigl( \pi^\th(f)\,(|\Dslash| + \eps)^{-2N} \bigr)
= \Tr^+ \bigl( \pi^\th(f)\,(\Dslash^2 + \eps^2)^{-N} \bigr).
$$
\end{rem}

In conclusion: the analytical and spectral dimension of Moyal planes coincide. And 
Lemma~\ref{lm:existence}, Proposition~\ref{pr:calcul} and the previous remark have concluded the proof of
Theorem~\ref{th:dim-spectrale}. 

\vspace{1cm}

The conclusion is that $(\A,\Aun,\H,\Dslash,\chi,J)$ defines a non-compact spectral triple; recall that we already 
know that both $\A$ and its preferred compactification $\Aun$ are pre-$C^*$-algebras.

\begin{thm}
\label{th:main}
The Moyal planes $(\A,\Aun,\H,\Dslash,J,\chi)$ are connected real non-compact spectral triples of spectral 
dimension~$2N$.
\end{thm}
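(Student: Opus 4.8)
This final statement is a synthesis, so the plan is to verify, clause by clause, Definition \ref{deftripletnoncompact} together with the evenness, reality, connectedness and dimension requirements, pointing for each one to the result of this section that establishes it. First I would collect the algebraic input. By Propositions \ref{pr:pre-S} and \ref{pr:pre-Aun}, $\A=(\SS,\mop)$ is a nonunital Fréchet pre-$C^*$-algebra and its preferred unitization $\Aun=(\B,\mop)$ is a unital one; the map $\pi^\th:f\mapsto L^\th_f\ox1_{2^N}$ is a $*$-representation on $\H=L^2(\R^{2N})\ox\C^{2^N}$, bounded by Lemma \ref{lm:norm-HS} and faithful because the trace on $(\SS,\mop)$ is faithful (Lemma \ref{lm:propriete}(v)); the operator is the undeformed Euclidean $\Dslash=-i\,\ga^\mu\del_\mu$, with grading $\chi=\ga_{2N+1}$ and real structure $J=CK$ as fixed above.

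Next I would check the two defining conditions of a non-compact spectral triple. Boundedness of $[\Dslash,\pi^\th(a)]$ for $a\in\Aun$ follows from the Leibniz rule (Lemma \ref{lm:propriete}(iii)): $[\Dslash,\pi^\th(f)]=-iL^\th(\del_\mu f)\ox\ga^\mu$, and for $f\in\B$ each $\del_\mu f$ still lies in $\B\subset A_\th$, so $L^\th(\del_\mu f)$ is bounded by Proposition \ref{pr:algebra}. The compactness $\pi^\th(a)(\Dslash-\lambda)^{-1}\in\K(\H)$ for $a\in\A$ and $\lambda\notin\spec\Dslash$ is exactly Theorem \ref{th:compacite}. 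For the even and real structures: $\chi^2=1$, $\chi\ga^\mu=-\ga^\mu\chi$ and $[\chi,\pi^\th(f)]=0$ (the action is diagonal on the spinor factor) give evenness; $J^2=\pm1$ and $J\chi=\pm\chi J$ are built into the chosen sign table \eqref{commu}; and the order-zero and first-order conditions follow from \eqref{eq:right-Moyal}, which identifies $\pi^\th(b)^\circ=J\pi^\th(b^*)J^{-1}$ with right Moyal multiplication, combined with associativity of $\mop$ (Lemma \ref{lm:propriete}(ii)) and the relation $J(1\ox\ga^\mu)J^{-1}=-1\ox\ga^\mu$. Connectedness is the remark that $[\Dslash,\pi^\th(f)]=0$ for $f\in\Aun=\B$ forces $\del_\mu f=0$ for all $\mu$, hence $f$ constant.

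Finally, the spectral dimension equals $2N$: this is Theorem \ref{th:dim-spectrale}, whose proof combines Corollary \ref{cr:spec-dim-one} ($\pi^\th(h)(|\Dslash|+\eps)^{-1}\in\L^{2N+}(\H)$, i.e. $2N^+$-summability) with Lemma \ref{lm:existence} and Proposition \ref{pr:calcul}, the latter evaluating the Dixmier trace of $\pi^\th(f)(\Dslash^2+\eps^2)^{-N}$ as $\tfrac{1}{N!\,(2\pi)^N}\int f(x)\,d^{2N}x$, independently of the Dixmier functional and of $\eps$, which also gives measurability. The residual finiteness and orientation axioms (with $\Aun$ in place of $\A$) are then checked exactly as in the commutative model of Section \ref{The commutative case}: $\H^\infty$ is the space of Schwartz spinors, a finitely generated projective $\Aun$-module, and the Hochschild cycle realizing $\chi$ is the undeformed one, the relevant computations being purely symbolic and hence insensitive to the deformation. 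Assembling these items proves the theorem. The only genuinely hard ingredients are Theorems \ref{th:compacite} and \ref{th:dim-spectrale}, both already established; the work remaining here is just the bookkeeping of matching each clause of Definition \ref{deftripletnoncompact} to the corresponding result of this section.
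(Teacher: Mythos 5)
Your proposal is correct and follows the same route as the paper, which presents Theorem \ref{th:main} as the assembly of the section's results: Theorem \ref{th:compacite} for the compactness condition, the noted boundedness of $[\Dslash,\pi^\th(f)]$ for $f\in\Aun$, Propositions \ref{pr:pre-S} and \ref{pr:pre-Aun} for the pre-$C^*$-algebra structure, and Theorem \ref{th:dim-spectrale} (via Lemma \ref{lm:existence} and Proposition \ref{pr:calcul}) for the spectral dimension $2N$. The only slight imprecision is your identification of $\H^\infty=\bigcap_k\Dom\Dslash^k$ with the Schwartz spinors — it is rather $H^\infty(\R^{2N})\ox\C^{2^N}$ — but this concerns the finiteness axiom, which the paper itself does not verify in detail here.
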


One can compute the Yang--Mills action \eqref{YMaction} of this triple:

\begin{thm}
Let $\omega = -\omega^* \in \Omega^1\A_\th$. Then the Yang--Mills action
$YM(V)$ of the universal connection $\delta + \omega$, with
$V= \tilde\pi^\th(\omega)$, is equal to
$$
S_{YM}(V) = -\tfrac{1}{4g^2} \int F^{\mu\nu} \mop F_{\mu\nu}(x) \,d^{2N}x
= -\tfrac{1}{4g^2} \int F^{\mu\nu}(x) \,F_{\mu\nu}(x) \,d^{2N}x,
$$
where $F_{\mu\nu} :=
\half(\del_\mu A_\nu - \del_\nu A_\mu + [A_\mu,A_\nu]_\mop)$ and
$A_\mu$ is defined by $V = L^\th(A_\mu) \ox \ga^\mu$.
\end{thm}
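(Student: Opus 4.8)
The plan is to unpack the definition \eqref{YMaction}, $S_{YM}(V)=\langle\delta V+V^2,\,\delta V+V^2\rangle$ — understood in $\Omega^2_\Dslash(\A_\th)$, so that it equals the infimum $\inf\{I(\omega):\tilde\pi^\th(\omega)=V\}$ — compute the curvature $2$-form explicitly, and evaluate the resulting Dixmier trace by the noncommutative integration formula of Proposition~\ref{pr:calcul}. Write $V=\tilde\pi^\th(\omega)$ in the form $V=L^\th(-iA_\mu)\ox\ga^\mu$ with $A_\mu=-A_\mu^*\in\SS$ (the coefficients of any element of $\tilde\pi^\th(\Omega^1\A_\th)$ are automatically Schwartz, being finite sums $\sum_j a_j\mop\del_\mu b_j$). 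Using $[\Dslash,L^\th(f)\ox 1]=-i\,L^\th(\del_\nu f)\ox\ga^\nu$ and the Moyal Leibniz rule of Lemma~\ref{lm:propriete}(iii), expand $\tilde\pi^\th(\delta\omega)$ and $V^2=\tilde\pi^\th(\omega)^2$ in the Clifford basis $\{1,\ga^{\mu\nu}\}$, $\ga^{\mu\nu}=\half(\ga^\mu\ga^\nu-\ga^\nu\ga^\mu)$. The scalar ($\delta^{\mu\nu}$) components — which include the terms $L^\th(\sum_j a_j\mop\del_\mu\del_\nu b_j)\ox 1$ coming from the symmetric second derivatives, and are not fixed by $V$ — are the junk; antisymmetrising the $\ga^{\mu\nu}$-coefficients of $\tilde\pi^\th(\delta\omega)+V^2$ collects them into the field strength, so that, modulo junk,
$$
\delta V+V^2\;=\;L^\th(F_{\mu\nu})\ox\ga^{\mu\nu},\qquad F_{\mu\nu}=\tfrac12\big(\del_\mu A_\nu-\del_\nu A_\mu+[A_\mu,A_\nu]_\mop\big)
$$
(up to the conventional overall normalisation).

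Second, I would justify that discarding the junk computes the infimum, i.e. that $\tilde\pi^\th\big(\delta(\Ker\tilde\pi^\th\cap\Omega^1\A_\th)\big)$ consists exactly of scalar Clifford components $L^\th(h)\ox 1$: adding a $\Ker\tilde\pi^\th$-element to $\omega$ changes only those components, and they are orthogonal, in the inner product $\langle A_1,A_2\rangle=\Tr^+(A_2^*A_1\,|\Dslash|^{-2N})$, to $L^\th(F_{\mu\nu})\ox\ga^{\mu\nu}$ because $\Tr(\ga^{\mu\nu})=0$ kills the spinor part. This is the Moyal counterpart of the commutative description of $\Omega^2$. Granting it,
$$
S_{YM}(V)=\Tr^+\!\big((L^\th(F_{\alpha\beta})\ox\ga^{\alpha\beta})^*\,(L^\th(F_{\mu\nu})\ox\ga^{\mu\nu})\,|\Dslash|^{-2N}\big),
$$
with $|\Dslash|^{-2N}$ read through the regularisation $(\Dslash^2+\eps^2)^{-N}$: this is legitimate since $F_{\mu\nu}\mop F_{\alpha\beta}\in\SS$ by Lemma~\ref{lm:propriete}(i), so the operator lies in $\L^{1+}(\H)$ by Lemma~\ref{lm:existence}, and the Dixmier trace is $\eps$-independent by Proposition~\ref{pr:calcul} and the remark after it.

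Third comes the computation. From $F_{\mu\nu}^*=-F_{\mu\nu}$ (forced by $A_\mu=-A_\mu^*$) and $(\ga^{\mu\nu})^*=-\ga^{\mu\nu}$, the adjoint factor is again $L^\th(F_{\alpha\beta})\ox\ga^{\alpha\beta}$; moreover $L^\th(F_{\alpha\beta})L^\th(F_{\mu\nu})=L^\th(F_{\alpha\beta}\mop F_{\mu\nu})$ and $\Dslash^2=-\Delta\ox 1_{2^N}$ acts diagonally and commutes with the Clifford matrices. Hence the Dixmier trace factorises as a spinor trace times the scalar Dixmier trace on $\H_r$, which Proposition~\ref{pr:calcul} evaluates. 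With $\Tr(\ga^{\alpha\beta}\ga^{\mu\nu})=2^N(\delta^{\alpha\nu}\delta^{\beta\mu}-\delta^{\alpha\mu}\delta^{\beta\nu})$ and $\Omega_{2N}=2\pi^N/(N-1)!$ one obtains
$$
S_{YM}(V)=\tfrac{1}{N!\,(2\pi)^N}\int\big(F_{\alpha\beta}\mop F^{\beta\alpha}-F_{\alpha\beta}\mop F^{\alpha\beta}\big)(x)\,d^{2N}x=-\tfrac{2}{N!\,(2\pi)^N}\int F_{\mu\nu}\mop F^{\mu\nu}(x)\,d^{2N}x,
$$
using the antisymmetry $F^{\beta\alpha}=-F^{\alpha\beta}$. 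Absorbing the dimension-dependent constant into the coupling, $1/4g^2:=2/(N!\,(2\pi)^N)$, gives the first claimed equality, and the tracial property of Lemma~\ref{lm:propriete}(v), $\int(f\mop g)(x)\,d^{2N}x=\int f(x)g(x)\,d^{2N}x$, turns the Moyal product into the pointwise one, giving the second.

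The main obstacle is the second step: establishing rigorously, in this non-compact and non-unital context, that the junk $\tilde\pi^\th\big(\delta(\Ker\tilde\pi^\th\cap\Omega^1\A_\th)\big)$ is exactly the scalar Clifford part, so that the infimum over lifts $\omega$ removes it and $S_{YM}(V)$ collapses onto the $\ga^{\mu\nu}$-norm. In the commutative case this follows from the local structure of universal $2$-forms; here one must run the argument over $(\SS,\mop)$, handling the lack of a unit with the local units and the factorisation property (Proposition~\ref{pr:factorization}). A secondary, routine point is to confirm that all operators occurring lie in the expected Schatten/Dixmier classes and that the Dixmier-trace and $\eps\to 0$ limits can be interchanged freely, which follows from Lemmata~\ref{lm:schatten} and \ref{lm:existence} and Proposition~\ref{pr:calcul} just as in the proof of Theorem~\ref{th:dim-spectrale}.
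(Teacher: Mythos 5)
The paper states this theorem without proof: it appears as the closing result of Section~\ref{The Moyal plane}, quoted from the original Moyal--plane reference, so there is no in-text argument to compare yours against. Judged on its own terms, your route is the expected one and its analytic core is sound: the $\ga^{\mu\nu}$-component of $\delta V+V^2$ is a left-Moyal-multiplication operator with Schwartz symbol, $F_{\a\b}\mop F_{\mu\nu}\in\SS$ places the relevant operator in $\L^{1+}(\H)$ by Lemma~\ref{lm:existence}, and the evaluation via Proposition~\ref{pr:calcul} together with $\Tr(\ga^{\a\b}\ga^{\mu\nu})=2^N(\delta^{\a\nu}\delta^{\b\mu}-\delta^{\a\mu}\delta^{\b\nu})$ and the tracial property of Lemma~\ref{lm:propriete}(v) correctly yields $-\tfrac{2}{N!(2\pi)^N}\int F_{\mu\nu}\mop F^{\mu\nu}\,d^{2N}x$, which is the stated formula once the constant is absorbed into $g$.

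The gap you flag is, however, genuine and sits at the decisive point. What your argument establishes is only the inequality $S_{YM}(V)\geq -\tfrac{1}{4g^2}\int F^{\mu\nu}\mop F_{\mu\nu}$: because the junk is \emph{contained} in the scalar Clifford component and that component is orthogonal to the $\ga^{\mu\nu}$-part, the quotient norm of $\delta V+V^2$ in $\H_2/\overline{\tilde\pi^\th(\delta(\Ker\tilde\pi^\th\cap\Omega^1\A_\th))}$ dominates the norm of the $\ga^{\mu\nu}$-projection. Equality needs the converse inclusion: the symmetric piece $-L^\th\big(\sum_j\del^\mu a_j\mop\del_\mu b_j\big)\ox 1_{2^N}$ of the chosen representative must actually lie in (the closure of) the junk, i.e.\ for the relevant $h\in\SS$ one must produce $a_j,b_j$ with $\sum_j a_j\mop\del_\mu b_j=0$ for every $\mu$ while $\sum_j\del^\mu a_j\mop\del_\mu b_j=h$. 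This is exactly the step you defer, and without it the infimum characterization of $S_{YM}$ is not exploited; it can be filled using the factorization property (Proposition~\ref{pr:factorization}) and the coordinates and plane waves available in the unitization $\Aun_\th$, but it must be done. A secondary point to tidy: you set $V=L^\th(-iA_\mu)\ox\ga^\mu$ with $A_\mu=-A_\mu^*$, which makes $V$ selfadjoint, whereas $\omega=-\omega^*$ forces $V^*=-V$, and the theorem defines $A_\mu$ directly by $V=L^\th(A_\mu)\ox\ga^\mu$. These conventions do not differ merely by an overall constant --- the relative weight of the commutator term inside $F_{\mu\nu}$ changes under $A_\mu\mapsto -iA_\mu$ --- so the bookkeeping should be redone in the stated convention to land exactly on the displayed $F_{\mu\nu}$.
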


The spectral action has been computed in \cite{GI,GIVas}. As one can expected, it is the same, up to few universal 
coefficients, to the one of Theorem \ref{main}.

\section*{Acknowledgments}

I would like to thank Driss Essouabri, Victor Gayral, Jos\'e Gracia-Bond\'{\i}a, Cyril Levy, Pierre Martinetti, 
Thierry Masson, Thomas Sch\"ucker, Andrzej Sitarz, Jo V\'arilly and Dmitri Vassilevich, for our discussions during 
our collaborations along years. Some of the results presented here are directly extracted from these collaborations. 
I also took benefits from the questions of participants (professors and students) of the school during the lectures, 
to clarify few points.

\end{document}